\documentclass{theoretics}

\usepackage[font=normalsize]{subcaption}
\usepackage{comment}

\usepackage[T1]{fontenc}
\usepackage{mathtools}
\usepackage{hyphenat}
\usepackage[disable]{todonotes}
\usepackage{marginnote}

\usepackage{diagbox}
\usepackage{array} 
\usepackage{makecell}

\usepackage{pifont}
\usepackage{bbm}

\usepackage[super]{nth}

\usepackage{mathcommand}
\usepackage[notion, quotation, electronic]{knowledge}

\newenvironment{longdescription} 
{\begin{description}[style=unboxed]}
	{\end{description}}

\usepackage{tikz}
\usetikzlibrary{automata, snakes, positioning, arrows,arrows.meta,calc,decorations,decorations.markings,math,shapes.geometric}
\tikzset{
	>=stealth',
	-={stealth',ultra thick,scale=3} 
	node distance=1cm, 
	every state/.style={thick}, 
	initial text=$ $, 
}



\let\ab\allowbreak
\mathchardef\hyphen=45 

\newcommand\restr[2]{{
		\left.\kern-\nulldelimiterspace 
		#1 
		\littletaller 
		\right|_{#2} 
}}

\newcommand{\littletaller}{\mathchoice{\vphantom{\big|}}{}{}{}}

\newcommand{\ts}{\textsuperscript}

\definecolor{Green2}{HTML}{3EA514}
\definecolor{Red2}{HTML}{FF0400}
\definecolor{Orange2}{HTML}{E6670A}
\definecolor{Violet2}{HTML}{CE1ff9}
\definecolor{Green3}{HTML}{51A342}
\definecolor{Green4}{HTML}{45A229}
\definecolor{Navy}{HTML}{2943A2}


\DeclareMathAlphabet{\mathpzc}{OT1}{pzc}{m}{it}

\newrobustcmd{\NN}{\mathbb{N}}
\newrobustcmd{\ZZ}{\mathbb{Z}}
\newrobustcmd{\QQ}{\mathbb{Q}}
\newrobustcmd{\RR}{\mathbb{R}}
\newrobustcmd{\CC}{\mathbb{C}}
\newrobustcmd{\WW}{\mathbb{W}}

\newrobustcmd{\I}{\mathcal{I}}
\newrobustcmd{\F}{\mathcal{F}}
\newrobustcmd{\G}{\mathcal{G}}

\newrobustcmd{\M}{\mathcal{M}}
\newrobustcmd{\Q}{\mathcal{Q}}
\newrobustcmd{\C}{\mathcal{C}}
\newrobustcmd{\A}{\mathcal{A}}
\newrobustcmd{\B}{\mathcal{B}}
\newrobustcmd{\Z}{\mathcal{Z}}
\newrobustcmd{\R}{\mathcal{R}}
\newrobustcmd{\T}{\mathcal{T}}
\newrobustcmd{\W}{\mathcal{W}}

\renewcommand{\P}{\mathcal{P}}

\renewcommand{\O}{\mathcal{O}}

\renewcommand{\S}{\mathcal{S}}

\newrobustcmd{\kk}{\kappa}
\newrobustcmd{\uu}{\upsilon}
\newrobustcmd{\dd}{\delta}

\renewcommand{\ss}{\sigma}

\newrobustcmd{\rr}{\rho}

\renewcommand{\aa}{\alpha}

\newrobustcmd{\bb}{\beta}

\newrobustcmd{\oo}{\omega}
\newrobustcmd{\pp}{\varphi}

\renewcommand{\gg}{\gamma}
\newrobustcmd{\ee}{\varepsilon}

\renewcommand{\SS}{\Sigma}
\newrobustcmd{\GG}{\Gamma}
\newrobustcmd{\DD}{\Delta}

\knowledgerenewmathcommand\nu{\cmdkl{\LaTeXnu}}
\knowledgenewmathcommand\nuAcd{\cmdkl{\LaTeXnu}}
\knowledgerenewmathcommand\eta{\cmdkl{\LaTeXeta}}

\newrobustcmd\inv[1]{#1^{-1}}

\newcommand{\tand}{\text{ and }}
\newcommand{\tor}{\text{ or }}

\newcommand{\tif}{\text{ if }}
\newcommand{\tow}{\text{ otherwise}}
\newcommand{\tst}{\text{ such that }}

\newrobustcmd{\pow}[1]{2^{#1}}
\newrobustcmd{\powplus}[1]{\kl[\powplus]{2^{#1}_{+}}}
\knowledge{\powplus}{notion}
\newrobustcmd\SigmaInfty{\kl[\SigmaInfty]{\Sigma^\infty}}
\knowledge{\SigmaInfty}[E'^\infty | E^\infty | \GammaInfty]{notion}
\newrobustcmd\GammaInfty{\kl[\SigmaInfty]{\Gamma^\infty}}

\newrobustcmd{\restSubsets}[2]{\kl[\restSubsets]{\restr{#1}{#2}}}
\knowledge{\restSubsets}{notion}

\newrobustcmd{\partialF}{\kl[\partialF]{\rightharpoonup}}
\knowledge{\partialF}{notion}

\newrobustcmd{\complTS}[1]{\kl[\complTS]{\overline{#1}}}
\knowledge{\complTS}{notion}
\newrobustcmd{\complSet}[1]{\kl[\complSet]{\overline{#1}}}
\knowledge{\complSet}{notion}

\newrobustcmd{\prefix}{\mathrel{\kl[\prefix]{\sqsubseteq}}}
\knowledge{\prefix}{notion}
\newrobustcmd{\nprefix}{\mathrel{\kl[\nprefix]\sqsubset}}
\knowledge{\nprefix}{notion}

\newrobustcmd{\first}{\kl[\first]{\mathsf{First}}}
\knowledge\first{notion}
\newrobustcmd{\last}{\kl[\last]{\mathsf{Last}}}
\knowledge\last{notion}

\newrobustcmd{\minf}{\kl[\minf]{\mathsf{Inf}}}
\knowledge{\minf}{notion}
\newrobustcmd{\mocc}{\kl[\mocc]{\mathsf{Occ}}}
\knowledge{\mocc}{notion}

\newcommand{\re}[1]{\xrightarrow{#1}}
\newcommand{\rp}[1]{\overset{#1}{\rightsquigarrow}}
\usetikzlibrary{calc,decorations.pathmorphing,shapes}

\newcounter{sarrow}
\newcommand\lrp[1]{%
	\stepcounter{sarrow}%
	\mathrel{\begin{tikzpicture}[baseline= {( $ (current bounding box.south) + (0,-0.5ex) $ )}]
			\node[inner sep=.5ex] (\thesarrow) {$\scriptstyle #1$};
			\path[draw,<-,decorate,
			decoration={zigzag,amplitude=0.7pt,segment length=1.2mm,pre=lineto,pre length=4pt}] 
			(\thesarrow.south east) -- (\thesarrow.south west);
	\end{tikzpicture}}%
}

\newcommand\lrpE{\lrp{\phantom{w.}}}

\newrobustcmd{\TS}{\mathcal{T\hspace{-1.1mm}S}}

\newrobustcmd{\Graph}[1]{\kl[\Graph]G_{#1}}
\knowledge\Graph{notion}

\newrobustcmd{\underlyingGraph}[1]{\kl[\underlyingGraph]G_{#1}}
\knowledge\underlyingGraph{notion}

\newrobustcmd{\macc}[1]{\kl[\macc]{\mathrm{Acc}}_{#1}}
\knowledge\macc{notion}

\newrobustcmd{\msource}{\kl[\msource]{\mathsf{Source}}}
\knowledge{\msource}[source]{notion}
\newrobustcmd{\mtarget}{\kl[\mtarget]{\mathsf{Target}}}
\knowledge{\mtarget}[target]{notion}

\newrobustcmd{\msourcePath}{\kl[\msourcePath]{\mathsf{Source}}}
\knowledge{\msourcePath}[source@path]{notion}
\newrobustcmd{\mtargetPath}{\kl[\mtargetPath]{\mathsf{Target}}}
\knowledge{\mtargetPath}[target@path]{notion}

\newrobustcmd{\PathSet}[2]{\kl[\PathSet]{\mathpzc{Path}_{#2}(#1)}}
\knowledge\PathSet{notion}
\newrobustcmd{\PathSetFin}[2]{\kl[\PathSetFin]{\mathpzc{Path}^{\mathsf{fin}}_{#2}(#1)}}
\knowledge\PathSetFin{notion}
\newrobustcmd{\Runs}[1]{\kl[\Runs]{\mathpzc{Run}(#1)}}
\knowledge\Runs{notion}
\newrobustcmd{\RunsFin}[1]{\kl[\RunsFin]{\mathpzc{Run}^{\mathsf{fin}}(#1)}}
\knowledge\RunsFin{notion}
\newrobustcmd{\RunsInfty}[1]{\kl[\RunsInfty]{\mathpzc{Run}^{\infty}(#1)}}
\knowledge\RunsInfty{notion}
\newrobustcmd{\PathSetInfty}[2]{\kl[\PathSetInfty]{\mathpzc{Path}^{\infty}_{#2}(#1)}}
\knowledge\PathSetInfty{notion}

\newrobustcmd{\mout}{\kl[\mout]{\mathsf{Out}}}
\knowledge\mout{notion}
\newrobustcmd{\mIn}{\kl[\mIn]{\mathsf{In}}}
\knowledge\mIn{notion}

\newrobustcmd{\initialTS}[2]{\kl[\initialTS]{#1_{#2}}}
\knowledge\initialTS{notion}

\newrobustcmd{\size}[1]{\kl[\size]{|#1|}}
\knowledge\size{notion}

\newrobustcmd{\Vrec}{V_\mathrm{rec}}
\newrobustcmd{\Vtrans}{V_\mathrm{trans}}

\newrobustcmd{\transAut}[1]{\kl[\transAut]{\delta_{#1}}}
\knowledge\transAut{notion}

%

\newrobustcmd{\compositionAut}{\mathbin{\kl[\compositionAut]{\ltimes}}}
\knowledge\compositionAut{notion}

\newrobustcmd{\edgesProduct}{\kl[\edgesProduct]E^\ltimes}
\knowledge\edgesProduct{notion}


\newrobustcmd\Lang[1]{\kl[\Lang]{\mathcal{L}(#1)}}
\knowledge\Lang{notion}

\newrobustcmd{\prodMem}[1]{\mathbin{\kl[\prodMem]{\lhd_{#1}}}}
\knowledge\prodMem{notion}

\newrobustcmd\LangTS[1]{\kl[\LangTS]{\mathcal{L}_{\mathpzc{Runs}}(#1)}}
\knowledge\LangTS{notion}


\newrobustcmd{\piAut}{\kl[\piAut]\pi_{\A}}
\knowledge\piAut{notion}
\newrobustcmd{\lPlayers}{\kl[\lPlayers]l_{\mathsf{Players}}}
\knowledge\lPlayers{notion}
\newrobustcmd{\winRegion}[2]{\kl[\winRegion]{\W_{#1}(#2)}}
\knowledge\winRegion{notion}
\newrobustcmd{\strat}{\mathsf{strat}}

\newcommand{\Eve}{\mathrm{Eve}}
\newcommand{\Adam}{\mathrm{Adam}}

\newrobustcmd{\VEve}{\kl[\VEve]{V_{\mathrm{Eve}}}}
\knowledge\VEve[V_P|\widetilde {V}_\Eve|\tilde {V}_\Eve ]{notion}
\newrobustcmd{\VAdam}{\kl[\VAdam]{V_{\mathrm{Adam}}}}
\knowledge\VAdam[\widetilde {V}_\Adam ]{notion}

\newrobustcmd{\choicestrat}{\mathsf{choice}_{st}}

\newrobustcmd{\letterGame}[1]{\kl[\letterGame]{\G_{#1}}}

\knowledge\letterGame{notion}

\newrobustcmd{\attr}[2]{\kl[\attr]{\mathsf{Attr}_{#1}(#2)}}
\knowledge{\attr}{notion}

\newrobustcmd{\attrDec}[1]{\kl[\attrDec]{\mathcal{D}_{#1}}}
\knowledge{\attrDec}{notion}

\newrobustcmd{\orderAttr}{\mathrel{\kl[\orderAttr]{<_{\mathcal{D}}}}}
\knowledge{\orderAttr}{notion}

\newrobustcmd{\transMem}{\kl[\transMem]{\mu}}
\knowledge\transMem{notion}

\newrobustcmd{\nextmove}{\kl[\nextmove]{\sigma}}
\knowledge\nextmove{notion}

\newrobustcmd{\nextmoveResolver}{\kl[\nextmoveResolver]{\sigma}}
\knowledge\nextmoveResolver{notion}

\newrobustcmd{\projLetterGame}[1]{\kl[\projLetterGame]{\mathsf{aut}_{#1}}}
\knowledge\projLetterGame{notion}

\newrobustcmd{\piLetterGame}{\kl[\piLetterGame]{\pi_{\G}}}
\knowledge\piLetterGame{notion}

\newrobustcmd{\subgame}[1]{\kl[\subgame]{\G_{\A}(#1)}}
\knowledge\subgame{notion}

\newrobustcmd{\Muller}[1]{\kl[\Muller]{\textsf{Muller}(#1)}}
\newrobustcmd{\MullerC}[2]{\kl[\MullerC]{\textsf{Muller}_{#2}(#1)}}
\knowledge{\Muller}[\MullerC]{notion}
\newrobustcmd{\Rabin}[1]{\kl[\Rabin]{\textsf{Rabin}(#1)}}
\newrobustcmd{\RabinC}[2]{\kl[\RabinC]{\textsf{Rabin}_{#2}(#1)}}
\knowledge{\Rabin}[\RabinC]{notion}
\newrobustcmd{\Streett}[1]{\kl[\Streett]{\textsf{Streett}(#1)}}
\newrobustcmd{\StreettC}[2]{\kl[\StreettC]{\textsf{Streett}_{#2}(#1)}}
\knowledge{\Streett}[\StreettC]{notion}
\newrobustcmd{\parity}{\kl[\parity]{\textsf{parity}}}
\knowledge{\parity}{notion}
\newrobustcmd{\Buchi}[1]{\kl[\Buchi]{\textsf{B\"{u}chi}(#1)}}
\newrobustcmd{\BuchiC}[2]{\kl[\BuchiC]{\textsf{B\"uchi}_{#2}(#1)}}
\knowledge{\Buchi}[\BuchiC]{notion}

\newrobustcmd{\genBuchi}[1]{\kl[\genBuchi]{\textsf{genBüchi}(#1)}}
\newrobustcmd{\genBuchiC}[2]{\kl[\genBuchiC]{\textsf{genBüchi}_{#2}(#1)}}
\knowledge{\genBuchi}[\genBuchiC]{notion}

\newrobustcmd{\coBuchi}[1]{\kl[\coBuchi]{\textsf{coBüchi}(#1)}}
\newrobustcmd{\coBuchiC}[2]{\kl[\coBuchiC]{\textsf{coBüchi}_{#2}(#1)}}
\knowledge{\coBuchi}[\coBuchiC]{notion}
\newrobustcmd{\gencoBuchi}[1]{\kl[\gencoBuchi]{\textsf{genCoBüchi}(#1)}}
\newrobustcmd{\gencoBuchiC}[2]{\kl[\gencoBuchiC]{\textsf{genCoBüchi}_{#2}(#1)}}
\knowledge{\gencoBuchi}[\gencoBuchiC]{notion}

\newrobustcmd{\Weak}[1]{\kl[\Weak]{\textsf{Weak}_{#1}}}
\knowledge{\Weak}{notion}

\newrobustcmd{\WeakIndex}[1]{\kl[\WeakIndex]{\textsf{Weak}_{#1}}}
\knowledge{\WeakIndex}{notion}

\newrobustcmd{\MullerFamily}[1]{\kl[\MullerFamily]{\F_{#1}}}
\knowledge{\MullerFamily}{notion}

\newrobustcmd{\impliesMuller}[2]{\mathbin{\kl[\impliesMuller]{#1 \rightarrow #2}}}
\knowledge{\impliesMuller}{notion}

\newrobustcmd{\equivTrans}{\mathrel{\kl[\equivTrans]{\simeq}}}
\knowledge{\equivTrans}{notion}
\newrobustcmd{\equivCond}[1]{\mathrel{\kl[\equivCond]{\simeq}_{#1}}}
\knowledge{\equivCond}{notion}

\newrobustcmd{\cycles}[1]{\kl[\cycles]{\mathpzc{Cycles}(#1)}}
\knowledge\cycles{notion}
\newrobustcmd{\cyclesState}[2]{\kl[\cyclesState]{\mathpzc{Cycles}_{#2}(#1)}}
\knowledge\cyclesState{notion}
\newrobustcmd{\states}[1]{\kl[\states]{\mathsf{States}(#1)}}
\knowledge\states{notion}
\newrobustcmd{\localMuller}[2]{\kl[\localMuller]{\mathsf{LocalMuller}_{#2}(#1)}}
\knowledge\localMuller{notion}

\newrobustcmd{\rInit}{\kl[\rInit]{r_{\mathsf{Init}}}}
\knowledge\rInit[\trInit]{notion}
\newrobustcmd{\trInit}{\kl[\rInit]{\tilde{r}_{\mathsf{Init}}}}

\newrobustcmd{\rRuns}{\kl[\rRuns]r_{\mathpzc{Runs}}}
\knowledge\rRuns[\rRunsOption]{notion}
\newrobustcmd{\rRunsOption}[1]{\kl[\rRunsOption]r_{#1,\mathpzc{Runs}}}

\newrobustcmd{\ppRuns}{\kl[\ppRuns]\pp_{\mathpzc{Runs}}}
\newrobustcmd{\ppRunsP}[1]{\kl[\ppRuns]{#1}_{\mathpzc{Runs}}}
\knowledge\ppRuns[\ppRunsP]{notion}
\newrobustcmd{\Id}[1]{\kl[\Id]{\mathit{Id}_{#1}}}
\knowledge\Id{notion}

\newrobustcmd{\autMorphism}[1]{\kl[\autMorphism]{\mathcal{A}_{#1}}}
\knowledge\autMorphism{notion}

\newrobustcmd{\ancestor}{\mathrel{\kl[\ancestor]{\preceq}}}
\knowledge\ancestor{notion}
\newrobustcmd{\descendant}{\mathrel{\kl[\descendant]{\succeq}}}
\knowledge\descendant{notion}
\newrobustcmd{\roundnodes}{\kl[\roundnodes]{N_\bigcirc}}
\knowledge\roundnodes{notion}
\newrobustcmd{\squarenodes}{\kl[\squarenodes]{N_\Box}}
\knowledge\squarenodes{notion}
\newrobustcmd{\leaves}{\kl[\leaves]{\mathsf{Leaves}}}
\knowledge\leaves{notion}
\newrobustcmd{\children}{\kl[\children]{\mathsf{Children}}}
\knowledge\children{notion}
\newrobustcmd{\nextChild}{\kl[\nextChild]{\mathsf{Next}}}
\knowledge\nextChild{notion}
\newrobustcmd{\jump}{\kl[\jump]{\mathsf{Jump}}}
\knowledge\jump[intermediate node]{notion}
\newrobustcmd{\depth}{\kl[\depth]{\mathsf{Depth}}}
\knowledge\depth{notion}
\newrobustcmd{\orderTree}[1]{\kl[\orderTree]{\leq_{#1}}}
\knowledge\orderTree{notion}

\newrobustcmd\zielonkaTree[1]{\kl[\zielonkaTree]{\mathcal{Z}_{#1}}}
\knowledge\zielonkaTree{notion}
\newrobustcmd{\supp}{\kl[\supp]{\mathsf{Supp}}}
\knowledge\supp{notion}
\newrobustcmd{\parityNodes}{\kl[\parityNodes]{p_\Z}}
\knowledge\parityNodes{notion}
\newrobustcmd{\minparityZ}[1]{\kl[\minparityZ]{\min_{#1}}}
\knowledge\minparityZ{notion}
\newrobustcmd{\maxparityZ}[1]{\kl[\maxparityZ]{\max_{#1}}}
\knowledge\maxparityZ{notion}
\newrobustcmd{\memTree}[1]{\kl[\memTree]{\mathsf{rbw}(#1)}}
\knowledge\memTree{notion}

%

\newrobustcmd\zielonkaAutomaton[1]{\kl[\zielonkaAutomaton]{\mathcal{A}^{\mathsf{parity}}_{\mathcal{Z}_{#1}}}}
\knowledge\zielonkaAutomaton{notion}

\newrobustcmd\zielonkaHDAutomaton[1]{\kl[\zielonkaHDAutomaton]{\mathcal{A}^{\mathsf{Rabin}}_{\mathcal{Z}_{#1}}}}
\knowledge\zielonkaHDAutomaton{notion}
\newrobustcmd{\sizeHDRabin}{\memTree{\zielonkaTree{\F}{\SS}}}

\newrobustcmd\acd[1]{\kl[\acd]{\mathcal{ACD}_{#1}}}
\knowledge\acd{notion}
\newcommand\acdNoP{\mathcal{ACD}}
\newrobustcmd\altTree[1]{\kl[\altTree]{\mathsf{AltTree}(#1)}}
\knowledge\altTree{notion}
\newrobustcmd\nuStates{\kl[\nuStates]{\nu_{\mathsf{States}}}}
\knowledge\nuStates{notion}
\newrobustcmd\nodesAcdCycle[1]{\kl[\nodesAcdCycle]{N_{#1}}}
\knowledge\nodesAcdCycle{notion}
\newrobustcmd\treeVertex[1]{\kl[\treeVertex]{\mathpzc{T}_{#1}}}
\knowledge\treeVertex{notion}
\newrobustcmd\nodesTreeVertex[1]{\kl[\nodesTreeVertex]{N_{#1}}}
\knowledge\nodesTreeVertex{notion}

\newrobustcmd\acdVertex[2]{\kl[\acdVertex]{\mathcal{ACD}_{(#1,#2)}}}
\knowledge\acdVertex{notion}

\knowledgenewrobustcmd\roundnodesv{N_{v,\bigcirc}}
\knowledgenewrobustcmd\squarenodesv{N_{v,\Box}}

\newrobustcmd\autCyclePreimage[1]{\kl[\autCyclePreimage]{\A_{(\inv{\pp},#1)}}}
\knowledge\autCyclePreimage{notion}

\newrobustcmd\unfold{\kl[\unfold]{\mathsf{Unfold}}}
\knowledge\unfold{notion}

\newrobustcmd\nodesAcd[1]{\kl[\nodesAcd]{\mathsf{Nodes}(\mathcal{ACD}_{#1})}}
\knowledge\nodesAcd{notion}
\newrobustcmd\nodesAcdRound[1]{\kl[\nodesAcdRound]{\mathsf{Nodes}_\bigcirc(\mathcal{ACD}_{#1})}}
\knowledge\nodesAcdRound{notion}
\newrobustcmd\nodesAcdSquare[1]{\kl[\nodesAcdSquare]{\mathsf{Nodes}_\Box(\mathcal{ACD}_{#1})}}
\knowledge\nodesAcdSquare{notion}

\newrobustcmd{\leavesAcd}{\kl[\leavesAcd]{\mathsf{Leaves}(\acd{\TS}}}
\knowledge\leavesAcd{notion}

\newrobustcmd{\suppAcd}{\kl[\suppAcd]{\mathsf{Supp}}}
\knowledge\suppAcd{notion}

\newrobustcmd{\parityNodesAcd}{\kl[\parityNodesAcd]p_{\mathcal{ACD}}}
\knowledge\parityNodesAcd{notion}
\newrobustcmd{\minparityAcd}[1]{\kl[\minparityAcd]{{\min}_{#1}}}
\knowledge\minparityAcd{notion}
\newrobustcmd{\maxparityAcd}[1]{\kl[\maxparityAcd]{{\max}_{#1}}}
\knowledge\maxparityAcd{notion}

\newrobustcmd{\coloursNodesAcd}{\kl[\coloursNodesAcd]\gg_{\mathcal{ACD}}}
\knowledge\coloursNodesAcd{notion}

\newrobustcmd{\minparityAcdVertex}[2]{\kl[\minparityAcdVertex]{{\min}_{(#1,#2)}}}
\knowledge\minparityAcdVertex{notion}
\newrobustcmd{\maxparityAcdVertex}[2]{\kl[\maxparityAcdVertex]{{\max}_{(#1,#2)}}}
\knowledge\maxparityAcdVertex{notion}

\newrobustcmd\acdParityTransform[1]{\kl[\acdParityTransform]{\mathsf{ACD}_{\mathsf{parity}}(#1)}}
\knowledge\acdParityTransform{notion}

\newrobustcmd\acdRabinTransform[1]{\kl[\acdRabinTransform]{\mathsf{ACD}_{\mathsf{Rabin}}(#1)}}
\knowledge\acdRabinTransform{notion}
\newrobustcmd\acdRabinTransformGFG[1]{\kl[\acdRabinTransformGFG]{\mathsf{ACD}_{\mathsf{parity}}^{\mathsf{game}}(#1)}}
\knowledge\acdRabinTransformGFG{notion}
	
\newrobustcmd{\ppAcd}{\kl[\ppAcd]{\pp_{\acdNoP}}}
\knowledge\ppAcd{notion}

\newrobustcmd{\VAsucc}{\kl[\VAsucc]{V_{\mathrm{A}\hyphen\mathrm{succ}}}}
\knowledge\VAsucc{notion}
\newrobustcmd{\Vnormal}{\kl[\Vnormal]{V_{\mathrm{normal}}}}
\knowledge\Vnormal{notion}
\newrobustcmd{\Apred}{\kl[\Apred]{\mathsf{pred}}}
\knowledge\Apred{notion}


\newrobustcmd{\mletters}{\kl[\mletters]{\mathtt{Letters}}}
\knowledge\mletters{notion}
\newrobustcmd{\minCol}{\kl[\minCol]{\mathtt{MinColour}}}
\knowledge\minCol{notion}
\newrobustcmd{\AlternatingSets}{\kl[\AlternatingSets]{\mathtt{AlternatingSets}}}
\knowledge\AlternatingSets{notion}
\newrobustcmd{\SCCDec}{\kl[\SCCDec]{\mathtt{SCC\hyphen Decomposition}}}
\knowledge\SCCDec{notion}
\newrobustcmd{\leavingLetters}{\kl[\leavingLetters]{\mathtt{LeavingLetters}}}
\knowledge\leavingLetters{notion}
\newrobustcmd{\pop}{\kl[\pop]{\mathtt{pop}}}
\knowledge\pop{notion}
\newrobustcmd{\push}{\kl[\push]{\mathtt{push}}}
\knowledge\push{notion}
\newrobustcmd{\maxInclusion}{\kl[\maxInclusion]{\mathtt{MaxInclusion}}}
\knowledge\maxInclusion{notion}

\newcommand{\NP}{\ensuremath{\mathsf{NP}}}
\newcommand{\PTime}{\ensuremath{\mathsf{P}}}
\newcommand{\coNP}{\ensuremath{\mathsf{co}\hyphen\mathsf{NP}}}
\newcommand{\NPcomplete}{\ensuremath{\mathsf{NP}\hyphen\mathrm{complete}}}
\newcommand{\PSPACE}{\ensuremath{\mathsf{PSPACE}}}


\definecolor{Blue Sapphire}{HTML}{005f73} 
\definecolor{Blue Matt}{HTML}{094c7b} 
\definecolor{Blue Dark}{HTML}{1B4C6E} 

\definecolor{Gamboge}{HTML}{ee9b00}
\definecolor{Ruby Red}{HTML}{9b2226}

\definecolor{Dark Ruby Red}{HTML}{580507}
\definecolor{Dark Blue Sapphire}{HTML}{053641}
\definecolor{Dark Gamboge}{HTML}{be7c00}

\definecolor{Blue Marine}{HTML}{022687}

\IfKnowledgePaperModeTF{
}{
	\knowledgestyle{intro notion}{color={Dark Ruby Red}, emphasize}
	\knowledgestyle{notion}{color={Dark Blue Sapphire}}
	\hypersetup{
		colorlinks=true,
		breaklinks=true,
		linkcolor={}, 
		citecolor={}, 
		filecolor={Blue Marine}, 
		urlcolor={Blue Marine},
	}
}

\knowledge{notion}
| notion@example
|definition@example

\knowledge{notion}
| $\omega $-word
| $\omega $-words
| infinite words
| infinite word

\knowledge{notion}
| word
| words

\knowledge{notion}
 | empty word

\knowledge{notion}
| prefix

\knowledge{notion}
  | partial mapping

\knowledge{notion}
  | state complexity


\knowledge{notion}
| directed graph
| graph
| graphs

\knowledge{notion}
| path
| paths

\knowledge{notion}
| strongly connected

\knowledge{notion}
| subgraph

\knowledge{notion}
| strongly connected component
| SCC
| strongly connected components
| SCCs

\knowledge{notion}
| final@SCC
| final SCC

\knowledge{notion}
| recurrent

\knowledge{notion}
| transient
| transient vertex

\knowledge{notion}
  | labelled graph
  | labelled@graph
  | labelled graphs
  | labelled
  | (labelled)
  | labelling
  
\knowledge{notion}
| pointed graph
| pointed graphs
| (pointed) graphs


\knowledge{notion}
 | transition system
 | transition systems
 | transitions systems
 | Transition system
 
\knowledge{notion}
 | TS

\knowledge{notion}
| underlying graph
| underlying graphs
| game graphs

\knowledge{notion}
| initial vertex
| initial vertices
| initial state
| initial@state
| initial states
| initial

\knowledge{notion}
  | uncoloured edges
  | $\ee $-edges
  | uncoloured
  | coloured

\knowledge{notion}
  | colours
  | output colours
  | output colour
  | sets of colours
  | colour
  | colours@outAut

\knowledge{notion}
 | edge-colouring 
 | colouring function
 | colouring function@TS
 | colouring@TS
 | colouring

\knowledge{notion}
| output
| output@run

\knowledge{notion}
 | accessible@vertex
 | reachable
 | accessible
 
\knowledge{notion}
 | accessible from
 | accessible@fromVertex

\knowledge{notion}
 | accessible@set
 | accessible@sets

\knowledge{notion}
| accessible part

\knowledge{notion}
| accessible part of $\TS $ from $v$
| accessible part of $\P $ from $v_P$
| accessible part from a vertex
| accessible part of $\widetilde {\TS }$ from $\tilde {v}$

\knowledge{notion}
| run@transSys
| infinite run
| infinite runs
| finite runs
| finite run
| runs
| run
| run from $v_0$

\knowledge{notion}
| accepting@run
| accepting
| acceptance@run
| accepting run
| accepting runs
| acceptance@runs

\knowledge{notion}
| rejecting@run
| rejecting run

\knowledge{notion}
| labelled transition system
| labelled transition systems
| (labelled) transition system
| labelling with input letters
| (labelled) TS

\knowledge{notion}
| vertex-labelling
| vertex-labellings
| vertex-labelled

\knowledge{notion}
| edge-labelled
| edge-labelled@graph

\knowledge{notion}
| size@transSys

\knowledge{notion}
  | product construction
  | composition
  | product automaton
  | product@aut
  | composing
  | composition of automata
  | compositions
  
\knowledge{notion}
  | suitable for transformations
  | suitability for transformations

\knowledge{notion}
| acceptance set
| acceptance sets
| winning condition

\knowledge{notion}
| acceptance condition
| acceptance conditions
| conditions
| condition
| condition@accep
| condition@acc
| conditions@acc

\knowledge{notion}
  | prefix-independent
  | prefix-independence


\knowledge{notion}
| automaton
| (non-deterministic) 
| automata
| (non-deterministic) automaton
| non-deterministic automata
| non-deterministic
| non-deterministic automaton
| non-determinism
| determinism

\knowledge{notion}
| input letters
| input alphabet
| input letter

\knowledge{notion}
| deterministic
| deterministic automaton
| deterministic automata
| Deterministic

\knowledge{notion}
| unambiguous

\knowledge{notion}
 | strongly unambiguous
 | (strongly) unambiguity
 | strongly@unamb

\knowledge{notion}
 | equivalent@automaton
 | equivalent@automata
 | equivalent@aut
 
\knowledge{notion}
| transition function

\knowledge{notion}
  | $\SS $-complete
  | complete
  | completeness
  | completeness@aut

\knowledge{notion}
| run over $w$
| run@automaton
| run over
| run@aut
| over@run
| run $\rr $ over

\knowledge{notion}
  | subautomaton induced by
  | subautomaton@induced
  | subautomaton
  | -subautomaton
  | subautomata
  | automaton induced by
  | induces@aut
  | induced by@aut
  | subautomaton induced
  | subautomata induced by subgraphs

\knowledge{notion}
| accepted@word
| accepts@aut

\knowledge{notion}
| language accepted
| accepting@automaton
| recognising
| recognise
| recognises
| recognising@automaton
| recognising@aut
| recognised
| recognises@aut
| language recognised

\knowledge{notion}
  | $\oo $-regular language
  | $\oo $-regular languages
  | $\oo $-regular@language
  | $\oo $-regular automata


\knowledge{notion}
  | history-deterministic
  | HD-parity automaton
  | history-determinism
  | history-deterministic@aut
  | HD@aut
  | HD
  | history-determinism@aut
  | HD automaton
  | HD automata
  | (history-)deterministic
  | history-deterministic automaton
  | History-deterministic automata
  | history-deterministic automata
  | History-
  | deterministic@hist

\knowledge{notion}
  | Determinisable by pruning (DBP)
  | determinisable by pruning

\knowledge{notion}
| resolver
| resolvers
| resolver@aut
| resolvers@aut

\knowledge{notion}
| sound@aut
| sound resolver@aut
| soundness@resolverAut
| sound@resolverAut

\knowledge{notion}
| run induced by@aut
| induced by $r$@aut

\knowledge{notion}
  | reachable using the resolver $(r_0, r)$
  | reachable using some sound resolver
  | reachable using@resolver
  | reachable using a sound resolver
  | reachable using $(r_0, r)$


\knowledge{notion}
  | game
  | games
  | Games

\knowledge{notion}
  | play
  | plays
  | finite play

\knowledge{notion}
  | strategy
  | strategies

\knowledge{notion}
  | wins
  | wins $\G $ from $v$
  | wins $\G \compositionAut \A $ from $(v, q_0)$
  | win
  | from@winGame
  | winner
  | wins $\G $ from

\knowledge{notion}
  | consistent with the strategy
  | consistent with
  | consistent with@strat
  | consistency with@strat
  
\knowledge{notion}
  | winning region

\knowledge{notion}
| full winning region

\knowledge{notion}
  | players
  | player
  | controls
  | controlled by

\knowledge{notion}
  | Eve
  | Eve's
  | Eve-vertex
  | controlled by Eve

\knowledge{notion}
  | Adam
  | Adam-vertex
  | controlled by Adam
  | Adam's vertex
  | Adam's

\knowledge{notion}
  | winning strategy for Eve
  | winning strategy
  | winning@strat
  | winning for Eve
  | winning strategy from
  | winning

\knowledge{notion}
  | memory skeleton

\knowledge{notion}
  | update function@mem

\knowledge{notion}
  | memory structure
  | finite memory structure
  | memory structures
   | memory@game

\knowledge{notion}
  | next-move function

\knowledge{notion}
  | implements@memStrat
  | implemented by@mem
  | implementing@mem
  | implemented by
  | implementing
  | implements
  | implements@strat
  | implemented by@stratMem

\knowledge{notion}
  | finite memory strategy
  | finite memory@strat
  | memory requirements
  | memory@games

\knowledge{notion}
| memory structure@aut
| memory structure@resolver
| implemented by@aut
| finite-memory resolvers

\knowledge{notion}
  | composition@memory
  | product game@mem

\knowledge{notion}
  | B\"uchi language
  | Büchi languages
  | B\"uchi@language
  | B\"uchi
  | Büchi

\knowledge{notion}
 | Büchi language associated to
 
\knowledge{notion}
  | coB\"uchi language
  | coB\"uchi languages
  | coB\"uchi
  | (resp. coB\"uchi) language
  | coBüchi

\knowledge{notion}
  | coB\"uchi language associated to
  
\knowledge{notion}
  | Streett language
  | Streett@language
  
\knowledge{notion}
| Streett language associated to
| Streett condition
| Streett

\knowledge{notion}
  | Rabin pairs
  | Rabin pair

\knowledge{notion}
  | Rabin language associated to
  | Rabin
  | Rabin condition

\knowledge{notion}
  | accepted by the Rabin pair
  | accepted by@Rabinpair
  | accepted@Rabinpair
  | accepts@RabinPair
  | accepted by@RabinPair

\knowledge{notion}
  | Rabin language
  | Rabin@language
  | Rabin languages

\knowledge{notion}
  | $[c,d]$-parity language
  | $[0,1]$-parity languages
  | $[1,2]$-parity languages
  | $[d_{\min }, d_{\max }]$-parity language
  | $[\minparityZ {\F },\maxparityZ {\F }]$-parity language
  | $[1,d]$-parity
  | $[0,1]$-parity
  | $[1,2]$-parity
  
\knowledge{notion}
  | parity@language
  | parity language
  | Parity conditions
  | parity condition
  | parity
  | parity languages
  | parity conditions
  | Parity languages
  | Parity

\knowledge{notion}
  | Muller language associated to
  | Muller
  | language associated@Muller

\knowledge{notion}
  | Muller language
  | Muller@condition
  | Muller languages
  
\knowledge{notion}
  | DPA
  | DPAs
  
\knowledge{notion}
    | DMA
    | DMAs

\knowledge{notion}
  | $X$ condition
  | $X$ acceptance condition
  | $X$ transition system
  | $X$ automaton
  | $X$ language
  | $Y$ language
  | $X$ languages
  | $Y$ languages

\knowledge{notion}
  | cycles
  | cycle

\knowledge{notion}
| states of the cycle
| containing@cycle
| state in common
| cycle over $(q,m_i)$
| cycles over $q$
| set of states@cycle
| over $v$@cycle

\knowledge{notion}
| accepting@cycle
| accepting cycle
| accepting cycles
| accepting@cycles

\knowledge{notion}
| rejecting@cycle
| rejecting cycle
| rejecting cycles
| rejecting@cycles

\knowledge{notion}
  | accepting SCC
  | accepting@SCC

\knowledge{notion}
  | rejecting SCC
  | rejecting@SCC

\knowledge{notion}
| $d$-flower
| $d$-flowers
| flowers
  | $k$-flower
  | $k+1$-flower
  | $(d+1)$-flower
  | $(d-1)$-flower

\knowledge{notion}
| positive flower
| positive@flower

\knowledge{notion}
  | negative@flower
  | negative flowers

\knowledge{notion}
  | language of accepting runs of

\knowledge{notion}
  | deterministic parity hierarchy

\knowledge{notion}
  | parity index
  | parity index is not less
  | parity index@atMost  

\knowledge{notion}
  | parity index at least

\knowledge{notion}
  | parity-index-tight


\knowledge{notion}
  | tree
  | trees

\knowledge{notion}
  | depth
  | deepest
  | deepest@tree

\knowledge{notion}
  | subtree

\knowledge{notion}
  | ordered tree
  | ordered
  | ordered@tree
  | order@tree
 | ordered trees
  
\knowledge{notion}
  | nodes
  | node

\knowledge{notion}
  | on the left of
  | leftmost@tree
  | leftmost
  | from left to right

\knowledge{notion}
  | ancestor relation

\knowledge{notion}
  | ancestor
  | below@tree
  | ancestors
  | below
  | above
  | lowest@tree

\knowledge{notion}
  | descendant
  | descendants
  | descendent

\knowledge{notion}
  | $A$-labelled (ordered) tree
  | $\powplus {\SS }$-labelled tree
  | labelled tree
  
\knowledge{notion}
  | the root
  | root

\knowledge{notion}
  | leaves
  | leaf
  
\knowledge{notion}
  | children
  | child
  
\knowledge{notion}
  | height

\knowledge{notion}
  | subtree of~$T$ rooted at~$n$
  | subtree of~$\zielonkaTree {\F }$ rooted at~$n$
  | subtrees of $T$ rooted at
  | subtree of $\zielonkaTree {\F }$ rooted at
  | subtree rooted
  | subtree rooted at

\knowledge{notion}
  | forest

\knowledge{notion}
  | branch
  | branches


\knowledge{notion}
| morphism@graphs
| morphism of graphs
| morphism of (labelled) graphs

\knowledge{notion}
| morphism of pointed graphs
| morphism of (labelled) pointed graphs

\knowledge{notion}
| morphism of labelled graphs

\knowledge{notion}
  | weak morphism of transition systems
  | (weak) morphism@transSys
  | weak morphism
  | weak morphism of labelled transition systems
  | weak morphism of (labelled) transition systems
  | labelled transition systems@mapping
  | (weak) morphism
  | weak morphisms
  | weak morphism of (labelled) TS
  | weak morphism of TS

\knowledge{notion}
   | morphism of automata
   | of automata@morphism
   | automata@mapping
   | morphism@aut
   | of automata@morph
  
\knowledge{notion}
   | morphism of games
   | of games@morphism  
  
\knowledge{notion}
  | isomorphism@TS
  | isomorphism of transition systems
  | isomorphism

\knowledge{notion}
  | isomorphic@TS
  
\knowledge{notion}  
  | morphism of labelled transition system
  | morphism of (labelled) transition systems
  | morphism of labelled transition systems
  | morphism of labelled TS
  | morphism@transSys
  | morphisms@transSys
  | morphism of transition systems
  | morphism
  | morphisms
  | morphisms of transition systems
  | morphism@TS
  | morphism of TS
  | morphisms@TS

\knowledge{notion}
  | surjective
  | surjectivity
  | surjective morphism

\knowledge{notion}
  | preserves the acceptance
  | preserve the acceptance of runs

\knowledge{notion}
  | preserves accepting runs
  | preserve rejecting runs
  | preserve accepting runs

\knowledge{notion}
  | Locally surjective
  | locally surjective
  | local surjectivity
  | locally surjective morphism
  | locally surjective morphisms
  | Locally surjective morphisms

\knowledge{notion}
  | Locally injective
  | locally injective
  | local injectivity

\knowledge{notion}
  | Locally bijective
  | locally bijective
  | locally bijective morphism
  | Locally bijective morphisms
  | local bijectivity
  | local bijective
  | locally bijective morphisms

\knowledge{notion}
| automaton of the morphism
 | automaton@morphism
 | automaton of morphism $\pp $
 |automaton@autMorphism
 | of the morphism@aut


\knowledge{notion}
  | resolver simulating $\pp $
  | resolver simulating~$\pp $
  | resolver simulating
  | resolver@HDmorphism
  | simulating@resolver
  | resolver@morphism
  | resolver@mapping
  | simulating
  | resolver@HDmapping
  | resolver@morph
  | simulate@morph
  
\knowledge{notion}
  | sound@resolverMorphism
  | sound resolver@morph
  | sound@morph
  | soundness@morph

\knowledge{notion}
| run induced by@morphism 
| run induced by@morph
| induced by $r$@morph

\knowledge{notion}
  | history-deterministic mapping
  | history-deterministic mappings
  | history-deterministic@mapping
  | HD mappings
  | HD mapping
  | HD@mapping
  | history-determinism@mapping
  | history-determinism@morp
  | history-determinism@morph

\knowledge{notion}
  | HD mapping of labelled transition systems

\knowledge{notion}
| history-deterministic-for-games
  | HD@games
  | HD-for-games mapping
  | HD-for-games
  | HD-for-games mappings
  
\knowledge{notion}
  | resolver sound for $\G $
  | sound for $\G $
  | sound for $\TS $
  | sound for $\acdRabinTransformGFG {\G }$

\knowledge{notion}
  | consistent with@resolver
  | consistent with@mapping

\knowledge{notion}
  | Zielonka tree automaton

\knowledge{notion}
  | Zielonka tree
  | Zielonka trees

\knowledge{notion}
  | round nodes
  | round node
  | round

\knowledge{notion}
  | square nodes
  | square node
  | square

\knowledge{notion}
  | round-branching width

\knowledge{notion}
  | ZT-HD-Rabin-automaton
  | ZT-HD-Rabin- automaton

\knowledge{notion}
  | ZT-parity-automaton
  | Zielonka-tree-parity-automaton

\knowledge{notion}
| round-branching-width

\knowledge{notion}
  | $X$-closed@automaton
  | $X$-closed
  | $\SS _i$-closed

\knowledge{notion}
  | $X$-final strongly connected component
  | $X$-FSCC
  | $\nu (n_1)$-FSCC
  | $\nu (n_2)$-FSCC
  | $\nu (n_i)$-FSCC

\knowledge{notion}
  | implements@resolver
  | resolver implemented by
  | implemented by@resolver

\knowledge{notion}
  | letter game
  | Letter game

\knowledge{notion}
  | attractor to $X$
  | attractor to $0$
  | attractor to $x$
  
\knowledge{notion}
  | uniformly
  | uniformly@strat

\knowledge{notion}
| $x$-attractor decomposition
| attractor decompositions
| attractor decomposition
| $0$-attractor decomposition
| $2$-attractor decomposition

\knowledge{notion}
| $x$-recursive attractor decomposition
| $0$-recursive attractor decomposition
| $x+2$-recursive attractor decomposition
| decomposition@attrRecursive

\knowledge{notion}
| full attractor decomposition

\knowledge{notion}
  | $\SS _i$-region
  | $\SS _{i'}$-region
  | $\SS _i$-region@simple
  | $\SS _i$-region of the attractor decomposition
  | $\SS _i$-region@simpleAttr
  | $\SS _i$-regions

\knowledge{notion}
  | $1$-avoiding region
  | $x+1$-avoiding region
  | $y$-avoiding region@simpleAttr
  | $y+2$-avoiding region@simple
  | $3$-avoiding region
  
\knowledge{notion}
 | $\SS _i$-region of $\attrDec {\G '}$
 | $\SS _i$-region of $\attrDec {\letterGame {\A }}$
 | $\SS _i$-regions@recursive
 | $\SS _i$-region@rec
 | $\SS _i$-regions@rec

\knowledge{notion}
| $y$-avoiding region of $\attrDec {\G '}$
| $x-1$-avoiding region@recursive
| $y$-avoiding region@rec

\knowledge{notion}
| $X$-Adam-closed@letterGame
| $\SS _i$-Adam-closed@letterGame
| $X$-Adam-closed subgraph@letterGame
  
\knowledge{notion}
  | $X$-closed subgame
  | $\SS _i$-closed subgame
  | $\SS _i$-closed subgames

\knowledge{notion}
  | $X$-FSCC@letterGame
  | $\SS _i$-FSCC@letterGame

\knowledge{notion}
  | tree of alternating subcycles
  | trees of alternating subcycles

\knowledge{notion}
  | round nodes@acd
  | round node@acd
  | round@acd

\knowledge{notion}
  | square nodes@acd
  | square node@acd
  | square@acd
  
\knowledge{notion}
  | alternating cycle decomposition
  | ACD
  | Alternating cycle decomposition
  | ACDs

\knowledge{notion}
  | set of nodes of $\acd {\TS }$

\knowledge{notion}
  | local subtree at $v$
  | local subtree at $v_4$
  | local subtree at
  | local subtree

\knowledge{notion}
| local Muller condition of $\T $ at $q$
| local Muller condition of $\TS $ at $v$
| local Muller condition
| local Muller condition of $\TS '$ at $v'$
| local Muller conditions
| local Muller language of $\TS $ at $v$

\knowledge{notion}
  | positive@tree
  | positive tree

\knowledge{notion}
  | negative@tree

\knowledge{notion}
  | positive@acd

\knowledge{notion}
  | negative@acd

\knowledge{notion}
  | equidistant@acd

\knowledge{notion}
  | ACD-parity-transform
  | ACD-parity-transformation
  | ACD-parity@ACDtrans
  | ACD-transforms
  | ACD-transform

\knowledge{notion}
  | ACD-HD-Rabin-transform
  | ACD-HD-Rabin-transformation
  | ACD-HD-Rabin@ACDtrans
  
\knowledge{notion}
  | ACD-HD-Rabin-transform-for-games
  
\knowledge{notion}
| A-successor
| A-successors

\knowledge{notion}
  | cycle-preimage-automaton at $v'$

\knowledge{notion}
  | unfolding

  
\knowledge{notion}
| equivalence acceptance condition
| equivalent over $\S $
| equivalent parity condition over
| equivalent@condition
| equivalent to@condition
| equivalent over $G$
| equivalence of acceptance conditions
| equivalent to@accCond
| equivalent@accCond
| equivalent@acceptCond
| equivalent acceptance condition
| equivalent@acc
| equivalent over a same graph $G$
| equivalent over a same graph~$G$


\knowledge{notion}
  | $X$ type
  | Rabin type
  | Streett type
  | parity type
  | $[1,d+1]$-parity type
  | $[0,d-1]$ (resp. $[1,d]$)-parity type
  | Rabin@type
  | B\"uchi type
  | coB\"uchi type
  | B\"uchi@type
  | (resp. coB\"uchi) type
  | generalised B\"uchi@type
  | (resp. generalised coB\"uchi) type
  | generalised B\"uchi type
  | generalised coB\"uchi type
  | $[0,d-1]$ (resp. $[1,d]$ / $\WeakIndex {d}$)-parity type
  | typeness
  
\knowledge{notion}
| $\Weak {d}$ type

\knowledge{notion}
  | Rabin shape

\knowledge{notion}
  | Streett shape

\knowledge{notion}
  | Parity shape
  | parity shape
  
\knowledge{notion}
  | $[0,1]$-parity shape
  | $[1,2]$-parity shape

\knowledge{notion}
  | Rabin ACD
  | Rabin@ACD
  
\knowledge{notion}
  | Streett ACD

\knowledge{notion}
  | parity ACD
  | Parity ACD 

\knowledge{notion}
  | $[0,d-1]$-parity ACD
  | $[1,d]$-parity ACD
  | $[0,1]$-parity ACD
  | $[1,2]$-parity ACD
  | $[0,d]$-parity ACD
  | $[1,d+1]$-parity ACD
  | $[0,d-1]$(resp. $[1,d]$)-parity ACD
  | $[0,d-1]$ (resp. $[1,d]$)-parity ACD


\knowledge{notion}
  | normalised
  | normal form
  | normalisation
  | normality
  
\knowledge{notion}
| normalised@neg
| normal form@neg
  
\knowledge{notion}
  | negative@TS
  | not negative@TS
  | non-negative@TS

\knowledge{notion}
  | duplicated edges
  | duplicated transitions

\knowledge{notion}
| positive@SCC

\knowledge{notion}
| negative@SCC

\knowledge{notion}
  | generalised Büchi language associated to

\knowledge{notion}
  | generalised B\"uchi language
  | generalised B\"uchi@language
  | generalised B\"uchi
  | generalised B\"uchi languages
  | generalised Büchi
  | generalised B\"uchi-
  
\knowledge{notion}
  | generalised coB\"uchi language associated to

\knowledge{notion}
  | generalised coB\"uchi language
  | generalised coB\"uchi
  | coB\"uchi@generalised

\knowledge{notion}
  | B\"uchi shape
 | B\"uchi@shapeZT
 
\knowledge{notion}
  | coB\"uchi shape
 | (resp. coB\"uchi) shape
 
\knowledge{notion}
  | Generalised B\"uchi shape
  | generalised B\"uchi shape
 | generalised B\"uchi@shapeZT
 
\knowledge{notion}
  | Generalised coB\"uchi shape
  | generalised coB\"uchi shape
  | (resp. generalised coB\"uchi) shape

\knowledge{notion}
  | B\"uchi ACD

\knowledge{notion}
  | coB\"uchi ACD

\knowledge{notion}
  | Generalised B\"uchi ACD
  | generalised B\"uchi ACD

\knowledge{notion}
  | Generalised coB\"uchi ACD
  | generalised coB\"uchi ACD

\knowledge{notion}
| $\Weak {d}$ ACD

\knowledge{notion}
 | generalised weak
 | Generalised weak
 | weak automata
 | generalised weak automata
 | conditions that depend on the structure

\title{From Muller to Parity and Rabin Automata: Optimal Transformations Preserving (History) Determinism}
\ThCSshorttitle{From Muller to Parity: Optimal Transformations Preserving (History) Determinism} 

\ThCSauthor[1]{Antonio Casares}{antoniocasaressantos@gmail.com}[0000-0002-6539-2020]
\ThCSauthor[2,3]{Thomas Colcombet}{thomas.colcombet@irif.fr}[0000-0001-6529-6963]
\ThCSauthor[1,2,4]{Nathanaël Fijalkow}{nathanael.fijalkow@labri.fr}[0000-0002-6576-4680]
\ThCSauthor[2,5]{Karoliina Lehtinen}{lehtinen@lis-lab.fr}[0000-0003-1171-8790]

\ThCSaffil[1]{LaBRI, Université de Bordeaux, France}
\ThCSaffil[2]{CNRS, France}
\ThCSaffil[3]{IRIF, Université Paris Cité, France}
\ThCSaffil[4]{MIMUW, University of Warsaw, Poland}
\ThCSaffil[5]{Aix-Marseille Université, LIS, France}

\ThCSshortnames{A. Casares, T. Colcombet, N. Fijalkow and K. Lehtinen} 
\ThCSkeywords{Emerson-Lei automata, 
Good-for-games automata, 
Paritizing methods, 
Omega-regular languages.}



\ThCSthanks{We want to thank Klara J. Meyer and Salomon Sickert for their comments and for spotting a mistake in a previous version. We also thank Alexandre Duret-Lutz and Florian Renkin for stimulating discussions around the alternating cycle decomposition, and Corto Mascle for his precious suggestions about the presentation of this paper.
  We thank the anonymous reviewers for their valuable feedback.\\
  Thomas Colcombet: Supported by the European Research Council (ERC) under the European Union’s Horizon 2020 research and innovation programme (grant agreement No.670624) -- DuaLL -- and the DeLTA ANR project (ANR-16-CE40-0007).\\
This article is based on a paper that appeared at ICALP 2021~\cite{CCF21Optimal}, and incorporates material from another paper that appeared at ICALP 2022~\cite{CCL22SizeGFG}.}



\ThCSyear{2024}
\ThCSarticlenum{12}
\ThCSdoicreatedtrue
\ThCSreceived{May 19, 2023}
\ThCSaccepted{Jan 21, 2024}
\ThCSpublished{Apr 23, 2024}

\addbibresource{references.bib}

\begin{document}
	\maketitle
	\begin{abstract}
	We study transformations of automata and games using Muller conditions into equivalent ones using parity or Rabin conditions.
	We present two transformations, one that turns a deterministic Muller automaton into an equivalent deterministic parity automaton, and another that provides an equivalent history-deterministic Rabin automaton.
	We show a strong optimality result: the obtained automata are minimal amongst those that can be derived from the original automaton by duplication of states.
	We introduce the notions of locally bijective morphisms and history-deterministic mappings to formalise the correctness and optimality of these transformations.
	
	The proposed transformations are based on a novel structure, called the alternating cycle decomposition, inspired by and extending Zielonka trees. In addition to providing optimal transformations of automata, the alternating cycle decomposition offers fundamental information on their structure. We use this information to give crisp characterisations on the possibility of relabelling automata with different acceptance conditions and to perform a systematic study of a normal form for parity automata.
	\end{abstract}

\paragraph*{}
This document contains hyperlinks.
\AP Each occurrence of a "notion@@example" is linked to its ""definition@@example"".
On an electronic device, the reader can click on words or symbols (or just hover over them on some PDF readers) to see their definition.

	\section{Introduction}\label{section:introduction}
	\subsection*{Context}
\subparagraph{Games and automata for LTL synthesis.} "Games" and "automata" over infinite words form the theoretical basis for the verification and synthesis of reactive systems; we refer to chapters 2,~4, and~27 of the recent Handbook of Model Checking~\cite{PitermanPnueli2018Handbook,Kupferman2018Handbook,BloemCJ2018Handbook} for a broad exposition of this research area.
A milestone objective is the synthesis of reactive systems with specifications given in \emph{Linear Temporal Logic} (LTL).
The original approach of Pnueli and Rosner~\cite{PR89Synthesis} using automata and games devised more than four decades ago is still at the heart of the state-of-the-art synthesis tools~\cite{EKRS17FromLTLtoParity,LMS20SynthesisLTL,MichaudColange18Synt,MullerSickert17LTLtoDeterministic}.
The limiting factor in this method is the transformation of the LTL formula to a "deterministic" "parity" "automaton". This "automaton" is then used to build a "game", and a controller for the reactive system can be obtained from a "winning" "strategy" for this game.
Most solutions to this problem (including the top-ranked tools in the SyntComp competitions~\cite{SyntCompReport22}, Strix~\cite{LMS20SynthesisLTL, MS21ModernisingStrix} and \texttt{ltlsynt}~\cite{MichaudColange18Synt})  first construct a  "Muller" (or Emerson-Lei) "automaton", and then transform it into an equivalent "parity" "automaton" (we remark that, nevertheless, synthesis procedures avoiding the construction of "deterministic" automata have been proposed, for example, via the use of universal "coB\"uchi" automata~\cite{KV05Safraless}).
The use of an intermediate "Muller" "automaton" is also present (although sometimes implicitly) in the most recent improvements in the determinisation  of "B\"uchi" "automata" towards "deterministic" "parity" "automata"~\cite{LodingP19,Piterman2006fromNDBuchi,Schewe2009tighter}. For this reason, understanding transformations of "Muller" "automata" and finding efficient procedures for them is of great importance.

\subparagraph{Which are the simplest acceptance conditions?}
There exist multiple kinds of "acceptance conditions" that are commonly employed by $\oo$-automata ("B\"uchi", "Rabin", "Muller"...). The use of "parity conditions" for LTL synthesis is justified by both practical and theoretical reasons. Firstly, there exist several high-performing algorithms solving "parity" "games"~\cite{DiStasio2021ParityGames,FriedmannM09ParityGames,SyntCompReport22,LvD20SymbolicParity,Dijk18Oink}, so the last step in the LTL synthesis method described above can be carried out smoothly once the "parity" "game" is obtained. 
From a theoretical point of view, "parity conditions" can be considered as the simplest family of conditions that can be used to "recognise" all "$\oo$-regular languages" with "deterministic" "automata"; it could even be argued that there is a canonical aspect to them:
\begin{itemize}
	\item The optimal number of "colours@@outAut" needed by a "parity" "automaton"  to "recognise" a language~$L$ reveals a fundamental piece of information about it, called its "parity index". 
	The "parity index" (sometimes called Mostowski index) yields a strict hierarchy both for "deterministic" automata over words and for non-deterministic automata over trees~\cite{Niwinski86Clones,Bradfield98MuHierarchy} (and these hierarchies are closely related~\cite{KSV96Relating,NiwinskiWalukievicz1998Relating}).
	In both cases, this index is a measure of the structural complexity of automata "recognising"~$L$~\cite{Wagner1979omega,NiwinskiWalukievicz1998Relating} and of its topological complexity~\cite{ADMN08Topological, Skrzypczak13Topological}.
	Whether we can decide the "parity index" of a language of infinite trees represented as a non-deterministic parity tree-automaton is a long-standing open problem~\cite{NW05DecidingHierarchy,CL08MostwHierarchy}, which is tightly related with the alternation depth of fixpoint operators in $\mu$-calculus formulas~\cite{Niwinski86Clones}
	
	\item "Parity languages" are exactly "Muller languages" corresponding to families $\F\subseteq \powplus{\GG}$ of subsets of "colours" such that both $\F$ and its complement are closed under union (Proposition~\ref{prop-typ:parityZielonkaShape}).
	\item "Parity languages" are bipositional~\cite{EmersonJutla91Determinacy} (in a "parity" "game", both players can play optimally using positional "strategies", that is, "strategies" that use no "memory@@game"). Moreover, over infinite "game graphs", these are the only bipositional languages~\cite{CN06}, and over finite "game graphs", these are the unique bipositional "Muller languages"~\cite{Zielonka1998infinite}.
	\item Solving "parity" "games" is both in $\NP$ and $\coNP$~\cite{EJS93ModelChecking} (more precisely, the problem is in $\mathsf{UP}\cap\mathsf{co\hyphen UP}$~\cite{Jurdzinski98UP}). They can be solved in quasi-polynomial time~\cite{CJKLS17}, and whether they can be solved in polynomial time is a major open question. This contrasts with the complexity of solving "Rabin" and "Muller" games, which is, respectively, $\NP$-complete~\cite{EmersonJutla99Complexity} and $\PSPACE$-complete~\cite{Dawar2005ComplexityBounds}.
\end{itemize}

However, these are not the only kind of conditions that deserve our attention. In this work, we further investigate transformations producing "automata" using a "Rabin" "acceptance condition". Although in practice solvers for "Rabin" "games" are not as developed, "Rabin languages" are a natural choice and interesting from a theoretical point of view: they are exactly the half-positional "Muller languages"~\cite{Zielonka1998infinite}, there exists a correspondence between "Rabin" "automata" and "memory structures" for "Muller" "games"~\cite{Casares2021Chromatic, CCL22SizeGFG}, and the determinisation of "B\"uchi" "automata" naturally produces "Rabin" "automata"~\cite{FKVW15Profile,Safra1988onthecomplexity, Schewe2009tighter}.

\subparagraph{Transformations of games and automata.} There are various existing techniques to transform "Muller" "automata" or "games" into "parity" ones. The majority of these methods involve "composing" the input automaton $\A$ with a "deterministic" "parity" "automaton" "recognising" the "acceptance condition" used by $\A$. The first such "parity" "automaton" was introduced by Gurevich and Harrington in the 1980s~\cite{Gurevich1982trees} and is known as the Latest Appearance Record (LAR).
L\"oding proved that the LAR is optimal in the worst case~\cite{Loding1999Optimal}: \textit{there exists} a family of "Muller languages" $L_i$
for which the LAR is minimal amongst "deterministic" "parity" "automata" "recognising" $L_i$.
However, the LAR is far from being minimal in every case, as it only uses the information about the size of the alphabet.
Since its introduction, many refinements of the LAR have been proposed for subclasses of "Muller languages"~\cite{Kretinski2017IAR,Loding1999Optimal}. 
The approach using "composition of automata" has one significant drawback: it disregards the structure of the original automaton, and only its "acceptance condition" is taken into account. Some works have explored heuristics to improve this aspect~\cite{KMWW21IARPreorders, MeyerSickert21OptimalPractical, RDP20PracticalParitizing}. These refined transformations do still have the following property: each original state $q$ is turned into multiple states of the form $(q, x)$ -- although this is done in a non-uniform way, with each state possibly being copied a different number of times.
In this work, we introduce "morphisms" of "transition systems" to formalise the idea of transformations of "automata" and "games"; if a "parity" "automaton" $\B$ has been obtained as a transformation of a "Muller" "automaton" $\A$, there will be a "morphism" $\pp\colon \B \to \A$ that sends states of the form $(q, x)$ to $q$. A theory of "morphisms" of "transition systems" is developed in Section~\ref{section:morphisms}.

\subparagraph{History-deterministic automata.} For the purposes of LTL synthesis and "game" transformations, it is imperative to eliminate "non-determinism" from "automata", since "non-deterministic" "automata" do not yield correct "games". Unfortunately, "deterministic" automata can be exponentially larger than "non-deterministic" ones.  
Recently, an intermediate model of "automata", named "history-deterministic" (also called good-for-games), has received considerable attention. The reason is that "history-determinism" exactly captures the features of "deterministic" "automata" that make them suitable for synthesis purposes, while being a less restrictive model.
A natural question that arises is whether "history-deterministic" "automata" can be more succinct than "deterministic" ones, and, in that case, which languages and automata types can benefit from this succinctness. It was not until several years after the introduction of "history-determinism"~\cite{HP06, Colcombet2009CostFunctions} that an example of an "$\oo$-regular language" for which "history-deterministic" "automata" are smaller than "deterministic" ones was exhibited~\cite{KS15DeterminisationGFG} (and it was even conjectured that such an "automaton" could not exist~\cite{Colcombet2012FormsND}).
"History-deterministic automata" are the focus of several lines of research (we refer to the survey~\cite{BL23SurveyHD} for a detailed exposition). Despite this, a complete understanding of "history-deterministic" automata remains elusive, and their scope of applicability is still uncertain. One key aspect that has not yet been addressed is how to design techniques as general as possible for building "history-deterministic automata". To the best of our knowledge, the only existing result in this direction is a polynomial-time algorithm to minimise "coBüchi" "history-deterministic" automata~\cite{AK22MinimizingGFG}.

\subparagraph{The Zielonka tree and the alternating cycle decomposition.} The starting point of our work is the notion of "Zielonka tree", introduced by Zielonka~\cite{Zielonka1998infinite} as an informative representation of "Muller languages" -- languages that can be described by a boolean
combination of atomic propositions of the form ``the letter `$a$' appears infinitely often''. The "Zielonka tree" captures many important properties of "Muller languages", such as being "Rabin" or "parity"~\cite{Zielonka1998infinite}, and, most importantly, it characterises their exact "memory requirements", both in two-player games~\cite{DJW1997memory} and stochastic games~\cite{Horn09RandomFruits}. 

The contribution at the core of this work is a generalisation of "Zielonka trees" to general "Muller" "automata" "recognising" any "$\oo$-regular language", which we call the "alternating cycle decomposition" (ACD). The "ACD", greatly inspired from Wagner's work on $\oo$-automata~\cite{Wagner1979omega}, is a data structure that provides an abridged representation of the "accepting@@cycle" and "rejecting@@cycle" "cycles" of the "automaton", encapsulating the interplay between the structure of the "underlying graph"  and the "acceptance condition" of a "Muller" "automaton".

\subsection*{Contributions}
In this work, we carry out an extensive study of transformations of "Muller" "automata" and "games". We outline next our main contributions.

\begin{longdescription}
	\item[1. Minimal automata for Muller languages.] The basis on which we build up our work is a study of minimal automata "recognising" "Muller languages". Using the "Zielonka tree", we propose a construction of a "deterministic" "parity" automaton "recognising" a "Muller language" (Section~\ref{subsec-zt: parity automaton}). This construction implicitly appears in the long version of~\cite{DJW1997memory}. We show a strong optimality result: \emph{for all} "Muller language" $L$, the "parity" "automaton" obtained from the "Zielonka tree" is minimal both amongst "deterministic" and "history-deterministic" "parity" "automata" "recognising"~$L$ (Theorem~\ref{thm-zt:strong_optimality_ZTparity}).\footnotemark{} Moreover, it uses the optimal number of "output colours" to "recognise"~$L$ (Theorem~\ref{thm-zt:optimality_ZTparity-priorities}).  The optimality result we obtain is much stronger than the worst case optimality result of the LAR transformation~\cite{Loding1999Optimal}, since it applies to every "Muller language". In particular, our characterisation yields an algorithm to minimise "deterministic" "parity" "automata" "recognising" "Muller languages" in polynomial time (Theorem~\ref{thm-min:minimisation_parity_automata}).
	In light of our result, we conclude that the use of "history-determinism" does not yield any gain in the "state complexity" of "parity" "automata" "recognising" "Muller languages". 
	
	\footnotetext{The optimality of the "Zielonka-tree-parity-automaton" amongst "deterministic" automata has also been obtained in the independent unpublished work \cite{MeyerSickert21OptimalPractical}.}
	
	We further propose a construction of a "history-deterministic" "Rabin" automaton "recognising" a "Muller language" (Section~\ref{subsec-zt: GFG-Rabin}), and prove that this automaton is minimal amongst "history-deterministic" "Rabin" automata (Theorem~\ref{thm-zt:optimality_ZT-HD-Rabin}). This construction is also based on the "Zielonka tree". 
	
	In essence, our results reinforce the idea that the "Zielonka tree" precisely captures the fundamental properties of  "Muller languages".

	\item[2. Introducing morphisms as witnesses of transformations.] In order to formalise transformations of "games" and "automata", we develop a theory of "morphisms of transition systems" (Section~\ref{section:morphisms}). Intuitively, a morphism $\pp\colon \B\to \A$ witnesses the fact that $\B$ has been obtained from $\A$ by blowing up each state $q\in \A$ to the states in $\inv{\pp}(q)$. However, this property on its own does not suffice to guarantee the semantic equivalence of $\A$ and $\B$. It is for this reason that we introduce different variants of "morphisms", offering a range of definitions with varying degrees of restrictiveness. Two kinds of "morphisms" will be of central importance: (1) "locally bijective" "morphisms", which generalise "composition" with "deterministic" "automata" and preserve "determinism", and (2) "history-deterministic mappings" ("HD mappings"), which generalise "composition" by "history-deterministic" "automata" and are defined using a minimal set of hypothesis guaranteeing the semantic equivalence of $\A$ and $\B$.

	\item[3. The alternating cycle decomposition and optimal transformations of Muller transition systems.] In order to generalise the fruitful applications of the "Zielonka tree" to "Muller" "automata" and "games", we introduce the "alternating cycle decomposition" (ACD), a data structure that captures the interplay of the "underlying graph" of these "transition systems" and their "acceptance condition"  (Section~\ref{section:acd}). Using the "ACD", we describe a construction that transforms a "Muller" "automaton" $\A$ into an equivalent "parity" "automaton" $\B$ while preserving the "determinism" of $\A$ (formally, there is a "locally bijective morphism" $\pp\colon \B\to \A$). This transformation comes with a strong optimality guarantee: for any other "parity" "automaton"~$\B'$ admitting a "locally bijective morphism" (or even "HD mapping") $\pp'\colon \B'\to \A$, the automaton~$\B$ is smaller than $\B'$ and it uses less "output colours" (Theorems~\ref{thm-acd:optimality-priorities_ACD-parity_transform} and~\ref{thm-acd:optimality-size_ACD-parity_transform}).
	An interesting corollary of our result is the following: if $\B$ is an "HD" "parity" "automaton" that is strictly smaller than any "deterministic" "parity" "automaton" "recognising" $\Lang{\B}$, then $\B$ cannot be derived from a "deterministic" "Muller" "automaton" (Corollary~\ref{cor-acd:HD-transformations-are-big}). This result sheds light on the difficulty to obtain succinct "HD automata" and their potential applicability.
	
	We also provide a transformation that translates a "Muller" "automaton" $\A$ into a "history-deterministic" "Rabin" "automaton" $\B$ in an optimal way: for any other "Rabin" "automaton" $\B'$ admitting an "HD mapping" $\pp'\colon \B'\to \A$, the "automaton" $\B$ is smaller than $\B'$.
	
	\item[4. Structural results for Muller transition systems.] The "ACD" does not only provide optimal transformations of "games" and "automata", it also features some of their fundamental structural properties. 
	As an application, we give a set of crisp characterisations for relabelling "automata" with different classes of "acceptance conditions" (Section~\ref{subsec-corollaries:typeness}). For instance,  we show that given a "Muller" "automaton" $\A$, we can define a "Rabin" "condition@@accep" over the "underlying graph" of $\A$ obtaining an "equivalent@@aut" "automaton" if and only if the union of "rejecting@@cycle" "cycles" of $\A$ is again a "rejecting@@cycle" "cycle". Our results unify and extend those from~\cite{BSW01Weakautomata,BKS10Paritizing,KPB94DetOmega,Zielonka1998infinite}.
	
	In Section~\ref{subsec-corollaries: normal-form}, we conduct a comprehensive examination of a "normal form" for "parity" "automata". This "normal form" implicitly appears in \cite{CartonMaceiras99RabinIndex}, and has since proven instrumental in proofs about "history-deterministic" "automata"~\cite{AK22MinimizingGFG,EhlersSchewe22NaturalColors,KS15DeterminisationGFG}, positionality of "$\oo$-regular languages"~\cite{BCRV22HalfPosBuchi} and learning of $\oo$-automata~\cite{BohnLoding23DetParityFromExamples}. Similar normalisation procedures are commonly applied to "parity" "games" to speed up algorithms solving them~\cite{FriedmannM09ParityGames}. We use the "ACD" to provide straightforward proofs of the fundamental properties which make automata in "normal form" practical in both theoretical proofs and applications.	
\end{longdescription}

\subparagraph*{Our model: transition systems and acceptance over edges.} We want to point out a few technical details about the model used in this paper. First, we work with general "transition systems" for two reasons: (1) to seamlessly encompass both "automata" and "games" models, and (2) to emphasise that the "ACD" and the transformations we propose do only depend on the "underlying graph" and the "acceptance condition"; we can view the "input letters" of an "automaton", or the partition of the vertices in a "game", as add-ons that do not affect the core of our approach.

Also, we define "acceptance conditions" over the edges of "transitions systems" -- instead of over the vertices. This choice has been shown to yield more canonical results in theory, for instance, in the study of "strategy" complexity for games~\cite{BCRV22HalfPosBuchi,Casares2021Chromatic,CN06,Zielonka1998infinite}, the determinisation of "B\"uchi" "automata"~\cite{Colcombetz2009tight,Varghese14Determinising}, or the minimisation of "history-deterministic automata"~\cite{AK22MinimizingGFG,EhlersSchewe22NaturalColors}. It has also proven to be more applicable in practical scenarios~\cite{Spot2.10CAV22,GiannakopoulouLerda2002Transitions}.  We believe that the present work provides further evidence to this claim, as the minimal "automaton" obtained from the "Zielonka tree", as well as the transformations based on the "ACD", substantially rely on the use of edge-based acceptance. 

\AP Finally, we remark that in this work we are concerned with ""state complexity"", that is, the efficiency of a construction is measured based on the number of states of the resulting "transition system". We do not focus on the representation of the "acceptance conditions"; for instance, we will not differentiate between "Muller" or Emerson-Lei conditions, as they have the same expressive power (see also Remark~\ref{rmk-prelim:representation_Muller_Languages}).

\paragraph*{Follow-up work.}

Despite its recent introduction~\cite{CCF21Optimal}, the "alternating cycle decomposition" has already found applications in both practical and theoretical scenarios.
The "ACD-parity-transform" has been implemented in two open-source tools: Spot 2.10~\cite{Spot2.10CAV22} and Owl 21.0~\cite{KMS18Owl}, and it is used in the LTL-synthesis tools \texttt{ltlsynt}~\cite{Spot2.10CAV22} and Strix~\cite{MS21ModernisingStrix}. These implementations were presented in the conference paper~\cite{CDMRS22Tacas}, where transformations based on the "ACD" are compared to the state-of-the-art existing paritizing methods.

The "typeness" results stemming from the "ACD" (Section~\ref{subsec-corollaries:typeness}) have also been proven instrumental in theoretical applications. They have been used to show a correspondence between "Rabin" "automata" and "memory structures" for "games"~\cite{Casares2021Chromatic}, and to provide lower bounds in the size of "deterministic" "Rabin" "automata"~\cite{CCL22SizeGFG}.

	\section{Preliminaries}\label{section:preliminaries}
	In this section, we introduce definitions that will be used throughout the paper.

\paragraph*{Basic definitions.}
\AP For a set $A$, we let $|A|$ denote its cardinality, $\pow{A}$ its power set and $\intro*\powplus{A} = \pow{A} \setminus \{\emptyset\}$.
For natural numbers $i\leq j$, $[i,j]$ stands for $\{i,i+1, \dots, j-1,j \}$.

\AP For a set $\Sigma$, a ""word"" over $\Sigma$ is a sequence of elements from $\Sigma$. 
An ""$\omega$-word"" (or simply an \emph{infinite word}) is a word of length $\oo$.
\AP The sets of finite and infinite words over $\Sigma$ will be written $\Sigma^*$ and $\Sigma^{\oo}$, respectively, 
\AP and we let $\intro*\SigmaInfty = \Sigma^* \cup \Sigma^\oo$. Subsets of $\SS^*$ and $\SS^\oo$ will be called \emph{languages}.  For a word $w\in \SigmaInfty$ and $i\geq 0$ we write $w_i$ to represent the $i$-th letter of $w$.
We let $\ee$ denote the ""empty word"", and let $\SS^+ = \SS^*\setminus\{\ee\}$. The concatenation of two words $u\in \Sigma^*$ and $v\in \SigmaInfty$ is written $u\cdot v$, or simply $uv$. \AP If $u=v\cdot w$, for $v\in \Sigma^*$ and $u,w\in \SigmaInfty$, we say that $v$ is a  ""prefix"" of $u$, and we write $v \intro*\prefix u$. 
\AP For a word $w\in \Sigma^\oo$, we let $\intro*\minf(w)=\{ a\in \Sigma \mid w_i=a \text{ for infinitely many } i\in \NN\}$.

\AP We say that a language $L\subseteq \SS^\oo$ is ""prefix-independent"" if for all $w\in \SS^\oo$ and $u\in \SS^*$ we have that  $uw\in L$ if and only if $w\in L$.

Given a map $\alpha:A \rightarrow B$, we will extend $\alpha$ to "words" component-wise, i.e., $\aa: A^\infty \rightarrow B^\infty$ will be defined as $\aa(w_0w_1w_2\dots)=\aa(w_0)\aa(w_1)\aa(w_2)\dots$. We will use this convention throughout the paper without explicitly mentioning it. 
If $A'\subseteq A$, we denote $\restr{\aa}{A'}$ the restriction of $\aa$ to $A'$.
\AP We let $\intro*\Id{A}$ be the identity function on $A$.
\AP We write $\alpha:A \intro*\partialF B$ if $\aa$ is a ""partial mapping"" (it is defined only over some subset of $A$).

\AP In this work, we will use the term \emph{graph} to denote what is sometimes called a \emph{directed multigraph}: A ""graph"" is a tuple $G=(V,E,\msource,\mtarget)$, where $V$ is a set of vertices, $E$ a set of edges and $\intro*\msource\colon E\to  V$ and $\intro*\mtarget\colon E\to  V$ are maps indicating the source and target for each edge. 
\AP A ""path"" is a (finite or infinite) sequence $\rr = e_0e_1...\in \kl{E^\infty}$ such that  $\msource(e_{i})=\mtarget(e_{i-1})$, for all $i>0$. For notational convenience, we  write $v_0 \re {e_0} v_1 \cdots  \re{e_{n-1}} v_n$ to denote a finite path from $v_0 = \msource(e_0)$ to $v_n = \mtarget(e_{n-1})$, and we let $\intro*\msourcePath(\rr) = v_0$ and $\intro*\mtargetPath(\rr) = v_n$.
\AP For $A\subseteq V$, we let $\intro*\PathSetFin{G}{A}$ and $\intro*\PathSet{G}{A}$ denote, respectively, the set of finite and infinite "paths"  on $G$ starting from some $v\in A$ (we omit brackets if $A=\{v\}$ is a singleton).  We let $\intro*\PathSetInfty{G}{A} = \PathSetFin{G}{A}\cup \PathSet{G}{A}$.
\AP For a subset of vertices $A\subseteq V$ we write:
\begin{itemize}
	\item $\intro*\mIn(A)=\{ e\in E \mid \mtarget(e)\in A \}$,
	\item $\intro*\mout(A)=\{ e\in E \mid \msource(e)\in A \}$.
\end{itemize}

All "graphs" considered in this paper will be finite.

\AP A graph is ""strongly connected"" if there is a "path" connecting each pair of vertices. \AP A ""subgraph"" of $(V,E,\msource,\mtarget)$ is a graph $(V',E',\msource',\mtarget')$ such that $V'\subseteq V$, $E'\subseteq E$ and $\msource'$ and $\mtarget'$ are the restrictions  of $\msource$ and $\mtarget$ to $E'$, respectively. A ""strongly connected component"" (SCC) is a maximal "strongly connected" "subgraph". We say that a SCC is ""final@@SCC"" if there is no edge leaving it.
\AP We say that a vertex $v$ is ""recurrent"" if it belongs to some "SCC", and that it is ""transient"" on the contrary.

\subsection{Transition systems, automata and games}\label{subsect-prelim:transition-systems}

\subparagraph*{Transition systems.}
\AP A ""pointed graph"" $G = (V,E, \msource, \mtarget, I)$ is a "graph" 
 together with a non-empty subset of ""initial vertices"" $I\subseteq V$.
\AP An ""acceptance condition"" over $G$ is a tuple $\mathrm{Acc} = (\gg, \GG, \WW)$ where $\GG$ is a finite set of ""colours"", $\gg:E\to \GG\cup\{\ee\}$ is an ""edge-colouring"" of $G$ and $\WW\subseteq \GG^\oo$ is a language of infinite words called the ""acceptance set"". We allow ""uncoloured edges"" ($\ee$-edges), but we impose the condition that no infinite path of $G$ is eventually composed exclusively of $\ee$-edges (that is, every "cycle" contains some edge $e$ with $\gg(e)\neq \ee$). 

\AP A ""transition system"" (abbreviated ""TS"") is a tuple $\TS = (\intro*\underlyingGraph{\TS}, \intro*\macc{\TS})$, consisting in a "pointed graph" $\underlyingGraph{\TS} = (V,E, \msource, \mtarget, I)$, called the ""underlying graph"" of $\TS$, and an "acceptance condition" $\macc{\TS} = (\gg, \GG, \WW)$  over $\underlyingGraph{\TS}$. We will also refer to vertices and edges as states and transitions, respectively. We write $v \re c v'$ if there is $e\in E$ such that $\msource(e)=v$, $\mtarget(e)=v'$ and $\gg(e)=c$.
We will assume for technical convenience that "transition systems" contain no sink, that is, every vertex has at least one outgoing edge.
\AP For any non-empty subset of vertices $\tilde{I} \subseteq V$, we let $\intro*\initialTS{\TS}{\tilde{I}}$ be the "transition system" obtained from $\TS$ by setting $\tilde{I}$ to be its set of "initial vertices".
\AP The ""size@@transSys"" of a transition system $\TS$ is the cardinality of its set of vertices, written $\intro*\size{\TS}$.

\AP A ""run@@transSys"" on a "transition system" $\TS$ (or on a "pointed graph") is a (finite or infinite) 
"path" $\rr=e_0e_1\dots \in \kl{E^\infty}$ starting from an "initial vertex", that is, $\msource(e_0)\in I$.
We let $\intro*\RunsFin{\TS}$ and $\intro*\Runs{\TS}$ be the set of finite and infinite "runs" on $\TS$, respectively, and we let $\intro*\RunsInfty{\TS} = \RunsFin{\TS}\cup \Runs{\TS}$. (We note that $\Runs{\TS} = \PathSet{\underlyingGraph{\TS}}{I}$.)

\AP The ""output"" of a "run@@transSys" $\rr \in \RunsInfty{\TS}$ is the sequence of colours in $\GG^\infty$ obtained by removing the occurrences of $\ee$ from $\gamma(\rr)$; which we will also denote $\gamma(\rr)$ by a small abuse of notation. 
\AP A "run@@transSys" $\rr$ is ""accepting@@run"" if $\gamma(\rr)\in \WW$, and ""rejecting@@run"" otherwise (in particular, finite runs will be rejecting).
We write $\rr = v\lrp{w} v'$ to denote a "run" with $\msourcePath(\rr)=v$, $\mtargetPath(\rr)=v'$ and $\gg(\rr)=w$.

\AP We say that a vertex $v\in V$ is ""accessible@@fromVertex"" (or \emph{reachable}) from a vertex $v_0$ if there exists a finite "path" from $v_0$ to $v$. We say that $v$ is ""accessible@@vertex"" if it is accessible from some "initial vertex". A set of states $B\subseteq V$ is ""accessible@@set"" if every state $v\in B$ is "accessible@@vertex". 
\AP The ""accessible part"" of a "transition system" is the set of accessible states. We define analogously the ""accessible part from a vertex"" $v_0$.

\AP A ""labelled graph"" $\left(G, (l_V, L_V), (l_E, L_E)\right)$ is a "graph"  together with labelling functions $l_V: V \rightarrow L_V$, $l_E: E \rightarrow L_E$, where $L_V$ and $L_E$ are sets of labels for vertices and edges, respectively. If only the first (resp. the second) of these labelling functions appears, we will use the terms ""vertex-labelled"" (resp. ""edge-labelled"") "graphs". 
A ""labelled transition system"" is a "transition system" with "labelled@@graph" "underlying graph".

\begin{remark}\label{rmk:colours=edges}
	We remark that, whenever necessary, we can assume without loss of generality that in the "acceptance condition" $\mathrm{Acc} = (\gg, \GG, \WW)$ of a "transition system",  $\GG=E$ is the whole set of edges and $\gamma$ is the identity function.
	Indeed, an "equivalent acceptance condition" can always be defined by using the "acceptance set" $\WW' = \{w\in E^\oo \mid \gg(w)\in \WW\} \subseteq E^\oo$.
\end{remark}

\subparagraph*{Automata.}
A ""(non-deterministic) automaton"" over $\SS$ is an "edge-labelled" "transition system" $\A = (\underlyingGraph{\A}, \macc{\A}, (l_\SS, \SS))$, where $\SS$ is a finite set of ""input letters"".
Let $\A$ be an "automaton" with $\underlyingGraph{\A} = (Q,\DD,\msource, \mtarget, I)$ as "underlying graph" and $\macc{\A} = (\gg, \GG, \WW)$ as "acceptance condition".
We write $e = q\re{a:c}q'$ to denote that $e\in \DD$ satisfies $l_\SS(e) = a$ and $\gg(e) = c$.
We can assume that $\DD\subseteq Q\times \SS\times \GG \times Q$.
\AP We define:
 \[\intro*\transAut{}(q, a) = \{ (q', c)\in Q\times \GG \mid \text{ there is } e = q\re{a:c} q' \in \DD\}.\]

We note that we can now recover the classical representation of an automaton as a tuple $\A = (Q,\Sigma, I, \Gamma, \transAut{}, \WW)$, (or $(Q,\Sigma, I, \Gamma, \DD, \WW)$) which we might use when working exclusively with "automata".

\AP We say that an "automaton" $\A$ is ""deterministic"" if $I$ is a singleton and for every $q\in Q$ and $a\in \Sigma$, $|\transAut{}(q,a)| \leq 1$. 
\AP We say that $\A$ is  ""complete""
if for every $q\in Q$ and $a\in \Sigma$, $|\transAut{}(q,a)| \geq 1$. 
We remark that we can assume that automata are "complete" without loss of generality by adding a sink state.

\AP Given an "automaton" $\A$ and a word $w\in \SigmaInfty$, a  ""run over $w$"" in $\A$ is a "run@@transSys" $\rr = e_0e_1\dots  \in \RunsInfty{\A}$ such that $l_\SS(e_i) = w_i$ for all $i \geq 0$.
\AP A word $w\in \SS^\oo$ is  ""accepted@@word"" by $\A$ if there is a "run over $w$" that is "accepting" (that is, a "run@@automaton" $\rr$ such that $\gg(\rr)\in \WW$).	
\AP The ""language accepted"" (or recognised) by an automaton $\A$ is the set 
\[ \intro*\Lang{\A}:= \{ w\in \Sigma^\oo \mid w \text{ is "accepted@@word" by } \A \}.\]
\AP Two "automata" "recognising" the same language are said to be ""equivalent@@aut"".

We remark that if $\A$ is "deterministic" (resp. "complete"), there is at most one (resp. at least one) "run over" $w$ for each $w\in \SigmaInfty$. 

\AP Given a "subgraph" $G'$ of the "underlying graph" of an "automaton" $\A$ and a subset of states $I'$ in $G'$, the ""subautomaton induced by"" $G'$ with "initial states" $I'$ is the "automaton" having $G'$ as "underlying graph", $I'$ as set of initial states, and whose "acceptance condition" and "labelling with input letters" are the restrictions of those of $\A$.

\subparagraph*{History-deterministic automata.} 
Let $\A$ be a (non-deterministic) "automaton" over $\SS$ with $\DD$ as set of transitions and $I$ as set of "initial states".
\AP A ""resolver@@aut"" for $\A$ is a pair $(r_0,r)$, consisting of a choice of an "initial state",\footnotemark{} $r_0\in I$, and a function $r\colon \DD^* \times \SS \to \DD$ such that for all words $w = w_0w_1\dots \in \SS^\oo$, the sequence $e_0e_1\dots \in \DD^\oo$, called the ""run induced by@@aut"" $r$ over $w$ and defined by $e_i = r(e_0\dots e_{i-1}, w_i)$ is actually a "run over" $w$ in $\A$ starting from $r_0$.
We say that the resolver is ""sound@@aut"" if it satisfies that for every $w\in \Lang{\A}$, the "run induced by@@aut" $r$ over $w$ is an "accepting run". 
In other words, $r$ should be able to construct an "accepting run" in $\A$ letter-by-letter with only the knowledge of the word so far, for all words in $\Lang \A$.
\AP An "automaton" $\A$ is called ""history-deterministic"" (shortened HD, also called \emph{good-for-games} in the literature) if there is a "sound resolver@@aut" for it.

\footnotetext{Sometimes in the literature~\cite{Boker2013NondetUnknownFuture,BL23SurveyHD,HP06} the initial state $r_0$ is not required to be specified. This would permit to choose it after the first letter $w_0$ is given. We consider that a resolver constructing a run without guessing the future should pick the initial state before the first letter is revealed, hence the introduction of $r_0$ in the definition of a resolver. The suitability of this choice will be further supported by the generalisation of "HD automata" to "HD mappings" (Section~\ref{subsect-morp: hist-det-morphisms}).}

\begin{remark}\label{rmk-prel:deterministic-implies-HD}
	"Deterministic" "automata" are "history-deterministic", and they admit a unique "resolver".
\end{remark}

\begin{example}\label{ex-prelim:HD-aut}
	In Figure~\ref{fig-prelim:HD-aut-example}, we show an "automaton" $\A$ over $\SS = \{a,b,c\}$ that is not "deterministic" (as it has two $b$-transitions from $q_1$) but is "history-deterministic". Its set of "output colours" is $\GG=\{1,2\}$ and its "acceptance set" is $\WW = \{u\in \{1,2\}^\oo \mid u \text{ contains finitely many } 1s\}$ (this is a "coB\"uchi" condition, as introduced in the next section).
	It is easy to check that $\A$ "recognises" the language:
	\[ \Lang{\A} = \{w\in \SS^\oo \mid \minf(w)\subseteq \{a,b\} \tor \minf(w)\subseteq \{b,c\} \}. \]
	
	A "resolver@@aut" for $\A$ only has to take a decision when the automaton is in the state $q_1$ and letter $b$ is provided. In this case, a "sound@@aut" "resolver@@aut" is obtained by using the following strategy: if the last letter seen was $a$, we take the transition leading to state $q_0$; if it was $c$, we take the transition leading to $q_2$. This strategy ensures that, if eventually only letters in $\{a,b\}$ (resp. $\{b,c\}$) are seen, the "run" will end up in state $q_0$ (resp. $q_2$) and remain there indefinitely, without producing any colour $1$.
	
	\begin{figure}[ht]
		\centering
		\includegraphics[width=0.5\textwidth]{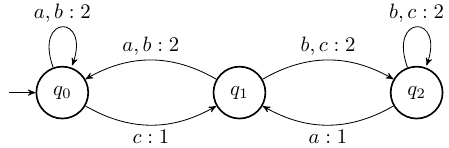}
		\caption{An example of a "history-deterministic" automaton that is not "deterministic". 		
		The "acceptance set" is $\WW = \{u\in \{1,2\}^\oo \mid u \text{ contains finitely many } 1s\}$.		
		An arrow of the form $q\re{a,b:2} q'$ represents two different transitions with input letters $a$ and $b$, respectively. The "initial state" $q_0$ is marked with one incoming arrow. }
		\label{fig-prelim:HD-aut-example}
	\end{figure}
	
\end{example}

\AP We say that a state $q$ is ""reachable using the resolver $(r_0, r)$"" if there is a finite run $\rr = r_0 \lrpE q$ such that $\rr$ is the "run induced by@@aut" $r$ over some word $w\in\SS^*$.

The next remark indicates that we can assume without loss of generality that all states in an "HD automaton" are "reachable using@@resolver" some "sound@@aut" "resolver@@aut".
\begin{remark}\label{rmk-prelim:restriction-to-accessible}
	Let $\A$ be an "HD automaton", let $(r_0,r)$ be a "sound@@aut" "resolver@@aut" for it and let $\tilde{\A}$ be the "subautomaton" induced by the set of states "reachable using $(r_0, r)$", with initial state $r_0$. Then, $\Lang{\A} = \Lang{\tilde{\A}}$.
\end{remark}

The next lemma provides a simplification for "automata" "recognising" "prefix-independent" languages. Its proof can be found in Appendix~\ref{sec:appendix-prefix-independent}. Together with Remark~\ref{rmk-prelim:restriction-to-accessible}, it indicates that when dealing with "HD automata" for this kind of languages, we can assume that any state of the automaton is the "initial" one.
In particular there will be no need to specify the "initial states" of "subautomata induced by subgraphs" of "HD automata" "recognising" "prefix-independent" languages.

\begin{restatable}{lemma}{prefixIndepAutomaton}\label{lemma-prelims:prefix-indep-automaton}
	Let $\A$ be a "history-deterministic" "automaton" "recognising" a "prefix-independent" language and using as "acceptance set" a "prefix-independent" language. For any state $q$ of $\A$ that is "reachable using some sound resolver", it holds that $\A$ "recognises" the same language if we fix $q$ as "initial state", that is, $\Lang{\A} = \Lang{\initialTS{\A}{q}}$. Moreover, $\initialTS{\A}{q}$ is also "history-deterministic".  In particular, if~$\A$ is "deterministic", this is the case for any "reachable" state $q$.
\end{restatable}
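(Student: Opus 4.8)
The plan is to prove both assertions --- the language equality and the history-determinism of $\initialTS{\A}{q}$ --- by a single bookkeeping argument that uses both prefix-independence hypotheses: that of $L=\Lang{\A}$ and that of the acceptance set $\WW$. Fix a sound resolver $(r_0,r)$ witnessing that $q$ is reachable using it, together with a finite word $u\in\SS^*$ such that the run $\rr_0$ induced by $r$ over $u$ is a finite run from $r_0$ to $q$ (so $\rr_0$ is empty when $q=r_0$). The first thing I would record is a \emph{decomposition property}: for every $w\in\SS^\oo$, the run induced by $r$ over the concatenation $uw$ is exactly $\rr_0$ followed by a run $\rr_w$ that starts at $q$ and reads $w$. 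This is immediate from the recursive definition $e_i=r(e_0\cdots e_{i-1},(uw)_i)$ of an induced run, since the first $\size{u}$ transitions it produces depend only on $u$ and therefore coincide with $\rr_0$.

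Granting this, the inclusion $\Lang{\initialTS{\A}{q}}\subseteq\Lang{\A}$ is easy: if $\rr$ is an accepting run from $q$ reading $w$, then $\rr_0\rr$ is a run from $r_0$ reading $uw$ with output $\gg(\rr_0)\gg(\rr)$; by prefix-independence of $\WW$ this output lies in $\WW$, so $uw\in\Lang{\A}$, and then $w\in\Lang{\A}$ by prefix-independence of $L$. For the converse, take $w\in\Lang{\A}$; prefix-independence of $L$ gives $uw\in\Lang{\A}$, soundness of $(r_0,r)$ then says that the run $\rr_0\rr_w$ it induces over $uw$ is accepting, i.e.\ $\gg(\rr_0)\gg(\rr_w)\in\WW$, and prefix-independence of $\WW$ yields $\gg(\rr_w)\in\WW$; hence $\rr_w$ witnesses $w\in\Lang{\initialTS{\A}{q}}$.

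For history-determinism I would define a resolver $(q,r_q)$ for $\initialTS{\A}{q}$ by $r_q(e_1\cdots e_k,a):=r(\rr_0\, e_1\cdots e_k,a)$, viewing $\rr_0$ as the transition word reaching $q$. The decomposition property shows that the run induced by $r_q$ over any $w$ is precisely $\rr_w$, so $(q,r_q)$ is a genuine resolver for $\initialTS{\A}{q}$; and for $w\in\Lang{\initialTS{\A}{q}}=\Lang{\A}$ the computation of the previous paragraph (soundness of $r$ on $uw$, then deleting the prefix $\gg(\rr_0)$ via prefix-independence of $\WW$) shows that $\rr_w$ is accepting, so $(q,r_q)$ is sound. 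The ``in particular'' clause then follows: a deterministic automaton has a unique resolver (Remark~\ref{rmk-prel:deterministic-implies-HD}), every reachable state is reached along the run this resolver induces over some finite word and so is reachable using it, and re-designating the initial state evidently preserves determinism.

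I do not expect a genuine obstacle here; the argument is pure bookkeeping. The only point requiring care is the decomposition property and, relatedly, the coherence between the run $r$ induces over a finite prefix and the one it induces over an infinite extension. Once that is pinned down, everything reduces to two moves: using prefix-independence of $\WW$ to delete the prefix $\gg(\rr_0)$ from an output, and using prefix-independence of $L$ to pass between $w$ and $uw$.
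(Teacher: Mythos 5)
Your proof is correct and follows essentially the same route as the paper's: both shift the sound resolver by the finite run $\rr_0$ reaching $q$ (defining $r'(\rr,a)=r(\rr_0\rr,a)$), and both alternate prefix-independence of $\Lang{\A}$ (to pass between $w$ and $uw$) with prefix-independence of $\WW$ (to strip the output of $\rr_0$). Your explicit "decomposition property" is just a spelled-out version of the paper's observation that the induced runs share a common suffix, so there is no substantive difference.
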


\subparagraph*{Games.} \AP A ""game"" is a "vertex-labelled" "transition system" $\G = \left(\underlyingGraph{\G}, \macc{\G}, (\intro*\lPlayers, \{\Eve, \Adam\})\right)$, with $\underlyingGraph{\G} = (V,E,\msource, \mtarget, I)$ a "pointed graph", 
and $l_{\mathrm{Players}}\colon V \to \{\Eve, \Adam\}$ a "vertex-labelling" function inducing a partition of $V$ into vertices controlled by two ""players"" that we refer to as ""Eve"" and ""Adam"". We let $\intro*\VEve = \inv{\lPlayers}(\Eve)$ and $\intro*\VAdam = \inv{\lPlayers}(\Adam)$.

During a play, players move a token from one vertex to another for an infinite amount of time. 
\AP The player who owns the vertex $v$ where the token is placed chooses an edge in $\mout(v)$ and the token travels through this edge to its "target". In this way, they produce an infinite "run" $\rr$ on $\G$ (that we also call a ""play""). 
The objective of Eve is to produce an "accepting run" (a sequence of "colours" in $\WW$), and Adam tries to prevent it.

\AP A ""strategy"" from $v\in V$ for "Eve" is a (partial) function $\strat_{v}: \PathSetFin{\G}{v}  \partialF E$, defined for finite paths from $v$ ending in a vertex in $\VEve$, that tells "Eve" which move to choose after any possible "finite play".
\AP We say that a "play" $\rr\in \PathSetInfty{\G}{v}$ is ""consistent with the strategy"" $\strat_v$ if after each finite prefix $\rr' \prefix \rr$ ending in a vertex controlled by "Eve", the next edge in $\rr$ is $\strat_v(\rr')$.
\AP  We say that $\strat_v$ is a ""winning strategy for Eve"" if all infinite "plays" from $v$ "consistent with" $\strat_v$ are "accepting@@run". We say that Eve ""wins"" the game $\G$ from $v$ if there is a "winning strategy" from $v$ for her. Strategies for Adam are defined dually.

\AP Given a game $\G$, the ""winning region"" of $\G$ for "Eve", written $\intro*\winRegion{\Eve}{\G}$, is the set of "initial vertices" $v\in I$ such that she "wins" the game $\G$ from $v$. 
\AP The ""full winning region"" of $\G$ for "Eve" is her "winning region" in the "game" $\initialTS{\G}{V}$ where all vertices are initial, that is, the set of vertices $v\in V$ such that "Eve" "wins" the game $\G$ from $v$. 

In some proofs, we will need to take a close look into the "strategies" used in games, for which we need to introduce \emph{finite memory strategies}.
\AP For a set $X$ (usually the set of edges of a "game"), we define a ""memory skeleton"" over $X$ as an "edge-labelled@@graph" "pointed graph" $\M = (M, E_M, \msource, \mtarget, m_0)$ with a single "initial state" $m_0$ and labels $l_M\colon E_M\to X$ inducing a deterministic structure, that is, satisfying that for each $m\in M$ and $x\in X$ there is at most one transition $e\in \mout(m)$ labelled~$x$.
\AP We denote $\intro*\transMem\colon M \times X \partialF M$ the ""update function@@mem"" given by $\transMem(m, x) = m'$ if $m\re{x}m'$ is the (only) transition from $m$ labelled~$x$. We extend $\transMem$ to $\transMem\colon M\times X^* \partialF M$ by induction ($\transMem(m,\ee) = m$ and $\transMem(m,x_1\dots x_n) = \transMem(\transMem(m,x_1\dots x_{n-1}),x_n)$).
\AP A ""memory structure"" (for "Eve") for  a "game" $\G$ is a "memory skeleton" over the set $E$ of edges of $\G$ together with a ""next-move function"" $\intro*\nextmove\colon \VEve\times M \to E$.
\AP We say that $(\M, \nextmove)$ ""implements@@memStrat"" a "strategy" $\strat_{v}: \PathSetFin{\G}{v}  \partialF E$  if for any "finite play" $\rr\in \PathSetFin{\G}{v}$ ending in $\VEve$, $\strat_{v}(\rr) = \nextmove(\mtargetPath(\rr), \transMem(m_0, \rr))$. We remark that a  "memory structure" for $\G$ implements at most one "strategy" from a given vertex.
We say that $\strat_{v}$ is a ""finite memory strategy"" if it can be "implemented by@@mem" a finite "memory structure".

\subparagraph*{Composition of a transition system and an automaton.}
We now present the construction of the composition (or product) of a "transition system" with an "automaton", which constitutes the standard method for transforming a "transition system" that uses an "acceptance set" $\WW_1$ to another one using a different "acceptance set" $\WW_2$. To guarantee the correctness of the resulting "transition system" (that is, that it has the same semantic properties as the original one), the automaton must be "deterministic" or "history-deterministic" (see Propositions~\ref{prop-prelim:composition_automata},~\ref{prop-prelim:composition_games_HD}, and~\ref{prop:HD-iff-GFG}). 

Let $\TS =(\underlyingGraph{\TS}, \macc{\TS})$ be a "transition system", with $\underlyingGraph{\TS} = (V, E, \msource_\TS, \mtarget_\TS, I_{\TS})$ and $\macc{\TS} = (\gg_{\TS}, \SS, \WW_{\TS})$, 
and let $\A=\left(\underlyingGraph{\A}, \macc{\A}, (l_\SS, \SS)\right)$ be a "complete" "automaton" over the alphabet $\Sigma$, where $\underlyingGraph{\A} = (Q,  \DD, \msource_\A, \mtarget_\A, I_{\A})$ and $\macc{\A} = (\gg_{\A}, \GG, \WW_{\A})$. 
\AP The ""composition"" of $\TS$ and $\A$ (also called their \emph{product}) is the "transition system" $\TS \intro*\compositionAut \A$ defined as follows:
\begin{itemize}
	\item The set of vertices is the cartesian product $V\times Q$.
	\item The set of "initial vertices" is $I_{\TS} \times I_{\A}$.
	\item The set of edges $\intro*\edgesProduct$ contains a transition $(v,q)\re{c} (v', q')$ if there is $a\in \SS$ and transitions $e_1 = v\re{a} v'\in E$  and $e_2 = q\re{a:c} q'\in \DD$. It also contains "$\ee$-edges" $(v,q)\re{\ee} (v', q)$ if $v\re{\ee}v'\in E$. Formally, 
	\[ \edgesProduct = \{(e_1,e_2) \in E\times \DD \mid \gg_{\TS}(e_1) = l_\SS(e_2)\} \; \cup \; \{e_1 \in E \mid \gg_{\TS}(e_1) = \ee \} \; \subseteq \; (E \times \DD) \cup E. \]
	\item The "acceptance condition" is inherited from that of $\A$: the "colouring function" $\gamma'\colon \edgesProduct \to \Gamma$ is defined as $\gamma'(e_1,e_2)= \gamma_{\A}(e_2)$, and the "acceptance set" is $\WW_\A \subseteq \Gamma^\oo$.
\end{itemize}

We remark that if $\TS$ does not contain an "uncoloured" "cycle", neither does $\TS \compositionAut \A$. Also, $\TS \compositionAut \A$ does not contain sinks by "completeness" of $\A$.

If $\TS$ is a "labelled transition system", labelled by the functions $l_V$ and $l_E$, we consider $\TS \compositionAut \A$ as a "labelled transition system" with the functions $l_V^\ltimes(v,q) = l_V(v)$ and $l^\ltimes_E(e_1, e_2) = l_E(e_1)$ (resp. $l^\ltimes_E(e_1) = l_E(e_1)$ if $e_1$ is an "uncoloured" edge).

Intuitively, a computation in $\TS \compositionAut \A$ happens as follows: we start from a vertex $v_0\in I_{\TS}$ in $\TS$ and from $q_0 \in I_{\A}$. When we are in a position $(v, q)\in V \times Q$, a transition $e$ between $v$ and $v'$ takes place in $\TS$, producing a letter $a\in \Sigma$ as "output". Then, the automaton $\A$ proceeds using a transition corresponding to $a$, producing an output in $\Gamma$. In this way, a word in $\Gamma^\oo$ is generated, and we can use the "acceptance set" $\WW_\A \subseteq \Gamma^\oo$ of the "automaton" as the "acceptance set" for $\TS \compositionAut \A$.

In particular, we can perform this operation if $\TS$ is an "automaton". We obtain in this way a new "automaton" that uses the "acceptance condition" of $\A$. 

We could, of course, apply this construction to a "game" $\G$, obtaining a new "game" $\G\compositionAut \A$ in which the "player" who makes a move in $\G$ also chooses a transition in $\A$ corresponding to the letter produced by the selected move. However, in most applications, we intend to obtain an asymmetric form of product game in which one player has full control of the transitions of the automaton (we take the point of view of Eve and want her to choose these transitions). For this reason, we restrain the class of games to which we can apply the "product construction" by a non-deterministic automaton. 

\AP We say that a "game" is ""suitable for transformations"" if it satisfies that for every edge $e = v\re{}v'$ such that $v\in \VAdam$, the edge $e$ is "uncoloured" ($\gg(e)=\ee$), $v'\in \VEve$, and $e$ is the only incoming edge to $v'$ ($\mIn(v') = \{e\}$). 
We remark that any game $\G$ can be made "suitable for transformations" with at most a linear blow up on the size by inserting an intermediate "Eve-vertex" in each edge outgoing from an "Adam-vertex". A formal construction, as well as further motivation for this definition, can be found in Appendix~\ref{sec:appendix-games-transformations}.

The following results are well known and constitute the main application of automata composition. They can be seen as corollaries of our results from Section~\ref{subsec-morph:semantic-properties}, which generalise them.

\begin{proposition}[Folklore]\label{prop-prelim:composition_automata}
	Let $\B$ be an automaton with "acceptance set" $\WW_{\B}$  and let $\A$ be an "automaton" "recognising" $\Lang \A=\WW_{\B}$. Then, $\Lang{ \B \compositionAut \A}=\Lang \B$.
	Moreover, if $\A$ and $\B$ are "deterministic" (resp. "history-deterministic"), so is $\B \compositionAut \A$.
\end{proposition}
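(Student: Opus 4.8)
The plan is to establish $\Lang{\B\compositionAut\A}=\Lang\B$ by a direct back-and-forth translation of runs, and then to obtain preservation of (history-)determinism by assembling the relevant data of $\A$ and $\B$. Throughout I would assume, as the composition construction itself requires, that $\A$ is complete; this changes neither $\Lang\A$ nor the hypotheses and merely ensures $\B\compositionAut\A$ has no sinks. The starting observation is structural: an edge of $\B\compositionAut\A$ is either a pair $(e_1,e_2)$ with $e_1\in E_\B$ a coloured edge and $e_2$ a transition of $\A$ reading the colour $\gamma_\B(e_1)$, or a single uncoloured edge $e_1\in E_\B$. Consequently, for an infinite run $\rho$ of $\B\compositionAut\A$ over an input word $w$, the first-component projection $\pi_\B(\rho)$ is a run over $w$ in $\B$; and since $\B$ (hence $\B\compositionAut\A$) has no uncoloured cycle, $\rho$ uses infinitely many coloured edges, so the second components of those edges form an infinite run $\pi_\A(\rho)$ in $\A$ over the word $u:=\gamma_\B(\pi_\B(\rho))$, i.e. over the output word of the $\B$-run. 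By the definition of the acceptance condition of $\B\compositionAut\A$ (inherited from $\A$), $\rho$ is accepting if and only if $\pi_\A(\rho)$ is accepting in $\A$.

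With this in hand the language equality is short. For $\Lang{\B\compositionAut\A}\subseteq\Lang\B$: from an accepting run $\rho$ over $w$, the word $u$ is accepted by $\A$ (witnessed by $\pi_\A(\rho)$), so $u\in\Lang\A=\WW_\B$; hence $\pi_\B(\rho)$ is an accepting run over $w$ in $\B$, giving $w\in\Lang\B$. For the converse, from $w\in\Lang\B$ with accepting run $\rho_\B$, the output $u=\gamma_\B(\rho_\B)$ lies in $\WW_\B=\Lang\A$, so there is an accepting run $\rho_\A$ over $u$ in $\A$; interleaving the two — pairing the $i$-th coloured edge of $\rho_\B$ with the $i$-th edge of $\rho_\A$ and keeping the uncoloured edges of $\rho_\B$ as the corresponding $\ee$-edges of the product — produces a run $\rho$ over $w$ in $\B\compositionAut\A$ starting from $I_\B\times I_\A$ whose output equals the output of $\rho_\A$, hence is in $\WW_\A$; so $\rho$ is accepting and $w\in\Lang{\B\compositionAut\A}$. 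The one routine check is that the interleaving is a well-formed run, which holds because product targets project onto the corresponding $\B$- and $\A$-targets.

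For determinism: if $\A$ and $\B$ are deterministic then $I_\B\times I_\A$ is a singleton, and from a state $(v,q)$ on an input letter $b$ of $\B$ there is at most one $b$-edge $e_1$ out of $v$, after which there is at most one transition of $\A$ out of $q$ reading $\gamma_\B(e_1)$ (if $e_1$ is coloured) or the product edge is determined by $e_1$ alone (if $e_1$ is uncoloured); so $\B\compositionAut\A$ is deterministic. For history-determinism I would fix sound resolvers $(r_0^\B,r^\B)$ and $(r_0^\A,r^\A)$ and define a resolver for $\B\compositionAut\A$ with initial state $(r_0^\B,r_0^\A)$: on a history of product edges together with a next input letter $b$, extract the induced history of $\B$-edges (first components), apply $r^\B$ to obtain a $\B$-edge $e_1$; if $e_1$ is uncoloured output the $\ee$-product-edge $e_1$, and if $e_1$ has colour $a=\gamma_\B(e_1)$ extract the induced history of $\A$-edges (second components) and apply $r^\A$ to it together with $a$ to obtain $e_2$, then output $(e_1,e_2)$. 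Soundness follows by the same chain as the language inclusion: for $w\in\Lang{\B\compositionAut\A}=\Lang\B$, the induced run $\rho$ has $\pi_\B(\rho)$ equal to the $r^\B$-induced run over $w$, which is accepting, so its output $u$ lies in $\WW_\B=\Lang\A$; then $\pi_\A(\rho)$ is the $r^\A$-induced run over $u$, which is accepting, whence $\rho$ is accepting.

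I expect the only delicate point to be the bookkeeping in the history-deterministic case: the product advances one input letter and one $\B$-edge per step but extends the $\A$-run only on coloured steps, so one must check that the history handed to $r^\A$ always has the expected length (one $\A$-edge per coloured $\B$-edge) and that every ``induced'' object really is a run — both immediate from the description of the product edges given above. None of this is deep; the result is folklore, and it will moreover fall out as a special case of the general results on history-deterministic mappings developed in Section~\ref{subsec-morph:semantic-properties}.
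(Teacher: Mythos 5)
Your proof is correct. Note that the paper itself contains no self-contained argument for this statement: it is labelled folklore, and the deterministic and history-deterministic cases are meant to be read off from the general results of Section~\ref{subsec-morph:semantic-properties} (the product $\B \compositionAut \A$ admits a locally bijective morphism, resp.\ an HD mapping, onto $\B$ by Proposition~\ref{prop-morph:product-by-aut-induces-mapping}, and such mappings preserve the language and (history-)determinism by Proposition~\ref{prop-morph:HD mappings-preserve-languages}), while a footnote there explicitly states that the equality $\Lang{\B \compositionAut \A}=\Lang{\B}$ for a non-history-deterministic $\A$ does not follow from those results and is left unformalised. Your direct run-translation argument --- projecting a product run onto a $\B$-run and an $\A$-run over its output word, and conversely interleaving an accepting $\B$-run with an accepting $\A$-run chosen using the nondeterminism of $\A$ --- is precisely the simulation idea that footnote alludes to, so on this point your write-up is more complete than what the paper formalises; what the paper's route buys instead is generality, since composition becomes a special case of the morphism/HD-mapping framework developed there. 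Your determinism check and your composed resolver $(r_0^{\B},r_0^{\A})$ with $r^{\A}$ invoked only on coloured steps are essentially the same constructions that appear, in greater generality, in the proofs of those two propositions, and the bookkeeping points you flag (the $\A$-history having one edge per coloured $\B$-edge, well-formedness of the interleaved run) are indeed the only checks needed and go through as you describe.
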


\begin{proposition}[\cite{HP06}]\label{prop-prelim:composition_games_HD}
	Let $\G$ be a "game" that is "suitable for transformations" with "acceptance set" $\WW_\G$, and let $\A$ be a "history-deterministic" "automaton" "recognising" $\Lang \A=\WW_{\G}$. Then, the "winning region" of Eve in $\G$ is the projection of her "winning region" in $\G\compositionAut \A$, that is, Eve "wins $\G$ from" an "initial vertex" $v$ if and only if she "wins $\G\compositionAut \A$ from $(v, q_0)$", for $q_0$  some initial vertex of $\A$. 
\end{proposition}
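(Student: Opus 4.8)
The plan is to prove the two inclusions of Eve's winning region separately. Fix a sound resolver $(r_0,r)$ of $\A$. Throughout, the hypothesis that $\G$ is suitable for transformations is used only to guarantee that Adam never triggers a transition of $\A$: every edge $e = v\re{} v'$ with $v\in\VAdam$ is uncoloured and $v'\in\VEve$, so $e$ appears in $\G\compositionAut\A$ only as an $\ee$-edge $(v,q)\re{\ee}(v',q)$ which leaves the $\A$-component unchanged. The easy direction needs only $\Lang\A\subseteq\WW_\G$, and the converse needs only $\WW_\G\subseteq\Lang\A$ together with the soundness of the resolver.

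\textbf{From $\G\compositionAut\A$ to $\G$.} Suppose Eve wins $\G\compositionAut\A$ from $(v,q_0)$ with a strategy $\tau$. I would have Eve play in $\G$ while maintaining a shadow play in $\G\compositionAut\A$ consistent with $\tau$: when Adam moves along $e_1 = v'\re{} v''$ in $\G$, the shadow (currently at some $(v',q)$) is extended by the $\ee$-edge to $(v'',q)$; when the shadow is at an Eve-vertex $(v',q)$, the move prescribed by $\tau$ is a product edge $(e_1,e_2)$ with $e_1 = v'\re{a} v''$, and Eve plays $e_1$ in $\G$ and extends the shadow by $(e_1,e_2)$. The $\G$-play obtained is precisely the projection of the shadow play, which is consistent with $\tau$ and hence accepting; since the output of $\G\compositionAut\A$ is inherited from $\A$ (via $\gamma'(e_1,e_2)=\gamma(e_2)$), the $\A$-component of the shadow is an accepting run of $\A$ over the word $w\in\SS^\oo$ produced in $\G$. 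Therefore $w\in\Lang\A = \WW_\G$, so the $\G$-play is accepting; this strategy is winning for Eve from $v$ in $\G$. This direction uses neither history-determinism nor the requirement that $e$ be the unique incoming edge of $v'$; only the uncolouredness of Adam's edges matters here.

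\textbf{From $\G$ to $\G\compositionAut\A$.} Set $q_0 = r_0\in I_\A$ and suppose Eve wins $\G$ from $v$ with a strategy $\sigma$. In $\G\compositionAut\A$, Eve plays while keeping as memory the sequence $h\in\DD^*$ of $\A$-transitions taken so far: at a vertex $(v',q)$ with $v'\in\VEve$, she lets $e_1 = v'\re{a} v''$ be $\sigma$'s move and plays $(e_1,e_2)$ with $e_2 = r(h,a)$; at a vertex $(v',q)$ with $v'\in\VAdam$, Adam can only take an $\ee$-edge, which changes neither $q$ nor $h$. By induction on the length of the play one checks the invariant that $q$ is always the target of the last transition of the run of $\A$ induced by $r$ over the letters produced so far (or $r_0$ at the start); this invariant makes $e_2 = r(h,a)$ a legitimate outgoing $\A$-transition from $q$, so Eve's move is well defined, and it is preserved precisely because Adam's $\ee$-moves do not touch the $\A$-component. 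Projecting any resulting play to $\G$ gives a play consistent with $\sigma$, hence accepting, so its output is a word $w\in\WW_\G=\Lang\A$; moreover the $\A$-component of the play is exactly the run of $\A$ induced by $r$ over $w$, which is accepting by soundness of $(r_0,r)$. Since the output of the play equals the output of its $\A$-component, the play is accepting, so this is a winning strategy for Eve in $\G\compositionAut\A$ from $(v,q_0)$.

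\textbf{Main obstacle.} The delicate part is the second direction: one must run Eve's $\G$-strategy and the resolver of $\A$ in parallel inside the product and verify they stay synchronised, the crux being exactly the invariant above, which relies on suitability for transformations so that Adam cannot force automaton transitions unseen by the resolver — without this the resolver could desynchronise from the $\A$-component and the argument would collapse. Alternatively, the whole proposition follows from the semantic correctness of history-deterministic mappings established in Section~\ref{subsec-morph:semantic-properties}, applied to the natural morphism $\G\compositionAut\A\to\G$; but the direct argument above is the self-contained route.
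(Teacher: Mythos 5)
Your proof is correct, and it is essentially the classical direct argument, whereas the paper does not prove this proposition from scratch: it cites \cite{HP06} and observes that it follows from the general morphism machinery of Section~3 — namely Proposition~\ref{prop-morph:product-by-aut-induces-mapping}(3), which exhibits an "HD-for-games mapping" $\pp\colon \G\compositionAut\A\to\G$ (using suitability for transformations exactly where you use it, to ensure Adam's product moves are $\ee$-edges that leave the $\A$-component untouched), combined with Proposition~\ref{prop-morph:HD mappings-preserve-games}, which shows such mappings preserve winning regions. Your two directions inline precisely the two halves of that machinery: your shadow-play argument is the content of Lemma~\ref{lemma-morph:surj-mappings-preserve-games} (projection of a winning product strategy, needing only that $\pp$ preserves accepting runs), and your resolver-synchronisation argument, with the invariant that the $\A$-component equals the run induced by $(r_0,r)$ on the word produced so far, is the content of the soundness-for-games part of Propositions~\ref{prop-morph:product-by-aut-induces-mapping} and~\ref{prop-morph:HD mappings-preserve-games}. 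What the direct route buys is self-containedness; what the paper's route buys is reusability — the same two propositions also yield Propositions~\ref{prop-prelim:composition_automata} and the later optimality statements, which is why the paper factors the argument that way. You correctly identify that only the uncolouredness of Adam's edges (not the unique-incoming-edge clause) is what your argument needs, which matches how suitability is used in the paper's proof.

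Two small points to tidy, neither a genuine gap: your case analysis at Eve's vertices silently assumes her outgoing edges are coloured; if $\sigma$ (resp.\ $\tau$) prescribes an uncoloured edge, the corresponding product move is the $\ee$-edge $(v',q)\re{\ee}(v'',q)$ and the history $h$ and state $q$ are left unchanged, exactly as for Adam's moves — your invariant goes through verbatim. Second, you implicitly use that every infinite play produces an infinite word in $\SS^\oo$; this is guaranteed by the standing assumption that no cycle consists only of $\ee$-edges (and that $\G\compositionAut\A$ inherits this property), and is worth a sentence.
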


Proposition~\ref{prop-prelim:composition_games_HD} fails if the "automaton" is not "HD". In fact, this property characterises "history-determinism", which is the reason why "HD" automata are also called \emph{good-for-games} in the~literature. However, it should be noted that "history-determinism" and good-for-gameness have been generalised to other contexts in which they do not necessarily yield equivalent notions~\cite{BL21HDvsGFG,Colcombet2009CostFunctions}.

\begin{proposition}[\cite{HP06}]\label{prop:HD-iff-GFG}
	Let $\A$ be an "automaton" "recognising" $\WW_{\G}\subseteq \Sigma^\oo$ satisfying that for every "game" $\G$ "suitable for transformations"  with "acceptance set" $\WW_\G$, "Eve" "wins" the "game" $\G$ from an "initial vertex" $v$ if and only if she "wins" $\G \compositionAut \A$ from $(v,q_0)$, for $q_0$  some initial vertex of $\A$. Then, $\A$ is "history-deterministic".
\end{proposition}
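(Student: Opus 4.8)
The plan is to split the biconditional and argue by contraposition on its non-trivial half. The implication ``if Eve wins $\G \compositionAut \A$ from $(v,q_0)$ then Eve wins $\G$ from $v$'' holds for \emph{every} automaton $\A$: from a winning strategy of Eve in $\G \compositionAut \A$ she obtains one in $\G$ by simulating a private copy of the product play, updating the $\A$-component herself each time she plays a coloured edge -- in a game suitable for transformations all of Adam's edges are uncoloured, so the $\A$-component never changes on Adam's moves -- and since the simulated run of $\A$ is accepting, the colour sequence produced in $\G$ lies in $\Lang{\A}=\WW_\G$. Hence the hypothesis is equivalent to: for every suitable game $\G$ with acceptance set $\WW_\G$, a win for Eve in $\G$ implies a win for Eve in $\G \compositionAut \A$; it remains to show this forces $\A$ to be history-deterministic.

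I would do this by contraposition, via the \emph{letter game} $\letterGame{\A}$ of $\A$ (Adam reveals an $\omega$-word $w$ one letter at a time, after each letter Eve picks a transition of $\A$ reading it, and Eve wins iff the run she builds is accepting or $w \notin \Lang{\A}$). A sound resolver for $\A$ is precisely a winning strategy of Eve in $\letterGame{\A}$, so if $\A$ is not history-deterministic, Eve loses $\letterGame{\A}$; this game is $\omega$-regular, hence determined, so Adam has a winning strategy there, which may be taken finite-memory. Its defining property is that every play consistent with it yields a word of $\Lang{\A}$ together with a non-accepting run of $\A$. From this strategy I would build a game $\G$, suitable for transformations and with acceptance set exactly $\WW_\G = \Lang{\A}$, that is won by Eve but such that $\G \compositionAut \A$ is won by Adam, contradicting the hypothesis: morally, $\G$ should force a play of $\G \compositionAut \A$ to reproduce the letter game with Adam following his winning strategy -- so that the automaton-component of the product is playing the side that Eve cannot win -- while $\G$ alone is won by Eve because Adam's strategy only ever spells words of $\Lang{\A} = \WW_\G$.

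The construction of this witnessing game is the step I expect to be the main obstacle, because it must satisfy three constraints at once: (i) $\G$ is suitable for transformations and its colours are letters of $\Sigma$ (so $\G$ cannot directly ``talk about'' runs of $\A$; the memory of Adam's letter-game strategy has to be carried inside $\G$ and driven by a ``shadow'' copy of $\A$'s transitions chosen by Eve, with any deviation by Adam punished by diverting the play to a fixed lasso spelling some word of $\Lang{\A}$); (ii) Eve wins $\G$; and (iii) in $\G \compositionAut \A$ Eve is genuinely reacting to letters online and cannot short-circuit the letter game by precomputing an accepting run -- which forces Adam's letters to be produced in reaction to Eve's shadow moves rather than fixed in advance, and this interaction is where the real work lies. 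Once such a $\G$ is exhibited the proposition follows; alternatively, it can be obtained as a corollary of the semantic characterisation of history-deterministic mappings established in Section~\ref{subsec-morph:semantic-properties}, applied to the natural morphism $\G \compositionAut \A \to \G$.
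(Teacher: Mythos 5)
First, note that the paper does not actually prove this proposition: it is imported from~\cite{HP06}, so there is no in-house argument to compare yours against. On its own merits, your first paragraph is correct and is exactly what Lemma~\ref{lemma-morph:surj-mappings-preserve-games} together with Proposition~\ref{prop-morph:product-by-aut-induces-mapping} gives: the projection $\G\compositionAut\A\to\G$ is a locally surjective weak morphism preserving accepting runs for \emph{any} complete automaton, so that half of the hypothesis is automatic. Your reduction of the remaining half to the letter game $\letterGame{\A}$, determinacy, and a finite-memory winning strategy $\tau$ for Adam is also the standard route.

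The gap is that the witnessing game is never constructed, and you yourself point at the one place where the argument can genuinely fail. Concretely: the natural candidate $\G_\tau$ hard-codes $\tau$, lets Eve's edges stand for transitions of $\A$ (a shadow run), and colours each edge with the letter $\tau$ emits next; Eve wins $\G_\tau$ because $\tau$, being winning for Adam in $\letterGame{\A}$, only ever spells words of $\Lang{\A}=\WW_\G$. But in $\G_\tau\compositionAut\A$ Eve chooses \emph{two} transitions per round --- the shadow one (an edge of $\G_\tau$, which drives $\tau$ and hence all future letters) and the genuine one (in the $\A$-component) --- and nothing forces them to agree. A winning strategy for Eve in the product therefore does not immediately yield a strategy against $\tau$ in the letter game, since $\tau$ reacts to the shadow run while the accepting run is the genuine one; closing this loop (for instance by working with a finite-memory winning strategy for Eve in the product and transferring it back, or by arranging $\G_\tau$ so that the letter is only determined once the product transition is fixed) is precisely the substance of the proof in~\cite{HP06}, and it is absent here. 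Your proposed shortcut via Section~\ref{subsec-morph:semantic-properties} does not repair this: Propositions~\ref{prop-morph:product-by-aut-induces-mapping} and~\ref{prop-morph:HD mappings-preserve-games} all run in the direction from history-determinism to preservation of winners, whereas what is needed here is the converse, which no result of that section provides.
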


\subsection{Muller languages, cycles and the parity hierarchy}\label{subsect-prelim:acceptance-conditions}

\subparagraph{Languages commonly used as acceptance sets.}
We now define the main classes of languages used by "$\oo$-regular automata" as "acceptance sets".
We let $\Gamma$ stand for a finite set of "colours".

\begin{description}	
	\item[Büchi.] \AP Given a subset $B\subseteq \GG$, we define the ""Büchi language associated to"" $B$ as:
	\[ \intro*\BuchiC{B}{\GG}  = \{w\in \GG^\oo \mid \minf(w) \cap B \neq \emptyset\}. \]
	\AP We say that a language $L\subseteq \GG^\oo$ is a ""B\"uchi language"" if there is a set $B\subseteq \GG$ such that $L = \BuchiC{B}{\GG}$.

	
	\item[coB\"uchi.] \AP Given a subset $B\subseteq \GG$, we define the ""coB\"uchi language associated to"" $B$ as:
	\[ \intro*\coBuchiC{B}{\GG}  = \{w\in \GG^\oo \mid \minf(w) \cap B = \emptyset\}. \]
	\AP We say that a language $L\subseteq \GG^\oo$ is a ""coB\"uchi language"" if there is a set $B\subseteq \GG$ such that $L = \coBuchiC{B}{\GG}$.
	
	\item[Rabin.] \AP A Rabin language is represented by a family $R=\{(G_1,R_1),\dots,(G_r,R_r)\}$  of ""Rabin pairs"", where $G_j,R_j\subseteq \Gamma$.
	The ""Rabin language associated to"" $R$ is defined as:
	\[ \intro*\RabinC{R}{\GG} = \{w\in \GG^\oo \mid [\minf(w)\cap G_j \neq \emptyset \text{ and } \minf(w)\cap R_j = \emptyset] \text{ for some index } j \}. \]
	\AP If $[\minf(w)\cap G_j \neq \emptyset \text{ and } \minf(w)\cap R_j = \emptyset]$, we say that $w$ is ""accepted by the Rabin pair"" $(G_j, R_j)$.
	\AP We say that a language $L\subseteq \GG^\oo$ is a ""Rabin language"" if there is a family of "Rabin pairs" $R$ such that $L = \RabinC{R}{\GG}$.
	
	
	\item[Streett.] \AP The ""Streett language associated to"" a family  $S=\{(G_1,R_1),\dots,(G_r,R_r)\}$ of "Rabin pairs" is defined as:
	\[ \intro*\StreettC{S}{\GG} = \{w\in \GG^\oo \mid [\minf(w)\cap G_j \neq \emptyset \text{ implies } \minf(w)\cap R_j \neq \emptyset] \text{ for all indices } j \}. \]
	
	\AP We say that a language $L\subseteq \GG^\oo$ is a ""Streett language"" if there is a family of "Rabin pairs" $S$ such that $L = \StreettC{S}{\GG}$.
	
	\item[Parity.] \AP We define the \emph{parity language} over the alphabet $[d_{\min}, d_{\max}]\subseteq \NN$ 
	 as:
	\[ \intro*\parity_{[d_{\min}, d_{\max}]} = \{ w\in [d_{\min}, d_{\max}]^\oo \mid \min \minf(w) \text{ is even}\}.\]
	\AP We say that a language $L\subseteq \GG^\oo$ is a ""$[d_{\min}, d_{\max}]$-parity language"" if there is a mapping $\phi\colon \GG \to [d_{\min}, d_{\max}]$ such that for all $w\in \GG^\oo$, $w\in L$ if and only if $\phi(w)\in \parity_{[d_{\min}, d_{\max}]}$. We say that $L$ is a ""parity language"" if there are $d_{\min}, d_{\max}\in \NN$ such that $L$ is a "$[d_{\min}, d_{\max}]$-parity language".

	\item[Muller.] \AP We define the ""Muller language associated to"" a family $\F\subseteq \powplus{\Gamma}$ of non-empty subsets of $\Gamma$ as:
	\[ \intro*\MullerC{\F}{\GG} = \{w\in \GG^\oo \mid \minf(w) \in \F\}. \]
	\AP We say that a language $L\subseteq \GG^\oo$ is a ""Muller language"" if there is a family $\F\subseteq \powplus{\Gamma}$ such that $L = \MullerC{\F}{\GG}$. 
\end{description}

We drop the subscript $\GG$ (resp. $[d_{\min}, d_{\max}]$) whenever the set of colours is clear from the context. 
We remark that all languages of the classes above are "prefix-independent" (for all $w\in \GG^\oo$ and $u\in \GG^*$,  $uw\in L$ if and only if $w\in L$).

\AP We say that an "acceptance condition" (resp. "transition system", "automaton") is an ""$X$ condition"" (resp. $X$ transition system, $X$ automaton), for $X$ one of the classes of languages above, if its "acceptance set" is an $X$ language. 
In the case of "parity" "transition systems", we will always assume that the set of "colours" is a subset of $\NN$ and $\phi$ is the identity function.

\AP We let ""DPA"" stand for "deterministic" "parity" "automaton" and ""DMA"" for "deterministic" "Muller" "automaton".

We discuss further classes of languages in Appendix~\ref{sec:appendix-weak-conditions} ("generalised B\"uchi" and "coB\"uchi@@generalised" languages, as well as "generalised weak" acceptance). We refer to the survey~\cite{Boker18WhyTypes} for a more detailed account on different types of acceptance conditions.

\begin{remark}[Inclusions between classes]
	We observe that there are many inclusions between the classes of languages that we have introduced. For example, "Büchi languages" are exactly "$[0,1]$-parity languages", and "parity languages" are "Rabin languages"~\cite{Mostowski1984RegularEF}. 
	In particular, all  classes above are special cases of "Muller languages".
	The relations between these classes of languages are outlined in Figure~\ref{fig-prelim:classes-acceptance}. 
\end{remark}

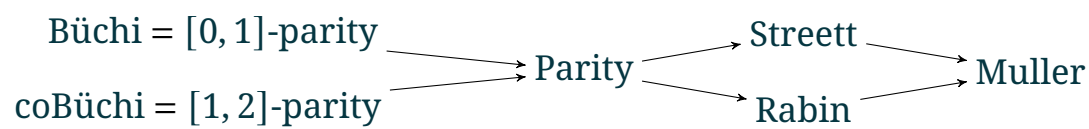
\begin{figure}[ht]
	\centering
	\begin{tikzpicture}[square/.style={regular polygon,regular polygon sides=4}, align=center,node distance=2cm,inner sep=3pt]
		\node at (-1.5,0.5)  (Buchi) {"B\"uchi" $=$ "$[0,1]$-parity"};
		\node at (-1.7,-0.5)  (co-Buchi) {"coB\"uchi" $=$ "$[1,2]$-parity"};
		\node at (3.4,0)  (Parity) {"Parity"};
		\node at (6.3,0.5)  (Streett) {"Streett"};
		\node at (6.3,-0.5)  (Rabin) {"Rabin"};
		\node at (9.3,0)  (Muller) {"Muller"};
	
		\draw
		(Buchi) edge[->] (Parity)
		(co-Buchi) edge[->] (Parity)
		(Parity) edge[->] (Streett)
		(Parity) edge[->] (Rabin)
		(Streett) edge[->] (Muller)
		(Rabin) edge[->] (Muller);	
	\end{tikzpicture}
	\caption{Relations between subclasses of "Muller languages". An arrow from a class $X$ towards a class $Y$ means that if a language $L\subseteq \GG^\oo$ is an "$X$ language", then it is also a "$Y$ language". Arrows obtained by transitivity have been omitted. Inclusions are strict: if an arrow from $X$ to $Y$ cannot be obtained by transitivity, then there are "$X$ languages" that are not "$Y$ languages"~\cite{Zielonka1998infinite}.}
	\label{fig-prelim:classes-acceptance}
\end{figure}

\begin{remark}\label{rmk-prelim:characterisation_Muller_Languages}
	A language $L\subseteq \GG^\oo$ is a "Muller language" if and only if it satisfies:
	\[ \text{For all } w,w'\in \GG^\oo, \tif \minf(w) = \minf(w'), \text{ then } w\in L \iff w' \in L. \]
\end{remark}

\begin{remark}[Representation of acceptance conditions]\label{rmk-prelim:representation_Muller_Languages}
	\AP In practice, there exists a variety of ways to represent "Muller languages" and "acceptance conditions" of "automata": using boolean formulas (Emerson-Lei conditions), as a list of accepting subsets of edges, etc.
	The complexity and practicality of algorithms manipulating "automata" and "games" may greatly differ depending on the representation of their "acceptance conditions"~\cite{Horn2008Explicit, Dawar2005ComplexityBounds}.	
	However, in this work, we are mostly interested in the expressive power of "acceptance conditions", and the results we present will not depend on how they are represented. 
\end{remark}

\begin{example}\label{ex-prelim:multiple-automata-examples}
	In Figure~\ref{fig-prelim:multiple-aut-ex} we show three different types of "automata" over the alphabet $\SS=\{a,b\}$ "recognising" the language
	\[ L = \{ w \in \SS^\oo \mid w=ub^\oo \text{ or } \left(w=ua^\oo \text{ and } u\text{ has an even number of `b's }\right) \}. \]
\end{example}
	\begin{figure}[ht]
	\centering 
	\includegraphics[width=0.8\textwidth]{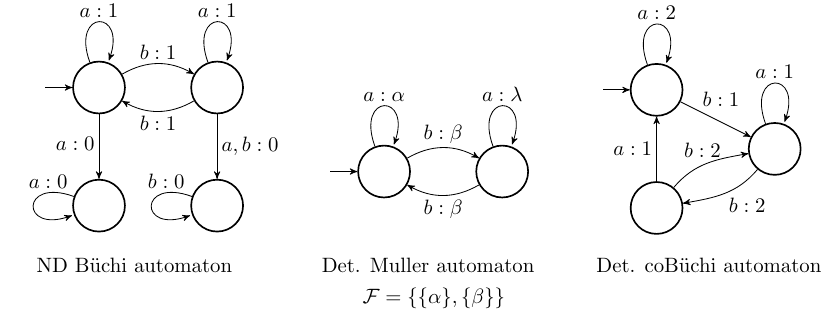}
	\caption{Different types of "automata" "recognising" the language $L = \{ w \in \SS^\oo \mid w=ub^\oo \text{ or } \left(w=ua^\oo \text{ and } u\text{ has an even number of `b's }\right) \}$.}
	\label{fig-prelim:multiple-aut-ex}
\end{figure}

\subparagraph{$\omega$-regular languages.} The class of "$\oo$-regular languages" plays a central role in the theory of formal languages and verification. The significance of "$\oo$-regular languages" is (partly) due to the robustness of its definition, as they admit multiple equivalent characterisations relating different areas of study.

\begin{proposition}[\cite{Mostowski1984RegularEF,Muller1963infSequences}]
	Let $L\subseteq \SS^\oo$ be a language of infinite words. The following properties are equivalent:
	\begin{itemize}
		\item $L$ can be "recognised" by a non-deterministic "B\"uchi" "automaton".
		\item $L$ can be "recognised" by a "deterministic" "parity" "automaton".
		\item $L$ can be "recognised" by a "non-deterministic" "Muller" "automaton".
	\end{itemize}
\end{proposition}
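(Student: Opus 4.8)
The plan is to prove the three conditions equivalent via the cycle of implications \emph{(deterministic parity)} $\Rightarrow$ \emph{(non-deterministic Muller)} $\Rightarrow$ \emph{(non-deterministic Büchi)} $\Rightarrow$ \emph{(deterministic parity)}, which visits every item. The first implication is immediate: a deterministic parity automaton is a deterministic Muller automaton, since on a fixed graph the parity acceptance set $\parity_{[d_{\min},d_{\max}]}$ is literally the Muller language $\MullerC{\F}{[d_{\min},d_{\max}]}$ for $\F = \{S \in \powplus{[d_{\min},d_{\max}]} : \min S \text{ even}\}$ (for a run $\rr$ with $\minf(\rr) = S$, $\min \minf(\rr)$ is even exactly when $\min S$ is), and a deterministic automaton is in particular a non-deterministic one.

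For \emph{(non-deterministic Muller)} $\Rightarrow$ \emph{(non-deterministic Büchi)} I would use the classical ``guess the limit set'' construction. Let $\A$ be a non-deterministic Muller automaton with colours $\GG$ and accepting family $\F \subseteq \powplus{\GG}$. Build a non-deterministic Büchi automaton $\B$ operating in two phases: in phase~$1$ it simulates $\A$ faithfully; at any moment it may nondeterministically commit to some $F \in \F$ and pass to phase~$2$, where it simulates $\A$ while additionally maintaining a subset $C \subseteq F$ of colours produced since the last reset (uncoloured transitions leaving $C$ unchanged). In phase~$2$, a transition of $\A$ producing a colour $c \notin F$ is removed (that branch dies); producing $c \in F$ with $C \cup \{c\} \neq F$ updates $C$ to $C \cup \{c\}$; and producing the last missing colour of $F$ resets $C$ to $\emptyset$ and is declared accepting. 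An infinite run of $\B$ sees accepting transitions infinitely often precisely when, from some point on, it uses only colours of $F$ and uses each colour of $F$ infinitely often, i.e.\ precisely when the underlying run of $\A$ satisfies $\minf(\rr) = F \in \F$; hence $\Lang{\B} = \Lang{\A}$. (Colour the remaining transitions with a dummy non-accepting colour so that $\B$ is a bona fide Büchi transition system without uncoloured cycles.)

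The remaining implication, \emph{(non-deterministic Büchi)} $\Rightarrow$ \emph{(deterministic parity)}, is the determinisation of Büchi automata, and it is the \emph{main obstacle}: it is the only step requiring a genuine combinatorial construction rather than bookkeeping. I would invoke it as a black box, citing McNaughton's theorem in Safra's form (turning a non-deterministic Büchi automaton into an equivalent deterministic Rabin, hence deterministic Muller, automaton) followed by the LAR construction of Gurevich and Harrington discussed in the introduction (turning any deterministic Muller automaton into an equivalent deterministic parity automaton); alternatively one may cite a direct construction from non-deterministic Büchi to deterministic parity. Since the statement claims only existence, with no bound on the number of states or priorities, any of these classical determinisation constructions suffices to close the cycle.
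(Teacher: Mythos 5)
Your proposal is correct. The paper does not prove this proposition at all: it is stated with citations to Mostowski and Muller as a classical result, and your cycle of implications (parity as a special case of Muller, the guess-the-limit-set construction from non-deterministic Muller to non-deterministic B\"uchi, and Safra/McNaughton determinisation plus LAR for the B\"uchi-to-deterministic-parity step) is the standard argument, with the only genuinely difficult step correctly delegated to the cited literature, exactly as the paper itself does.
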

\AP A language satisfying the previous conditions is called ""$\oo$-regular@@language"". 
Many other equivalent definitions exist. Notably, "$\oo$-regular languages" are exactly the languages that can be defined using \emph{monadic second-order logic}~\cite{Buchi1960decision}, those that can be described by using \emph{$\oo$-regular expressions}~\cite{McNaughton1966Testing}, and those that can be "recognised" by an \emph{$\oo$-semigroup}~\cite[Chapter~2]{PP04InfiniteWords}.

\subparagraph*{Cycles.} Let $\TS$ be a "transition system" with $V$ and $E$ as set of vertices and edges, respectively. A ""cycle"" of $\TS$ is a subset $\ell \subseteq E$ such that there is a finite "path" $v_0 \re {e_0}v_1 \re {e_1}v_2 \re{} \dots v_r \re {e_r} v_0$ with $\ell = \{e_0, e_1, \dots, e_r\}$. 
We remark that we do not require this "path" to be simple, that is, edges and vertices may appear multiple times. 
\AP The set of ""states of the cycle"" $\ell$ is $\intro*\states{\ell} = \{v_0, v_1, \dots v_r\}$. 
\AP The set of "cycles" of a "transition system" $\TS$ is written $\intro*\cycles{\TS}$. 
We will consider the set of "cycles" ordered by inclusion.
\AP For a state $v\in V$, we denote $\intro*\cyclesState{\TS}{v}$ the subset of "cycles" of $\TS$ "containing@@cycle" $v$.
We remark that a vertex $v$ is "recurrent" if and only if $\cyclesState{\TS}{v}\neq \emptyset$.
We note that $\cyclesState{\TS}{v}$ is closed under union; moreover, the union of two "cycles" $\ell_1, \ell_2\in \cycles{\TS}$ is again a "cycle" if and only if there is some state $v$ such that both $\ell_1$ and $\ell_2$ contain $v$.

Let $\TS$ be a "Muller" "transition system" with "acceptance condition" $(\gg, \GG, \MullerC{\F}{\GG})$.
\AP Given a "cycle" $\ell\in \cycles{\TS}$, we say that $\ell$ is ""accepting@@cycle"" (resp. ""rejecting@@cycle"") if $\gg(\ell) \in \F$ (resp. $\gg(\ell) \notin \F$). 
We remark that the maximal "cycles" of a "transition system" are exactly the sets of edges of its "strongly connected components". In particular, we can apply the adjectives ""accepting@@SCC"" and ""rejecting@@SCC"" similarly to the "SCCs" of a "Muller" "transition system".

We note that, by definition, the "acceptance@@run" of a "run" in a "Muller" "transition system" only depends on the set of transitions taken infinitely often. For any infinite "run" $\rr \in \Runs{\TS}$, the set of transitions taken infinitely often forms a "cycle", $\minf(\rr) = \ell_\rr\in \cycles{\TS}$, and $\rr$ is an "accepting run" if and only if $\ell_\rr$ is an "accepting cycle".

\subparagraph{The deterministic and history-deterministic parity hierarchy.}
As we have mentioned, every "$\oo$-regular language" can be recognised by a "deterministic" "parity" "automaton", but the number of colours required to do so might be arbitrarily large. We can assign to each "$\oo$-regular language" the optimal number of colours needed to recognise it using a "deterministic automaton". We obtain in this way the  \AP""deterministic parity hierarchy"", having its origins in the works of Wagner~\cite{Wagner1979omega}, Kaminski~\cite{Kaminski85classification}, and Mostowski~\cite{Mostowski1984RegularEF}. We represented this hierarchy in Figure~\ref{fig-prelim:ParityHierarchy}. This hierarchy is strict, that is, for each level of the hierarchy there are languages that do not appear in lower levels~\cite{Wagner1979omega}. It is known that we can decide in polynomial time the "parity index" of an "$\oo$-regular language" represented by a "deterministic" "parity" "automaton"~\cite{CartonMaceiras99RabinIndex}, but this problem is $\NPcomplete$ if the language is given by a "deterministic" "Rabin" or "Streett" "automaton"~\cite{KPB95Structural}.

\begin{figure}[ht]
	\centering 
	\begin{tikzpicture}[square/.style={regular polygon,regular polygon sides=4}, align=center,node distance=2cm,inner sep=3pt]
		
		\node at (0,0)  (00) {$[0,0]$};
		\node at (4,0)  (11) {$[1,1]$};
		\node at (2,0.75)  (Weak1) {$\WeakIndex{1}$};
		
		\node at (0,1.5)  (01) {$[0,1]$};
		\node at (4,1.5)  (12) {$[1,2]$};
		\node at (2,2.25)  (Weak2) {$\WeakIndex{2}$};
		
		\node at (0,3)  (02) {$[0,2]$};
		\node at (4,3)  (13) {$[1,3]$};
		\node at (2,3.75)  (Weak3) {$\WeakIndex{3}$};			
		
		\node at (0,4.25)  (dotsLeft) {$\vdots$};
		\node at (4,4.25)  (dotsRight) {$\vdots$};
		\node at (2,4.5)  (dotsRight) {$\vdots$};

		\draw   
		(00) edge (Weak1)
		(11) edge (Weak1)
		(Weak1) edge (01)
		(Weak1) edge (12)
		
		(01) edge (Weak2)
		(12) edge (Weak2)
		(Weak2) edge (02)
		(Weak2) edge (13)
		
		(02) edge (Weak3)
		(13) edge (Weak3);
		
	\end{tikzpicture}
	\caption{The (history-)deterministic parity hierarchy.}
	\label{fig-prelim:ParityHierarchy}
\end{figure}
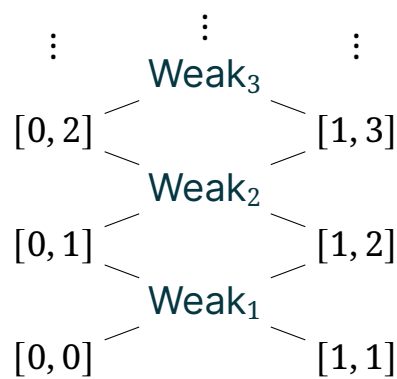

\begin{definition}[Parity index of a language]\label{def:parity_index_language}
	Let $L\subseteq \SS^\oo$ be an "$\oo$-regular language". 
	\AP We say that $L$ has ""parity index at least"" $[0, d-1]$ (resp. $[1, d]$) if any "DPA" "recognising" $L$ with a "parity" "acceptance condition" over the set of colours $[d_{\min}, d_{\max}]$ satisfies that $d_{\max} - d_{\min} \geq d-1$, and in case of equality $d_{\min}$ is even (resp. odd).
	We say that the ""parity index"" of $L$ is $[0, d-1]$ (resp. $[1, d]$) if, moreover, there is a "DPA" "recognising" $L$ with a "parity" "acceptance condition" over the set of colours $[0,d-1]$ (resp. $[1,d]$). 
	
	\AP We say that $L$ has \emph{parity index at least} $\intro*\WeakIndex{d}$ if any "DPA" "recognising" $L$ with a "parity" "acceptance condition" over the set of colours $[d_{\min}, d_{\max}]$ satisfies that $d_{\max} - d_{\min} \geq d$. We say that the parity index of $L$ is $\WeakIndex{d}$ if, moreover, there are "DPAs" $\A_1$ and $\A_2$ recognising $L$ with "parity" "acceptance conditions" over the sets of colours $[0,d]$ and $[1,d+1]$, respectively.  
	
\end{definition}

If follows from the definition that for each "$\oo$-regular language" $L$, there is a unique $d$ such that either $L$ has "parity index" $[0,d-1]$, $[1,d]$ or $\WeakIndex{d}$, and these options are mutually exclusive. See also Appendix~\ref{sec:appendix-weak-conditions} for more details about languages of "parity index" $\WeakIndex{d}$.

One of our contributions is to show that the "parity index" also applies to "Muller" "automata": any "deterministic" or "HD" "Muller" "automaton" "recognising" an "$\oo$-regular language" of "parity index" $[0, d-1]$ uses at least $d$ different colours (Proposition~\ref{prop-typ:min-colours-Muller-automata}).

The following proposition states that the notion of "parity index" of a language does not change by using "HD" automata instead of "deterministic" ones in the definition.
However, for "non-deterministic automata", the hierarchy collapses at level $[0,1]$ ("Büchi" "automata")~\cite{McNaughton1966Testing}.

\begin{proposition}[{\cite[Theorem~19]{BL19GFGFromND}}]\label{prop-prelim:parity_index-HDAutomata}
	Let $\A$ be an "HD" "parity" "automaton" recognising a language~$L$, and assume that the "parity index" of $L$ is $[0,d-1]$ (resp. $[1,d]$). Then, the "acceptance condition" of $\A$ uses at least $d$ output colours, and if it uses exactly $d$ colours, the least of them is even (resp. odd). If the "parity index" of $L$ is $\WeakIndex{d}$, then $\A$ uses at least $d+1$ output colours.
\end{proposition}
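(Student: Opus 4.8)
The plan is to reduce the statement to its purely deterministic counterpart, which is built into Definition~\ref{def:parity_index_language}, via the following. \textbf{Lemma.} If a language $L$ is "recognised" by a "history-deterministic" "parity" "automaton" whose "colour" set is an interval $[i,j]$, then $L$ is "recognised" by a "DPA" with "colour" set $[i,j]$. Granting the lemma, the proposition is immediate: given an "HD" "parity" "automaton" $\A$ for $L$, whose "colour" set we may take to be an interval $[i,j]$ after the usual compression of priorities, the lemma yields a "DPA" $\B$ for $L$ with "colour" set $[i,j]$. If the "parity index" of $L$ is $[0,d-1]$ (resp.\ $[1,d]$), then Definition~\ref{def:parity_index_language} applied to $\B$ gives $j-i\geq d-1$, with $i$ even (resp.\ odd) in case of equality; hence $\A$ uses $j-i+1\geq d$ "colours", and if it uses exactly $d$ "colours" then $j-i=d-1$, so its least "colour" $i$ is even (resp.\ odd), as claimed. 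If the "parity index" of $L$ is $\WeakIndex{d}$, then $j-i\geq d$, so $\A$ uses at least $d+1$ "colours".

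To prove the lemma, I would turn the "resolver" of $\A$ into a \emph{finite} object by composing it with a deterministic automaton for the complement. Since $L$ is $\oo$-regular, fix a "DPA" $\C$ "recognising" $\overline{L}$, and consider the "game" $\G$ on the arena over $Q_\A\times Q_\C$ in which, from a position $(q,p)$, Adam picks a letter $a\in\SS$, Eve answers with a transition $q\re{a}q'$ of $\A$ (if none exists Eve loses), and $\C$ moves deterministically to the unique $p'$; the winning condition for Eve on the resulting "run" is that its projection onto $\A$ is an accepting "run" of $\A$, or its projection onto $\C$ is an accepting "run" of $\C$. This condition is a disjunction of two "parity" conditions, hence a "Rabin" condition, and the hypothesis that $\A$ is "history-deterministic" is precisely the assertion that Eve wins $\G$ from its initial position: a sound "resolver" of $\A$, read as an Eve "strategy", never gets stuck and wins, since for $w\in L$ it builds an accepting "run" of $\A$, while for $w\notin L$ the automaton $\C$ accepts $w$. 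As $\G$ is a "Rabin" "game" won by Eve, she has a \emph{positional} "winning strategy" $\sigma$, which in particular never reaches a position from which Eve cannot move.

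From $\sigma$ I read off the "DPA" $\B$: its states are the positions of $Q_\A\times Q_\C$ reachable under $\sigma$, and from $(q,p)$ on letter $a$ it takes the transition $q\re{a}q'$ prescribed by $\sigma$, moves $\C$ to $p'$, and carries the "colour" of that transition of $\A$; thus $\B$ is "deterministic" and "complete", with "colour" set $[i,j]$. It "recognises" $L$: for $w\in L$, the "run" of $\B$ over $w$ projects onto a "play" of $\G$ consistent with the winning "strategy" $\sigma$, and since $w\notin\overline{L}=\Lang{\C}$ the $\C$-projection is a "rejecting run", which forces the $\A$-projection to be accepting, so $\B$ accepts $w$; for $w\notin L$, the $\A$-projection of the "run" of $\B$ over $w$ is merely a "run" of $\A$ over $w$, hence a "rejecting run" as $\Lang{\A}=L$, so $\B$ rejects $w$. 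Therefore $\Lang{\B}=L$, proving the lemma.

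The one genuinely non-trivial ingredient is the passage from $\A$ being "history-deterministic" to the resolver game being won \emph{with finite memory}; it is precisely to secure this that the proof goes through a "deterministic" automaton $\C$ for $\overline{L}$, which turns the resolver game into a finite-arena "Rabin" "game" where classical positional determinacy applies. Everything else --- compressing priorities so that $\A$'s "colour" set is an interval, observing that $\sigma$ never gets stuck, and checking $\Lang{\B}=L$ --- is routine; note in particular that neither the size nor the number of "colours" of $\C$ matters, since the $\C$-component of $\B$ only guides the transitions and does not appear in $\B$'s "parity" condition.
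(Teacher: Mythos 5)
The paper does not prove this proposition: it is imported as a black box from Boker and Lehtinen, so there is no internal proof to compare against. Your argument is a correct, self-contained derivation, and it is essentially the classical determinisation argument for history-deterministic parity automata: product the automaton with a deterministic automaton for the complement, observe that the resulting letter-game has a Rabin winning condition (a union of two parity conditions), invoke half-positional determinacy of Rabin conditions (which the paper itself records as Lemma~\ref{lemma-ZT:Rabin-games-positional}), and read off a DPA over $Q_\A\times Q_\C$ that inherits only $\A$'s colours. The reduction of the proposition to your lemma via priority compression is sound, as is the verification that $\Lang{\B}=L$ (for $w\in L$ the deterministic $\C$-projection rejects, forcing the $\A$-projection to accept; for $w\notin L$ every run of $\A$ rejects). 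Two small points of care, neither of which is a gap: first, your claim that history-determinism is \emph{precisely} Eve winning $\G$ is only used in the direction HD $\Rightarrow$ Eve wins, which is the direction that actually holds trivially from the resolver (the converse also holds here because the $\C$-state is a function of the $\A$-transition history, but you do not need it); second, the fact that the positional strategy never gets stuck relies on the paper's convention that a resolver must produce an infinite run for \emph{every} word, or equivalently on the convention that getting stuck loses the game, so that every position reachable under the winning strategy admits a move -- worth stating explicitly, since it is what makes $\B$ deterministic and complete. What your route buys over the citation is a proof using only ingredients already present in the paper (Rabin positionality and the letter-game idea of Section~\ref{subsec-zt: parity automaton}), at the cost of a quadratic-size detour through $Q_\A\times Q_\C$ that is irrelevant to the statement, as you correctly note.
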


We show next that the "parity index" of an "$\oo$-regular language" can be read directly from a "deterministic" "Muller" "automaton". 
	
Let $\TS$ be a "transition system" using the "Muller@@condition" "acceptance condition" $(\gg, \GG, \MullerC{\F}{\GG})$. 
\AP A ""$d$-flower"" over a state $v$ of $\TS$ is a set of $d$ "cycles" $\ell_1, \ell_2, \dots, \ell_{d}\in \cyclesState{\TS}{v}$ such that $\ell_i \supsetneq \ell_{i+1}$ and $\gg(\ell_i)\in \F \iff \gg(\ell_{i+1})\notin \F$. We say that it is a ""positive flower""  if $\gg(\ell_{1})\in \F$ and that it is ""negative@@flower"" otherwise.

\begin{lemma}[Flower Lemma, \cite{NiwinskiWalukievicz1998Relating, Wagner1979omega}]\label{lemma:flower-lemma}
	Let $\A$ be a "DMA". If $\A$ admits an "accessible" "positive@@flower" (resp. "negative@@flower") "$d$-flower", then $\Lang{\A}$ has "parity index at least" $[0,d-1]$ (resp. $[1,d]$). If $\A$ admits both "accessible" "positive@@flower" and "negative@@flower" "$d$-flowers", then $\Lang{\A}$ has "parity index at least" $\WeakIndex{d}$.
	
	Conversely, if an "$\oo$-regular language" $L$ has "parity index at least" $[0,d-1]$ (resp. $[1,d]$), then any "DMA" "recognising" $L$ admits a "positive@@flower" (resp. "negative@@flower") "$d$-flower".
\end{lemma}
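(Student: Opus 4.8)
This statement has a \emph{lower-bound} direction --- from a flower in $\A$ to a constraint on \emph{every} parity automaton for $\Lang\A$ --- and a \emph{converse} direction --- from a constraint on the language to the existence of a flower in every "DMA" for it. My plan is to prove the lower bound by a diagonalisation/pumping argument played against an arbitrary "DPA", and the converse by turning the given "DMA" into an equivalent parity automaton whose number of colours is controlled by the lengths of its alternating chains of cycles, then invoking Definition~\ref{def:parity_index_language}. The ``weak'' clause will come for free by applying the lower-bound argument to a positive and a negative flower at the same time.

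\textbf{Lower bound.} Assume $\A$ has an accessible positive $d$-flower $\ell_1\supsetneq\cdots\supsetneq\ell_d$ over a state $v$, reached from an initial state by a word $u$. Since $\A$ is a "DMA", for each $i$ fix a nonempty word $w_i\in\SS^*$ whose run from $v$ is a closed walk through $v$ visiting exactly the edges of $\ell_i$. The basic observation is: for any sequence $(i_k)_{k\in\NN}$ over $[1,d]$ and any $n_k\ge 1$, the unique run of $\A$ on $u\,w_{i_1}^{n_1}w_{i_2}^{n_2}\cdots$ has $\minf$ equal to $\ell_m$, where $m=\liminf_k i_k$ (because $\ell_j\subseteq\ell_m$ for every $j\ge m$, so the set of edges read infinitely often is $\bigcup_{j\in S}\ell_j=\ell_m$, with $S$ the set of indices read infinitely often); hence this word lies in $\Lang\A$ iff $\gg(\ell_m)\in\F$ iff $m$ is odd. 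Now let $\B$ be an arbitrary "DPA" "recognising" $\Lang\A$, with priorities in $[c,c']$. First one normalises: after reading $u$, replacing each $w_i$ by a suitable power (reaching an idempotent in the transition monoid of $\B$) and restricting to the relevant reachable part, one arranges that reading a block of $w_i$'s always contributes the \emph{same} least $\B$-priority $d_i\in[c,c']$, irrespective of the state at which the block begins. Feeding $\B$ words $u\,w_{i_1}^{n_1}w_{i_2}^{n_2}\cdots$ with $n_k\to\infty$ and comparing acceptance in $\A$ and $\B$ then yields, for every nonempty $S\subseteq[1,d]$: $\min_{j\in S}d_j$ is even $\iff$ $\min S$ is odd. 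Singletons give $d_i\equiv i+1\pmod 2$; pairs $\{i,i+1\}$ give $d_i<d_{i+1}$ (they cannot be equal, having opposite parities). Hence $c\le d_1<d_2<\cdots<d_d\le c'$ with $d_1$ even, so $c'-c\ge d-1$ and, in case of equality, $d_{\min}=c=d_1$ is even --- exactly ``"parity index at least" $[0,d-1]$''. A negative flower gives the same chain with $d_i\equiv i\pmod 2$, so odd $d_{\min}$ at equality, i.e.\ ``"parity index at least" $[1,d]$''. If $\A$ has both an accessible positive and an accessible negative $d$-flower, the two arguments together show that a "DPA" with exactly $d$ colours would need its minimal colour to be simultaneously even and odd; hence every "DPA" for $\Lang\A$ uses at least $d+1$ colours, i.e.\ ``"parity index at least" $\WeakIndex{d}$''. (The same reasoning applies verbatim to an "HD" parity automaton, using a "sound resolver@@aut" in place of a deterministic transition function; cf.\ Proposition~\ref{prop-prelim:parity_index-HDAutomata}.)

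\textbf{Converse.} I argue by contraposition. Suppose a "DMA" $\A$ "recognises" $L$ but has no positive $d$-flower. Within each "SCC" of $\A$, the "cycles" ordered by inclusion carry an alternating-chain structure, and the hypothesis bounds by $d-1$ the length of every alternating chain whose top cycle is "accepting@@cycle"; a latest-appearance-record-style relabelling driven by this structure (a simpler forerunner of the optimal "ACD"-based transformation developed later in the paper) produces a "DPA" equivalent to $\A$ that uses at most $d-1$ colours, or uses $d$ colours with an odd minimal colour. By Definition~\ref{def:parity_index_language}, the existence of such a "DPA" means $L$ does not have "parity index at least" $[0,d-1]$, contradicting the hypothesis; hence $\A$ must admit a positive $d$-flower. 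The case of a negative flower is symmetric.

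\textbf{The main obstacle.} The crux is the normalisation step in the lower bound: arranging that each loop word $w_i$ contributes a well-defined dominant $\B$-priority $d_i$, independent of where in $\B$ a $w_i$-block starts and of its surrounding context. This is a careful but routine finite-semigroup (Ramsey/idempotent) pumping argument --- the heart of the classical Flower Lemma --- after which the parity bookkeeping that yields the strictly increasing chain $d_1<\cdots<d_d$ with the prescribed parities, and hence the colour bounds, is purely mechanical. For the converse, the analogous work is in exhibiting the parity relabelling with the sharp colour count.
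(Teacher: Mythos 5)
First, a point of order: the paper does not prove this lemma at all --- it imports it from Wagner and Niwi\'nski--Walukiewicz and uses it as a black box --- so there is no in-paper proof to compare against. Judged against the standard proofs and against the machinery the paper itself develops, your architecture (pump loop words against an arbitrary parity automaton for the lower bound; relabel the given automaton for the converse) is the right one, and all of your parity bookkeeping downstream of the key step is correct, including the derivation of $\WeakIndex{d}$ from the two polarities. But the step you flag as the crux is not merely delicate: as stated it is false. Taking idempotent powers in the transition monoid of $\B$ makes the minimal priority $p_i(q)$ produced by a $W_i$-block \emph{starting in a fixed state $q$} stable under repetition, but it cannot make it independent of $q$: two states $q\neq q'$ of the relevant reachable part may both be fixed by $W_i$ while their loops carry different minimal priorities, and no further power or restriction changes that. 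Since your $d_i$ is exactly this quantity, the system of constraints ``$\min_{j\in S}d_j$ even $\iff$ $\min S$ odd, for all $S$'' is not available. The standard repair --- which is also the one the paper's own toolkit suggests (cf.\ the $X$-closed final SCCs of Lemmas~\ref{lemma-ZT:existance-X-FSCC}--\ref{lemma-ZT:disjoint-children-FSCC}) --- is to abandon state-independence and run the argument only along the single chain $[1,d]\supseteq[2,d]\supseteq\cdots\supseteq\{d\}$: in the deterministic product $\A\times\B$, after reading $u$, take a final SCC $\S_1$ of the part reachable using the loop words $w_1,\dots,w_d$, then inside it a final SCC $\S_2$ using only $w_2,\dots,w_d$, and so on. Each $\S_j$ is a cycle of the product whose $\A$-projection is exactly $\ell_j$ and whose $\B$-projection has some minimal priority $\pi_j$; equality of the two languages forces $\pi_j$ to be even iff $\ell_j$ is accepting, and the nesting $\S_1\supseteq\cdots\supseteq\S_d$ forces $\pi_1\le\cdots\le\pi_d$, hence $\pi_1<\pi_2<\cdots<\pi_d$ by the alternation of parities. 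This recovers your chain $d_1<\cdots<d_d$ and the rest of your argument goes through verbatim.

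On the converse, your contrapositive is non-circular (the correctness of the ACD-based relabelling and the bound on its number of colours do not depend on the Flower Lemma), but the asserted dichotomy ``at most $d-1$ colours, or $d$ colours with odd minimal colour'' is not justified in the equidistant case. If $\A$ has no positive $d$-flower, positive trees of its alternating cycle decomposition have height at most $d-1$ and negative trees height at most $d$; when both a positive and a negative tree reach height $d-1$, the natural relabelling produces colours $[0,d-1]$ --- that is $d$ colours with \emph{even} minimum, which does not refute ``parity index at least $[0,d-1]$''. One must instead exhibit the shifted colouring with range $[1,d]$ (assigning $\depth(n)+2$ on positive trees and $\depth(n)+1$ on negative ones, as in the paper's proof of Proposition~\ref{prop-typ:parity index-Muller-automaton}). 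This is a small, fixable omission, but as written the conclusion of the converse does not follow in that case.
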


\subsection{Trees}\label{subsect-prelim:trees}
We introduce some technical notations that will be used to define "automata" based on the "Zielonka tree" (Sections~\ref{subsec-zt: parity automaton} and~\ref{subsec-zt: GFG-Rabin}) and the transformations based on the "ACD" (Sections~\ref{subsec-acd: parity-transformation} and~\ref{subsec-acd: HD-Rabin-transformation}).

\AP A ""tree""~$T=(N,\ancestor)$ is a non-empty finite set of ""nodes""~$N$ equipped with an order relation~$\intro*\ancestor$ called the ""ancestor relation"" (we say that $x$ is an ""ancestor"" of $y$, or that $y$ is below $x$ if $x\ancestor y$), such that
(1) there is a minimal "node" for~$\ancestor$, called ""the root"", and
(2) the "ancestors" of an element are totally ordered by~$\ancestor$.
\AP The converse relation $\intro*\descendant$ is the ""descendant"" relation.
\AP Maximal nodes are called ""leaves"", and the set of leaves of $T$ is denoted by $\intro*\leaves(T)$. 
\AP The minimal strict "descendants" of a node are called its ""children"". The set of children of $n$ in $T$ is written $\intro*\children_T(n)$.
\AP The ""depth"" of a node $n$ is the number of strict "ancestors" of it. We note it $\intro\depth(n)$.
\AP The ""height"" of a tree~$T$ is the maximal length of a chain for the "ancestor" relation. 
\AP A ""subtree""  of $T=(N,\ancestor)$ is a "tree" $T'=(N', \ancestor')$ such that $N'\subseteq N$, $\ancestor'$ is the restriction of $\ancestor$ to $N'$ and $\children_{T'}(n') \subseteq \children_T(n')$ for all $n'\in N'$.
\AP Given a "node"~$n$ of a tree~$T$, the ""subtree of~$T$ rooted at~$n$"" is the "subtree" of $T$ whose nodes are the nodes of $T$ that have~$n$ as ancestor. 
\AP A ""branch"" is a maximal chain of the order $\ancestor$.

\AP An ""ordered tree"" is a "tree" $T = (N, \ancestor)$ together with a total order $\leq_n$ over $\children_T(n)$, for each node $n\in N$ that is not a "leaf". We remark that a "subtree" of an "ordered tree" can be seen as an "ordered tree" with the restrictions of these total orders to the existing children. 
\AP These orders induce a total order $\intro*\orderTree{T}$ on $T$ (the depth-first order): let $n, n'\in N$. If $n \ancestor n'$, we let $n \leq_T n'$. If $n$ and $n'$ are incomparable for the "ancestor" relation, let $n_m$ be the "deepest" common ancestor, and let $n_1, n_2\in \children_{T}(n_m)$ such that $n_1\ancestor n$ and $n_2 \ancestor n'$. We let $n \leq_T n'$ if and only if $n_1 \leq_{n_m} n_2$.
\AP In the latter case, we say that $n$ is ""on the left of"" $n'$.

We will make use of these orders through some auxiliary functions.
The function $\nextChild(n)$ gives the next sibling of $n$ in the tree, in a cyclic order. Two examples are shown on the left of Figure~\ref{fig-prelim:trees}.
The function $\jump(n, n_m)$ (for $n_m$ an "ancestor" of $n$) outputs the node given by the following  procedure: we go up the tree from $n$ to $n_m$; then, we change to the next branch below $n_m$ (in a cyclic way) and go down again taking the leftmost "leaf" below it. 
Examples are given on the right of Figure~\ref{fig-prelim:trees}.

\begin{figure}[ht]
	\hspace{3mm}
	\begin{minipage}[b]{0.45\textwidth} 
		\includegraphics[scale=1.1]{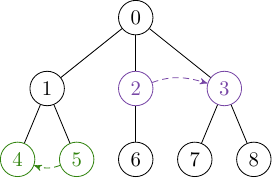}
  \newline
		$\nextChild(5)=4$ and $\nextChild(2)=3$.
	\end{minipage} 
	\hspace{2mm}
	\begin{minipage}[b]{0.45\textwidth} 
		\includegraphics[scale=1.1]{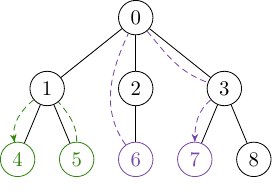}
  \newline
		$\jump(5,1)=4$ and $\jump(6,0)=7$.
	\end{minipage} 
	\hspace{0mm}
	\caption{Illustration of the functions $\nextChild$ and $\jump$.}
	\label{fig-prelim:trees}
\end{figure}

We give the formal definition now. We also need to define these notions taking into account some "subtree" $T'$ of $T$: the input can be any node in $T$, but the final output is restricted to be a node in $T'$. 
Examples~\ref{example-ZT:zielonka-tree} and~\ref{ex-acd:example-acd} further illustrate these notations.

\AP  Let $T'$ be a "subtree" of $T$ and $n'$ a node of $T'$ that is not a "leaf" in $T'$. For $n\in \children_T(n')$, we let 
\[ \intro*\nextChild_{T'}(n)=\begin{cases}
	\min_{\leq_{n'}}\{n''\in \children_{T'}(n') \mid n <_{n'} n''\} \; \text{ if this set is not empty},\\[2mm]
	\min_{\leq_{n'}}\{ n''\in \children_{T'}(n') \} \tow.
\end{cases} \]
That is, the function $\nextChild_{T'}$ maps each child of~$n'$ to a sibling that
is its successor in $T'$ for the $\leq_n$-order, in a cyclic way.

Let $T'=(N',\ancestor')$ be a "subtree" of $T = (N,\ancestor)$. Let $n'\in N'$
and $n\in N$ such that $n'$ is a (non-strict) "ancestor" of $n$ ($n' \ancestor n$).
\AP If $n'$ is a "leaf" of $T'$, we define  $\intro*\jump_{T'}(n, n') = n'$.
For $n'=n$, we define $\jump_{T'}(n, n')$ to be the "leftmost" "leaf" of $T'$ "below@@tree" $n'$.
In any other case, we define  $\jump_{T'}(n, n') = l_{\mathrm{dest}}\in \leaves(T')$ to be the only node satisfying that there are two children of $n'$ in $T$, $n_1, n_2\in \children_T(n')$ such that:
\begin{itemize}
	\item $n_1 \ancestor n$,
	\item $n_2 = \nextChild_{T'}(n_1)$ (in particular, $n_2\in N'$),
	\item $l_{\mathrm{dest}} \descendant n_2$ is the "leftmost@@tree"\footnotemark{} "leaf" in $T'$ (minimal for $\orderTree{T'}$) below $n_2$.
\end{itemize}
\footnotetext{The choice of the leftmost "leaf" is arbitrary. In all our uses of the function $\jump$, it could be replaced by any "leaf" below~$n_2$.}

We remark that $n_1=n_2$ if $n_1$ is the only child of $n'$ in $T'$.

\AP An ""$A$-labelled (ordered) tree"" is an ("ordered@@tree") "tree" $T$ together with a labelling function $\nu \colon N \to A$.
\AP A set of "trees" is called a ""forest"".

 	
	\section{Morphisms as witnesses of transformations}\label{section:morphisms}
	As mentioned in the introduction, all existing methods transforming a "Muller" into a "parity" "automaton" follow a common approach: they turn each state $q$ into multiple states of the form $(q,x)$, where $x$ stores some information about the "acceptance condition". 
It is reasonable to put forward this characteristic as the defining trait establishing that an "automaton" has been obtained as a transformation of another.
In this section, we introduce "morphisms of transition systems", which formalise this idea: a "morphism" $\pp \colon \B\to \A$ witnesses that each state $q\in \A$ has been augmented to $\inv{\pp}(q)$. To ensure that $\B$ is semantically equivalent to $\A$, the morphism has to grant a further guarantee, namely, we need to be able to simulate "runs" of $\A$ in $\B$. We will examine two properties of "morphisms" that allow to do this: "local bijectivity" and "history-determinism@@morp" for mappings.

We note that almost identical notions of morphisms were considered by Sakarovitch~\cite[Section~2]{Sakarovitch98Hidden} and Sakarovitch and de Souza~\cite[Section~2.5]{SS10LexDecomposition} in the context of transducers over finite words.\footnote{We thank Géraud Sénizergues for pointing us to the works of Sakarovitch and De Souza.}
Similar ideas to the ones presented here were used by Colcombet to characterise "history-deterministic automata": an "automaton" is "history-deterministic@@aut" if it is the homomorphic image of a (possibly infinite) "deterministic" automaton for the same language~\cite[Definition~13]{Colcombet2012FormsND}.\\

In all of this section, $\TS = (\underlyingGraph{}, \macc{})$ and $\TS' = (\underlyingGraph{}', \macc{}')$ will stand for "transition systems" with "underlying graphs" $\underlyingGraph{} =(V ,E , \msource , \mtarget, I)$ and $\underlyingGraph{}' =(V' ,E' , \msource' , \mtarget', I')$, and "acceptance conditions" $\macc{} = (\gg, \GG, \WW)$ and  $\macc{}' = (\gg', \GG', \WW')$.

\subsection{Morphisms of transition systems}\label{subsec:morphisms-TS}

\begin{definition}\label{def:morphism_graphs}
	\AP A ""morphism of graphs"" from $G$ to $G'$ is a pair of mappings $\pp = (\pp_V: V \rightarrow V', \pp_E: E \rightarrow E')$ preserving edges, that is:
	\begin{itemize}
		\item $\msource'(\pp_E(e))=\pp_V(\msource(e))$ for every $e\in E$,
			
		\item $\mtarget'(\pp_E(e))=\pp_V(\mtarget(e))$ for every $e\in E$.
	\end{itemize}			
	\AP We say that $\pp$ is a ""morphism of pointed graphs"" if, moreover, it preserves "initial vertices":
	\begin{itemize}
		\item $\pp_V(v_0)\in I'$ for every $v_0\in I$. 
	\end{itemize}
	\AP If $(G,(l_V,L_V),(l_E,L_E))$ and $(G,(l'_V,L'_V),(l'_E,L'_E))$ are "labelled graphs", we say that $\pp$ is a ""morphism of labelled graphs"" if, in addition, $L_V \subseteq L_V'$, $L_E \subseteq L_E'$ and $\pp$ preserves labels:
\begin{itemize}
	\item $l_V'(\pp_V(v))=l_V(v)$ for every $v\in V$,
	
	\item $l_E'(\pp_E(e))=l_E(e)$ for every $e\in V$. 
\end{itemize}
\end{definition}

We will write $\pp\colon G \to G'$ to denote a "morphism@@graphs" $\pp$. We will drop the subscript in $\pp_V$ and $\pp_E$ whenever it can be deduced from its use.
\AP We say that $\pp$ is ""surjective"" (resp. injective) if $\pp_V$ is.

Note that the mapping $\pp_V$ does not completely determine a "morphism@@graphs" $\pp$, as multiple edges might exist between two given vertices. However, if $G$ has no isolated vertices, the mapping $\pp_E$ does determine it. It will be convenient nonetheless to also keep the notation for $\pp_V$.

\AP We remark that the image of a "run" in $G$ by a "morphism of pointed graphs" is a "run" in~$G'$. Therefore, a "morphism of pointed graphs" $\pp\colon G \to G'$ induces a mapping \[\intro*\ppRuns\colon \RunsInfty{G}\to \RunsInfty{G'}.\] 

\begin{definition}\label{def:morphism_transition_systems}
	\AP Let $\TS$ and $\TS'$ be two ("labelled") "transition systems". A ""weak morphism of (labelled) transition systems"" $\pp\colon \TS \to \TS'$ is a "morphism of (labelled) pointed graphs" between their "underlying graphs", $\pp: G \to G'$.
	\AP We say that it is a ""morphism of (labelled) transition systems"" if it ""preserves the acceptance"" of "runs", that is:
	\begin{itemize}
		\item for every "infinite run" $\rr \in \Runs{\TS}$,  $\gg(\rr) \in \WW \; \iff \; \gg'(\ppRuns(\rr)) \in \WW'$. 
	\end{itemize}
\end{definition}

\AP A "morphism of labelled TS" between "automata" (resp. between "games") will be called a ""morphism of automata"" (resp. ""morphism of games"").

\AP We say that a "morphism of TS" $\pp\colon \TS \to \TS'$ is an ""isomorphism@@TS"" if $\pp_V$ and $\pp_E$ are bijective and $\inv{\pp} = (\inv{\pp_V}, \inv{\pp_E})$ is a "morphism" from $\TS'$ to $\TS$. In that case, we say that $\TS$ and $\TS'$ are ""isomorphic@@TS"".

\subsection{Local properties of morphisms}\label{subsect-morp: loc-bijective-morphisms}

\begin{definition}\label{def:loc_bijective_morphisms}
	A "morphism of pointed graphs" $\pp: G \rightarrow G'$ is called:	
	\begin{itemize}
		\item  \AP""Locally surjective"" if it verifies: 
		\begin{enumerate}
			\item For every $v_0'\in I'$ there exists $v_0\in I$ such that $\pp(v_0)=v_0'$.
			\item 
			For every $v\in V$ and every $ e'\in \mout(\pp(v))$
			there exists $e\in \mout(v) $ such that $ \pp(e)=e' $. 
		\end{enumerate}
		
		\item \AP""Locally injective"" if it verifies:
		\begin{enumerate}
			\item For every $v_0'\in I'$, there is at most one $v_0\in I$ such that $\pp(v_0)=v_0'$. 
			
			\item For every $ v\in V$ and every couple $ e_1,e_2\in \mout(v) $,  $\pp(e_1)=\pp(e_2)$ implies $e_1=e_2$.
		\end{enumerate}
		\item \AP""Locally bijective"" if it is both "locally surjective" and "locally injective".
	\end{itemize}
\end{definition}

Equivalently, a "morphism of pointed graphs" $\pp$ is "locally surjective" (resp. "locally injective") if for every $v\in V$ the restriction of $\pp_E$ to $\mout(v)$ is a surjection onto $\mout(\pp(v))$ (resp. an injection into $\mout(\pp(v))$), and the restriction of $\pp_V$ to $I$ is a surjection onto $I'$ (resp. an injection into $I'$).

Let $\pp: \TS \rightarrow \TS'$ be a  "(weak) morphism", and let $\rr'= v_0'\re{e_0'}v_1'\re{e_1'}\dots$ be a "run" in $\TS'$. If $\pp$ is "locally surjective", we can pick an initial vertex $v_0$ in $\inv{\pp}(v_0')$ and build step-by-step a run $\rr$ in $\TS$ from $v_0$ that is sent to $\rr'$ under $\pp$. If $\pp$ is moreover "locally bijective", the choices of the "initial vertex" and the edges at each step are unique, so "runs" in $\TS'$ can be simulated in $\TS$ via $\pp$ in a unique way.
Said differently, if $\pp: \TS \rightarrow \TS'$ is a  "locally bijective morphism", we can see $\TS$ as an "automaton" that processes "runs" of $\TS'$ in a deterministic fashion (this idea is formalised in Section~\ref{subsubsec-acd:optimality-parity-condition}).
This property will allow us to show that a "locally bijective" "morphism@@TS" witnesses the semantic equivalence of $\TS$ and $\TS'$ (see Section~\ref{subsec-morph:semantic-properties}).

We note that the notion of "locally bijective" "morphisms of transition systems" almost coincide with the usual concept of bisimulation. The main difference is that "locally bijective morphisms" treat the "acceptance@@run" of a run as a whole; we do not impose the "output colour" of an edge $\gg(e)$ to coincide with the "colour" $\gg'(\pp(e))$. This allows us to compare "transition systems" using different types of "acceptance conditions".

\begin{remark}\label{rmk-morph:loc-morph-bijection-runs}
	Let $\pp$ be a "morphism of pointed graphs".
	\begin{enumerate}
		\item If $\pp$ is "locally surjective", then $\ppRuns$ is surjective.
		\item If $\pp$ is "locally injective", then $\ppRuns$ is injective.
		\item If $\pp$ is "locally bijective", then $\ppRuns$ is bijective.
	\end{enumerate}
\end{remark}


In the following, the "weak morphisms" under consideration will be "locally surjective". The next lemma ensures that we can assume that they are "surjective" without loss of generality.

\begin{lemma}\label{lemma-morph:locSurjMorph_OntoAccessible}
	If $\pp\colon \TS \to \TS'$ is a "locally surjective" "weak morphism", it is onto the "accessible part" of $\TS'$. That is, for every "accessible" state $v'\in \TS'$, there exists some state $v\in \TS$ such that $\pp_V(v)=v'$. In particular, if every state of $\TS'$ is "accessible", $\pp$ is "surjective".
\end{lemma}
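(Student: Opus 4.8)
The plan is to argue by induction on the length of a path in $\TS'$ witnessing accessibility. Fix an accessible state $v'$ of $\TS'$; by definition there is a finite run $v_0'\re{e_0'}v_1'\re{e_1'}\cdots\re{e_{n-1}'}v_n'=v'$ in $\TS'$ with $v_0'\in I'$. I will show, by induction on $n$, that $v'$ lies in the image of $\pp_V$; in fact the induction naturally produces a run in $\TS$ lying above this run, but only its endpoint is needed for the statement.

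For the base case $n=0$ we have $v'=v_0'\in I'$, and the first clause in the definition of \emph{locally surjective} (Definition~\ref{def:loc_bijective_morphisms}) provides $v_0\in I$ with $\pp_V(v_0)=v_0'$, as required. For the inductive step, write the run as a prefix ending at $v_{n-1}'$ followed by the last edge $e_{n-1}'\in\mout(v_{n-1}')$ with $\mtarget'(e_{n-1}')=v'$. By the induction hypothesis applied to this prefix there is a state $u\in\TS$ with $\pp_V(u)=v_{n-1}'$, so $e_{n-1}'\in\mout(\pp_V(u))$, and the second clause of local surjectivity yields $e\in\mout(u)$ with $\pp_E(e)=e_{n-1}'$. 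Setting $v:=\mtarget(e)$ and using that $\pp$, being a "weak morphism" (hence a "morphism of pointed graphs"), preserves targets, we obtain $\pp_V(v)=\pp_V(\mtarget(e))=\mtarget'(\pp_E(e))=\mtarget'(e_{n-1}')=v'$. This completes the induction.

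The last sentence of the statement is then immediate: if every state of $\TS'$ is accessible, the above shows every state of $\TS'$ is in the image of $\pp_V$, i.e.\ $\pp$ is surjective. I do not anticipate a genuine obstacle here; the only subtlety worth noting is that local surjectivity is formulated one outgoing edge at a time from a single vertex, so the witness upstairs really must be built inductively along the path rather than produced in one shot, and the weak-morphism hypothesis (preservation of sources and targets) is exactly what allows the induction to push the witnessing vertex forward along each edge.
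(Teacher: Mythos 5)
Your proof is correct and is essentially the paper's argument: the paper simply invokes Remark~\ref{rmk-morph:loc-morph-bijection-runs} (local surjectivity implies $\ppRuns$ is surjective) to lift a finite run witnessing accessibility and then reads off its endpoint, and your edge-by-edge induction is exactly the content of that remark spelled out. No gap; the explicit base case via the initial-vertex clause and the use of target-preservation are precisely what is needed.
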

\begin{proof}
	 Let $v'$ be an accessible state of $\TS'$. By definition, there exists a "finite run" $\rr'$ from an initial vertex of $\TS'$ to $v'$. By surjectivity of $\ppRuns$, there is a finite "run" $\rr\in \RunsFin{\TS}$ such that $\ppRuns(\rr) = \rr'$. As $\pp$ is a "morphism of graphs", we have that $\pp(\mtargetPath(\rr)) = v'$.
\end{proof}

\begin{example}
	In Figure~\ref{fig-morp:loc-bij-morphism} we provide an example of a "locally bijective morphism" between the two rightmost "transition systems" from Figure~\ref{fig-prelim:multiple-aut-ex} (we have removed "input letters" for simplicity).
	We recall that the "acceptance set" of the rightmost "transition system" is the "Muller language associated to" $\F = \{\{\alpha\},\{\beta\}\}$.
	The morphism is given by $\pp_V(v_1) = \pp_V(v_2) = v'$ and $\pp_V(v_2) = v_2'$. In this case, the mapping $\pp_V$ determines a unique "morphism"; the (uniquely determined) mapping $\pp_E$ is represented by the colours of the edges in the figure.
	It is easy to check that this mapping "preserves the acceptance" of "runs" and that it is "locally bijective".
	
	\begin{figure}[ht]
		\centering
		\includegraphics[scale=1.15]{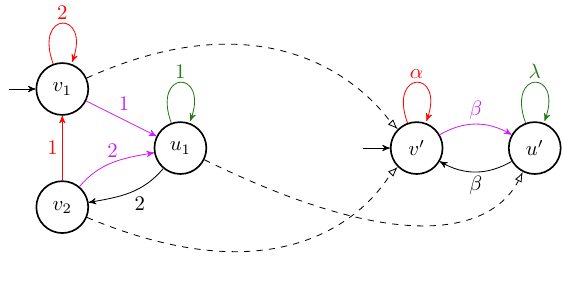}
		\caption{A "locally bijective morphism" from a "parity" "TS" to a "Muller" "TS" with "acceptance set" given by $\F = \{\{\alpha\},\{\beta\}\}$. We use dashed arrows to represent the images of vertices, and colours to represent the image of edges (that can be inferred from $\pp_V$).}
		\label{fig-morp:loc-bij-morphism}
	\end{figure}
\end{example}

\subsection{History-deterministic mappings}\label{subsect-morp: hist-det-morphisms}

"Locally bijective morphisms" are a natural generalisation of the "composition" of a "transition system" with a "deterministic" "automaton". They guarantee the semantic equivalence of the two involved "transition systems", but at the cost of the use of some strong hypothesis, as the outgoing edges of a vertex $v$ must exactly correspond to the outgoing edges of its image $\pp(v)$.
We can imagine correct transformations that do not satisfy this requirement.
Notably, "history-deterministic" "automata" have been introduced as a method to bypass this restriction, with the hope of outperforming transformations that are witnessed by "locally bijective morphisms".
In general, if $\A$ is an "HD@@aut" "automaton" "recognising" the "acceptance set" ot $\TS$, the "composition" $\TS \compositionAut \A$ does not admit a "locally bijective morphism" to $\TS$, although it shares most semantic properties with it (Proposition~\ref{prop-prelim:composition_games_HD}).

We introduce next "HD mappings", which are "weak morphisms" with the minimal set of hypothesis ensuring that, if $\pp\colon \TS \to \TS'$ is an "HD mapping", we can "simulate@@morph" "runs" of $\TS'$ in $\TS$ via $\pp$ while preserving their "acceptance@@run". This will allow us to show that $\pp$ witnesses the semantic equivalence of $\TS$ and $\TS'$ (Section~\ref{subsec-morph:semantic-properties}).

\begin{figure}[ht]
	\centering
	\includegraphics[scale=0.3]{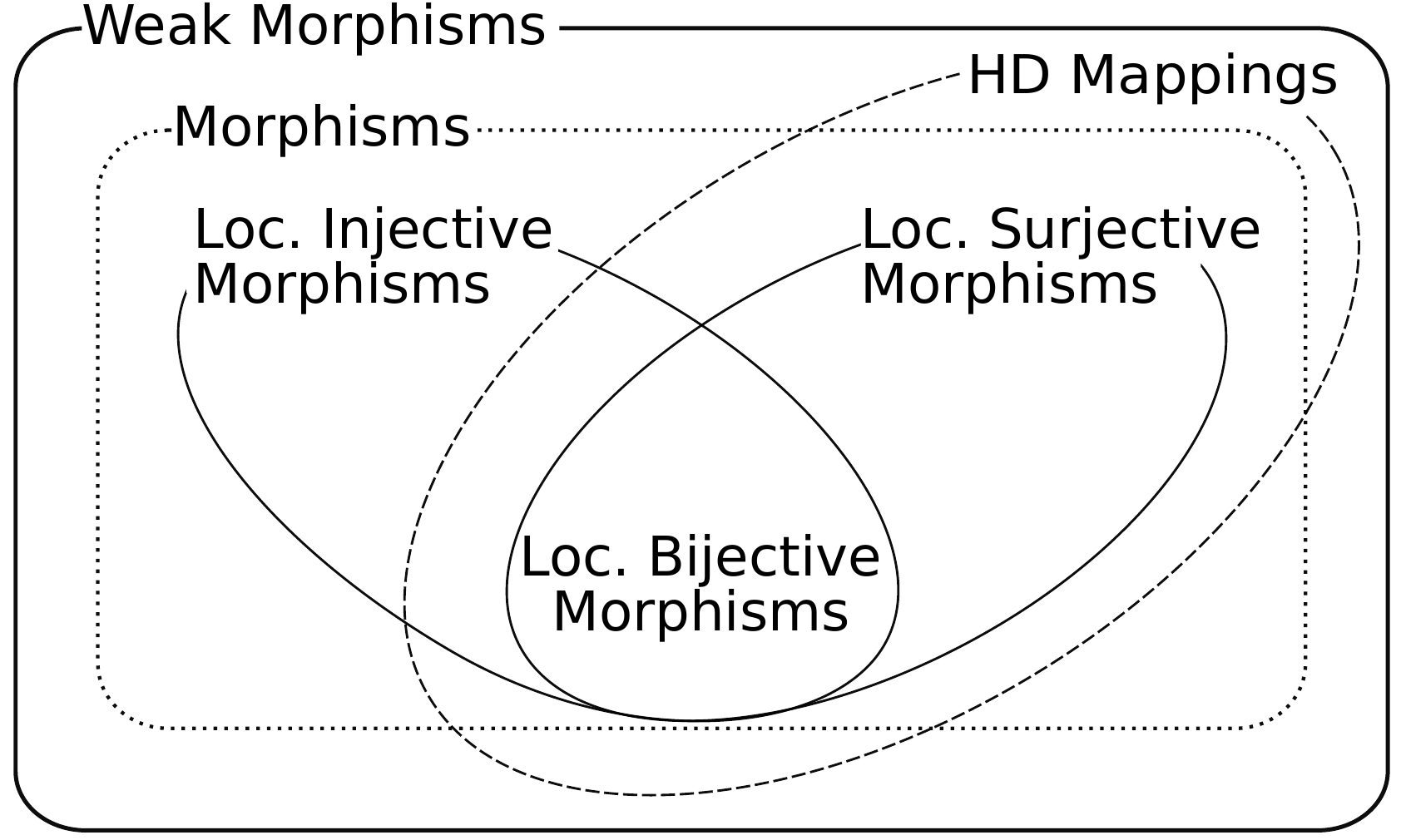}
	\caption{Different types of morphisms and the relations between them. The fact that "locally surjective morphisms" are "HD mappings" is given by Lemma~\ref{lemma-morph:loc_surjective_implies_HD}. Note that "HD mappings" are also "locally surjective" "weak morphisms" (Remark~\ref{rmk-morph: HD-is-loc.surjective}).}
	\label{fig-morph:types-morphisms}
\end{figure}

\paragraph*{History-deterministic mappings.}

Let $\TS$ and $\TS'$ be "transition systems" and $\pp:\TS \to \TS'$ a "weak morphism" between them.
\AP A ""resolver simulating $\pp$"" consists in a pair of functions $\intro* \rInit\colon I' \to I$ and $r\colon E^* \times E' \to E$ such that:
\begin{enumerate}
	\item$\pp(\rInit(v_0')) = v_0'$ for all $v_0'\in I'$, \label{item-HD-map:map-initial-vertex}
	\item $\pp(r(\rr,e')) = e'$, for all $\rr\in E^*$ and $e'\in E'$, \label{item-HD-map:map-edges}
	\item if $e_0'\in \mout(I')$, $\msource(r(\ee,e_0')) = \rInit(\msource(e_0'))$, \label{item-HD-map:respect-initial-choice} and
	\item if $\rr$ is a finite run in $\TS$ ending in $v$ and $e'\in \mout(\pp(v))$, then $r(\rr, e')\in \mout(v)$. \label{item-HD-map:building-run}
\end{enumerate} 
\AP Given a "run" $\rr' = e_0'e_1'\dots \in \RunsInfty{\TS'}$ starting in some $v_0'\in I'$, the ""run induced by@@morphism"" $r$ is the sequence $\intro*\rRuns(\rr') = e_0e_1e_2\dots\in \RunsInfty{\TS}$ defined by $e_i = r(e_0\dots e_{i-1}, e_i')$, which is indeed a "run" in $\TS$.
We say that the "resolver@@HDmorphism" is ""sound@@resolverMorphism"" if for every "accepting run" $\rr'\in \Runs{\TS'}$, the "run" $\rRuns(\rr')$ is "accepting@@run" in $\TS$.
Note that we do not impose $\rRuns(\rr')$ to be "rejecting@@run" if $\rr'$ is.

\begin{remark}\label{rmk-morph: HD-is-loc.surjective}
	Provided that all states of $\TS'$ are "accessible", a "resolver simulating $\pp$" can only exist if $\pp$ is a "locally surjective" "weak morphism".
\end{remark}


Said differently, a "sound@@morph" "resolver simulating" $\pp$ is a winning strategy for the player Duplicator in the following game:
\begin{itemize}
	\item In round $0$, Spoiler picks an "initial vertex" $v_0'$ in $\TS'$. Duplicator responds by picking an "initial vertex" $v_0$ in $\TS$ such that $\pp(v_0) = v_0'$.
	\item In round $n>0$, Spoiler picks an edge $e_n'$ in $\TS'$, and Duplicator responds by picking an edge $e_n$ in $\TS$ such that $\pp(e_n) = e_n'$.
	\item Duplicator wins if either $e_1e_2\dots$ is an "accepting run" in $\TS$ from $v_0$ or $e_1'e_2'\dots$ is not an "accepting run" in $\TS'$ from $v_0'$ (it is either not a "run from $v_0$" or not "accepting@@run"). Spoiler wins otherwise.
\end{itemize}

\begin{definition}\label{def:HD_mapping}
	Let $\TS$ and $\TS'$ be ("labelled") "transition systems".
	\AP A ""history-deterministic mapping"" (HD mapping) of transition systems from $\TS$ to $\TS'$ is a pair of mappings $\pp = (\pp_V: V \rightarrow V', \pp_E: E \rightarrow E')$ such that:
	\begin{itemize}
		\item $\pp$ is a "weak morphism",
		\item $\pp$ ""preserves accepting runs"":  $\rr \in \Runs{\TS}$ and  $\gg(\rr) \in \WW \; \implies \; \gg'(\ppRuns(\rr)) \in \WW'$, and
		\item there exists a "sound@@morph" "resolver simulating" $\pp$. 
	\end{itemize}
\end{definition}

Even if a "history-deterministic mapping" is not necessarily "locally bijective" (and not even a "morphism of transition systems"), the existence of a "sound@@morph" "resolver@@HDmorphism" allows us to define a right inverse to $\ppRuns$ preserving the "acceptance@@runs" of "runs". 

\begin{lemma}\label{lemma-morph:facts_HD_morphisms}
	Let $\pp\colon \TS \to \TS'$ be an "HD mapping" and let $(\rInit, r)$ be a "sound@@morph" "resolver simulating" it. The following holds:
	\begin{itemize}
		\item $\ppRuns \circ \rRuns = \Id{\RunsInfty{\TS'}}$.
		\item $\rRuns$ preserves the "acceptance@@runs" of "runs" in $\TS'$, that is,  for every "run" $\rr' \in \Runs{\TS'}$,  $\rr'$ is "accepting@@run" if and only if $\rRuns(\rr')$ is "accepting@@run" in $\TS$.
	\end{itemize}
\end{lemma}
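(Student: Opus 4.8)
The plan is to unfold both claims directly from the definitions of a resolver and of an HD mapping; neither needs a genuine new idea, only some care in distinguishing finite from infinite runs.

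First I would establish $\ppRuns \circ \rRuns = \Id{\RunsInfty{\TS'}}$. Fix a run $\rr' = e_0'e_1'\dots \in \RunsInfty{\TS'}$ starting in some $v_0' \in I'$ and write $\rRuns(\rr') = e_0e_1\dots$, where $e_i = r(e_0\dots e_{i-1}, e_i')$. Condition~\ref{item-HD-map:map-edges} in the definition of a resolver gives $\pp_E(e_i) = \pp_E\big(r(e_0\dots e_{i-1}, e_i')\big) = e_i'$ for every $i$, so that $\ppRuns(\rRuns(\rr')) = \pp_E(e_0)\pp_E(e_1)\dots = e_0'e_1'\dots = \rr'$. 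That $\rRuns(\rr')$ is itself a legitimate run of $\TS$ starting in an initial vertex is already part of the definition of $\rRuns$ (which relies on conditions~\ref{item-HD-map:map-initial-vertex}--\ref{item-HD-map:building-run}), and the computation above is oblivious to whether $\rr'$ is finite or infinite, so it covers all of $\RunsInfty{\TS'}$.

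Then I would deduce the acceptance-preservation statement for $\rr' \in \Runs{\TS'}$. Since accepting runs are infinite and both $\ppRuns$ and $\rRuns$ send infinite runs to infinite runs, the finite case does not intervene. For the implication stating that $\rr'$ accepting implies $\rRuns(\rr')$ accepting, this is precisely soundness of the resolver $(\rInit, r)$. For the converse, suppose $\rRuns(\rr')$ is accepting, i.e. $\rRuns(\rr') \in \Runs{\TS}$ and $\gg(\rRuns(\rr')) \in \WW$; applying the \emph{preservation of accepting runs} clause in the definition of an HD mapping to the run $\rRuns(\rr')$ yields $\gg'(\ppRuns(\rRuns(\rr'))) \in \WW'$, and since $\ppRuns(\rRuns(\rr')) = \rr'$ by the first bullet we conclude $\gg'(\rr') \in \WW'$, i.e. $\rr'$ is accepting.

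Rather than a real obstacle, the point worth emphasising is that the two one-sided hypotheses built into the notion of an HD mapping — soundness of the resolver in one direction, preservation of accepting runs in the other — are exactly what is needed, and they combine through the identity $\ppRuns \circ \rRuns = \Id{\RunsInfty{\TS'}}$ to yield the desired two-sided equivalence.
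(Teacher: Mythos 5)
Your proof is correct and follows essentially the same route as the paper's: the first item is condition~\ref{item-HD-map:map-edges} of the resolver unfolded, and the second combines soundness of the resolver in one direction with the preservation of accepting runs by the HD mapping (via the identity from the first item) in the other. Your added remarks about finite versus infinite runs are harmless elaboration, not a deviation.
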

\begin{proof}
	The first item follows from the fact that $\pp(r(\rr,e')) = e'$ for every $\rr\in E^*$ and $e'\in E'$. 
	
	For the second item, the definition of a "sound resolver@@morph" imposes that if $\rr'$ is "accepting@@run", so is $\rRuns(\rr')$. For the other direction, if $\rRuns(\rr')$ is "accepting@@run", then $\ppRuns(\rRuns(\rr')) = \rr'$ has to be "accepting@@run", as an "HD mapping" "preserves accepting runs".
\end{proof}

\begin{example}
	In Figure~\ref{fig-morp:HD mapping} we give an example of a "weak morphism" $\pp\colon \TS \to \TS'$ that is a "history-deterministic mapping", but which is neither a "morphism", nor "locally bijective".
	"Transition system" $\TS$, on the left of the figure, is a "parity" "TS" (more precisely, a "coB\"uchi" "TS"). 
	"Transition system" $\TS'$, depicted on the right of the figure, is a "Muller" "TS" using as "acceptance set" the "Muller language associated to" $\F = \{\{\aa\},\{\aa,\bb\}, \{\aa,\lambda\}\}$; that is, a "run" in $\TS'$ is "accepting@@run" if and only if it eventually avoids either transition $e'$ or transition $f'$.  
	The "weak morphism" we propose is given by: $\pp(v_0)= \pp(v_1) = \pp(v_2) = v'$, and $\pp(u_1) = \pp(u_2) = u'$. The image of most edges is uniquely determined, and we use colours to represent them. We have named the only edges whose image is not uniquely determined, and we define $\pp(e_1)=\pp(e_2) = e'$ and $\pp(f_1)=\pp(f_2) = f'$.
	
	\begin{figure}[ht]
		\centering
		\includegraphics[scale=1.15]{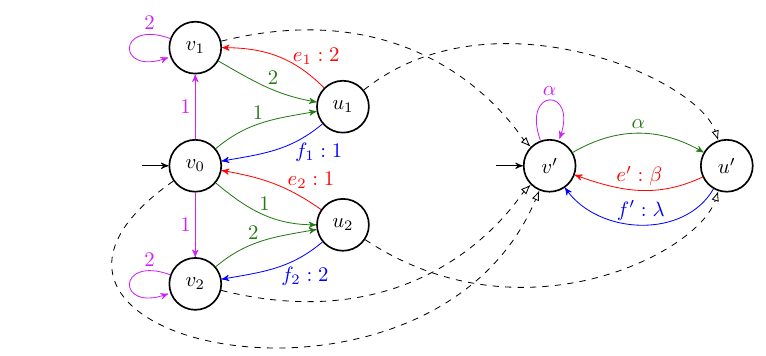}
		\caption{A "history-deterministic mapping" from a "parity" "TS" to a "Muller" "TS" with "acceptance set" given by $\F = \{\{\aa\},\{\aa,\bb\}, \{\aa,\lambda\}\}$. We use dashed arrows to represent the images of vertices, and colours to represent the image of edges.}
		\label{fig-morp:HD mapping}
	\end{figure}
	
	We remark that $\pp$ does not "preserve rejecting runs". Indeed, a run in $\TS$ alternating between $v_0$ and $u_1$, taking transition $f_1$ infinitely often, is "rejecting@@run", but its image is "accepting@@run" in $\TS'$. 
	However,  $\pp$ "preserves accepting runs": a "run" is "accepting@@run" in $\TS$ if and only if it eventually stays in $\{v_1, u_1\}$ or in $\{v_2, u_2\}$. In the first case, the image under $\pp$ avoids transition $f'$ in $\TS'$, and in the second case, its image avoids transition $e'$.

	Finally, we describe a "sound@@morph" "resolver simulating $\pp$". When simulating a run from $\TS'$ in $\TS$, we have a choice to make only when we are in state $v_0$. If the previous transition in $\TS'$ was~$e'$, we will go up, that is, $v'\re{\aa}u'$ is simulated by $v_0\re{1}u_1$ and $v'\re{\aa}v'$ is simulated by $v_0\re{1}v_1$.
	If the previous transition in $\TS'$ was $f'$, we will go down symmetrically.
	In this way, if  transition $f'$ is eventually not visited by the run in $\TS'$, we ensure to stay in $\{v_1, u_1\}$ in $\TS$ (and symmetrically, we ensure to stay in $\{v_2, u_2\}$ if $e'$ is avoided in $\TS'$).
\end{example}

\paragraph*{History-deterministic-for-games mappings.}

In the case of "games", we need to slightly strengthen the definition of "HD mappings" to guarantee that, if there is a suitable mapping $\pp\colon\G \to \G'$, then $\G$ and $\G'$ have the same "winner".
In order to show that if "Eve" wins $\G'$ then she "wins" $\G$, we need a method to transfer "strategies" in $\G'$ to $\G$. A regular "resolver simulating~$\pp$" does not suffice to do this, as it does not take into account the partition into "Eve" and "Adam" vertices. 
We need to be able to simulate a "play" of $\G'$ in $\G$ in a two-players-game fashion, "Adam's" moves will be simulated by "Adam", and "Eve's" moves by "Eve". This idea leads to the notion of "HD-for-games mapping".


Let $\G$ and $\G'$ be two "games", and $\pp:\G \to \G'$ be a "weak morphism" between them admitting a "resolver@@morphism" $(\rInit, r)$ "simulating@@resolver" $\pp$. 
\AP Given "runs" $\rr' = e_0'e_1'\dots \in \Runs{\G'}$ and $\rr = e_0e_1\dots \in \Runs{\G}$, we say that $\rr$ is ""consistent with@@resolver"" $(\rInit, r)$ over $\rr'$ if:
\begin{enumerate}
	\item $\msource(e_0) = \rInit(\msource(e_0')$,
	\item $\pp(e_i) = e_i'$, and 
	\item for every finite prefix $e_0e_1\dots e_{n-1} \prefix \rr$ ending in a vertex "controlled by Eve", the next edge in $\rr$ is $e_n = r(e_0\dots e_{n-1}, e_n')$.
\end{enumerate}  
We remark that there exists at least one run "consistent with@@mapping" $(\rInit, r)$ over $\rr'$, namely $\rRuns(\rr')$.
\AP We say that $(\rInit, r)$ is ""sound for $\G$"" if it verifies that for any "accepting run" $\rr'\in \Runs{\G'}$, all runs "consistent with@@resolver" $(\rInit, r)$ over $\rr'$ are "accepting" in $\G$. 

Said differently, a "resolver sound for $\G$" is a winning strategy for Duplicator in the following game:
\begin{itemize}
	\item In round $0$, Spoiler picks an "initial vertex" $v_0'$ in $\G'$. Duplicator responds by picking an "initial vertex" $v_0$ in $\G$ such that $\pp(v_0) = v_0'$.
	\item In round $n>0$, Spoiler picks an edge $e_n'$ in $\G'$. If $v_{n-1}$ is controlled by "Adam", Spoiler chooses an edge $e_n = v_{n-1}\re{} v_{n}\in \mout(v_{n-1})$ such that $\pp(e_n) = e_n'$. If $v_n$ is controlled by "Eve", it is Duplicator who chooses one such $e_n$. 
	\item Duplicator wins if either $e_1e_2\dots$ is an "accepting run" in $\G $ from $v_0$ or $e_1'e_2'\dots$ is not an "accepting run" in $\G'$ from $v_0'$. Spoiler wins otherwise.
\end{itemize}

\begin{definition}\label{def:HD mapping-for-games}
	An "HD mapping" "of games@@morphism" $\pp:\G \to \G'$ is called ""history-deterministic-for-games"" if it admits a "resolver" "sound for $\G$".
\end{definition}

Whenever we apply the term "HD-for-games" to a map $\pp:\TS \to \TS'$, it will implicitly imply that $\TS$ and $\TS'$ are "games" (that is, they have a fixed "vertex-labelling" $\lPlayers\colon V\to \{\Eve, \Adam\}$), and that $\pp$ preserves those "vertex-labellings").

In the next lemma, we prove that "HD@@mapping" and "HD-for-games" mappings are a strict generalisation of "locally surjective morphisms" (and therefore, also of "locally bijective" ones). On the other hand, we remark that "HD mappings" must be "locally surjective", but they are not necessarily "morphisms" (they might not "preserve rejecting runs").

\begin{lemma}\label{lemma-morph:loc_surjective_implies_HD}
	If $\pp: \TS \to \TS'$ is a "locally surjective morphism", it is also an "HD mapping". If $\TS$ and $\TS'$ are "games", $\pp$ is moreover "HD-for-games". 
\end{lemma}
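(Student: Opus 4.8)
The plan is to directly exhibit a sound resolver simulating $\pp$ (and a resolver sound for $\G$ in the game case), using the fact that a locally surjective morphism admits, at every vertex, a section of $\pp_E$ restricted to outgoing edges. First I would unpack what has to be checked: since $\pp$ is already assumed to be a morphism of transition systems, it preserves the acceptance of runs in both directions, so in particular it preserves accepting runs — the second bullet of Definition \ref{def:HD_mapping} is immediate. The content is therefore entirely in producing a sound resolver. Note also that a locally surjective morphism is a weak morphism by definition, so there is nothing to do there.

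To build the resolver, I would first fix, using the axiom of choice (the transition systems are finite, so this is unproblematic), a function $\rInit\colon I'\to I$ with $\pp(\rInit(v_0'))=v_0'$ for every $v_0'\in I'$; this is possible by condition (1) of local surjectivity. Then, for every vertex $v\in V$ and every edge $e'\in\mout(\pp(v))$, condition (2) of local surjectivity guarantees some $e\in\mout(v)$ with $\pp(e)=e'$; fix such a choice and call it $s(v,e')$. Define $r\colon E^*\times E'\to E$ by: if $\rr$ is a finite run in $\TS$ ending in a vertex $v$ and $e'\in\mout(\pp(v))$, set $r(\rr,e')=s(v,e')$; on all other arguments define $r$ arbitrarily (e.g.\ pick any edge in the preimage of $e'$, or any fixed edge — these cases never arise along a genuine simulation). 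I would then verify the four conditions of a resolver simulating $\pp$: condition \ref{item-HD-map:map-initial-vertex} holds by choice of $\rInit$; condition \ref{item-HD-map:map-edges} holds because $\pp(s(v,e'))=e'$; condition \ref{item-HD-map:respect-initial-choice} holds because for $e_0'\in\mout(I')$ the empty run $\ee$ is not a run ending at a specific vertex, so one defines $r(\ee,e_0')=s(\rInit(\msource(e_0')),e_0')$, which starts at $\rInit(\msource(e_0'))$; and condition \ref{item-HD-map:building-run} holds by construction of $s$. The key point is that, by induction on the length of a prefix, the induced sequence $\rRuns(\rr')=e_0e_1\dots$ is always a well-defined run of $\TS$: if $e_0\dots e_{n-1}$ is a run ending at $v_n$, then $\pp(v_n)=\msource(e_n')=\mtarget(e_{n-1}')$ since $\pp$ commutes with source and target and $e_0\dots e_{n-1}$ maps to $e_0'\dots e_{n-1}'$, so $e_n'\in\mout(\pp(v_n))$ and $e_n=s(v_n,e_n')\in\mout(v_n)$ continues the run. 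Soundness is then immediate: $\ppRuns(\rRuns(\rr'))=\rr'$ by construction, and since $\pp$ is a morphism of transition systems it preserves acceptance, so if $\rr'$ is accepting in $\TS'$ then $\rRuns(\rr')$ is accepting in $\TS$.

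For the second assertion, suppose $\TS=\G$ and $\TS'=\G'$ are games and $\pp$ preserves the player-labelling. I would argue that the very same resolver $(\rInit,r)$ is sound for $\G$ in the stronger sense of Definition \ref{def:HD mapping-for-games}. The point is that local bijectivity of $\pp$ on outgoing edges is \emph{not} needed; what is needed is that \emph{any} run $\rr=e_0e_1\dots$ that is consistent with $(\rInit,r)$ over an accepting $\rr'$ is accepting in $\G$. But consistency forces $\pp(e_i)=e_i'$ for all $i$ (condition (2) of consistency with $(\rInit,r)$), hence $\ppRuns(\rr)=\rr'$, and since $\pp$ is a morphism of transition systems preserving acceptance, $\rr'$ accepting implies $\rr$ accepting. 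Thus soundness for $\G$ holds for a reason independent of how Eve's versus Adam's moves were resolved — the preservation of acceptance in both directions does all the work. I would close by noting that a locally bijective morphism is in particular locally surjective, so the claim applies to it as well.

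The main obstacle I anticipate is purely bookkeeping: being careful that $r$ is a total function on $E^*\times E'$ while only its behaviour on genuine finite-run prefixes matters, and handling the $\ee$ (empty run) base case of condition \ref{item-HD-map:respect-initial-choice} cleanly in tandem with condition \ref{item-HD-map:building-run}. There is no real mathematical difficulty — everything reduces to the observation that a section of the edge-map exists locally and that $\pp$ being a genuine morphism already transports acceptance. The game case is deceptively short precisely because consistency pins down the image run exactly.
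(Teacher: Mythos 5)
Your proposal is correct and follows essentially the same route as the paper: fix a section of $\pp$ on initial vertices and on outgoing edges (given by local surjectivity), define the resolver recursively from these choices, and conclude soundness — including soundness for $\G$ in the game case — from the fact that $\pp$, being a genuine morphism, preserves acceptance in both directions, so any run consistent with the resolver over an accepting $\rr'$ maps to $\rr'$ and is therefore accepting. No gaps; this is the paper's argument.
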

\begin{proof}
	We need to define a "sound@@morph" "resolver simulating" $\pp$. Let $\rInit\colon I' \to I$ be any function choosing initial vertices satisfying that $\pp\circ\rInit = \Id{I'}$ (which exists by "local surjectivity" of $\pp$). For each $v\in V$ and edge $e'\in \mout(\pp(v))$ we choose one edge $f(v,e')\in \mout(v)$ such that $\pp(f(v,e')) = e'$ (which exists by "local surjectivity"), and we let $r$ be the "resolver" induced by these choices. Formally, we define $r \colon E^*\times E' \to E$ recursively. For the base case, if $e_0'\in \mout(v')$, with $v'\in I'$, we define $r(\ee, e_0') = f(\rInit(v'), e_0')$. Assume that $r$ has been defined for "runs" of length $\leq n$, and let $\rr\in E^*$ be of length $n+1$ and $e'\in E'$. If $\rr$ is not a "run" or $e'\notin \mout(\mtarget(\pp(\rr)))$, we let $r(\rr,e')$ be any edge in $\inv{\pp}(e')$. If not, let $v = \mtarget(\rr)$ and define $r(\rr,e')$ to be the edge $f(v,e')$.
	
	It is straightforward to check that $(\rInit, r)$ is indeed a "resolver@@mapping"  (for every "run" $\rr'\in \Runs{\TS'}$, the sequence $\rRuns(\rr')$ is a "run" in $\TS$ and $\rr'$ is its image under $\pp$). Finally, since $\pp$ is a "morphism", for every $\rr'\in \Runs{\TS'}$ and every $\rr\in \Runs{\TS}$ "consistent with@@resolver" $(\rInit, r)$ over $\rr'$, $\rr$ is "accepting@@run" in $\TS$ if and only if $\rr' = \ppRuns(\rr)$ is  "accepting@@run" in $\TS'$. We conclude that $(\rInit ,r)$ is a "sound resolver@@morph" (resp. "sound for $\TS$") and therefore $\pp$ is an "HD mapping" (resp. "HD-for-games mapping").
\end{proof}

\paragraph*{Restrictions and extensions of initial sets.}

The following simple lemma states that reducing the number of "initial vertices" preserves the "history-determinism@@mapping" of mappings. 
\begin{lemma}\label{lemma-morph:HD mappings-reducing-initial-vertices}
	Let $\TS$ and $\TS'$ be two "TS" such that there is an "HD@@mapping" (resp. "HD-for-games") mapping $\pp\colon \TS \to \TS'$. 
	For any non-empty subset $\tilde{I} \subseteq I'$, $\pp$ is also an "HD@@mapping" (resp. "HD-for-games") mapping between the "transition systems" $\initialTS{\TS}{\rInit(\tilde{I})}$ and $\initialTS{\TS'}{\tilde{I}}$; that is, the "transitions systems" obtained by setting $\rInit(\tilde{I})$ and $\tilde{I}$ as "initial vertices", respectively.
\end{lemma}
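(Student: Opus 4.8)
The plan is to observe that the data witnessing that $\pp$ is an HD mapping restricts, essentially verbatim, to the systems with the smaller initial sets, so that the whole statement reduces to a routine verification. Fix a sound resolver $(\rInit, r)$ simulating $\pp$, fix a non-empty $\tilde I\subseteq I'$, and set $\TS_{\tilde I} = \initialTS{\TS}{\rInit(\tilde I)}$ and $\TS'_{\tilde I} = \initialTS{\TS'}{\tilde I}$ (both well defined, since $\rInit(\tilde I)$ is non-empty). These transition systems differ from $\TS$ and $\TS'$ only in their sets of initial vertices, and $\rInit(\tilde I)\subseteq I$ because $\rInit$ maps $I'$ into $I$.

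First I would check that $\pp$ remains a weak morphism from $\TS_{\tilde I}$ to $\TS'_{\tilde I}$: the edge- and label-preservation conditions are untouched, and preservation of initial vertices follows from clause~(\ref{item-HD-map:map-initial-vertex}) of a resolver, since any $v_0\in\rInit(\tilde I)$ is of the form $\rInit(v')$ with $v'\in\tilde I$, whence $\pp(v_0)=v'\in\tilde I$. Preservation of accepting runs is inherited because every run of $\TS_{\tilde I}$ is a run of $\TS$ (it starts in $\rInit(\tilde I)\subseteq I$).

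Next I would take $(\restr{\rInit}{\tilde I}, r)$ as a candidate sound resolver simulating $\pp\colon\TS_{\tilde I}\to\TS'_{\tilde I}$ and verify the four defining clauses. Clauses~(\ref{item-HD-map:map-initial-vertex}) and~(\ref{item-HD-map:map-edges}) involve only elements of $\tilde I\subseteq I'$ and arbitrary edges of $E'$, so they transfer immediately; clause~(\ref{item-HD-map:respect-initial-choice}) transfers because $\mout(\tilde I)\subseteq\mout(I')$, and the induced source then lands in $\rInit(\tilde I)$; clause~(\ref{item-HD-map:building-run}) transfers because every finite run of $\TS_{\tilde I}$ is also a finite run of $\TS$. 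Soundness is the same story: an accepting run $\rr'$ of $\TS'_{\tilde I}$ is an accepting run of $\TS'$, so $\rRuns(\rr')$ is accepting in $\TS$, and by clause~(\ref{item-HD-map:respect-initial-choice}) it starts in $\rInit(\tilde I)$, hence is an accepting run of $\TS_{\tilde I}$. This settles the HD mapping case.

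For the HD-for-games variant, the extra ingredient is simply that $\G_{\tilde I}$ and $\G'_{\tilde I}$ carry the same Eve/Adam partitions as $\G$ and $\G'$; hence a run of $\G_{\tilde I}$ consistent with $(\restr{\rInit}{\tilde I}, r)$ over a run $\rr'$ of $\G'_{\tilde I}$ is exactly a run of $\G$ consistent with $(\rInit, r)$ over $\rr'$ starting in $\rInit(\tilde I)$, and soundness of the original resolver for $\G$ then gives soundness of the restricted one for $\G_{\tilde I}$. I do not expect a genuine obstacle here; the only point deserving attention is that the induced (respectively consistent) runs stay inside the restricted initial set, which is precisely what clause~(\ref{item-HD-map:respect-initial-choice}) of a resolver — respectively the first consistency condition — ensures.
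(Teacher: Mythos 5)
Your verification is correct: restricting the same resolver $(\rInit, r)$ to $\tilde{I}$ and checking the four resolver clauses plus soundness (and consistency, in the for-games case) is exactly the routine argument intended here — the paper treats this as a "simple lemma" and omits the proof entirely. The one point that genuinely needs care, namely that induced and consistent runs start in $\rInit(\tilde{I})$ via clause~\ref{item-HD-map:respect-initial-choice}, is the one you single out, so nothing is missing.
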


For arbitrary "acceptance conditions", enlarging the set of "initial vertices" does not preserve "history-determinism@@mapping". However, for "transition systems" using the "acceptance conditions" considered in this work, we can enlarge the set of "initial vertices" without loss of generality. The proof can be found in Appendix~\ref{sec:appendix-prefix-independent}.

\begin{restatable}{lemma}{HDmappingsPrefIndep}\label{lemma-morph:HD mappings-with-prefix-independent-conditions}
	Let $\TS$ and $\TS'$ be two "TS" such that all their states are "accessible", and let $\pp\colon \TS \to \TS'$ be an "HD@@mapping" (resp. "HD-for-games") mapping between them. If $\WW$ and $\WW'$ are "prefix-independent", the mapping $\pp$ is also "HD@@mapping" (resp. "HD-for-games") when considered between the "transition systems" $\initialTS{\TS}{V}$ and $\initialTS{\TS'}{V'}$, consisting of the "transition systems" $\TS$ and $\TS'$ where all the states are set to be "initial@@state".
\end{restatable}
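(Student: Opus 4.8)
The plan is to exhibit a sound resolver for $\pp$ viewed as a map between $\initialTS{\TS}{V}$ and $\initialTS{\TS'}{V'}$, starting from the sound resolver $(\rInit, r)$ that exists for $\pp\colon \TS \to \TS'$. The new resolver needs to handle an arbitrary initial vertex $v'\in V'$ of $\initialTS{\TS'}{V'}$, whereas the old one only knows how to start from vertices of $I'$. The key idea is that, since every state of $\TS$ is accessible, for each $v'\in V'$ we can fix once and for all a finite run $\rr_{v'}$ in $\TS$ from some genuine initial vertex $v_0\in I$ to a vertex $v\in \inv{\pp}(v')$ (such $v$ exists by Lemma~\ref{lemma-morph:locSurjMorph_OntoAccessible}, since $\pp$ is locally surjective by Remark~\ref{rmk-morph: HD-is-loc.surjective}). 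Write $\rr_{v'}' = \ppRuns(\rr_{v'})$, a finite run in $\TS'$ from $\pp(v_0)\in I'$ to $v'$.

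First I would define $\trInit\colon V' \to V$ by $\trInit(v') = \mtargetPath(\rr_{v'})$, and define the new resolver function $\tilde r\colon E^*\times E' \to E$ by "prepending history": on input $(\sigma, e')$ with $\sigma$ a finite run of $\TS$ starting at some $\trInit(v')$, set $\tilde r(\sigma, e') = r(\rr_{v'}\cdot\sigma, e')$, and extend arbitrarily on inputs that are not of this form. One checks directly that conditions \ref{item-HD-map:map-initial-vertex}--\ref{item-HD-map:building-run} in the definition of a resolver simulating $\pp$ hold: the first is $\pp(\trInit(v')) = v'$, which holds by construction; the second follows from $\pp(r(\cdot,e'))=e'$; the third and fourth are inherited from the corresponding properties of $r$ after the prefix $\rr_{v'}$. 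Then, given a run $\rr'$ in $\initialTS{\TS'}{V'}$ from $v'\in V'$, the induced run $\tilde\rRuns(\rr')$ is precisely the suffix, after $\rr_{v'}$, of $\rRuns(\rr_{v'}'\cdot\rr')$, which is a legitimate run of $\TS$ from an actual initial vertex. Soundness now follows from prefix-independence: if $\rr'$ is accepting then, since $\WW'$ is prefix-independent, $\rr_{v'}'\cdot\rr'$ is an accepting run of $\TS'$, hence $\rRuns(\rr_{v'}'\cdot\rr')$ is accepting in $\TS$ by soundness of $(\rInit,r)$, and then $\tilde\rRuns(\rr')$ — obtained by deleting the finite prefix $\rr_{v'}$ — is accepting because $\WW$ is prefix-independent. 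The "preserves accepting runs" property of $\pp$ is unchanged by the set of initial vertices, so $\pp$ is an HD mapping between $\initialTS{\TS}{V}$ and $\initialTS{\TS'}{V'}$.

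For the HD-for-games variant the argument is identical in structure: the prefix $\rr_{v'}$ is a fixed finite run that both players "replay" before the actual game starts, so any run consistent with $(\trInit,\tilde r)$ over $\rr'$ is the suffix of a run consistent with $(\rInit, r)$ over $\rr_{v'}'\cdot\rr'$; one only has to note that the moves along $\rr_{v'}$ are forced (it is a fixed sequence), so no genuine choice is made there by either player, and then prefix-independence of $\WW$ and $\WW'$ closes the argument exactly as above. The main obstacle is purely bookkeeping: making sure the domain of $\tilde r$ is handled coherently (inputs whose run-prefix does not begin at one of the chosen vertices $\trInit(v')$ must be sent somewhere harmless, and one must verify that the recursive/inductive structure of $r$ survives the prepending of a fixed prefix), and checking that the freshly-chosen vertex $\trInit(v')$ is consistent with what $\rInit$ would have dictated when $v'\in I'$ — which we do not actually need, since Lemma~\ref{lemma-morph:HD mappings-reducing-initial-vertices} already lets us ignore the original $\rInit$ and we are free to pick any accessing run $\rr_{v'}$.
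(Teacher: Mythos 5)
Your overall strategy (prepend a fixed accessing prefix and transfer acceptance back and forth using prefix-independence of $\WW$ and $\WW'$) is the same as the paper's, but there is a genuine gap in how you choose the prefix. You take $\rr_{v'}$ to be an \emph{arbitrary} finite run of $\TS$ from $I$ to some vertex of $\inv{\pp}(v')$, and you then claim that $\tilde{\rRuns}(\rr')$ is the suffix, after $\rr_{v'}$, of $\rRuns(\rr_{v'}'\cdot\rr')$. This identity fails in general: the prefix of $\rRuns(\rr_{v'}'\cdot\rr')$ is $\rRuns(\rr_{v'}')$, i.e.\ the run that the \emph{original resolver} builds when fed $\rr_{v'}'$, which need not coincide with your hand-picked $\rr_{v'}$ (it may even end in a different preimage of $v'$). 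Consequently $\rr_{v'}\cdot\tilde{\rRuns}(\rr')$ is, in general, not a run induced by (nor consistent with) $(\rInit,r)$ over anything, and the soundness of $(\rInit,r)$ — which only speaks about runs it induces, resp.\ runs consistent with it — gives no guarantee on it. Since an HD mapping need not preserve rejecting runs, you cannot bypass the resolver here. Your closing remark that "we are free to pick any accessing run $\rr_{v'}$" is precisely the false step; what Lemma~\ref{lemma-morph:HD mappings-reducing-initial-vertices} lets you ignore is irrelevant to this issue. The fix is the one the paper uses: for each $v'$ take a finite run $\rr'_{v'}$ of $\TS'$ reaching $v'$ (this uses accessibility of the states of $\TS'$, not of $\TS$), and set $\trInit(v') = \mtargetPath(\rRuns(\rr'_{v'}))$ with prefix $\rr_{v'} := \rRuns(\rr'_{v'})$ — a prefix that is by construction consistent with $(\rInit,r)$ over $\rr'_{v'}$. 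Then the concatenated run is consistent with $(\rInit,r)$ over $\rr'_{v'}\cdot\rr'$, soundness applies, and prefix-independence strips the prefix, exactly as you intended; the same choice makes the HD-for-games case go through, since Eve's moves along the prefix do follow $r$.

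A second, smaller omission: the claim that the "preserves accepting runs" property "is unchanged by the set of initial vertices" is not literally true, because $\Runs{\initialTS{\TS}{V}}$ contains strictly more runs than $\Runs{\TS}$. It does hold, but it requires the short argument the paper gives: given an accepting run from an arbitrary $v$, prepend an accessing run of $v$ (accessibility of $\TS$), use prefix-independence of $\WW$ to make the extended run accepting, apply preservation for $\TS\to\TS'$, and strip the image prefix using prefix-independence of $\WW'$. Adding this paragraph and correcting the choice of prefix as above turns your sketch into the paper's proof.
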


\subsection{Preservation of semantic properties of automata and games}\label{subsec-morph:semantic-properties}

We start this section by showing that "locally bijective morphisms" and "HD mappings" are a strict generalisation of "compositions" by "deterministic" and "history-deterministic" "automata", respectively (Proposition~\ref{prop-morph:product-by-aut-induces-mapping}). 
Then, we prove that these mappings witness the semantic equivalence of the "transition systems" under consideration. That is, (1) if $\pp\colon \A\to \A'$ is an "HD mapping" "of automata@@morph", then $\Lang{\A} = \Lang{\A'}$, and if $\pp$ is "locally bijective", $\A$ is "deterministic" (or "unambiguous") if and only if $\A'$ is (Proposition~\ref{prop-morph:HD mappings-preserve-languages});\footnotemark{} and (2) if $\pp\colon \G\to \G'$ is an "HD-for-games mapping", $\G$ and $\G'$ have the same "winner" (Proposition~\ref{prop-morph:HD mappings-preserve-games} and Corollary~\ref{cor-morph:HD-map-preserve-full-WR}).

\footnotetext{The results in this section do not directly imply that if $\A$ is an "automaton" recognising the "acceptance set" of another automaton $\B$, then $\B\compositionAut \A$ "recognises" the same language as $\B$, if $\A$ is not "history-deterministic" (Proposition~\ref{prop-prelim:composition_automata}). In that case, the equality $\Lang{\B\compositionAut \A} = \Lang{\B}$ follows from the idea that "runs" in $\B$ can be simulated in $\B\compositionAut \A$ ``guided by the non-deterministic choices of $\A$'', which we do not formalise in this work.}

\paragraph*{Morphisms generalise composition by an automaton.}

\begin{proposition}\label{prop-morph:product-by-aut-induces-mapping}
	Let $\A$ be a "complete" "automaton" "accepting@@automaton" the language $\Lang{\A} = \WW\subseteq \SS^\oo$, and let $\TS$ be a "(labelled) TS" with "acceptance set" $\WW$. 
	Then, there exists a "locally surjective" "weak morphism of (labelled) TS" $\pp\colon \TS \compositionAut \A \to \TS$ that "preserves accepting runs". Moreover:
	\begin{enumerate}
		\item If $\A$ is "deterministic", $\pp$ can be chosen to be a "locally bijective morphism".
		\item If $\A$ is "HD@@aut", then $\pp$ can be chosen to be an "HD mapping".
		\item If $\A$ is "HD@@aut" and $\TS$ is a "game" "suitable for transformations", then $\pp$ can be chosen to be an "HD-for-games mapping".
	\end{enumerate}	
\end{proposition}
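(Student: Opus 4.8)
The plan is to construct the morphism $\pp\colon \TS \compositionAut \A \to \TS$ explicitly as the projection onto the first coordinate, and then verify the required properties (local surjectivity, preservation of accepting runs, and the three refinements) using the definition of the composition and the hypotheses on $\A$.

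\textbf{Construction of the projection morphism.} Recall that the vertices of $\TS \compositionAut \A$ are pairs $(v,q)$ with $v\in V$, $q\in Q$, and its edges are either pairs $(e_1,e_2)\in E\times\DD$ with $\gg_\TS(e_1) = l_\SS(e_2)$, or uncoloured edges $e_1\in E$ with $\gg_\TS(e_1)=\ee$. Define $\pp_V(v,q) = v$ and $\pp_E(e_1,e_2) = e_1$ (resp. $\pp_E(e_1) = e_1$ for uncoloured edges). One checks directly from the definition of $\edgesProduct$ that $\msource$ and $\mtarget$ are preserved, that $\pp_V$ sends $I_\TS\times I_\A$ into $I_\TS$, and that the labels $l_V^\ltimes, l_E^\ltimes$ are preserved, so $\pp$ is a "weak morphism of (labelled) transition systems". \emph{Local surjectivity:} given a vertex $(v,q)$ and an edge $e_1\in\mout(v)$ in $\TS$, if $\gg_\TS(e_1)=\ee$ then $e_1$ itself is an edge out of $(v,q)$ mapping to $e_1$; if $\gg_\TS(e_1) = a\in\SS$, then by "completeness" of $\A$ there is a transition $e_2 = q\re{a:c}q'$, so $(e_1,e_2)\in\mout(v,q)$ maps to $e_1$. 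The initial-vertex condition holds since any $v_0'\in I_\TS$ has a preimage $(v_0', q_0)$ for any $q_0\in I_\A$. \emph{Preservation of accepting runs:} given an "accepting run" $\rr = (e_0^1,e_0^2)(e_1^1,e_1^2)\dots$ in $\TS\compositionAut\A$ (ignoring uncoloured edges in the notation), its second components form a "run over" the word $w = \gg_\TS(\ppRuns(\rr))$ in $\A$, and $\rr$ accepting means this run is accepting in $\A$, whence $w\in\Lang\A = \WW$; thus $\ppRuns(\rr)$ is accepting in $\TS$.

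\textbf{The three refinements.} For (1), if $\A$ is "deterministic" then $I_\A = \{q_0\}$ is a singleton and each $\transAut{}(q,a)$ has at most one element; combined with "completeness" this makes the restriction of $\pp_E$ to $\mout(v,q)$ a bijection onto $\mout(v)$ and $\pp_V$ a bijection of initial sets, so $\pp$ is "locally bijective". For (2), let $(r_0^\A, r^\A)$ be a "sound resolver@@aut" for $\A$; define a "resolver simulating $\pp$" by $\rInit(v_0') = (v_0', r_0^\A)$ and, given a finite run $\rr$ in $\TS\compositionAut\A$ ending at $(v,q)$ together with an edge $e_1'\in\mout(v)$: if $e_1'$ is uncoloured return $e_1'$, otherwise let $a = \gg_\TS(e_1')$, feed the sequence of second-coordinate transitions of $\rr$ together with $a$ into $r^\A$ to obtain $e_2' = q\re{a:c}q'$, and return $(e_1',e_2')$. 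Conditions \ref{item-HD-map:map-initial-vertex}--\ref{item-HD-map:building-run} of a "resolver simulating $\pp$" are immediate; soundness follows because for an "accepting run" $\rr'$ in $\TS$, the induced run $\rRuns(\rr')$ has second coordinates forming exactly the "run induced by@@aut" $r^\A$ over $\gg_\TS(\rr')\in\WW = \Lang\A$, which is accepting since the resolver is "sound@@aut". Together with preservation of accepting runs already shown, $\pp$ is an "HD mapping". For (3), when $\TS$ is a "game" "suitable for transformations", we must check the resolver just built is "sound for $\G$", i.e. every run "consistent with@@resolver" $(\rInit, r)$ over an accepting $\rr'$ is accepting — but suitability for transformations ensures that out of every "Adam-vertex" there is a single uncoloured outgoing edge leading to an "Eve-vertex", so the only real choices (the automaton transitions on coloured edges) are always made at "Eve-vertices", hence any consistent run coincides with $\rRuns(\rr')$ on its relevant components and is accepting. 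Thus $\pp$ is "HD-for-games".

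\textbf{Main obstacle.} The constructions are routine; the point requiring genuine care is item (3), namely verifying that "suitability for transformations" is exactly the hypothesis that guarantees all nondeterministic automaton choices occur at "Eve-vertices", so that the "resolver simulating $\pp$" built from $\A$'s resolver is not merely "sound@@morph" but "sound for $\G$". One must check that an edge into an "Eve-vertex" in $\TS\compositionAut\A$ projects correctly and that edges out of "Adam-vertices" in the product — which by suitability are uncoloured and uniquely determined — leave no freedom to Duplicator, so the "consistent" runs are forced.
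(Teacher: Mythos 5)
Your construction is the same as the paper's: projection onto the first coordinate, lifts of coloured edges obtained from completeness, and a resolver simulating $\pp$ built by feeding the second-coordinate run and the produced letter into the automaton's sound resolver; items (2) and (3) are argued essentially as in the paper. One phrasing in (3) is off: suitability for transformations does not say an Adam-vertex has a single uncoloured outgoing edge (Adam may have many moves), but that \emph{every} outgoing edge of an Adam-vertex is uncoloured, so each projected edge has a unique lift in the product and consistency leaves Duplicator no freedom at Adam-vertices; your closing paragraph does invoke this correct form, so the argument stands.

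The one genuine gap is in item (1). The statement asks for a locally bijective \emph{morphism}, and in this paper a morphism must preserve the acceptance status of runs in both directions, not merely send accepting runs to accepting runs. You verify local bijectivity and the preservation of accepting runs, but never that rejecting runs of $\TS \compositionAut \A$ project to rejecting runs of $\TS$. This is precisely the step where determinism is essential (and where it fails for general history-deterministic automata, which is why the weaker notion of HD mapping exists): if $\rr$ is rejecting in the product, its second components form a rejecting run of $\A$ over the word $w$ produced by $\ppRuns(\rr)$, and since $\A$ is deterministic this is the \emph{only} run of $\A$ over $w$, hence $w\notin\WW$ and $\ppRuns(\rr)$ is rejecting in $\TS$. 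The paper proves this explicitly; without it, your item (1) only delivers a locally bijective weak morphism preserving accepting runs, which is weaker than what is claimed.
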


\begin{proof}
	We recall that the set of states of $\TS \compositionAut \A$ is $V\times Q$ and its set of transitions $\edgesProduct$ is a subset of $(E\times \DD) \sqcup E$, where $V$ and $Q$ (resp. $E$ and $\DD$) are the states (resp. transitions) of $\TS$ and $\A$, respectively.
	We let $\WW_\A\subseteq \GG^\oo$ be the "acceptance set" of $\A$.
	We define $\pp_V(v,q) = v$ and $\pp_E(e_1,e_2) = e_1$ for $(e_1,e_2)\in E\times \DD$ and $\pp_E(e_1) = e_1$ for $e_1\in E$.
	It is immediate to check that $\pp$ is a "weak morphism". 
	
	Given a "run" $\rr = (v_0,q_0) \re{c_0}  (v_1,q_1) \re{c_1} \dots$ in $\TS \compositionAut \A$, we can consider its projection over $\TS$, $\ppRuns(\rr) = v_0 \re{a_0}  v_1 \re{a_1} \dots$. We note that there must exist a unique "run" in $\A$ of the form 
	\[ \pp_\A(\rr) = q_0 \re{a_0:c_0}  q_1 \re{a_1:c_1} \dots .\]
	(Formally, some letters $a_i$ might equal $\ee$, and in this case $q_{i} \re{a_i:c_i}  q_{i+1}$ does not appear in the "run" $\pp_\A(\rr)$). 
	
	We show that $\pp$ "preserves accepting runs". Let $\rr$ be an "accepting run" in $\TS\compositionAut \A$. In that case, $c_0c_1c_2\dots \in \WW_\A$, and therefore $\pp_\A(\rr)$ is an "accepting run" in $\A$ over $a_0a_1a_2\dots$, so we conclude that $a_0a_1a_2\dots\in \WW$ and $\ppRuns(\rr)$ is an "accepting run" in $\TS$.
	
	We prove next the "local surjectivity" of $\pp$. Clearly, $\pp$ induces a surjection between the "initial vertices" of $\TS \compositionAut \A$ (which are $I_\TS \times I_\A$) and those of $\TS$. 
	Let $(v,q)\in V\times Q$ and $e_1 = v\re{a}v'\in E$. If $a=\ee$, the edge $e_1$ belongs to $\edgesProduct$ and $\pp(e_1)=e_1$. If $a\neq \ee$, since $\A$ is "complete" there is a transition $e_2 = q\re{a}q'\in \DD$ and $\pp(e_1,e_2) = e_1$, so $\pp$ is "locally surjective".
	
	\begin{enumerate}
		
		\item Since $\A$ has a single "initial state" $q_0$, $\pp$ induces a bijection between the "initial vertices" of $\TS \compositionAut \A$ (which are $I_\TS \times \{q_0\}$) and those of $\TS$.
		Let $\edgesProduct_\mathrm{col}\subseteq E\times \DD$ and $\edgesProduct_\ee\subseteq E$ such that $\edgesProduct = \edgesProduct_\mathrm{col} \cup \edgesProduct_\ee$. We remark that $\restr{\pp}{\edgesProduct_\ee}$ is the identity function (so injective) and that $\pp(\edgesProduct_\mathrm{col}) \cap \pp(\edgesProduct_\ee) = \emptyset$ because $\pp(\edgesProduct_\mathrm{col})$ are exactly "coloured" transitions of $\TS$. Finally, let $(e_1, e_2)$ and $(e_1', e_2')$ in $\mout(v,q) \cap \edgesProduct_\mathrm{col}$. Their $\pp(e_1, e_2) = \pp(e_1', e_2')$ if and only if $e_1 = e_1'$. Let $a\in \SS$ be the colour of $e_1$. Since $\A$ is "deterministic", there is at most one transition from $q$ labelled by $a$, that must be $e_2=e_2'$. We conclude that $(e_1, e_2)=(e_1', e_2')$ and that $\pp$ is "locally injective".
		
		Let $\rr$ be a "rejecting run" in $\TS\compositionAut \A$ (we use the notations introduced above). In that case, $c_0c_1c_2\dots \notin \WW_\A$, and therefore $\pp_\A(\rr)$ is a "rejecting run" over $a_0a_1a_2\dots$. Since $\A$ is "deterministic", this is the only "run over" $a_0a_1a_2\dots$, so we conclude that it does not belong to $\WW.$ We conclude that $\ppRuns(\rr)$ is a "rejecting run" in $\TS$.
		
		\item Let $(r_0, r_\A)$ be a "resolver" for $\A$. We define a "resolver@@HDmapping" $(\rInit, r_\pp)$ "simulating" $\pp$. First, we let $\rInit(v_0) = (v_0,r_0)$ for all $v_0\in I_\TS$. We define $r_\pp\colon {\edgesProduct}^*\times E \to \edgesProduct$ by induction on the length of the "runs". 
		Let $e_0 = v_0\re{a}v_1\in \mout(v_0)$ be an edge in $\TS$. If $e_0$ is "uncoloured" ($a=\ee$), we let $r_\pp(\ee, e_0) = e_0 = (v_0,r_0)\re{\ee}(v_1,r_0)$. If not, we let $r_\pp(\ee, e_0) = (e_0, e_a)$, where $e_a = r(\ee,a)$. 
		Assume that $r_\pp$ has been defined for sequences of edges of $\TS \compositionAut \A$ of length $< n$ and let  $\rr =e_0e_1\dots e_{n-1}\in {\edgesProduct}^*$ be a sequence length $n+1$ and $e_\TS = v_n\re{a_n} v_{n+1}$ be an edge in $\TS$.
		If $\rr$ is not a "run" or if it does not end in $\inv{\pp}(v_n)$, we let $r_\pp(\rr, e_\TS)$ be any edge in $\inv{\pp}(e_\TS)$.
		Assume that $\rr$ is a "run" ending in $\inv{\pp}(v_n)$. If $a_n = \ee$, we define $r_\pp(\rr, e_\TS) = e_\TS$. As noted before, $\rr$ induces a "run" $\pp_\A(\rr) = q_0 \re{a_0:c_0}  q_1 \re{a_1:c_1} \dots \re{}q_n$ in $\A$. We let $e_\A = r_\A(\pp_\A(\rr), a_n)$ be the transition chosen by the "resolver" of $\A$ after this "run",  and we define $r_\pp(\rr, e_\TS) = (e_\TS, e_\A)$.
		
		It directly follows from this definition that $(\rInit, r_\pp)$ is indeed a "resolver". The proof that if $\rr\in \Runs{\TS}$ is an "accepting run" then $\rRunsOption{\pp}(\rr)$ is "accepting" follows the same lines as the previous item.
		
		\item We prove that, if $\TS$ is a "game" "suitable for transformations", the "resolver@@HDmapping" $(\rInit, r_\pp)$ defined in the previous item is "sound for $\TS$". We claim that if $\rr$ is a "run" in $\TS$, the only "run" "consistent with@@resolver" $(\rInit, r_\pp)$ over $\rr$ is $\rRunsOption{\pp}(\rr)$. This follows from the fact that if $(v,q)$ is a vertex in $\TS \compositionAut\A$ "controlled by Adam" and $e\in \mout(v)$, then there is a unique $e'\in \mout(v,q)$ such that $\pp(e') = e$. This is indeed the case: as $\TS$ is "suitable for transformations", if $v$ is an "Adam's vertex", every $e\in \mout(v)$ is "uncoloured", so by definition of $\pp$ we have that $\pp(e') = e \implies e'=e$. (This can be seen as that $\pp$ is locally bijective in Adam's vertices).
		We conclude that if $\rr$ is an "accepting run" in $\TS$ and $\rr^\ltimes$ is a "run" "consistent with@@resolver" $(\rInit, r_\pp)$ over $\rr$, then $\rr^\ltimes = \rRunsOption{\pp}(\rr)$, which is "accepting" by "soundness@@morph" of the "resolver@@mapping" $(\rInit, r_\pp)$.\qedhere
	\end{enumerate}	
\end{proof}

\paragraph*{Morphisms witness equivalence of automata.}
\begin{proposition}\label{prop-morph:HD mappings-preserve-languages}
	Let $\A$, $\A'$ be two "automata" over the same "input alphabet" such that there is an "HD mapping" of "automata@@mapping" $\pp\colon \A \to \A'$. Then, $\Lang{\A}=\Lang{\A'}$, and $\A$ is "HD@@aut" if and only if $\A'$ is "HD@@aut".
	If $\pp$ is moreover "locally bijective" and "surjective", $\A$ is "deterministic" (resp. "complete") if and only if $\A'$ is.
\end{proposition}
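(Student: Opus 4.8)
The plan is to transport runs back and forth between $\A$ and $\A'$ using two maps: the map $\ppRuns$ induced by the underlying "weak morphism" $\pp$, and a right inverse $\rRuns$ obtained from a sound resolver simulating $\pp$ (which exists because $\pp$ is an "HD mapping"). Both maps send "initial states" to "initial states" and preserve the "input letters" labelling each transition, since $\pp$ is a morphism of labelled graphs; moreover $\ppRuns$ turns "accepting runs" into "accepting runs" by definition of an "HD mapping", and by Lemma~\ref{lemma-morph:facts_HD_morphisms} the map $\rRuns$ preserves acceptance of runs in both directions. With this in place, the equality $\Lang{\A}=\Lang{\A'}$ is immediate: if $\rr$ is an "accepting run" over $w$ in $\A$ then $\ppRuns(\rr)$ is an "accepting run" over $w$ in $\A'$, and if $\rr'$ is an "accepting run" over $w$ in $\A'$ then $\rRuns(\rr')$ is an "accepting run" over $w$ in $\A$ (here using Lemma~\ref{lemma-morph:facts_HD_morphisms}).

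For the preservation of "history-determinism" the idea is to compose resolvers. If $\A'$ is "history-deterministic" with sound resolver $(r_0', r_{\A'})$, I would define a resolver for $\A$ with initial state $\rInit(r_0')$ which, on a finite run $\tau$ of $\A$ and a letter $a$, returns $r_\pp\big(\tau,\, r_{\A'}(\pp_E(\tau),a)\big)$, with $\pp_E$ applied componentwise and $r_\pp$ the second component of the resolver simulating $\pp$. An induction on the length of $\tau$ shows that, on input $w$, this resolver builds precisely $\rRuns(\rr')$ where $\rr'$ is the run that $(r_0',r_{\A'})$ builds over $w$; when $w\in\Lang{\A}=\Lang{\A'}$ this $\rr'$ is accepting, hence so is $\rRuns(\rr')$ by Lemma~\ref{lemma-morph:facts_HD_morphisms}. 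Conversely, if $\A$ is "history-deterministic" with sound resolver $(r_0, r_{\A})$, I would build a resolver for $\A'$ with initial state $\pp_V(r_0)$ that, given a finite $\A'$-run prefix, recovers the finite word $u$ read off its "input letters", replays $r_{\A}$ on $u$ to recover the corresponding finite $\A$-run, and projects its last transition through $\pp_E$; on input $w$ this builds $\ppRuns(\rr_w)$, where $\rr_w$ is the run of $\A$ produced by $(r_0,r_{\A})$, and this is accepting whenever $w\in\Lang{\A'}=\Lang{\A}$ because $\pp$ preserves "accepting runs".

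Finally, suppose $\pp$ is "locally bijective" and "surjective". Then $\pp$ restricts to a bijection between $I$ and $I'$, and for every vertex $v$ to a bijection between $\mout(v)$ and $\mout(\pp(v))$, all compatible with "input letters"; moreover every vertex of $\A'$ lies in the image of $\pp$. Consequently, for every state $q$ and letter $a$ the set of $a$-labelled edges out of $q$ is in bijection with the set of $a$-labelled edges out of $\pp(q)$, and $|I|=|I'|$. Reading these equalities in the appropriate direction — using "locally surjective" together with "surjective" to push properties of $\A$ onto $\A'$, and "locally injective" to push properties of $\A'$ onto $\A$ — shows that $I$ is a singleton (resp. every state has at most / at least one $a$-transition) for $\A$ if and only if the same holds for $\A'$, i.e.\ $\A$ is "deterministic" (resp.\ "complete") if and only if $\A'$ is.

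I expect the main obstacle to be the implication from $\A$ being "history-deterministic" to $\A'$ being "history-deterministic": the resolver formalism lets a resolver use only the finite run constructed so far as memory, so one cannot directly run the resolver of $\A$ ``underneath'' a run of $\A'$; instead one must observe that the "input letters" read along the $\A'$-run determine the word, which in turn determines the run that the fixed resolver of $\A$ produces, and replay it — a routine but slightly delicate bookkeeping argument. The remaining directions are essentially formal consequences of Lemma~\ref{lemma-morph:facts_HD_morphisms} and the defining properties of morphisms.
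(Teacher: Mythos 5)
Your proof is correct and follows essentially the same route as the paper: one inclusion of $\Lang{\A}=\Lang{\A'}$ from preservation of accepting runs and the other from the sound resolver simulating $\pp$ (via Lemma~\ref{lemma-morph:facts_HD_morphisms}), history-determinism transferred by composing resolvers with $\pp$ and $r_\pp$, and the determinism/completeness claim via the letter-preserving bijections between $I$ and $I'$ and between $\mout(v)$ and $\mout(\pp(v))$, read in the right direction using surjectivity. The only deviation is in the direction $\A$ HD $\Rightarrow$ $\A'$ HD, where you replay the resolver of $\A$ on the word spelled by the $\A'$-run and project through $\pp$, whereas the paper pulls the $\A'$-run back through $r_\pp$ before consulting the resolver of $\A$; both implement the same idea, and your variant is if anything the easier one to verify sound.
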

\begin{proof}
	Since $\pp$ "preserves accepting runs", it is clear that $\Lang{\A}\subseteq \Lang{\A'}$. Since $\pp$ admits a "sound resolver@@morph" $(\rInit, r)$, if $\rr$ is an "accepting@@run" "run over" $w\in \SS^\oo$ in $\A'$, then $\rRuns(\rr)$ is an "accepting@@run" "run over" $w$ in $\A$, so $\Lang{\A'}\subseteq \Lang{\A}$.
	
	Let $(\rInit, r_\pp)$ be a "sound@@morph" "resolver simulating" $\pp$.
	Assume that $\A$ is "HD@@aut", admitting a "resolver" $(r_0,r)$. A "resolver" $(r_0',r')$ for $\A'$ can be obtained just by composing $r_\pp$ and $\pp$, that is: $r_0' = \pp(r_0)$ and for $\rr'\in \RunsFin{\A'}$ and $a\in \SS$, $r'(\rr', a) = \pp(r(\rRunsOption{\pp}(\rr'),a)))$. That is, given a "run" $\rr'$ in $\A'$, we simulate it in $\A$ using $r_\pp$, then, we look at what is the continuation proposed by the resolver~$r$ when we give the letter $a$, and we transfer back this choice to $\A'$ using $\pp$. 
	Assume now that $\A'$ is "HD@@aut" and that $(r_0',r')$ is a "resolver" for it. We define a "resolver" $(r_0,r)$ for $\A$. We let $r_0 = \rInit(r_0')$, and for $\rr\in \RunsFin{\A}$ and $a\in \SS$, $r(\rr, a) = r_\pp((\ppRuns(\rr), r(\ppRuns(\rr),a))$. That is, given a "run" $\rr$ in $\A$, we simulate it in $\A'$ using $\pp$, then, we look at what is the continuation proposed by the resolver $r'$ when we give the letter $a$, and we transfer back this choice to $\A$ using $r_\pp$. 
	It is a direct check that the "resolvers" defined this way witness that $\A'$ and $\A$, respectively, are "HD@@aut".
	
	 The proof that $\A$ is "deterministic" (resp. "complete") if and only if $\A'$ is "deterministic" (resp. "complete"), assuming "surjectivity" and "local bijectivity" of $\pp$, follows the same lines.
\end{proof}

A subclass of "automata" with a restrictive amount of "non-determinism" that is widely study is that of \emph{unambiguous} automata (we refer to~\cite{Colcombet15Unambiguity, Carton2003UnambiguousBuchi} for a detailed exposition). An automaton is \AP""unambiguous"" if for every input word $w\in \SS^\oo$ there is at most one accepting "run over" $w$, and it is ""strongly unambiguous"" if there is at most one "run over" $w$. By Remark~\ref{rmk-morph:loc-morph-bijection-runs}, "locally bijective morphisms" also preserve "(strongly) unambiguity": if $\pp\colon \A \to \A'$ is a "locally bijective" "morphism@@aut" then $\A$ is ("strongly@@unamb") "unambiguous" if and only if  $\A'$ is.

\paragraph*{Morphisms preserve winning regions of games.}
\begin{lemma}\label{lemma-morph:surj-mappings-preserve-games}
	Let $\G, \G'$ be two "games", such that there is a "weak morphism" "of games@@morphism" $\pp\colon \G \to \G'$ that is "locally surjective" and "preserves accepting runs". If "Eve" "wins" the "game" $\G$ from an initial vertex $v$, then she "wins" $\G'$ from $\pp(v)$.
\end{lemma}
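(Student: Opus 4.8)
The plan is to lift a winning strategy for Eve from $\G$ to $\G'$ via a shadow simulation. Assume $\strat_v$ is a winning strategy for Eve in $\G$ from $v$; I will describe a strategy $\strat'$ for Eve in $\G'$ from $\pp(v)$ such that along every play consistent with $\strat'$ one can maintain a ``shadow'' play in $\G$ from $v$ that is consistent with $\strat_v$ and whose image under $\ppRuns$ is exactly the play being built in $\G'$. Since $\strat_v$ is winning, the shadow play will be accepting, and since $\pp$ preserves accepting runs, its image in $\G'$ will be accepting too; hence $\strat'$ is winning and Eve wins $\G'$ from $\pp(v)$.

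First I would record two preliminary facts. A "weak morphism of games" is a morphism of labelled pointed graphs, hence preserves the vertex-labelling $\lPlayers$, so $\pp$ maps Eve-vertices to Eve-vertices and Adam-vertices to Adam-vertices; this is what lets the simulation respect turn structure. Second, by local surjectivity, for each vertex $w$ of $\G$ and each edge $e'\in\mout(\pp(w))$ I can fix in advance an edge $\mathrm{lift}(w,e')\in\mout(w)$ with $\pp(\mathrm{lift}(w,e'))=e'$; fixing this choice once and for all is what makes the shadow play a well-defined function of the $\G'$-play rather than of extra choices.

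The main step is the inductive construction of $\strat'$ together with the shadow play. Starting from the empty play at $\pp(v)$, whose shadow is the empty play at $v$, I extend as follows: if the current shadow play in $\G$ ends at a vertex $w$ (so the current $\G'$-play ends at $\pp(w)$) owned by Eve, I let $e=\strat_v(\text{shadow})$ and declare $\strat'$ to play $\pp(e)$, extending the shadow by $e$; if $w$ is owned by Adam and Adam plays $e'$ in $\G'$, I extend the shadow by $\mathrm{lift}(w,e')$. By construction the shadow play remains a legal play in $\G$ from $v$ consistent with $\strat_v$, is mapped by $\ppRuns$ onto the $\G'$-play, and ends in the $\pp$-preimage of the current $\G'$-vertex, so the induction goes through; passing to infinite plays finishes the argument as sketched above.

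I do not expect a genuine obstacle: this is the ``easy'' direction, needing only local surjectivity and preservation of accepting runs. The only points requiring care are (i) noting that $\pp$ transports vertex ownership, so Adam's choices in $\G'$ can be matched by shadow moves in $\G$, and (ii) presenting $\strat'$ as a bona fide function of the finite play in $\G'$ — the shadow play being reconstructed deterministically from that play using the pre-fixed $\mathrm{lift}$ and $\strat_v$, with $\strat'$ defined arbitrarily on plays not consistent with it. The converse implication, that Eve winning $\G'$ forces her to win $\G$, is the genuinely harder one and is exactly where the ``sound for $\G$'' resolver of an HD-for-games mapping will be needed; but that is not what this lemma asks for.
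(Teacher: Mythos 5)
Your proof is correct and takes essentially the same route as the paper: the paper also transfers $\strat_v$ to $\G'$ by maintaining a coherently chosen lifted play (a map $\choicestrat$ from finite plays of $\G'$ to finite plays of $\G$, built by lifting Adam's moves via "local surjectivity" and following $\strat_v$ at Eve's vertices, with an arbitrary lift on histories that have broken consistency), and then concludes from the fact that $\pp$ "preserves accepting runs". Your explicit points — that a "morphism of games" preserves the Eve/Adam labelling and that the strategy must be a function of the $\G'$-play alone, reconstructed deterministically from the fixed lifts — are exactly the points the paper handles via the monotonicity and consistency conditions imposed on $\choicestrat$.
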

\begin{proof}
	Let $v' = \pp(v)$, and let $\strat_v\colon \PathSetFin{\G}{v}\to E$ be a "strategy" from $v$ for Eve in $G$. 
	Intuitively, we will define a "strategy" in $\G'$ as follows: for each finite run $\rr'$ from $v'$ in $\G'$, we pick a preimage $\rr\in \inv{\pp}(\rr')$ in $\G$, look at the decision made by $\strat_v$ at the end of $\rr$ and transfer it back to $\G'$ via $\pp$. In order to define a correct "strategy", the choices of the preimages have to be made in a coherent manner. We formalise this idea next.
	
	We will make use of a function $\choicestrat \colon \PathSetFin{\G'}{v'} \to \PathSetFin{\G}{v}$ satisfying that for any $\rr' = e_0'e_1'\dots e_{n-1}'e_n'\in \PathSetFin{\G'}{v'}$:
	\begin{itemize}
		\item The run $\choicestrat(\rr')$ has length $n+1$.
		\item  $\ppRuns(\choicestrat(\rr')) = \rr'$.
		\item Monotonicity: if $\tilde{\rr}' \prefix \rr'$ then $\choicestrat(\tilde{\rr}') \prefix \choicestrat(\rr')$.
		\item If there exists $e_n\in \inv{\pp}(e_n')$ such that $\choicestrat(e_0'e_1'\dots e_{n-1}')e_n$ is "consistent with@@strat" $\strat_v$, then $\choicestrat(\rr')$ is "consistent with@@strat" $\strat_v$.
	\end{itemize}

Assume for now that such a function exists, and define a "strategy" in $\G'$ as \[\strat'_{v'}(\rr') = \pp(\strat_v(\choicestrat(\rr'))), \;\text{ for } \rr'\in \PathSetFin{\G'}{v'}.\] 
We prove that $\strat'_{v'}$ is "winning@@strat".
 Let $\rr' =e_0'e_1'\dots\in \Runs{\G'}$ be an infinite "play" "consistent with@@strat" $\strat'_{v'}$. For each finite prefix $\tilde{\rr}'\prefix \rr'$, $\choicestrat(\tilde{\rr}')$ is a finite play in $\G$, and by the monotonicity assumption, we can define the limit of these runs as:
\[ \vec{\rr} = e_0e_1e_2\dots \in \Runs{\G}, \;\text{ where } e_0e_1\dots e_n = \choicestrat(e_0'e_1'\dots e_n'),  \]
which is indeed a "run" in $\G$. We show that $\vec{\rr}$ is "consistent with" $\strat_v$ by induction. Let $\rr_n = e_0e_1\dots e_{n-1}$ be the "prefix" of size $n$ of $\vec{\rr}$, and suppose that it ends in a vertex $v_n$ controlled by "Eve". We want to show that $e_n = \strat_v(\rr_n)$. By definition of $\strat'_{v'}$, $e_n' = \pp(\strat_v(\rr_n)) = \pp(e_n)$, and as $v_n$ is controlled by "Eve", $\strat_v(\rr_n)$ is the only continuation of $\rr_n$ "consistent with" $\strat_v$, so by the last property of $\choicestrat$, $e_n$ has to coincide with $\strat_v(\rr_n)$, as we wanted. As $\vec{\rr}$ is "consistent with" the "winning strategy" $\strat_v$, it is an "accepting run" in $\G$, and since $\pp$ "preserves accepting runs", $\rr' = \ppRuns(\vec{\rr})$ is also an "accepting run". 

Finally, we show how to build a function $\choicestrat \colon \PathSetFin{\G'}{v'} \to \PathSetFin{\G}{v}$ by induction on the length of the runs. Assume that $\choicestrat$ has been defined for "runs" of length $\leq n$, and let $e_0'e_1'\dots e_n'$ be a run of length $n+1$, with $\choicestrat(e_0'e_1'\dots e_{n-1}') = e_0e_1\dots e_{n-1}$. If $e_0e_1\dots e_{n-1}$ is not "consistent with" $\strat_v$, it ends in a vertex $v_n$ controlled by "Adam", or $\strat_v(e_0e_1\dots e_{n-1})\notin \inv{\pp}(e_n)$,  we let $e_n\in \inv{\pp}(e_n)\cap \mout(v_n)$ be any edge (one such edge exists by "local surjectivity"). 
On the contrary, we let $e_n = \strat_v(e_0e_1\dots e_{n-1})$.
We define $\choicestrat(e_0'e_1'\dots e_{n-1}'e_n') = e_0e_1\dots e_{n-1}e_n$. 
By construction, the obtained function fulfils the 4 requirements.
\end{proof}

\begin{proposition}\label{prop-morph:HD mappings-preserve-games}
	Let $\G, \G'$ be two "games" such that there is an "HD-for-games mapping" $\pp\colon \G \to \G'$. "Eve's" "winning region" in $\G'$ is the projection of her "winning region" in $\G$: $\winRegion{\Eve}{\G'} = \pp(\winRegion{\Eve}{\G})$.	
\end{proposition}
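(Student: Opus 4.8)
The assertion amounts to the two inclusions $\pp(\winRegion{\Eve}{\G}) \subseteq \winRegion{\Eve}{\G'}$ and $\winRegion{\Eve}{\G'} \subseteq \pp(\winRegion{\Eve}{\G})$, and the first one is essentially already available. Indeed, an "HD-for-games mapping" is in particular a "locally surjective" "weak morphism" between the "games" $\G$ and $\G'$ (Remark~\ref{rmk-morph: HD-is-loc.surjective}) that "preserves accepting runs", so Lemma~\ref{lemma-morph:surj-mappings-preserve-games} applies and gives that "Eve" "wins" $\G'$ from $\pp(v)$ whenever she "wins" $\G$ from $v$; combined with $\winRegion{\Eve}{\G}\subseteq I$ and the fact that $\pp$ maps "initial vertices" to "initial vertices", this yields the first inclusion.

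For the converse, the plan is to transfer a "winning strategy" from $\G'$ to $\G$ with the help of the "resolver". Fix $v'\in \winRegion{\Eve}{\G'}$, a "winning strategy" $\strat'_{v'}$ for "Eve" in $\G'$, and a "resolver" "sound for $\G$" $(\rInit, r)$ (which exists as $\pp$ is "HD-for-games"). I would set $v := \rInit(v')$; it is an "initial vertex" of $\G$ with $\pp(v) = v'$ by condition~\ref{item-HD-map:map-initial-vertex} in the definition of a "resolver simulating $\pp$". Then I define a strategy $\strat_v$ for "Eve" in $\G$ by maintaining a \emph{shadow play} in $\G'$: for a "finite play" $\rr$ from $v$ in $\G$ ending in a vertex $u$ controlled by "Eve", its image $\ppRuns(\rr)$ is a "finite play" in $\G'$ from $v'$ ending in $\pp(u)$, which is again controlled by "Eve" since an "HD-for-games mapping" preserves the partition of the vertices between "Eve" and "Adam"; hence $e' := \strat'_{v'}(\ppRuns(\rr))$ is defined, and I put $\strat_v(\rr) := r(\rr, e')$. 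This is a legal move of $\G$ out of $u$ by condition~\ref{item-HD-map:building-run}, and $\pp(\strat_v(\rr)) = e'$ by condition~\ref{item-HD-map:map-edges}.

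It then remains to prove that $\strat_v$ is "winning@@strat". Take an infinite "play" $\rr = e_0e_1\dots$ from $v$ "consistent with@@strat" $\strat_v$ and set $\rr' := \ppRuns(\rr)$, a "run" of $\G'$ from $v'$. I would first verify that $\rr'$ is "consistent with@@strat" $\strat'_{v'}$: whenever a finite prefix of $\rr'$ ends in a vertex controlled by "Eve", so does the matching prefix $\rr''$ of $\rr$ (again by preservation of the vertex partition), so by construction the next edge of $\rr$ is $r(\rr'', \strat'_{v'}(\ppRuns(\rr'')))$, and applying $\pp$ together with condition~\ref{item-HD-map:map-edges} shows that the next edge of $\rr'$ equals $\strat'_{v'}(\ppRuns(\rr''))$; since $\strat'_{v'}$ is "winning@@strat", $\rr'$ is an "accepting run" in $\G'$. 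Next I would verify that $\rr$ is "consistent with@@resolver" $(\rInit, r)$ over $\rr'$: its three defining conditions hold because $\msource(e_0) = v = \rInit(v')$, because $\pp(e_i) = e_i'$ by definition of $\ppRuns$, and because of the computation just made at "Eve"'s positions. Finally, since $(\rInit, r)$ is "sound for $\G$" and $\rr'$ is an "accepting run", $\rr$ is an "accepting run" in $\G$. Thus $v\in \winRegion{\Eve}{\G}$ and $v' = \pp(v)\in \pp(\winRegion{\Eve}{\G})$, completing the argument.

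The main difficulty I anticipate is organizational rather than conceptual: one has to keep straight the three distinct "consistency" relations in play — of $\rr$ with $\strat_v$ in $\G$, of $\ppRuns(\rr)$ with $\strat'_{v'}$ in $\G'$, and of $\rr$ with the "resolver" over $\ppRuns(\rr)$ — and invoke repeatedly that a "weak morphism" of "games" respects the vertex partition, so that each player's turns in $\G$ are mirrored by the same player's turns in $\G'$. It is also worth stressing that the choice $v = \rInit(v')$ is forced rather than a matter of taste: it is precisely what the first condition of "consistent with@@resolver", and hence of being "sound for $\G$", demands.
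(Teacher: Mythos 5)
Your proof is correct and follows essentially the same route as the paper's: the easy inclusion via Lemma~\ref{lemma-morph:surj-mappings-preserve-games}, and for the converse the strategy $\strat_v(\rr) = r(\rr, \strat'_{v'}(\ppRuns(\rr)))$ starting from $v=\rInit(v')$, checked winning by showing that $\ppRuns(\rr)$ is "consistent with@@strat" $\strat'_{v'}$ and that $\rr$ is "consistent with@@resolver" the "resolver" over $\ppRuns(\rr)$, then invoking soundness. Your additional bookkeeping (preservation of the vertex partition, the three consistency relations) matches the implicit steps of the paper's argument.
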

\begin{proof}
	If "Eve" "wins" $\G$ from an "initial vertex" $v$, Lemma~\ref{lemma-morph:surj-mappings-preserve-games} guarantees that she "wins" $\G'$ from $\pp(v)$.
	
	Assume now that "Eve" "wins" $\G'$ from an "initial vertex" $v'$ with a "strategy" $\strat'_{v'}\colon \PathSetFin{\G'}{v'}\to E'$. We need to show that she "wins" $\G$ from some "initial vertex" in $\inv{\pp}(v')$. Let $(\rInit,r)$ be a "resolver simulating" $\pp$ "sound for $\G$" and let $v=\rInit(v')$.  We define
	\[\strat_{v}(\rr) = r(\rr, \strat'_{v'}(\ppRuns(\rr)), \; \text{ for } \rr\in \PathSetFin{\G}{v}.\]
	 That  is, $\strat_{v}$ is a "strategy" in $\G$ from $v$ that, given a finite run $\rr$, simulates $\rr$ in $\G'$, looks at the move done by the "strategy" $\strat'_{v'}$ in there, and transfers this choice back to $\G'$ by using the "resolver@@morphism" $r$. We prove that $\strat_{v}$ is "winning for Eve" in $\G$. Let $\rr = e_0e_1\dots\in \PathSet{\G}{v}$ be a "play" consistent with $\strat_{v}$. We claim that $\pp(\rr)$  is "consistent with@@strat" $\strat'_{v'}$ and that $\rr$ is "consistent with@@resolver" $(\rInit,r)$ over $\pp(\rr)$. This implies the desired result; "consistency with@@strat" $\strat'_{v'}$ implies that $\pp(\rr)$ is "accepting", and since $(\rInit,r)$ is "sound for $\G$", $\rr$ would be "accepting@@run" in $\G$.
	
	We prove that $\pp(\rr)$ is "consistent with@@strat" $\strat'_{v'}$. Let $e_0'e_1'\dots e_{n-1}'$ be a subplay of $\pp(\rr)$ ending in a vertex $v_n$ "controlled by Eve". By definition of the strategy $\strat_{v}$, we have that $e_n = r(e_0\dots e_{n-1}, \strat'_{v'}(e_0'\dots e_{n-1}'))$, and by definition of a "resolver@@mapping" (item~\ref{item-HD-map:map-edges}), we obtain that $e_n' = \pp(e_n) = \strat'_{v'}(e_0'\dots e_{n-1}'))$, as we wanted.
	
	The fact that $\rr$ is "consistent with@@resolver" $(\rInit,r)$ over $\pp(\rr)$ follows directly from the definition of~$\strat_{v}$.
\end{proof}

The next corollary follows from the previous proposition and Lemma~\ref{lemma-morph:HD mappings-with-prefix-independent-conditions}.
\begin{corollary}\label{cor-morph:HD-map-preserve-full-WR}
	Let $\G, \G'$ be two "games" whose states are "accessible" and such that their "acceptance sets" $\WW_\G$ and $\WW_{\G'}$ are "prefix-independent". If there is an "HD-for-games mapping" $\pp\colon \G \to \G'$, then "Eve's" "full winning region" in $\G'$ is the projection of her "full winning region" in $\G$: $\winRegion{\Eve}{\initialTS{\G'}{V'}} = \pp(\winRegion{\Eve}{\initialTS{\G}{V}})$.	
\end{corollary}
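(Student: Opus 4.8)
The plan is to deduce the statement directly from Proposition~\ref{prop-morph:HD mappings-preserve-games} and Lemma~\ref{lemma-morph:HD mappings-with-prefix-independent-conditions}, as announced. First I would recall that, by definition, the "full winning region" of $\G$ for "Eve" is $\winRegion{\Eve}{\initialTS{\G}{V}}$, that is, her ordinary "winning region" in the "game" $\initialTS{\G}{V}$ obtained from $\G$ by declaring every vertex "initial" (and similarly for $\G'$, with $\initialTS{\G'}{V'}$). Hence the claimed equality $\winRegion{\Eve}{\initialTS{\G'}{V'}} = \pp(\winRegion{\Eve}{\initialTS{\G}{V}})$ is nothing but the conclusion of Proposition~\ref{prop-morph:HD mappings-preserve-games}, instantiated on the pair of "games" $\initialTS{\G}{V}$ and $\initialTS{\G'}{V'}$ -- provided we can produce an "HD-for-games mapping" between them.

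Exhibiting that mapping is the only substantive step. By hypothesis, $\pp\colon\G\to\G'$ is an "HD-for-games mapping", all states of $\G$ and $\G'$ are "accessible", and both "acceptance sets" $\WW_\G$ and $\WW_{\G'}$ are "prefix-independent"; these are precisely the hypotheses of Lemma~\ref{lemma-morph:HD mappings-with-prefix-independent-conditions} (in its "HD-for-games" variant), which therefore guarantees that $\pp$ remains an "HD-for-games mapping" when regarded between $\initialTS{\G}{V}$ and $\initialTS{\G'}{V'}$. Here I would note two small bookkeeping points: (i) changing the set of "initial vertices" leaves the partition of the vertices into "Eve" and "Adam" vertices untouched, so $\initialTS{\G}{V}$ and $\initialTS{\G'}{V'}$ are genuine "games" and $\pp$ still respects their "vertex-labellings"; and (ii) it is precisely the "prefix-independence" of the acceptance sets that is doing the work, since -- as the text observes right before Lemma~\ref{lemma-morph:HD mappings-with-prefix-independent-conditions} -- enlarging the initial set does \emph{not} preserve "history-determinism@@mapping" of mappings in general, so the corollary genuinely rests on this lemma and not on a triviality.

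Feeding this into Proposition~\ref{prop-morph:HD mappings-preserve-games}, applied to $\pp\colon\initialTS{\G}{V}\to\initialTS{\G'}{V'}$, yields $\winRegion{\Eve}{\initialTS{\G'}{V'}} = \pp(\winRegion{\Eve}{\initialTS{\G}{V}})$, which is exactly the asserted identity between the "full winning regions". I do not anticipate any real obstacle: the argument is a two-line assembly of the two cited results, and the only point deserving care is to make sure Lemma~\ref{lemma-morph:HD mappings-with-prefix-independent-conditions} is invoked in its "HD-for-games" form and that all of its hypotheses -- accessibility of every state, and "prefix-independence" of \emph{both} $\WW_\G$ and $\WW_{\G'}$ -- are indeed supplied by the statement of the corollary.
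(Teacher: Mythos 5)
Your proposal is correct and matches the paper's own argument: the corollary is stated there as an immediate consequence of Lemma~\ref{lemma-morph:HD mappings-with-prefix-independent-conditions} (to keep $\pp$ "HD-for-games" after making all vertices initial) followed by Proposition~\ref{prop-morph:HD mappings-preserve-games}. Your bookkeeping remarks about the vertex-labellings and the role of prefix-independence are accurate but add nothing beyond the paper's two-step assembly.
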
 	
	
	\section{The Zielonka tree: An optimal approach to Muller languages}\label{section:zielonka-tree}
	In this section, we take a close look into the "Zielonka tree", a structure introduced (under the name of \emph{split trees}) to study "Muller languages"~\cite{Zielonka1998infinite}.
We show how to use the "Zielonka tree" to construct minimal "deterministic" "parity" "automata" and minimal "history-deterministic" "Rabin" "automata" "recognising" "Muller languages".
In Section~\ref{subsec-zt: parity automaton}, we describe the construction of a minimal "deterministic" "parity" "automaton" $\zielonkaAutomaton{\F}$ for a given "Muller language" $\Muller{\F}$. Theorem~\ref{thm-zt:strong_optimality_ZTparity}, the main contribution of this section, states the minimality of $\zielonkaAutomaton{\F}$ both amongst "deterministic" and "HD@@aut" "parity" "automata". Theorem~\ref{thm-zt:optimality_ZTparity-priorities} states the optimality on the number of "colours" of the "acceptance condition" of $\zielonkaAutomaton{\F}$, and implies that we can determine the "parity index" of a "Muller language" from its "Zielonka tree".
We will use the optimality of "automaton" $\zielonkaAutomaton{\F}$ to provide a polynomial-time algorithm minimising "DPAs" "recognising" "Muller languages" in Section~\ref{subsec-corollaries: minimisation-parity}.

In Section~\ref{subsec-zt: GFG-Rabin}, we describe the construction of a minimal "history-deterministic" "Rabin" "automaton" $\zielonkaHDAutomaton{\F}$ for a "Muller language" $\Muller{\F}$. Its minimality amongst "HD@@aut" "automata" is shown in Theorem~\ref{thm-zt:optimality_ZT-HD-Rabin}, by using the characterisation of the "memory requirements" of a "Muller language" in terms of its "Zielonka tree"~\cite{DJW1997memory}.

On the other hand, it has been shown that finding a minimal "deterministic" "Rabin" "automaton" "recognising" a given "Muller language" is $\NPcomplete$, if the language is represented by a "parity" or "Rabin" "automaton", or even by its "Zielonka tree"~\cite{Casares2021Chromatic}.
Therefore, unless $\PTime = \NP$, there are "Muller languages" for which minimal "deterministic" "Rabin" automata are strictly larger than minimal "HD@@aut" "Rabin" "automata". Some explicit such languages were shown in~\cite[Section~4]{CCL22SizeGFG}.
A summary of the minimal automata "recognising" "Muller languages" appears in Table~\ref{table:minimal-automata}.

\newcolumntype{M}[1]{>{\centering\arraybackslash}m{#1}} 
\newcolumntype{N}{@{}m{0pt}@{}}
\newcommand{\mycolwidth}{1.3cm}
\begin{table}[ht]
	\centering
	\begin{tabular}{|M{3cm}||M{3cm}| M{3cm}|N} 
		\hline
		\textbf{Type of automata} &  "Deterministic"  & \begin{tabular}{@{}c@{}}"History-" \\ "deterministic@@hist"\end{tabular}  \\[4mm]
		\hline
		\hline

		"Parity"	&   $\zielonkaAutomaton{\F}$ &   $\zielonkaAutomaton{\F}$ &\\[4mm]
		
		\hline
		                  
		"Rabin"	& No characterisation 
    &   $\zielonkaHDAutomaton{\F}$ &\\[4mm]
\hline
	\end{tabular}
	\caption{Minimal automata "recognising" a "Muller language" $\Muller{\F}$, according to the type of "acceptance condition" ("parity" or "Rabin") and the form of "determinism".}
	\label{table:minimal-automata}
\end{table}

\subsection{The Zielonka tree}\label{subsec-zt: definition}

\begin{definition}[\cite{Zielonka1998infinite}]\label{def-ZT:zielonkaTree}
	\AP Let~$\F\subseteq\powplus{\SS}$ be a family of non-empty subsets over a finite set~$\SS$. A ""Zielonka tree"" for~$\F$ (over $\SS$),\footnotemark{} denoted $\intro*\zielonkaTree{\F} = (N, \ancestor, \intro*\nu :N \to \powplus{\SS})$ is a "$\powplus{\SS}$-labelled tree" with nodes partitioned into ""round nodes"" and ""square nodes"", $N= \intro*\roundnodes \sqcup \intro*\squarenodes$, such that:
	\begin{itemize}
		\item The "root" is labelled $\SS$.
		\item If a "node" is labelled~$X\subseteq \SS$, with~$X\in\F$, then it is a "round node", and it has a child for each maximal non-empty subset~$Y\subseteq X$ such that~$Y\not\in\F$, which is labelled $Y$.
		\item If a "node" is labelled~$X\subseteq \SS$, with~$X\not\in\F$, then it is a "square node", and it has a child for each maximal non-empty subset~$Y\subseteq X$ such that~$Y\in\F$, which is labelled $Y$. 
	\end{itemize}
\end{definition}

\footnotetext{The definition of $\zielonkaTree{\F}$, as well as most subsequent definitions, do not only depend on $\F$ but also on the alphabet $\SS$. Although this dependence is important,  we do not explicitly include it in the notations in order to lighten them, as most of the time the alphabet will be clear from the context.}

\begin{remark}
	We note that for each family of subsets $\F\subseteq\powplus{\SS}$, there is only one "Zielonka tree" up to renaming of its nodes, so we will talk of \emph{the} "Zielonka tree" of $\F$.
\end{remark}

\AP For a family of subsets $\F\subseteq \pow{\SS}$ and $\SS'\subseteq \SS$, we write $\intro*\restSubsets{\F}{\SS'} = \F \cap \pow{\SS'}$.

\begin{remark}
	We remark that if $n$ is a node of $\zielonkaTree{\F}$, then the "subtree of~$\zielonkaTree{\F}$ rooted at~$n$" is the "Zielonka tree" for the family $\restSubsets{\F}{\nu(n)}$ over the alphabet $\nu(n)$, that is, for the restriction of $\F$ to the subsets included in the label of $n$.
\end{remark}

\begin{remark}\label{rmk-ZT:union-changes-acceptance}
	Let $n$ be a node of $\zielonkaTree{\F}$ and let $n_1$ be a "child" of it. If $\nu(n_1) \subsetneq X \subseteq \nu(n)$, then $\nu(n_1)\in \F \iff X\notin \F \iff \nu(n)\notin \F$. In particular, if $n_1, n_2$ are two different "children" of $n$, then  
	$\nu(n_1)\in \F \iff \nu(n_2)\in \F \iff \nu(n_1)\cup \nu(n_2)\notin \F$.
\end{remark}


We equip "Zielonka trees" with an "order@@tree" to navigate in them. That is, we equip each set $\children_{\zielonkaTree{\F}}(n)$ with a total order, making $\zielonkaTree{\F}$ an "ordered tree". The precise order considered will be irrelevant for our purposes.
From now on, we will assume that all "Zielonka trees" are "ordered", without explicitly mentioning it.

\AP For a "leaf" $l \in \leaves(\zielonkaTree{\F})$ and a letter $a\in \SS$ we define $\intro*\supp(l,a) = n$ to be the "deepest" "ancestor" of $l$ (maximal for $\ancestor$) such that $a\in \nu(n)$.

\begin{example}\label{example-ZT:zielonka-tree}
	We will use the "Muller language associated to" the following family of subsets as a running example throughout the paper.
	Let $\SS= \{a,b,c\}$ and let $\F$ be:
	\[ \F = \{\{a,b\}, \{a,c\}, \{b\}\}. \]
	In Figure~\ref{fig-ZT:zielonkaTree-example} we show the "Zielonka tree" of $\F$. 
	We use Greek letters (in pink) to name the nodes of the tree.
	Integers appearing on the right of the tree will be used in the next section.
	
	We have that $\supp(\xi, c) = \lambda$ and $\supp(\xi, b) = \aa$.
	Also, $\jump(\xi, \lambda)=\zeta$  is the "leaf" reached by going from $\xi$ to $\lambda$, then changing to the next branch (in a cyclic way) and re-descend by taking the leftmost path. Similarly, $\jump(\xi,\aa)=\theta$.
	
	The "subtree rooted at" $\lambda$ contains the nodes $\{\lambda, \xi, \zeta\}$. We note that this is the "Zielonka tree" of $\restSubsets{\F}{\{a,c\}} = \{\{a,c\}\}$ (over the alphabet $\{a,c\}$).
\end{example}

	\begin{figure}[ht]
		\centering 
		\includegraphics[scale=1.2]{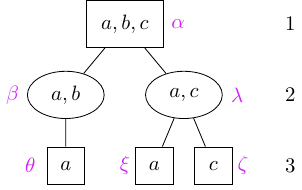}
		\caption{Zielonka tree $\zielonkaTree{\F}$ for 
			$\F=\{ \{a,b\}, \{a,c\}, \{b\}\}$.}
		\label{fig-ZT:zielonkaTree-example}
	\end{figure}

\subsection{A minimal deterministic parity automaton}\label{subsec-zt: parity automaton}
We present next the "Zielonka-tree-parity-automaton", a minimal "deterministic" "parity" "automaton" for a "Muller language" $\Muller{\F}$ built from the "Zielonka tree" $\zielonkaTree{\F}$. Our construction will furthermore let us determine the "parity index" of the language $\Muller{\F}$ from its "Zielonka tree".
\subsubsection{The Zielonka-tree-parity-automaton}\label{subsubsec-zt-parity: definition}

We associate a non-negative integer to each level of a "Zielonka tree" $\zielonkaTree{\F}= (N, \ancestor, \nu)$. 
\AP We let $\intro*\parityNodes: N \to \NN$ be the function defined as:
\begin{itemize}
	\item if $\SS\in \F$, $\parityNodes(n)=\depth(n)$, 
	\item if $\SS\notin \F$, $\parityNodes(n)=\depth(n) + 1$.	
\end{itemize}
\AP We let $\intro*\minparityZ{\F}$ (resp. $\intro*\maxparityZ{\F}$) be the minimum (resp. maximum) value taken by the function~$\parityNodes$. 

\begin{remark}\label{rmk:parity_levels_ZT}
	A node $n$ in the "Zielonka tree" $\zielonkaTree{\F}$ verifies that $\parityNodes(n)$ is even if and only if $\nu(n)\in \F$.
	If $\SS\in \F$, $\minparityZ{\F} = 0$ and $\maxparityZ{\F}$ equals the "height" of the "Zielonka tree" minus one. If $\SS\notin \F$, $\minparityZ{\F} = 1$ and $\maxparityZ{\F}$ equals the "height" of the "Zielonka tree". 
\end{remark}

\begin{example}
	The "Muller language" from Example~\ref{example-ZT:zielonka-tree} satisfies $\SS\notin \F$. The values taken by the function $\parityNodes$ are represented at the right of the "Zielonka tree" in Figure~\ref{fig-ZT:zielonkaTree-example}. We have $\parityNodes(\aa)=1$, $\parityNodes(\bb)=\parityNodes(\lambda) = 2$ and $\parityNodes(\theta) = \parityNodes(\xi) = \parityNodes(\zeta) = 3$, so $\minparityZ{\F}=1$ and $\maxparityZ{\F}=3$.
\end{example}

\begin{definition}[Zielonka-tree-parity-automaton]\label{def-ZT:parityAutomatonZT}
	\AP Given a family of non-empty subsets $\F\subseteq \powplus{\SS}$,
	we define the  ""ZT-parity-automaton"" $\intro*\zielonkaAutomaton{\F}=(Q, \SS, q_0, [\minparityZ{\F}, \maxparityZ{\F}], \transAut{}, \parity)$ as the "deterministic" "parity" "automaton" given by:
	\begin{itemize}
		\item $Q=\leaves(\zielonkaTree{\F})$, 
		\item $q_0$ is the "leftmost" "leaf" of $\zielonkaTree{\F}$,\footnotemark{}
		\item The transition reading $a\in\SS$ from $q\in Q$ goes to $\jump(q,\supp(q,a))$  and produces $\parityNodes(\supp(q,a))$ as "output", that is, \[\transAut{}(q,a) = \left(\jump(q,\supp(q,a)), \parityNodes(\supp(q,a))\right).\]
	\end{itemize}
\end{definition}

\footnotetext{Any state can be chosen as "initial state" (see Lemma~\ref{lemma-prelims:prefix-indep-automaton}).}

Intuitively, the transitions of the automaton are determined as follows: if we are in a "leaf"~$l$ and we read a colour~$a$, then we move up in the branch of $l$ until we reach a node~$n$ that contains the letter $a$ in its label. 
Then we pick the "child" of $n$ just on the right of the branch that we took before (in a cyclic way), and we move to the "leftmost" "leaf" below it. The colour produced as output is $\parityNodes(n)$, determined by the "depth" of $n$.

\begin{example}
	In Figure~\ref{fig-ZT:zielonka-parityAutomaton} we show the "ZT-parity-automaton" $\zielonkaAutomaton{\F}$ of the family of subsets from Example~\ref{example-ZT:zielonka-tree}.
\end{example}

\begin{figure}[ht]
	\centering 
	\includegraphics[width=0.5\textwidth]{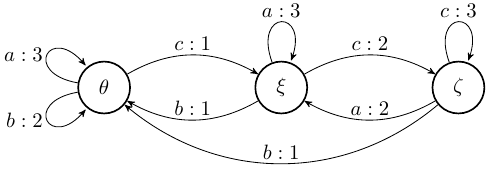}
	\caption{"ZT-parity-automaton" "recognising" the "Muller language associated to" $\F=\{ \{a,b\}, \{a,c\}, \{b\}\}$.}
	\label{fig-ZT:zielonka-parityAutomaton}
\end{figure}

\paragraph*{Correctness of the Zielonka-tree-parity-automaton.}

\begin{proposition}[Correctness]\label{prop-ZT:correctness_ZT_parity}
	Let $\F\subseteq \powplus{\SS}$ be a family of non-empty subsets. Then, 
	\[ \Lang{\zielonkaAutomaton{\F}} = \MullerC{\F}{\SS}. \]
	That is, a word $w\in \SS^\oo$ is "accepted@@word" by $\zielonkaAutomaton{\F}$ if and only if $\minf(w)\in \F$.
\end{proposition}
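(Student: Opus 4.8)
The plan is to show that an infinite word $w$ is accepted by $\zielonkaAutomaton{\F}$ if and only if $\minf(w) \in \F$, by analysing the behaviour of the unique run of $\zielonkaAutomaton{\F}$ on $w$. Since $\zielonkaAutomaton{\F}$ is deterministic and complete, there is exactly one run $\rr$ over $w$, and its colours form the sequence $\parityNodes(\supp(q_0 \re{} \cdots \re{} q_n, w_n))$ along the run; acceptance means the minimum colour seen infinitely often is even, equivalently (by Remark~\ref{rmk:parity_levels_ZT}) the $\ancestor$-highest node visited infinitely often as a ``support node'' has label in $\F$.

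First I would set up the key structural observation: let $n^\star$ be the $\ancestor$-minimal (highest in the tree) node $n$ of $\zielonkaTree{\F}$ such that $\minf(w) \subseteq \nu(n)$, and let $X = \minf(w)$. Such a node exists because the root is labelled $\SS \supseteq X$, and I would argue $n^\star$ is well-defined: if two incomparable nodes both had labels containing $X$, their common ancestor also has a label containing $X$, so the set of such nodes forms a chain, with a unique minimum. Write $c^\star = \parityNodes(n^\star)$. The claim I want is: along the run $\rr$ over $w$, the minimum colour produced infinitely often equals $c^\star$, and moreover $c^\star$ is even iff $\nu(n^\star) \in \F$ iff $X \in \F$ (using Remark~\ref{rmk-ZT:union-changes-acceptance}: since $n^\star$ is the highest node whose label contains $X$, either $n^\star$ is the root, or $X$ is not contained in the label of any child of $n^\star$, while $X$ \emph{is} contained in $\nu(n^\star)$; together with the definition of the children of a node in the Zielonka tree this forces $\nu(n^\star) \in \F \iff X \in \F$). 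Then acceptance of $w$ (minimum infinitely-often colour even) is equivalent to $X \in \F$, which is exactly $\minf(w) \in \F$, giving the result.

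So the core of the proof is establishing that the minimum colour produced infinitely often along $\rr$ is exactly $c^\star = \parityNodes(n^\star)$. I would split this into two inequalities. \textbf{(a) No colour $< c^\star$ is produced infinitely often.} A colour $c$ is produced at a step reading letter $a$ from leaf $q$ precisely when $\supp(q,a)$ has depth giving $\parityNodes = c$; a strict ancestor $m$ of $n^\star$ has $\parityNodes(m) < c^\star$, and $\supp(q,a) = m$ requires $a \in \nu(m)$ but $a \notin \nu(m')$ for the child $m'$ of $m$ with $m' \ancestor q$. I would show that after finitely many steps the run ``stabilises below $n^\star$'', i.e. from some point on the leaf $q$ is always a descendant of $n^\star$: intuitively, each time a node $m$ strictly above $n^\star$ is used as support, the run jumps (via $\jump$) to a leaf below the \emph{next} child of $m$, and since $X \subseteq \nu(n^\star) \subseteq \nu(m')$ for exactly one child $m'$ of $m$ (the one on $n^\star$'s branch — here again the children-partition property of the Zielonka tree is used) but every letter seen infinitely often lies in $X$, eventually only letters in $X$ are read, so no letter read after that point leaves $\nu(m')$, hence $m$ is never used as support again; iterating over the finitely many ancestors of $n^\star$ gives stabilisation below $n^\star$, so colours $< c^\star$ stop appearing. \textbf{(b) Colour $c^\star$ is produced infinitely often.} Once stabilised below $n^\star$, I would use the cyclic nature of $\nextChild$ together with the fact that $X$ is \emph{not} contained in the label of any single child of $n^\star$ (by minimality of $n^\star$): since infinitely many letters of $X$ are read and these are spread so that for each child $n'$ of $n^\star$ there is a letter in $X \setminus \nu(n')$ read infinitely often, the run cannot remain forever below a single child of $n^\star$; each time it leaves the subtree of a child of $n^\star$, the support node is exactly $n^\star$ (it's the highest place the offending letter re-appears, given we're below $n^\star$), producing colour $c^\star$.

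\textbf{Main obstacle.} The delicate part is step (a)/(b), i.e. the ``stabilisation and recurrence'' argument: carefully tracking how $\jump$ and $\supp$ interact so that the run is eventually trapped below $n^\star$ but is forced to hit $n^\star$ itself infinitely often, and in particular arguing that for each proper child $n'$ of $n^\star$ some letter of $X$ lies outside $\nu(n')$ — this is where the maximality in the definition of the Zielonka tree's children (and Remark~\ref{rmk-ZT:union-changes-acceptance}) does the real work, since it guarantees $\nu(n^\star)$ is not covered by any child. I would handle this by induction on the ancestors of $n^\star$ from the root downwards for the trapping part, and by a pigeonhole/cyclicity argument on the children of $n^\star$ for the recurrence part. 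Everything else (determinism, well-definedness of the unique run, reading off colours, the even/odd correspondence via Remark~\ref{rmk:parity_levels_ZT}) is routine.
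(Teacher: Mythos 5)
There is a genuine gap, and it lies in the definition of your pivotal node $n^\star$. Read literally, the $\ancestor$-minimal node whose label contains $X=\minf(w)$ is always the root (the root is labelled $\SS\supseteq X$ and is the $\ancestor$-minimum of the whole tree), which makes $c^\star\in\{0,1\}$ and the claim obviously false; and the well-definedness argument you offer (``the set of nodes whose label contains $X$ forms a chain'') is incorrect: in the paper's running example $\F=\{\{a,b\},\{a,c\},\{b\}\}$, for $X=\{a\}$ the two incomparable leaves labelled $\{a\}$ both contain $X$. What your steps (a)/(b) actually need is a \emph{deepest} node containing $X$ (maximality, not minimality, is what guarantees that no child contains $X$, and your parenthetical ``by minimality of $n^\star$'' points the wrong way), but such a node is not unique, and — more seriously — your headline claim that the minimal colour produced infinitely often equals $\parityNodes(n^\star)$ cannot hold for any node determined by $\minf(w)$ alone. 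Deepest nodes containing $X$ need only agree on their round/square status, not on their depth: over $\SS=\{a,b,c,d\}$ with $\F=\{\{a\},\{a,b\},\{a,c,d\}\}$ the maximal nodes containing $\{a\}$ sit at depths $1$ and $3$, and (ordering the children of the root with $\{a,b\}$ first) the runs over $a^\oo$ and over $c\,a^\oo$ have minimal recurring colours $2$ and $4$ respectively, although both words have $\minf=\{a\}$. Relatedly, your trapping step does not yield stabilisation below the particular $n^\star$ you fixed: the descent follows the cyclic $\jump$/next-child dynamics and may trap the run under a different maximal node on another branch. Finally, the justification ``either $n^\star$ is the root, or no child contains $X$'' for the equivalence $\nu(n^\star)\in\F\iff X\in\F$ breaks in the root case (e.g.\ $X=\{a,b\}\in\F$ in the running example while $\SS\notin\F$).

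The repair is exactly the paper's move: define the reference node \emph{relative to the run}, namely take a leaf $q_\infty$ visited infinitely often and let $n_w$ be the deepest ancestor of $q_\infty$ whose label contains $\minf(w)$. With that change, your trapping argument (once only letters of $\minf(w)$ are read, $\supp$ and $\jump$ keep the run below $n_w$) and your cyclic-children recurrence argument (some letter of $\minf(w)$ escapes each child, so $n_w$ itself is the support infinitely often, and no child of $n_w$ contains $\minf(w)$) are precisely the paper's Claims 1 and 2, and the conclusion follows because only the \emph{parity} of the minimal recurring colour — not its value — is determined by $\minf(w)$, via $\nu(n_w)\in\F\iff\minf(w)\in\F$. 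So the trapping-plus-recurrence skeleton is the right one and matches the paper; the proof as proposed, however, is built on an ill-defined object and a false intermediate claim, and step (b) would fail as stated.
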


The following useful lemma follows directly from the definition of $\supp$ and $\jump$.
\begin{lemma}\label{lemma-ZT:transition_ZT-parity-automaton}
	Let $q$ be a "leaf" of $\zielonkaTree{\F}$ and let $n$ be a node "above" $q$. Then, $\supp(q, a)$ is a "descendant" of $n$ if and only if $a\in \nu(n)$, and in this case, $\jump(q,\supp(q,a))$ is a "descendant" of  $n$ too.
\end{lemma}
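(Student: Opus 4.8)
\textbf{Proof plan for Lemma~\ref{lemma-ZT:transition_ZT-parity-automaton}.}

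The plan is to unfold the definitions of $\supp$ and $\jump$ and track where nodes sit relative to $n$ in the branch of $q$. First I would recall that $\supp(q,a)$ is defined as the \emph{deepest} ancestor $m$ of $q$ with $a\in\nu(m)$; since ancestors of a leaf are totally ordered by $\ancestor$, the node $n$ is comparable to $\supp(q,a)$. For the forward direction, if $\supp(q,a)$ is a descendant of $n$, then $\supp(q,a) \descendant n$ along the branch of $q$, and since labels only shrink as we go down the tree (each child's label is a subset of its parent's), $a\in\nu(\supp(q,a))\subseteq\nu(n)$, giving $a\in\nu(n)$. For the converse, if $a\in\nu(n)$, then $n$ is one of the ancestors of $q$ containing $a$ in its label; by maximality (deepest) of $\supp(q,a)$ among such ancestors, $\supp(q,a)$ is at least as deep as $n$, hence $\supp(q,a)\descendant n$, i.e.\ it is a descendant of $n$ (possibly $n$ itself, which is fine as $\descendant$ is reflexive). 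This establishes the first equivalence.

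For the second assertion, suppose $a\in\nu(n)$, so by the above $m:=\supp(q,a)$ is a descendant of $n$. I would split into two cases following the definition of $\jump$. If $m=n$ (so in particular $\supp(q,a)$ is trivially a descendant of $n$), then $\jump(q,m)$ is by definition the leftmost leaf below some child of $m=n$, hence a descendant of $n$. If $m$ is a strict descendant of $n$, then $m$ is a strict ancestor of $q$ with $n\ancestor m$; the definition of $\jump(q,m)$ picks the child $n_1$ of $m$ with $n_1\ancestor q$, replaces it by its cyclic successor sibling $n_2=\nextChild(n_1)$, and descends to the leftmost leaf below $n_2$. Since $n_2$ is a child of $m$ and $m$ is a descendant of $n$, the node $n_2$ and everything below it — in particular $\jump(q,m)$ — is a descendant of $n$. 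Either way $\jump(q,\supp(q,a))$ is a descendant of $n$, as claimed.

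I expect the only mild subtlety to be keeping the boundary cases straight: whether $\descendant$ is meant reflexively (so that $n$ counts as a descendant of itself), and the degenerate case $m=n$ versus $m$ a strict descendant of $n$ in the definition of $\jump$. Once the reflexive convention is fixed and the two cases in the $\jump$ definition are handled separately, the argument is a direct chase through Definitions of $\supp$, $\jump$, and the Zielonka tree, using only the fact that node labels are non-increasing along branches. There is no real obstacle; this is a bookkeeping lemma.
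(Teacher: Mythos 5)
Your proof is correct and is exactly the definition-chase the paper has in mind — the paper gives no explicit argument, stating only that the lemma "follows directly from the definition of $\supp$ and $\jump$". The two facts you rely on (labels are non-increasing along a branch, and $\jump(q,m)$ always lands on a leaf below $m$ in every case of its definition) are precisely what is needed, and your handling of the boundary cases is sound.
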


\begin{proof}[Proof of Proposition~\ref{prop-ZT:correctness_ZT_parity}]
	Let $w = w_0w_1w_2\dots \in \SS^\oo$ be an infinite word. For $i>0$, let 
	 $q_i$ be the "leaf" of $\zielonkaTree{\F}$ reached after the (only) "run over" $w_0w_1\dots w_{i-1}$ in $\zielonkaAutomaton{\F}$. 
	For $i\geq 0$ let $n_i = \supp(q_i,w_i)$ be the ``intermediate node'' used to determine the next state and the output colour of each transition, and let $c_i = \parityNodes(n_i) = \gg(q_i, w_i) \in [\minparityZ{\F}, \maxparityZ{\F}]$ be that output colour (the "output" of the "run over" $w$ being therefore $c_0c_1c_2\dots \in \NN^\oo$). 
	Let $q_\infty$ be a node appearing infinitely often in the sequence $q_0q_1q_2\dots$, and let $n_w$ be the "deepest" "ancestor" of $q_\infty$ such that $\minf(w)\subseteq \nu(n_w)$.
	\begin{claim}\label{claim:supp-below-n-CORRECT-ZT}
		 There is $K\in \NN$ such that for all $i \geq K$, $q_i \descendant n_w$ and $\supp(q_i, w_i) \descendant n_w$. In particular, $c_i \geq \parityNodes(n_w)$ for $i\geq K$.
	\end{claim}
	\begin{subproof}
		Let $K\in \NN$ be a position such that $w_i \in \minf(w)$ for all $i\geq K$ and $q_K = q_\infty$.
		The claim follows from Lemma~\ref{lemma-ZT:transition_ZT-parity-automaton} and induction.
	\end{subproof}
	\begin{claim}\label{claim:supp-n_inf_often-CORRECT-ZT}
		Let $n_{w,1},\dots, n_{w,s}$ be an enumeration of $\children_{\zielonkaTree{\F}}(n_w)$ "from left to right". It holds that:
		\begin{enumerate}
			\item $\supp(q_i,w_i) = n_w$ infinitely often. In particular, $c_i = \parityNodes(n_w)$ for infinitely many~$i$'s.
			\item There is no $n_{w,k}\in \children(n_w)$ such that $\minf(w) \subseteq \nu(n_{w,k})$.
		\end{enumerate}	
	\end{claim}
	\begin{subproof}
		We first remark that for all $n_{w,k}$ there are arbitrarily large positions $i$ such that $q_i$ is not "below" $n_{w,k}$. Suppose by contradiction that this is not the case. Then, for all $i$ sufficiently large we have that $\supp(q_i, w_i) \descendant n_{w,k}$, and by Lemma~\ref{lemma-ZT:transition_ZT-parity-automaton}, $\minf(w)\subseteq \nu(n_{w,k})$. In particular, $q_\infty$ is "below" $n_{w,k}$, contradicting the fact that $n_w$ is the "deepest" "ancestor" of $q_\infty$ containing $\minf(w)$.
		
		Let $K$ be like in the Claim~\ref{claim:supp-below-n-CORRECT-ZT}.
		We show that if $i\geq K$ and $q_i\descendant n_{w,k}$, then there is $j> i$ such that $w_j\notin \nu(n_{w,k})$, $\supp(q_{j},w_{j}) = n_{w}$ and $q_{j+1}\descendant n_{w,k+1}$ (by an abuse of notation we let $s+1 = 1$). 
	    It suffices to consider the least $j \geq i$ such that $\supp(q_{j},w_{j})\nsucceq n_{w,k}$ (which exists by the previous remark). Since $\minf(w)\subseteq \nu(n_w)$ we have that $\supp(q_{j},w_{j}) = n_{w}$, so $w_j\notin \nu(n_{w,j})$ (by Lemma~\ref{lemma-ZT:transition_ZT-parity-automaton}) and by definition of the transitions of $\zielonkaAutomaton{\F}$, $q_{j+1}$ will be a "leaf" "below" $n_{w,k+1}$.
	    
	    The fact that $q_{j+1}\descendant n_{w,k+1}$ implies that for any child $n_{w,k'}$, infinitely many states $q_i$ will be below $n_{w,k'}$ (we go around the children in a round-robin fashion). Therefore, for any $k$, there are arbitrarily large $j$ such that $w_j\notin \nu(n_{w,j})$ and $\supp(q_{j},w_{j}) = n_{w}$, implying both items in the claim.
	\end{subproof}
	Combining both claims, we obtain that the minimum of the colours that are produced as output infinitely often is $\parityNodes(n_w)$. By Remark~\ref{rmk:parity_levels_ZT}, $\parityNodes(n_w)$ is even if and only if $n_w$ is a "round node" (if $\nu(n_w)\in \F$). It remains to show that $\minf(w) \in \F$ if and only if $\nu(n_w)\in \F$, which holds by the second item in Claim~\ref{claim:supp-n_inf_often-CORRECT-ZT} and Remark~\ref{rmk-ZT:union-changes-acceptance}.	
\end{proof}

\subsubsection{Optimality of the Zielonka-tree-parity-automaton}\label{subsubsec-zt-parity: optimality}
We now state and prove the main results of this section: the optimality of the "ZT-parity-automaton" in both number of states (Theorem~\ref{thm-zt:strong_optimality_ZTparity}) and number of "colours" of the "acceptance condition" (Theorem~\ref{thm-zt:optimality_ZTparity-priorities}).
The minimality of the "ZT-parity-automaton" comes in two versions. A weaker one states its minimality only amongst "deterministic" "automata" (Theorem~\ref{thm-zt:weak_optimality_ZTparity}), and a stronger one states its minimality amongst all "history-deterministic"  "automata" (Theorem~\ref{thm-zt:strong_optimality_ZTparity}). Although the weaker version is implied by the stronger one, we find it instructive to provide a proof for this easier case. The proof of the stronger statement is one of the most technical parts of the paper, but the argument used in its proof is just a careful refinement of the ideas appearing in the weaker version.

\paragraph*{Statement of the results.}
\begin{theorem}[Optimality of the parity condition]\label{thm-zt:optimality_ZTparity-priorities}
	The "parity index" of a "Muller language" $\MullerC{\F}{\SS}$ is $[\minparityZ{\F}, \maxparityZ{\F}]$. That is, the "ZT-parity-automaton" of $\MullerC{\F}{\SS}$ uses the optimal number of colours to recognise this language.
\end{theorem}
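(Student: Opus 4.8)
The statement has two halves: an upper bound (the "ZT-parity-automaton" witnesses that the "parity index" of $\MullerC{\F}{\SS}$ is \emph{at most} $[\minparityZ{\F}, \maxparityZ{\F}]$) and a matching lower bound (any "DPA" for $\MullerC{\F}{\SS}$ must use at least this many colours, with the correct parity). The upper bound is immediate: by Proposition~\ref{prop-ZT:correctness_ZT_parity} the "automaton" $\zielonkaAutomaton{\F}$ "recognises" $\MullerC{\F}{\SS}$, and by construction (Definition~\ref{def-ZT:parityAutomatonZT}) its "acceptance condition" is a "parity" condition over the set of colours $[\minparityZ{\F}, \maxparityZ{\F}]$; moreover by Remark~\ref{rmk:parity_levels_ZT} the least colour $\minparityZ{\F}$ is $0$ exactly when $\SS\in\F$ (the root of $\zielonkaTree{\F}$ is a "round node") and $1$ when $\SS\notin\F$, which matches the parity of the index. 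So it only remains to prove that no "DPA" for $\MullerC{\F}{\SS}$ can do better.

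\textbf{The lower bound via the Flower Lemma.} The plan is to exhibit, inside \emph{any} "DMA" "recognising" $\MullerC{\F}{\SS}$, a flower whose size is governed by the "height" of $\zielonkaTree{\F}$, and then invoke the Flower Lemma (Lemma~\ref{lemma:flower-lemma}) together with the definition of "parity index at least". Concretely, I would show: if $\A$ is a "DMA" with $\Lang{\A} = \MullerC{\F}{\SS}$, then $\A$ admits an "accessible" "$d$-flower" over some state, where $d = \maxparityZ{\F} - \minparityZ{\F} + 1$ equals the "height" of $\zielonkaTree{\F}$; and this flower is "positive@@flower" if $\SS\in\F$ and "negative@@flower" if $\SS\notin\F$. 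To build it, take a longest "branch" $n_0 \ancestor n_1 \ancestor \dots \ancestor n_d$ of $\zielonkaTree{\F}$ (from root $n_0$, labelled $\SS$, to a "leaf" $n_d$); for each node $n_i$ on this branch the label $\nu(n_i)$ satisfies $\nu(n_i)\in\F \iff i$ has the same parity as the root's membership (Remark~\ref{rmk:parity_levels_ZT} / Remark~\ref{rmk-ZT:union-changes-acceptance}). Now feed $\A$ the ultimately periodic word $w_i = (\text{an enumeration of }\nu(n_i))^\oo$ — more precisely, the "run" of $\A$ on a word whose "Inf" equals $\nu(n_i)$ eventually cycles around a "cycle" $\ell_i$ of $\A$ with $\gg(\ell_i) = \nu(n_i)$ (up to passing to a final cycle), and since $\nu(n_{i+1})\subseteq\nu(n_i)$ one can arrange, by reading a long prefix producing $\nu(n_i)$ and then switching to producing $\nu(n_{i+1})$, that these cycles share a common state and are nested $\ell_0 \supseteq \ell_1 \supseteq \dots \supseteq \ell_d$ with alternating acceptance status. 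This is exactly a "$d$-flower"; its sign is determined by $\gg(\ell_0)$, i.e.\ by whether $\SS\in\F$. The Flower Lemma then yields that $\Lang{\A}$ has "parity index at least" $[\minparityZ{\F},\maxparityZ{\F}]$, which combined with the upper bound gives equality.

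\textbf{Main obstacle.} The delicate point is the construction of the \emph{nested} cycles sharing a common state with the prescribed colourings. Having for each $n_i$ \emph{some} cycle with colour set $\nu(n_i)$ is easy (run $\A$ on a word with that "Inf"); forcing all of them to be nested and concentric requires care. The clean way is an inductive argument going down the branch: having built $\ell_0 \supsetneq \dots \supsetneq \ell_i$ all containing a common state $q$, consider the "subautomaton" reachable while staying inside the states of $\ell_i$, feed it a word realizing "Inf" $=\nu(n_{i+1})$ (possible since $\nu(n_{i+1})\subsetneq\nu(n_i)$ and, by maximality of the branch, $\nu(n_{i+1})$ is a label of a "child", so realizable within $\ell_i$ after possibly enlarging $\ell_i$ to an SCC containing it), obtaining $\ell_{i+1}$; then return to $q$ and re-close $\ell_i$, so that the union taken over a single visit pattern keeps everything anchored at $q$. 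One must also check the strictness $\ell_i\supsetneq\ell_{i+1}$ and the alternation $\gg(\ell_i)\in\F \iff \gg(\ell_{i+1})\notin\F$, both of which follow from $\gg(\ell_i)=\nu(n_i)$ and Remark~\ref{rmk-ZT:union-changes-acceptance}. Finally, I would remark that the same flower is "accessible" (reachable from the "initial state" since $\A$ is a "DMA", hence every "recurrent" "SCC" relevant here is reachable after completing $\A$), so Lemma~\ref{lemma:flower-lemma} applies verbatim; the case $\SS\notin\F$ is symmetric, giving a "negative@@flower" "$d$-flower" and the odd-parity lower bound. This completes the proof of Theorem~\ref{thm-zt:optimality_ZTparity-priorities}.
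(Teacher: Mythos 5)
Your upper bound is exactly the paper's: $\zielonkaAutomaton{\F}$ recognises $\MullerC{\F}{\SS}$ with colours in $[\minparityZ{\F},\maxparityZ{\F}]$, and the parity of $\minparityZ{\F}$ matches whether $\SS\in\F$. For the lower bound you also reach for the right tool (the Flower Lemma applied to a maximal branch of $\zielonkaTree{\F}$), but you instantiate it in the wrong automaton, and that choice manufactures the ``main obstacle'' you then only sketch. Lemma~\ref{lemma:flower-lemma} converts a flower found in \emph{one} DMA recognising $L$ into the statement ``$L$ has parity index at least $[\cdot,\cdot]$'', which by definition already quantifies over \emph{all} DPAs for $L$; so there is no need to exhibit a flower inside an arbitrary DMA. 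The paper takes the trivial one-state DMA $\A_L$ with a single state $q$, transitions $q\re{a:a}q$ for each $a\in\SS$, and acceptance set $L$ itself. There, letting $\ell_i$ be the set of self-loops labelled by $\nu(n_i)$ for the nodes $n_1\ancestor\dots\ancestor n_d$ of a maximal branch gives the flower immediately: nesting is inclusion of the labels, concentricity is automatic since there is only one state, and alternation is Remark~\ref{rmk-ZT:union-changes-acceptance}. Your entire inductive construction disappears.

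Moreover, as written that construction has a genuine hole: the run realizing $\minf=\nu(n_{i+1})$ started from a state of $\ell_i$ has no reason to remain within the states of $\ell_i$ (a cycle is a closed walk, not a subautomaton closed under the letters it reads), and ``enlarging $\ell_i$ to an SCC containing it'' can both destroy the strict nesting with $\ell_{i-1}$ and change the acceptance status needed for alternation. Repairing this requires the $X$-FSCC machinery the paper develops for the genuinely harder minimality result (Theorem~\ref{thm-zt:weak_optimality_ZTparity}), or an appeal to the ``conversely'' half of Lemma~\ref{lemma:flower-lemma}. Since neither is needed here, the lower bound should be run on $\A_L$.
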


\begin{theorem}[Minimality of the ZT-parity-automaton]\label{thm-zt:strong_optimality_ZTparity}
	Let $\A$ be a "history-deterministic" "parity" "automaton" "recognising@@automaton" a "Muller language" $\MullerC{\F}{\SS}$. Then, $|\zielonkaAutomaton{\F}|\leq \size{\A}$.
\end{theorem}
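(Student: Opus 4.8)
The plan is to build an injection from $\leaves(\zielonkaTree{\F})$ into the state set of $\A$; since $|\zielonkaAutomaton{\F}| = |\leaves(\zielonkaTree{\F})|$ this proves the statement. I would begin with three reductions. Using Remark~\ref{rmk-prelim:restriction-to-accessible} I fix a "sound resolver@@aut" $(r_0,r)$ for $\A$ with all states reachable, so that by Lemma~\ref{lemma-prelims:prefix-indep-automaton} every state $q$ satisfies $\Lang{\initialTS{\A}{q}} = \MullerC{\F}{\SS}$. As $\MullerC{\F}{\SS}$ is "prefix-independent" and a "final SCC" has no outgoing edge, I may then replace $\A$ by its restriction to a "final SCC" with any of its states as "initial state": this is again a "history-deterministic" "parity" "automaton" for $\MullerC{\F}{\SS}$, it is "strongly connected", and it has no more states, so I assume $\A$ "strongly connected". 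Finally, because $\A$ is a "Muller" "automaton", going once around any "cycle" $\ell$ of $\A$ produces a periodic word $w_\ell$ whose set $\minf(w_\ell)$ is precisely the set of "input letters" read along $\ell$; hence the least priority appearing on $\ell$ is even exactly when that letter set belongs to $\F$.

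The central notion is a \emph{witness} for a "leaf" $l$. Writing the "branch" of $l$ as $n_0 \ancestor n_1 \ancestor \dots \ancestor n_h = l$ (so $\nu(n_0) = \SS$), a witness for $l$ is a state $q$ together with a nested family of "cycles" $\ell_0 \supseteq \ell_1 \supseteq \dots \supseteq \ell_h$ all passing through $q$, such that the "input letters" read along $\ell_j$ are exactly $\nu(n_j)$, and the family is \emph{coherent with} $(r_0,r)$: there is a finite run induced by $r$ reaching $q$, and feeding the resolver, after any history ending in $q$, a sufficiently long block of the periodic enumeration of $\nu(n_j)$ makes it follow $\ell_j$ and return to $q$. (The least priority on $\ell_j$ then automatically has the parity of ``$\nu(n_j) \in \F$''.) The first half of the argument proves that every "leaf" admits a witness, by induction on $\zielonkaTree{\F}$, equivalently on $|\SS|$. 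For a "leaf" $l$ lying below the "child" $n_1$ of the root, set $X := \nu(n_1)$; the restriction of $\A$ to the transitions "labelled" by letters of $X$ is a "history-deterministic" "parity" "automaton" recognising $\MullerC{\restSubsets{\F}{X}}{X} = \MullerC{\F}{\SS} \cap X^{\oo}$, whose "Zielonka tree" is the subtree of $\zielonkaTree{\F}$ rooted at $n_1$; the induction hypothesis gives a state $q$ with "cycles" $\ell_1 \supseteq \dots \supseteq \ell_h$ inside this restriction. It remains to prepend $\ell_0$: since $\A$ is "strongly connected" and some "cycle" of $\A$ reads all of $\SS$ (obtained by running the resolver on a word enumerating $\SS$), splicing that cycle together with $\ell_1$ and a round trip through $q$ produces the outermost $\ell_0$.

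The second half shows that no state witnesses two distinct "leaves". Suppose $q$ is a witness for both $l$ and $l'$ with $l \neq l'$, and let $n_m$ be their deepest common "ancestor"; then $n_{m+1}$ (on the "branch" of $l$) and $n'_{m+1}$ (on the "branch" of $l'$) are distinct "children" of $n_m$. By Remark~\ref{rmk-ZT:union-changes-acceptance}, $\nu(n_{m+1})$ and $\nu(n'_{m+1})$ lie on the same side of $\F$, while $\nu(n_{m+1}) \cup \nu(n'_{m+1})$ lies on the opposite side. The witness "cycles" $\ell_{m+1}$ and $\ell'_{m+1}$ both pass through $q$, so $\ell := \ell_{m+1} \cup \ell'_{m+1}$ is a "cycle" through $q$ reading exactly $\nu(n_{m+1}) \cup \nu(n'_{m+1})$, whose least priority --- the minimum of two priorities of equal parity --- has the parity of ``$\nu(n_{m+1}) \in \F$'', which is the wrong parity for $\nu(n_{m+1}) \cup \nu(n'_{m+1})$. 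If $\nu(n_{m+1}) \in \F$, then running around $\ell$ is an "accepting run" over a word outside $\MullerC{\F}{\SS}$, contradicting $\Lang{\A} = \MullerC{\F}{\SS}$. If $\nu(n_{m+1}) \notin \F$, then $w_\ell \in \MullerC{\F}{\SS}$, but after a run reaching $q$, coherence of the witness forces the resolver to follow $\ell_{m+1}$ and $\ell'_{m+1}$ alternately on input $w_\ell$, so the "run induced by@@aut" $r$ has $\ell$ as its set of transitions visited infinitely often and is "rejecting@@run" --- contradicting soundness of $(r_0,r)$. Thus $l \mapsto q_l$ is injective and $|\zielonkaAutomaton{\F}| = |\leaves(\zielonkaTree{\F})| \le \size{\A}$.

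I expect the real difficulty to be confined to the resolver bookkeeping. In the "deterministic" case (Theorem~\ref{thm-zt:weak_optimality_ZTparity}) the run on each input is forced, so the witness "cycles" are canonically determined and the ``follow $\ell_j$ on a $\nu(n_j)$-block'' property comes for free; for a "history-deterministic" "automaton" the witnesses must instead be carved out of the resolver's own behaviour, and keeping this coherence intact through the alphabet restriction, the splicing of $\ell_0$, and the alternation argument of the injectivity step is precisely the refinement that the two-version formulation of the theorem is pointing at.
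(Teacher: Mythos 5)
Your skeleton is sound up to one point, and that point is the entire difficulty of the theorem. The reduction to a strongly connected automaton, the existence of nested witness cycles via restriction to sub-alphabets, and the first case of your injectivity step (where $\nu(n_{m+1})\in\F$: the two witness cycles have even least priority, so their union through $q$ yields an accepting run over a word outside $\MullerC{\F}{\SS}$, contradicting $\Lang{\A}=\MullerC{\F}{\SS}$) are all correct, and together they essentially re-prove Theorem~\ref{thm-zt:weak_optimality_ZTparity} plus the case where the deepest common ancestor $n_m$ is a square node. The gap is the other case, $\nu(n_{m+1})\notin\F$, where your contradiction rests entirely on the coherence clause of your witness: that after \emph{any} history ending in $q$, a long block of $\nu(n_j)$-letters drives the resolver around $\ell_j$ and back to $q$. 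For an arbitrary fixed sound resolver no witness with this universally quantified coherence need exist: a resolver is a function of the whole history and may legitimately answer differently at $q$ after different histories (already a finite-memory resolver can sit at $q$ under several distinct memory states and trace a different cycle from each), so your existence half cannot deliver it. If you weaken coherence to what your induction actually produces --- \emph{some} reachable configuration at $q$ from which the $\nu(n_j)$-block traces $\ell_j$ --- then the alternation collapses: the witness for $l$ may be anchored at $(q,m_1)$ and the one for $l'$ at $(q,m_2)$ with $m_1\neq m_2$; after looping around $\ell_{m+1}$ the resolver returns to $(q,m_1)$ and has no obligation to follow $\ell'_{m+1}$ on the next block, so the run it induces on $w_\ell$ may be accepting and no contradiction with soundness arises. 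Your induction moreover only constrains the resolver on histories inside the restricted alphabet $\nu(n_{m+1})$, whereas the alternating word immediately leaves it.

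This is exactly the obstruction the paper isolates at the start of its Case~2 ($\SS\in\F$): exhibiting one rejecting run over the letter set $\nu(n_{m+1})\cup\nu(n'_{m+1})\in\F$ proves nothing, since non-determinism leaves room for other, accepting runs over the same word. The paper's proof does not argue about an arbitrary sound resolver; it \emph{constructs} a specific one from an attractor decomposition of the letter game guided by the Zielonka tree of the game's winning condition, for which the regions handling the different sets $\nu(n_i)$ are pairwise disjoint vertex sets by construction, and disjointness of the induced subautomata is then read off the decomposition rather than derived from a cycle-union parity clash. To repair your argument you would need either to build such a resolver or to show that some sound resolver admits witnesses whose anchor configurations at a shared state $q$ coincide; neither follows from the proposal as written.
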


\begin{corollary}\label{cor-zt:no-small-HD-for-Muller}
	For every "Muller language" $L$, a minimal "deterministic" "parity" "automaton" "recognising" $L$ has the same size as a minimal "HD@@aut" "parity" "automaton" "recognising" $L$.
\end{corollary}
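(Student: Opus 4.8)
The plan is to read this off directly from Theorem~\ref{thm-zt:strong_optimality_ZTparity}, exploiting the fact that the "ZT-parity-automaton" is itself a "deterministic" "automaton". Fix a "Muller language" $L\subseteq\SS^\oo$. Being "Muller", it is of the form $L=\MullerC{\F}{\SS}$ for a family $\F\subseteq\powplus{\SS}$ (which, as noted in Remark~\ref{rmk-prelim:characterisation_Muller_Languages}, is recovered unambiguously from $L$ and the ambient alphabet $\SS$ as $\F=\{S\in\powplus{\SS}\mid w\in L \text{ for the word cycling through }S\}$), so that the "Zielonka tree" $\zielonkaTree{\F}$ and the "automaton" $\zielonkaAutomaton{\F}$ are well defined.

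First I would introduce notation: let $m_{\mathrm{det}}$ and $m_{\mathrm{HD}}$ denote the minimal number of states of, respectively, a "deterministic" "parity" "automaton" and a "history-deterministic" "parity" "automaton" "recognising" $L$. Since "deterministic" "automata" are "history-deterministic" (Remark~\ref{rmk-prel:deterministic-implies-HD}), the inequality $m_{\mathrm{HD}}\le m_{\mathrm{det}}$ is immediate, so everything reduces to the reverse inequality.

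For the reverse inequality I would argue as follows. By Definition~\ref{def-ZT:parityAutomatonZT} and Proposition~\ref{prop-ZT:correctness_ZT_parity}, $\zielonkaAutomaton{\F}$ is a "deterministic" "parity" "automaton" with $\Lang{\zielonkaAutomaton{\F}}=L$; hence $m_{\mathrm{det}}\le\size{\zielonkaAutomaton{\F}}$. On the other hand, applying Theorem~\ref{thm-zt:strong_optimality_ZTparity} to an arbitrary "history-deterministic" "parity" "automaton" "recognising" $L$ gives $\size{\zielonkaAutomaton{\F}}\le m_{\mathrm{HD}}$. Chaining these, $m_{\mathrm{det}}\le\size{\zielonkaAutomaton{\F}}\le m_{\mathrm{HD}}\le m_{\mathrm{det}}$, so all four quantities coincide. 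In particular $m_{\mathrm{det}}=m_{\mathrm{HD}}$, and the common minimum is witnessed by the "deterministic" "automaton" $\zielonkaAutomaton{\F}$ itself.

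I expect no genuine obstacle here: this is a one-step consequence of the correctness statement (Proposition~\ref{prop-ZT:correctness_ZT_parity}) together with the strong optimality statement (Theorem~\ref{thm-zt:strong_optimality_ZTparity}), all the real work having been carried out there. The only mild subtlety worth making explicit is that $\F$ is well defined from $L$, so that speaking of "the Zielonka tree of the Muller language $L$" is legitimate.
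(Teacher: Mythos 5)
Your argument is correct and is exactly the route the paper intends: the corollary is stated as an immediate consequence of the correctness of $\zielonkaAutomaton{\F}$ (Proposition~\ref{prop-ZT:correctness_ZT_parity}) and its minimality among history-deterministic parity automata (Theorem~\ref{thm-zt:strong_optimality_ZTparity}), combined with the trivial inclusion of deterministic automata among HD ones. Your explicit chaining of the inequalities, and the remark that $\F$ is recoverable from $L$, just spell out what the paper leaves implicit.
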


\AP We remark that, nonetheless, there are non-trivial "HD@@aut" "parity" "automata" "recognising" "Muller languages". The automaton provided in Example~\ref{ex-prelim:HD-aut} is an "HD@@aut" "coB\"uchi" "automaton" "recognising" a "Muller language" that cannot be made "deterministic" just by removing transitions. We note that the ("deterministic") "ZT-parity-automaton" for this "Muller language" has only $2$ states. 

\AP We say that an "automaton" $\A$ is ""determinisable by pruning"" if there is a subset $\DD'\subseteq \DD$ of its transitions and an "initial state" $q_0$ such that the "subautomaton induced" by $\DD'$ with initial state $q_0$ is "deterministic" and "recognises" $\Lang{\A}$.

\begin{proposition}
	There exists an  "HD@@aut" "parity" "automaton" "recognising" a "Muller language" that is not "determinisable by pruning".
\end{proposition}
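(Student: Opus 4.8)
The plan is to show that the "automaton" $\A$ already exhibited in Example~\ref{ex-prelim:HD-aut} witnesses the statement; no new construction is needed. We have established there that $\A$ is "history-deterministic", that its "acceptance condition" is a "coB\"uchi" one (hence a "parity" one), and that $\Lang{\A} = L := \{w\in\SS^\oo \mid \minf(w)\subseteq \{a,b\} \tor \minf(w)\subseteq\{b,c\}\}$ over $\SS = \{a,b,c\}$. Since membership in $L$ depends only on $\minf(w)$, Remark~\ref{rmk-prelim:characterisation_Muller_Languages} shows that $L$ is the "Muller language" $\MullerC{\F}{\SS}$ for $\F = \{\{a\},\{b\},\{c\},\{a,b\},\{b,c\}\}$. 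So everything is in place except the claim that $\A$ is not "determinisable by prunning", which is what I would prove. For that I would first recall the precise shape of $\A$: it has three states $q_0, q_1, q_2$; the state $q_0$ has self-loops on $a$ and $b$ and a $c$-transition to $q_1$; symmetrically $q_2$ has self-loops on $b$ and $c$ and an $a$-transition to $q_1$; and $q_1$ has an $a$-transition to $q_0$, a $c$-transition to $q_2$, and two $b$-transitions, one to $q_0$ and one to $q_2$. Every transition is deterministic except these last two, and the only transitions with "output colour" $1$ are $q_0\re{c}q_1$ and $q_2\re{a}q_1$.

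The argument is by contradiction: suppose $\A$ is "determinisable by prunning", via a set $\DD'$ of transitions and a choice of "initial state" $q$, giving a "deterministic" "subautomaton" $\A'$ with $\Lang{\A'} = L$. The first step is to pin down $\A'$ almost completely. I claim $\A'$ must be "complete" on its "accessible part": indeed, if some "accessible" pair $(p,x)$ had no outgoing $x$-transition in $\A'$, then choosing a finite word $u$ reaching $p$, the word $ux^\oo$ would admit no infinite run and hence be rejected by $\A'$, while $\minf(ux^\oo) = \{x\}\in\F$ and $L$ is "prefix-independent", so $ux^\oo\in L$ --- a contradiction. Now, in $\A$ the unique $c$-transition from $q_0$ and the unique $a$-transition from $q_2$ both lead to $q_1$, and the unique $a$- and $c$-transitions from $q_1$ lead to $q_0$ and $q_2$; combined with "completeness" of $\A'$ this forces that whenever one of the three states is "accessible" in $\A'$, so are the other two. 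As $q$ is "accessible", all three states occur in $\A'$, and consequently every transition of $\A'$ out of $q_0$ or $q_2$, together with the $a$- and $c$-transitions out of $q_1$, coincides with the corresponding (forced) transition of $\A$. The only remaining choice is which of the two $b$-transitions from $q_1$ lies in $\DD'$.

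This leaves exactly two cases, each dispatched by exhibiting a word of $L$ that $\A'$ rejects regardless of the chosen "initial state". If $\A'$ keeps $q_1\re{b}q_0$, take $w = ab(cb)^\oo$; then $\minf(w)=\{b,c\}$, so $w\in L$, yet from any of $q_0, q_1, q_2$ the run of $w$ in $\A'$ is eventually trapped in the loop $q_0\re{c}q_1\re{b}q_0$, producing the "colour" $1$ infinitely often and thus rejecting $w$. If instead $\A'$ keeps $q_1\re{b}q_2$, the symmetric word $w' = cb(ab)^\oo$ (with $\minf(w')=\{a,b\}$, hence $w'\in L$) forces every run of $\A'$ into the loop $q_2\re{a}q_1\re{b}q_2$, again producing "colour" $1$ infinitely often. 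In both cases $\Lang{\A'}\neq L$, a contradiction, so $\A$ is not "determinisable by prunning".

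I expect the only genuinely delicate point to be the second paragraph --- checking that "completeness" plus the rigidity of the transitions out of $q_0$ and $q_2$ in $\A$ really pins $\A'$ down to the single binary choice at $q_1$, uniformly over all candidate "initial states". Once that is in place, the final case analysis is a routine inspection of two periodic runs.
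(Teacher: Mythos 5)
Your proof is correct and takes exactly the route the paper intends: the paper merely asserts that the automaton of Example~\ref{ex-prelim:HD-aut} cannot be made deterministic by removing transitions, and your completeness-plus-case-analysis supplies the verification it leaves implicit. The one point read off the figure (which transitions carry colour $1$) is not actually needed in that precise form, since correctness of $\A$ on $(cba)^\oo$ and $(abc)^\oo$ already forces both cycles $q_0\re{c}q_1\re{b}q_0$ and $q_2\re{a}q_1\re{b}q_2$ to contain a colour-$1$ transition, which is all your final case analysis uses.
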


\paragraph*{Optimality of the parity condition.}
\begin{proof}[Proof of Theorem~\ref{thm-zt:optimality_ZTparity-priorities}]
	Let $L= \MullerC{\F}{\SS}$.
	The "ZT-parity-automaton" of $L$ is a "parity" "automaton" recognising $L$ using colours in $[\minparityZ{\F}, \maxparityZ{\F}]$, therefore, the "parity index@@atMost" of $L$ is at most $[\minparityZ{\F}, \maxparityZ{\F}]$.
	
	To prove that the "parity index is not less" than $[\minparityZ{\F}, \maxparityZ{\F}]$, we use the Flower Lemma~\ref{lemma:flower-lemma}. The language $L$ is trivially recognised by a "deterministic" "Muller" "automaton" $\A_L$ with just one state $q$, transitions $q\re{a:a}q$ for each $a\in \SS$, and "acceptance condition" given by $L$ itself. Let $n_1\ancestor n_2\ancestor \dots \ancestor n_d$ be a branch of maximal length of $\zielonkaTree{\F}$ (that must verify $d= \maxparityZ{\F} - \minparityZ{\F}$, and that the "root" $n_1$ is a "round node" if and only if $\minparityZ{\F}$ is even). If we let $\ell_i$ be the "cycle" in $\A_L$ containing exactly the transitions corresponding to letters in $\nu(n_i)$, we obtain that $\ell_1\supsetneq \ell_2\supsetneq \dots \supsetneq \ell_d$ is a "$d$-flower" over $q$, which is "positive@@flower" if and only if  $n_1$ is a "round node". Lemma~\ref{lemma:flower-lemma} allows us to conclude.
\end{proof}

\paragraph*{Minimality of the ZT-parity-automaton with respect to deterministic automata.}

Before presenting the proof of Theorem~\ref{thm-zt:strong_optimality_ZTparity}, we prove a weaker result, namely, that the "ZT-parity-automaton" is minimal amongst \emph{deterministic} parity automata recognising a "Muller language".

\begin{theorem}[Minimality of the Zielonka Tree automaton with respect to deterministic automata]\label{thm-zt:weak_optimality_ZTparity}
	Let $\A$ be a "DPA"  "recognising@@automaton" a "Muller language" $\MullerC{\F}{\SS}$. Then, $|\zielonkaAutomaton{\F}| \leq \size{\A}$.
\end{theorem}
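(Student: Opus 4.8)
The strategy is to show that any deterministic parity automaton $\A$ recognising $\MullerC{\F}{\SS}$ must contain, in its transition structure, a "copy" of the branching pattern of the Zielonka tree $\zielonkaTree{\F}$, which forces $\size{\A} \geq |\leaves(\zielonkaTree{\F})| = |\zielonkaAutomaton{\F}|$. Concretely, I would proceed by induction on the Zielonka tree. For each node $n$ of $\zielonkaTree{\F}$, I would exhibit a state $q_n$ of $\A$ together with a cycle $\ell_n$ in $\A$ based at $q_n$ whose set of letters read is exactly $\nu(n)$ (here I use that $\A$ is deterministic, so for a set of letters $X$ the "$X$-closed" part of $\A$ reachable from a suitable state has a well-defined final SCC). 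The key structural claim is that distinct leaves of $\zielonkaTree{\F}$ give rise to distinct states of $\A$.

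**Key steps.** First, since $\A$ is deterministic and complete, from any state $q$ and any non-empty $X \subseteq \SS$ one can consider runs that eventually read only letters of $X$; determinism guarantees that after reading a long enough $X$-word one lands in a fixed final SCC of the sub-automaton restricted to $X$-transitions, and this SCC carries a cycle $\ell$ with $\gamma(\ell)$'s letter-set equal to $X$. The acceptance of this cycle is determined by whether $X \in \F$ (correctness of $\A$, via the characterisation that the acceptance of a run depends only on $\minf$, Remark~\ref{rmk-prelim:characterisation_Muller_Languages} applied to $\Muller\F$, together with prefix-independence). Second, I would descend the tree: at the root, pick any state; at a node $n$ with chosen state $q_n$ and cycle $\ell_n$ reading $\nu(n)$, for each child $n_i$ of $n$ (labelled $\nu(n_i) \subsetneq \nu(n)$), route inside $\ell_n$ reading only letters of $\nu(n_i)$ to reach a state $q_{n_i}$ in the $\nu(n_i)$-final-SCC, with cycle $\ell_{n_i}$. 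Third — the crucial separation argument — I would show that for two leaves $l \neq l'$ the associated states $q_l, q_{l'}$ are distinct. Let $n$ be their deepest common ancestor and $n_1 \ncong n_2$ the two children of $n$ through which $l$ and $l'$ descend. If $q_l = q_{l'} = q$, then $q$ lies in a cycle reading only $\nu(n_1)$ and in one reading only $\nu(n_2)$; since both are based at the same state, their union is a cycle $\ell$ with letter-set $\nu(n_1) \cup \nu(n_2)$. But by Remark~\ref{rmk-ZT:union-changes-acceptance}, $\nu(n_1)\cup\nu(n_2) \in \F \iff \nu(n_1) \notin \F$, while $\ell \supseteq \ell_{n_1}$ and $\ell \supseteq \ell_{n_2}$ forces a contradiction with the parity condition: iterating within $\ell_{n_1}$ vs. within $\ell$ produces runs whose infinitely-often-seen-colours sets are nested, yet whose acceptance statuses must flip, and the minimum priority seen cannot simultaneously be even and odd in the required way. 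Formalising this uses that a parity cycle containing a sub-cycle of opposite acceptance status exhibits a "$2$-flower", and reading these priorities out of a single state $q$ is impossible because $\gamma$ is fixed on edges.

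**Main obstacle.** The delicate point is the separation step: carefully arranging the cycles $\ell_n$ so that they are genuinely based at the claimed states (not merely reachable from them) and so that the union argument is legitimate. The cleanest way is probably to work inside the (unique, by determinism) final SCC of the "$\nu(n)$-closed" subautomaton reachable from $q_n$, define $q_{n_i}$ to lie in the final SCC of the $\nu(n_i)$-closed subautomaton of that, and observe all relevant cycles live in a common SCC hence pass through a common state. Then the union of two cycles through that common state is again a cycle, and its acceptance is dictated both by the parity condition (min priority on it) and by membership of $\nu(n_1)\cup\nu(n_2)$ in $\F$; Remark~\ref{rmk-ZT:union-changes-acceptance} makes these incompatible unless $q_{l} \neq q_{l'}$. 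I expect this bookkeeping — keeping track of which SCC each cycle lives in and ensuring basepoints coincide — to be where almost all the real work lies; everything else is routine once that is set up. Finally, since the map $l \mapsto q_l$ from $\leaves(\zielonkaTree{\F})$ to states of $\A$ is injective, $\size{\A} \geq |\leaves(\zielonkaTree{\F})| = |Q(\zielonkaAutomaton{\F})| = |\zielonkaAutomaton{\F}|$, as claimed.
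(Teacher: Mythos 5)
Your proposal is correct and takes essentially the same route as the paper's proof: descend the Zielonka tree through nested $X$-closed final SCCs of the deterministic automaton, and separate the pieces corresponding to distinct children by observing that in a parity automaton the union of two cycles of the same status through a common state keeps that status (Lemma~\ref{lemma-ZT:union_cycles_parity-TS}), contradicting the alternation property of the Zielonka tree (Remark~\ref{rmk-ZT:union-changes-acceptance}), exactly as in Lemma~\ref{lemma-ZT:disjoint-children-FSCC}. The paper packages this as an induction on the height of the tree summing the sizes of pairwise disjoint FSCCs, whereas you build an injective map from leaves to states, but this is only a difference in bookkeeping.
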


We recall that, by Remark~\ref{rmk-prelim:restriction-to-accessible} and Lemma~\ref{lemma-prelims:prefix-indep-automaton}, we can assume that all the states of "automata" "recognising" "Muller languages" are "accessible", and that any of them can be chosen to be "initial". When considering "subautomata" of these "automata", we will sometimes not mention their "initial state".

Let $\A = (Q, \SS, I, \GG, \Delta, \WW)$ be an "automaton", and let $X\subseteq \SS$ be a subset of the "input alphabet". 
\AP We say that a "subgraph" $\S$ of the "underlying graph" of $\A$ is ""$X$-closed@@automaton"" if for every state $q$ in $\S$ and every letter $a\in X$ there is some transition $q\re{a:c}q'$ in $\S$.
\AP An ""$X$-final strongly connected component"" ($X$-FSCC) of $\A$ is an "$X$-closed" "final SCC" in the graph obtained by taking the restriction of the "underlying graph" of $\A$ to the edges labelled by letters in $X$. We remark that a subset $S\subseteq Q$ is the set of states of an "$X$-FSCC" if and only if:
\begin{itemize}
	\item for any two states $q, q'\in S$ there is a finite word $w\in X^*$ labelling a finite "path" from $q$ to $q'$, and
	\item if $q\in S$ and there is a finite "path" from $q$ to $q'$ labelled with a word $w\in X^*$, then $q'\in S$.
\end{itemize}

\begin{lemma}\label{lemma-ZT:existance-X-FSCC}
	Let $\A$ be a "complete" "automaton". For every subset $X\subseteq \SS$, $\A$ contains an "accessible@@set" "$X$-FSCC".
\end{lemma}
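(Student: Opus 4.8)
The plan is to build the required $X$-FSCC by a simple reachability-and-iteration argument inside the subgraph of $\A$ obtained by keeping only the transitions labelled by letters of $X$; call this subgraph $\A_X$. Since $\A$ is "complete", every state of $\A_X$ has at least one outgoing $a$-transition for each $a\in X$, so $\A_X$ has no sink relative to $X$-labelled moves: from every state one can always continue reading letters of $X$. First I would pick an "accessible" state $q_0$ of $\A$ (which exists, e.g.\ an "initial state"), and consider the set $R$ of states reachable from $q_0$ by $X$-labelled "paths" (including $q_0$ itself). This set is nonempty, finite, and "accessible@@set" (each state of $R$ is reachable from the accessible $q_0$, hence accessible). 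Moreover $R$ is $X$-closed in the sense of the first bullet of the definition above it, once we pass to a final component inside it.

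Next I would iterate: inside the finite induced subgraph on $R$, as long as $R$ is not closed under $X$-labelled moves (i.e.\ there is $q\in R$, $a\in X$ with the $a$-successor outside $R$ — which cannot happen by construction since $R$ was defined by reachability — so instead the real iteration is on SCCs), I would descend along the SCC structure. Formally: the $X$-labelled subgraph restricted to $R$ is finite and has no $X$-sink, so its "SCC"-DAG has a terminal SCC $S$; being terminal in that DAG means no $X$-labelled edge leaves $S$. Then $S$ is a "final SCC" of $\A_X$, it is "$X$-closed@@automaton" (every state has all its $a$-transitions for $a\in X$ staying in $S$, because completeness gives at least one such transition and terminality forbids it from leaving), and it is "accessible@@set" because $S\subseteq R$ and every state of $R$ is "accessible@@vertex". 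This $S$ is exactly an "$X$-FSCC" of $\A$, verifying the two displayed characterising properties: strong connectivity inside $S$ via $X$-words, and the absorbing property that $X$-paths from states of $S$ stay in $S$.

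I do not expect a serious obstacle here; this is a routine finiteness argument. The only point requiring a line of care is checking that the terminal SCC $S$ is genuinely $X$-closed rather than merely having no $X$-edges leaving it: one must invoke "completeness" of $\A$ to know that for every $q\in S$ and $a\in X$ there \emph{is} an $a$-transition out of $q$, and then terminality of $S$ forces its target to lie in $S$. Combining these two facts gives precisely the definition of "$X$-closed@@automaton", and since $S$ is also a "final SCC" of the $X$-restricted graph, it is an "$X$-FSCC"; its accessibility is inherited from $R$. This completes the argument.
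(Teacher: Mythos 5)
Your argument is correct and follows essentially the same route as the paper's proof: restrict to the $X$-labelled transitions, take an accessible part, observe there are no sinks there by completeness, extract a final SCC, and use completeness plus terminality to see it is $X$-closed, hence an $X$-FSCC. The brief mid-paragraph confusion about iterating on $R$ is harmlessly self-corrected, and the final SCC-DAG argument is exactly the paper's (slightly more detailed) reasoning.
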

\begin{proof}
	As any "graph" without sinks contains some "final SCC", the "accessible part" of the restriction of $\A$ to edges labelled by letters in $X$ contains one. By "completeness@@aut" of $\A$, one such "final SCC" has to be an "$X$-closed" "subgraph", so it is an "$X$-FSCC".
\end{proof}

\begin{lemma}\label{lemma-ZT:X-FSCC-induce-automata}
	Let $\A$ be a "DMA" "recognising" a "Muller language" $\MullerC{\F}{\SS}$, let $X\subseteq \SS$ and let $\S_X$ be an "accessible@@set" "$X$-FSCC" of $\A$. Then, the "automaton induced by" $\S_X$ is a "deterministic" "automaton" "recognising" $\MullerC{\restSubsets{\F}{X}}{X} = \{w\in X^\oo \mid \minf(w)\in \F\}$.
\end{lemma}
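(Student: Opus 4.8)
The plan is to fix any state $q_0$ of $\S_X$ as "initial state" and show that the "subautomaton induced by" $\S_X$ with initial state $q_0$, call it $\B$ and regard it as an "automaton" over the alphabet $X$, behaves on every word $w\in X^\oo$ exactly as $\A$ does when started from $q_0$. First I would record the two structural facts that make this possible: $\B$ is "deterministic" because $\A$ is, and $\B$ is "complete" over $X$ precisely because $\S_X$ is "$X$-closed@@automaton" --- for every state $q$ of $\S_X$ and every $a\in X$ there is an $a$-transition of $\A$ staying inside $\S_X$, which by "determinism" of $\A$ must be \emph{the} unique $a$-transition of $\A$ from $q$. So $\B$ is a "deterministic" "automaton" as claimed.

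Next I would prove that for $w\in X^\oo$ the unique "run over $w$" in $\B$ from $q_0$ coincides with the unique "run over $w$" in $\A$ from $q_0$. This is a straightforward induction on prefixes of $w$: the run starts at $q_0\in\S_X$, and whenever the run of $\A$ sits at a state of $\S_X$ and reads a letter $a\in X$, the next transition of $\A$ lies in $\S_X$ by "$X$-closedness" together with "determinism", hence is a transition of $\B$; conversely the transition taken by $\B$ is a transition of $\A$. Since the "acceptance condition" of $\B$ is the restriction of that of $\A$ and this common run lives entirely inside $\S_X$, the run is "accepting@@run" in $\B$ if and only if it is "accepting@@run" in $\A$. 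Therefore $w\in\Lang{\B}$ if and only if $w\in\Lang{\initialTS{\A}{q_0}}$.

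To conclude, I would invoke Lemma~\ref{lemma-prelims:prefix-indep-automaton}: since $\MullerC{\F}{\SS}$ and the "acceptance set" used by $\A$ are "prefix-independent", and $q_0$ is "reachable" (it belongs to the "accessible@@set" component $\S_X$), we get $\Lang{\initialTS{\A}{q_0}}=\Lang{\A}=\MullerC{\F}{\SS}$. Hence $w\in\Lang{\B}$ iff $\minf(w)\in\F$. Finally, every $w\in X^\oo$ satisfies $\emptyset\neq\minf(w)\subseteq X$, so $\minf(w)\in\F$ iff $\minf(w)\in\F\cap\pow{X}=\restSubsets{\F}{X}$, which yields $\Lang{\B}=\MullerC{\restSubsets{\F}{X}}{X}=\{w\in X^\oo\mid \minf(w)\in\F\}$.

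The step requiring the most care is the identification of the run of $\B$ with the run of $\A$: one must keep track of the interplay between "$X$-closedness", which only guarantees \emph{existence} of an $a$-transition inside $\S_X$, and "determinism", which promotes it to \emph{the} $a$-transition of $\A$, and of the fact that the induced "subautomaton" inherits its "acceptance condition" by restriction, so that acceptance of a run staying inside $\S_X$ is the same whether evaluated in $\B$ or in $\A$. Beyond this bookkeeping the argument is routine.
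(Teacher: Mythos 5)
Your proof is correct and follows essentially the same route as the paper's: the key ingredients are identical, namely that $X$-closedness together with determinism forces the run of $\A$ over any $w\in X^\oo$ started inside $\S_X$ to stay in $\S_X$ and coincide with the run of the induced subautomaton, and that prefix-independence (of both the recognised Muller language and the Muller acceptance set) allows replacing the original initial state by a state of $\S_X$. The only cosmetic difference is that you delegate this last step to Lemma~\ref{lemma-prelims:prefix-indep-automaton}, whereas the paper argues it inline via a connecting word $u_0$ leading from the initial state to the chosen state of $\S_X$.
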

\begin{proof}
	Let $\A = (Q,\Sigma, q_0, \Gamma, \transAut{}, \WW)$ (where $\WW$ is a "Muller language"). Let $q_S$ be the state in $\S_X$ chosen to be "initial", and let $u_0\in \SS^*$ be a finite word such that the "run over" $u_0$ from $q_0$ ends in $q_S$.
	By "prefix-independence" of "Muller languages", a word $w\in X^\oo$ belongs to $\MullerC{\F}{\SS}$ if and only if $u_0w\in \MullerC{\F}{\SS}$, and therefore, $\A$ "accepts@@aut" $w$ if and only if it "accepts@@aut" $u_0w$. 
	Since the "run@@aut" in~$\A$ over $u_0w$ and the "run@@aut" in $\S_X$ over $w$ have a suffix in common, and by "prefix-independence" of~$\WW$, we have that $w\in \Lang{\S_X}$ if and only if $u_0w\in \Lang{\A}$ if and only if $\minf(w)\in \F$.
\end{proof}

The next lemma states that, in a "parity" "automaton", the union of two "accepting@@cycle" "cycles" must be "accepting@@cycle", and similarly for "rejecting cycles". 
In Section~\ref{subsec-corollaries:typeness}, we will see that this property is actually a characterisation of "parity" "transition systems" (Proposition~\ref{prop-typ:parity ACD type}).
\begin{lemma}\label{lemma-ZT:union_cycles_parity-TS}
	Let $\A$ be a "parity" "automaton". Let $\ell_1, \ell_2\in \cycles{\A}$ be two "cycles" with some "state in common". If $\ell_1$ and $\ell_2$ are both "accepting@@cycles" (resp. "rejecting@@cycles"), then $\ell_1 \cup \ell_2$ is also "accepting@@cycles" (resp. "rejecting@@cycles").
\end{lemma}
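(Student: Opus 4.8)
The plan is to use the fact that a parity acceptance condition is determined by a priority function $p\colon E \to \NN$, with a cycle $\ell$ being accepting if and only if $\min p(\ell)$ is even. First I would recall that, by Remark~\ref{rmk:colours=edges} and the definition of a parity automaton, we may assume the acceptance condition of $\A$ is given by such a priority function on edges, with $w$ accepted iff the least priority seen infinitely often is even; hence a cycle $\ell$ is accepting iff $\min_{e\in\ell} p(e)$ is even, and rejecting iff it is odd. Next, since $\ell_1$ and $\ell_2$ share a state $v$, by the remark on cycles in Section~\ref{subsect-prelim:acceptance-conditions} (the union of two cycles with a common state is again a cycle), $\ell_1 \cup \ell_2 \in \cycles{\A}$, so it makes sense to ask whether it is accepting.

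The key computation is then immediate: $\min_{e \in \ell_1 \cup \ell_2} p(e) = \min\{\min_{e\in\ell_1} p(e),\ \min_{e\in\ell_2} p(e)\}$. If both $\ell_1$ and $\ell_2$ are accepting, then $\min p(\ell_1)$ and $\min p(\ell_2)$ are both even; let $m = \min p(\ell_1\cup\ell_2)$ be the smaller of the two, which is therefore even, so $\ell_1\cup\ell_2$ is accepting. Dually, if both are rejecting, $\min p(\ell_1)$ and $\min p(\ell_2)$ are both odd; their minimum $m$ is odd, so $\ell_1\cup\ell_2$ is rejecting.

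There is essentially no obstacle here — the only point requiring a word of care is that "the union of two cycles with a common state is a cycle" (already recorded in the preliminaries, where it is also noted that a common state is exactly the condition needed), so that the statement "$\ell_1\cup\ell_2$ is accepting/rejecting" is meaningful; everything else is the one-line observation that $\min$ distributes over the union and preserves parity when the two arguments have the same parity. I would write the proof in two or three sentences along exactly these lines.
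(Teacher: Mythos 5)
Your proposal is correct and coincides with the paper's own proof: both take the colouring (priority) function $\gg\colon \DD\to\NN$, observe that $\min\gg(\ell_1\cup\ell_2)=\min\{\min\gg(\ell_1),\min\gg(\ell_2)\}$, and conclude that this minimum has the same parity as the two cycles' minima. The extra remark that the union of two cycles sharing a state is again a cycle is indeed already recorded in the preliminaries, so nothing is missing.
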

\begin{proof}
	Let $\gg\colon \DD\to \NN$ be the "colouring function" of $\A$. The "cycles" $\ell_1$ and $\ell_2$ are "accepting@@cycle" if and only if $d_i = \min \gg(\ell_i)$ is even, for $i=1,2$. In this case, $\min \gg(\ell_1 \cup \ell_2) = \min\{ d_1, d_2\}$ is even. The proof is symmetric if $\ell_1$ and $\ell_2$ are "rejecting@@cycle".
\end{proof}

By a small abuse of notation, we will say that two "SCC" $\S_1$ and $\S_2$ are disjoint, and write $\S_1\cap \S_2 = \emptyset$, if their sets of states are disjoint.

\begin{lemma}\label{lemma-ZT:disjoint-children-FSCC}
	Let $\F\subseteq \powplus{\SS}$ be a family of subsets with "Zielonka tree" $\zielonkaTree{\F} = (N, \ancestor, \nu)$, and let~$\A$ 
	be a "DPA" "recognising" $\MullerC{\F}{\SS}$. Let $n\in N$ be a node of the "Zielonka tree" of $\F$, and let $n_1, n_2\in \children_{\zielonkaTree{\F}}(n)$ be two different "children" of $n$. If $\S_1$ and $\S_2$ are two "accessible@@sets" "$\nu(n_1)$-FSCC" and "$\nu(n_2)$-FSCC" in $\A$, respectively, then $\S_1\cap \S_2 = \emptyset$.
\end{lemma}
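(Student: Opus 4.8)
The plan is to argue by contradiction: suppose $\S_1\cap \S_2 \neq \emptyset$, so there is a state $q$ lying in both the $\nu(n_1)$-FSCC $\S_1$ and the $\nu(n_2)$-FSCC $\S_2$. First I would recall the structural consequences of being an $X$-FSCC: since $\S_1$ is a $\nu(n_1)$-FSCC, from $q$ we can read any word in $\nu(n_1)^*$ and stay inside $\S_1$, and every state reachable from $q$ by a $\nu(n_1)$-word lies in $\S_1$; in particular, $\S_1$ contains a "cycle" $\ell_1$ over $q$ whose set of labels is exactly $\nu(n_1)$ (take a finite path from $q$ through all states of $\S_1$ and back, reading all letters of $\nu(n_1)$, which is possible by strong connectivity of the $\nu(n_1)$-restriction and completeness). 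Symmetrically, $\S_2$ contains a "cycle" $\ell_2$ over $q$ with label set exactly $\nu(n_2)$. Both $\ell_1$ and $\ell_2$ pass through $q$, so $\ell_1\cup\ell_2$ is again a "cycle" over $q$, with label set $\nu(n_1)\cup\nu(n_2)$.

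Next I would use the acceptance information coming from the "Zielonka tree". By Remark~\ref{rmk-ZT:union-changes-acceptance}, since $n_1$ and $n_2$ are distinct "children" of $n$, we have $\nu(n_1)\in\F \iff \nu(n_2)\in\F \iff \nu(n_1)\cup\nu(n_2)\notin\F$. Because $\A$ "recognises" $\MullerC{\F}{\SS}$ and $q$ is "accessible", the acceptance status of a "cycle" $\ell$ over $q$ in $\A$ is determined by whether $\gg(\ell)\in\F$ (here I use that $\A$ is a "DMA" — indeed a "DPA" — and the argument of Lemma~\ref{lemma-ZT:X-FSCC-induce-automata}: the local language around $q$ is the corresponding "Muller language", so a "cycle" is "accepting@@cycle" iff its label set belongs to $\F$). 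Hence $\ell_1$ is "accepting@@cycle" iff $\nu(n_1)\in\F$, $\ell_2$ is "accepting@@cycle" iff $\nu(n_2)\in\F$, and $\ell_1\cup\ell_2$ is "accepting@@cycle" iff $\nu(n_1)\cup\nu(n_2)\in\F$. Combining with the Zielonka-tree relation: $\ell_1$ and $\ell_2$ have the same acceptance status, and $\ell_1\cup\ell_2$ has the opposite one.

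Finally I would invoke Lemma~\ref{lemma-ZT:union_cycles_parity-TS}: in a "parity" "automaton", the union of two "accepting@@cycles" (resp. two "rejecting@@cycles") sharing a "state in common" is again "accepting@@cycle" (resp. "rejecting@@cycle"). Since $\ell_1$ and $\ell_2$ are both "accepting@@cycle" or both "rejecting@@cycle" and share the state $q$, $\ell_1\cup\ell_2$ must have the \emph{same} acceptance status as them — contradicting the conclusion of the previous paragraph that it has the opposite status. Therefore no such $q$ exists and $\S_1\cap\S_2=\emptyset$. The only step requiring a little care is the clean construction of the cycles $\ell_i$ with label set exactly $\nu(n_i)$ and the justification that their acceptance in $\A$ matches membership of the label set in $\F$; this is where completeness of $\A$ and the prefix-independence argument of Lemma~\ref{lemma-ZT:X-FSCC-induce-automata} are used, and it is the main (minor) obstacle — everything else is a direct combination of Remark~\ref{rmk-ZT:union-changes-acceptance} and Lemma~\ref{lemma-ZT:union_cycles_parity-TS}.
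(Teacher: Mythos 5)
Your proposal is correct and follows essentially the same route as the paper's proof: build cycles with label sets $\nu(n_1)$ and $\nu(n_2)$ inside the two FSCCs, observe that a shared state makes their union a cycle labelled $\nu(n_1)\cup\nu(n_2)$, and derive a contradiction from Remark~\ref{rmk-ZT:union-changes-acceptance} together with Lemma~\ref{lemma-ZT:union_cycles_parity-TS}. The only cosmetic difference is that the paper simply takes $\ell_i$ to be the full edge set of $\S_i$ (which is already a cycle with the right labels) rather than constructing cycles through the common state $q$, and it leaves implicit the determinism/accessibility step linking a cycle's acceptance to membership of its label set in $\F$, which you spell out.
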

\begin{proof}
	Without loss of generality, we can assume that all states in $\A$ are "accessible", and since the language that $\A$ recognises is "prefix-independent", we can also suppose that $\A$ is "complete". 
	Let $l_\SS\colon \DD \to \SS$ be the labelling of the transitions of $\A$ with "input letters".
	Let $\S_i$ be a "$\nu(n_i)$-FSCC" in~$\A$, for $i=1,2$, and let $\ell_i$ be its set of edges, which form a "cycle" satisfying $l_\SS(\ell_i) = \nu(n_i)$. 
	Suppose by contradiction that $\S_1\cap \S_2 \neq \emptyset$. Then $\ell_1$ and $\ell_2$ have some state in common, and their union is also a "cycle" satisfying $l_\SS(\ell_1\cup\ell_2) = \nu(n_1)\cup \nu(n_2)$. By Lemma~\ref{lemma-ZT:union_cycles_parity-TS}, we must have
	\[ \ell_1 \text{ "accepting" } \iff \ell_1\cup \ell_2 \text{ "accepting"},\]
	contradicting the fact that $\nu(n_1)\in \F$ if and only if $\nu(n_1)\cup \nu(n_2)\notin \F$ (Remark~\ref{rmk-ZT:union-changes-acceptance}).	
\end{proof}

\begin{proof}[Proof of Theorem~\ref{thm-zt:weak_optimality_ZTparity}]
	We proceed by induction in the "height" of $\zielonkaTree{\F}$. For "height" $1$, the result is trivial, since $|\zielonkaAutomaton{\F}| = 1$.
	Let $\A$ be a "DPA" "recognising" $\MullerC{\F}{\SS}$. Let $n_0$ be the "root" of $\zielonkaTree{\F}$ and $n_1, n_2, \dots, n_k$ be an enumeration of the "children" of $n_0$ in $\zielonkaTree{\F}$.
	By Lemma~\ref{lemma-ZT:existance-X-FSCC}, for each $i\in \{1, \dots, k\}$, $\A$ contains some "accessible" "$\nu(n_i)$-FSCC" $\S_i$, and by Lemma~\ref{lemma-ZT:disjoint-children-FSCC} these must be pairwise disjoint. By Lemma~\ref{lemma-ZT:X-FSCC-induce-automata}, each $\S_i$ induces a "deterministic" "subautomaton@@induced" "recognising" $\MullerC{\restSubsets{\F}{\nu(n_i)}}{\nu(n_i)}$. Let $\Z_i$ by the "subtree of $\zielonkaTree{\F}$ rooted at" $n_i$, which we recall that is the "Zielonka tree" for $\restSubsets{\F}{\nu(n_i)}$. By induction hypothesis, it must be the case that $|\leaves(\Z_i)| \leq |\S_i|$, so we can conclude:	
	\[ |\zielonkaAutomaton{\F}| = |\leaves(\zielonkaTree{\F})| = \sum\limits_{i=1}^k |\leaves(\Z_i)| \leq  \sum\limits_{i=1}^k |\S_i| \leq |\A|.\qedhere\]
\end{proof}

\paragraph*{Minimality of the ZT-parity-automaton with respect to HD automata.}
We intend to prove Theorem~\ref{thm-zt:strong_optimality_ZTparity}, that is, that for any $\F\subseteq \powplus{\SS}$, the automaton $\zielonkaAutomaton{\F}$ is minimal amongst "HD@@aut" "parity" "automata" "recognising" $\Muller{\F}$. 
We will follow the same proof scheme than in the "deterministic" case, performing an induction over the "height" of the "Zielonka tree". Assume that~$\A$ is an "HD@@aut" "parity" "automaton" for $\Muller{\F}$ and that $n_0$ is the "root" of $\zielonkaTree{\F}$ having $n_1, \dots, n_k$ as "children". For each "child" $n_i$ we want to find an "HD@@aut" "subautomaton" $\A_i$ "recognising" the "language associated@@Muller" to $\restSubsets{\F}{\nu(n_i)}$ in such a way that the automata $\A_i$ are pairwise disjoint, which would allow us to carry out the induction and obtain that $|\A| \geq |\leaves(\zielonkaTree{\F})| = |\zielonkaAutomaton{\F}|$. Our objective will be therefore to prove:

\begin{proposition}\label{prop-ZT:existance-disjoint-subautomata}
	Let $n_0$ be the "root" of the "Zielonka tree" of $\F$, and let $n_1, n_2, \dots, n_k$ be an enumeration of the "children" of $n_0$. If $\A$ is an "HD automaton" "recognising" $\MullerC{\F}{\SS}$, then, $\A$ contains $k$ pairwise disjoint "subautomata" $\A_1, \dots, \A_k$ that are "history-deterministic" and such that $\Lang{\A_i} = \MullerC{\restSubsets{\F}{\nu(n_i)}}{\nu(n_i)}$.
\end{proposition}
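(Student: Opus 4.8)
The plan is to imitate the deterministic case (Theorem~\ref{thm-zt:weak_optimality_ZTparity}) but to replace the key structural facts by their history-deterministic analogues. The deterministic argument rested on three pillars: (1) existence of an accessible $X$-FSCC for every $X\subseteq\SS$ (Lemma~\ref{lemma-ZT:existance-X-FSCC}); (2) such an $X$-FSCC induces a deterministic automaton for $\MullerC{\restSubsets{\F}{X}}{X}$ (Lemma~\ref{lemma-ZT:X-FSCC-induce-automata}); and (3) the FSCCs attached to distinct children of a node are disjoint, because in a parity automaton the union of two cycles sharing a state has the same acceptance status as each of them (Lemmas~\ref{lemma-ZT:union_cycles_parity-TS} and~\ref{lemma-ZT:disjoint-children-FSCC}). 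Pillar~(3) already only uses that $\A$ is a parity automaton, so it transfers verbatim. The obstacle is pillar~(1)–(2): in an HD automaton the resolver may want to leave an $X$-closed subgraph, so a naive $X$-FSCC need not carry an HD subautomaton for $\restSubsets{\F}{X}$.

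First I would fix a sound resolver $(r_0,r)$ for $\A$; by Remark~\ref{rmk-prelim:restriction-to-accessible} we may assume every state is reachable using it, and by Lemma~\ref{lemma-prelims:prefix-indep-automaton} (Muller languages and the parity acceptance set are prefix-independent) any reachable state may be taken as initial. The plan is, for each child $n_i$ with $X_i=\nu(n_i)$, to build the subautomaton $\A_i$ not as an arbitrary $X_i$-FSCC but as the set of states the resolver actually settles into when it is fed words over $X_i$: concretely, run the resolver on some word $u_0\in\SS^*$ reaching an $X_i$-FSCC, then keep feeding it letters of $X_i$ and collect the states visited cofinally often; since $\A$ has finitely many states this set $S_i$ together with the transitions the resolver picks forms a strongly connected $X_i$-closed subgraph. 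Restricting $r$ to histories inside $S_i$ gives a resolver for the induced subautomaton $\A_i$, and it is sound for $\MullerC{\restSubsets{\F}{X_i}}{X_i}$: for $w\in X_i^\oo$ with $\minf(w)\in\F$, prefix-independence gives $u_0w\in\Lang\A$, so the induced run of $r$ over $u_0w$ is accepting, and its tail lies in $\A_i$ and is the run induced by the restricted resolver over (a tail of) $w$; conversely any word accepted by $\A_i$ has, after prepending $u_0$, an accepting run in $\A$, hence lies in $\Muller{\F}$ and therefore in $\restSubsets{\F}{X_i}$. This handles the HD-version of Lemmas~\ref{lemma-ZT:existance-X-FSCC} and~\ref{lemma-ZT:X-FSCC-induce-automata}.

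Next I would check disjointness. Each $S_i$ is in particular the state set of an accessible $X_i$-FSCC of $\A$ (viewing $\A$ as a parity automaton and forgetting history-determinism), so Lemma~\ref{lemma-ZT:disjoint-children-FSCC} applies directly — the proof there only used that $\A$ is a parity automaton and the combinatorial fact (Remark~\ref{rmk-ZT:union-changes-acceptance}) that $\nu(n_i)\in\F\iff \nu(n_i)\cup\nu(n_j)\notin\F$ for distinct children. Hence $S_i\cap S_j=\emptyset$ for $i\neq j$, and the $\A_i$ are pairwise disjoint HD subautomata with $\Lang{\A_i}=\MullerC{\restSubsets{\F}{\nu(n_i)}}{\nu(n_i)}$, which is exactly the statement of Proposition~\ref{prop-ZT:existance-disjoint-subautomata}.

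The main delicacy — and where I expect to have to be careful — is the second step: making precise that the resolver, once inside an $X_i$-FSCC and fed only letters of $X_i$, stays there and that the ``cofinally visited'' states form a single strongly connected $X_i$-closed subgraph whose restricted resolver is still sound. One has to argue that feeding $X_i$-letters forever and extracting the limit set is robust to the choice of $u_0$ and of the feeding sequence, and that soundness of the restriction follows from soundness of $(r_0,r)$ via prefix-independence. Once that bookkeeping is in place, the induction on the height of $\zielonkaTree{\F}$ closes exactly as in the proof of Theorem~\ref{thm-zt:weak_optimality_ZTparity}: $|\A|\ge\sum_i|S_i|\ge\sum_i|\leaves(\Z_i)|=|\leaves(\zielonkaTree{\F})|=|\zielonkaAutomaton{\F}|$, where $\Z_i$ is the subtree rooted at $n_i$ and we use the induction hypothesis on each $\A_i$, together with the HD-analogue of Lemma~\ref{lemma-prelims:prefix-indep-automaton} so that each $\A_i$ may be treated as an HD automaton with a freely chosen initial state.
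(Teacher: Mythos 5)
There is a genuine gap, and it sits exactly where the real difficulty of this proposition lies: the disjointness step when the root of the Zielonka tree is a \emph{round} node ($\SS\in\F$). Your claim that Lemma~\ref{lemma-ZT:disjoint-children-FSCC} ``only used that $\A$ is a parity automaton and the combinatorial fact'' of Remark~\ref{rmk-ZT:union-changes-acceptance} is not accurate: that lemma is stated and proved for \emph{deterministic} automata, and its proof implicitly uses determinism to identify the acceptance status of a cycle with the $\F$-membership of its set of input letters. For an HD automaton only one direction of this identification survives: a cycle whose letters form a set outside $\F$ must be rejecting, but a cycle whose letters form a set in $\F$ need not be accepting. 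Consequently, your argument does go through when $\SS\notin\F$ (there the children's labels $\nu(n_i)$ lie in $\F$, the resolver's limit cycles are accepting by soundness, and a shared state would yield an accepting run over a word with $\minf$ equal to $\nu(n_1)\cup\nu(n_2)\notin\F$, contradicting $\Lang{\A}=\MullerC{\F}{\SS}$ --- this is essentially Lemma~\ref{lemma-ZT:disjoint-projections-HD-parity} of the paper, which is indeed restricted to square nodes). But when $\SS\in\F$, the children's labels are \emph{rejecting} sets, the relevant cycles are rejecting, and two rejecting cycles sharing a state only give a rejecting run over a word that \emph{is} in the language --- which is no contradiction at all for a non-deterministic automaton, since the accepting run over that word may lie elsewhere. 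This is precisely the obstruction the paper points out, and it is why its proof of the round-root case abandons the FSCC/union-of-cycles scheme entirely and instead analyses the letter game of $\A$, builds a recursive attractor decomposition guided by the Zielonka tree of the condition $\impliesMuller{\F}{\parity}$, fixes the induced DJW/McNaughton-style winning strategy, and extracts $\SS_i$-closed subgames inside pairwise disjoint $\SS_i$-regions of that decomposition; disjointness then comes from the decomposition itself, not from a cycle-union argument.

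A secondary, more repairable issue is your construction of the subautomata $\A_i$ as the ``limit sets'' of an arbitrary sound resolver fed $X_i$-letters: an abstract resolver is a function of unbounded histories, so the states it settles into depend on the feeding word and on the access prefix $u_0$, and the restriction of $r$ to histories inside $S_i$ is neither obviously well defined nor guaranteed to stay in $S_i$ or to be $X_i$-closed. The paper avoids this by first passing to a finite-memory resolver (Lemma~\ref{lemma-ZT:finite-memory-resolvers}), taking $X_i$-FSCCs in the deterministic product $\A\prodMem{\nextmoveResolver}\M$, and projecting (Lemma~\ref{lemma-ZT:X-FSCC-in-product-induce-subautomaton}); your sketch could be patched along those lines. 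The missing round-root case, however, is not a matter of bookkeeping: without an argument of the kind the paper develops (or some substitute controlling \emph{which} accepting runs the resolver uses), the claimed disjointness simply does not follow.
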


The "non-determinism" of $\A$ will make this task considerably more laborious than in the previous paragraph, and we will have to thoroughly examine the strategies used by the "resolvers@@aut" for $\A$.
By the inherently asymmetric semantics of "non-deterministic" "automata", there are two well-differentiated cases to consider, depending on whether the "root" of the "Zielonka tree" is "round" ($\SS\in \F$) or "square" ($\SS\notin \F$).

In order to simplify the proof, we will assume that all states are "reachable using a sound resolver" and that all automata have a single "initial state", which can be done without loss of generality since a "resolver@@aut" for an "HD automaton" fixes such initial state in advance.

\subparagraph*{Case 1: The root of the Zielonka tree is a "square node": $\SS\notin \F$.}

\AP Let $\A = (Q,\Sigma, q_0, \GG, \DD, \WW)$ be a "non-deterministic automaton". A ""memory structure@@aut"" for $\A$ is a "memory skeleton" $\M$ over~$\DD$ together with a function $\intro*\nextmoveResolver\colon Q\times M \times \SS \to \DD$, where $M$ is the set of states of $\M$.
\AP We say that $(\M, \nextmoveResolver)$ ""implements@@resolver"" a "resolver@@aut" $(q_0,r)$ if  for all $a\in \SS$, $r(\ee,a) = \nextmoveResolver(q_0,m_0,a)$ and for all $\rr\in \DD^+$, $r(\rr,a) = \nextmoveResolver(\mtargetPath(\rr), \transMem(m_0, \rr), a)$, where $m_0$ is the initial state of $\M$ and $\transMem\colon M\times \DD^*\to M$ is its "update function@@mem".

\begin{lemma}[\cite{Boker2013NondetUnknownFuture}]\label{lemma-ZT:finite-memory-resolvers}
	Every "HD@@aut" "parity" "automaton" admits a "sound resolver@@aut" "implemented by@@resolver" a "finite memory structure".
\end{lemma}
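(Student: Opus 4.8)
The statement to prove is Lemma~\ref{lemma-ZT:finite-memory-resolvers}: every HD parity automaton admits a sound resolver implemented by a finite memory structure. This is attributed to~\cite{Boker2013NondetUnknownFuture}, so I expect the proof to be a clean application of a known fact about parity games together with the letter game.

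\textbf{The plan.} The plan is to pass through the letter game associated to the automaton. Given an HD parity automaton $\A$, being history-deterministic means precisely that Eve wins the letter game $\letterGame{\A}$: in this game, Adam reveals letters of an input word one at a time, Eve answers by choosing transitions of $\A$ reading those letters, and Eve wins a play if either the word built is not in $\Lang{\A}$ or the run she has built is accepting. A sound resolver is exactly a winning strategy for Eve in this game, and the initial choice $r_0$ corresponds to her first move. First I would make precise that $\letterGame{\A}$ is a game whose winning condition is $\omega$-regular; in fact, since $\A$ uses a parity acceptance condition and the objective ``$w\in\Lang{\A} \Rightarrow$ run accepting'' combines a parity condition on the run with the (non-parity) condition $w\notin\Lang{\A}$, the game is not immediately a parity game. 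So the argument has two layers.

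\textbf{Key steps.} The cleaner route is: (1) Build the letter game $\letterGame{\A}$ and observe it has a finite arena (its positions are roughly pairs (state of $\A$, last letter), plus a position where Adam is to move); its winning condition for Eve is the set of plays where either the projected input word lies outside $\Lang{\A}$ or the built run satisfies the parity condition of $\A$. (2) Compose $\letterGame{\A}$ with a deterministic parity automaton $\D$ recognising the winning condition — or, more economically, note that $\Lang{\A}$ itself is $\omega$-regular, take a deterministic parity automaton $\mathcal{C}$ for its complement, and form the product so that the resulting game is a parity game $\G$ on a finite arena; here Proposition~\ref{prop-prelim:composition_games_HD} / the product construction of Section~\ref{subsect-prelim:transition-systems} applies, and the size of $\G$ is finite. (3) Invoke positional determinacy of parity games (Emerson--Jutla~\cite{EmersonJutla91Determinacy}, cited in the introduction): since $\A$ is HD, Eve wins $\letterGame{\A}$, hence she wins $\G$, hence she has a positional winning strategy $\sigma$ in $\G$. (4) Project $\sigma$ back to a strategy in $\letterGame{\A}$: a positional strategy on the arena $Q_{\G} = Q_\A \times (\text{states of }\mathcal{C}) \times \Sigma$ is, by definition, implemented by a finite memory structure over $\A$ whose memory component is (state of $\mathcal{C}$), with update function driven by the letters read, and next-move function $\nextmoveResolver(q, m, a) = $ the transition dictated by $\sigma$ at $(q,m,a)$. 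This is exactly a memory structure $(\M,\nextmoveResolver)$ in the sense of the \kl{memory structure@@aut} definition just introduced, and it implements a sound resolver because any input word in $\Lang{\A}$ produces a play consistent with $\sigma$, which is winning, so the run built is accepting.

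\textbf{Main obstacle.} The only real subtlety is step~(2): one must be careful that the letter game's winning condition really can be turned into a parity condition on a finite product arena without Eve losing control — i.e. that forming the product with the deterministic automaton $\mathcal{C}$ for the complement of $\Lang{\A}$ is legitimate here (Eve does not move in $\mathcal{C}$; it just monitors Adam's letters, so it behaves like the ``suitable for transformations'' situation where the deterministic automaton's transitions are forced). Once the game is genuinely a finite parity game, positional determinacy does all the work, and translating a positional strategy into a finite memory structure is bookkeeping: the memory skeleton is the transition structure of $\mathcal{C}$ reading $\Sigma$, with a single initial state, and determinism of $\mathcal{C}$ gives the required at-most-one-transition property. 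I would also remark that this already shows something slightly stronger — the memory needed is bounded by the size of a deterministic parity automaton for $\Lang{\A}$ — but for the lemma it suffices to record finiteness. I do not expect any genuinely hard step; the lemma is essentially ``HD $=$ win letter game $\Rightarrow$ win a finite parity game $\Rightarrow$ finite-memory (in fact positional over the product) winning strategy''.
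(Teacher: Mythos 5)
The paper does not prove this lemma at all: it is imported as a black box from \cite{Boker2013NondetUnknownFuture}, so there is no internal proof to compare against. Your route — HD $\Leftrightarrow$ Eve wins the letter game $\letterGame{\A}$, compose the letter game with a deterministic $\oo$-automaton monitoring Adam's letters to get a finite game with a positionally determined objective, and read the positional strategy in the product as a finite memory structure whose memory component is the monitoring automaton — is exactly the standard argument behind the cited result, and the bookkeeping in your step (4) is fine: the update on memory states is driven by the input letter of each transition of $\A$, so it is a legitimate memory skeleton over $\DD$ in the paper's sense, and since the monitoring automaton is deterministic there is nothing to resolve in the product, so the worry about ``suitable for transformations'' disappears.

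The one genuine wrinkle is in your ``more economical'' variant of step (2). Taking a deterministic parity automaton $\mathcal{C}$ for the complement of $\Lang{\A}$ and forming the product does \emph{not} yield a parity game: the winning condition becomes the disjunction of the parity condition of $\A$ with the parity condition of $\mathcal{C}$, and a union of two parity objectives is in general only a Rabin objective, not a parity one. So you cannot invoke positional determinacy of parity games \cite{EmersonJutla91Determinacy} directly for that variant. The fix is immediate by either of two routes: (i) keep the disjunction, observe it is a Rabin condition, and use half-positional determinacy of Rabin objectives for Eve (exactly the paper's Lemma~\ref{lemma-ZT:Rabin-games-positional}), which still gives memory bounded by $|\mathcal{C}|$; or (ii) stick with your first variant and compose with a deterministic parity automaton for the whole implication language ``$w\in\Lang{\A}\Rightarrow$ run accepting'', which genuinely produces a parity game, at the cost of a possibly larger memory bound. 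With either repair the argument is complete.
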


As $\M$ is a "pointed graph" labelled with the transitions of $\A$, we could consider the "product automaton" $\A\compositionAut \M$. We want to furthermore restrict the transitions of this automaton to those that are indicated by the next-move function $\nextmoveResolver$.
\AP Given an "automaton" $\A$ and a "memory structure@@resolver" $(\M, \nextmoveResolver)$, we define their ""composition@@memory"", which we write $\A \intro*\prodMem{\ss} \M = (Q\times M, \SS, (q_0,m_0), \GG, \DD', \WW)$ as the "automaton" having transitions $(q,m) \re{a:c} (q',m')$ if $\nextmoveResolver(q,m,a) = e = q\re{a:c} q'$ and $\mu(m,e) = m'$ (formally, $\DD'$ is a subset of $\DD\times E_\M$, where $E_\M$ are the edges of the "memory skeleton"). We note that $\A \prodMem{\nextmoveResolver} \M$ is "deterministic", and it is "complete" if $\A$ is.

The following lemma follows directly from the definition of "soundness@@resolverAut" of a "resolver" and the definition of "composition@@memory" of an "automaton" and a "memory structure".
\begin{lemma}\label{lemma:validity-product-by-memory-HD-aut}
	Let $\A$ be an "automaton" and $(\M, \nextmoveResolver)$ a "memory structure" for $\A$. The "resolver implemented by" $(\M, \nextmoveResolver)$ is "sound@@aut" if and only if $\A$ and $\A\prodMem{\nextmoveResolver} \M$ "recognise" the same language. 
\end{lemma}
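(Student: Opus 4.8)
The plan is to reduce the statement to a single observation: for every word $w \in \SS^\oo$, the unique run of the deterministic automaton $\A\prodMem{\nextmoveResolver}\M$ over $w$ coincides, after forgetting the memory component, with the run induced over $w$ by the resolver $(q_0, r)$ that $(\M,\nextmoveResolver)$ implements. Granting this, the lemma is immediate by comparing the two languages.

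First I would establish the run-correspondence by a direct induction. Fix $w = w_0w_1w_2\cdots$ and write the unique run of $\A\prodMem{\nextmoveResolver}\M$ over $w$ as $(q_0,m_0)\re{w_0}(q_1,m_1)\re{w_1}(q_2,m_2)\cdots$, with underlying $\A$-transitions $e_i = \nextmoveResolver(q_i,m_i,w_i) = q_i\re{w_i:c_i}q_{i+1}$ and $m_{i+1} = \transMem(m_i,e_i)$; such a run exists and is unique because the product is deterministic and the sequence induced by $r$ lifts to a valid path in it. An induction on $i$ shows simultaneously that $q_i$ is the target of $e_0\cdots e_{i-1}$ (with $q_0$ its source when $i=0$) and that $m_i = \transMem(m_0, e_0\cdots e_{i-1})$, using the inductive definition of $\transMem$ on finite words. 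Substituting these into the formula for $e_i$ gives $e_0 = \nextmoveResolver(q_0,m_0,w_0) = r(\ee,w_0)$ and $e_i = \nextmoveResolver(\mtargetPath(e_0\cdots e_{i-1}),\transMem(m_0,e_0\cdots e_{i-1}),w_i) = r(e_0\cdots e_{i-1},w_i)$ for $i\geq 1$, which is precisely the recursion defining the run induced by $r$ over $w$. Hence the two runs agree and produce the same output sequence $c_0c_1c_2\cdots$.

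Then I would draw the two consequences. Since a resolver always yields a genuine run of $\A$, the projection $(q,m)\mapsto q$ sends runs of $\A\prodMem{\nextmoveResolver}\M$ to runs of $\A$ with identical output, so $\Lang{\A\prodMem{\nextmoveResolver}\M}\subseteq\Lang{\A}$ holds unconditionally. On the other hand, by the correspondence, $w\in\Lang{\A\prodMem{\nextmoveResolver}\M}$ exactly when the run induced by $r$ over $w$ is accepting; hence $\Lang{\A}\subseteq\Lang{\A\prodMem{\nextmoveResolver}\M}$ holds if and only if the run induced by $r$ over $w$ is accepting for every $w\in\Lang{\A}$, which is exactly the definition of $(q_0,r)$ being sound. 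Combining this equivalence with the unconditional inclusion, $(q_0,r)$ is sound if and only if $\Lang{\A} = \Lang{\A\prodMem{\nextmoveResolver}\M}$, as required. There is no genuine obstacle: the only point needing care is the bookkeeping in the induction, namely simultaneously tracking the $\A$-state and the memory state of the product run and identifying them with $\mtargetPath$ and $\transMem$ of the transition prefix seen so far; everything else is an unfolding of the definitions of a sound resolver and of the composition $\prodMem{}$.
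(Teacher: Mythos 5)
Your proof is correct and matches the paper's intent: the paper gives no explicit proof, stating only that the lemma ``follows directly from the definition of soundness of a resolver and the definition of composition of an automaton and a memory structure,'' and your argument is exactly that unfolding. The run-correspondence induction and the split into the unconditional inclusion $\Lang{\A\prodMem{\nextmoveResolver}\M}\subseteq\Lang{\A}$ versus the soundness-equivalent inclusion $\Lang{\A}\subseteq\Lang{\A\prodMem{\nextmoveResolver}\M}$ is precisely the intended reasoning.
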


For the rest of the paragraph, we let $\A = (Q,\Sigma, q_0, \NN, \DD, \parity)$ be a "complete" "history-deterministic@@aut" "automaton" "recognising" the "Muller language" $\MullerC{\SS}{\F}$ admitting a "sound resolver@@aut" $(q_0,r)$ implemented by a "memory structure"  $(\M, \nextmoveResolver)$.
\AP We let $\intro*\piAut\colon \A \prodMem{\nextmoveResolver} \M \to \A$ be the "morphism of automata" given by the projection into the first component: $\pi_{\A,V}(q,m) = q$ and $\pi_{\A,E}(e_1,e_2) = e_1$.

\begin{remark}\label{rmk-ZT:projection-same-labels}
	If $\rr$ is a path in $\A \prodMem{\nextmoveResolver} \M$ that is labelled by "input letters" $a_0a_1\dots \in \SigmaInfty$ and producing "output" $c_0c_1\dots \in \NN^\infty$, then the $\piAut$-projection of $\rr$ is a path in $\A$ labelled  by $a_0a_1\dots \in \SigmaInfty$ and producing $c_0c_1\dots \in \NN^\infty$ as "output". 
\end{remark}

\begin{lemma}\label{lemma-ZT:X-FSCC-in-product-induce-subautomaton}
	Let $X\subseteq \SS$ and let $\S_X$ be an "accessible@@set" "$X$-FSCC" of $\A \prodMem{\nextmoveResolver} \M$. Then, $\piAut(\S_X)$ "induces@@aut" an "HD@@aut" "subautomaton@@induced" of $\A$ recognising $\MullerC{\restSubsets{\F}{X}}{X} = \{w\in X^\oo \mid \minf(w)\in \F\}$.
\end{lemma}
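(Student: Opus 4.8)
I want to show that $\piAut(\S_X)$ induces a history-deterministic subautomaton of $\A$ recognising $\MullerC{\restSubsets{\F}{X}}{X}$. The key observation is that $\S_X$, being an $X$-FSCC of the \emph{deterministic} automaton $\A \prodMem{\nextmoveResolver} \M$, is itself a deterministic subautomaton, and by Lemma~\ref{lemma-ZT:X-FSCC-induce-automata} it recognises $\MullerC{\restSubsets{\F}{X}}{X}$ (picking any state of $\S_X$ as initial; prefix-independence makes this choice irrelevant). So the work is to transfer this recognition across the morphism $\piAut$ to the subautomaton of $\A$ induced by the graph $\piAut(\S_X)$, and to verify history-determinism.

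\textbf{Step 1: $\piAut$ restricted to $\S_X$ is a morphism onto the induced subautomaton, and it preserves languages.} Let $G_X = \piAut(\S_X)$ be the image subgraph in $\A$, and let $\B = \A_{G_X}$ be the subautomaton of $\A$ it induces (with some state $q_S = \piAut(q_S, m_S)$ as initial state, for $(q_S,m_S)$ a chosen initial state of $\S_X$). The restriction of $\piAut$ to $\S_X$ is a morphism of automata $\S_X \to \B$: it preserves edges, initial state, input-letter labelling and acceptance (the acceptance condition on both is the parity condition inherited from $\A$, and by Remark~\ref{rmk-ZT:projection-same-labels} $\piAut$ preserves output colours, hence a fortiori preserves acceptance of runs). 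Moreover this morphism is \emph{locally surjective}: for a state $(q,m)$ of $\S_X$ and a letter $a\in X$, since $\S_X$ is $X$-closed there is a transition $(q,m)\re{a:c}(q',m')$ in $\S_X$, which $\piAut$ sends to a transition $q\re{a:c}q'$; and every outgoing edge of $q$ in $G_X$ is, by construction of $G_X$ as the image, the $\piAut$-image of some edge of $\S_X$, which must originate at a vertex of $\S_X$ in $\inv{\pi_\A}(q)$ — here one uses that distinct preimages $(q,m_1),(q,m_2)$ of $q$ both lie in $\S_X$ only via paths inside $\S_X$. (This last point needs a small argument: edges of $G_X$ outgoing from $q$ are images of edges of $\S_X$; I should make sure that for the specific preimage reached along a run we can match; the cleanest route is to invoke Lemma~\ref{lemma-morph:locSurjMorph_OntoAccessible} after noting $\S_X$ is accessible and every state of $\B$ is accessible by construction.) By Lemma~\ref{lemma-morph:loc_surjective_implies_HD}, a locally surjective morphism is an HD mapping, and by Proposition~\ref{prop-morph:HD mappings-preserve-languages}, $\Lang{\B} = \Lang{\S_X} = \MullerC{\restSubsets{\F}{X}}{X}$, and $\B$ is history-deterministic because $\S_X$ is (being deterministic). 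The fact that $\B$ is history-deterministic can alternatively be argued directly by pushing forward the resolver of $\A$.

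\textbf{Step 2: $\B$ is the subautomaton induced by $\piAut(\S_X)$ with an appropriate choice of states and reads only letters of $X$.} This is essentially a matter of unwinding definitions: the states of $\B$ are $\piAut(S_X) = \{q : (q,m)\in S_X\}$, its transitions are $\piAut(\ell_X)$ where $\ell_X$ is the edge set of $\S_X$, all labelled by letters in $X$; the acceptance condition and input labelling are the restrictions of those of $\A$. So $\B$ is precisely ``$\piAut(\S_X)$ induces an HD subautomaton of $\A$''. Choosing the initial state is immaterial since the language is prefix-independent (Lemma~\ref{lemma-prelims:prefix-indep-automaton}), which is also why the statement of the lemma need not specify it.

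\textbf{The main obstacle.} The delicate point is the local surjectivity verification in Step~1 — specifically, ensuring that every outgoing edge of a state $q$ of $G_X$ is hit by $\piAut$ from the particular preimage that a simulated run has reached, rather than merely from \emph{some} preimage. The subtlety is that $\A$ is non-deterministic, so $q$ may have outgoing $a$-edges in $G_X$ that, in $\A \prodMem{\nextmoveResolver}\M$, are only taken from preimages $(q,m')$ different from the one currently reached. However, for the conclusion we only need an HD mapping, not a locally bijective morphism, and for HD it suffices that the resolver of $\A$ (restricted/adapted to $\B$) induces accepting runs for all words in $\MullerC{\restSubsets{\F}{X}}{X}$ — which it does, since the resolver of $\A$ on such a word traces exactly the deterministic run of $\S_X$ via $\piAut$ and that run is accepting by Lemma~\ref{lemma-ZT:X-FSCC-induce-automata}. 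So I would, if the locally-surjective route proves fiddly, instead argue directly: $\B$ is a subautomaton of $\A$ containing the $\piAut$-image of the sound resolver's behaviour on all of $X^\oo$, hence the resolver of $\A$ witnesses history-determinism of $\B$ for the language $\MullerC{\restSubsets{\F}{X}}{X}$, and soundness together with $\Lang{\S_X} = \MullerC{\restSubsets{\F}{X}}{X}$ (Remark~\ref{rmk-ZT:projection-same-labels} and Lemma~\ref{lemma-ZT:X-FSCC-induce-automata}) pins down $\Lang{\B}$.
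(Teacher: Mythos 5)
Your fallback argument is essentially the paper's proof, and it is the route you should commit to; the primary route of Step~1 is not merely ``fiddly'' but genuinely fails. The restriction of $\piAut$ to $\S_X$ need not be locally surjective: if $(q,m_1)$ and $(q,m_2)$ both lie in $\S_X$ and $\nextmoveResolver(q,m_1,a)\neq\nextmoveResolver(q,m_2,a)$, then $\piAut(\S_X)$ has two outgoing $a$-edges at $q$ while $(q,m_1)$ has only one, whose image misses the other. By Remark~\ref{rmk-morph: HD-is-loc.surjective} no resolver simulating such a map can exist, so Lemma~\ref{lemma-morph:loc_surjective_implies_HD} and Proposition~\ref{prop-morph:HD mappings-preserve-languages} are unavailable; and invoking Lemma~\ref{lemma-morph:locSurjMorph_OntoAccessible} cannot repair this, since that lemma is a consequence of local surjectivity, not a criterion for it. The paper argues exactly as in your last paragraph: a sound resolver for the subautomaton induced by $\piAut(\S_X)$ is the one implemented by $(\initialTS{\M}{m_S},\nextmoveResolver)$, whose induced run over any $w\in\MullerC{\restSubsets{\F}{X}}{X}$ is the $\piAut$-projection of the accepting run of $\S_X$ (Lemma~\ref{lemma-ZT:X-FSCC-induce-automata} together with Remark~\ref{rmk-ZT:projection-same-labels}); this yields both history-determinism and the inclusion $\MullerC{\restSubsets{\F}{X}}{X}\subseteq\Lang{\piAut(\S_X)}$. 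The one point your fallback leaves implicit is the converse inclusion, which soundness alone does not provide: it holds because $\piAut(\S_X)$ is an accessible subautomaton of $\A$ whose transitions read only letters of $X$, so by prefix-independence of $\Lang{\A}$ every word it accepts lies in $\Lang{\A}\cap X^\oo=\MullerC{\restSubsets{\F}{X}}{X}$. With that sentence added, your proof is complete and coincides with the paper's.
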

\begin{proof}
	Let $q_S$ be a state in $\piAut(\S_X)$ chosen to be initial. Let $m_S$ be a state in $\M$ such that $(q_S, m_S)\in \S_X$. By Lemma~\ref{lemma-ZT:X-FSCC-induce-automata}, $\S_X$ induces a "deterministic" "subautomaton" with initial state $(q_S, m_S)$ "recognising@@aut" $\MullerC{\restSubsets{\F}{X}}{X}$. 
	On the one hand, since $\piAut(\S_X)$ is an "accessible" "subautomaton" of $\A$ having only transitions labelled by $X$ and by "prefix-independence" of $\Lang{\A}$, we have that 
	\[ \Lang{\piAut(\S_X)} \subseteq \Lang{\A}\cap X^\oo = \MullerC{\restSubsets{\F}{X}}{X}. \]
	On the other hand, the projection of any "accepting run" in $\S_X$ provides an "accepting run" in $\piAut(\S_X)$ (by Remark~\ref{rmk-ZT:projection-same-labels}), so 
	\[\Lang{\S_X} = \MullerC{\restSubsets{\F}{X}}{X}\subseteq \Lang{\piAut(\S_X)}.\] 
	Moreover, a "sound resolver@@aut" for $\piAut(\S_X)$ is "implemented by@@resolver" $(\initialTS{\M}{m_S}, \nextmoveResolver)$ (the "memory structure" with initial state set to $m_S$).
\end{proof}

\begin{lemma}\label{lemma-ZT:disjoint-projections-HD-parity}
	Let $n\in N$ be a "square node" of the "Zielonka tree" of $\F$ ($\nu(n)\notin \F$), and let $n_1, n_2\in \children_{\zielonkaTree{\F}}(n)$ be two different "children" of $n$. If $\S_1$ and $\S_2$ are two "accessible@@sets" "$\nu(n_1)$-FSCC" and "$\nu(n_2)$-FSCC" in $\A \prodMem{\nextmoveResolver} \M$, respectively, then $\piAut(\S_1)\cap \piAut(\S_2) = \emptyset$.
\end{lemma}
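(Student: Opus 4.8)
The plan is to argue by contradiction, and the point to keep in mind is that one \emph{cannot} simply apply Lemma~\ref{lemma-ZT:disjoint-children-FSCC} to the "deterministic" "automaton" $\A\prodMem{\nextmoveResolver}\M$: that would only give $\S_1\cap\S_2=\emptyset$ \emph{inside the product}, whereas here we must rule out that a single state $q$ of $\A$ is reached by both FSCCs through different memory values. So suppose, towards a contradiction, that $q\in\piAut(\S_1)\cap\piAut(\S_2)$, and fix memory states $m_1,m_2$ with $(q,m_1)\in\S_1$ and $(q,m_2)\in\S_2$. The goal is to manufacture an accessible accepting "cycle" of $\A$ labelled by $\nu(n_1)\cup\nu(n_2)$; since $\A$ "recognises" $\MullerC{\F}{\SS}$, this will force $\nu(n_1)\cup\nu(n_2)\in\F$, contradicting Remark~\ref{rmk-ZT:union-changes-acceptance}.

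First I would build, for each $i\in\{1,2\}$, a "cycle" $\ell_i$ of $\A\prodMem{\nextmoveResolver}\M$ through $(q,m_i)$ with $l_\SS(\ell_i)=\nu(n_i)$. This is possible because $\S_i$ is a $\nu(n_i)$-FSCC: it is "strongly connected", every transition it contains is labelled in $\nu(n_i)$, and, being a "final SCC" of the $\nu(n_i)$-restricted graph and $\nu(n_i)$-closed, every letter of $\nu(n_i)$ labels at least one of its transitions; one then splices, for each such letter, a detour through a corresponding transition and connects everything through $(q,m_i)$ using strong connectivity. Next I would show that $\ell_i$ is an "accepting@@cycle" "cycle" of the product, and this is where the hypothesis $\SS\notin\F$ (i.e.\ $n$ a "square node") enters: the "children" of a "square node" are labelled by sets in $\F$, so $\nu(n_i)\in\F$. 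Since $\S_i$ is "accessible@@set", reach $(q,m_i)$ from the initial state and loop around $\ell_i$ forever; the resulting word $w_i$ satisfies $\minf(w_i)=\nu(n_i)\in\F$, so $w_i\in\MullerC{\F}{\SS}$, and since $\A\prodMem{\nextmoveResolver}\M$ is "deterministic" and, by Lemma~\ref{lemma:validity-product-by-memory-HD-aut}, "recognises" $\MullerC{\F}{\SS}$, its unique "run" on $w_i$, whose set of edges taken infinitely often is exactly $\ell_i$, is "accepting@@run"; hence $\ell_i$ is "accepting@@cycle".

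Then I would transport everything to $\A$ along the "morphism of automata" $\piAut$. It sends $\ell_i$ to a "cycle" $\piAut(\ell_i)$ of $\A$ through $q$ with $l_\SS(\piAut(\ell_i))=\nu(n_i)$, and by Remark~\ref{rmk-ZT:projection-same-labels} it preserves "output colours", so $\min\gg(\piAut(\ell_i))=\min\gg(\ell_i)$ is even and $\piAut(\ell_i)$ is again "accepting@@cycle". Because $\piAut(\ell_1)$ and $\piAut(\ell_2)$ share the state $q$, their union $\ell:=\piAut(\ell_1)\cup\piAut(\ell_2)$ is a "cycle" of $\A$, and Lemma~\ref{lemma-ZT:union_cycles_parity-TS} (using that $\A$ is a "parity" "automaton") makes it "accepting@@cycle". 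Now $\ell$ is "accessible@@set" since $q$ is, so a "run" of $\A$ that reaches $q$ and then loops around $\ell$ forever is "accepting@@run" and reads a word $w$ with $\minf(w)=l_\SS(\ell)=\nu(n_1)\cup\nu(n_2)$; hence $w\in\Lang{\A}=\MullerC{\F}{\SS}$, i.e.\ $\nu(n_1)\cup\nu(n_2)\in\F$, contradicting Remark~\ref{rmk-ZT:union-changes-acceptance}, which gives $\nu(n_1)\cup\nu(n_2)\notin\F$ for distinct "children" of $n$. This concludes the argument.

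There is no deep obstacle, but three points deserve care. First, one must write out that $l_\SS(\ell_i)=\nu(n_i)$ exactly: ``$\subseteq$'' because $\S_i$ lives in the $\nu(n_i)$-restricted graph, and ``$\supseteq$'' because $\S_i$ is $\nu(n_i)$-closed and a "final SCC". Second, one transfers "accepting@@cycle"ness along $\piAut$ via colour preservation, \emph{not} by claiming $\piAut$ is injective on edges. Third, the cycle-level statement must be converted into a language-level contradiction by exhibiting a concrete "run" of $\A$ realising $\ell$; this last point is also exactly why the hypothesis $\SS\notin\F$ is essential: for a "non-deterministic" $\A$ only the implication ``accessible accepting cycle $\Rightarrow$ its label lies in $\F$'' survives, the converse failing, which is precisely the asymmetry forcing the case of a "round node" to be treated separately.
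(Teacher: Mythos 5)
Your proposal is correct and follows essentially the same route as the paper's proof: assume a common state $q$, extract the full cycles $\ell_i$ of the $\nu(n_i)$-FSCCs in the product, use determinism of $\A \prodMem{\nextmoveResolver} \M$ and $\nu(n_i)\in\F$ to conclude that $\min\gg(\ell_i)$ is even, project to $\A$ preserving colours (Remark~\ref{rmk-ZT:projection-same-labels}), and combine the two cycles over $q$ into a single accepting run over a word $w$ with $\minf(w)=\nu(n_1)\cup\nu(n_2)$, contradicting Remark~\ref{rmk-ZT:union-changes-acceptance}. The only cosmetic difference is that you invoke Lemma~\ref{lemma-ZT:union_cycles_parity-TS} to certify the union cycle is accepting where the paper says "by alternating these two cycles"; both are the same argument.
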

\begin{proof}
	Suppose by contradiction that there is some state $q$ in $\piAut(\S_1)\cap \piAut(\S_2)$, and let $m_1, m_2\in M$ be such that $(q,m_1)$ and $(q, m_2)$ are states in $\S_1$ and $\S_2$, respectively. For $i=1,2$, let $\ell_i\in \cyclesState{\A\prodMem{\nextmoveResolver} \M}{(q,m_i)}$ be the "cycle over $(q,m_i)$" containing all edges in $\S_i$. We note that $l_\SS(\ell_i) = \nu(n_i)$ and therefore $\min \gg(\ell_i)$ has to be even (as $\A\prodMem{\nextmoveResolver} \M$ is "deterministic"), where $l_\SS$ and $\gg$ are the labellings of $\A\prodMem{\nextmoveResolver} \M$ with "input letters" and "output colours", respectively. By Remark~\ref{rmk-ZT:projection-same-labels}, the $\piAut$-projections of $\ell_1$ and $\ell_2$ are "cycles over $q$" in $\A$ labelled with $\nu(n_1)$ and $\nu(n_2)$ and in which the minimal colour appearing is even.
	By alternating these two "cycles", we can build an "accepting run" in $\A$ "over@@run" a word $w\in \SS^\oo$ with $\minf(w) = \nu(n_1)\cup \nu(n_2)$, contradicting the fact that  $\nu(n_1)\cup \nu(n_2)\notin \F$ (Remark~\ref{rmk-ZT:union-changes-acceptance}).
\end{proof}

Lemmas~\ref{lemma-ZT:existance-X-FSCC},~\ref{lemma-ZT:X-FSCC-in-product-induce-subautomaton} and~\ref{lemma-ZT:disjoint-projections-HD-parity} imply Proposition~\ref{prop-ZT:existance-disjoint-subautomata} in the case in which the "root" of the "Zielonka tree" is a "square node".

\subparagraph*{Case 2: The root of the Zielonka tree is a "round node": $\SS\in \F$.}
Before presenting the formal proof, let us discuss why considering these two cases separately is necessary. A first idea to obtain the desired result would be to follow the same steps as in Case~1. However, this approach encounters a major difficulty: the argument used in the proof of Lemma~\ref{lemma-ZT:disjoint-projections-HD-parity} is not valid if $\SS\in \F$. Indeed, even if we can find two "rejecting@@cycles" "cycles" $\ell_1$, $\ell_2$ such that $l_\SS(\ell_i) = \nu(n_i)$, their $\piAut$-projections could a priori have a "state in common"; this would imply the \emph{existence} of a "rejecting run" over the set of letters $\nu(n_1)\cup \nu(n_2)\in \F$, which is not enough to conclude, as the non-determinism of~$\A$ leaves room for the existence of other "accepting runs" over this set of letters. 
To circumvent this difficulty, we need to take a closer look at the strategies used by the "resolver". Rather than considering any "finite memory strategy" resolving the "non-determinism" of $\A$, we will show that we can choose a specific "resolver@@aut" for which we will be able to obtain a result analogous to Lemma~\ref{lemma-ZT:disjoint-projections-HD-parity}. 
To do this, we first construct the "letter game" of $\A$, as introduced in~\cite{HP06}, which is a "Muller" "game" satisfying that a "strategy" for it yields a "resolver@@aut" for $\A$.
The "strategy" that we will use in this "game" is the one obtained by applying McNaughton's algorithm to solve "Muller" "games"~\cite{McNaughton93InfiniteGames} guided by the "Zielonka tree", as presented in~\cite{DJW1997memory}.

Let $\A = (Q,\Sigma, q_0, [d_{\min}, d_{\max}], \DD, \parity)$ be a "parity" "automaton" "recognising" $\Muller{\F}$, and assume that $\SS\cap [d_{\min}, d_{\max}] = \emptyset$.
\AP The ""letter game"" for $\A$ is the "game" $\intro*\letterGame{\A}$ defined as follows:
\begin{itemize}
	\item The set of vertices is $V = Q\sqcup (Q\times \SS)$. "Adam" "controls" vertices in $Q$, and "Eve" "controls" vertices in $Q\times \SS$.
	\item For each letter $a\in \SS$ and each $q\in Q$, there is an edge $q\re{a} (q,a)$.
	\item For each position $(q,a)\in Q\times \SS$, and for each transition $q\re{a:c}q'$ in $\A$, there is an edge $(q,a)\re{c} q'$.
	\item The set of colours is $\GG = \SS\sqcup [d_{\min}, d_{\max}]$, and the "acceptance set" is the "Muller language associated to"
	\[\AP \intro*\impliesMuller{\F}{\parity} = \{ C \subseteq \GG \mid  [C\cap \SS \in \F] \implies [\min (C\cap [d_{\min}, d_{\max}]) \text{ is even}] \}.  \]
\end{itemize}
That is, in the "letter game", "Adam" provides "input letters" one by one, and "Eve" chooses transitions corresponding to those letters in the "automaton" $\A$. "Eve" "wins" this "game" if she manages to build an "accepting run" every time that "Adam" gives as input an infinite word in the language "recognised" by $\A$.

\begin{figure}[ht]
	\centering
	\includegraphics[width=0.85\textwidth]{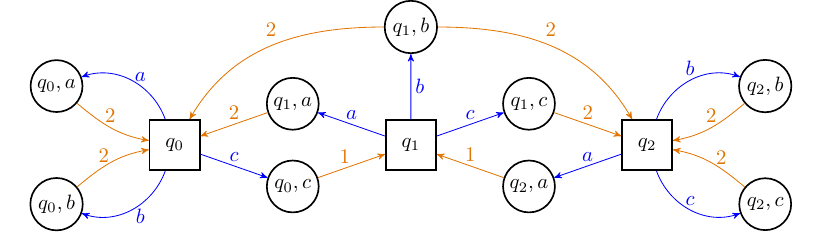}
	\caption{"Letter game" for the "HD automaton" from Figure~\ref{fig-prelim:HD-aut-example}, "recognising" the "Muller language associated to" $\F=\{\{a\},\{b\},\{c\},\{a,b\},\{b,c\}\}$. Squares represent "Adam's" vertices (the states of the automaton) and circles "Eve's" ones. 
	Blue edges correspond to input-letters, and orange edges to the choices of transitions in the automaton after each input letter. 
	The only vertex where "Eve" has a non-trivial choice to make in order to resolve the "non-determinism" of the automaton is $(q_1,b)$. In this example, "Eve" has a "winning strategy" corresponding to the "resolver" described in Example~\ref{ex-prelim:HD-aut}. }
	\label{fig-ZT:letter-game}
\end{figure}

\AP We remark that a "subgraph" of $\letterGame{\A}$ induces a "subautomaton" of $\A$ via the (partial) mapping $\intro*\projLetterGame{\A} \colon \letterGame{\A} \partialF \A$ that sends states of the form $q\in Q$ to $q$ and edges of the form $(q,a)\re{c} q'$ to $q\re{a:c}q'$. 

\begin{remark}\label{rmk-ZT:strat-letterGame-induces-resolver}
	A "strategy" for "Eve" in $\letterGame{\A}$ induces a "resolver@@aut" in $\A$, which is "sound@@resolverAut" if and only if the "strategy" is "winning". 
\end{remark}

\begin{remark}\label{rmk-ZT:disjoint-projections}
	If two subsets of vertices of the "letter game" $S_1, S_2\subseteq V$ are disjoint, then $\projLetterGame{\A}(S_1)\cap \projLetterGame{\A}(S_2)=\emptyset$.
\end{remark}

\begin{remark}\label{rmk-ZT:projections-runs-letter-game}
	If $\rr$ is a "play" in $\letterGame{\A}$, labelled $a_0c_0a_1c_1\dots \in (\SS\cdot [d_{\min}, d_{\max}])^\infty$,  the $\projLetterGame{\A}$-projection of $\rr$  is a "run@@aut" in $\A$ over $a_0a_1\dots\in \SS^\infty$ producing $c_0c_1\dots \in [d_{\min}, d_{\max}]^\infty$ as "output".
\end{remark}

\begin{lemma}[\cite{HP06}]
	A "parity" "automaton" $\A$ is "HD@@aut" if and only if "Eve" "wins" the "letter game" "from@@winGame" some "initial state" of $\A$.
\end{lemma}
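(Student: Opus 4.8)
The plan is to leverage the correspondence between Eve's "strategies" in $\letterGame{\A}$ and "resolvers@@aut" for $\A$ recorded in Remark~\ref{rmk-ZT:strat-letterGame-induces-resolver}: every "strategy" for Eve yields a "resolver@@aut" which is sound exactly when the "strategy" is "winning". What is missing for a full proof is the reverse translation, turning a "resolver@@aut" into a "strategy" for Eve, together with a verification that the "acceptance set" $\impliesMuller{\F}{\parity}$ of the "letter game" encodes exactly the right winning condition.

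For the direction in which Eve "wins" $\letterGame{\A}$ from some "initial state" $q_0$ of $\A$: I would take a "winning strategy" $\ss$ for Eve from $q_0$; by Remark~\ref{rmk-ZT:strat-letterGame-induces-resolver} it induces a sound "resolver@@aut" $(q_0, r)$, and since $q_0$ is "initial" this is a legitimate sound "resolver@@aut" for $\A$, so $\A$ is "history-deterministic". This direction is essentially immediate.

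For the converse I would start from a sound "resolver@@aut" $(r_0, r)$ of $\A$ (with $r_0$ "initial") and build a "strategy" $\ss$ for Eve in $\letterGame{\A}$ from $r_0$, as follows. A finite "play" ending in an Eve-vertex $(q, a)$ projects, via $\projLetterGame{\A}$, to a finite "run" $e_0 \cdots e_{n-1}$ of $\A$ ending at $q$; Eve plays the edge out of $(q, a)$ that projects to the transition $r(e_0 \cdots e_{n-1}, a)$, which is an available move because a "resolver@@aut" returns a legal continuation by definition. Then, for an infinite "play" $\rr$ consistent with $\ss$ in which Adam spells the word $w$, Remark~\ref{rmk-ZT:projections-runs-letter-game} identifies $\projLetterGame{\A}(\rr)$ with the "run induced by@@aut" $r$ over $w$, and the set $C$ of colours of $\letterGame{\A}$ visited infinitely often splits as $C \cap \SS = \minf(w)$ and $C \cap [d_{\min}, d_{\max}]$ equal to the priorities visited infinitely often by $\projLetterGame{\A}(\rr)$. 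To see $C \in \impliesMuller{\F}{\parity}$, I would split into two cases: if $\minf(w) \notin \F$ the implication defining $\impliesMuller{\F}{\parity}$ holds vacuously; if $\minf(w) \in \F$ then $w \in \MullerC{\F}{\SS} = \Lang{\A}$, so soundness of $(r_0, r)$ forces $\projLetterGame{\A}(\rr)$ to be "accepting", i.e. $\min (C \cap [d_{\min}, d_{\max}])$ is even. In both cases $\rr$ is "accepting" in $\letterGame{\A}$, so $\ss$ is "winning".

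I do not expect a genuine obstacle; the delicate part is the bookkeeping that makes precise the bijection between "plays" of $\letterGame{\A}$ consistent with the strategy built from $r$ and the "runs" of $\A$ induced by $r$, and the observation that $\impliesMuller{\F}{\parity}$ was defined precisely so that Eve wins a "play" if and only if either the word it spells is rejected by $\A$ or the induced "run" is "accepting". This last case split is also where the hypothesis that $\A$ "recognises" $\MullerC{\F}{\SS}$ — so that acceptance of a word depends only on $\minf(w)$ — enters.
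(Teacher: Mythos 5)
Your proof is correct and is the standard argument: the paper does not actually prove this lemma (it is cited from~\cite{HP06}, with the forward direction delegated to Remark~\ref{rmk-ZT:strat-letterGame-induces-resolver}), and your two directions spell out exactly the strategy/resolver correspondence that the citation and that remark stand in for, including the one genuinely relevant observation that the Muller condition $\impliesMuller{\F}{\parity}$ encodes ``word rejected or run accepting'' only because acceptance of $w$ depends solely on $\minf(w)$. The one point you handle implicitly but correctly is that the resolver is only guaranteed to return a legal transition on histories it itself induced, and these are precisely the histories on which your strategy ever queries it.
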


\AP For a subset $X$ of vertices or edges of a "game" $\G$, we define Eve's ""attractor to $X$"" as:
\[ \intro*\attr{\G}{X} = \{v\in V \mid \text{ there is a strategy for Eve ensuring to eventually visit } X \text{ from } v\}. \]
For a colour $c\in \GG$ we note $\attr{\G}{c} = \attr{\G}{E_c}$, where $E_c$ is the set of edges coloured $c$.

For the rest of the paragraph, let $\A = (Q,\Sigma, q_0, [0, d], \DD, \parity)$ be a "complete" "history-deterministic@@aut" "parity" "automaton" "recognising" $\Muller{\F}$. We can assume  without loss of generality that the minimal colour that it uses is $0$. 
We let $V$ and $E$ denote the sets of vertices and edges, respectively, of the "letter game" and $\GG = \SS\sqcup [0,d]$ its set of colours. Whenever we use expressions like ``the minimal colour appearing in a play'', it will refer to the restriction of $\GG$ to $[0,d]$.
From the "prefix-independence" of $\Muller{\F}$ we can moreover assume that "Eve" "wins" the "letter game" from any vertex (see Lemma~\ref{lemma-prelims:prefix-indep-automaton}).
We let $n_0$ be the "root" of the "Zielonka tree" of~$\F$ (assumed to be "round", that is $\nu(n_0)\in \F$), let $n_1,\dots, n_k$ be its "children", and  let $\SS_i = \nu(n_i) \subseteq \SS$ (note that $\SS_i\notin \F$ for $i\geq 1$).

Let us examine the condition $\impliesMuller{\F}{\parity}$ used in the "letter game" a bit closer. The first levels of the "Zielonka tree"  of this condition are depicted in Figure~\ref{fig-ZT:zielonka-tree-letter-game}. It is clear that a "strategy" in $\letterGame{\A}$ ensuring to produce colour $0$ infinitely often  is "winning@@strat". It might be the case that "Adam" can prevent "Eve" from doing this, however, since "Eve" "wins" $\letterGame{\A}$, in that case she could ensure to produce infinitely often a set of colours included in some of the "round nodes" below the "root", that is, to either avoid colour $1$, or to produce letters included in some $\SS_i$. We use this idea to define next \emph{attractor decompositions} for $\letterGame{\A}$.

\begin{figure}[ht]
	\centering 
	\includegraphics[width=0.5\textwidth]{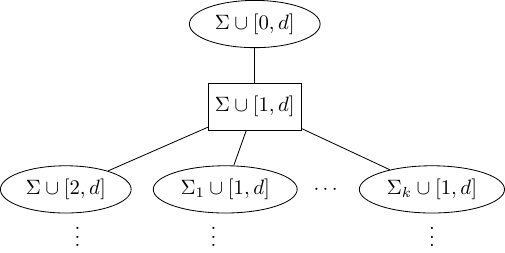}
	\caption{First levels of the "Zielonka tree" of the Muller condition $\impliesMuller{\F}{\parity}$, which is the "winning condition" of the "letter game" $\letterGame{\A}$.}
	\label{fig-ZT:zielonka-tree-letter-game}
\end{figure}

\AP Given a subset of vertices $V'\subseteq V$ we write $\intro*\subgame{V'}$ to denote the subgame of $\letterGame{\A}$ containing the vertices of $V'$ and the edges between them.

\AP Let $x$ be an even integer. For a subgame $\G' = \subgame{V'}$ of $\letterGame{\A}$ with no colour strictly smaller than $x$, we define an ""$x$-attractor decomposition"" of $\G'$ as  a partition of $V'$ into 
\[V' =\attr{\G'}{x} \sqcup V_1 \sqcup A_1 \sqcup \dots\sqcup V_l \sqcup A_l,\] 
satisfying:
\begin{itemize}
	\item \AP $\attr{\G'}{x}$ is Eve's "attractor to $x$" in $\G'$. 
	
	\item For each $V_j$, either (1) there is some $i\in \{1,\dots, k\}$ such that no colour of $\SS\setminus \SS_i$ appears in $\subgame{V_j}$, or (2) "Eve" has a "winning strategy" for $\subgame{V_j}$ (from any vertex) avoiding colour $x+1$; and in both cases, if "Adam" can leave $V_j$ taking an edge $v\re{a} v'$ ($v\in V_j$, $v'\notin V_j$), then $v'\in \attr{\G'}{x} \sqcup V_1 \sqcup A_1 \sqcup \dots  V_{j-1} \sqcup A_{j-1}$.
	In case (1) we say that $V_j$ is a ""$\SS_i$-region of the attractor decomposition"" and in case (2) that $V_j$ is an ""$x+1$-avoiding region"".
	
	\item "Eve" "wins" $\subgame{V_j}$ from every vertex for all $j$.
	
	\item $A_j = \attr{\G_j}{V_j}$, where $\G_j$ is the subgame induced by the subset of vertices given by $V\setminus \left(\attr{\G'}{x} \sqcup V_1 \ab \sqcup \dots \ab\sqcup\ab V_{j-1}\ab \sqcup A_{j-1}\right)$ (we note that this game does not contain edges coloured with $x$).
		
\end{itemize}

If $V_j$ is an "$x+1$-avoiding region", we let $\G_j'$ be the subgame obtained from $\subgame{V_j}$ by removing the transitions labelled $x+1$. 

\AP An ""$x$-recursive attractor decomposition"" of $\G'$ is:
\[ \AP \intro*\attrDec{\G'} = \langle \attr{\G'}{x}, (V_1,A_1, \attrDec{\G_1'}), (V_2,A_2, \attrDec{\G_2'}),\dots , (V_l,A_l, \attrDec{\G_l'})\rangle, \]
where $\attr{\G'}{x} \sqcup V_1 \sqcup A_1 \sqcup \dots\sqcup V_l \sqcup A_l$ is an "$x$-attractor decomposition" of $\G'$, and, if $V_j$ is an "$x+1$-avoiding region", then $\attrDec{\G_j'}$ is an "$x+2$-recursive attractor decomposition" of $\G_j'$. (If $V_j$ is an "$\SS_i$-region@@simple", $\attrDec{\G_j'}$ can be disregarded).

A representation of an attractor decomposition appears in Figure~\ref{fig-ZT:attractor-dec}. 

\begin{figure}[ht]
	\centering 
	\begin{subfigure}[c]{0.45\textwidth}
		\centering
		\includegraphics[width=\textwidth]{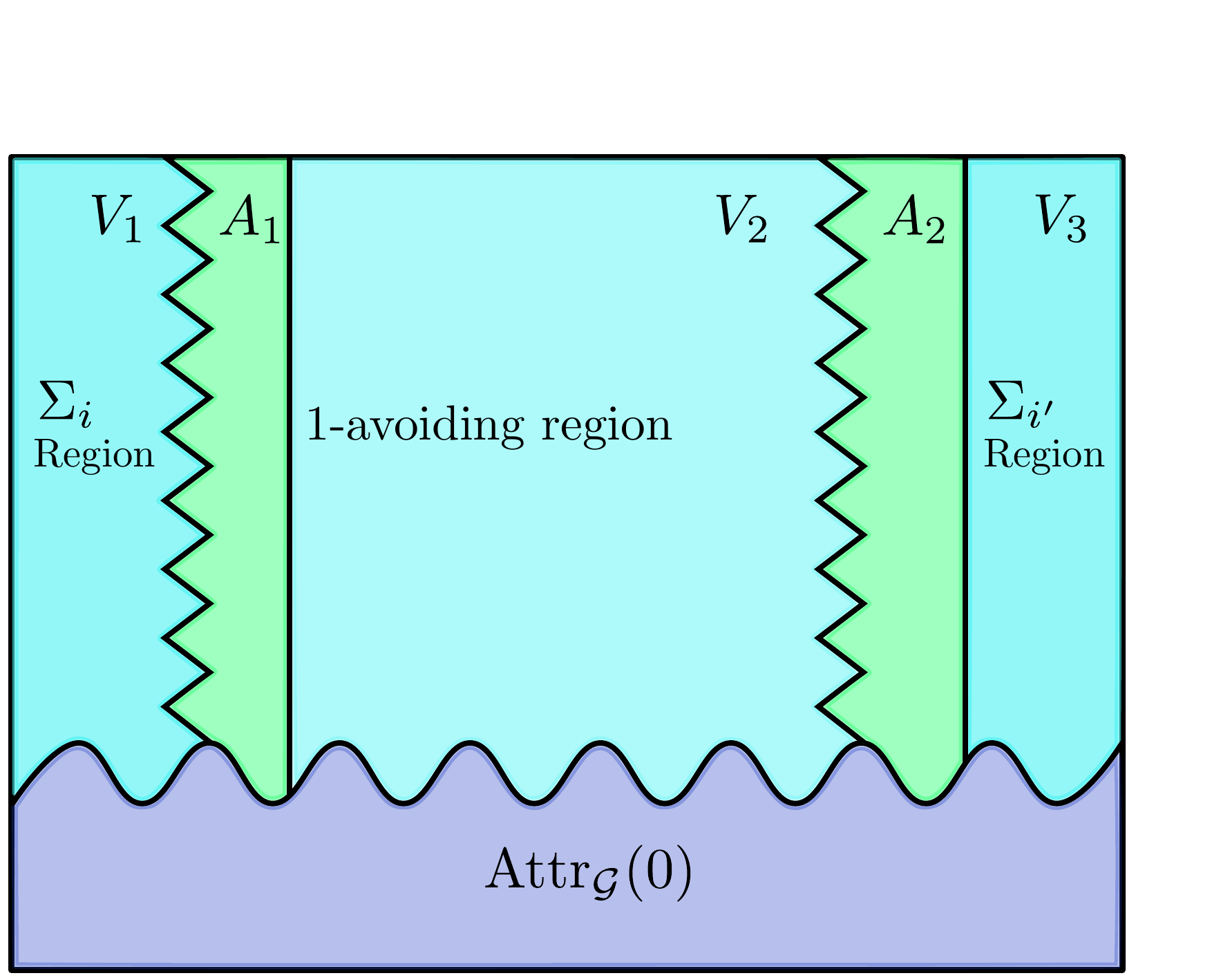}
	\end{subfigure}
	\begin{subfigure}[c]{0.45\textwidth}
		\centering
		\includegraphics[width=\textwidth]{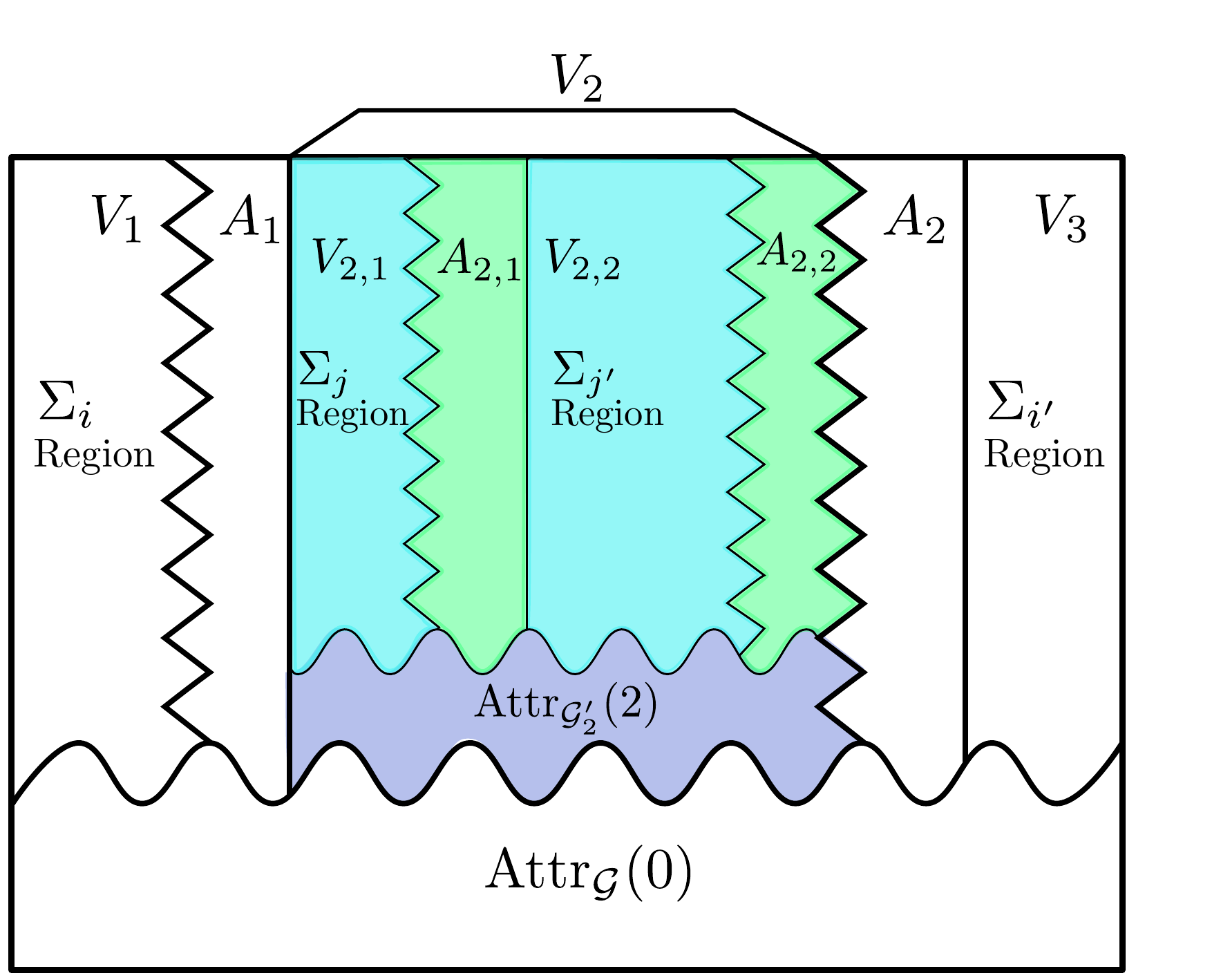}
	\end{subfigure}
	\caption{On the left, a "$0$-attractor decomposition" of a game $\G$.  On the right, the coloured part represents a "$2$-attractor decomposition" of the subgame $\G_2'$ induced by the "$1$-avoiding region" $V_2$. Since no "$3$-avoiding region" appears on it, this is a "full attractor decomposition" of the game $\G$, inducing a partition into three different kinds of regions. 
	The order over the "$\SS_i$-regions@@rec" is given by $V_1\orderAttr V_{2,1} \orderAttr V_{2,2}\orderAttr V_3$.
	Adam can only force to decrease with respect to this order, that is, at each sublevel of the decomposition, "Adam" cannot force to go to the right.}
	\label{fig-ZT:attractor-dec}
\end{figure}
 
\AP We say that a subgame $\S$ of $\G'$ is a ""$\SS_i$-region of $\attrDec{\G'}$"" if it is a "$\SS_i$-region@@simpleAttr" of some of the recursively defined "attractor decompositions". Similarly, for $y>x$ an odd integer, we say that~$\S$ is a ""$y$-avoiding region of $\attrDec{\G'}$"" if it is a  "$y$-avoiding region@@simpleAttr" of some of the recursively defined "attractor decompositions". We say that the full game $\G'$ is an $x-1$-avoiding region (note that $x$ might take the value $0$).
We remark that for any subset $S$ of vertices of $\G'$ there is one and only one minimal "$y$-avoiding region of $\attrDec{\G'}$" containing $S$ (note that $y$ might equal $-1$).

\begin{remark}\label{rmk-ZT:full-attractor-decomposition}
	A "$0$-recursive attractor decomposition" $\attrDec{\letterGame{\A}}$ of $\letterGame{\A}$ induces a partition of the vertices into
	\[ V = S_1 \sqcup \dots \sqcup S_r \; \sqcup \; A_1\sqcup\dots A_r \; \sqcup \;  B_1\sqcup \dots \sqcup B_s,  \]
	such that:
	\begin{itemize}
		\item $S_j$ is a "$\SS_i$-region of $\attrDec{\letterGame{\A}}$", for some $i\in \{1,\dots, k\}$,
		\item $A_j = \attr{\G_j}{S_j}$ for some subgame $\G_j$ appearing at some level of the "decomposition@@attrRecursive",
		\item $B_j = \attr{\G_j'}{x}$ for some even integer $x$ and some "$x-1$-avoiding region@@recursive" $\G_j'$  appearing at some level of the "decomposition@@attrRecursive".
	\end{itemize}

\AP Moreover, such a "decomposition@@attrRecursive" induces a total order over the "$\SS_i$-regions@@recursive": for two sets $S_t$, $S_{t'}$, we write $S_t \intro*\orderAttr S_{t'}$ if there are two regions $V_j$, $V_{j'}$ belonging to the same "attractor decomposition" in $\attrDec{\letterGame{\A}}$ such that $j<j'$, $S_t\subseteq V_j$ and $S_{t'}\subseteq V_{k'}$.

\AP We call such a partition a ""full attractor decomposition"" of $\letterGame{\A}$. We remark that, by definition of an "attractor decomposition", "Eve" "wins" $\subgame{S_j}$ from every vertex for every $S_j$.
See Figure~\ref{fig-ZT:attractor-dec} for an illustration.
\end{remark}

The proof that $\letterGame{\A}$ admits a "full attractor decomposition" uses the ideas appearing in \cite[Section~3]{DJW1997memory}.
\begin{lemma}
	Let $x$ be an even integer. If $\G'$ is a subgame of $\letterGame{\A}$  with no colour smaller than $x$ and such that "Eve" can "win" from every vertex, then it admits an "$x$-attractor decomposition".
	In particular, $\letterGame{\A}$ admits a "full attractor decomposition".
\end{lemma}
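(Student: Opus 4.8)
The plan is to mimic Zielonka's algorithm for solving Muller games \cite{Zielonka1998infinite,McNaughton93InfiniteGames}, in the form used in \cite{DJW1997memory}, reading off the regions from the Zielonka tree of the condition $\impliesMuller{\F}{\parity}$ (Figure~\ref{fig-ZT:zielonka-tree-letter-game}): its root is round, its unique square child is labelled $\SS\cup[x+1,d]$ (where $[0,d]$ is the colour range of $\A$), and the round children of that square node are the labels $\SS_i\cup[x+1,d]$ (one per child $n_i$ of the root of $\zielonkaTree{\F}$) together with, \emph{provided $[x+2,d]\neq\emptyset$}, the label $\SS\cup[x+2,d]$. These children correspond respectively to the $\SS_i$-regions and to the $x+1$-avoiding regions; since $\SS\notin\impliesMuller{\F}{\parity}$, no $x+1$-avoiding child exists once $[x+2,d]=\emptyset$, which is what will make the recursion finite. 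Throughout I would use the standard attractor-removal fact: if $\G'$ is a subgame of $\letterGame{\A}$ and $Z=\attr{\G'}{X}$, then $\subgame{V'\setminus Z}$ is again a subgame (no sink: an Eve vertex outside $Z$ has all out-edges outside $Z$, an Adam vertex outside $Z$ has at least one), it has no edge whose source colour lies in $X$, and if Eve wins $\G'$ from every vertex she also wins $\subgame{V'\setminus Z}$ from every vertex (a winning strategy from a vertex outside $Z$ never leaves the complement, and Adam cannot force the play out of it).

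For one level, first set $Z_0:=\attr{\G'}{x}$; if $Z_0=V'$ we are done with $l=0$. Otherwise $\G^{(0)}:=\subgame{V'\setminus Z_0}$ is a subgame with no edge coloured $x$ on which Eve wins everywhere, so its colours lie in $\SS\cup[x+1,d]$. Inside $\G^{(0)}$ I would carve $V_1,A_1,\dots,V_l,A_l$ one at a time: keeping the remaining subgame $\H_{j-1}$ (still a subgame with no colour $x$ on which Eve wins everywhere, by the attractor fact), for each round child label $L$ of the square node — taken in a fixed order, preferring the $\SS_i$-labels — restrict $\H_{j-1}$ to the colours of $L$, remove Adam's attractor (inside that restriction) to the colours not in $L$, and recursively compute Eve's winning region of the result; unioning these over all children and closing under such attractors yields a nonempty set $V_j$ that is either (1) confined to colours in $\SS_i$ for some $i$ (an $\SS_i$-region), or (2) won by Eve everywhere while avoiding colour $x+1$ (an $x+1$-avoiding region, for which I would then recurse at level $x+2$ on $\G_j':=\subgame{V_j}$ with the $(x+1)$-edges deleted, to obtain $\attrDec{\G_j'}$). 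I set $A_j:=\attr{\H_{j-1}}{V_j}$ and iterate; termination of this inner loop is clear since each $V_j\cup A_j$ is nonempty. That Eve wins each $\subgame{V_j}$ from every vertex, and that after removing $V_j$ and $A_j$ Adam can never re-enter $V_j$ from a later region (so that Adam leaving $V_j$ always lands in $\attr{\G'}{x}\sqcup V_1\sqcup A_1\sqcup\cdots\sqcup V_{j-1}\sqcup A_{j-1}$), are the invariants maintained by Zielonka's algorithm and follow because $A_j$ is an attractor (hence its complement is Adam-closed) and $V_j$ is a winning region of a colour-restricted subgame; this gives an $x$-attractor decomposition.

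For the ``in particular'' part: at $x=0$ the condition $\impliesMuller{\F}{\parity}$ has $\GG\in\impliesMuller{\F}{\parity}$ (as $\min(\GG\cap[0,d])=0$ is even), and since $\A$ is history-deterministic, Eve wins $\letterGame{\A}$ from some initial state, hence — by prefix-independence of $\Muller{\F}$ and Lemma~\ref{lemma-prelims:prefix-indep-automaton} — from every vertex; so the first part applies. Applying it recursively to each $\G_j'$ coming from an $x+1$-avoiding region (with parameter $x+2$, the hypotheses again being met) produces the nested $\attrDec{\G_j'}$, and assembling everything gives a $0$-recursive attractor decomposition of $\letterGame{\A}$, i.e.\ a full attractor decomposition (Remark~\ref{rmk-ZT:full-attractor-decomposition}); the order $\orderAttr$ on $\SS_i$-regions is the one induced, level by level, by the order in which regions were carved. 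The recursion terminates: it descends to level $x+2$ only when $[x+2,d]\neq\emptyset$, so only finitely many even levels $0,2,\dots$ are reached, and once $[x,d]=\emptyset$ the current subgame has colours in $\SS$ only, every final SCC has colour set $\notin\F$ (Eve wins everywhere), and every such set is contained in some $\SS_i=\nu(n_i)$ (the $\nu(n_i)$ being precisely the maximal non-$\F$ subsets of $\SS$, cf.\ Remark~\ref{rmk-ZT:union-changes-acceptance}); hence all regions carved there are $\SS_i$-regions and no further recursion is triggered.

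The main obstacle is the inner carving step: defining $V_j$ precisely so that it is simultaneously (a) genuinely an $\SS_i$-region or an $x+1$-avoiding region, (b) such that, after removing $Z_0\sqcup V_1\sqcup A_1\sqcup\cdots\sqcup V_j\sqcup A_j$, Adam can no longer reach $V_j$ from any later region, and (c) won by Eve from every vertex. This is exactly Zielonka's algorithm specialised to $\impliesMuller{\F}{\parity}$, and the cleanest route is to transcribe the analysis of \cite[Section~3]{DJW1997memory} almost verbatim, the only genuinely new work being the bookkeeping that matches their ``memory $=$ Zielonka tree'' winning strategy to the region partition and the order $\orderAttr$. The degenerate base cases ($\G'$ empty; $\G'$ with no colour in $[x,d]$) should be handled first.
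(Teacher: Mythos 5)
You have the right skeleton, and it is essentially the same one the paper uses: peel off $\attr{\G'}{x}$, then repeatedly carve a region $V_j$ from the remaining subgame $\G_j$ (on which Eve still wins everywhere, since the $A_{j'}$ are attractors and Adam can therefore force the play to stay in $\G_j$), take its attractor $A_j$, and recurse at level $x+2$ inside the $x+1$-avoiding regions. But the proposal has a genuine gap at exactly the point you yourself label ``the main obstacle'': you never show that the carving step yields a \emph{non-empty} $V_j$ whenever $\G_j$ is non-empty. This is the entire mathematical content of the lemma --- without it the outer loop need not make progress and the claimed partition of $V'$ need not exist --- and ``termination is clear since each $V_j\cup A_j$ is nonempty'' assumes it rather than proves it. Announcing that one should ``transcribe \cite{DJW1997memory} almost verbatim'' does not discharge it, since the statement needed is specific to the condition $\impliesMuller{\F}{\parity}$ of the letter game.

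The missing argument is short. Claim: if $\G_j$ is non-empty, has no colour $\le x$, and Eve wins it from every vertex, then from some vertex she has a winning strategy that either produces no letter outside some $\SS_i$, or avoids colour $x+1$. If not, then from every vertex Adam can force the appearance of colour $x+1$ and, for each $i$, the appearance of a letter outside $\SS_i$; chaining these objectives in round-robin fashion forever (never producing a colour $\le x$, since none exists in $\G_j$) he builds a play whose set of letters seen infinitely often is contained in no $\SS_i$ --- hence belongs to $\F$, the $\SS_i$ being the maximal non-accepting subsets of $\SS$ --- while the least priority seen infinitely often is the odd number $x+1$. Such a play violates $\impliesMuller{\F}{\parity}$, contradicting that Eve wins $\G_j$ everywhere. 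Relatedly, you should define $V_j$ as the set of vertices from which Eve wins \emph{$\G_j$ itself} under the extra constraint, not as Eve's winning region of a letter-restricted subgame: the latter artificially curtails Adam and does not immediately give that $\subgame{V_j}$ contains no letter outside $\SS_i$, nor that Adam exiting $V_j$ lands only in earlier regions, whereas with the former definition both properties follow at once because the constrained winning positions are closed under any play consistent with the constrained strategy.
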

\begin{proof}
	We assume without loss of generality that $x=0$.
	Suppose that $V_1, A_1,\dots V_{j-1}, A_{j-1}$ have already been defined and that they verify the desired properties. Suppose that the game $\G_j$ with vertices $V\setminus \left(\attr{\G'}{0} \sqcup V_1 \sqcup \dots  V_{j-1} \sqcup A_{j-1}\right)$ is non-empty. First, note that "Eve" "wins" $\G_j$ from any position. Indeed, "Eve" "wins" $\G'$ from any vertex $v$ in $\G_j$ (as we suppose that she can "win"~$\G'$ starting anywhere); moreover, since $v\notin A_{j'}$ for any $j'<j$, "Adam" has a "strategy" from $v$ forcing to remain in $\G_j$, and "Eve" has to be able to "win" against any such "strategy".
	
	We prove that either (1) there is some $i\in \{1,\dots, k\}$ and $v$ vertex in $\G_j$ such that "Eve" has a "winning strategy" from $v$ forcing to produce no colour in $\SS\setminus \SS_i$, or (2) there is some vertex $v$ in $\G_j$  such that "Eve" has a "winning strategy" from $v$ avoiding colour $1$. Suppose by contradiction that this was not the case. Then, Adam can use the following strategy: first, he forces producing colour $1$, then, a colour not in $\SS_1$, followed by a colour not in $\SS_2$, and continues this pattern until a colour not in $\SS_l$ is produced (and this without producing colour $0$, since no $0$-edge appears in $\G_j$). Afterward, he continues repeating these steps in a round-robin fashion. This allows him to produce a play winning for him (the word produced is in $\Muller{\F}$ while the minimal number produced is $1$), contradicting the fact that "Eve" "wins" $\G_j$ from $v$.
	
	We assume that we are in the case (1) (case (2) is identical), so from some vertices "Eve" can "win" producing no colour in $\SS\setminus \SS_i$. We let $V_j$ be the set of such vertices, and for each of them we fix a "strategy" $\strat_{v}$ that is "winning" in $\G_j$ and avoids colours in $\SS\setminus \SS_i$.  By definition of $V_j$, if $\rr = v\lrp{\phantom{w.}} v'$ is a finite play "consistent with" $\strat_{v}$ in $\G_j$, then $v'\in V_j$ (Eve can still win without producing colours in $\SS\setminus \SS_i$), so "Adam" cannot force leaving $V_j$. This proves that:
	\begin{enumerate}
		\item $\strat_v$ is "winning" in $\subgame{V_j}$ from $v$,
		\item if $v\in V_j$ is controlled by "Adam" and $v\re{}v'$ is an edge in $\letterGame{\A}$, then $v'\in \attr{\G'}{0} \sqcup V_1 \sqcup A_1 \sqcup \dots  V_{j-1} \sqcup A_{j-1} \sqcup V_j$.
	\end{enumerate}	
	Also, if a vertex $v$ controlled by Adam is in $V_j$, no edge $v\re{a\notin \SS_i}v'$ appears in $\G_j$, so no colour of $\SS\setminus \SS_i$ appears in $\subgame{V_j}$.
	
	To finish the proof, we define $A_j$ to be the attractor of $V_j$ in $\G_j$.
 
	The existence of a  "full attractor decomposition" for $\letterGame{\A}$ follows from the fact that any "$x+1$-avoiding region" of an "$x$-attractor decomposition" verifies the hypothesis of the lemma.
\end{proof}

\begin{lemma}[\cite{McNaughton93InfiniteGames}]\label{lemma-ZT:uniform-memory-strat}
	Let $\G$ be a "game" using a "Muller" "acceptance condition" such that "Eve" "wins" $\G$ from every vertex. Then, there is a finite "memory structure"  $(\M,\nextmove)$ over $\G$ "implementing@@mem" a "winning strategy" ""uniformly"", that is, for every vertex $v$ of $\G$ there is a memory state $m_v$ in $\M$ such that the "memory structure" $(\initialTS{\M}{m_v},\nextmove)$ "implements@@strat" a "winning strategy" from $v$.
\end{lemma}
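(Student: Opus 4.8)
The plan is to reduce to "parity" "games", where uniform positional winning strategies are available, and to use the "deterministic" "parity" "automaton" $\zielonkaAutomaton{\F}$ of Definition~\ref{def-ZT:parityAutomatonZT} as the "memory skeleton". Write $\GG$ for the set of "colours" of $\G$ and $\F\subseteq\powplus{\GG}$ for the family such that $\Muller{\F}$ is the "winning condition" of $\G$. By Proposition~\ref{prop-ZT:correctness_ZT_parity}, $\zielonkaAutomaton{\F}$ is a "complete" "deterministic" "parity" "automaton" "recognising" $\Muller{\F}$, so $\B := \G\compositionAut\zielonkaAutomaton{\F}$ is a "parity" "game" (the "determinism" of $\zielonkaAutomaton{\F}$ leaves no choice, so the partition into "Eve" and "Adam" vertices lifts verbatim), and by Proposition~\ref{prop-morph:product-by-aut-induces-mapping} the projection $\pp\colon\B\to\G$ is a "locally bijective" "morphism of games".

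The first step is to check that "Eve" "wins" $\B$ from \emph{every} vertex $(v,q)\in V\times Q$. Since $\Muller{\F}$ is "prefix-independent", Lemma~\ref{lemma-prelims:prefix-indep-automaton} lets us take any state of $\zielonkaAutomaton{\F}$ as "initial@@state", whence a "play" of $\B$ from $(v,q)$ is "accepting@@run" if and only if its $\pp$-image — which by "local bijectivity" ranges over \emph{all} "plays" of $\G$ from $v$ — is "accepting@@run" in $\G$. As "Eve" "wins" $\G$ from $v$ by hypothesis, one lifts her "winning strategy" along $\pp$ in the canonical way (from a position of $\B$, play the unique outgoing edge mapping to the one chosen in $\G$; this is well defined by "local bijectivity", and since $\pp$, being a "morphism", "preserves the acceptance" of "runs" in both directions, every resulting "play" of $\B$ is "accepting@@run"), obtaining a "winning strategy" for her from $(v,q)$ in $\B$.

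Now $\B$ is a "parity" "game" that "Eve" "wins" everywhere, so by uniform positional determinacy of "parity" "games"~\cite{EmersonJutla91Determinacy} she has a single positional "strategy" $\sigma_0\colon\VEve\times Q\to\edgesProduct$ "winning" from every vertex of $\B$. I would then define the required "memory structure" $(\M,\nextmove)$ for $\G$: let $\M$ have state set $Q$ and, for every edge $e$ of $\G$ and every $q\in Q$, a transition labelled $e$ from $q$ to $\transAut{}(q,\gg(e))$ (to $q$ itself when $e$ is "uncoloured"); this is a legitimate "memory skeleton" over the edge set of $\G$ precisely because $\zielonkaAutomaton{\F}$ is "deterministic". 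Put $\nextmove(v,q)=\pp_E(\sigma_0(v,q))$, and for a vertex $v$ of $\G$ let $m_v$ be any state of $\zielonkaAutomaton{\F}$. It remains to verify that $(\initialTS{\M}{m_v},\nextmove)$ "implements@@memStrat" a "winning strategy" from $v$: by construction the memory reached after a "finite play" $\rr$ from $v$ is the state of $\zielonkaAutomaton{\F}$ obtained by reading the "output" of $\rr$ from $m_v$, so every "play" of $\G$ from $v$ "consistent with" the implemented "strategy" is the $\pp$-image of a unique "play" of $\B$ from $(v,m_v)$ "consistent with" $\sigma_0$; since $(v,m_v)$ lies in "Eve's" "winning region" of $\B$ and $\pp$ "preserves the acceptance" of "runs", that image is "accepting@@run", as needed.

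I do not expect a deep obstacle; the work is bookkeeping. The two points that need care are (i) that $\B$ is genuinely a "parity" "game" in the model of this paper (acceptance over edges, no "uncoloured" "cycles") and that the relabelled copy of $\zielonkaAutomaton{\F}$ really is a "memory skeleton" over the edges of $\G$; and (ii) identifying the positional "strategy" $\sigma_0$ with the memory-implemented "strategy" under $\pp$ \emph{for every} initial memory state $m_v$ — this is exactly what delivers the \emph{uniformity} demanded by the statement. A self-contained alternative, closer to~\cite{McNaughton93InfiniteGames} and to the construction in~\cite{DJW1997memory}, is an induction on $\zielonkaTree{\F}$: peel off "Eve"-attractors level by level and build $\M$ as a product of the "memory structures" obtained inductively for the "subtree rooted at" each "child" of the "root"; there the delicate step is the play analysis, showing that every infinite "play" either eventually stays inside one region — where the inductive "memory structure" applies — or cycles through the regions so that the set of "colours" seen infinitely often falls in $\F$.
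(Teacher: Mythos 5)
Your proof is correct. Note that the paper itself gives no proof of this lemma---it is imported verbatim from \cite{McNaughton93InfiniteGames}---so the comparison is with McNaughton's original argument and with the paper's neighbouring constructions. McNaughton (and Dziembowski--Jurdzi\'nski--Walukiewicz \cite{DJW1997memory}, whose attractor-based induction the paper reuses for the attractor-decomposition lemma that immediately follows) establishes finite-memory determinacy of Muller games by peeling off attractors and recursing; that route is self-contained but needs exactly the delicate play analysis you flag at the end. Your route instead composes $\G$ with the deterministic parity automaton $\zielonkaAutomaton{\F}$, invokes uniform positional determinacy of parity games \cite{EmersonJutla91Determinacy}, and projects the positional strategy back through the locally bijective morphism; this is precisely the technique the paper itself deploys a page later for Corollary~\ref{cor-ZT:HD-Rabin-memory} (there with a history-deterministic Rabin automaton, which forces the detour through games suitable for transformations---determinism of $\zielonkaAutomaton{\F}$ spares you that detour here). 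The two points you single out are the right ones, and you resolve both: Eve wins the product from \emph{every} vertex $(v,q)$ because prefix-independence (Lemma~\ref{lemma-prelims:prefix-indep-automaton} plus Proposition~\ref{prop-ZT:correctness_ZT_parity}) makes acceptance of a play from $(v,q)$ agree with acceptance of its projection regardless of $q$, and this same fact is what makes the implemented strategy winning for every choice of initial memory state $m_v$, which is the uniformity the statement asks for. There is also no circularity: correctness of $\zielonkaAutomaton{\F}$ and positional determinacy are independent of this lemma, even though the lemma is later used in the proof of Theorem~\ref{thm-zt:strong_optimality_ZTparity}. The only cost of your approach relative to \cite{DJW1997memory} is memory size ($|\leaves(\zielonkaTree{\F})|$ rather than $\memTree{\zielonkaTree{\F}}$), which is irrelevant here since the statement only requires finiteness.
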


For the rest of the paragraph, we fix a "$0$-recursive attractor decomposition" $\attrDec{\letterGame{\A}}$ for $\letterGame{\A}$ and let $S_1 \orderAttr S_2 \dots \orderAttr S_r$ be the "$\SS_i$-regions@@recursive" of the induced "full attractor decomposition".
For each region $S_j$ we fix a "memory structure" $(\M_j, \ss_j)$ "uniformly@@strat" "implementing" a "winning strategy" for "Eve" in $\subgame{S_j}$  (as given by Lemma~\ref{lemma-ZT:uniform-memory-strat}).
As in the previous paragraph, we can consider the "composition@@memory" $\subgame{S_j} \prodMem{\ss_j} \M_j$ consisting of the "product game@@mem" in which the choices for "Eve" are restricted to those of the form 
$(v,m) \re{c} (v',m')$ if  $\ss_j(v,m) =  e = v\re{c} v'$ and $\mu_j(m,e) = m'$. 
By definition, "Eve" does not have any choice in $\subgame{S_j} \prodMem{\ss_j} \M_j$, and since $(\M_j, \ss_j)$ "implements@@strat" a "winning strategy", any infinite "path" in $\subgame{S_j} \prodMem{\ss_j} \M_j$ produces a set of colours in $\impliesMuller{\F}{\parity}$.
\AP We let $\intro*\piLetterGame\colon \subgame{S_j} \prodMem{\ss_j} \M_j \to \subgame{S_j}$ be the projection into $\subgame{S_j}$.

\AP A "subgraph" $G_j$ of $\subgame{S_j} \prodMem{\ss_j} \M_j$ is ""$X$-Adam-closed@@letterGame"", for a subset $X\subseteq \SS$, if for every vertex $(q,m)$ controlled by Adam and every $a\in X$, the transition $(q,m)\re{a} ((q,a),m')$ remains in $G_j$. 
We say that $G_j$ is an ""$X$-FSCC@@letterGame"" if it is a "final SCC" of the restriction of $\subgame{S_j} \prodMem{\ss_j} \M_j$ to the graph where Adam's choices are restricted to letters in $X$ that is moreover "$X$-Adam-closed@@letterGame". 
\AP We say that a "subgraph" $\G$ of $\letterGame{\A}$ is an ""$X$-closed subgame"" (with respect to the "attractor decomposition" $\attrDec{\letterGame{\A}}$ and a family of "finite memory@@strat" "strategies")  if $\G = \piLetterGame(G_j)$ for $G_j$ some "$X$-Adam-closed@@letterGame" "SCC" of some product $\subgame{S_j} \prodMem{\ss_j} \M_j$.

Intuitively, an "$X$-closed subgame" $\G$ of the "letter game" is a subgame included in a region $S_j$ of the "full attractor decomposition"  such that, if "Adam" only provides letters in $X$ and "Eve" plays according to the strategy defined by the "memory structures" $\M_j$, the play will never leave $\G$. 

\begin{lemma}\label{lemma-ZT:Eve-wins-closed-subgames}
	"Eve" "wins" any "$X$-closed subgame" of $\letterGame{\A}$ (from any vertex).
\end{lemma}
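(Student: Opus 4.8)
The plan is to show that for an arbitrary vertex $v$ of an $X$-closed subgame $\G = \piLetterGame(G_j)$, Eve has a winning strategy from $v$, obtained by \emph{simulating} a play of the product $\subgame{S_j} \prodMem{\ss_j} \M_j$ inside $\G$. We use the three facts recalled just before the lemma: $G_j$ is an $X$-closed subgraph of $\subgame{S_j} \prodMem{\ss_j} \M_j$ whose Adam-edges carry letters of $X$; Eve has no choice in $\subgame{S_j} \prodMem{\ss_j} \M_j$; and, because $(\M_j, \ss_j)$ implements a winning strategy, every infinite path of $\subgame{S_j} \prodMem{\ss_j} \M_j$ produces a set of colours in $\impliesMuller{\F}{\parity}$. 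Since $\G$ is a subgraph of $\letterGame{\A}$ it inherits $\impliesMuller{\F}{\parity}$ as acceptance set, and the projection $\piLetterGame$ does not change the output colours of runs; hence it suffices to let Eve force every play of $\G$ from $v$ to be the $\piLetterGame$-image of an infinite path of $\subgame{S_j} \prodMem{\ss_j} \M_j$.

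Concretely, I would first fix a memory state $m_0$ with $(v, m_0) \in G_j$ (this exists since $v \in \piLetterGame(G_j)$), and have Eve maintain, along every finite play from $v$, a finite path of $G_j$ starting at $(v, m_0)$ whose $\piLetterGame$-image is the play so far. The invariant is preserved as follows. When Adam moves from an Adam-vertex $q$ of $\G$ along an edge $q\re{a}(q,a)$, the letter $a$ lies in $X$ (all Adam-edges of $G_j$ do), and $X$-closedness of $G_j$ at the simulated Adam-vertex $(q,m) \in G_j$ provides an edge $(q,m)\re{a}((q,a),m')$ in $G_j$ with which Eve updates the simulation; in particular the new simulated vertex lies again in $G_j$. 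When the play reaches an Eve-vertex $(q,a)$ of $\G$, the simulation sits at a vertex $((q,a),m')$ of $G_j$, which has a unique outgoing edge $e$ in $\subgame{S_j}\prodMem{\ss_j}\M_j$ (the one prescribed by $\ss_j$); Eve plays $\piLetterGame(e)$ and updates the simulation along $e$. Here I would invoke that $G_j$ is a final strongly connected component of the $X$-restricted product, so that $e$ stays inside $G_j$, $\piLetterGame(e)$ is a legal move of $\G$, and the invariant is maintained.

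It then remains to check that this strategy is winning. Any infinite play $\rr$ of $\G$ from $v$ consistent with it is, by construction, $\rr = \piLetterGame(\hat{\rr})$ for an infinite path $\hat{\rr}$ of $\subgame{S_j}\prodMem{\ss_j}\M_j$ (indeed of $G_j$); therefore $\gg(\rr) = \gg(\hat{\rr}) \in \impliesMuller{\F}{\parity}$ by the third fact above, so $\rr$ is accepting in $\G$. As $v$ was arbitrary, Eve wins $\G$ from every vertex.

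The difficulty here is not conceptual but lies in the bookkeeping: the Adam-fibre $(q,m)$ tracked by Eve's simulation need not be the one Adam ``had in mind'', so the argument genuinely relies on $X$-closedness holding at \emph{every} Adam-vertex of $G_j$ (not merely at one representative), and on the finality of $G_j$ in the $X$-restricted product so that Eve's forced moves never leave $G_j$ — equivalently, so that $\piLetterGame(G_j)$ really is a subgame on which the simulation can be run forever.
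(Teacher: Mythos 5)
Your proposal is correct and is essentially the paper's argument, unfolded: the paper simply observes that in an $X$-closed subgame Adam's moves are restricted while all of Eve's moves prescribed by $(\M_j,\ss_j)$ remain available, so the strategy winning in the full $\subgame{S_j}$ stays winning; your simulation of $\subgame{S_j}\prodMem{\ss_j}\M_j$ is exactly "playing that memory strategy", with the bookkeeping made explicit. The extra care you take (that Eve's forced product-edge stays inside $G_j$, and that $\piLetterGame$ preserves output colours) is sound and consistent with the definitions preceding the lemma.
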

\begin{proof}
	In an "$X$-closed subgame" included in a region $\subgame{S_j}$, "Adam's" moves have been restricted; however, all Eve's moves coming from the "strategy" "implemented by" $(M_j, \ss_j)$ are available. Therefore, this "strategy" is also "winning" in such a subgame, since it is "winning" in the full $\subgame{S_j}$.
\end{proof}

Putting this lemma together with Remark~\ref{rmk-ZT:strat-letterGame-induces-resolver} we obtain:
\begin{lemma}\label{lemma-ZT:Ai-closed-sets-induce-HD-automata}
	Let $X\subseteq \SS$, and let $\G_X \subseteq \letterGame{\A}$ be an "$X$-closed subgame" of $\letterGame{\A}$. The "subautomaton" of $\A$ "induced by@@aut" $\projLetterGame{\A}(\G_X)$ is "HD@@aut" and "recognises" $\MullerC{\restSubsets{\F}{X}}{X}$. 
\end{lemma}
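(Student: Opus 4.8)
The plan is to recognise $\G_X$ as (an isomorphic copy of) the "letter game" of the "subautomaton" it induces, and then to read off both the "history-determinism" and the recognised language from Lemma~\ref{lemma-ZT:Eve-wins-closed-subgames} and Remark~\ref{rmk-ZT:strat-letterGame-induces-resolver}. Write $\B_X$ for the subautomaton of $\A$ induced by $\projLetterGame{\A}(\G_X)$, regarded as an automaton over the alphabet $X$; note that all of its transitions are labelled by letters of $X$, since from an Adam vertex of an "$X$-closed subgame" only $X$-labelled moves survive, so the Eve vertices appearing in $\G_X$ are of the form $(q,a)$ with $a\in X$.

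The first, and main, step is to check that $\G_X$ and $\letterGame{\B_X}$ are the same game (declaring, say, every state of $\B_X$ to be "initial state"). For this I would use that $\G_X = \piLetterGame(G_j)$ is the projection of a "strongly connected" subgraph $G_j$: hence every vertex of $\G_X$ has an incoming and an outgoing edge inside $\G_X$. Combined with $X$-closedness (every Adam vertex $q$ of $\G_X$ carries the edges $q\re{a}(q,a)$ for all $a\in X$) and the no-sink convention, this pins down the vertex set of $\G_X$ as exactly $Q_X \sqcup (Q_X\times X)$, where $Q_X$ is the state set of $\B_X$, and its edge set as consisting exactly of the edges $q\re{a}(q,a)$ with $q\in Q_X$, $a\in X$, together with the edges $(q,a)\re{c}q'$ projecting to transitions of $\B_X$ — which is precisely the edge set of $\letterGame{\B_X}$. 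It then remains to observe that the "acceptance set" of $\G_X$, obtained by restricting $\impliesMuller{\F}{\parity}$ to the colours actually occurring in $\B_X$ (all of which involve only letters of $X$), is exactly $\impliesMuller{\restSubsets{\F}{X}}{\parity}$. Hence $\G_X = \letterGame{\B_X}$.

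Granting this identification, the rest is routine. By Lemma~\ref{lemma-ZT:Eve-wins-closed-subgames}, "Eve" "wins" $\G_X = \letterGame{\B_X}$ from every vertex, in particular from some state of $\B_X$, so by the letter-game characterisation of "history-determinism"~\cite{HP06} the automaton $\B_X$ is "HD@@aut". Moreover, by Remark~\ref{rmk-ZT:strat-letterGame-induces-resolver} a "winning strategy" for "Eve" in $\letterGame{\B_X}$ induces a "sound@@aut" resolver: playing it against an input word $w$ with $\minf(w)\in\restSubsets{\F}{X}$, "Eve" builds a run that stays inside $\G_X$ (by $X$-closedness) and is "accepting", whence $w\in\Lang{\B_X}$; this gives $\MullerC{\restSubsets{\F}{X}}{X}\subseteq \Lang{\B_X}$. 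For the reverse inclusion I would argue exactly as in the first part of the proof of Lemma~\ref{lemma-ZT:X-FSCC-in-product-induce-subautomaton}: $\B_X$ is a subautomaton of $\A$ whose transitions are all labelled by $X$ and whose states are all reachable in $\A$, so prefixing an "accepting run" of $\B_X$ over $w\in X^\oo$ with a finite run of $\A$ to its starting state produces an "accepting run" of $\A$, whence by "prefix-independence" and $\minf(w)\subseteq X$ we obtain $w\in\Lang{\A}\cap X^\oo = \MullerC{\restSubsets{\F}{X}}{X}$. Combining the two inclusions yields $\Lang{\B_X} = \MullerC{\restSubsets{\F}{X}}{X}$.

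The only genuinely delicate point is the bookkeeping of the first step: ensuring that $X$-closedness together with strong connectedness of $G_j$ leaves no spurious vertices and no missing edges in the projection, so that $\piLetterGame(G_j)$ is literally the "letter game" of $\B_X$ and the inherited Muller condition is literally $\impliesMuller{\restSubsets{\F}{X}}{\parity}$. Everything afterwards is a direct combination of Lemma~\ref{lemma-ZT:Eve-wins-closed-subgames}, Remark~\ref{rmk-ZT:strat-letterGame-induces-resolver}, and the subautomaton argument already used earlier in this section.
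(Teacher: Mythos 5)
Your proposal is correct and follows the same route the paper intends: the paper's "proof" is just the one-line observation that Lemma~\ref{lemma-ZT:Eve-wins-closed-subgames} combined with Remark~\ref{rmk-ZT:strat-letterGame-induces-resolver} yields the result, and your argument is exactly that, with the bookkeeping (identifying $\G_X$ with the "letter game" of the induced "subautomaton", and the "prefix-independence" argument for the inclusion $\Lang{\B_X}\subseteq \MullerC{\restSubsets{\F}{X}}{X}$) written out explicitly. The extra care you take with the reverse inclusion is in fact needed for "soundness@@aut" of the induced "resolver@@aut", so nothing is missing.
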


\begin{lemma}\label{lemma-ZT:exist-X-closed-regions}
	If a product $\subgame{S_j} \prodMem{\ss_j} \M_j$ does not contain any "$X$-Adam-closed subgraph@@letterGame", for $X\subseteq \SS$, then from any vertex $(q,m)$ "Adam" can force leaving $S_j$ while playing only letters in $X$. 
	That is, there is a path $(q,m)\lrp{\phantom{w.}} (q',m')$ in $\subgame{S_j} \prodMem{\ss_j} \M_j$ producing exclusively letters in $X$ such that, for some $a\in X$, the edge $q'\re{a} (q',a)$ does not belong to $\subgame{S_j}$. 
\end{lemma}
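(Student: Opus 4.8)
The plan is to prove the statement directly in its ``Adam can force'' form. Since in the product $P := \subgame{S_j} \prodMem{\ss_j} \M_j$ all of Eve's moves are fixed by $(\M_j,\ss_j)$, every Eve-vertex $((q,a),m)$ of $P$ has a unique outgoing edge, and an Adam-vertex $(q,m)$ of $P$ has outgoing edges $(q,m)\re{a}((q,a),\mu_j(m, q\re{a}(q,a)))$ exactly for those $a\in\SS$ with $q\re{a}(q,a)\in\subgame{S_j}$; in particular no real choice is left for either player besides Adam's choice of letter, so ``Adam can force to leave $S_j$ playing only letters of $X$'' is literally the same as ``there is a path in $P$ of the required shape''. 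I would assume, for contradiction, that this fails at some vertex, and manufacture an "$X$-closed subgraph@@letterGame" of $P$.

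Concretely, I would introduce the ``bad set'' $Z$: the set of vertices $(q,m)$ of $P$ from which there is \emph{no} path producing only letters of $X$ and ending at an Adam-vertex $(q',m')$ admitting some $a\in X$ with the edge $q'\re{a}(q',a)$ not in $\subgame{S_j}$. The lemma is exactly the claim $Z=\emptyset$, so assume $Z\neq\emptyset$. A small but essential observation is that a length-$0$ path is admissible, so ``immediate escape from $(q,m)$'' is the special case of the conclusion where the path is the single vertex $(q,m)$; hence, if $(q,m)\in Z$ is an Adam-vertex, then for every $a\in X$ we must have $q\re{a}(q,a)\in\subgame{S_j}$ (otherwise $(q,m)$ would have an immediate escape and would not lie in $Z$), and consequently the edge $e_a := (q,m)\re{a}((q,a),\mu_j(m, q\re{a}(q,a)))$ lies in $P$.

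The key step is then to check that the subgraph $G$ of $P$ induced on $Z$ (keeping all edges of $P$ between vertices of $Z$) is $X$-closed, a condition that only constrains its Adam-vertices. Fix an Adam-vertex $(q,m)\in Z$ and $a\in X$; by the previous step $e_a$ exists in $P$. If the target of $e_a$ were not in $Z$, there would be a path $\pi$ from that target producing only letters of $X$ and ending at an immediate $X$-escape, whence $e_a\pi$ would be such a path out of $(q,m)$ (it still produces only letters of $X$, as $a\in X$), contradicting $(q,m)\in Z$. So the target of $e_a$ lies in $Z$, i.e.\ $e_a$ is an edge of $G$. Since $a\in X$ was arbitrary, every Adam-vertex of $G$ satisfies the defining condition of an "$X$-closed subgraph@@letterGame", and $G$ is nonempty because $Z\neq\emptyset$; this contradicts the hypothesis that $P$ contains no such subgraph. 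Therefore $Z=\emptyset$, which is precisely the assertion of the lemma.

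I do not anticipate a genuine obstacle: this is a pure reachability argument in the finite graph $P$, and it uses neither the parity condition nor the fact that $\ss_j$ implements a \emph{winning} strategy (those enter in Lemmas~\ref{lemma-ZT:Eve-wins-closed-subgames} and~\ref{lemma-ZT:Ai-closed-sets-induce-HD-automata}). The only points deserving care are the bookkeeping of the memory component $m$ — making sure that the edge $e_a$ and its target are the ones actually present in the composed game $\subgame{S_j} \prodMem{\ss_j} \M_j$ — and the remark that length-$0$ paths are allowed, so that an immediate escape is a legitimate instance of the conclusion rather than a separate case.
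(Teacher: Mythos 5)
Your proposal is correct and follows essentially the same route as the paper: the paper's (one-line) proof takes the set of vertices reachable from $(q,m)$ by reading letters of $X$ and observes it is an "$X$-closed@@letterGame" subgraph, while you take the set of all vertices with no $X$-escape path — the same contrapositive reachability argument, just packaged with the full "bad set" instead of the forward-reachable set from one bad vertex.
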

\begin{proof}
	If this was not the case, the "subgraph" of $\subgame{S_j} \prodMem{\ss_j} \M_j$ consisting of the vertices that can be reachable from $(q,m)$ by reading letters in $X$ would form an "$X$-Adam-closed@@letterGame" subgraph.
\end{proof}

\begin{lemma}\label{lemma-ZT:existance-Ai-closed-regions}
	For each label $\SS_i$ of the "children" of the "root" of $\zielonkaTree{\F}$, $\letterGame{\A}$ admits some "$\SS_i$-closed subgame" contained in a "$\SS_i$-region of $\attrDec{\letterGame{\A}}$". 
\end{lemma}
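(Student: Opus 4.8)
The proof will go by contradiction, using Lemma~\ref{lemma-ZT:exist-X-closed-regions} together with the order structure of the full attractor decomposition $\attrDec{\letterGame{\A}}$. Assume that for the fixed index $i$ no product $\subgame{S_j}\prodMem{\ss_j}\M_j$ contains an $\SS_i$-closed subgraph (ranging over all region blocks $S_j$ of the full attractor decomposition). Then, by Lemma~\ref{lemma-ZT:exist-X-closed-regions} applied with $X=\SS_i$, from every vertex of every such product Adam can force the play out of $S_j$ while playing only letters in $\SS_i$. I will combine these local escape strategies into a single strategy for Adam in the whole game $\letterGame{\A}$ and show it contradicts the fact that Eve wins $\letterGame{\A}$ from every vertex.

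Concretely, I will let Eve play the finite-memory strategy $\ss^\ast$ obtained by gluing together, along the recursive structure of $\attrDec{\letterGame{\A}}$, the regional strategies $\ss_j$ on the $\SS$-regions $S_j$ with the attractor strategies towards the $S_j$'s and towards the even colours $x$ (this is a well-defined finite-memory strategy; we only need it to be well-defined and to refine each $\ss_j$ on the corresponding $S_j$). Adam then plays $\tau$: he always picks a letter in $\SS_i$, and more precisely, whenever the current configuration lies in a region $S_j$ (tracked together with Eve's memory, so that on $S_j$ the play is a play of $\subgame{S_j}\prodMem{\ss_j}\M_j$), Adam uses the escape strategy of Lemma~\ref{lemma-ZT:exist-X-closed-regions} if $S_j$ is a $\SS_i$-region, and picks a letter in $\SS_i\setminus\SS_{i'}$ if $S_j$ is a $\SS_{i'}$-region with $i'\neq i$ (this set is non-empty because the labels of the children of a node of $\zielonkaTree{\F}$ are pairwise $\subseteq$-incomparable, by Definition~\ref{def-ZT:zielonkaTree}); such a letter does not occur in $\subgame{S_j}$, so it forces the play out of $S_j$ at once. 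Off the regions (in the attractor blocks $A_j$ and $B_j$) Adam plays any letter of $\SS_i$ and lets Eve's attractor moves carry the play.

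The heart of the argument is a well-foundedness computation: I will attach to each configuration a rank given by its position in the recursion tree of $\attrDec{\letterGame{\A}}$, read off level by level with the block order $\attr{}{x}\prec V_1\prec A_1\prec V_2\prec A_2\prec\cdots$, ordered lexicographically. The structural property ``Adam can only force to decrease'' (built into the definition of the $x$-attractor decomposition: from $V_j$ Adam's edges only reach $\attr{}{x}$, $V_{<j}$, $A_{<j}$ or $V_j$ itself) shows the rank never increases along a $\tau$-move; Eve's strategy $\ss^\ast$ attracts the play out of every $A_j$ into $S_j$ and out of every $B_j=\attr{}{x}$ by producing the even colour $x$, and since $\attr{}{x}$ is the $\prec$-least block and carries the only colour-$x$ edges at its level, after such a move the play either leaves the enclosing avoiding-region (strictly decreasing the rank at an outer level) or it revisits $\attr{}{x}$ infinitely often (whence colour $x$ is produced infinitely often with nothing smaller, which is winning for Eve and will give the contradiction directly). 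In all cases, under our hypothesis Adam is forced to leave every $\SS$-region and no attractor block is ever stabilised, so the rank strictly decreases infinitely often, or colour $x$ is produced cofinally while $\ss^\ast$ is winning — either way a contradiction. Hence some product $\subgame{S_j}\prodMem{\ss_j}\M_j$ contains an $\SS_i$-closed subgraph $G_j$; since all $\SS_i$-colours occur in $G_j$, the region $S_j$ is necessarily a $\SS_i$-region (again by incomparability of the children's labels), and $\piLetterGame(G_j)$ is the desired $\SS_i$-closed subgame, contained in the $\SS_i$-region $S_j$.

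The step I expect to be delicate, and which I would treat most carefully, is precisely the rank bookkeeping in the previous paragraph: making the lexicographic rank on the recursion tree of $\attrDec{\letterGame{\A}}$ precise (including the convention for moving to a deeper level), and verifying monotonicity across \emph{every} kind of move — Adam's escapes from $\SS_i$-regions and from $\SS_{i'}$-regions, Eve's attractor moves out of the $A_j$'s, and the colour-$x$ moves out of the $B_j$'s — so that the two alternatives ``infinite strict descent'' and ``even colour produced cofinally'' genuinely exhaust all outcomes. Everything else (the incomparability of children labels, the fact that $\SS_i$-colours not occurring in a $\SS_{i'}$-region force an exit, and the translation of an $\SS_i$-closed subgraph of a product into an $\SS_i$-closed subgame via $\piLetterGame$) is routine given the definitions and Lemma~\ref{lemma-ZT:exist-X-closed-regions}.
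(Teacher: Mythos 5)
Your overall architecture matches the paper's proof: the same glued strategy $\ss^\ast$ for Eve (attract to the even colour in the $B$-blocks, attract to $S_j$ in the $A$-blocks, play $(\M_j,\ss_j)$ inside the regions), the same use of Lemma~\ref{lemma-ZT:exist-X-closed-regions} to let Adam escape every region under the contradiction hypothesis, and your lexicographic-rank bookkeeping is essentially a restatement of the paper's claim that a play consistent with $\ss^\ast$ either stabilises in some region $S_j$ or produces, as minimal colour seen infinitely often, the even colour attached to the innermost enclosing avoiding region. Up to that dichotomy your argument is sound.

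The gap is in how you close the second horn. ``Colour $x$ is produced cofinally while $\ss^\ast$ is winning'' is not a contradiction: it is exactly the expected outcome of a play won by Eve, and by itself it says nothing against the hypothesis that no product contains an $\SS_i$-closed subgraph. The real contradiction must pass through the automaton: by Remark~\ref{rmk-ZT:projections-runs-letter-game}, the $\projLetterGame{\A}$-projection of the play is a run of $\A$ over the word $w$ of letters Adam produced, and if the minimal colour produced infinitely often is even, this run is accepting, hence $w\in\Lang{\A}=\MullerC{\F}{\SS}$. To contradict this you need $\minf(w)\notin\F$, and your Adam does not guarantee it: you only require him to play letters \emph{belonging to} $\SS_i$, so $\minf(w)$ may be a strict subset of $\SS_i$, and strict subsets of $\SS_i$ can perfectly well lie in $\F$ (they correspond to descendants of $n_i$ in the Zielonka tree, which exist whenever $n_i$ is not a leaf). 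The fix is the paper's extra requirement on Adam: he must interleave all letters of $\SS_i$ so that $\minf(w)=\SS_i\notin\F$, which is compatible with the escape strategy since each escape takes finitely many steps and uses only letters of $\SS_i$. With that addition, and with the projection step made explicit, your proof goes through; the remainder (incomparability of the children's labels, extraction of an $\SS_i$-FSCC, and the conclusion that the hosting region is an $\SS_i$-region) is fine.
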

\begin{proof}
	Assume that the "full attractor decomposition" of $\letterGame{\A}$ induced by $\attrDec{\letterGame{\A}}$ is the following:
	\[ V = S_1 \sqcup \dots \sqcup S_r \; \sqcup \; A_1\sqcup\dots A_r \; \sqcup \;  B_1\sqcup \dots \sqcup B_s,  \]
	
	We fix the following "strategy" $\strat$ for "Eve" in the "letter game": 
	\begin{itemize}
		\item whenever the play lands to $B_j$, where $B_j = \attr{\G_j'}{x}$ for some even colour $x$, she forces producing colour $x$,
		\item whenever the play arrives to some $A_j$, she forces going to $S_j$,
		\item in regions $S_j$ she uses the "strategy" $(\M_j, \ss_j)$. More precisely, let $m_v$ be the state of $\M_j$ such that $(\M_{j,v}, \ss_j)$ "implements" a "winning strategy" for $\subgame{V_j}$ from $(v,m_v)$. Each time that the play arrives to a vertex $v$ in $V_j$ from a different region, "Eve" uses $(\M_{j,v}, \ss_j)$.
	\end{itemize}

	\begin{claim}\label{claim-ZT:strat}
		Let $\rr$ be a "play" consistent with $\strat$ (from any vertex), and let $y\geq -1$ be the maximal odd number such that $\minf(\rr)$ is contained in a "$y$-avoiding region@@rec" $\S$ of $\attrDec{\letterGame{\A}}$. Then, either~$\rr$  eventually stays in a "$\SS_i$-region@@rec" $S_j$ contained in $\S$, or the minimal colour produced infinitely often by $\rr$ is $y+1$.
	\end{claim}
	\begin{subproof}
		Let $\attr{\S}{y+1}\sqcup V_1\sqcup A_1 \sqcup\dots \dots, V_l\sqcup A_l$ be the "attractor decomposition" of $\S$ appearing in $\attrDec{\letterGame{\A}}$.
		By definition of an "attractor decomposition", each time that the play leaves a $V_j$ region, the next vertex is in $v'\in \attr{\S}{y+1} \sqcup V_1 \sqcup A_1 \sqcup \dots  V_{j-1} \sqcup A_{j-1}$.  
		First, if $V_j$ is a "$y+2$-avoiding region@@simple", $\rr$ cannot stay in it (by maximality of $y$).
		Thus, if $\rr$ does not eventually stay in a "$\SS_i$-region@@rec", it leaves regions $V_j$ infinitely often, so it must 
		produce $y+1$ infinitely often too. Since $\S$ is a "$y$-avoiding region@@rec", no colour smaller than $y+1$ is produced.
	\end{subproof}
	We obtain as a consequence that $\strat$ is "winning for Eve" from any initial position: any "play" staying in a "$y$-avoiding region@@rec" and producing infinitely many $y+1$'s is winning, and if a "play" eventually stays in a "$\SS_i$-region@@rec" $S_j$, it has to be "winning@@strat" since the "strategy" "implemented by" $(\M_j, \ss_j)$ is "winning@@strat" in there.

	We remark that we can extract a "$\SS_i$-FSCC@@letterGame"  from any "$\SS_i$-Adam-closed@@letterGame" "subgraph" of $\subgame{S_j} \prodMem{\ss_j} \M_j$, that will be contained in the "$\SS_i$-region@@rec" $S_j$, so it suffices to prove the existence of such "$\SS_i$-Adam-closed@@letterGame" subgraphs. We also recall that in $\subgame{S_j} \prodMem{\ss_j} \M_j$ all choices are left to Adam, so he can choose to produce any path in this product whenever the play arrives to a vertex $v$ in $S_j$.
	
	Suppose by contradiction that no "accessible" "$\SS_i$-Adam-closed@@letterGame" "subgraph" exists in any of the products. We consider a "play" in which "Adam" does the following:
	\begin{enumerate}[label=(\alph*)]
		\item the letters that he gives form a word $w\in\SS^\oo$ such that $\minf(w) = \SS_i$, \label{item:letter-inf-often}
		\item each time that the "play" arrives to a region $S_j$, he exists this region in a finite number of steps.\label{item:Adam-exists-regions}
	\end{enumerate}
	Indeed, he can ensure to exit regions $S_j$ while only producing letters in $\SS_i$ by Lemma~\ref{lemma-ZT:exist-X-closed-regions}. By Claim~\ref{claim-ZT:strat}, the minimal colour produced infinitely often by such a play is even.
	By Remark~\ref{rmk-ZT:projections-runs-letter-game}, we can project such a play in the automaton $\A$, obtaining an "accepting@@run" "run over" $w$. This is a contradiction, since $w\notin \Muller{\F} = \Lang{\A}$ (because $\SS_i\notin \F$).
	We conclude that some $\subgame{S_j} \prodMem{\ss_j} \M_j$ admits a "$\SS_i$-FSCC@@letterGame", and therefore  $\letterGame{\A}$ admits some "$\SS_i$-closed subgame".	  
\end{proof}

We can now infer Proposition~\ref{prop-ZT:existance-disjoint-subautomata} in the case in which the "root" of $\zielonkaTree{\F}$ is "round": from Lemma~\ref{lemma-ZT:existance-Ai-closed-regions}, we obtain "$\SS_i$-closed subgames" in $\letterGame{\A}$ for each $i\in\{1,\dots, k\}$ that are moreover contained in "$\SS_i$-regions". Therefore, their $\projLetterGame{\A}$-projections are disjoint (Remark~\ref{rmk-ZT:disjoint-projections}), and each of these projections "induces@@aut" an "HD@@aut""-subautomaton" recognising $\restSubsets{\F}{\SS_i}$ (Lemma~\ref{lemma-ZT:Ai-closed-sets-induce-HD-automata}).

\subsection{A minimal history-deterministic Rabin automaton}\label{subsec-zt: GFG-Rabin}
In this section, we present the construction of a "history-deterministic" "Rabin" "automaton" $\zielonkaHDAutomaton{\F}$ for a "Muller language" $\Muller{\F}$ using the "Zielonka tree" $\zielonkaTree{\F}$, and prove its minimality (Theorem~\ref{thm-zt:optimality_ZT-HD-Rabin}).
The automaton $\zielonkaHDAutomaton{\F}$ can be seen as a quotient of the "ZT-parity-automaton"; that is, $\zielonkaHDAutomaton{\F}$ is obtained by merging some states of $\zielonkaAutomaton{\F}$. Thus, we replace the complexity in the number of states by complexity in the "acceptance condition".
The size of the "automaton" $\zielonkaHDAutomaton{\F}$ is a well-studied parameter of "Zielonka trees": its "round-branching width", $\memTree{\zielonkaTree{\F}}$. This parameter was introduced by Dziembowski, Jurdziński and Walukiewicz~\cite{DJW1997memory} (under the name of \emph{memory of $\zielonkaTree{\F}$}) and shown to coincide with the memory required by "Eve" to "win" in "games" using $\Muller{\F}$ as an "acceptance condition" (see Proposition~\ref{prop-ZT:mem-ZT-DJW} below). In this paper, we are not concerned with the memory of winning conditions, but we will use the result from~\cite{DJW1997memory} to obtain the minimality of $\zielonkaHDAutomaton{\F}$.

We note that this construction is asymmetric, in the sense that we show it for "Rabin" "automata", but not for "Streett" "automata" (their dual notion). The reason why we cannot dualize the construction is due to the semantics of "non-deterministic" "automata". However, we could use the same idea to obtain a minimal \emph{universal} "history-deterministic" "Streett" "automaton" (we refer to~\cite{BL19GFGFromND} for the definition of universal HD automata).

\subsubsection{The Zielonka-tree-HD-Rabin-automaton}\label{subsubsec-zt-GFG: definition}
\begin{definition}[\cite{DJW1997memory}]
	Let $T$ be a "tree" with nodes partitioned into "round" and "square nodes", and let $T_1, \dots, T_k$ be the "subtrees of $T$ rooted at" the "children" of the "root" of $T$. 
	\AP We define inductively the ""round-branching width"" of $T$, denoted $\intro*\memTree{T}$ as:
	\begin{equation*}
		\memTree{T} =	\begin{cases}
			1 & \text{ if } T \text{ has exactly one "node",}\\[1mm]
			\max \{ \memTree{T_1}, \dots , \memTree{T_k}\} & \text{ if the "root" is "square",}\\[1mm]
			\sum\limits_{i=1}^k \memTree{T_i} & \text{ if the "root" is "round".}
		\end{cases}
	\end{equation*}
\end{definition}

The next lemma directly follows from the definition of $\memTree{T}$.
\begin{lemma}\label{lemma:property_star}
	Let $T = (N = N_{\bigcirc} \sqcup N_\Box, \ancestor)$ be a "tree" with nodes partitioned into "round" and "square" nodes.
	There exists a mapping $\eta\colon \leaves(T) \to \{1,2,\dots, \memTree{T}\}$ satisfying:
	\begin{align}\label{eq:property-star}
		\nonumber&\text{If } n \in N \text{ is a "round node" with children } n_1 \neq n_2 \text{, for any pair}\\
		\tag{$\star$}&\text{of leaves } l_1 \text{ and } l_2 \text{ below } n_1 \text{ and } n_2 \text{, respectively, } \eta(l_1)\neq \eta(l_2).
	\end{align}
\end{lemma}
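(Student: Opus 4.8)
The plan is to prove Lemma~\ref{lemma:property_star} by induction on the structure of the tree $T$, following exactly the recursive definition of $\memTree{T}$. The statement asks for a colouring $\eta$ of the leaves that separates, at every round node, leaves living below distinct children of that node. The base case is trivial: if $T$ has a single node, that node is its only leaf, and the constant map $\eta \equiv 1$ works vacuously (there are no round nodes with two children).

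For the inductive step, let $n_0$ be the root of $T$ and let $T_1, \dots, T_k$ be the subtrees rooted at its children $n_1, \dots, n_k$; by induction each $T_i$ carries a map $\eta_i\colon \leaves(T_i) \to \{1, \dots, \memTree{T_i}\}$ satisfying~\eqref{eq:property-star}. First I would handle the case where $n_0$ is square. Here $\memTree{T} = \max_i \memTree{T_i}$, so each $\eta_i$ already has codomain contained in $\{1,\dots,\memTree{T}\}$, and I simply set $\eta$ to be the common extension, i.e. $\eta(l) = \eta_i(l)$ for the unique $i$ with $l \in \leaves(T_i)$. Property~\eqref{eq:property-star} is inherited: any round node $n$ of $T$ with two children lies inside some single $T_i$ (it cannot be $n_0$, which is square, so it has a proper ancestor among the $n_i$), and both children and all leaves below them lie in that same $T_i$, where $\eta$ agrees with $\eta_i$.

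Next the case where $n_0$ is round, which is where the separation at $n_0$ itself must be arranged. Now $\memTree{T} = \sum_i \memTree{T_i}$, and the idea is to place the colour ranges of the subtrees side by side so they are pairwise disjoint: let $s_i = \sum_{j<i} \memTree{T_j}$ be the offset, and define $\eta(l) = s_i + \eta_i(l)$ for $l \in \leaves(T_i)$. Then $\eta$ maps $\leaves(T_i)$ into $\{s_i+1, \dots, s_i + \memTree{T_i}\} = \{s_i+1,\dots,s_{i+1}\}$, and these intervals are disjoint, so two leaves in different subtrees get different colours; since the children of $n_0$ are exactly the roots of the $T_i$, this gives~\eqref{eq:property-star} at $n_0$. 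For a round node $n \neq n_0$ with two children, $n$ again lies inside a single $T_i$, both its children and the leaves below them stay within $T_i$, and on $\leaves(T_i)$ the map $\eta$ is $\eta_i$ shifted by the constant $s_i$, so distinctness of $\eta_i$-values transfers verbatim to $\eta$-values.

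There is no real obstacle here — the lemma is essentially a reformulation of the recursion defining $\memTree{}$, and the only thing to be careful about is that in the square-node case the codomain bound is given by a $\max$ (so ranges may overlap, but that is harmless since the separation requirement at a square root is vacuous), whereas in the round-node case disjointness of the shifted ranges is exactly what makes the $\sum$ the right size. I would also remark, for use later in the paper, that by construction $\eta$ is surjective onto $\{1,\dots,\memTree{T}\}$ whenever each $\eta_i$ is, though the statement as given only requires the weaker separation property.
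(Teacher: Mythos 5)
Your proof is correct and is exactly the argument the paper intends: the paper omits the proof entirely, stating only that the lemma ``directly follows from the definition of $\memTree{T}$'', and your structural induction mirroring the $\max$/$\sum$ recursion (with disjoint colour intervals at round nodes and plain reuse at square nodes) is the canonical way to make that precise.
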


\begin{example}
	Let $\F=\{ \{a,b\}, \{a,c\}, \{b\}\}$ be the family of subsets considered in Example~\ref{example-ZT:zielonka-tree}. 
	The "round-branching width" of $\zielonkaTree{\F}$ is  $\memTree{\zielonkaTree{\F}} =2$. A labelling $\eta\colon \leaves(\zielonkaTree{\F}) \to \{1,2\}$ satisfying Property~\ref{eq:property-star} is given by $\eta(\theta) = \eta(\xi) = 1$ and $\eta(\zeta) = 2$.
	This labelling is represented in the "Zielonka tree" $\zielonkaTree{\F}$ on the left of Figure~\ref{fig-ZT:zielonka-Rabin automaton}.
\end{example}

\begin{definition}[Zielonka-tree-HD-Rabin-automaton]\label{def-ZT: HD_Rabin_ZielonkaAutomaton}
	\AP Let $\F\subseteq \powplus{\SS}$, let $\zielonkaTree{\F} = (N = \roundnodes \sqcup \squarenodes, \ancestor)$ be its "Zielonka tree" and $\intro*\eta\colon \leaves(\zielonkaTree{\F}) \to \{1,2,\dots, \memTree{\zielonkaTree{\F}}\}$ be a mapping satisfying Property~\eqref{eq:property-star}.
	We define the ""ZT-HD-Rabin-automaton"" $\intro*\zielonkaHDAutomaton{\F}=(Q, \SS, I, \GG, \DD, \RabinC{R}{\GG})$ as a (non-deterministic) "automaton" using a "Rabin" "acceptance condition", where:
	\begin{itemize}
		\item $Q=\{1,2, \dots, \memTree{\zielonkaTree{\F}}\}$, 
		\item $I=Q$,\footnotemark{}
		\item $\GG = N$ (the colours of the "acceptance condition" are the nodes of the "Zielonka tree"),
		\item $\transAut{}(q,a) = \{\big(\jump(l,\supp(l,a)), \supp(l,a)\big) \mid l\in \leaves(\zielonkaTree{\F}) \tst \eta(l) = q\}$,
		\item $R = \{(G_n,R_n)\}_{n\in\roundnodes}$, where $G_n$ and $R_n$ are defined as follows: Let $n$ be a "round node" and $n'$ be any node of $\zielonkaTree{\F}$,
		\begin{equation*}
			\begin{cases}
				n' \in G_n & \text{ if } n'=n,\\
				n' \in R_n & \text{ if }  n'\neq n \text{ and } n \text{ is not an "ancestor" of } n'. 
			\end{cases}
		\end{equation*} 
	\end{itemize}
\end{definition}
\footnotetext{Any non-empty subset of $Q$ can be chosen as the set of "initial states".}

\begin{remark}
	Although we will usually say that $\zielonkaHDAutomaton{\F}$ is \emph{the} "ZT-HD-Rabin-automaton" of $\F$, the structure of this automaton is not unique, it depends on two choices: the "order@@tree" over the nodes of the "Zielonka tree" and the mapping $\eta$.
\end{remark}

The intuition behind this definition is the following. The "automaton" $\zielonkaHDAutomaton{\F}$ has $\memTree{\zielonkaTree{\F}}$ states, and each of them can be associated to a subset of "leaves" of $\zielonkaTree{\F}$ by $\inv{\eta}(q)$. The mapping~$\eta$ is such that the lowest common ancestor of two "leaves" in  $\inv{\eta}(q)$ is a "square node".
As for the "ZT-parity-automaton", for each "leaf" of $l\in \leaves(\zielonkaTree{\F})$ and letter $a\in \SS$, we identify the "deepest" ancestor $n=\supp(l,a)$ containing $a$ in its label, and, using the $\jump$ function, pick a "leaf" $l'$ below the next "child" of $n$. We add a transition $q\re{a:n}q'$ if there are "leaves" $l\in \inv{\eta}(q)$ and $l'\in \inv{\eta}(q')$ giving such a path (we note that the output colour is given by $n=\supp(l,a)$, although this node does not appear as a state of the automaton). This way, we can identify a "run" in the "automaton" $\zielonkaHDAutomaton{\F}$ with a promenade through the nodes of the "Zielonka tree" in which jumps between "leaves" with the same $\eta$-image are allowed. If during this promenade a unique minimal node (for $\ancestor$) is visited infinitely often, it is not difficult to see that the sequence of input colours belongs to $\F$ if and only if the label of this minimal node belongs to $\F$ (it is a "round node"). The "Rabin condition" over the set of nodes of the "Zielonka tree" is devised so that it accepts exactly these sequences of nodes (see Lemma~\ref{lemma-ZT:Rabin-acc-sequences-nodes-ZT} below). 

Another way of presenting the "automaton" $\zielonkaHDAutomaton{\F}$ is as a quotient of the "deterministic" "parity" "automaton" $\zielonkaAutomaton{\F}$. Indeed, the graph structure and the "labelling" by "input letters" of $\zielonkaHDAutomaton{\F}$ is obtained by merging the states of $\zielonkaAutomaton{\F}$ (which are the "leaves" of $\zielonkaTree{\F}$) with the same $\eta$-image, and keeping all the transitions between them. However, a "parity" "acceptance condition" over this smaller structure is no longer sufficient to accept $\Muller{\F}$.

\begin{example}
	The "ZT-HD-Rabin-automaton" $\zielonkaHDAutomaton{\F}$ of the family  $\F=\{ \{a,b\}, \{a,c\}, \{b\}\}$ from Example~\ref{example-ZT:zielonka-tree} is shown on the right of Figure~\ref{fig-ZT:zielonka-Rabin automaton}.
	The "Zielonka tree" $\zielonkaTree{\F}$ appears on the left of the figure, and the labelling $\eta\colon \leaves(\zielonkaTree{\F}) \to \{1,2\}$ is represented by the numbers below its "branches".
	
	The "Rabin" "condition" of this automaton is given by two "Rabin pairs" (corresponding to the "round nodes" of the "Zielonka tree"):\\

	\centering
	\begin{tabular}{l l}
		$G_\bb = \{\bb\}$, & $R_\bb= \{\aa, \lambda, \xi, \zeta\}$,\\[1mm] 
		$G_\lambda = \{\lambda\}$, & $R_\lambda= \{\aa, \bb, \theta\}$. 
	\end{tabular}
 \newline
 \vspace{1mm}
We note that the automaton $\zielonkaHDAutomaton{\F}$ is obtained by merging the states $\theta$ and $\xi$ from the "ZT-parity-automaton" $\zielonkaAutomaton{\F}$ appearing in Figure~\ref{fig-ZT:zielonka-parityAutomaton}, and replacing the "output colours" by suitable nodes from the "Zielonka tree".
\end{example}

\begin{figure}[ht]
		\begin{minipage}[c]{0.45\textwidth} 
			\includegraphics[width=0.7\textwidth]{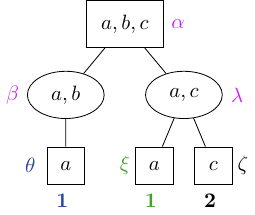}
		\end{minipage} 
		\begin{minipage}[c]{0.45\textwidth} 
			\includegraphics[width=0.95\textwidth]{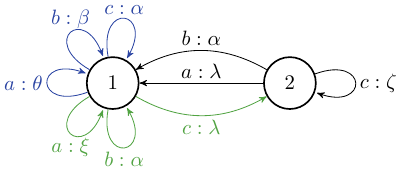}
		\end{minipage} 
		\caption{On the left, the "Zielonka tree" of $\F=\{ \{a,b\}, \{a,c\}, \{b\}\}$. On the right, the "ZT-HD-Rabin- automaton" $\zielonkaHDAutomaton{\F}$. Blue transitions correspond to those coming from "leaf" $\theta$, and green ones to those originating from "leaf" $\xi$.}
		\label{fig-ZT:zielonka-Rabin automaton}
	\end{figure}

\begin{remark}\label{rmk-ZT:simplified-automata}
	We observe that the "automaton" from Figure~\ref{fig-ZT:zielonka-Rabin automaton} presents "duplicated edges", in the sense that there are two transitions $q\re{a:x}q'$ and $q\re{a:y}q'$ between the same pair of states and reading the same "input letter".
	We can always avoid this and remove "duplicated edges" from any "automaton". We provide a proof in Appendix~\ref{sec:appendix-simplifications-automata} (Proposition~\ref{prop-app:simplification_Rabin}). For the  language from the previous example, an equivalent automaton is proposed in Figure~\ref{fig-app:simplified-Rabin}
\end{remark}

\paragraph*{Correctness of the Zielonka-tree-HD-Rabin-automaton.}

\begin{proposition}[Correctness]\label{prop-ZT:correctness_ZT_HD-Rabin}
	Let $\F\subseteq \powplus{\SS}$ be a family of non-empty subsets. Then, 
	\[ \Lang{\zielonkaHDAutomaton{\F}} = \MullerC{\F}{\SS}. \]
	Moreover, the "automaton" $\zielonkaHDAutomaton{\F}$ is "history-deterministic".
\end{proposition}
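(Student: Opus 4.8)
The plan is to establish the two assertions in turn: first that $\zielonkaHDAutomaton{\F}$ "recognises" $\MullerC{\F}{\SS}$, and then that it is "history-deterministic". Both proofs hinge on the key combinatorial observation announced in the text: a "run@@aut" in $\zielonkaHDAutomaton{\F}$ over a word $w$ corresponds to a walk through the nodes of $\zielonkaTree{\F}$ (where the output colour of a transition is the "intermediate node" $\supp(l,a)$ used to compute it), and if a unique $\ancestor$-minimal node $n_w$ is visited infinitely often by this walk, then $\minf(w)\in \F$ if and only if $n_w$ is a "round node". This is exactly the content of the correctness proof of the "ZT-parity-automaton" (Proposition~\ref{prop-ZT:correctness_ZT_parity}), and the combinatorics of $\supp$ and $\jump$ (Lemma~\ref{lemma-ZT:transition_ZT-parity-automaton}, and Claims~\ref{claim:supp-below-n-CORRECT-ZT} and~\ref{claim:supp-n_inf_often-CORRECT-ZT}) carry over verbatim, since the transitions of $\zielonkaHDAutomaton{\F}$ are built from the same $\jump$/$\supp$ data, merely allowing jumps between "leaves" with the same $\eta$-image. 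So the first step I would isolate is a lemma (call it Lemma~\ref{lemma-ZT:Rabin-acc-sequences-nodes-ZT}, which the text already forward-references) stating that the "Rabin condition" $\RabinC{R}{\GG}$ over $\GG = N$ accepts a sequence of nodes $w_N\in N^\oo$ if and only if the $\ancestor$-minimal node visited infinitely often by $w_N$ exists and is a "round node". This is a direct unwinding of the definition of $(G_n, R_n)$: $w_N$ is "accepted by the Rabin pair" $(G_n,R_n)$ iff $n$ is visited infinitely often and every node visited infinitely often is a "descendant" of $n$ — i.e. $n$ is the minimal such node — and the "round nodes" are exactly those with a pair. One must also argue that the set of nodes visited infinitely often along an actual run always has a minimum for $\ancestor$; this follows because consecutive output nodes along a transition are comparable (the target "leaf" is a descendant of the output node) and the tree is finite, so the infinitely-often-visited nodes lie on a single branch.

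For the language inclusion $\Lang{\zielonkaHDAutomaton{\F}} \subseteq \MullerC{\F}{\SS}$, I would take any "accepting run" $\rr$ over $w$, read off the sequence of output nodes $n_0 n_1 \dots$, note it is "accepted by@@RabinPair" some $(G_{n_w}, R_{n_w})$ with $n_w$ its $\ancestor$-minimal infinitely-visited node (a "round node"), and then replay the arguments of Claims~\ref{claim:supp-below-n-CORRECT-ZT} and~\ref{claim:supp-n_inf_often-CORRECT-ZT}: the "leaves" visited infinitely often eventually stay below $n_w$, cycle through the "children" of $n_w$ in round-robin fashion (so no child's label contains $\minf(w)$), hence by Remark~\ref{rmk-ZT:union-changes-acceptance} applied to $n_w$, $\minf(w)\in \F \iff \nu(n_w)\in \F$, and the latter holds since $n_w$ is "round". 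For the reverse inclusion — and simultaneously "history-determinism" — I would exhibit a "sound resolver@@aut". The resolver keeps track of a "leaf" $l$ of $\zielonkaTree{\F}$ (consistently with the current state $\eta(l)$): on reading $a$ from a configuration whose tracked leaf is $l$, it moves to the leaf $\jump(l,\supp(l,a))$ and records it, choosing the transition to the state $\eta(\jump(l,\supp(l,a)))$ that realises this. This is a well-defined function of the run so far, and it produces a legitimate "run over" $w$. The "soundness@@resolverAut" is then precisely Proposition~\ref{prop-ZT:correctness_ZT_parity}: the run induced by this resolver is (the $\eta$-image of) a run of the "ZT-parity-automaton", whose output-node sequence has $\ancestor$-minimal infinitely-visited node $n_w$ characterised as before, and when $\minf(w)\in \F$ this $n_w$ is a "round node", so the node sequence is "accepted by the Rabin pair" $(G_{n_w},R_{n_w})$; hence the induced run is "accepting@@run". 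Combined with the first inclusion this gives $\Lang{\zielonkaHDAutomaton{\F}} = \MullerC{\F}{\SS}$ and that $\zielonkaHDAutomaton{\F}$ is "history-deterministic" (Remark~\ref{rmk-prelim:restriction-to-accessible} lets us ignore the initial-state choice; by "prefix-independence" of "Muller languages" any state is fine).

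The main obstacle I anticipate is not conceptual but bookkeeping: I must check that Property~\eqref{eq:property-star} of $\eta$ is actually used correctly and is in fact needed somewhere — it is needed to guarantee that the transition function is well-defined enough for the resolver's leaf-tracking to be consistent with states, but it is \emph{not} needed for correctness of the language (the nondeterministic automaton accepts $\MullerC{\F}{\SS}$ for essentially any surjective $\eta$); Property~\eqref{eq:property-star} will matter for the optimality statement and for ruling out that two "round"-children's leaves collapse. So I would state carefully which facts depend on $\eta$ and which do not. A second delicate point is the claim that the infinitely-visited output nodes form a chain: I would prove this by observing that for a transition with output node $n$ from leaf $l$ to leaf $l'=\jump(l,\supp(l,a))$, both $l$ and $l'$ are "descendants" of $n$, so along the run any two consecutive output nodes are comparable (they share a common descendant leaf), and then a standard argument on finite trees shows the set of infinitely-recurring output nodes is totally ordered by $\ancestor$ and thus has a minimum. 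Everything else reduces to invoking Proposition~\ref{prop-ZT:correctness_ZT_parity} and Remark~\ref{rmk-ZT:union-changes-acceptance}.
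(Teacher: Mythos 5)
Your overall architecture matches the paper's: isolate the characterisation of the Rabin condition over sequences of nodes (this is the paper's Lemma~\ref{lemma-ZT:Rabin-acc-sequences-nodes-ZT}), obtain the inclusion $\MullerC{\F}{\SS}\subseteq\Lang{\zielonkaHDAutomaton{\F}}$ and history-determinism by simulating the deterministic automaton $\zielonkaAutomaton{\F}$ and projecting leaves through $\eta$ (the paper packages exactly this as a morphism $\zielonkaAutomaton{\F}\to\zielonkaHDAutomaton{\F}$, Lemma~\ref{lemma-ZT:morphism-parity-Rabin}, from which the resolver is read off), and prove the converse inclusion by analysing an arbitrary accepting run. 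That half of your plan is essentially the paper's proof.

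The gap is in the soundness inclusion $\Lang{\zielonkaHDAutomaton{\F}}\subseteq\MullerC{\F}{\SS}$. You propose to ``replay'' Claims~\ref{claim:supp-below-n-CORRECT-ZT} and~\ref{claim:supp-n_inf_often-CORRECT-ZT} verbatim, but those claims concern the deterministic automaton, whose state \emph{is} a leaf. In $\zielonkaHDAutomaton{\F}$ a run only records $\eta$-values, so at every step it may re-enter the tree at \emph{any} leaf with the current $\eta$-value; the round-robin traversal of the children of $n_w$ is therefore not automatic. Property~\eqref{eq:property-star} is precisely what restores it, and -- contrary to your parenthetical claim -- it \emph{is} needed for correctness of the language, not only for optimality. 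Concretely, take $\SS=\{a,b,c\}$ and $\F=\{\{a,b,c\}\}$: the Zielonka tree has a round root with three square leaves labelled $\{a,b\}$, $\{a,c\}$, $\{b,c\}$. If $\eta$ identifies the leaves labelled $\{a,b\}$ and $\{a,c\}$ (for instance $\eta\equiv 1$), the run that always uses the leaf labelled $\{a,b\}$ reads the word $(ac)^\oo$ while outputting that leaf on each $a$ and the root on each $c$; since $R_{\mathrm{root}}=\emptyset$, this run is accepted by the Rabin pair of the root although $\{a,c\}\notin\F$. The paper closes this hole by setting $Q_i=\{\eta(l)\mid l \text{ a leaf below } n_i\}$, invoking Property~\eqref{eq:property-star} to get $Q_i\cap Q_j=\emptyset$ for distinct children $n_i,n_j$ of $n_w$, and then arguing at the level of these disjoint state sets: letters of $\nu(n_w)\setminus\nu(n_i)$ force every transition from $Q_i$ into $Q_{i+1}$, while letters of $\nu(n_i)$ keep the run in $Q_i$, so if $\minf(w)\subseteq\nu(n_j)$ the run eventually stays in $Q_j$ and $n_w$ is output only finitely often -- a contradiction. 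You should replace the ``replay'' step by this argument (or any equivalent one that explicitly exploits the disjointness furnished by Property~\eqref{eq:property-star}).
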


\begin{lemma}\label{lemma-ZT:Rabin-acc-sequences-nodes-ZT}
	Let $u = n_0n_1n_2\dots \in N^\oo$ be an infinite sequence of nodes of the "Zielonka tree". The word $u$ belongs to $\RabinC{R}{N}$, for $R = \{(G_n,R_n)\}_{n\in\roundnodes}$ the Rabin condition of $\zielonkaHDAutomaton{\F}$, if and only if there is a unique minimal node for the "ancestor relation" in $\minf(u)$ and this minimal node is "round" (recall that the root is the minimal element for $\ancestor$).
\end{lemma}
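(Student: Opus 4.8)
The plan is to argue both implications by carefully unpacking the definition of the Rabin pairs $R = \{(G_n,R_n)\}_{n\in\roundnodes}$, where for a round node $n$ we have $G_n = \{n\}$ and $R_n = \{n' \in N \mid n' \neq n \text{ and } n \not\ancestor n'\}$; equivalently, $N \setminus (G_n \cup R_n)$ is exactly the set of strict descendants of $n$. First I would fix an infinite sequence $u = n_0n_1n_2\dots \in N^\oo$ and set $M = \minf(u) \subseteq N$, the finite set of nodes occurring infinitely often. The key observation to record at the outset is that $u$ is accepted by the pair $(G_n, R_n)$ if and only if $n \in M$ and $M$ contains no node outside $G_n \cup R_n$, i.e. $n \in M$ and $M \setminus \{n\}$ consists only of strict descendants of $n$; since $n$ itself is (trivially) a descendant of $n$, this says precisely that \emph{$n\in M$ and every element of $M$ is a descendant of $n$}, which by the tree structure means $n$ is the unique minimal element of $M$ for $\ancestor$.

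For the ``if'' direction: suppose $M$ has a unique minimal node $n_{\min}$ for $\ancestor$ and that $n_{\min}$ is round. Then every node of $M$ lies below $n_{\min}$ (this uses that the ancestors of any node are totally ordered, so a finite set with a unique minimal element has all its elements comparable to — hence below — that minimal element; here I would spell out that if $m\in M$ were not a descendant of $n_{\min}$, then the deepest common ancestor of $m$ and $n_{\min}$ would be a node of $M$ strictly above $n_{\min}$ only if it lies in $M$, so instead I argue via minimality directly: $n_{\min} \ancestor m$ fails would contradict minimality of $n_{\min}$ among all elements of $M$ unless $m$ and $n_{\min}$ are incomparable, but then their meet, while not necessarily in $M$, shows no element of $M$ below it is minimal — the cleanest route is: minimality of $n_{\min}$ in $(M,\ancestor)$ plus $M$ finite forces comparability of each $m\in M$ with $n_{\min}$, and comparability plus minimality gives $n_{\min}\ancestor m$). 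Since $n_{\min}$ is round, $(G_{n_{\min}}, R_{n_{\min}})$ is one of the Rabin pairs, $n_{\min} \in M = \minf(u)$ so $\minf(u) \cap G_{n_{\min}} \neq \emptyset$, and since every node of $M$ is a descendant of $n_{\min}$, no node of $M$ lies in $R_{n_{\min}}$, so $\minf(u) \cap R_{n_{\min}} = \emptyset$. Hence $u \in \RabinC{R}{N}$.

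For the ``only if'' direction: suppose $u \in \RabinC{R}{N}$, witnessed by a round node $n$ with $\minf(u)\cap G_n = \{n\}\cap M \neq \emptyset$ (so $n\in M$) and $\minf(u)\cap R_n = \emptyset$. The latter means no element of $M$ is a node $n'\neq n$ with $n\not\ancestor n'$; equivalently every $n'\in M$ with $n'\neq n$ satisfies $n \ancestor n'$, so $n$ is a lower bound for $M$ in $\ancestor$ and lies in $M$, hence $n$ is \emph{the} minimum of $M$ — in particular $M$ has a unique minimal element, namely $n$, and it is round by hypothesis. This completes the equivalence. The only mildly delicate point, and the step I expect to require the most care, is the purely order-theoretic fact that a finite subset of a tree possessing a unique minimal element for the ancestor relation must have \emph{all} its elements above that minimal element; I would prove it by noting that in a tree the set of ancestors of any fixed node $m\in M$ is totally ordered, and taking a minimal element among the (finitely many) elements of $M$ not comparable to $m$ — if such existed its meet with $m$ would produce a second minimal element of $M$, contradicting uniqueness. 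Everything else is a direct bookkeeping translation between the Rabin acceptance condition and the set $N\setminus(G_n\cup R_n)$ of strict descendants of $n$.
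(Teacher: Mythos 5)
Your proof is correct and follows essentially the same route as the paper: both directions are obtained by directly unpacking the Rabin pairs, observing that acceptance by $(G_n,R_n)$ amounts to $n\in\minf(u)$ together with every element of $\minf(u)$ being a descendant of $n$. The order-theoretic point you flag (unique minimal element of a finite set implies minimum) is genuine but immediate — every element of a finite poset lies above some minimal element — and the paper passes over it silently.
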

\begin{proof}
	Assume that there is a unique minimal node in $\minf(u)$, called $n$, and that $n$ is "round". We claim that $u$ is "accepted by the Rabin pair" $(G_n,R_n)$. It is clear that $\minf(u)\cap G_n\neq \emptyset$, because $n\in G_n$. It suffices to show that $\minf(u)\cap R_n = \emptyset$: By minimality, any other node $n'\in \minf(u)$ is a "descendant" of $n$ (equivalently, $n$ is an "ancestor" of $n'$), so $n'\notin R_n$.
	
	Conversely, assume that $u\in \RabinC{R}{N}$. Then, there is some "round node" $n\in \roundnodes$ such that $\minf(u)\cap G_n \neq \emptyset$ and $\minf(u)\cap R_n = \emptyset$. Since $G_n = \{n\}$, we deduce that $n\in \minf(u)$. Moreover, as $\minf(u)\cap R_n = \emptyset$, all nodes in $\minf(u)$ are descendants of $n$. We conclude that $n$ is the unique minimal node in $\minf(u)$, and it is "round".
\end{proof}

\begin{lemma}\label{lemma-ZT:morphism-parity-Rabin}
	There exists a "morphism" "of automata@@morphism" $\pp\colon \zielonkaAutomaton{\F} \to \zielonkaHDAutomaton{\F}$.
\end{lemma}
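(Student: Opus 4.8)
The plan is to take $\pp$ to be the ``quotient by $\eta$'' that is informally described in the discussion preceding the lemma. Concretely, I would put $\pp_V = \eta\colon \leaves(\zielonkaTree{\F}) \to \{1,\dots,\memTree{\zielonkaTree{\F}}\}$ on states, and define $\pp_E$ on the edge $e = q \re{a:\parityNodes(\supp(q,a))} \jump(q,\supp(q,a))$ of $\zielonkaAutomaton{\F}$ by $\pp_E(e) = \eta(q) \re{a:\supp(q,a)} \eta\big(\jump(q,\supp(q,a))\big)$. First I would check that $\pp_E(e)$ is really a transition of $\zielonkaHDAutomaton{\F}$: taking the leaf $l = q$ as witness in the definition of $\transAut{}$ for $\zielonkaHDAutomaton{\F}$ shows that $\big(\jump(l,\supp(l,a)), \supp(l,a)\big)$ appears in $\transAut{}(\eta(q),a)$, with target relabelled to $\eta(\jump(q,\supp(q,a)))$. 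Since transitions of $\zielonkaHDAutomaton{\F}$ are tuples in $Q\times\SS\times\GG\times Q$, this pins down $\pp_E(e)$ unambiguously, and $\pp_E$ is a well-defined function because the colour $\parityNodes(\supp(q,a))$ of $e$ is itself determined by the source $q$ and the letter $a$.

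Next I would verify that $\pp=(\pp_V,\pp_E)$ is a morphism of labelled pointed graphs, hence a weak morphism of automata: preservation of sources and targets is immediate from the two displayed formulas, preservation of initial vertices holds because $\zielonkaHDAutomaton{\F}$ has $I=Q$ so that $\eta(q_0)$ is automatically initial, and preservation of input-letter labels is clear since $e$ and $\pp_E(e)$ read the same letter $a$. Neither automaton has uncoloured edges, so nothing more is needed at this level. The substantive part is then to show that $\pp$ preserves acceptance of runs. Given an infinite run $\rr$ of $\zielonkaAutomaton{\F}$ over a word $w$, write $q_0q_1q_2\dots$ for its sequence of states (leaves of $\zielonkaTree{\F}$), $n_i=\supp(q_i,w_i)$ for the support nodes, and $c_i=\parityNodes(n_i)$ for the output colours; then by construction the output of $\ppRuns(\rr)$ is precisely $n_0n_1n_2\dots\in N^\oo$. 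The key structural fact — which is exactly what Claims~\ref{claim:supp-below-n-CORRECT-ZT} and~\ref{claim:supp-n_inf_often-CORRECT-ZT} in the proof of Proposition~\ref{prop-ZT:correctness_ZT_parity} establish, applied to this very run — is that there is a node $n_w$ with (i) every node of $\minf(n_0n_1\dots)$ a descendant of $n_w$, (ii) $n_w\in\minf(n_0n_1\dots)$, and (iii) $\min\minf(c_0c_1\dots)=\parityNodes(n_w)$. From (i) and (ii), $n_w$ is the unique $\ancestor$-minimal element of $\minf(n_0n_1\dots)$, so Lemma~\ref{lemma-ZT:Rabin-acc-sequences-nodes-ZT} gives $n_0n_1\dots\in\RabinC{R}{N}$ iff $n_w$ is a round node; by Remark~\ref{rmk:parity_levels_ZT}, $n_w$ is round iff $\parityNodes(n_w)$ is even, i.e.\ by (iii) iff $\gg(\rr)\in\parity$. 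Chaining these equivalences yields $\gg(\rr)\in\parity \iff \gg'(\ppRuns(\rr))\in\RabinC{R}{N}$, which is the acceptance-preservation clause, so $\pp$ is a morphism of automata.

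I expect the only real work to lie in organising the appeal to Claims~\ref{claim:supp-below-n-CORRECT-ZT} and~\ref{claim:supp-n_inf_often-CORRECT-ZT}: those statements are phrased for the unique run of the \emph{deterministic} automaton $\zielonkaAutomaton{\F}$ on $w$, which is exactly the run $\rr$ at hand, so no new combinatorial argument is required and what remains is bookkeeping. Should one prefer a self-contained argument, the short alternative is to note that the transitions of $\rr$ taken infinitely often form a cycle, that the round-robin behaviour of the $\jump$ function forces a unique $\ancestor$-minimal node $n_w$ among the support nodes visited infinitely often, and that monotonicity of $\parityNodes$ in $\depth$ yields $\min\minf(c_0c_1\dots)=\parityNodes(n_w)$; either way the proof is brief.
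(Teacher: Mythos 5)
Your proposal is correct and follows essentially the same route as the paper: define $\pp_V=\eta$ and map each transition $q\re{a:\parityNodes(\supp(q,a))}q'$ of $\zielonkaAutomaton{\F}$ to the transition of $\zielonkaHDAutomaton{\F}$ with source $\eta(q)$, letter $a$, colour $\supp(q,a)$, then invoke Claims~\ref{claim:supp-below-n-CORRECT-ZT} and~\ref{claim:supp-n_inf_often-CORRECT-ZT} together with Lemma~\ref{lemma-ZT:Rabin-acc-sequences-nodes-ZT} and Remark~\ref{rmk:parity_levels_ZT} to show acceptance is preserved. Your phrasing of the key fact (that $n_w$ is the unique $\ancestor$-minimal node of the infinitely-occurring supports) is in fact slightly more careful than the paper's wording, but the argument is identical.
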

\begin{proof}
	We define the morphism $\pp$ as follows:
	\begin{itemize}
		\item $\pp_V(l) = \eta(l)$, for $l\in \leaves(\zielonkaAutomaton{\F})$,
		\item for a transition $e = l\re{a:c} l'$ in $\zielonkaAutomaton{\F}$, we let $\pp_E(e) = (\eta(l), a, \supp(l,a), l')$.
	\end{itemize}
	It is clear that $\pp$ is a "weak morphism". We prove that it "preserve the acceptance of runs". Let $\rr = l_0\re{w_0}l_1\re{w_1}l_2\re{w_2}\dots \in \Runs{\zielonkaAutomaton{\F}}$ be an infinite run in $\zielonkaAutomaton{\F}$ (the only "run over" $w_0w_1w_2\dots \in \SS^\oo$), and let $n_i = \supp(l_i,w_i)$. By definition of the morphism, the "output@@run" of the "run" $\rr' = \ppRuns(\rr)$ in $\zielonkaHDAutomaton{\F}$ is $\gg'(\rr') = n_0n_1n_2\dots \in N^\oo$. In the proof of Proposition~\ref{prop-ZT:correctness_ZT_parity}, we proved (Claims~\ref{claim:supp-below-n-CORRECT-ZT} and~\ref{claim:supp-n_inf_often-CORRECT-ZT}) that there exists a unique node $n_w$ appearing infinitely often in $\gg'(\rr')$. Moreover, we proved that $\rr$ is "accepting@@run" in $\zielonkaAutomaton{\F}$ if and only if $n_w$ is "round". Lemma~\ref{lemma-ZT:Rabin-acc-sequences-nodes-ZT} allows us to conclude that $\ppRuns(\rr)$ is "accepting@@run" in $\zielonkaHDAutomaton{\F}$ if and only if $\rr$ is "accepting@@run" in $\zielonkaAutomaton{\F}$.
\end{proof}

\begin{proof}[Proof of Proposition~\ref{prop-ZT:correctness_ZT_HD-Rabin}]
	\textbf{$\Lang{\zielonkaHDAutomaton{\F}} \subseteq \MullerC{\F}{\SS}$:} Let $w\in \Lang{\zielonkaHDAutomaton{\F}}$ and let $u\in N^\oo$ be the sequence of nodes produced as "output" of an "accepting@@run" "run over" $w$ in $\zielonkaHDAutomaton{\F}$. By Lemma~\ref{lemma-ZT:Rabin-acc-sequences-nodes-ZT}, there is a unique  minimal node $n$ for $\ancestor$ appearing infinitely often in $u$ and moreover~$n$ is "round".
	Let $n_1, \dots, n_k$ be an enumeration of the "children" of $n$ ("from left to right"), with labels $\nu(n_i)\subseteq \SS$ (we remark that $\nu(n_i)\notin \F$, for $1\leq i \leq k$). We will prove that $\minf(w)\subseteq \nu(n)$ and $\minf(w)\nsubseteq \nu(n_i)$ for $1\leq i \leq k$. By definition of the "Zielonka tree", as $n$ is "round", this implies that $\minf(w)\in \F$.

	Since eventually all nodes produced as "output" are "descendants" of $n$ (by minimality), $\minf(w)$ must be contained in $\nu(n)$ (by definition of the transitions of $\zielonkaHDAutomaton{\F}$).
	
	We suppose, towards a contradiction, that $\minf(w)\subseteq \nu(n_j)$ for some $1\leq j \leq k$.  
	Let $Q_i=\{\eta(l)\: : \: l \text{ is a "leaf" "below@@tree" } n_i\}$ be the set of states corresponding to "leaves" under $n_i$, for $1\leq i \leq k$. We can assume that the "leaves" corresponding to transitions of an accepting "run over" $w$ are all "below@@tree" $n$, and therefore, transitions of such a run only visit states in $\bigcup_{i=1}^kQ_i$. Indeed, eventually this is going to be the case, because if some leaves $l, l'$ corresponding to a transition $(q,a,n',q')$ are not below $n$, then $n'$ would not be a "descendant" of $n$ (since $n'$ is the least common ancestor of $l$ and $l'$).
	Also, by Property~\eqref{eq:property-star}, we have $Q_i\cap Q_j = \emptyset$, for all $i\neq j$.
	By definition of the transitions of $\zielonkaHDAutomaton{\F}$, if $a\in \SS$ is a letter in $\nu(n)$ but not in $\nu(n_i)$, all transitions from some state in $Q_i$ reading the colour $a$ go to $Q_{i+1}$, for $1\leq i \leq k-1$ (and to $Q_1$ if $i=k$).
	Also, if $a\in \nu(n_i)$, transitions from states in $Q_i$ reading $a$ stay in $Q_i$.	
	We deduce that a run over $w$ will eventually only visit states in $Q_j$, for some $j$ such that $\minf(w)\subseteq \nu(n_j)$. However, the only transitions
	from $Q_j$ that would produce $n$ as output are those corresponding to a colour $a\notin \nu(n_j)$, so the node $n$ is not produced infinitely often, a contradiction.	

	\textbf{$\MullerC{\F}{\SS} \subseteq \Lang{\zielonkaHDAutomaton{\F}}$ and "history-determinism@@aut"}:	
	We claim that the existence of a "morphism" $\pp\colon \zielonkaAutomaton{\F} \to \zielonkaHDAutomaton{\F}$ (Lemma~\ref{lemma-ZT:morphism-parity-Rabin}) and the correctness of $\zielonkaAutomaton{\F}$ (Proposition~\ref{prop-ZT:correctness_ZT_parity}) imply that $\Lang{\zielonkaHDAutomaton{\F}} = \Lang{\zielonkaAutomaton{\F}} = \Muller{\F}$. Indeed, if $\rr$ is an "accepting run" over $w\in \SS^\oo$ in $\zielonkaAutomaton{\F}$, then $\ppRuns(\rr)$  is an "accepting run" over $w$ in $\zielonkaHDAutomaton{\F}$.
	We can moreover use $\zielonkaAutomaton{\F}$ and $\pp$ to define a "sound@@aut" "resolver@@aut" $(r_0,r)$ for $\zielonkaHDAutomaton{\F}$: we let $r_0 = \pp(q_0)$ be the image of the "initial state" of $\zielonkaAutomaton{\F}$. If $\rr_R\in \RunsFin{\zielonkaHDAutomaton{\F}}$ is the image under $\ppRuns$ of some "finite run" $\rr_P\in \RunsFin{\zielonkaAutomaton{\F}}$, we let $r(\rr_R, a) = \pp(e)$, where $e$ is the only $a$-labelled transition from $\mtargetPath(\rr_P)$. We define $r$ arbitrarily in other case. This way, for every $w\in \SS^\oo$, the "run induced by@@aut" $r$ over $w$ is the image of a "run over" $w$ in $\zielonkaAutomaton{\F}$, which must be "accepting@@run" if $w\in \Muller{\F}$.
\end{proof}

\subsubsection{Optimality of the Zielonka-tree-HD-Rabin-automaton}\label{subsubsec-zt-HD: optimality}

We devote this section to the proof of the optimality of $\zielonkaHDAutomaton{\F}$.

\begin{theorem}[Optimality of the "ZT-HD-Rabin-automaton"]\label{thm-zt:optimality_ZT-HD-Rabin}
	Let $\A$ be a "history-deterministic" "Rabin" "automaton" "accepting@@automaton" a "Muller language" $\MullerC{\F}{\SS}$. Then, $ |\zielonkaHDAutomaton{\F}|\leq \size{\A}$.
\end{theorem}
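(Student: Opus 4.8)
The plan is to connect the size of a history-deterministic Rabin automaton recognising $\Muller{\F}$ with the memory needed by Eve to win Muller games with winning condition $\Muller{\F}$, and then invoke the characterisation of this memory in terms of $\memTree{\zielonkaTree{\F}}$ from~\cite{DJW1997memory}. Concretely, I would first recall (or state as a proposition attributed to~\cite{DJW1997memory}) that the least size of a memory structure sufficient for Eve to win \emph{every} game using $\Muller{\F}$ as acceptance set from her winning region equals $\memTree{\zielonkaTree{\F}}$; this is the parameter called the memory of $\zielonkaTree{\F}$ there. The key reduction is then: given an HD Rabin automaton $\A$ for $\Muller{\F}$ with $\size{\A} = n$, build from $\A$ a memory structure of size $n$ that suffices for Eve in all such games. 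This is exactly the content of the correspondence between HD automata and memory structures developed in~\cite{Casares2021Chromatic,CCL22SizeGFG}, which I would cite; alternatively one can argue it directly.

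The direct argument I would carry out is as follows. Let $\G$ be a game (suitable for transformations) with acceptance set $\Muller{\F}$, and suppose Eve wins $\G$ from a vertex $v$. By Proposition~\ref{prop-prelim:composition_games_HD}, Eve wins $\G \compositionAut \A$ from $(v, q_0)$ for some initial state $q_0$ of $\A$. Now $\G\compositionAut \A$ is a Rabin game, and Rabin games are positionally determined for Eve (this is the classical half-positionality of Rabin conditions, which I would cite, e.g. from~\cite{EmersonJutla99Complexity} or~\cite{Zielonka1998infinite}); so Eve has a positional winning strategy $\sigma$ in $\G\compositionAut\A$. Using $\sigma$ together with a sound resolver $r$ for $\A$, one reads off a finite-memory strategy for Eve in $\G$ whose memory skeleton is (a subgraph of) the transition structure of $\A$ together with the resolver updates: the memory states are the states of $\A$, the update function follows the run dictated by $r$ on the colours produced, and the next-move function at $(w, q)$ plays the move prescribed by $\sigma$ at $(w, q)$. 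Soundness of the resolver on words in $\Muller{\F}$ guarantees that every resulting play of $\G$ consistent with this strategy is mapped to a winning play of $\G\compositionAut\A$, hence is accepting. Thus a memory structure of size $\size{\A}$ suffices for Eve in every game with condition $\Muller{\F}$. By the lower bound of~\cite{DJW1997memory} this forces $\size{\A} \geq \memTree{\zielonkaTree{\F}} = |\zielonkaHDAutomaton{\F}|$, and correctness of $\zielonkaHDAutomaton{\F}$ (Proposition~\ref{prop-ZT:correctness_ZT_HD-Rabin}) gives the matching upper bound, completing the proof.

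The main obstacle I anticipate is making the reduction from an HD Rabin automaton to a memory structure fully rigorous, in particular handling two subtleties: (1) the resolver $r$ a priori has unbounded memory (it is a function $\DD^* \times \SS \to \DD$), so I must first invoke Lemma~\ref{lemma-ZT:finite-memory-resolvers} to replace it by one implemented by a finite memory structure, and then argue that this does not inflate the state count beyond $\size{\A}$ — here one uses that the relevant composition $\A\prodMem{\nextmoveResolver}\M$ is still deterministic and that what matters for the memory bound is the number of \emph{reachable} memory-state pairs consistent with the resolver, which one can keep to $\size{\A}$ by the standard argument that the memory state is determined by the current automaton state along a run guided by the resolver (cf.\ the treatment in~\cite{CCL22SizeGFG}); and (2) positional determinacy of Rabin games is for Eve only (Rabin is the half-positional side), so it is essential that the product $\G\compositionAut\A$ is a \emph{Rabin} game and that we only need a positional strategy for Eve, not for Adam — this is precisely why the construction works for Rabin and not Streett, as noted in the remark preceding the theorem. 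Once these points are dispatched, the chain of inequalities closes immediately.
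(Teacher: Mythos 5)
Your overall route is exactly the paper's: reduce to the memory-requirement characterisation of \cite{DJW1997memory} (Proposition~\ref{prop-ZT:mem-ZT-DJW}), show that an HD Rabin automaton $\A$ yields a memory structure of size $\size{\A}$ sufficient for Eve in every game with acceptance set $\Muller{\F}$ (via the product game and half-positionality of Rabin conditions, Corollary~\ref{cor-ZT:HD-Rabin-memory}), and conclude $\size{\A}\geq \memTree{\zielonkaTree{\F}} = |\zielonkaHDAutomaton{\F}|$. However, the central construction as you describe it has a genuine gap. You propose a memory structure whose \emph{update} function follows the resolver $r$ while the \emph{next-move} function follows the positional strategy $\sigma$ of the product game $\G\compositionAut\A$. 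In the product game (with $\G$ suitable for transformations) Eve controls the choice of automaton transition, so $\sigma$ itself already prescribes which transition of $\A$ to take; if you override those choices with $r$'s, the resulting play of $\G\compositionAut\A$ is no longer consistent with $\sigma$, and you cannot invoke the fact that $\sigma$ is winning. Your appeal to soundness of $r$ is then circular: soundness only guarantees an accepting run of $\A$ \emph{once you know} the produced colour word lies in $\Muller{\F}$, which is precisely what remains to be shown. Your proposed patch does not close this: replacing $r$ by a finite-memory resolver $(\M,\nextmoveResolver)$ gives memory of size up to $\size{\A}\cdot|\M|$, and the claim that the reachable memory state is determined by the current automaton state is false in general (in Example~\ref{ex-prelim:HD-aut} the resolver must remember the last letter read while sitting in $q_1$).

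The fix — and what the paper actually does in Corollary~\ref{cor-ZT:HD-Rabin-memory} — is to drop the resolver from this step entirely. The positional strategy $\sigma$ on the product game already contains the memory updates: at the intermediate Eve-vertices $((v,e),q)$ of $\tilde{\G}\compositionAut\A$ it selects an automaton transition, which defines a deterministic update $\transMem(q,e)=q'$ on the state set $Q$ of $\A$, and at vertices $(v,q)$ with $v\in\VEve$ it defines the next-move function. Every play of $\G$ consistent with the resulting memory strategy then lifts to a play of the product game consistent with $\sigma$, hence winning for the Rabin condition, hence its colour word is accepted by $\A$ and lies in $\Muller{\F}$. History-determinism of $\A$ is used only once, through Proposition~\ref{prop-prelim:composition_games_HD}, to guarantee that Eve's winning region transfers from $\G$ to $\G\compositionAut\A$ in the first place. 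With this repair your argument coincides with the paper's proof.
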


\begin{proposition}[\cite{DJW1997memory}]\label{prop-ZT:mem-ZT-DJW}
	Let $L = \MullerC{\F}{\SS}$ be a "Muller language". 
	\begin{enumerate}
		\item If "Eve" "wins" a "game" with $L$ as "acceptance set" from a position $v$, there is a "winning strategy from" $v$ for her "implemented by@@stratMem" a "memory structure" of size $\memTree{\zielonkaTree{\F}}$.
		\item There exists a "game" $\G$ using $L$ as "acceptance condition" in which "Eve" can win from a position $v$, but there is no "winning strategy from" $v$ for her "implemented by@@stratMem" a "memory structure" of size strictly smaller than $\memTree{\zielonkaTree{\F}}$.
	\end{enumerate}
\end{proposition}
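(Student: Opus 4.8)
The plan is to prove the two items separately. Item~1 (the upper bound) will be obtained from the constructions already built in this section together with the memoryless determinacy of "Rabin" "games"; item~2 (the lower bound) requires producing, by induction on the "Zielonka tree", an explicit family of "games" whose shape mirrors it.

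\textbf{Upper bound.} By Proposition~\ref{prop-ZT:correctness_ZT_HD-Rabin}, $\zielonkaHDAutomaton{\F}$ is a "history-deterministic" "Rabin" "automaton" "recognising" $\MullerC{\F}{\SS}$ with $\memTree{\zielonkaTree{\F}}$ states. Let $\G$ be a "game" with "acceptance set" $\MullerC{\F}{\SS}$ from whose vertex $v$ "Eve" "wins". Inserting on each edge leaving an "Adam-vertex" an intermediate "Eve-vertex" carrying a single forced move makes $\G$ "suitable for transformations" and changes neither the "winner" from any original vertex nor the least size of a "memory structure" "implementing" a "winning strategy"; so we may assume $\G$ is "suitable for transformations". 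Then $\G \compositionAut \zielonkaHDAutomaton{\F}$ is a "Rabin" "game" and, by Proposition~\ref{prop-prelim:composition_games_HD}, "Eve" "wins" it from $(v,q_0)$ for some "initial state" $q_0$ of $\zielonkaHDAutomaton{\F}$. As the player with a "Rabin" objective always has a memoryless "winning strategy"~\cite{EmersonJutla91Determinacy,Zielonka1998infinite}, fix a positional "winning strategy" $\sigma$ for "Eve" from $(v,q_0)$. Reading $\sigma$ off yields a "memory structure" for $\G$ whose "memory skeleton" has as states those of $\zielonkaHDAutomaton{\F}$, tracks the $\zielonkaHDAutomaton{\F}$-component along a play (leaving it unchanged on "uncoloured" edges), and whose "next-move function" at an "Eve-vertex" $u$ with memory state $q$ returns the $\G$-component of $\sigma(u,q)$; one checks that this "memory structure", of size $\memTree{\zielonkaTree{\F}}$, "implements" a "winning strategy" for "Eve" from $v$, which is item~1. (One could also reprove this directly by induction on $\zielonkaTree{\F}$, as in~\cite{DJW1997memory}.)

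\textbf{Lower bound.} We construct, by induction on $\zielonkaTree{\F}$, a "game" $\G_\F$ with "colours" in $\SS$ and "acceptance set" $\MullerC{\F}{\SS}$, with a distinguished vertex from which "Eve" "wins" but from which every "memory structure" "implementing" a "winning strategy" has at least $\memTree{\zielonkaTree{\F}}$ states. If $\zielonkaTree{\F}$ has a single "node", the claim is vacuous. Otherwise write $n_1, \dots, n_k$ for the "children" of the "root", $\SS_i = \nu(n_i)$, $T_i$ for the subtree of $\zielonkaTree{\F}$ rooted at $n_i$ (the "Zielonka tree" of $\restSubsets{\F}{\SS_i}$), and $\G_i$ for the "game" given by induction for $\restSubsets{\F}{\SS_i}$. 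If the "root" is a "square node", so $\memTree{\zielonkaTree{\F}} = \max_i \memTree{T_i}$, let $\G_\F$ allow "Adam" to pick some $i$ and then play $\G_i$ (with the option of later revising $i$); choosing $i$ with $\memTree{T_i}$ maximal, "Adam" commits to $\G_i$ and replays there his memory-lower-bound strategy, forcing $\max_i \memTree{T_i}$. If the "root" is a "round node", so $\memTree{\zielonkaTree{\F}} = \sum_i \memTree{T_i}$, let $\G_\F$ have a central hub from which "Adam" may enter each gadget $\G_i$ and to which a play may always return; the "acceptance set" $\MullerC{\F}{\SS}$ is arranged so that "Eve" "wins" overall, "Adam"'s only hope being to confine the play to a single $\G_i$. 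Against an "Eve" "memory structure" of size $m < \sum_i \memTree{T_i}$, "Adam" cycles through all gadgets; recording for each $i$ the memory configuration with which $\G_i$ is (re-)entered, a pigeonhole argument produces an $i$ in which "Eve"'s behaviour is governed by fewer than $\memTree{T_i}$ memory states, where "Adam" then wins by the induction hypothesis — a contradiction. Since $\MullerC{\F}{\SS}$ is "prefix-independent", the distinguished vertex of $\G_\F$ witnesses item~2.

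\textbf{Main obstacle.} The crux is the "round node" case of the lower bound: a "memory structure" of size $< \sum_i \memTree{T_i}$ does not decompose into pieces of sizes $\memTree{T_i}$, so one must reason at the level of memory configurations and design the gadgets (with \emph{neutral} return edges that leave $\MullerC{\F}{\SS}$ undisturbed) so that "Adam" can always expose a gadget on which "Eve" is effectively under-memoried; this is the technical heart of DJW's argument. The remaining points — that the reduction to a "suitable for transformations" "game" and the projection of the memoryless "Rabin" "strategy" really yield a "memory structure" of the stated size, and that "Eve" indeed "wins" $\G_\F$ from the distinguished vertex — are routine.
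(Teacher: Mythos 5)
The paper does not prove this proposition: it is imported verbatim from~\cite{DJW1997memory} and used as a black box (only item~2 is actually invoked, in the proof of Theorem~\ref{thm-zt:optimality_ZT-HD-Rabin}). Your item~1 is correct and takes a route different from DJW's original induction on the Zielonka tree: you derive the upper bound from the correctness of $\zielonkaHDAutomaton{\F}$ (Proposition~\ref{prop-ZT:correctness_ZT_HD-Rabin}), composition with games (Proposition~\ref{prop-prelim:composition_games_HD}) and positional determinacy of Rabin conditions. This is exactly Corollary~\ref{cor-ZT:HD-Rabin-memory} instantiated with $\A = \zielonkaHDAutomaton{\F}$, and it is non-circular, since the \emph{correctness} (as opposed to the optimality) of $\zielonkaHDAutomaton{\F}$ nowhere depends on the present proposition. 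What this buys is brevity; what it costs is an appeal to memoryless determinacy of Rabin games, which DJW's direct induction avoids.

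Item~2, however, has a genuine gap, and it sits exactly where you point at ``the technical heart''. First, the game is never actually constructed: the acceptance set is fixed to $\MullerC{\F}{\SS}$, so you cannot ``arrange'' it, and without a concrete arena one cannot verify either that Eve wins from the distinguished vertex or that Adam's only profitable behaviour is to confine the play to one gadget. (In the square-root case your variant in which Adam ``may later revise $i$'' is already dangerous: for two children $n_i \neq n_j$ of a square node, $\nu(n_i)\cup\nu(n_j)\notin\F$ by Remark~\ref{rmk-ZT:union-changes-acceptance}, so alternating gadgets can be winning for Adam; the clean fix is to take $\G_{i^*}$ for a maximising child $i^*$ and reinterpret its condition over $\SS$.) Second, and more seriously, the pigeonhole in the round-root case does not go through as stated: if $M_i$ denotes the set of memory states governing Eve's behaviour inside gadget $i$, nothing forces the $M_i$ to be pairwise disjoint, so $|M| < \sum_i \memTree{T_i}$ is perfectly compatible with $|M_i| \geq \memTree{T_i}$ for every $i$ (take $k=2$, $\memTree{T_1}=\memTree{T_2}=2$, $M=\{m_1,m_2,m_3\}$, $M_1=\{m_1,m_2\}$, $M_2=\{m_2,m_3\}$). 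The missing step is precisely the argument that a single memory state cannot simultaneously serve two distinct children of a round node (or a finer counting replacing disjointness); this is where DJW's inductive invariant does its real work, and your proof names it but does not supply it.
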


\begin{lemma}[\cite{Klarlund94Determinacy,Zielonka1998infinite}]\label{lemma-ZT:Rabin-games-positional}
	"Rabin languages" are positionally determined, that is, if "Eve" "wins" a "game" using a "Rabin" "acceptance condition" from a position $v$, there is a "winning strategy from" $v$ for her "implemented by@@stratMem" a "memory structure" of size $1$. 
\end{lemma}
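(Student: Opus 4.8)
The plan is to deduce the statement from the memory result of Dziembowski, Jurdziński and Walukiewicz (Proposition~\ref{prop-ZT:mem-ZT-DJW}) by showing that the "Zielonka tree" of any "Rabin language" has "round-branching width" equal to $1$. Concretely, a "Rabin language" $\RabinC{R}{\GG}$ depends only on $\minf(w)$, so it coincides with the "Muller language" $\MullerC{\F}{\GG}$ for the family $\F = \{S \in \powplus{\GG} \mid \exists j,\ S\cap G_j \neq \emptyset \text{ and } S\cap R_j = \emptyset\}$. Thus a "game" using a "Rabin" "acceptance condition" is an instance of a "game" using a "Muller language" over $\GG$ as its "acceptance set", and Proposition~\ref{prop-ZT:mem-ZT-DJW}(1) provides, from every winning vertex $v$, a "winning strategy from" $v$ for "Eve" "implemented by@@stratMem" a "memory structure" of size $\memTree{\zielonkaTree{\F}}$. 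Since a "memory structure" of size $1$ implements precisely a \emph{positional} strategy, it suffices to prove $\memTree{\zielonkaTree{\F}} = 1$.

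The first step is to establish that the rejecting family $\complSet{\F} = \powplus{\GG}\setminus \F$ is closed under union (this is the standard union-closure of the "Streett" family dual to "Rabin"). Take $C_1, C_2 \notin \F$ and an index $j$. If $C_1\cap G_j = \emptyset = C_2\cap G_j$, then $(C_1\cup C_2)\cap G_j = \emptyset$; otherwise, say $C_2\cap G_j\neq\emptyset$, and since $C_2\notin \F$ this forces $C_2\cap R_j\neq\emptyset$, so $(C_1\cup C_2)\cap R_j\neq\emptyset$. In either case the $j$-th pair fails to witness acceptance of $C_1\cup C_2$, and as $j$ is arbitrary, $C_1\cup C_2\notin\F$.

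I would then use union-closure to show that every "round node" of $\zielonkaTree{\F}$ has at most one "child". Let $n$ be a "round node" with label $X = \nu(n)\in\F$; by definition its "children" are labelled by the maximal nonempty subsets $Y\subsetneq X$ with $Y\notin\F$. If $Y_1,Y_2$ were two such maximal subsets, then $Y_1\cup Y_2\notin\F$, is nonempty, and is contained in $X$; moreover $Y_1\cup Y_2\neq X$ because $X\in\F$. As $Y_1\cup Y_2\supseteq Y_1$ is rejecting, maximality of $Y_1$ forces $Y_1\cup Y_2 = Y_1$, i.e. $Y_2\subseteq Y_1$; symmetrically $Y_1\subseteq Y_2$, so $Y_1=Y_2$. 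Feeding this into the recursive definition of $\memTree{\cdot}$, a straightforward induction on the "height" of the "tree" yields $\memTree{\zielonkaTree{\F}} = 1$: "leaves" have width $1$; at a "square node" the width is the maximum of its children's widths, all equal to $1$ by induction; and at a "round node", having at most one "child", the width is the sum over its (at most one) child, again $1$. Applying Proposition~\ref{prop-ZT:mem-ZT-DJW}(1) then delivers the positional "winning strategy".

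Because the genuine game-theoretic content is delegated to Proposition~\ref{prop-ZT:mem-ZT-DJW}, no real determinacy argument remains; the only points needing care are the union-closure computation and the deduction that "round nodes" have at most one "child". The single subtlety to verify is that Proposition~\ref{prop-ZT:mem-ZT-DJW} applies verbatim with the set of "colours" $\GG$ in the role of the alphabet, i.e. that a "game" whose "acceptance set" is a "Rabin" condition is literally a "game" over the "Muller language" $\MullerC{\F}{\GG}$; this is immediate from the "prefix-independent", $\minf$-determined nature of "Rabin" conditions, and I expect this reduction—rather than any estimate—to be the main thing to state cleanly.
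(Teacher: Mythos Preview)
Your argument is correct. The paper does not give its own proof of this lemma; it is simply cited from \cite{Klarlund94Determinacy,Zielonka1998infinite} as a known result and used as a black box in the proof of Corollary~\ref{cor-ZT:HD-Rabin-memory}. Your derivation from Proposition~\ref{prop-ZT:mem-ZT-DJW} is a valid alternative route: reducing Rabin positionality to the special case $\memTree{\zielonkaTree{\F}}=1$ of the DJW memory bound. The union-closure computation and the deduction that round nodes have at most one child are both clean; in fact the latter is exactly the implication $(2)\Rightarrow(3)$ of Proposition~\ref{prop-typ:RabinZielonkaShape}, which the paper states (and proves, in Appendix~\ref{sec:appendix-typeness-proofs-Muller-conditions}) only later, so your self-contained argument is appropriate here. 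The reduction to Proposition~\ref{prop-ZT:mem-ZT-DJW} is unproblematic since a Rabin acceptance set is literally a Muller language over $\GG$, as you note.
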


\begin{corollary}\label{cor-ZT:HD-Rabin-memory}
	Let $\A$ be a "history-deterministic" "Rabin" "automaton". Then, if "Eve" "wins" a "game" with $\WW = \Lang{\A}$ as "acceptance set" from a position $v$, there is a "winning strategy from" $v$ for her "implemented by@@stratMem" a "memory structure" of size $|\A|$.
\end{corollary}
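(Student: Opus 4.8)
The plan is to combine the good-for-games property of $\A$ with the positional determinacy of "Rabin" "games", using the "composition" construction to transport Eve's abstract "winning strategy" in $\G$ into a concrete "finite memory strategy" whose memory is carried by the states of $\A$. First I would put things in a convenient form. Since "composition" requires a "complete" "automaton", I would assume without loss of generality that $\A$ is "complete"; the rejecting sink added for this purpose is never followed by a "sound resolver@@aut" on a word of $\WW$, so it yields no usable memory state and the bound $|\A|$ is unaffected. I would also replace $\G$ by an equivalent "game" that is "suitable for transformations", inserting on each edge leaving an "Adam-vertex" one intermediate "Eve-vertex" with a single forced outgoing move carrying the original "colour". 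This preserves both the winner from $v$ and the "output" "colour" word of every "play", and the new vertices carry no genuine choice, so a "finite memory strategy" on the modified "game" immediately induces one on $\G$ with the same "memory skeleton".

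Next I would form the "composition" $\G \compositionAut \A$. By construction its "acceptance condition" is inherited from $\A$, so $\G \compositionAut \A$ is a "Rabin" "game". Because $\A$ is "history-deterministic" and "recognises" the "acceptance set" $\WW$ of the (suitable) "game" $\G$, Proposition~\ref{prop-prelim:composition_games_HD} tells us that "Eve" "wins" $\G$ from $v$ if and only if she "wins" $\G \compositionAut \A$ from $(v,q_0)$ for some "initial state" $q_0$ of $\A$; by hypothesis she therefore "wins" the "Rabin" "game" $\G \compositionAut \A$ from $(v,q_0)$. Applying Lemma~\ref{lemma-ZT:Rabin-games-positional} to this "Rabin" "game" yields a positional "winning strategy" $\sigma^\ltimes$ for "Eve" from $(v,q_0)$, that is, a map whose value depends only on the current vertex $(v',q')\in V\times Q$.

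The decisive step is to read this positional "strategy" as a "memory structure" of size $|\A|$ over $\G$. I would take the set of memory states to be $Q$, with "initial state" $q_0$, and interpret the "underlying graph" of $\A$ as a "memory skeleton": the "update function@@mem" on traversing an edge $e$ of $\G$ advances the $\A$-component exactly as prescribed by $\sigma^\ltimes$ (on a "coloured" edge out of an "Eve-vertex", by the "$\A$"-transition chosen by $\sigma^\ltimes$; on a "coloured" edge $e$ out of an "Adam-vertex", by the transition $\sigma^\ltimes$ selects at the intermediate "Eve-vertex" attached to $e$, which is a well-defined function of $q'$ and $e$; on an "uncoloured" edge the $\A$-component is unchanged), while the "next-move function" at an "Eve-vertex" $v'$ with memory $q'$ outputs the $\G$-edge component of $\sigma^\ltimes(v',q')$. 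Since $\sigma^\ltimes$ is positional, all these functions are well defined, and the "memory skeleton" has exactly $|Q|=|\A|$ states. For correctness I would check that every "play" of $\G$ "consistent with" the induced "strategy" carries, via the tracked automaton states, to a "play" of $\G \compositionAut \A$ "consistent with@@strat" $\sigma^\ltimes$; as the latter is "winning" in the "Rabin" product, its sequence of $\A$-transitions forms an "accepting" "run@@aut" of $\A$ over the "colour" word output by $\G$, which hence lies in $\Lang{\A}=\WW$, so the "play" in $\G$ is "accepting@@run" and the "finite memory strategy" is "winning".

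The main obstacle I expect is precisely this last translation: one must fold the non-deterministic (resolver) choices of $\A$ into "Eve's" positional decisions in the product and then argue that, back over $\G$, the induced memory update depends only on the current automaton state and the traversed edge. Making the "Adam-vertex" case rigorous is exactly where the reduction to a "game" "suitable for transformations" pays off, since there every genuinely "coloured" transition emanates from an "Eve-vertex" and is therefore resolved by $\sigma^\ltimes$, while "Adam's" own moves are "uncoloured" and leave the memory untouched.
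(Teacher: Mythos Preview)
Your proposal is correct and follows essentially the same approach as the paper: reduce to a suitable-for-transformations game, take the product with $\A$, apply positional determinacy of Rabin games (Lemma~\ref{lemma-ZT:Rabin-games-positional}) to the product, and project the resulting positional strategy back to a memory structure on $\G$ with state set $Q$. The only cosmetic difference is that the paper inserts an intermediate Eve-vertex on \emph{every} edge of $\G$ (not just those out of Adam-vertices), which lets the update function be defined uniformly as $\mu(q,e)=q'$ where $\sigma((v,e),q)$ moves to $(v',q')$, avoiding the case split you describe; your variant is equally valid.
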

\begin{proof}
	Let $\G$ be a "game" with $\WW = \Lang{\A}$ as "acceptance set". In order to be able to take the "product@@aut" by $\A$ and obtain an equivalent game, we transform $\G$ into a "game" "suitable for transformations". Let $\tilde{\G}$ be the game obtained from $\G$ in the following way: for every edge $e = v\re{a} v'$ in $\G$, we add a position $(v,e)$ controlled by "Eve" and replace edge $e$ by $v\re{\ee} (v,e) \re{a} v'$. It is clear that "Eve" wins $\G$ from a vertex $v$ if and only if she "wins" $\tilde{\G}$ from that same vertex.
	By Proposition~\ref{prop-prelim:composition_games_HD}, if "Eve" "wins" $\G$ from a vertex $v$, she "wins" $\tilde{\G} \compositionAut \A$ from a vertex $(v,q_0)$, where $q_0$ is an initial vertex of $\A$. Moreover, the "game" $\tilde{\G} \compositionAut \A$ uses the "acceptance set" from $\A$, which is a "Rabin language", so, by Lemma~\ref{lemma-ZT:Rabin-games-positional}, she can win using a "strategy" given by a function $\nextmove\colon \kl{\tilde{V}_\Eve}  \to \tilde{E}$, where $Q$ is the set of states of $\A$ and $\kl{\tilde{V}_\Eve}$ the vertices "controlled by" "Eve" in $\tilde{\G}$ (a subset of $(\VEve \sqcup (V\times E))\times Q$).
	We build a "memory structure" $(\M, \nextmove_\M)$  of size $|Q|$ that projects the "strategy" "implemented by" $\nextmove$ onto $\G$:	
	\begin{itemize}
		\item its set of states is $M=Q$,
		\item the initial state is $q_0$,
		\item the "update function@@mem" $\mu\colon M \times E \to M$ sends $\mu(q,e) = q'$ if $\nextmove((v,e),q) = ((v,e),q) \re{} (v',q')$ is the move chosen by $\nextmove$ from vertex $((v,e),q)$,
		\item for $v\in \VEve$, $q\in M$, we let $\nextmove_\M(v,q) = e$ if $e$ is the move chosen by $\nextmove$ from $(v,q)$, that is, if $\nextmove(v,q) = (v,q) \re{} ((v,q),e)$.
	\end{itemize}
Since $\ss$ "implements" a "winning strategy" in $\tilde{\G}\compositionAut \A$ from $(v,q_0)$, its projection onto $\G$ via the "memory structure" $(\M, \nextmove_\M)$ is a "strategy" that verifies that any play "consistent with@@strat" it produces as "output" a word in $\Lang{\A}$, so it is "winning@@strat".
\end{proof}

Theorem~\ref{thm-zt:optimality_ZT-HD-Rabin} is obtained by combining the fact that $|\zielonkaHDAutomaton{\F}| = \memTree{\zielonkaTree{\F}}$ with Proposition~\ref{prop-ZT:mem-ZT-DJW} (second item) and Corollary~\ref{cor-ZT:HD-Rabin-memory}.

	\section{The alternating cycle decomposition: An optimal approach to Muller transition systems}\label{section:acd}

In Section~\ref{section:zielonka-tree}, we have provided minimal "parity" and "Rabin" "automata" for "Muller languages", using the "Zielonka tree".
We can use these "automata" to transform "Muller" "transition systems", by applying the "product construction".
However, this approach overlooks the structure of the "transition system", meaning it does not take into account the relevant interplay between the "underlying graph" and the "acceptance condition".

In this section, we present our main contributions: optimal transformations of "Muller" "transition systems" into "parity" and "Rabin" ones.
The key novelty is that they precisely capture the way the "transition system" interacts with the "acceptance condition".
This is achieved by generalising "Zielonka trees" from "Muller languages" to "Muller" "transition systems"; we define the "alternating cycle decomposition" (ACD), consisting in a collection of Zielonka-tree-like structures subsuming all the structural information of the "transition system" necessary to determine whether a "run" is "accepting@@run" or not.
More precisely, the "ACD" is a succinct representation of the \emph{alternating chains of loops} of a "Muller" "automaton", in the sense of Wagner~\cite{Wagner1979omega}. The alternating chains of loops of a "DMA" are known to determine the "parity index" of the language it "recognises"~\cite{Wagner1979omega}, and, as we will show, they also capture the essential information to define optimal transformations of automata.

We start with the definition of the "alternating cycle decomposition" in Section~\ref{subsec-acd: acd-definition}. In Section~\ref{subsec-acd: parity-transformation}, we describe the "ACD-parity-transform", turning a "DMA" $\A$ into an equivalent "DPA" $\acdParityTransform{\A}$. Formally, the validity of this transformation is witnessed by a "locally bijective morphism" $\pp\colon \acdParityTransform{\A}\to \A$ (Proposition~\ref{prop-ACD:correctness_ACD-parity_transform}).
In Section~\ref{subsec-acd: HD-Rabin-transformation}, we describe the "ACD-HD-Rabin-transform" that turns a "DMA" $\A$ into an equivalent "history-deterministic" "Rabin" "automaton" $\acdRabinTransform{\A}$. The validity of the transformation is witnessed by an "HD mapping" $\pp\colon \acdRabinTransform{\A}\to \A$ (Proposition~\ref{prop-ACD:correctness_ACD-HD-Rabin-transform}).
These constructions grant strong optimality guarantees. The "automaton" $\acdParityTransform{\A}$ (resp. $\acdRabinTransform{\A}$) has a minimal number of states amongst "parity" (resp. "Rabin") "automata" admitting an "HD mapping" to $\A$ (Theorems~\ref{thm-acd:optimality-size_ACD-parity_transform} and~\ref{thm-acd:optimality_ACD-HD-Rabin-transform}). We note that this implies minimality amongst "automata" admitting a "locally bijective morphism" to $\A$.
Moreover, the "acceptance condition" of $\acdParityTransform{\A}$ uses an optimal number of "colours" (Theorem~\ref{thm-acd:optimality-priorities_ACD-parity_transform}).
The optimality of these constructions is shown in Section~\ref{subsec-acd-parity: optimality}. We are able to prove the optimality of both constructions at the same time, by reducing the problem to an application of the minimality of the "ZT-parity-automaton" and the "ZT-HD-Rabin-automaton".\\

In all this section, we let $\TS = (\underlyingGraph{\TS}, \macc{\TS})$ be a "Muller" "transition system" with "underlying graph"  $\underlyingGraph{\TS} =(V ,E , \msource , \mtarget, I)$ and using a "Muller" "acceptance condition" $\macc{\TS} = (\gg, \GG, \MullerC{\F}{\GG})$.

\subsection{The alternating cycle decomposition}\label{subsec-acd: acd-definition}

\begin{definition}\label{def:tree_alternating_cycles}
	Let $\ell_0 \in \cycles{\TS}$ be a "cycle". 
	\AP We define the ""tree of alternating subcycles"" of~$\ell_0$, denoted $\intro*\altTree{\ell_0} = (N, \ancestor, \intro*\nuAcd\colon N \to \cycles{\TS})$ as a  $\cycles{\TS}$-"labelled tree" with nodes partitioned into ""round nodes@@acd"" and ""square nodes@@acd"", $N= \roundnodes \sqcup \squarenodes$, such that:
	\begin{itemize}
		\item The "root" is labelled $\ell_0$.
		\item If a node is labelled $\ell\in \cycles{\TS}$, and $\ell$ is an "accepting cycle" ($\gg(\ell)\in \F$), then it is a "round node@@acd", and its "children" are labelled exactly with the maximal subcycles $\ell' \subseteq \ell$ such that $\ell'$ is "rejecting@@cycle" ($\gg(\ell') \notin \F$).
		\item If a node is labelled $\ell\in \cycles{\TS}$, and $\ell$ is a "rejecting cycle" ($\gg(\ell)\notin \F$), then it is a "square node@@acd", and its "children" are labelled exactly with the maximal subcycles $\ell' \subseteq \ell$ such that $\ell'$ is "accepting@@cycle" ($\gg(\ell') \in \F$).
	\end{itemize}
\end{definition}

\AP For a $\cycles{\TS}$-"labelled tree" $T = (N, \ancestor, \nuAcd\colon N \to \cycles{\TS})$ and $n\in N$, we let $\intro*\nuStates(n) = \states{\nuAcd(n)}$ be the "set of states@@cycle" of the "cycle" labelling $n$. 

\begin{remark}\label{rmk-acd:union-changes-acceptance}
	Let $n$ be a node of $\altTree{\ell_0}$ and let $n_1$ be a "child" of it. If $\ell'$ is a "cycle" such that $\nuAcd(n_1) \subsetneq \ell' \subseteq \nu(n)$, then $\nu(n_1) \text{ is "accepting@@cycle" } \iff \ell' \text{ is "rejecting@@cycle" } \iff \nu(n) \text{ is "rejecting@@cycle"}$. 
\end{remark}

\begin{definition}[Alternating cycle decomposition]\label{def:acd}
	\AP Let $\TS$ be a "transition system", and let $\ell_1, \ell_2, \dots, \ell_k$ be an enumeration of its maximal "cycles" (that is, the edge set of its "SCCs"). We define the ""alternating cycle decomposition"" of $\TS$ as  the "forest" $\intro*\acd{\TS} = \{\altTree{\ell_1},\dots, \altTree{\ell_k}\}$.
	
	We let $\intro*\nodesAcdCycle{\ell_i}$ be the set of nodes of $\altTree{\ell_i}$, and $n_{\ell_i}$ its "root". We will assume that $\nodesAcdCycle{\ell_i} \cap \nodesAcdCycle{\ell_j} = \emptyset$ if $i\neq j$.
\end{definition}

\AP We define the ""set of nodes of $\acd{\TS}$"" to be $\intro*\nodesAcd{\TS} = \bigcup_{i = 1}^k N_{\ell_i}$, and we let $\intro*\nodesAcdRound{\TS}$ (resp. $\intro*\nodesAcdSquare{\TS}$) be the subset of "round@@acd" (resp. "square@@acd") nodes. 
As for "Zielonka trees", from now on we equip the "trees" of $\acd{\TS}$ with an arbitrary "order@@tree" making them "ordered trees", without explicitly mentioning it.

We remark that for a "recurrent" vertex $v$ of $\TS$, there is one and only one "tree" $\altTree{\ell_i}$ in $\acd{\TS}$ such that $v\in \nuStates(n_{\ell_i})$. On the other hand, "transient" vertices do not appear in the trees of $\acd{\TS}$.

\AP If $v$ is a "recurrent" vertex of $\TS$, we define the ""local subtree at $v$"", noted $\intro*\treeVertex{v}$, as the "subtree" of $\altTree{\ell_i}$ containing the nodes 
$ \intro*\nodesTreeVertex{v} = \{ n \in N_{\ell_i} \mid v\in \nuStates(n)\}$.
 If $v$ is a "transient" vertex, we define $\treeVertex{v}$ to be a "tree" with a single node.

For $v$ "recurrent", as $\nodesTreeVertex{v}$ is a subset of the nodes of $\altTree{\ell_i}$, the "tree" $\treeVertex{v}$ inherits the "order@@tree" from $\altTree{\ell_i}$, as well as its partition into "round@@acd" and "square@@acd" nodes, $\nodesTreeVertex{v} = \roundnodesv \sqcup \squarenodesv$. Also, it inherits the labelling given by the mapping $\nuAcd$, whose restriction to $\treeVertex{v}$ has an image in $\cyclesState{\TS}{v}$.

\begin{remark}\label{rmk-acd:treeVertex-closed-by-ancestor}
	Let $v\in \nuStates(n_{\ell_i})$. If $n\in \nodesTreeVertex{v}$ and $n'$ is an "ancestor" of $n$ in $\altTree{\ell_i}$, then $n'\in \nodesTreeVertex{v}$. In particular, $\treeVertex{v}$ is indeed a "subtree" of $\altTree{\ell_i}$. Also, we note that the "root" of $\treeVertex{v}$ is $n_{\ell_i}$.
\end{remark}

\AP For a node $n\in N_{\ell_i}$ and an edge $e\in \ell_i$ we define $\intro*\suppAcd(n,e) = n'$ to be the "deepest" "ancestor" of $n$ such that $e\in \nuAcd(n')$. We remark that if $e = v\re{} v'$, then $\suppAcd(n,e)$ is a node in both $\treeVertex{v}$ and $\treeVertex{v'}$.

\begin{example}\label{ex-acd:example-acd}
	We will use the "transition system" $\TS$ from Figure~\ref{fig-acd:transition-system} as a running example. 
	We have named the edges of $\TS$ with letters from $a$ to $l$, that are also used as the "output colours" of the "acceptance condition". The "acceptance set" of $\TS$ is the "Muller language associated to":
		\begin{flalign*}
			\F=\{\{c,d,e \},\{e \},
			\{ g,h,i \},\{l \},
			\{h,i,j,k \},\{j,k \}	\}.
		\end{flalign*}
	
	The "initial vertex" of $\TS$, $v_0$, is its only "transient vertex", all the others vertices are "recurrent". 	
	$\TS$ has $2$ strongly connected components, corresponding to "cycles" $\ell_1$ and $\ell_2$.
	
	\begin{figure}[ht]
		\centering 
		\includegraphics[width=0.6\textwidth]{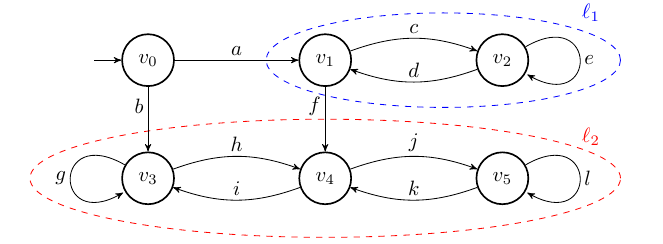}
		\caption{"Transition system" $\TS$ using a "Muller" "acceptance condition" given by $\F=\{\{c,d,e \},\{e \},\{ g,h,i \},$ $\{l \},	\{h,i,j,k \},\{j,k \}\}$. The two maximal "cycles", $\ell_1$ and $\ell_2$, are encircled by blue and red dashed lines, respectively.}
		\label{fig-acd:transition-system}
	\end{figure}

	The "alternating cycle decomposition" of $\TS$ is shown in Figure~\ref{fig-acd:acd}. It consists of two "trees", $\altTree{\ell_1}$ and $\altTree{\ell_2}$.
	We use Greek letters (in pink) to name the nodes of the tree.
	Inside each node we indicate both its label $\nuAcd(n)$ and the "set of states@@cycle" of it. For example, $\nuAcd(\kappa)=\{g,h,i\}$ and $\nuStates(\kappa)=\{v_3,v_4\}$.
	We have that $\suppAcd(\tau,g)=\kappa$ and $\suppAcd(\tau,j)=\lambda$.
	We highlight in bold orange the "local subtree at $v_4$",  $\treeVertex{v_4}$. The tree $\treeVertex{v_0}$, consisting in a single node, does not appear in the figure.
	The numbering on the right of the trees will be used in the next section.\begin{figure}[ht]
		\centering 
		\includegraphics[width=0.85\textwidth]{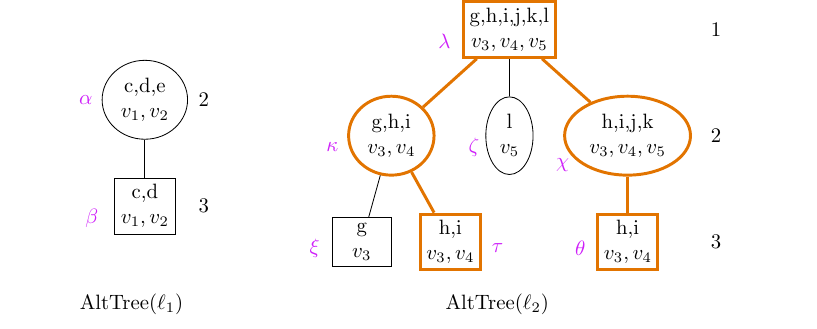}
		\caption{"Alternating cycle decomposition" of $\TS$. In bold orange, the "local subtree at" $v_4$, $\treeVertex{v_4}$.}
		\label{fig-acd:acd}
	\end{figure}
\end{example} 

\begin{remark}\label{rmk-acd:complementation-ACD}
	Let $\TS$ be a "Muller" "TS" using as "acceptance set" $\WW = \MullerC{\F}{\GG}$, and let $\intro*\complTS{\TS}$ be the "TS" obtained by replacing $\WW$ with $\intro*\complSet{\WW} = \GG^\oo \setminus \WW$ (which is a "Muller language"). Then, the "ACD" of $\complTS{\TS}$ coincides with that of $\TS$, with the only difference that the partition into "round@@acd" and "square nodes@@acd" is inverted: $\nodesAcdRound{\complTS{\TS}} = \nodesAcdSquare{\TS}$ and $\nodesAcdSquare{\complTS{\TS}} = \nodesAcdRound{\TS}$.
	
	We note that if $\A$ is a "DMA" "recognising" $L\subseteq \SS^\oo$, the "automaton" $\complTS{\A}$ is a "DMA" "recognising"~$\SS^\oo\setminus L$.
\end{remark}

\begin{remark}\label{rmk-acd:ZT-as-ACD}
	The "Zielonka tree" can be seen as a special case of the "alternating cycle decomposition". Indeed, a "Muller language" $\MullerC{\F}{\SS}$ can be trivially "recognised" by a "DMA" $\A$ with a single state $q$ and self-loops $q\re{a:a}q$. The "ACD" of this "automaton" is exactly the "Zielonka tree" of~$\F$.
\end{remark}

\begin{remark}[Size and computation of the ACD]
    Let $\TS$ be a Muller "TS" and let $\zielonkaTree{\F}$ be the "Zielonka tree" of its "acceptance set". It can be shown that for each vertex $v$ of $\TS$ we have that $|\treeVertex{v}| \leq |\zielonkaTree{\F}|$, and therefore the size of $\acd{\TS}$ is polynomial in $\zielonkaTree{\F}$. 
    This, and the question of the complexity of computing the "ACD" is the subject of an independent work~\cite{CM24Simplifying}.
\end{remark}

\subparagraph{Local Muller languages.}\label{subsect-prelim:local-Muller}
\AP For a "recurrent" state $v$ of $\TS$, we define the ""local Muller language of $\TS$ at $v$"" as the "Muller language" defined over the alphabet $\GG_v = \cyclesState{\TS}{v}$ associated to:
\[ \intro*\localMuller{v}{\TS} = \{ \C \subseteq \cyclesState{\TS}{v} \mid \bigcup_{\ell\in \C} \ell \; \text{ is an "accepting@@cycle" "cycle"}\}. \] 

We note that $\localMuller{v}{\TS}$ is determined by singletons ($\C \in \localMuller{v}{\TS}$ if and only if $\{\bigcup_{\ell \in \C}\ell\} \in \localMuller{v}{\TS}$). For simplicity, and by a slight abuse of notation, we will work as if $\localMuller{v}{\TS} \subseteq \cyclesState{\TS}{v}$.
Also, to lighten notations, we will just write $\localMuller{v}{\TS}$ to denote $\Muller{\localMuller{v}{\TS}}$ whenever no confusion arises.

The following lemma directly follows from the definition of $\treeVertex{v}$ and that of the "Zielonka tree". It provides insight in the structure of the "trees" $\treeVertex{v}$, and it will be a key ingredient in the proof of the optimality of the transformations based on the "alternating cycle decomposition".

\begin{lemma}\label{lemma-acd:tree_q_ZT_local_Muller}
	The "tree" $\treeVertex{v}$
	 is the "Zielonka tree" of the family $\localMuller{v}{\TS}$,\footnotemark{} for any recurrent vertex $v$.
\end{lemma}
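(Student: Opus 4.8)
The plan is to prove that $\treeVertex{v}$, the local subtree at a recurrent vertex $v$, coincides with the Zielonka tree $\zielonkaTree{\localMuller{v}{\TS}}$ of the local Muller family at $v$, by showing that both trees satisfy the same inductive characterisation. Recall that $\treeVertex{v}$ is the subtree of $\altTree{\ell_i}$ (for the unique $i$ with $v \in \nuStates(n_{\ell_i})$) consisting of those nodes $n$ whose label $\nuAcd(n)$ is a cycle containing $v$; by Remark~\ref{rmk-acd:treeVertex-closed-by-ancestor} this is a genuine subtree with root $n_{\ell_i}$. Its nodes inherit from $\altTree{\ell_i}$ their partition into round and square nodes, their order, and the labelling $\nuAcd$, whose restriction takes values in $\cyclesState{\TS}{v}$. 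On the Zielonka-tree side, recall that $\zielonkaTree{\localMuller{v}{\TS}}$ is the $\powplus{\cyclesState{\TS}{v}}$-labelled tree over the alphabet $\cyclesState{\TS}{v}$, where we identify (as the statement of the excerpt does for $\localMuller{v}{\TS}$) a family $\C \subseteq \cyclesState{\TS}{v}$ with the single cycle $\bigcup_{\ell \in \C}\ell$ it generates — note this union is indeed a cycle over $v$ since all its members contain $v$.

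The key observation is that there is a natural correspondence between \emph{subcycles of $\nuAcd(n)$ containing $v$} and \emph{subsets of $\cyclesState{\TS}{v}$} that are $\subseteq \nuAcd(n)$: given such a subcycle $\ell'$, map it to the family $\{\ell'\}$ (or equivalently, to the set of all cycles over $v$ contained in $\ell'$), and conversely a family $\C$ maps to $\bigcup_{\ell\in\C}\ell$. This correspondence is monotone for inclusion, preserves maximality, and — crucially — preserves acceptance: $\ell'$ is an accepting cycle in $\TS$ if and only if $\{\ell'\} \in \localMuller{v}{\TS}$, by definition of the local Muller language. Hence the recursive clause defining the children of a node in $\altTree{\ell_i}$ (``the maximal subcycles $\ell'\subseteq \nuAcd(n)$ with the opposite acceptance status'') matches exactly the recursive clause defining the children of a node in a Zielonka tree (``the maximal non-empty $Y\subseteq X$ with the opposite membership status''), once we restrict attention to subcycles containing $v$ — which is precisely what passing from $\altTree{\ell_i}$ to $\treeVertex{v}$ does.

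The argument I would write proceeds by induction on the structure of the tree, rooted at $n_{\ell_i}$, whose label is $\ell_i$ with $v \in \states{\ell_i}$. The root of $\zielonkaTree{\localMuller{v}{\TS}}$ is labelled by the whole alphabet $\cyclesState{\TS}{v}$, which under our identification is the cycle $\bigcup_{\ell \in \cyclesState{\TS}{v}}\ell = \ell_i$ (the union of all cycles over $v$ is the SCC of $v$). Both roots are round iff $\ell_i$ is accepting iff $\cyclesState{\TS}{v} \in \localMuller{v}{\TS}$, so the partition agrees at the root. For the inductive step: given a node $n \in \nodesTreeVertex{v}$ with label $\ell := \nuAcd(n)$, a child of $n$ in $\treeVertex{v}$ is a node of $\altTree{\ell_i}$ that is a child of $n$ there \emph{and} whose label contains $v$; by Definition~\ref{def:tree_alternating_cycles} such labels are exactly the maximal subcycles $\ell' \subsetneq \ell$ containing $v$ with opposite acceptance status to $\ell$. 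Under the identification, these are exactly the maximal subsets $\C \subsetneq \{$cycles over $v$ inside $\ell\}$ with opposite membership status in $\localMuller{v}{\TS}$, i.e.\ exactly the children of the corresponding node in $\zielonkaTree{\localMuller{v}{\TS}}$ (using also the remark in the excerpt that the subtree of a Zielonka tree rooted at a node labelled $X$ is the Zielonka tree of $\restSubsets{\localMuller{v}{\TS}}{X}$). Applying the induction hypothesis to each child completes the step, and the equality of the trees (up to the arbitrary choice of order, which is declared irrelevant in both constructions) follows.

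I do not expect a serious obstacle here — the lemma is genuinely a matter of unwinding definitions, as the excerpt says (``directly follows from the definition''). The only point requiring a little care is making the identification between subcycles-containing-$v$ and subfamilies-of-$\cyclesState{\TS}{v}$ fully precise: one must check that the maximal subcycles $\ell'\subseteq\ell$ with $v \in \states{\ell'}$ are in bijection with the maximal ``downward-closed'' (inside $\ell$) subfamilies of $\cyclesState{\TS}{v}$, and that this bijection is exactly the one realised by $\ell' \mapsto \bigcup$-back-and-forth. This hinges on the fact (stated in the preliminaries) that $\cyclesState{\TS}{v}$ is closed under union, so that each subfamily has a well-defined generated cycle and each subcycle over $v$ is the union of its sub-subcycles over $v$; and on the fact that acceptance of a cycle depends only on its set of edges, so acceptance is preserved under this identification. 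Once that is spelled out, the induction is routine, so I would keep the written proof short — essentially one paragraph setting up the identification and one paragraph running the induction.
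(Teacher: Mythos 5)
Your proof is correct, and it is essentially the argument the paper has in mind: the paper states the lemma without proof, declaring that it "directly follows from the definition of $\treeVertex{v}$ and that of Zielonka tree" (with the footnote giving exactly your identification $\nu'(n) = \{\ell'\in \cyclesState{\TS}{v} \mid \ell' \subseteq \nu(n)\}$), and your write-up is precisely the careful unwinding of those definitions, including the one genuinely non-trivial check that maximal rejecting (resp.\ accepting) subcycles containing $v$ correspond to maximal rejecting (resp.\ accepting) downward-closed subfamilies of $\cyclesState{\TS}{v}$.
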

\footnotetext{Formally, the labelling $\nu$ of $\treeVertex{v}$ goes to $\cyclesState{\TS}{v}$, and not to $\powplus{\cyclesState{\TS}{v}}$, as required by the definition of the "Zielonka tree". To obtain a proper "Zielonka tree" with a labelling of nodes $\nu'\colon \nodesTreeVertex{v} \to \powplus{\cyclesState{\TS}{v}}$, we would have to define $\nu'(n) = \{\ell'\in \cyclesState{\TS}{v} \mid \ell' \subseteq \nu(n)\}$.}

\subsection{An optimal transformation to parity transition systems}\label{subsec-acd: parity-transformation}
We now define the "ACD-parity-transform", an optimal transformation turning a "Muller" "TS" into a "parity" "TS" while preserving "determinism". In order to obtain the optimality in the number of "output colours", we need to pay attention to the parity of the minimal colour used in different "SCCs". To incorporate this parameter in the transformation, we define "positive@@acd" and "negative@@acd" "ACDs".\\

Let $\TS$ be a "Muller" "transition system" and let $\acd{\TS} = \{\altTree{\ell_1},\dots, \altTree{\ell_k}\}$ be its "alternating cycle decomposition".

\AP We say that a "tree" $\altTree{\ell_i}\in \acd{\TS}$ is ""positive@@tree"" if $\ell_i$ is an "accepting cycle", and that it is ""negative@@tree"" otherwise.
\AP We say that the "alternating cycle decomposition" of $\TS$ is ""positive@@acd"" if all the "trees" of maximal "height" of $\acd{\TS}$ are "positive@@tree", that it is ""negative@@acd"" if all "trees" of maximal "height" are "negative@@tree", and that it is ""equidistant@@acd"" if there are "positive@@tree" and "negative@@tree" trees of maximal "height".

\AP As for the "Zielonka tree", we associate a non-negative integer to each level of the trees of $\acd{\TS}$ via a function $\intro*\parityNodesAcd(n)\colon \nodesAcd{\TS} \to \NN$. Let $\ell_i$ be a maximal "cycle" of $\TS$ and $n\in N_{\ell_i}$. 
\begin{itemize}
	\item If $\acd{\TS}$ is "positive@@acd" or "equidistant@@acd": 
	\begin{itemize}
		\item $\parityNodesAcd(n) = \depth(n)$, if $\ell_i$ is "accepting@@cycle",
		\item $\parityNodesAcd(n) = \depth(n) + 1$, if $\ell_i$ is "rejecting@@cycle".
	\end{itemize}
	\item If $\acd{\TS}$ is "negative@@acd":
	\begin{itemize}
		\item $\parityNodesAcd(n) = \depth(n)+2$, if $\ell_i$ is "accepting@@cycle",
		\item $\parityNodesAcd(n) = \depth(n) + 1$, if $\ell_i$ is "rejecting@@cycle".
	\end{itemize}
\end{itemize}
\AP We let $\intro*\minparityAcd{\TS}$ (resp. $\intro*\maxparityAcd{\TS}$) be the minimum (resp. maximum) value taken by the function~$\parityNodesAcd$. 

\begin{remark}\label{rmk-acd:parity_levels_ACD}
	A node $n$ in $\altTree{\ell_i}$ verifies that $\parityNodesAcd(n)$ is even if and only if $\nuAcd(n)$ is an "accepting cycle" (that is, if $n$ is a "round node@@acd").
\end{remark}

\begin{remark}
	It is satisfied:
	\begin{itemize}
		\item $\minparityAcd{\TS}=0$ if $\acd{\TS}$ is "positive@@acd" or "equidistant@@acd",
		\item $\minparityAcd{\TS}=1$ if $\acd{\TS}$ is "negative@@acd".
	\end{itemize}	 
\end{remark}

\begin{example}
	In the previous Example~\ref{ex-acd:example-acd}, $\altTree{\ell_1}$ is a "positive tree" and $\altTree{\ell_2}$ is  "negative@@tree". As $\altTree{\ell_2}$ is the tree of maximal "height",  $\acd{\TS}$ is "negative@@acd". 
	The function $\parityNodesAcd$ is represented in Figure~\ref{fig-acd:acd} by the integers on the right of each tree. It takes values $2$ and $3$ over $\altTree{\ell_1}$ ($\parityNodesAcd(\aa)=2$ and $\parityNodesAcd(\bb)=3$), because $\acd{\TS}$ is "negative@@acd". 
	In this example, $\minparityAcd{\TS}=1$ and $\maxparityAcd{\TS}=3$.
	We note that if we had associated integers $0$ and $1$ to the levels of $\altTree{\ell_1}$, we would have used $4$ integers in total, instead of just $3$ of them.		
\end{example}

\begin{definition}[ACD-parity-transform]\label{def:parityTransformationACD}
	 Let $\TS$ be a "Muller" "TS" with  $\acd{\TS} = \{\altTree{\ell_1}\ab,\ab \dots,\ab \altTree{\ell_k}\}$. 
	 \AP We define the ""ACD-parity-transform"" of $\TS$  be the "parity" "TS" $\intro*\acdParityTransform{\TS} = (\underlyingGraph{}', \macc{}')$, with $\underlyingGraph{}' = (V', E', \msource', \mtarget', I')$, and $\macc{}' = (\gg', \ab[\minparityAcd{\TS},\ab \maxparityAcd{\TS}], \parity)$ defined as follows.
	 \begin{description}
	 	\setlength\itemsep{2mm}
	 	\item[Vertices.] The set of vertices is \[V'= \bigcup_{v\in V} \left( \{v\}\times \leaves(\treeVertex{v}) \right).\]
	 	
	 	\item[Initial vertices.] 
	 	$I' = \{ (v_0, n) \mid v_0\in I \tand n \text{ is the "leftmost" "leaf" in } \treeVertex{v_0} \}$.
	 	
	 	\item[Edges and output colours.] For each $(v,n)\in V'$ and each edge $e = v\re{}v'\in \mout(v)$ in $\TS$ we define an edge $e_n = (v,n)\re{\gg'(e_n)} (v',n')$. Formally, 
	 	\[E'= \bigcup_{e\in E} \left( \{e\}\times \leaves(\treeVertex{\msource(e)}) \right).\]
	 	 If $v$ and $v'$ are not in the same "SCC", we let $n'$ be the "leftmost" "leaf" in $\treeVertex{v'}$ and $\gg'(e_n) = \minparityAcd{\TS}$.\footnotemark{} If $v$ and $v'$ belong to the same "SCC", we let:
	 	\begin{itemize}
	 		\item $n' = \jump_{\treeVertex{v'}}(n, \suppAcd(n,e))$,
	 		\item $\gg'(e_n) = \parityNodesAcd(\suppAcd(n,e))$.
	 	\end{itemize}
 	\footnotetext{The colours associated to transitions changing of "SCC" are \emph{almost} arbitrary (we could even leave them "uncoloured"). We define them to be the minimal colour used so that the obtained "transition system" is "normalised" in the sense of Section~\ref{subsec-corollaries: normal-form}.}
 	
 		\item[Labellings.]If $\TS$ is a "labelled transition system", with labels $l_V\colon V \to L_V$ and $l_E\colon E \to L_E$, we label $\acdParityTransform{\TS}$ by $l'_{V'}(v,n) = l_V(v)$ and $l'_{E'}(e_n) = l_E(e)$.
	 \end{description}
\end{definition}

Intuitively, a "run" in the "transition system" $\acdParityTransform{\TS}$ follows a "run" in $\TS$ with some extra information, updated in the same manner as it was the case with the "ZT-parity-automaton". 
To define transitions in $\acdParityTransform{\TS}$, we move simultaneously in $\TS$ and in $\acd{\TS}$. When
we take a transition $e$ in $\TS$ that goes from $v$ to $v'$, while being in a node $n$ in the "ACD", we climb the
branch of $n$ searching the "lowest@@tree" node $\tilde{n}$ with $e$ and $v'$ in its label ($\tilde{n}=\suppAcd(n,e)$). We produce as output the
colour corresponding to the level reached. If no such node exists in the current tree (this occurs if we change of "SCC"), we jump to the "root" of the tree containing $v'$. After having reached the node $\tilde{n}$, we move to its next child in the tree $\treeVertex{v'}$ (in a cyclic way), and we pick the leftmost "leaf" under it. 

\begin{example}\label{ex-acd:ACD-parity_transform}
	We show in Figure~\ref{fig-acd:ACD-parity_transform} the "ACD-parity-transform" $\acdParityTransform{\TS}$  of the "transition system" $\TS$ from Figure~\ref{fig-acd:transition-system} (Example~\ref{ex-acd:example-acd}).
	For each vertex $v$ in $\TS$, we make as many copies as "leaves" of the tree $\treeVertex{v}$. We note that, as $v_0$ is "transient",  the tree $\treeVertex{v_0}$ consists of a single node (by definition), that we name~$\iota$.
	Transitions are of the form $(e,l)$, for $e$ a transition from $\TS$ and $l$ a "leaf" of some "local subtree"; these are denoted $e_l$ in the figure for the sake of space convenience. These labels simply indicate the names of the edges, they should not be interpreted as "input letters" ($\acdParityTransform{\TS}$ is not an "automaton").
	
	We observe that the mappings $\pp_V(v,l) = v$ and $\pp_E(e_l)=e$ define a "locally bijective" "morphism of transition systems" from $\acdParityTransform{\TS}$ to $\TS$. 

	\begin{figure}[ht]
		\centering 
		\includegraphics[width=0.7\textwidth]{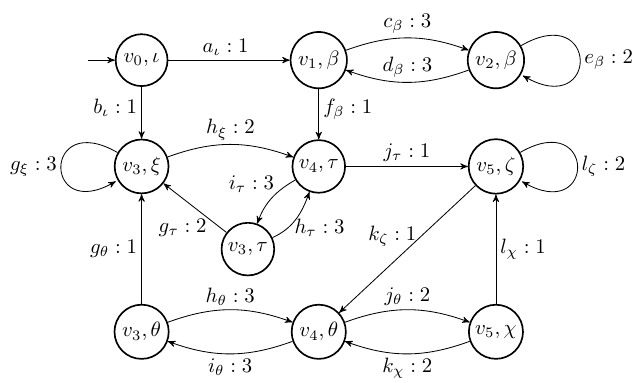}
		\caption{"ACD-parity-transform" $\acdParityTransform{\TS}$  of the "transition system" $\TS$ from Figure~\ref{fig-acd:transition-system}.}
		\label{fig-acd:ACD-parity_transform}
	\end{figure}
	
	Another example can be found in Figure~\ref{fig-acd:ACD-transform-doesNotPreserveMinimality}.
\end{example}

\begin{remark}\label{rmk-acd:size_ACD-parity-transformation}
	The size of the "ACD-parity-transformation" of $\TS$ is:
	\[|\acdParityTransform{\TS}| = \sum_{v\in V} |  \leaves(\treeVertex{v}) | = \sum_{v\in \Vrec}|  \leaves(\treeVertex{v}) | + |\Vtrans|,\]
	where $\Vrec$ and $\Vtrans$ are the sets of "recurrent" and "transient" vertices of $\TS$, respectively.
\end{remark}

\begin{remark}\label{rmk-acd:acd-transform-parity-is-parity}
	We remark that if $\TS = (\underlyingGraph{\TS}, \macc{\TS})$ is already a "parity" "TS", then the underlying graphs of $\acdParityTransform{\TS}$ and $\TS$ are isomorphic. In fact, by Proposition~\ref{prop-ACD:correctness_ACD-parity_transform}, $\acdParityTransform{\TS}$ and $\TS$ will also be "isomorphic@@TS" as "transition systems".
	In this case, the construction of $\acdParityTransform{\TS}$ boils down to the application of the procedure described by Carton and Maceiras~\cite{CartonMaceiras99RabinIndex}.
\end{remark}

\begin{remark}\label{rmk-acd:oblivious-to-representation}
	The "ACD-parity-transform" is oblivious to the labelling $\gg$ of the "acceptance condition" of $\TS$; the only information taken into account to define the graph of $\acdParityTransform{\TS}$ and its "output colours" is the structure of the "trees" of $\acd{\TS}$.
	That is, the definition of this transformation is independent of the actual representation of the "acceptance condition" of $\TS$ (whether it is Emerson-Lei, "Muller", "Rabin"...), and we only use that any such representation induces a mapping $f\colon \cycles{\TS} \to \{\mathrm{Accept}, \mathrm{Reject} \}$.
\end{remark}

\begin{remark}
	The "ZT-parity-automaton" can be seen as a special case of the "ACD-parity-transform", as $\zielonkaAutomaton{\F}$ coincides with the "DPA" $\acdParityTransform{\A}$, where $\A$ is the "DMA" with a single state "recognising" $\Muller{\F}$ (see Remark~\ref{rmk-acd:ZT-as-ACD}).
\end{remark}

\paragraph*{Correctness of the ACD-parity-transform.}

\begin{proposition}[Correctness of the "ACD-parity-transform"]\label{prop-ACD:correctness_ACD-parity_transform}
	Let $\TS$ be a "(labelled)" "Muller" "TS" and let $\acdParityTransform{\TS}$ be its "ACD-parity-transform". There is a "locally bijective morphism" of "(labelled)" "transition systems" $\pp\colon \acdParityTransform{\TS} \to \TS$.
\end{proposition}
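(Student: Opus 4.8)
The plan is to exhibit the obvious candidate morphism $\pp$, namely $\pp_V(v,n) = v$ and $\pp_E(e_n) = e$ (with $e_n = (v,n)\re{}(v',n')$ as in Definition~\ref{def:parityTransformationACD}), and then verify the three requirements in turn: that $\pp$ is a "morphism of (labelled) pointed graphs", that it "preserves the acceptance" of "runs", and that it is "locally bijective". The first and third points are essentially bookkeeping; the genuine content is the acceptance-preservation, which is a transition-system-level analogue of the correctness of the "ZT-parity-automaton" (Proposition~\ref{prop-ZT:correctness_ZT_parity}).

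First I would check that $\pp$ is a "morphism of labelled pointed graphs". By construction $V' = \bigcup_{v}\{v\}\times\leaves(\treeVertex{v})$ and $E' = \bigcup_e \{e\}\times\leaves(\treeVertex{\msource(e)})$, and each edge $e_n = (v,n)\re{}(v',n')$ has $\msource'(e_n) = (v,n)$, $\mtarget'(e_n) = (v',n')$ with $\pp_V(v,n) = v = \msource(e)$ and $\pp_V(v',n') = v' = \mtarget(e)$, so edges are preserved; initial vertices are sent to initial vertices by definition of $I'$; and the labellings $l'_{V'}(v,n) = l_V(v)$, $l'_{E'}(e_n) = l_E(e)$ were defined precisely so that $\pp$ preserves labels. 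For "local bijectivity": given $(v,n)\in V'$, the outgoing edges of $(v,n)$ are exactly $\{e_n \mid e\in\mout(v)\}$, and $\pp_E$ restricted to $\mout(v,n)$ maps $e_n\mapsto e$, which is a bijection onto $\mout(v)= \mout(\pp(v,n))$; on initial vertices, $I' = \{(v_0,n_{v_0}) : v_0\in I\}$ where $n_{v_0}$ is the leftmost leaf of $\treeVertex{v_0}$, so $\pp_V$ restricts to a bijection $I'\to I$. Hence $\pp$ is "locally bijective".

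The main obstacle is showing $\pp$ "preserves accepting runs", i.e.\ for every "infinite run" $\rr = (v_0,n_0)\re{e_{0,n_0}}(v_1,n_1)\re{}\dots$ in $\acdParityTransform{\TS}$ with image $\ppRuns(\rr) = v_0\re{e_0}v_1\re{}\dots$ in $\TS$, we have $\rr$ "accepting@@run" in $\acdParityTransform{\TS}$ $\iff$ $\ppRuns(\rr)$ "accepting@@run" in $\TS$. Since $\TS$ is a "Muller" "TS", the run $\ppRuns(\rr)$ stabilises in some "SCC" with $\minf(\ppRuns(\rr)) = \ell\in\cycles{\TS}$ an "accepting@@cycle" iff $\ppRuns(\rr)$ is accepting. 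The run $\rr$ eventually stays inside the tree $\altTree{\ell_i}$ with $\ell\subseteq\ell_i$; let $n_\ell$ be the "deepest" common node in that tree whose label contains $\ell$, equivalently (by Remark~\ref{rmk-acd:union-changes-acceptance}) the deepest node with $\ell\subseteq\nuAcd(n_\ell)$. The argument is then word-for-word the proof of Proposition~\ref{prop-ZT:correctness_ZT_parity}, reading $\suppAcd$ for $\supp$, $\jump_{\treeVertex{v'}}$ for $\jump$, and $\parityNodesAcd$ for $\parityNodes$, and using Lemma~\ref{lemma-acd:tree_q_ZT_local_Muller} to know $\treeVertex{v}$ is the "Zielonka tree" of $\localMuller{v}{\TS}$: one shows the colours output by $\rr$ from some point on are all $\geq \parityNodesAcd(n_\ell)$ (via the analogue of Claim~\ref{claim:supp-below-n-CORRECT-ZT}) and that $\parityNodesAcd(n_\ell)$ itself is output infinitely often while $\rr$ cycles through the children of $n_\ell$ in round-robin fashion (the analogue of Claim~\ref{claim:supp-n_inf_often-CORRECT-ZT}), so $\min\minf(\gg'(\rr)) = \parityNodesAcd(n_\ell)$; and by Remark~\ref{rmk-acd:parity_levels_ACD} this is even iff $n_\ell$ is a "round node@@acd" iff $\nuAcd(n_\ell)$ is an "accepting cycle", which by the second part of the child-enumeration claim and Remark~\ref{rmk-acd:union-changes-acceptance} happens iff $\ell = \minf(\ppRuns(\rr))$ is an "accepting cycle". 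I would also note that edges changing "SCC" are taken only finitely often, so the fixed colour $\minparityAcd{\TS}$ they produce does not affect $\minf(\gg'(\rr))$, and that the "positive/negative/equidistant" bookkeeping in the definition of $\parityNodesAcd$ only shifts colours within an "SCC" by a constant (even) amount or leaves them put, hence does not change parity. Assembling these gives the acceptance equivalence, completing the proof.
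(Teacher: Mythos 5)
Your proposal is correct and follows essentially the same route as the paper: the natural projection $\pp_V(v,n)=v$, $\pp_E(e_n)=e$, a direct bookkeeping check of local bijectivity, and acceptance-preservation by transporting the two claims from the proof of Proposition~\ref{prop-ZT:correctness_ZT_parity} to the ACD setting with $n_\ell$ (the paper's $n_\rr$) as the deepest node whose label contains the limit cycle. The extra observations about SCC-changing edges occurring finitely often and the parity-preserving shift in $\parityNodesAcd$ are points the paper handles implicitly by restricting to a suffix of the run.
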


The following lemma, analogous to Lemma~\ref{lemma-ZT:transition_ZT-parity-automaton} from Section~\ref{subsec-zt: parity automaton}, follows from the definition of the "ACD-parity-transform".
\begin{lemma}\label{lemma-acd:transition_ACD-parity_transform}
	Let $n$ be a node of $\altTree{\ell_i}$, let $\tilde{n}$ be an "ancestor" of $n$ and let $e= v\re{}v'$ be an edge in $\ell_i$. Then, $\suppAcd(n, e)$ is a "descendant" of $\tilde{n}$ if and only if $e\in \nuAcd(\tilde{n})$, and in this case, if $e_n = (v,n)\re{} (v',n')$ is an edge of $\acdParityTransform{\TS}$, then $n'$ is a "descendant" of  $\tilde{n}$ too.
\end{lemma}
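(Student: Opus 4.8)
## Proof plan for Lemma~\ref{lemma-acd:transition_ACD-parity_transform}

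The plan is to unwind the definitions of $\suppAcd$ and of the $\jump$ function, exactly mirroring the proof of Lemma~\ref{lemma-ZT:transition_ZT-parity-automaton}, but keeping track of the extra graph-theoretic data (the edge $e$ belongs to the cycle $\ell_i$, and $v' = \mtarget(e)$).

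First I would fix the setup: $n \in N_{\ell_i}$, $\tilde n \ancestor n$ an ancestor in $\altTree{\ell_i}$, and $e = v \re{} v' \in \ell_i$. Recall $\suppAcd(n,e)$ is by definition the \emph{deepest} ancestor $n'$ of $n$ with $e \in \nuAcd(n')$. Since the labels along the branch from the root to $n$ are nested cycles ($\nuAcd$ is decreasing along $\ancestor$, because children are labelled by subcycles), the set of ancestors of $n$ whose label contains $e$ is a downward-closed chain; its maximum is precisely $\suppAcd(n,e)$. Hence for any ancestor $\tilde n$ of $n$: $\suppAcd(n,e)$ is a descendant of $\tilde n$ iff $e \in \nuAcd(\tilde n)$. (The forward direction: if $\suppAcd(n,e) \descendant \tilde n$ then $e \in \nuAcd(\suppAcd(n,e)) \subseteq \nuAcd(\tilde n)$ by monotonicity of $\nuAcd$ along the branch. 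The backward direction: if $e \in \nuAcd(\tilde n)$ then $\tilde n$ is an ancestor of $n$ containing $e$, so it lies below the deepest such ancestor, i.e. $\tilde n \ancestor \suppAcd(n,e)$.) Here I use that $e \in \ell_i$ guarantees that $e$ does appear in the root label $\nuAcd(n_{\ell_i}) = \ell_i$, so the chain is non-empty and $\suppAcd(n,e)$ is well-defined; and that $v' \in \nuStates(\nuAcd(\tilde n))$ whenever $e \in \nuAcd(\tilde n)$, so that $n'$ makes sense as a node of $\treeVertex{v'}$.

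Next I would treat the second claim about $n'$. By the definition of the ACD-parity-transform, since $e$ stays inside the SCC $\ell_i$ we have $n' = \jump_{\treeVertex{v'}}(n, \suppAcd(n,e))$. Now suppose $e \in \nuAcd(\tilde n)$, so by the first part $m := \suppAcd(n,e)$ is a descendant of $\tilde n$. I want $n' \descendant \tilde n$. By the description of $\jump$, $n'$ is a leaf of $\treeVertex{v'}$ lying below some child of $m$ (or equals a leaf among the descendants of $m$ itself, in the degenerate case $m = n$ or $m$ a leaf of $\treeVertex{v'}$). In all cases $n' \descendant m$. Combined with $m \descendant \tilde n$ and transitivity of $\descendant$, this gives $n' \descendant \tilde n$, as required. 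One subtlety to handle cleanly: $\jump$ is computed inside the subtree $\treeVertex{v'}$ of $\altTree{\ell_i}$ — I should note that $\tilde n$, being an ancestor of $n$ with $v' \in \nuStates(\nuAcd(\tilde n))$ (which holds since $e = v \re{} v' \in \nuAcd(\tilde n)$), lies in $\nodesTreeVertex{v'}$ by Remark~\ref{rmk-acd:treeVertex-closed-by-ancestor}, so the ancestor relation restricted to $\treeVertex{v'}$ still witnesses $n' \descendant \tilde n$.

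I expect the only mildly delicate point to be bookkeeping around which tree the nodes live in ($\altTree{\ell_i}$ versus the local subtree $\treeVertex{v'}$) and the cyclic/degenerate cases of $\jump$, but none of this is a real obstacle — the statement is essentially a restatement of Lemma~\ref{lemma-ZT:transition_ZT-parity-automaton} in the ACD setting, and the monotonicity of $\nuAcd$ along branches plus Remark~\ref{rmk-acd:treeVertex-closed-by-ancestor} do all the work. The write-up should be short: establish the chain structure of $\{ \tilde n \ancestor n : e \in \nuAcd(\tilde n)\}$, read off the $\suppAcd$ equivalence, then chase $n' \descendant \suppAcd(n,e) \descendant \tilde n$.
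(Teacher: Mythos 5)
Your proposal is correct and follows exactly the argument the paper intends: the paper states the lemma "follows from the definition of the ACD-parity-transform" without an explicit proof, and your unwinding — nestedness of labels along a branch giving the $\suppAcd$ equivalence, then $n' \descendant \suppAcd(n,e) \descendant \tilde n$ via the definition of $\jump$, with the check that $\suppAcd(n,e)$ and $\tilde n$ lie in $\treeVertex{v'}$ — is precisely the intended filling-in of those details.
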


\begin{proof}[Proof of Proposition~\ref{prop-ACD:correctness_ACD-parity_transform}]
	We consider the mapping $\pp = (\pp_V, \pp_E)$ naturally defined by $\pp_V(v,n) = v$ and $\pp_E(e_n) = e$. 
	It is immediate to check that $\pp$ is a "weak morphism of transition systems" (it preserves "initial states" and transitions). Also, it is easy to see that it is "locally bijective": for each "initial state" $v_0\in I$, there is exactly one node in $I'$ of the form $(v_0, n)$: the node where $n$ is the "leftmost" "leaf" of $\treeVertex{v}$; and for each vertex $(v,n)$ and edge $e\in \mout(v)$ of $\TS$, we have defined exactly one edge outgoing from $(v,n)$ corresponding to $e$.
	
	We prove that $\pp$ "preserves the acceptance" of "runs", following the proof scheme from Proposition~\ref{prop-ZT:correctness_ZT_parity}. We can assume w.l.o.g. (see Remarks~\ref{rmk:colours=edges} and~\ref{rmk-acd:oblivious-to-representation}) that the set of "output colours" used by $\TS$ is its set of edges $E$.
	Let $\rr\in \Runs{\acdParityTransform{\TS}}$ be an infinite "run" in $\acdParityTransform{\TS}$. Eventually, $\rr$ will remain in one "SCC", and $\minf(\rr)$ will form a "cycle"  that is "accepting@@cycle" if and only if $\rr$ is an "accepting run".
	We will assume that all the edges in $\rr$ appear infinitely often and belong to this "cycle" (we can do it by using a similar argument as the one presented in the proof of Proposition~\ref{prop-ZT:correctness_ZT_parity}), and we let:
	\[ \rr = (v_0, n_0) \re{x_0} (v_1, n_1) \re{x_1} (v_2, n_2) \re{x_3}\dots .  \]
	The projection of $\rr$ under $\pp$ is:
	\[ \ppRuns(\rr) = v_0\re{e_0} v_1 \re{e_1} v_2 \re{e_3}\dots. \]
	We note that the edges $\{e_0, e_1, \dots \}$ form a "cycle" in $\TS$, that we will call $\ell_\rr$. In particular, $\ell_\rr$ is contained in some maximal "cycle" $\ell_{\max}$, and all the nodes $n_i$ belong to the same tree $\altTree{\ell_{\max}}$ of the "ACD".
	Our objective is to show that $\ell_\rr$ is an "accepting cycle"  in $\TS$ if and only if $\min \{x_0, x_1, x_2,\dots\}$ is even.
	We let $\tilde{n}_i = \suppAcd(n_i, e_i)$ be the node of $\acd{\TS}$ determining the i\ts{th} transition of $\rr$, so we have that $x_i = \parityNodesAcd(\tilde{n}_i)$. 
	Finally, let $n_\rr$ be the "deepest" "ancestor" of $n_0$ such that $\ell_\rr \subseteq \nuAcd(n_\rr)$.
	
	\begin{claim}\label{claim-ACD:supp-below-n-CORRECT-ACD}
		For all $i \geq 0$, $n_i \descendant n_\rr$ and $\tilde{n}_i \descendant n_\rr$ (that is, all nodes appearing in $\rr$ are "below" $n_\rr$). In particular, $x_i \geq \parityNodesAcd(n_\rr)$.
	\end{claim}
	\begin{subproof}
		The claim follows from Lemma~\ref{lemma-acd:transition_ACD-parity_transform} and induction.
	\end{subproof}
	\begin{claim}\label{claim-ACD:supp-n_inf_often-CORRECT-ACD}
		Let $n_{\rr,1},\dots, n_{\rr,s}$ be an enumeration of $\children_{\altTree{\ell_{\max}}}(n_\rr)$. It holds that:
		\begin{enumerate}
			\item $\suppAcd(n_i,e_i) = n_\rr$ infinitely often. In particular, $x_i = \parityNodesAcd(n_\rr)$ for infinitely many~$i$'s.
			\item There is no $n_{\rr,k}\in \children(n_\rr)$ such that $\ell_\rr \subseteq \nuAcd(n_{\rr,k})$.
		\end{enumerate}	
	\end{claim}
	\begin{subproof}
		The proof is identical to that of Claim~\ref{claim:supp-n_inf_often-CORRECT-ZT}, from Proposition~\ref{prop-ZT:correctness_ZT_parity}.
	\end{subproof}
	
	 We conclude that $\min\{x_0, x_1, x_2,\dots\} = \parityNodesAcd(n_\rr)$, which is even if and only if $\ell_\rr$ is an "accepting cycle", by Remarks~\ref{rmk-acd:union-changes-acceptance} and~\ref{rmk-acd:parity_levels_ACD}.	 
\end{proof}

\begin{remark}
	We can give an alternative interpretation of the previous proof. Given a "run" $\rr$ in $\TS$ and a vertex $v$ appearing infinitely often in $\rr$, we can decompose the "run" into:
	\[ \lrp{\rr_0} \; v \; \lrp{\rr_1} \; v \;\lrp{\rr_2} \; v \; \lrp{\rr_3} \; v \; \lrp{\rr_4} \dots , \]
	where the finite runs $\rr_i$ are "cycles" "over $v$@@cycle", for $i>0$. Therefore, the sequence of these cycles can be processed by the "ZT-parity-automaton" corresponding to the "local Muller condition" $\localMuller{v}{\TS}$. By Lemma~\ref{lemma-acd:tree_q_ZT_local_Muller} and the correctness of the "ZT-parity-automaton", the minimal colour produced by a "run over" this sequence of cycles in $\zielonkaAutomaton{\localMuller{v}{\TS}}$ coincides with the minimal "output colour" produced by the run $\inv{\ppRuns}(\rr)$ in the "ACD-parity-transform" $\acdParityTransform{\TS}$ (disregarding the initial path $\rr_0$). This colour is exactly the one corresponding to the "deepest@@tree" node in $\treeVertex{v}$ above the "leftmost" "leaf" containing $\minf(\rr)$.
\end{remark}

The "locally bijective morphism" given by Proposition~\ref{prop-ACD:correctness_ACD-parity_transform} witnesses that $\acdParityTransform{\TS}$ shares the same semantic properties as $\TS$. The next corollaries follow from Proposition~\ref{prop-morph:HD mappings-preserve-languages} and Corollary~\ref{cor-morph:HD-map-preserve-full-WR} (and the fact that the choice of initial vertices in $\acdParityTransform{\TS}$ is arbitrary).

\begin{corollary}
	Let $\A$ be a "Muller" "automaton" and let $\acdParityTransform{\A}$ be its "ACD-parity-transform". Then, $\Lang{\A} = \Lang{\acdParityTransform{\A}}$, and $\A$ is "deterministic" (resp. "history-deterministic") if and only if $\acdParityTransform{\A}$ is "deterministic" (resp. "history-deterministic").
\end{corollary}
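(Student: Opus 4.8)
The plan is to derive this corollary directly from the correctness of the \kl{ACD-parity-transform} (Proposition~\ref{prop-ACD:correctness_ACD-parity_transform}) together with the semantic-preservation results for \kl{morphisms} established in Section~\ref{subsec-morph:semantic-properties}. Indeed, Proposition~\ref{prop-ACD:correctness_ACD-parity_transform} applied to the \kl{automaton} $\A$ (viewed as a \kl{labelled transition system}) yields a \kl{locally bijective morphism} $\pp\colon \acdParityTransform{\A} \to \A$. Since $\A$ and $\acdParityTransform{\A}$ are \kl{automata} over the same \kl{input alphabet} (the \kl{ACD-parity-transform} preserves the \kl{labelling with input letters} by construction), this $\pp$ is in particular a \kl{morphism of automata}.

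First I would recall that a \kl{locally bijective morphism} is an \kl{HD mapping} (this is Lemma~\ref{lemma-morph:loc_surjective_implies_HD}, since \kl{locally bijective morphisms} are \kl{locally surjective morphisms}). Then Proposition~\ref{prop-morph:HD mappings-preserve-languages} applies: it gives $\Lang{\acdParityTransform{\A}} = \Lang{\A}$, and also that $\acdParityTransform{\A}$ is \kl{history-deterministic} if and only if $\A$ is. For the \kl{determinism} equivalence, the same proposition requires $\pp$ to be moreover \kl{surjective}; by Lemma~\ref{lemma-morph:locSurjMorph_OntoAccessible}, a \kl{locally surjective} \kl{weak morphism} is onto the \kl{accessible part} of its target, so after restricting $\A$ to its \kl{accessible part} (which changes neither $\Lang{\A}$ nor the \kl{determinism}/\kl{history-determinism} status, and which the \kl{ACD-parity-transform} anyway only sees through the \kl{SCCs}) we may assume $\pp$ is \kl{surjective}. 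Then Proposition~\ref{prop-morph:HD mappings-preserve-languages} gives that $\A$ is \kl{deterministic} if and only if $\acdParityTransform{\A}$ is.

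There is one small subtlety to address, flagged by the parenthetical remark in the statement: Proposition~\ref{prop-morph:HD mappings-preserve-languages} is about \kl{transition systems} with the \kl{initial vertices} already fixed, whereas the choice of \kl{initial vertices} in $\acdParityTransform{\A}$ is somewhat arbitrary (one picks the \kl{leftmost} \kl{leaf} of each \kl{local subtree}). However, the definition of the \kl{ACD-parity-transform} (Definition~\ref{def:parityTransformationACD}) does set $I' = \{(v_0,n) \mid v_0\in I,\ n \text{ the \kl{leftmost} \kl{leaf} of } \treeVertex{v_0}\}$, so $\pp$ as defined does preserve \kl{initial vertices} and is genuinely a \kl{morphism of pointed graphs}; no further argument is needed beyond noting that $\pp$ restricted to $I'$ is a bijection onto $I$, which is immediate from the construction. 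So the main "obstacle" here is purely bookkeeping — ensuring that the hypotheses of Lemma~\ref{lemma-morph:loc_surjective_implies_HD}, Lemma~\ref{lemma-morph:locSurjMorph_OntoAccessible} and Proposition~\ref{prop-morph:HD mappings-preserve-languages} are all genuinely met by the morphism of Proposition~\ref{prop-ACD:correctness_ACD-parity_transform} — and there is no real mathematical content left once Proposition~\ref{prop-ACD:correctness_ACD-parity_transform} is in hand.
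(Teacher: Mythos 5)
Your proposal is correct and follows essentially the same route as the paper: the authors likewise derive this corollary from the locally bijective morphism of Proposition~\ref{prop-ACD:correctness_ACD-parity_transform} combined with Proposition~\ref{prop-morph:HD mappings-preserve-languages} (and the observation that the choice of initial vertices in $\acdParityTransform{\A}$ is arbitrary). Your extra bookkeeping about surjectivity onto the accessible part and the preservation of initial vertices is exactly the right way to check the hypotheses.
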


\begin{corollary}
	Let $\G$ be a "Muller" "game" and let $\acdParityTransform{\G}$ be its "ACD-parity-transform". "Eve" "wins" $\acdParityTransform{\G}$ from a vertex	of the form $(v,n)$ if and only if she "wins" $\G$ from $v$.
\end{corollary}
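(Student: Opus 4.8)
The plan is to obtain this corollary as a direct consequence of Proposition~\ref{prop-ACD:correctness_ACD-parity_transform}, applied with $\TS = \G$: it supplies a "locally bijective morphism" of "games" $\pp\colon \acdParityTransform{\G}\to \G$, where the "game" structure of $\acdParityTransform{\G}$ (the partition of its vertices into those controlled by "Eve" and by "Adam") is inherited from $\G$ via $l'_{V'}(v,n)=l_V(v)$ and is preserved by $\pp$ since $\pp_V(v,n)=v$. The key observation I would exploit is that "local bijectivity" consists only of conditions on the outgoing edges of each vertex together with one condition on the set of "initial vertices", and that the proof of Proposition~\ref{prop-ACD:correctness_ACD-parity_transform} establishes that $\pp$ "preserves the acceptance" of "runs" without ever referring to which vertices are initial. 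Hence, for a fixed vertex $(v,n)$ of $\acdParityTransform{\G}$, the same pair of maps $\pp=(\pp_V,\pp_E)$ is still a "locally bijective" "morphism of games" between the pointed systems $\initialTS{\acdParityTransform{\G}}{(v,n)}$ and $\initialTS{\G}{v}$: it sends the unique "initial vertex" $(v,n)$ to the unique "initial vertex" $v$, the two singleton initial sets are put in bijection, and all the edge-local conditions are unchanged.

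With this in place, the rest is short. By Lemma~\ref{lemma-morph:loc_surjective_implies_HD}, a "locally bijective" "morphism of games" is in particular an "HD-for-games mapping", so Proposition~\ref{prop-morph:HD mappings-preserve-games} yields
\[ \winRegion{\Eve}{\initialTS{\G}{v}} \;=\; \pp\bigl(\winRegion{\Eve}{\initialTS{\acdParityTransform{\G}}{(v,n)}}\bigr). \]
Both of these "games" have a single "initial vertex", so each side's "winning region" for "Eve" is either the corresponding singleton or $\emptyset$; moreover $\pp$ maps the nonempty set $\{(v,n)\}$ onto the nonempty set $\{v\}$. Therefore the left-hand side is nonempty exactly when $\winRegion{\Eve}{\initialTS{\acdParityTransform{\G}}{(v,n)}}$ is, which by the definition of the "winning region" is precisely the statement that "Eve" "wins" $\acdParityTransform{\G}$ from $(v,n)$ if and only if she "wins" $\G$ from $v$.

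A more global alternative — the route hinted at by the surrounding text — would instead apply Corollary~\ref{cor-morph:HD-map-preserve-full-WR} to $\pp$: since $\WW_\G$ is a "Muller language" and $\WW_{\acdParityTransform{\G}}$ a "parity language", both "acceptance sets" are "prefix-independent", so one gets $\winRegion{\Eve}{\initialTS{\G}{V}} = \pp(\winRegion{\Eve}{\initialTS{\acdParityTransform{\G}}{V'}})$, i.e. "Eve" "wins" $\G$ from $v$ iff she "wins" $\acdParityTransform{\G}$ from \emph{some} copy $(v,n)$; promoting ``some'' to ``every'' then uses that the choice of initial leaves in the "ACD-parity-transform" is arbitrary, so no copy $(v,n)$ is distinguished. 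I would present the first argument, as it sidesteps this last bit of bookkeeping. I do not expect a genuine obstacle here: the only step requiring a little care is checking that the morphism of Proposition~\ref{prop-ACD:correctness_ACD-parity_transform} survives the restriction of the initial sets to singletons as a "locally bijective" "morphism of games" — which is immediate, since its defining properties are either edge-local or trivially satisfied when the initial sets have one element.
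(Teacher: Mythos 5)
Your argument is correct, and it is essentially the paper's: everything rests on the "locally bijective morphism" $\pp\colon \acdParityTransform{\G}\to\G$ of Proposition~\ref{prop-ACD:correctness_ACD-parity_transform} combined with the game-preservation results of Section~\ref{subsec-morph:semantic-properties}. The only divergence is in the initial-vertex bookkeeping. The paper takes your ``global alternative'': it invokes Corollary~\ref{cor-morph:HD-map-preserve-full-WR} (hence "prefix-independence" of the "Muller" and "parity" "acceptance sets") to equate the "full winning regions" up to projection, and then uses the arbitrariness of the choice of initial "leaves" to pass from ``some copy $(v,n)$'' to ``every copy $(v,n)$''. Your primary argument instead re-points both "transition systems" at the singletons $\{(v,n)\}$ and $\{v\}$, checks that $\pp$ remains a "locally bijective" "morphism of games" there (which is indeed immediate, since "local bijectivity" is edge-local plus a trivially satisfied condition on singleton initial sets, and the acceptance-preservation argument of Proposition~\ref{prop-ACD:correctness_ACD-parity_transform} only inspects tails of infinite "paths"), and applies Proposition~\ref{prop-morph:HD mappings-preserve-games} directly. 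This buys you a statement that works uniformly for every leaf $n$ of $\treeVertex{v}$ without appealing to prefix-independence or to the symmetry between copies; the cost is the small verification that the morphism survives re-pointing, which the paper's route delegates to Lemma~\ref{lemma-morph:HD mappings-with-prefix-independent-conditions}. Both are sound.
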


\subsection{An optimal history-deterministic transformation to Rabin transition systems}\label{subsec-acd: HD-Rabin-transformation}

In this section we describe the "ACD-HD-Rabin-transform", an optimal transformation of "Muller" "TS" to "Rabin" "TS" preserving "history-determinism". This construction generalises that from Section~\ref{subsec-zt: GFG-Rabin}.

\begin{definition}[ACD-HD-Rabin-transform]\label{def-ACD:RabinHDTransformationACD}
	Let $\TS$ be a "Muller" "TS". 
	For each vertex $v\in V$ we let $\eta_v \colon \leaves(\treeVertex{v}) \to \{1,\dots, \memTree{\treeVertex{v}}\}$ be a mapping satisfying Property~\eqref{eq:property-star} from Lemma~\ref{lemma:property_star}.
	
	\AP We define the ""ACD-HD-Rabin-transform"" of $\TS$ to be the "Rabin" "TS"	 $\intro*\acdRabinTransform{\TS} = (\underlyingGraph{}', \macc{}')$, with $\underlyingGraph{}' = (V', E', \msource', \mtarget', I')$, and $\macc{}' = (\gg',\ab \nodesAcd{\TS}, \ab \Rabin{R})$ defined as follows.
	\begin{description}
		  \setlength\itemsep{2mm}
		\item[Vertices.] The set of vertices is \[V' = \bigcup_{v\in V} \left( \{v\} \times \{1,\dots, \memTree{\treeVertex{v}}\} \right),\]
		where $\memTree{\treeVertex{v}}$ is the "round-branching width" of $\treeVertex{v}$.
		
		\item[Initial vertices.] $I' = \{ (v_0, x) \mid v_0\in I \tand x \in \{1,\dots, \memTree{\treeVertex{v_0}}\} \}$.
		
		\item[Edges and output colours.] We let \[E' = \bigcup_{e\in E} \left( \{e\} \times \leaves(\treeVertex{\msource(e)})\right).\]
		
		For each edge $e = v\re{} v'\in E$ in $\TS$ and $x\in \{1,\dots, \memTree{\treeVertex{v}}\}$, we will place one edge from $(v, x)$ for each leaf $l$ of $\treeVertex{v}$ such that $\eta_v(l) = x$.
		More precisely, we let $(v, x) \re{n} (v',x')\in E'$ if either 
		\begin{itemize}
			\item $v$ and $v'$ are not in the same "SCC" (in this case the output colour $n$ is irrelevant), or
			\item $v$ and $v'$ are in the same "SCC" and there are "leaves" $l$ and $l'$ of $\treeVertex{v}$ and $\treeVertex{v'}$, respectively, such that:
			\begin{itemize}
				\item $\eta_v(l) = x$, $\eta_{v'}(l') = x'$,
				\item $l' = \jump_{\treeVertex{v'}}(l, \suppAcd(l, e))$,
				\item $n = \suppAcd(l,e)$.
			\end{itemize}
		\end{itemize}
		
		\item[Rabin condition.] $R = \{(G_n, R_n)\}_{n\in \nodesAcdRound{\TS}}$, where $G_n$ and $R_n$ are defined as follows: Let $n$ be a "round@@acd" node, and let $n'$ be any node in $\nodesAcd{\TS}$,
		\begin{equation*}
			\begin{cases}
				n' \in G_n & \text{ if } n'=n,\\
				n' \in R_n & \text{ if }  n'\neq n \text{ and } n \text{ is not an "ancestor" of } n'. 
			\end{cases}
		\end{equation*} 	
		
		\item[Labellings.] If $\TS$ is a "labelled transition system", with labels $l_V\colon V \to L_V$ and $l_E\colon E \to L_E$, we label $\acdRabinTransform{\TS}$ by $l_{V'}'(v,x) = l_V(v)$ and $l'_{E'}(e') = l_E(e)$, if $e'\in E'(e)$.
	\end{description}
\end{definition}

This construction generalises the "ZT-HD-Rabin-automaton" in the same way as the "ACD-parity-transform" generalises the "ZT-parity-automaton". Intuitively, a "run" in $\acdRabinTransform{\TS}$ can be identified with a promenade through the nodes of the "ACD", which are used as the "output colours" to define the "Rabin" "acceptance condition".

\begin{remark}\label{rmk-acd:size_acd-RabinHD-transformation}
	The size of the "ACD-HD-Rabin-transform" of $\TS$ is:
	\[|\acdRabinTransform{\TS}| = \sum_{v\in V} \memTree{\treeVertex{v}} = \sum_{v\in \Vrec}\memTree{\treeVertex{v}} + |\Vtrans|,\]
	where $\Vrec$ and $\Vtrans$ are the sets of "recurrent" and "transient" vertices of $\TS$, respectively.
\end{remark}

\paragraph*{Correctness of the ACD-HD-Rabin-transform.} To obtain the correctness of the "ACD-HD-Rabin-transform", we follow the same steps as in the proof of the correctness of the "ZT-HD-Rabin-automaton" (Proposition~\ref{prop-ZT:correctness_ZT_HD-Rabin}).

\begin{proposition}[Correctness of the "ACD-HD-Rabin-transform"]\label{prop-ACD:correctness_ACD-HD-Rabin-transform}
	Let $\TS$ be a "(labelled)" "Muller" "TS" and let $\acdRabinTransform{\TS}$ be its "ACD-HD-Rabin-transform". There is an "HD mapping" of "(labelled)" "transition systems" $\pp\colon \acdRabinTransform{\TS} \to \TS$.
\end{proposition}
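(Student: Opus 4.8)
The plan is to mirror the proof of Proposition~\ref{prop-ZT:correctness_ZT_HD-Rabin}, transporting each ingredient from the single-state setting to the general "Muller" "TS" $\TS$. First I would define the candidate "weak morphism" $\pp\colon \acdRabinTransform{\TS}\to \TS$ by $\pp_V(v,x)=v$ and, for an edge $e'\in E'(e)$ coming from $e=v\re{}v'$, $\pp_E(e')=e$; it is immediate that this preserves "initial vertices" and transitions, so $\pp$ is a "weak morphism", and Remark~\ref{rmk-morph: HD-is-loc.surjective}-style bookkeeping shows it is "locally surjective". I would then check that $\pp$ "preserves accepting runs": given an "accepting run" $\rr$ in $\acdRabinTransform{\TS}$, eventually it stays in a single "SCC" and the nodes it visits lie in a single "tree" $\altTree{\ell_{\max}}$; a "Rabin"-accepting run means (by the obvious analogue of Lemma~\ref{lemma-ZT:Rabin-acc-sequences-nodes-ZT}, which I would state and prove verbatim for $R=\{(G_n,R_n)\}_{n\in\nodesAcdRound{\TS}}$) that a unique minimal node $n$ for $\ancestor$ occurs infinitely often in the output and that $n$ is "round@@acd". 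Using Lemma~\ref{lemma-acd:transition_ACD-parity_transform} (which governs the $\jump$/$\suppAcd$ dynamics identically for the Rabin transform) and the round-robin argument from Claims~\ref{claim:supp-below-n-CORRECT-ZT} and~\ref{claim:supp-n_inf_often-CORRECT-ZT}, one shows $\minf(\ppRuns(\rr))\subseteq \nuAcd(n)$ and $\minf(\ppRuns(\rr))\nsubseteq \nuAcd(n_j)$ for every child $n_j$ of $n$; by Remark~\ref{rmk-acd:union-changes-acceptance} this forces $\ppRuns(\rr)$ to be an "accepting cycle", hence $\ppRuns(\rr)$ is "accepting@@run".

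Next I would exhibit a "sound@@morph" "resolver simulating" $\pp$. The natural construction, as in Lemma~\ref{lemma-ZT:morphism-parity-Rabin} and the proof of Proposition~\ref{prop-ZT:correctness_ZT_HD-Rabin}, is to use the "ACD-parity-transform" $\acdParityTransform{\TS}$ together with the "locally bijective morphism" $q\colon \acdParityTransform{\TS}\to\TS$ from Proposition~\ref{prop-ACD:correctness_ACD-parity_transform}. Concretely, I would first produce a "morphism of transition systems" $\psi\colon \acdParityTransform{\TS}\to \acdRabinTransform{\TS}$ defined on vertices by $\psi_V(v,l)=(v,\eta_v(l))$ and on an edge $e_l=(v,l)\re{}(v',l')$ of $\acdParityTransform{\TS}$ by $\psi_E(e_l)=(e,l)\in E'(e)$, landing on the $\acdRabinTransform{\TS}$-edge $(v,\eta_v(l))\re{\suppAcd(l,e)}(v',\eta_{v'}(l'))$; using Lemma~\ref{lemma-acd:tree_q_ZT_local_Muller} locally at each "recurrent" vertex together with the node-level computation just carried out, the arguments of Claims~\ref{claim:supp-below-n-CORRECT-ZT}–\ref{claim:supp-n_inf_often-CORRECT-ZT} and the Rabin version of Lemma~\ref{lemma-ZT:Rabin-acc-sequences-nodes-ZT} show $\psi$ "preserves the acceptance of runs". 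Since $q = \pp\circ\psi$ and $q$ is "locally bijective", every "run" of $\TS$ lifts uniquely along $q$, and pushing that lift forward along $\psi$ yields a canonical lift along $\pp$; this is exactly a "resolver simulating $\pp$", and it is "sound@@morph" because $q$ (being a "morphism") in particular preserves accepting runs both ways.

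Concretely I would set $\rInit(v_0) = \psi_V(q^{-1}_{\mathrm{init}}(v_0))$ — i.e. the image under $\eta_{v_0}$ of the "leftmost" "leaf" of $\treeVertex{v_0}$ — and define $r(\rr,e')$ recursively: if $\rr$ is the $\pp$-image of the (unique) $q$-lift of some finite "run" $\rr_P$ in $\acdParityTransform{\TS}$, let $r(\rr,e')=\psi_E(\tilde e)$ where $\tilde e$ is the unique $q$-lift of $\pp_E(e')$ outgoing from $\mtargetPath(\rr_P)$; otherwise define $r$ arbitrarily on $\inv{\pp}(e')$. Then for any "accepting run" $\rr'$ in $\TS$, the induced run $\rRuns(\rr') = \psi_{\mathpzc{Runs}}(q^{-1}_{\mathpzc{Runs}}(\rr'))$ is "accepting@@run" in $\acdRabinTransform{\TS}$ because $q^{-1}_{\mathpzc{Runs}}(\rr')$ is "accepting@@run" in $\acdParityTransform{\TS}$ (as $q$ is a "morphism") and $\psi$ preserves acceptance. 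The main obstacle I anticipate is purely combinatorial: verifying the Rabin-acceptance analysis, namely that for a "run" confined to one "SCC" of $\acdRabinTransform{\TS}$ the sequence of node-colours $\suppAcd(l_i,e_i)$ has a unique $\ancestor$-minimal infinitely-recurring node and that its "round@@acd"-ness coincides with acceptance of the underlying "cycle"; this requires carefully re-running the round-robin "jump" argument while simultaneously tracking that the $\eta_v$-identifications (merging "leaves" with a common "square@@acd" ancestor, by Property~\eqref{eq:property-star}) never let a "run" masquerade inside a wrong child of $n$, which is precisely the content one borrows from the $\MullerC{\F}{\SS}\subseteq\Lang{\zielonkaHDAutomaton{\F}}$ half of Proposition~\ref{prop-ZT:correctness_ZT_HD-Rabin} applied to each "local Muller language" $\localMuller{v}{\TS}$.
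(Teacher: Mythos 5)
Your proposal is correct and follows essentially the same route as the paper: the same projection $\pp_V(v,x)=v$, $\pp_E(e,l)=e$; acceptance-preservation via the Rabin-pair characterisation of sequences of ACD-nodes (the analogue of Lemma~\ref{lemma-ZT:Rabin-acc-sequences-nodes-ZT}) combined with the round-robin analysis from the parity case; and a resolver obtained by lifting runs of $\TS$ uniquely along the locally bijective morphism $\acdParityTransform{\TS}\to\TS$ and pushing them through the quotient morphism $\acdParityTransform{\TS}\to\acdRabinTransform{\TS}$. The only difference is cosmetic: you spell out the combinatorial verification that the paper delegates to "analogous to Proposition~\ref{prop-ZT:correctness_ZT_HD-Rabin}".
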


The proof of the next two lemmas are completely analogous to those of Lemmas~\ref{lemma-ZT:Rabin-acc-sequences-nodes-ZT} and~\ref{lemma-ZT:morphism-parity-Rabin}.
\begin{lemma}\label{lemma-acd:Rabin-acc-sequences-nodes-ACD}
	Let $u = n_0n_1n_2\dots \in \nodesAcd{\TS}^\oo$ be an infinite sequence of nodes of the "ACD" of $\TS$. The word $u$ belongs to $\Rabin{R}$, for $R = \{(G_n,R_n)\}_{n\in\nodesAcdRound{\TS}}$ the Rabin condition of $\acdRabinTransform{\TS}$, if and only if there is a unique minimal node for the "ancestor relation" in $\minf(u)$ and this minimal node is "round".
\end{lemma}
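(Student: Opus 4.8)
The plan is to transcribe the proof of Lemma~\ref{lemma-ZT:Rabin-acc-sequences-nodes-ZT} almost word for word, the only structural difference being that $\nodesAcd{\TS}$ is a \emph{forest} of "trees of alternating subcycles" rather than a single tree. This difference is harmless: for a "round@@acd" node $n$, the set $R_n$ of the "Rabin pair" $(G_n, R_n)$ consists of \emph{every} node of $\acd{\TS}$ that is neither $n$ nor a "descendant" of $n$ — in particular it contains all nodes lying in a "tree of alternating subcycles" other than the one containing $n$. Hence any constraint of the form $\minf(u)\cap R_n = \emptyset$ already confines $\minf(u)$ to a single tree, and within that tree the argument is exactly the Zielonka-tree one.

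Concretely, I would first record the elementary observation that a non-empty subset $S$ of $\nodesAcd{\TS}$ admits a unique minimal node for $\ancestor$ if and only if $S$ has a minimum, i.e.\ a node that is an "ancestor" of every node of $S$: if some minimal node of $S$ failed to lie above a node $n'\in S$, one could walk down the (finite) chain of strict ancestors of $n'$ inside $S$ to reach a \emph{different} minimal node, contradicting uniqueness. With this in hand I would argue the two directions.

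For the ``if'' direction, assume $\minf(u)$ has a unique minimal node $n$ and that $n$ is "round@@acd". Then $(G_n, R_n)$ is one of the "Rabin pairs" of $\acdRabinTransform{\TS}$; since $G_n = \{n\}$ and $n\in\minf(u)$, we have $\minf(u)\cap G_n = \{n\}\neq\emptyset$. By the observation above $n$ is an "ancestor" of every node of $\minf(u)$, so no node of $\minf(u)$ belongs to $R_n$; thus $\minf(u)\cap R_n = \emptyset$ and $u$ is "accepted by the Rabin pair" $(G_n,R_n)$, i.e.\ $u\in\Rabin{R}$. For the ``only if'' direction, let $u\in\Rabin{R}$ and pick a "round@@acd" node $n$ with $\minf(u)\cap G_n\neq\emptyset$ and $\minf(u)\cap R_n = \emptyset$. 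From $G_n=\{n\}$ we get $n\in\minf(u)$, and from $\minf(u)\cap R_n=\emptyset$ every node of $\minf(u)$ is $n$ or a "descendant" of $n$; hence $n$ is the minimum, and in particular the unique minimal node, of $\minf(u)$, and it is "round@@acd" by choice.

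I do not anticipate any real obstacle here: the statement is a direct relabelling of Lemma~\ref{lemma-ZT:Rabin-acc-sequences-nodes-ZT}, with nodes of the "ACD" playing the role of nodes of the "Zielonka tree", and the only points that deserve to be written out are the forest bookkeeping noted above and the ``unique minimal $=$ minimum'' remark, both of which are immediate.
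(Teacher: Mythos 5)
Your proof is correct and follows essentially the same route as the paper, which simply observes that the argument for Lemma~\ref{lemma-ZT:Rabin-acc-sequences-nodes-ZT} carries over verbatim: $G_n=\{n\}$ forces $n\in\minf(u)$, and $\minf(u)\cap R_n=\emptyset$ is equivalent to every node of $\minf(u)$ being a descendant of $n$. Your two added remarks — that $R_n$ automatically contains all nodes of other trees of the forest, and that a unique minimal node of a finite subset of a tree is in fact its minimum — are exactly the (routine) points one must check, so there is nothing to object to.
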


\begin{lemma}\label{lemma-acd:morphism-Parity-Rabin}
	There exists a "morphism of transition systems" $\pp\colon \acdParityTransform{\TS} \to \acdRabinTransform{\TS}$.
\end{lemma}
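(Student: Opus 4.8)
The statement is the direct analogue, at the level of transition systems, of Lemma~\ref{lemma-ZT:morphism-parity-Rabin}, which exhibited a morphism $\zielonkaAutomaton{\F} \to \zielonkaHDAutomaton{\F}$. Recall that $\acdParityTransform{\TS}$ is built by replacing each recurrent vertex $v$ by $\leaves(\treeVertex{v})$, whereas $\acdRabinTransform{\TS}$ is built by replacing $v$ by $\{1,\dots,\memTree{\treeVertex{v}}\}$, where the index of a leaf $l$ is $\eta_v(l)$; transient vertices are treated identically in both (a single copy). So the plan is to define $\pp = (\pp_V, \pp_E)$ by collapsing leaves according to the $\eta_v$'s: set $\pp_V(v,l) = (v, \eta_v(l))$ on recurrent vertices and $\pp_V(v,\iota) = (v,1)$ (with $\iota$ the unique node of the trivial tree $\treeVertex{v}$) on transient ones. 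For an edge $e_l = (v,l)\re{}(v',l')$ of $\acdParityTransform{\TS}$ coming from $e = v\re{}v'\in E$, where $l' = \jump_{\treeVertex{v'}}(l,\suppAcd(l,e))$ and the parity output is $\parityNodesAcd(\suppAcd(l,e))$, I would set $\pp_E(e_l) = (v,\eta_v(l))\re{\,\suppAcd(l,e)\,}(v',\eta_{v'}(l'))$, i.e. the $\acdRabinTransform{\TS}$-edge whose witnessing leaves are exactly $l$ and $l'$ and whose colour is the node $\suppAcd(l,e)$. For edges changing SCC, send $e_l$ to the corresponding SCC-changing edge of $\acdRabinTransform{\TS}$ (colour irrelevant).

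First I would check that $\pp$ is a well-defined \emph{weak morphism of transition systems}: by construction the chosen target edge $\pp_E(e_l)$ genuinely exists in $\acdRabinTransform{\TS}$ (its definition of $E'$ quantifies precisely over leaves $l$ of $\treeVertex{\msource(e)}$ with $\eta$-image matching the source index, which is exactly our $l$), and $\msource'$, $\mtarget'$, the initial-vertex condition, and the label compatibility ($l_{V'}'(v,\cdot)=l_V(v)$, $l_{E'}'(e')=l_E(e)$ in both constructions) are immediate from the definitions in Definitions~\ref{def:parityTransformationACD} and~\ref{def-ACD:RabinHDTransformationACD}. Then comes the real content: $\pp$ must \emph{preserve the acceptance of runs}. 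Given an infinite run $\rr \in \Runs{\acdParityTransform{\TS}}$, it eventually stays in one SCC, and $\minf(\rr)$ is a cycle $\ell_\rr$ contained in some maximal cycle $\ell_{\max}$, so all the nodes $n_i = \suppAcd(l_i, e_i)$ traversed infinitely often lie in the single tree $\altTree{\ell_{\max}}$. The output of $\rr' = \ppRuns(\rr)$ in $\acdRabinTransform{\TS}$ is exactly the sequence $n_0 n_1 n_2 \dots \in \nodesAcd{\TS}^\oo$. By Claims~\ref{claim-ACD:supp-below-n-CORRECT-ACD} and~\ref{claim-ACD:supp-n_inf_often-CORRECT-ACD} from the proof of Proposition~\ref{prop-ACD:correctness_ACD-parity_transform}, there is a unique minimal node $n_\rr$ (for $\ancestor$) appearing infinitely often in this sequence, and $\rr$ is accepting in $\acdParityTransform{\TS}$ if and only if $n_\rr$ is a round node. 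Lemma~\ref{lemma-acd:Rabin-acc-sequences-nodes-ACD} then says precisely that $n_0 n_1 n_2\dots \in \Rabin{R}$ if and only if this unique minimal infinitely-recurring node is round, so $\ppRuns(\rr)$ is accepting in $\acdRabinTransform{\TS}$ iff $\rr$ is accepting in $\acdParityTransform{\TS}$, as required.

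The one point needing slightly more care — and the place where Property~\eqref{eq:property-star} of the $\eta_v$'s enters — is verifying that $\pp_E$ maps onto genuine edges and that the $\jump$-targets are consistent across the collapse; but this is subsumed by observing that $\acdRabinTransform{\TS}$ was \emph{defined} to contain, for every leaf pair $(l,l')$ related by $\jump$ over $e$, an edge $(v,\eta_v(l))\re{\suppAcd(l,e)}(v',\eta_{v'}(l'))$, so no coherence obligation beyond existence is imposed. I do not expect any serious obstacle: once the reductions to the two cited claims and to Lemma~\ref{lemma-acd:Rabin-acc-sequences-nodes-ACD} are in place, the argument is essentially a transcription of the proof of Lemma~\ref{lemma-ZT:morphism-parity-Rabin}, with ``leaves of $\zielonkaTree{\F}$'' replaced by ``leaves of $\treeVertex{v}$'' and ``$\eta$'' by ``$\eta_v$''. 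The mild bookkeeping overhead compared to the Zielonka-tree case is only the handling of transient vertices and SCC-changing edges, which contribute nothing to $\minf(\rr)$ and hence are harmless for acceptance.
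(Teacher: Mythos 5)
Your proposal is correct and takes essentially the same route as the paper, which simply declares the proof ``completely analogous'' to Lemma~\ref{lemma-ZT:morphism-parity-Rabin}: collapse leaves via the maps $\eta_v$, observe that the edge sets of $\acdParityTransform{\TS}$ and $\acdRabinTransform{\TS}$ are indexed by the same pairs $(e,l)$ so the morphism exists by construction, and reduce acceptance preservation to Claims~\ref{claim-ACD:supp-below-n-CORRECT-ACD} and~\ref{claim-ACD:supp-n_inf_often-CORRECT-ACD} together with Lemma~\ref{lemma-acd:Rabin-acc-sequences-nodes-ACD}. You have in fact spelled out the transcription (transient vertices, SCC-changing edges) in more detail than the paper does.
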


Using these lemmas we can prove Proposition~\ref{prop-ACD:correctness_ACD-HD-Rabin-transform}.

\begin{proof}[Proof of Proposition~\ref{prop-ACD:correctness_ACD-HD-Rabin-transform}]
	We define the mapping $\pp\colon \acdRabinTransform{\TS} \to \TS$ in the natural way: $\pp_V(v,x) = v$ and $\pp_E(e,l) = e$. It is immediate to check that $\pp$ is a "weak morphism". The fact that $\pp$ preserves "accepting runs" can be proven analogously to the fact that $\Lang{\zielonkaHDAutomaton{\F}} \subseteq \MullerC{\F}{\SS}$ in Proposition~\ref{prop-ZT:correctness_ZT_HD-Rabin} (by using Lemma~\ref{lemma-acd:Rabin-acc-sequences-nodes-ACD}).
	
	Definition of a "sound resolver@@morph" for $\pp$: In order to show how to simulate runs of $\TS$ in $\acdRabinTransform{\TS}$, we use the fact that we can see $\acdRabinTransform{\TS}$ as a quotient of $\acdParityTransform{\TS}$ (Lemma~\ref{lemma-acd:morphism-Parity-Rabin}). Let $\tilde{\pp}\colon \acdParityTransform{\TS} \to \TS$ be the "locally bijective morphism" given by Proposition~\ref{prop-ACD:correctness_ACD-parity_transform}, and let $\hat{\pp}\colon \acdParityTransform{\TS} \to \acdRabinTransform{\TS}$ be the "morphism" given by Lemma~\ref{lemma-acd:morphism-Parity-Rabin}.
	Since $\tilde{\pp}$ is "locally bijective", $\ppRunsP{\tilde{\pp}}$ is a bijection between the runs of the transitions systems $\acdParityTransform{\TS}$ and $\TS$, admitting an inverse $\inv{\ppRunsP{\tilde{\pp}}}$. Composing this mapping with $\hat{\pp}$, we obtain a way to simulate the runs from $\TS$ in $\acdRabinTransform{\TS}$:
	\[ \ppRunsP{\hat{\pp}} \circ \inv{\ppRunsP{\tilde{\pp}}} \colon \RunsInfty{\TS} \to \RunsInfty{\acdRabinTransform{\TS}}. \]
	
	This composition of mappings provides a "sound@@morph" "resolver@@morph" simulating $\pp$. Formally, let $(\rInit, r)$ be the "resolver@@morph" defined as follows. The choice of initial vertices $\rInit\colon I \to I'$ is given by $\rInit(v_0,x) = v_0$. The function $r\colon E'^* \times E \to E'$ associates to a finite run $\rr\in E'^*$ and $e\in E$ the last edge of the run $\hat{\pp}(\inv{\tilde{\pp}}(\pp(\rr)e))$ (subscripts have been omitted for legibility). It is easy to check that $(\rInit, r)$ indeed defines a "resolver simulating $\pp$". Its "soundness@@morph" follows from the fact that $\tilde{\pp}$ and $\hat{\pp}$ "preserve the acceptance of runs".
\end{proof}

From Proposition~\ref{prop-morph:HD mappings-preserve-languages} we obtain:

\begin{corollary}
	Let $\A$ be a "Muller" "automaton" and let $\acdRabinTransform{\A}$ be its "ACD-HD-Rabin-transform". Then, $\Lang{\acdRabinTransform{\A}} = \Lang{\A}$. Moreover, $\acdRabinTransform{\A}$  is "history-deterministic" if and only if $\A$ is "history-deterministic".
\end{corollary}

\paragraph*{ACD-HD-Rabin-transform-for-games.}

In Section~\ref{subsect-prelim:transition-systems}, we discussed some technical difficulties appearing when we wanted to define the "composition" of a "game" $\G$ and an "HD automaton": as the output of such operation, we would like to obtain a "game" in which "Eve" always chooses the transitions taken in the automaton, even if it is Adam who makes a move in the "game", which is not the case if $\G$ is an arbitrary game. Also, in Section~\ref{subsect-morp: hist-det-morphisms} we had to introduce "HD-for-games mappings" in order to formalise correct transformations of "games".
A similar difficulty appears in the context of the "ACD-HD-Rabin-transform"; we can see the "ACD-HD-Rabin-transform" of a "game" $\G$ as a "game" in which, at each moment, first, a move takes place in $\G$, and then a choice is made to update the current node in $\acd{\G}$. With the current definition of $\acdRabinTransform{\G}$, it is the player who makes the move in the game component who chooses how to update the node in $\acd{\G}$. This is potentially a problem, as in order to obtain an equivalent game we would like that "Eve" had full control to decide how to update the nodes in $\acd{\G}$, even when it was Adam who moved in the game component (we note that in Proposition~\ref{prop-ACD:correctness_ACD-HD-Rabin-transform} we did not claim that there is an "HD-for-games mapping" $\pp\colon \acdRabinTransform{\TS} \to \TS$).
In order to obtain a transformation working for games, we need to slightly modify the definition of the "ACD-HD-Rabin-transform". 

\AP  For a "Muller" "game" $\G$ "suitable for transformations", we define its "ACD-HD-Rabin-transform-for-games", written $\acdRabinTransformGFG{\G}$. The idea is simply to take from "Adam" the power to update the $\acd{\G}$-component of vertices. 
The update of this information is delayed of one transition, so it is "Eve" who makes the choice of how to move in the "ACD".
To do this, we need to introduce some additional vertices controlled by "Eve".
The formal details of this construction and the proof of correctness can be found in Appendix~\ref{sec:appendix-games-transformations}.

\begin{restatable}[Correctness of the ACD-HD-Rabin-transform-for-games]{proposition}{correctnessAcdRabinTransformGames}\label{prop-ACD:correctness_ACD-HD-Rabin-transform-games}
	Let $\G$ be a "Muller" "game" "suitable for transformations", and let $\acdRabinTransformGFG{\G}$ be its "ACD-HD-Rabin-transform-for-games". Then, there is an "HD-for-games mapping" $\pp\colon \acdRabinTransformGFG{\G} \to \TS$.
\end{restatable}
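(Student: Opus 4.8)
Proposition~\ref{prop-ACD:correctness_ACD-HD-Rabin-transform-games}: for a Muller game $\G$ suitable for transformations, there is an HD-for-games mapping $\pp\colon \acdRabinTransformGFG{\G} \to \G$.

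\begin{proof}[Proof sketch]
The plan is to follow the blueprint of Propositions~\ref{prop-ACD:correctness_ACD-parity_transform} and~\ref{prop-ACD:correctness_ACD-HD-Rabin-transform}, adding the game-theoretic refinement already used in Proposition~\ref{prop-morph:product-by-aut-induces-mapping}(3) to exploit that $\G$ is "suitable for transformations". Define $\pp\colon \acdRabinTransformGFG{\G}\to \G$ in the natural way: a vertex $(v,x)$ is sent to $v$, and each auxiliary vertex inserted by the construction on an edge $e = v\re{}v'$ of $\G$ is sent to the endpoint it is attached to; the transitions between auxiliary vertices are "uncoloured". First I would check that $\pp$ is a "weak morphism" preserving "initial states": the only point to verify is that no "cycle" of $\acdRabinTransformGFG{\G}$ becomes entirely uncoloured, which holds because every such cycle still traverses a transition carrying a node of $\acd{\G}$ (the node-update transitions, controlled by "Eve").

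Next, I would show that $\pp$ "preserves accepting runs". An infinite "run" $\rr$ of $\acdRabinTransformGFG{\G}$ eventually stays in one "SCC"; discarding the auxiliary vertices (which emit no colour), the "output" of $\rr$ is an infinite word over $\nodesAcd{\G}$, and the argument of Proposition~\ref{prop-ACD:correctness_ACD-HD-Rabin-transform} (the claims imported from Proposition~\ref{prop-ZT:correctness_ZT_parity} together with Lemma~\ref{lemma-acd:Rabin-acc-sequences-nodes-ACD}) yields a unique minimal node $n_\rr$ in the $\minf$ of this word, which is "round@@acd" if and only if $\minf(\ppRuns(\rr))$ is an "accepting@@cycle" "cycle". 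Hence $\rr$ accepting implies $\ppRuns(\rr)$ accepting.

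The core of the proof is to exhibit a "resolver" $(\rInit,r)$ "simulating $\pp$" that is moreover "sound for $\G$". As in Proposition~\ref{prop-ACD:correctness_ACD-HD-Rabin-transform}, the resolver relays the deterministic dynamics of the "ACD-parity-transform". While simulating a "play" $\rr' = e_0'e_1'\dots$ of $\G$, it maintains, for the current vertex $v$, a "leaf" $l$ of $\treeVertex{v}$ (so the matching vertex of $\acdRabinTransformGFG{\G}$ is $(v,\eta_v(l))$); on reading $e' = v\re{}v'$ it takes the forced game-move out of $(v,\eta_v(l))$ and then resolves the ensuing auxiliary "Eve-vertex" by going to $(v',\eta_{v'}(l'))$, where $l' = \jump_{\treeVertex{v'}}(l,\suppAcd(l,e'))$, emitting the node $\suppAcd(l,e')$. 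This is well defined as a resolver precisely because every vertex of $\acdRabinTransformGFG{\G}$ lying above an "Adam's vertex" of $\G$ leaves Adam no choice besides the game-move itself — the whole node-update is postponed to an auxiliary "Eve-vertex" — so the resolver only ever decides at vertices it controls. Soundness simulating $\pp$ is then a transcription of the corresponding part of Proposition~\ref{prop-ACD:correctness_ACD-HD-Rabin-transform}: up to the relabelling by the $\eta_v$'s and the one-step delay, $\rRuns(\rr')$ is the unique "run" of $\acdParityTransform{\G}$ above $\rr'$, so it is "accepting@@run" whenever $\rr'$ is, by correctness of the "ACD-parity-transform" (Proposition~\ref{prop-ACD:correctness_ACD-parity_transform}) together with Lemma~\ref{lemma-acd:morphism-Parity-Rabin}. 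For soundness for $\G$ in the sense of Definition~\ref{def:HD mapping-for-games}, one argues exactly as in Proposition~\ref{prop-morph:product-by-aut-induces-mapping}(3): above an "Adam's vertex" of $\G$ the morphism $\pp$ is "locally injective" (the only edge above $e'$ is the forced game-move), so Spoiler has no freedom, while above "Eve-vertices" and at auxiliary vertices Duplicator follows $r$; hence the only "run" "consistent with@@resolver" $(\rInit,r)$ over an "accepting@@run" play $\rr'$ is $\rRuns(\rr')$, which is accepting. Since $\pp$ preserves the "vertex-labelling" $\lPlayers$ by construction, $\pp$ is an "HD-for-games mapping".

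The main obstacle is purely a matter of matching the proof to the precise definition of $\acdRabinTransformGFG{\G}$: verifying that the auxiliary vertices are all controlled by "Eve" and genuinely carry the entire node-update (so that $\pp$ is "locally injective" above "Adam's vertices"), and that the one-transition delay is bookkept so that the node emitted and the "leaf" carried along agree step-by-step with those of the un-delayed "ACD-parity-transform". Once this correspondence is pinned down, both soundness statements reduce immediately to the results already proven for $\acdParityTransform{\cdot}$ and the Rabin condition of the "ACD".
\end{proof}
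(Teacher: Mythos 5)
Your proposal follows essentially the same route as the paper's proof: the natural projection, preservation of accepting runs via the Rabin-condition-on-nodes lemma, a resolver guided by the deterministic dynamics of $\acdParityTransform{\G}$, and soundness-for-games from the fact that moves above "Adam's" vertices are forced. The one bookkeeping point you flag is resolved exactly as you anticipate: the composed update emits only the first of the two nodes, but the dropped node is a "descendant" of the emitted one (this is the paper's Remark~\ref{rmk-app:bounded-edges}), so the minimal node occurring infinitely often — and hence acceptance — is unchanged.
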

\begin{corollary}
	Let $\G$ be a "Muller" "game" "suitable for transformations", and let $\acdRabinTransformGFG{\G}$ be its "ACD-HD-Rabin-transform-for-games". Then, "Eve's" "full winning region" in $\G$ is the projection of her "full winning region" in $\acdRabinTransformGFG{\G}$.
\end{corollary}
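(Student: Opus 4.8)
The plan is to derive this corollary immediately from Proposition~\ref{prop-ACD:correctness_ACD-HD-Rabin-transform-games} together with the general transfer result for "HD-for-games mappings" established in Section~\ref{subsec-morph:semantic-properties}. Proposition~\ref{prop-ACD:correctness_ACD-HD-Rabin-transform-games} provides an "HD-for-games mapping" $\pp\colon \acdRabinTransformGFG{\G} \to \G$, and Corollary~\ref{cor-morph:HD-map-preserve-full-WR} states that, as soon as such a mapping exists between two "games" with "prefix-independent" "acceptance sets" and only "accessible" states, "Eve's" "full winning region" in the target "game" equals the $\pp$-image of her "full winning region" in the source. So all that is left is to verify that these hypotheses hold in the present situation.

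First I would note that the "acceptance set" of $\G$ is a "Muller language", hence "prefix-independent", and that $\acdRabinTransformGFG{\G}$ uses a "Rabin" "acceptance condition" by construction, so its "acceptance set" is a "Rabin language", again "prefix-independent". Second, for the accessibility requirement I would pass to the "accessible part" of each "game": every "HD mapping" is a "locally surjective" "weak morphism" (Remark~\ref{rmk-morph: HD-is-loc.surjective}), so by Lemma~\ref{lemma-morph:locSurjMorph_OntoAccessible} the map $\pp$ restricts to a surjection between the two accessible parts, and this restriction is again an "HD-for-games mapping" (restrict the "resolver" accordingly); moreover, deleting inaccessible vertices affects neither "Eve's" winning outcome from any surviving vertex nor, a fortiori, her "full winning region". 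Corollary~\ref{cor-morph:HD-map-preserve-full-WR} then applies to the restricted "games" and gives exactly that "Eve's" "full winning region" in $\G$ is the $\pp$-projection of her "full winning region" in $\acdRabinTransformGFG{\G}$, which is the statement.

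The actual substance — and hence the main obstacle — lies not in the corollary but in Proposition~\ref{prop-ACD:correctness_ACD-HD-Rabin-transform-games} itself, whose proof is deferred to the appendix. There one must design the modified transformation $\acdRabinTransformGFG{\G}$ in which, after each move in the $\G$-component, it is always "Eve" who updates the current node of $\acd{\G}$ — achieved by inserting auxiliary "Eve"-vertices and delaying the node update by one transition — and then exhibit a "resolver" that is not merely sound (as would suffice for the plain "ACD-HD-Rabin-transform" of Proposition~\ref{prop-ACD:correctness_ACD-HD-Rabin-transform}) but "sound for $\G$": it must simulate every "play" of $\G$ inside $\acdRabinTransformGFG{\G}$ while letting "Adam" resolve precisely the moves he controls in $\G$. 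At the level of the corollary the only remaining bookkeeping is to ensure that $\pp$ preserves the "vertex-labelling" into "Eve"- and "Adam"-vertices, so that the term "HD-for-games mapping" is meaningful — this is part of the conclusion of Proposition~\ref{prop-ACD:correctness_ACD-HD-Rabin-transform-games} — and that the auxiliary vertices of $\acdRabinTransformGFG{\G}$ are "accessible", which is clear since each is reached in a single transition from a vertex of the form $(v,x)$.
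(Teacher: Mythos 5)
Your proposal is correct and follows exactly the route the paper intends: the corollary is stated as an immediate consequence of Proposition~\ref{prop-ACD:correctness_ACD-HD-Rabin-transform-games} combined with Corollary~\ref{cor-morph:HD-map-preserve-full-WR}, using prefix-independence of the Muller and Rabin acceptance sets. Your extra care in restricting to the accessible parts to meet the hypotheses of Corollary~\ref{cor-morph:HD-map-preserve-full-WR} is a legitimate (and slightly more scrupulous) handling of a point the paper leaves implicit.
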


\subsection{Optimality of the ACD-transforms}\label{subsec-acd-parity: optimality}
We now state and prove the optimality of both the "ACD-parity-transform" (Theorems~\ref{thm-acd:optimality-priorities_ACD-parity_transform} and~\ref{thm-acd:optimality-size_ACD-parity_transform}) and the "ACD-HD-Rabin-transform" (Theorem~\ref{thm-acd:optimality_ACD-HD-Rabin-transform}).
The proofs of these results will use the optimality of the "automata" based on the "Zielonka tree" (c.f. Section~\ref{section:zielonka-tree}) as a black-box, which will allow us to prove the optimality of both transformations at the same time.
The key idea is that if $\pp\colon \TS \to \TS'$ is an "HD mapping", we can see $\TS$ as an "HD automaton" "recognising" the "accepting runs" of $\TS'$. We can then use "local Muller conditions" at vertices of $\TS'$ to reduce the problem to "automata" "recognising" "Muller languages".

\subsubsection{Statement of the optimality results}
We state the optimality of the transformations based on the "ACD". All the results below apply to "labelled transition systems" too.  For technical reasons, we need to suppose that all the states of "transition systems" under consideration are "accessible", an hypothesis that can always be made without loss of generality. We recall that "HD mappings" are in particular "locally bijective morphisms" and "HD-for-games mappings" (c.f. Figure~\ref{fig-morph:types-morphisms}).

\begin{theorem}\label{thm-acd:optimality-priorities_ACD-parity_transform}
	Let $\TS$ be a  "Muller" "TS" whose states are "accessible" and let $\widetilde{\TS}$ be a  "parity" "TS". If $\widetilde{\TS}$ admits an "HD mapping" $\pp\colon \widetilde{\TS} \to \TS$, then, its "acceptance condition" uses at least as many colours as that of $\acdParityTransform{\TS}$.
\end{theorem}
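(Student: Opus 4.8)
The plan is to reduce the statement to the already-established optimality of the "ZT-parity-automaton" on the number of colours (Theorem~\ref{thm-zt:optimality_ZTparity-priorities} together with Proposition~\ref{prop-prelim:parity_index-HDAutomata}), by localising everything at a single well-chosen vertex of $\TS$. Recall that the number of colours used by $\acdParityTransform{\TS}$ is $\maxparityAcd{\TS} - \minparityAcd{\TS} + 1$, and this quantity is governed by a "tree of alternating subcycles" $\altTree{\ell_i}$ of maximal "height" in $\acd{\TS}$, together with the parity bookkeeping encoded by whether $\acd{\TS}$ is "positive@@acd", "negative@@acd", or "equidistant@@acd". First I would fix such a tree $\altTree{\ell_i}$ of maximal height $h$, take a "branch" of length $h$ in it, and pick a "recurrent" vertex $v$ of $\TS$ lying in the label of the "leaf" of that branch (such a $v$ exists since every node label is a nonempty "cycle"). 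By Lemma~\ref{lemma-acd:tree_q_ZT_local_Muller}, the "local subtree at $v$", $\treeVertex{v}$, is the "Zielonka tree" of the "local Muller language" $\localMuller{v}{\TS}$; moreover, since $v$ lies on a full branch of the maximal-height tree, $\treeVertex{v}$ itself has height $h$, and its root is "round@@acd" (resp. "square@@acd") exactly when $\ell_i$ is an "accepting@@cycle" (resp. "rejecting@@cycle"). Hence $\maxparityZ{\localMuller{v}{\TS}} - \minparityZ{\localMuller{v}{\TS}} + 1$ equals the number of colours of $\acdParityTransform{\TS}$ — this is a direct unwinding of the definitions of $\parityNodesAcd$, $\minparityZ{}$, $\maxparityZ{}$ and Remark~\ref{rmk:parity_levels_ZT}, once one checks the case of an "equidistant@@acd" "ACD" produces the same count.

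The second step is to extract, from the "HD mapping" $\pp\colon \widetilde{\TS}\to\TS$ with $\widetilde{\TS}$ a "parity" "TS", an "HD@@aut" "parity" "automaton" for $\Muller{\localMuller{v}{\TS}}$ using no more colours than $\widetilde{\TS}$. The idea: consider the set $\widetilde{V}_v = \inv{\pp}(v)$ of preimages of $v$ in $\widetilde{\TS}$, and build an "automaton" over the alphabet $\cyclesState{\TS}{v}$ by letting a letter $\ell\in\cyclesState{\TS}{v}$ — which corresponds to a "cycle over $v$" in $\TS$ — be read by any "run" segment in $\widetilde{\TS}$ from a state in $\widetilde{V}_v$ back to $\widetilde{V}_v$ whose $\ppRuns$-image is a cyclic path realising $\ell$. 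The "acceptance condition" is inherited from $\widetilde{\TS}$ (still a "parity" condition over the same colour set, taking the minimum colour along the segment). A word $\ell_0\ell_1\ell_2\cdots$ with $\bigcup_{j}\ell_j$ eventually stabilising is "accepted@@word" iff the corresponding concatenated "run" in $\widetilde{\TS}$ is "accepting@@run", which by "preservation of accepting runs" and "soundness@@morph" of the "resolver simulating $\pp$" happens iff $\bigcup \minf(\ell_j)$ is an "accepting@@cycle", i.e. iff the word is in $\Muller{\localMuller{v}{\TS}}$. History-determinism of this automaton comes from composing the "resolver simulating $\pp$" with a resolver for $\widetilde{\TS}$ (using $\widetilde{\TS}$'s own resolver if it is "HD@@aut"; but in fact since we only need a resolver for the new automaton, the mapping's resolver already tells us, letter by letter, which segment to follow). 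Applying Proposition~\ref{prop-prelim:parity_index-HDAutomata} (or Theorem~\ref{thm-zt:optimality_ZTparity-priorities}) to this "HD@@aut" "parity" "automaton" recognising $\Muller{\localMuller{v}{\TS}}$, whose "parity index" is $[\minparityZ{\localMuller{v}{\TS}},\maxparityZ{\localMuller{v}{\TS}}]$, we conclude that it — and hence $\widetilde{\TS}$ — uses at least $\maxparityZ{\localMuller{v}{\TS}} - \minparityZ{\localMuller{v}{\TS}} + 1$ colours, which is exactly the number of colours of $\acdParityTransform{\TS}$.

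The main obstacle I expect is making the "localisation automaton" construction fully rigorous: one must check that reading letters of $\cyclesState{\TS}{v}$ by "run" segments between preimages of $v$ is well-defined (every "cycle over $v$" is realised by some finite path, and any finite path between preimages of $v$ projects to a "cycle over $v$"), that finiteness is not an issue (the letter alphabet $\cyclesState{\TS}{v}$ is finite since $\TS$ is finite, and segments can be taken of bounded length or the automaton allowed to be infinite and then pruned — but actually a cleaner route is to let the automaton's states be $\widetilde{V}_v$ with an $\ee$-free acceptance only counting the minimal colour, which requires care because a single letter may correspond to a path producing several colours), and most delicately that the "prefix-independence" arguments (Lemma~\ref{lemma-prelims:prefix-indep-automaton} and the fact that "Muller languages" and "parity conditions" are "prefix-independent") let us ignore the transient behaviour and the choice of "initial states". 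A subsidiary subtlety is the "equidistant@@acd" and "negative@@acd" bookkeeping: one has to verify that the parity of the minimal colour in $\acdParityTransform{\TS}$ matches the parity forced on $\widetilde{\TS}$ by Proposition~\ref{prop-prelim:parity_index-HDAutomata} when the height bound is tight, i.e. that an "HD@@aut" "parity" "TS" mapping onto a "TS" with both "positive@@tree" and "negative@@tree" maximal trees needs an extra colour. This follows by running the argument at two vertices $v^+$ and $v^-$ (one in a "positive@@tree" maximal tree, one in a "negative@@tree" one) and combining the two lower bounds — exactly mirroring how the $\WeakIndex{d}$ case of the Flower Lemma~\ref{lemma:flower-lemma} is handled — but it needs to be spelled out. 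Once these technical points are in place, the theorem follows.
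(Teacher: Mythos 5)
Your proof is correct, but it takes a genuinely different route from the paper's. The paper argues \emph{globally}: it views $\widetilde{\TS}$ as an automaton $\autMorphism{\pp}$ over the alphabet of edges of $\TS$, shows (Lemma~\ref{lemma-acd:morphism_HD-iff-automaton-HD}) that this automaton is history-deterministic exactly when $\pp$ is an HD mapping and that it recognises the language $\LangTS{\TS}$ of accepting runs of $\TS$, computes the parity index of that language from the heights and polarities of the trees of the ACD via the Flower Lemma (Lemmas~\ref{lemma-acd:branch-ACD-induces-flower} and~\ref{lemma-acd:parity index-muller-TS}), and concludes by Proposition~\ref{prop-prelim:parity_index-HDAutomata}; the equidistant case is absorbed in one shot by the $\WeakIndex{d}$ clause of that proposition, so no two-vertex case split is needed. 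Your argument instead \emph{localises} at a vertex $v$ on a maximal branch and extracts from $\inv{\pp}(v)$ an HD parity automaton for $\localMuller{v}{\TS}$ --- this is precisely the cycle-preimage automaton $\autCyclePreimage{v}$ together with Lemmas~\ref{lemma-acd:same_kind_condition-automaton-precycles} and~\ref{lemma-acd:automaton-precycles}, machinery the paper introduces only for the size bounds (Theorems~\ref{thm-acd:optimality-size_ACD-parity_transform} and~\ref{thm-acd:optimality_ACD-HD-Rabin-transform}). The technical points you flag (letters read as path segments between preimages of $v$, a single letter producing a set of colours collapsed to its minimum, prefix-independence to ignore initialisation) are exactly what those lemmas settle, so your route is viable. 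One caveat: your first paragraph's claim that a single vertex already yields the full colour count once the equidistant case is checked is not right as stated --- for an equidistant ACD a single vertex accounts for one colour fewer than $\acdParityTransform{\TS}$ uses --- but you correctly repair this in your last paragraph with the two-vertex parity-clash argument, which does go through since both local automata draw their colours from the colour set of $\widetilde{\TS}$. In terms of trade-offs, the paper's global argument is shorter and handles all polarities uniformly, while yours makes the colour-count theorem follow the same template as the size theorems and pinpoints the vertex at which the lower bound is realised.
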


\begin{theorem}\label{thm-acd:optimality-size_ACD-parity_transform}
	Let $\TS$ be a  "Muller" "TS" whose states are "accessible" and let $\widetilde{\TS}$ be a  "parity" "TS". If $\widetilde{\TS}$ admits an "HD mapping" $\pp\colon \widetilde{\TS} \to \TS$, then, 
	$|\acdParityTransform{\TS}| \leq \size{\widetilde{\TS}}$.
\end{theorem}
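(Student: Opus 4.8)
The plan is to reduce the lower bound on $\size{\widetilde{\TS}}$ to the minimality of the "ZT-parity-automaton" (Theorem~\ref{thm-zt:strong_optimality_ZTparity}) applied locally, vertex by vertex, on the "local Muller languages" of $\TS$. First I would recall that, by Remark~\ref{rmk-acd:size_ACD-parity-transformation}, $|\acdParityTransform{\TS}| = \sum_{v\in\Vrec} |\leaves(\treeVertex v)| + |\Vtrans|$, and that by Lemma~\ref{lemma-acd:tree_q_ZT_local_Muller} the "tree" $\treeVertex v$ is the "Zielonka tree" of the "local Muller language" $\localMuller v{\TS}$, so $|\leaves(\treeVertex v)| = |\zielonkaAutomaton{\localMuller v{\TS}}|$. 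Thus it suffices to exhibit, for each "recurrent" vertex $v$ of $\TS$, a family $S_v \subseteq \widetilde V$ of states of $\widetilde{\TS}$ with $|S_v| \geq |\zielonkaAutomaton{\localMuller v{\TS}}|$, and to exhibit a set $T$ of $|\Vtrans|$ further states, all these sets being pairwise disjoint. The transient part is easy: since $\pp$ is "locally surjective" (being an "HD mapping") and onto the "accessible part" of $\TS$ (Lemma~\ref{lemma-morph:locSurjMorph_OntoAccessible}, and all states of $\TS$ are "accessible"), every "transient" vertex of $\TS$ has a nonempty fibre $\inv{\pp_V}(v)$; picking one state from each such fibre gives $T$, disjoint from everything else because $\pp$ respects the vertex map.

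The core is the "recurrent" case. Fix a "recurrent" vertex $v$ of $\TS$, lying in an "SCC" with edge set $\ell_i$. I would build an "HD automaton" over the alphabet $\cyclesState{\TS}{v}$ out of (a suitable restriction of) $\widetilde{\TS}$ that "recognises" the "local Muller language" $\Muller{\localMuller v{\TS}}$: its letters are "cycles" over $v$, a letter $\ell$ is ``read'' by choosing a finite run in $\widetilde{\TS}$ that projects under $\pp$ to a closed walk around $v$ realising exactly the edge set $\ell$, and a run is "accepting@@run" exactly when the union of the infinitely-often-seen letters forms an "accepting@@cycle". Since $\pp$ is an "HD mapping", it comes with a "sound@@morph" "resolver simulating" $\pp$; this resolver is precisely what lets us resolve the non-determinism of this new automaton (it simulates runs of $\TS$ in $\widetilde{\TS}$ while preserving "acceptance@@run" of "runs", by Lemma~\ref{lemma-morph:facts_HD_morphisms}), so the automaton is "history-deterministic", and because $\widetilde{\TS}$ is a "parity" "TS" it is an "HD@@aut" "parity" "automaton". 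Moreover its state space can be taken to be a subset $S_v$ of $\widetilde V$, namely the states of $\widetilde{\TS}$ reachable (from a chosen fibre point of $v$, using the resolver) along runs projecting into the "SCC" $\ell_i$ and returning to the fibre of $v$ — one should be slightly careful to define $S_v$ as exactly the set of states appearing on such runs, so that it is an honest subautomaton. Applying Theorem~\ref{thm-zt:strong_optimality_ZTparity} to this "HD@@aut" "parity" "automaton" for $\Muller{\localMuller v{\TS}}$ gives $|S_v| \geq |\zielonkaAutomaton{\localMuller v{\TS}}| = |\leaves(\treeVertex v)|$.

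What remains, and what I expect to be the main obstacle, is disjointness of the sets $\{S_v\}_{v\in\Vrec}$ (together with $T$). Two recurrent vertices in different "SCC"s cause no trouble, since a run of $\widetilde{\TS}$ cannot visit two distinct "SCC"s of $\TS$ infinitely often and the $S_v$ are built from runs confined to one "SCC"; the delicate case is two distinct vertices $v, v'$ in the same "SCC" $\ell_i$. Here one cannot simply intersect fibres, because a single state of $\widetilde{\TS}$ may well project to $v$ and also lie on a walk around $v'$. The fix is to be more restrictive in the definition of $S_v$: rather than taking all reachable states, one should follow the strategy already used in the optimality proof of the "ACD-parity-transform" over "locally bijective morphisms" and refine it to "HD mappings" — associate to $v$ a single ``base'' state in $\inv{\pp}(v)$, and take $S_v$ to be the states visited strictly between consecutive visits to that base state along resolver-guided runs confined to $\ell_i$, assigned to $v$ only when $v$ is the *first* vertex of the "SCC" visited. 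Concretely, pick any linear ordering of the vertices of $\ell_i$; process a run of $\widetilde{\TS}$ staying in $\ell_i$, and cut it at the first return to the fibre of the minimal vertex it revisits infinitely often — this assigns each ``eventual behaviour'' to a unique $v$, and the corresponding state sets are then genuinely disjoint. One then re-runs the "local" argument of the preceding paragraph with these trimmed $S_v$, checking that the trimmed subautomaton still "recognises" $\Muller{\localMuller v{\TS}}$ and is still "history-deterministic" (this uses that the resolver, being "sound@@morph", still correctly resolves choices on the trimmed structure). Summing, $\size{\widetilde{\TS}} \geq |\Vtrans| + \sum_{v\in\Vrec}|S_v| \geq |\Vtrans| + \sum_{v\in\Vrec}|\leaves(\treeVertex v)| = |\acdParityTransform{\TS}|$, which is the claim.
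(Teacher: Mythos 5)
Your high-level strategy is the paper's: localise at each vertex $v$, extract from $\widetilde{\TS}$ a history-deterministic parity automaton for the local Muller language $\localMuller{v}{\TS}$, and invoke Theorem~\ref{thm-zt:strong_optimality_ZTparity} together with the fact that $\treeVertex{v}$ is the Zielonka tree of $\localMuller{v}{\TS}$ (Lemma~\ref{lemma-acd:tree_q_ZT_local_Muller}). But there is a genuine gap exactly at the point you flag as the main obstacle, and your proposed fix does not close it. The missing idea is the choice of state space for the local automaton. You take $S_v$ to be \emph{all} states of $\widetilde{\TS}$ appearing on runs that project to closed walks around $v$, which is what forces $S_v$ and $S_{v'}$ to overlap for distinct $v,v'$ in the same SCC. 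The paper's automaton $\autCyclePreimage{v}$ instead has state set exactly the fibre $\inv{\pp}(v)$ and reads each letter $\ell\in\cyclesState{\TS}{v}$ as a \emph{single} macro-transition between two fibre elements: one fixes a closed walk $\rr_\ell$ around $v$ realising $\ell$, puts a transition $q_1\to q_2$ labelled $\ell$ whenever some finite path of $\widetilde{\TS}$ from $q_1$ to $q_2$ projects to $\rr_\ell$, and outputs the \emph{set} $C$ of colours seen along that path; the acceptance condition is the induced Muller condition on sets of colour-sets, which one must check is again a parity condition when $\widetilde{\TS}$ is parity (Lemma~\ref{lemma-acd:same_kind_condition-automaton-precycles} --- a step you gloss over with ``because $\widetilde{\TS}$ is a parity TS''). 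The intermediate states of the realising paths are simply not states of this automaton. With this definition the sets $\inv{\pp}(v)$ are pairwise disjoint for free since $\pp_V$ is a function, $\sum_{v}|\inv{\pp}(v)| = |\widetilde{V}|$, and surjectivity of $\pp$ onto the accessible part (Lemma~\ref{lemma-morph:locSurjMorph_OntoAccessible}) covers the transient vertices; no trimming is needed.

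Your trimming device, as stated, does not repair your version. It assigns \emph{runs} (``eventual behaviours'') to vertices, not states: a single state of $\widetilde{\TS}$ lies on walks around many different vertices of the same SCC, so ``cut at the first return to the fibre of the minimal vertex revisited infinitely often'' does not induce a partition of $\widetilde{V}$ into sets $S_v$. And even if a partition were forced, deleting from $S_v$ the states assigned to other vertices would in general destroy the property that the automaton on $S_v$ recognises all of $\localMuller{v}{\TS}$: realising a cycle over $v$ necessarily traverses the fibres of the other vertices lying on that cycle, so a trimmed subautomaton cannot read all letters, and Theorem~\ref{thm-zt:strong_optimality_ZTparity} no longer applies to it. The correct resolution is not to trim after the fact but never to admit those intermediate states into the local automaton in the first place.
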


\begin{theorem}\label{thm-acd:optimality_ACD-HD-Rabin-transform}
	Let $\TS$ be a  "Muller" "TS" whose states are "accessible" and let $\widetilde{\TS}$ be a  "Rabin" "TS". If $\widetilde{\TS}$ admits an "HD mapping" $\pp\colon \widetilde{\TS} \to \TS$, then,
	$|\acdRabinTransform{\TS}| \leq \size{\widetilde{\TS}}$.
\end{theorem}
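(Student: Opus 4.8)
The plan is to reduce the optimality of the "ACD-HD-Rabin-transform" to the optimality of the "ZT-HD-Rabin-automaton" (Theorem~\ref{thm-zt:optimality_ZT-HD-Rabin}), applied locally at each "recurrent" vertex of $\TS$. Recall from Remark~\ref{rmk-acd:size_acd-RabinHD-transformation} that $|\acdRabinTransform{\TS}| = \sum_{v\in \Vrec}\memTree{\treeVertex{v}} + |\Vtrans|$, and that by Lemma~\ref{lemma-acd:tree_q_ZT_local_Muller} the "local subtree at $v$" $\treeVertex{v}$ is the "Zielonka tree" of the "local Muller language" $\localMuller{v}{\TS}$, hence $\memTree{\treeVertex{v}} = |\zielonkaHDAutomaton{\localMuller{v}{\TS}}|$. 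So it suffices to exhibit, for each "recurrent" vertex $v$ of $\TS$, an "HD@@aut" "Rabin" "automaton" $\A_v$ "recognising" $\Muller{\localMuller{v}{\TS}}$ such that the state sets of the $\A_v$ are pairwise disjoint subsets of the states of $\widetilde{\TS}$, and moreover that no "transient" vertex of $\TS$ lies in the image of any $\A_v$ under $\pp$ (so that the $|\Vtrans|$ term is also covered). Then $|\zielonkaHDAutomaton{\localMuller{v}{\TS}}| \leq |\A_v|$ by Theorem~\ref{thm-zt:optimality_ZT-HD-Rabin}, and summing gives $|\acdRabinTransform{\TS}| \leq \size{\widetilde{\TS}}$.

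First I would set up the reduction. Let $(\rInit, r)$ be a "sound@@morph" "resolver simulating" $\pp$ (Definition~\ref{def:HD_mapping}), and as in Remark~\ref{rmk-prelim:restriction-to-accessible} restrict $\widetilde{\TS}$ to states "reachable using" this resolver. Fix a "recurrent" vertex $v$ of $\TS$; it lies in a unique "SCC" of $\TS$, whose edge set is the maximal "cycle" $\ell_i$ labelling the "root" of $\altTree{\ell_i}$. The idea is to build from $\widetilde{\TS}$, the resolver, and the structure near $v$ an automaton $\A_v$ whose input alphabet is $\cyclesState{\TS}{v}$: reading a "cycle" $\ell\in \cyclesState{\TS}{v}$ means feeding $\widetilde{\TS}$ (via the resolver) a finite "run@@transSys" in $\TS$ that starts and ends at $v$ and whose set of edges taken is exactly $\ell$. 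An "accepting run" of $\widetilde{\TS}$ corresponds, by Lemma~\ref{lemma-morph:facts_HD_morphisms}, to an "accepting run" of $\TS$, whose $\minf$ is the union of the cycles $\ell$ seen infinitely often; this union is an "accepting@@cycle" "cycle" iff the input sequence lies in $\Muller{\localMuller{v}{\TS}}$. Thus $\A_v$, taken with state set the relevant fragment of $\widetilde{\TS}$'s states, "recognises" $\Muller{\localMuller{v}{\TS}}$, it is "HD@@aut" because the resolver $r$ guides the "non-deterministic" choices, and its "acceptance condition" is still a "Rabin" condition (the "Rabin pairs" of $\widetilde{\TS}$ restricted to the surviving edges). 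The disjointness of the $\A_v$ for distinct $v$ — and the absence of $\TS$'s "transient" vertices from their images — should follow by choosing, for each $v$, as states of $\A_v$ only those states $q$ of $\widetilde{\TS}$ with $\pp(q)=v$: since $\pp$ is a function, these fibres are automatically disjoint across distinct $v$ and disjoint from $\pp^{-1}(\mathrm{transient\ vertices})$. One must check that restricting to the fibre $\pp^{-1}(v)$ still yields an automaton recognising the right local language; this uses that $\pp$ is "locally surjective" (Remark~\ref{rmk-morph: HD-is-loc.surjective}) — every edge of $\TS$ out of $v$ can be lifted — together with the resolver building runs that return to the fibre over $v$ whenever the run in $\TS$ returns to $v$.

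The main obstacle will be the bookkeeping in showing that the fibre $\pp^{-1}(v) \subseteq \widetilde{\TS}$, equipped with transitions induced by ``chunks of runs returning to $v$'' and resolved by $r$, genuinely forms a "history-deterministic" "Rabin" "automaton" for $\Muller{\localMuller{v}{\TS}}$: one needs a clean argument that every word over $\cyclesState{\TS}{v}$ corresponds to a well-defined input (an infinite "run@@transSys" in $\TS$ visiting $v$ infinitely often and realizing the prescribed cycles), that the resolver stays inside $\pp^{-1}(v)$ at the ``checkpoints'', and that the $\minf$ of the realized run in $\TS$ is exactly the union of the cycles seen infinitely often — independently of how the connecting path-segments are chosen. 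This is precisely the place where "prefix-independence" of "Muller languages" and the flexibility in padding runs (as used in the proofs of Propositions~\ref{prop-ZT:correctness_ZT_parity} and~\ref{prop-ACD:correctness_ACD-parity_transform}) are needed. Once this local claim is established, the theorem follows by the summation displayed above, exactly paralleling the induction in the proof of Theorem~\ref{thm-zt:weak_optimality_ZTparity} but now applied per-vertex rather than per-subtree, with the global structure provided by the "ACD" instead of a single "Zielonka tree".
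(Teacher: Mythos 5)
Your proposal is correct and follows essentially the same route as the paper: the paper's proof defines, for each recurrent vertex $v$, the "cycle-preimage-automaton" $\autCyclePreimage{v}$ on the fibre $\inv{\pp}(v)$ with input alphabet $\cyclesState{\TS}{v}$, shows it is a "history-deterministic" "Rabin" automaton recognising $\localMuller{v}{\TS}$ (Lemmas~\ref{lemma-acd:same_kind_condition-automaton-precycles} and~\ref{lemma-acd:automaton-precycles}), and concludes via Theorem~\ref{thm-zt:optimality_ZT-HD-Rabin}, Lemma~\ref{lemma-acd:tree_q_ZT_local_Muller}, the size formula of Remark~\ref{rmk-acd:size_acd-RabinHD-transformation}, and surjectivity of $\pp$ (Lemma~\ref{lemma-morph:locSurjMorph_OntoAccessible}) for the transient-vertex term. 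The "main obstacle" you flag is exactly the content of Lemma~\ref{lemma-acd:automaton-precycles}, which the paper resolves by fixing one representative path $\rr_{\ell}$ per cycle $\ell$ and letting the resolver of $\pp$ induce the resolver of the local automaton.
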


We obtain an analogous optimality result for the "ACD-HD-Rabin-transform-for-games". In this case, the bound is not tight due to the additional vertices that are added to $\acdRabinTransformGFG{\G}$ (see Appendix~\ref{sec:appendix-games-transformations} for details).
\begin{restatable}{corollary}{optimalityAcdRabinTransformGames}
	Let $\G$ be a "Muller" "game" "suitable for transformations" whose states are "accessible" and let $\widetilde{\G}$ be a  "Rabin" "game". If $\widetilde{\G}$ admits an "HD-for-games mapping" $\pp\colon \widetilde{\G} \to \G$, then,
	$|\acdRabinTransformGFG{\G}| \leq 2\size{\widetilde{\G}}$.
\end{restatable}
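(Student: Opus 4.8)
The corollary will follow by assembling two facts, one cited and one bookkeeping. The first is that an "HD-for-games mapping" is, by Definition~\ref{def:HD mapping-for-games}, in particular an "HD mapping"; since a "game" is a "transition system" (and "Rabin" and "Muller" games are "Rabin" and "Muller" "transition systems" in the sense required), Theorem~\ref{thm-acd:optimality_ACD-HD-Rabin-transform} applies verbatim to $\pp\colon \widetilde{\G}\to \G$ and yields $|\acdRabinTransform{\G}|\leq \size{\widetilde{\G}}$. (The hypotheses line up: $\G$ is a "Muller" "TS" whose states are "accessible"; if $\widetilde{\G}$ has inaccessible states we may delete them, only decreasing $\size{\widetilde{\G}}$.) The second fact is a size comparison between the for-games variant and the ordinary one: from the construction of $\acdRabinTransformGFG{\G}$ in Appendix~\ref{sec:appendix-games-transformations}, this "TS" is obtained from $\acdRabinTransform{\G}$ by adjoining the intermediate "Eve"-controlled vertices $\VAsucc$ whose sole purpose is to defer by one step the update of the $\acd{\G}$-component, so that "Eve" — rather than "Adam" — resolves it after an "Adam" move. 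Thus $|\acdRabinTransformGFG{\G}| = |\acdRabinTransform{\G}| + |\VAsucc|$, and the plan is to bound $|\VAsucc|$.

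To bound $|\VAsucc|$ I would use that $\G$ is "suitable for transformations": every edge $e=v\to v'$ with $v\in\VAdam$ is the unique edge into $v'\in\VEve$, so such edges are in bijection with a subset of $\VEve$, hence their number is at most $\size{\G}$. Since each intermediate vertex of $\VAsucc$ is attached to one such edge (carrying the information needed to perform the deferred "ACD"-update), we get $|\VAsucc|\leq \size{\G}$. Finally, $\pp$ is an "HD mapping" hence a "locally surjective" "weak morphism" (Remark~\ref{rmk-morph: HD-is-loc.surjective}), and all states of $\G$ are "accessible", so by Lemma~\ref{lemma-morph:locSurjMorph_OntoAccessible} $\pp$ is "surjective" and $\size{\G}\leq \size{\widetilde{\G}}$. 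Putting the pieces together:
\[ |\acdRabinTransformGFG{\G}| \;=\; |\acdRabinTransform{\G}| + |\VAsucc| \;\leq\; \size{\widetilde{\G}} + \size{\G} \;\leq\; 2\,\size{\widetilde{\G}}. \]

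The first fact above is immediate once one observes that the theorem is phrased at the level of "transition systems"; the genuinely new work is the second fact, i.e.\ pinning down exactly how many vertices the for-games construction adds and verifying that "suitability for transformations" forces this overhead to be at most $\size{\G}$ (equivalently, at most $|\acdRabinTransform{\G}|$). This is the step I expect to be the main obstacle — not a deep one, but the one requiring a careful reading of the appendix construction of $\acdRabinTransformGFG{\G}$, which is precisely why the bound is not tight, in contrast with Theorem~\ref{thm-acd:optimality_ACD-HD-Rabin-transform}. The correctness side (that $\acdRabinTransformGFG{\G}$ does admit an "HD-for-games mapping" to $\G$) is not needed here, being already recorded in Proposition~\ref{prop-ACD:correctness_ACD-HD-Rabin-transform-games}.
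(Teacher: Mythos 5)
Your first step is fine and matches the paper: an HD-for-games mapping is in particular an HD mapping, so Theorem~\ref{thm-acd:optimality_ACD-HD-Rabin-transform} applies and gives $|\acdRabinTransform{\G}|\leq\size{\widetilde{\G}}$. The gap is in the bookkeeping step. The construction of $\acdRabinTransformGFG{\G}$ does \emph{not} adjoin one fresh intermediate vertex per A-successor: the set $\VAsucc$ consists of vertices already present in $\G$ (their existence is exactly what suitability for transformations provides), and what the for-games variant changes is the \emph{multiplicity} with which each $v\in\VAsucc$ is copied. Concretely,
\[ |\acdRabinTransformGFG{\G}| \;=\; \sum_{v\in \Vnormal}\memTree{\treeVertex{v}} \;+\; \sum_{v\in \VAsucc}\memTree{\treeVertex{\Apred(v)}},
\]
whereas $|\acdRabinTransform{\G}|=\sum_{v\in V}\memTree{\treeVertex{v}}$. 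The overhead is therefore $\sum_{v\in\VAsucc}\bigl(\memTree{\treeVertex{\Apred(v)}}-\memTree{\treeVertex{v}}\bigr)$, not $|\VAsucc|$, and it is not controlled by $\size{\G}$: each summand can be as large as $\memTree{\treeVertex{\Apred(v)}}-1$, which is unbounded. Hence your identity $|\acdRabinTransformGFG{\G}|=|\acdRabinTransform{\G}|+|\VAsucc|$ is false, and the chain ending in $\size{\widetilde{\G}}+\size{\G}$ does not go through.

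The paper's argument instead compares the two parts of $\acdRabinTransformGFG{\G}$ with each other: the vertices over $\Vnormal$ are exactly those of $\acdRabinTransform{\G}$ over $\Vnormal$, while each vertex $(v,x)$ with $v\in\VAsucc$ is matched, via $(v,x)\mapsto(\Apred(v),x)$, with a vertex over $\Vnormal$ (each A-successor having a unique predecessor, which is an Adam vertex and hence lies in $\Vnormal$). This yields $|\acdRabinTransformGFG{\G}|\leq 2\,|\{(v,x)\mid v\in\Vnormal\}|\leq 2\,|\acdRabinTransform{\G}|\leq 2\,\size{\widetilde{\G}}$. The factor $2$ thus comes from duplicating the $\Vnormal$-part of the ordinary transform, not from adding $\size{\G}$ extra vertices; your surjectivity observation ($\size{\G}\leq\size{\widetilde{\G}}$) is correct but is not the ingredient needed here.
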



\subsubsection{Discussion: Limits on the applicability of HD automata and preservation of minimality}

Before presenting the proofs of the optimality theorems, we discuss some consequences and limitations of our results.

\subparagraph*{Difficulty of finding succinct history-deterministic automata.}
As mentioned in the introduction, several years had to pass after the introduction of "history-deterministic automata"~\cite{HP06}  before finding "HD automata" that were actually smaller than equivalent "deterministic" ones~\cite{KS15DeterminisationGFG}.
As of today, we only know a handful of examples of "$\oo$-regular languages" admitting succinct "HD automata"~\cite{AK22MinimizingGFG, KS15DeterminisationGFG,CCL22SizeGFG}, and their applicability in practice has yet to be fully determined. We assert that we can derive from our results some enlightening explanations on the difficulty of finding succinct "HD" "parity" "automata", and set some limits in their usefulness in practical scenarios such as LTL synthesis.

First, Corollary~\ref{cor-zt:no-small-HD-for-Muller} already sets the impossibility of the existence of small "HD" "parity" "automata" "recognising" "Muller languages". Corollary~\ref{cor-acd:HD-transformations-are-big} states that if an "HD" "parity" "automaton"~$\A$ has been obtained as a transformation of a "DMA" $\B$, then $\A$ is not strictly smaller than a minimal "deterministic" "parity" "automaton" for~$\Lang{\A}$.

\begin{corollary}
	Let $\TS$ be a  "Muller" "TS". A minimal "parity" "TS" admitting an "HD mapping" to $\TS$ has the same size than a minimal "parity" "TS" admitting a "locally bijective morphism" to $\TS$.
\end{corollary}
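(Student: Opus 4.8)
The plan is to sandwich both quantities between $|\acdParityTransform{\TS}|$ and itself, so that they are forced to coincide. First I would reduce to the case where every state of $\TS$ is accessible: the inaccessible part of $\TS$ contributes nothing to the existence of a (locally bijective or HD) mapping onto $\TS$, since by Lemma~\ref{lemma-morph:locSurjMorph_OntoAccessible} such a mapping is onto the accessible part of $\TS$, and one may harmlessly restrict $\TS$ to its accessible part (the minimal sizes on both sides are unchanged). This is the hypothesis under which Theorems~\ref{thm-acd:optimality-size_ACD-parity_transform} and~\ref{thm-acd:optimality-priorities_ACD-parity_transform} apply.

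Write $m_{\mathrm{lb}}$ for the minimal size of a "parity" "TS" admitting a "locally bijective morphism" to $\TS$, and $m_{\mathrm{HD}}$ for the minimal size of a "parity" "TS" admitting an "HD mapping" to $\TS$. On the one hand, Proposition~\ref{prop-ACD:correctness_ACD-parity_transform} provides a "locally bijective morphism" $\pp\colon \acdParityTransform{\TS}\to\TS$, so $m_{\mathrm{lb}}\leq|\acdParityTransform{\TS}|$. On the other hand, every "locally bijective morphism" is "locally surjective" and hence an "HD mapping" (Lemma~\ref{lemma-morph:loc_surjective_implies_HD}); therefore any "parity" "TS" realising $m_{\mathrm{lb}}$ also admits an "HD mapping" to $\TS$, which gives $m_{\mathrm{HD}}\leq m_{\mathrm{lb}}$.

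Finally, Theorem~\ref{thm-acd:optimality-size_ACD-parity_transform} supplies the reverse inequality: any "parity" "TS" $\widetilde{\TS}$ admitting an "HD mapping" to $\TS$ satisfies $|\acdParityTransform{\TS}|\leq\size{\widetilde{\TS}}$, hence $|\acdParityTransform{\TS}|\leq m_{\mathrm{HD}}$. Chaining $|\acdParityTransform{\TS}|\leq m_{\mathrm{HD}}\leq m_{\mathrm{lb}}\leq|\acdParityTransform{\TS}|$ forces $m_{\mathrm{lb}}=m_{\mathrm{HD}}=|\acdParityTransform{\TS}|$, which is exactly the claim (and it additionally exhibits $\acdParityTransform{\TS}$ as a common minimiser). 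I do not expect any genuine obstacle: the substantive content is entirely contained in Theorem~\ref{thm-acd:optimality-size_ACD-parity_transform} and Proposition~\ref{prop-ACD:correctness_ACD-parity_transform}, which may be assumed; the only mild care needed is the accessibility reduction and the elementary observation that a "locally bijective morphism" is an "HD mapping".
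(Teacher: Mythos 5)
Your proof is correct and follows essentially the same route as the paper, which derives this corollary as an immediate consequence of Theorem~\ref{thm-acd:optimality-size_ACD-parity_transform} together with the facts that $\acdParityTransform{\TS}$ admits a "locally bijective morphism" to $\TS$ and that such morphisms are in particular "HD mappings". Your extra care about reducing to the accessible part of $\TS$ is a sensible (and harmless) addition that the paper leaves implicit.
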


\begin{corollary}\label{cor-acd:HD-transformations-are-big}
	Let $\A$ be a "history-deterministic" "parity" "automaton". Assume that there exists a "DMA" $\B$ such that $\A$ admits an "HD mapping" to $\B$. Then, there exists a "DPA" $\A'$ "recognising" $\Lang{\A}$ such that $\size{\A'}\leq \size{\A}$.
\end{corollary}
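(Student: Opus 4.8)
The plan is to reduce this statement to the optimality of the \acdParityTransform{} established in Theorem~\ref{thm-acd:optimality-size_ACD-parity_transform}, applied to the "DMA" $\B$. First I would invoke Theorem~\ref{thm-acd:optimality-size_ACD-parity_transform} with $\TS = \B$ (we may assume all states of $\B$ are "accessible" without loss of generality, or restrict to the "accessible part") and $\widetilde{\TS} = \A$: since $\A$ is a "parity" "automaton" admitting an "HD mapping" $\pp\colon \A \to \B$, the theorem gives $|\acdParityTransform{\B}| \leq \size{\A}$. Then I would set $\A' = \acdParityTransform{\B}$. By Proposition~\ref{prop-ACD:correctness_ACD-parity_transform}, there is a "locally bijective morphism" $\acdParityTransform{\B} \to \B$, hence by Proposition~\ref{prop-morph:HD mappings-preserve-languages} we get $\Lang{\A'} = \Lang{\acdParityTransform{\B}} = \Lang{\B}$; and since $\pp\colon \A \to \B$ is an "HD mapping" of "automata", the same proposition gives $\Lang{\A} = \Lang{\B}$, so $\Lang{\A'} = \Lang{\A}$. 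Finally $\A'$ is a "DPA" because $\acdParityTransform{}$ of a "deterministic" "automaton" is "deterministic" (again Proposition~\ref{prop-morph:HD mappings-preserve-languages}, using that a "locally bijective morphism" preserves "determinism", together with the fact that $\B$ is a "DMA"), and it is a "parity" "automaton" by construction. Combining, $\A'$ is a "DPA" "recognising" $\Lang{\A}$ with $\size{\A'} \leq \size{\A}$, as desired.

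The only mild technical point to address is the "accessibility" hypothesis required by Theorem~\ref{thm-acd:optimality-size_ACD-parity_transform}: the statement of the corollary does not assume $\B$ has only "accessible" states. This is handled by first replacing $\B$ with its "accessible part" (which recognises the same language and is still a "DMA"), and noting that an "HD mapping" $\A \to \B$ restricts to an "HD mapping" from $\A$ to the "accessible part" of $\B$ — indeed, by Lemma~\ref{lemma-morph:locSurjMorph_OntoAccessible} (since "HD mappings" are "locally surjective" "weak morphisms" by Remark~\ref{rmk-morph: HD-is-loc.surjective}), the image of $\A$ is contained in the "accessible part" of $\B$, so the mapping factors through it and all its defining properties are preserved. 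After this reduction the argument above applies verbatim.

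I do not expect any serious obstacle here: the corollary is essentially a direct translation of Theorem~\ref{thm-acd:optimality-size_ACD-parity_transform} into the language of concrete automata, and all the supporting facts (correctness of the \acdParityTransform{} via a "locally bijective morphism", preservation of languages and "determinism" under such morphisms) are already available in the excerpt. The one place that requires a sentence of care rather than a mere citation is the "accessibility" bookkeeping described above, but this is routine.
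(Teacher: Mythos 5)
Your proposal is correct and matches the paper's (unstated) argument: the paper dispatches this corollary as an immediate application of Theorem~\ref{thm-acd:optimality-size_ACD-parity_transform}, and $\A' = \acdParityTransform{\B}$ is exactly the intended witness, with language equality and determinism supplied by Propositions~\ref{prop-ACD:correctness_ACD-parity_transform} and~\ref{prop-morph:HD mappings-preserve-languages}. One cosmetic remark: for the accessibility bookkeeping, the fact you need is that the image of an accessible state under a weak morphism is accessible (immediate from the definition), whereas Lemma~\ref{lemma-morph:locSurjMorph_OntoAccessible} states the converse direction (surjectivity onto the accessible part); this does not affect the correctness of your argument.
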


Both corollaries follow from an immediate application of Theorem~\ref{thm-acd:optimality-size_ACD-parity_transform}.

\subparagraph*{The ACD-transform does not preserve minimality.}

A natural question is whether the "ACD-parity-transform" preserves minimality of automata, that is, given a "DMA" $\A$ with a minimal number of states for the language it "recognises", is $\acdParityTransform{\A}$ minimal amongst "DPAs" "recognising" $\Lang{\A}$?\footnotemark{} The answer to this question is negative, as we show now. 
\footnotetext{This question was left open as a conjecture in the conference version of this paper~\cite{CCF21Optimal}.}

\begin{proposition}
	There exists a "DMA" $\A$ that is minimal amongst "DMAs" "recognising" $\Lang{\A}$, but such that its "ACD-parity-transform" $\acdParityTransform{\A}$ is not a minimal "DPA".
\end{proposition}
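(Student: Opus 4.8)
The plan is to exhibit an explicit counterexample. I would look for a "DMA" $\A$ that is minimal amongst "DMAs" "recognising" $\Lang{\A}$, but whose "ACD-parity-transform" $\acdParityTransform{\A}$ has strictly more states than some equivalent "DPA". A natural place to search is among small automata where the "acceptance condition" is already a "parity" (or "Büchi"/"coBüchi") condition, but laid out on a graph that, as a "Muller" condition, forces the "ACD" to blow up locally. Concretely, I would try to build a two-state "DMA" $\A$ such that at one of its states $v$ the "local Muller condition" $\localMuller{v}{\TS}$ has a "Zielonka tree" $\treeVertex{v}$ with several "leaves" — say $\treeVertex{v}$ has height $2$ and two "leaves" — so that $|\acdParityTransform{\A}|$ counts $|\leaves(\treeVertex{v})| + |\leaves(\treeVertex{v'})| \geq 3$ by Remark~\ref{rmk-acd:size_ACD-parity-transformation}, while $\Lang{\A}$ admits a "DPA" with only $2$ states. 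The minimality of $\A$ amongst "DMAs" would be checked directly (a one-state "DMA" cannot "recognise" $\Lang{\A}$, e.g. because $\Lang{\A}$ is not "prefix-independent", or by a small case analysis).

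First I would fix a candidate language, for instance something like $L = \{w \in \{a,b\}^\oo : w \text{ has finitely many occurrences of the factor } ab, \text{ or } \minf(w) = \{a\}\}$ (or a similar tailored language), and draw a minimal "DMA" for it; then compute its "alternating cycle decomposition" explicitly, read off $|\acdParityTransform{\A}|$ from the number of "leaves" of the "local subtrees", and separately construct by hand a smaller "DPA". The verification that $\A$ is a minimal "DMA" would typically be the easy direction: one shows no "DMA" with fewer states can "recognise" $L$, e.g. via a Myhill–Nerode-style argument on the "transition system" (counting reachable residuals), or by invoking the Flower Lemma~\ref{lemma:flower-lemma} together with the structure of $L$. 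The verification that $\acdParityTransform{\A}$ is not minimal amounts to producing the explicit equivalent "DPA" $\A'$ with $\size{\A'} < |\acdParityTransform{\A}|$ and checking $\Lang{\A'} = L$ directly.

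The main obstacle is the search itself: one needs $\A$ minimal as a "Muller" automaton yet with an "ACD" whose "local subtrees" have strictly more "leaves" than the number of states of an optimal "DPA" for $\Lang{\A}$. The tension is that by Theorem~\ref{thm-acd:optimality-size_ACD-parity_transform}, $\acdParityTransform{\A}$ is already minimal amongst "parity" "TS" admitting an "HD mapping" \emph{to $\A$}, so the smaller "DPA" $\A'$ must \emph{not} admit such a mapping — i.e. $\A'$ must have a genuinely different "underlying graph" from any blow-up of $\A$. So the example must exploit that the minimal "DMA" layout and the minimal "DPA" layout of the same language can be incompatible: the "DMA" uses a clever "acceptance condition" to be small, but this forces the "ACD" to refine states, whereas a "DPA" can be small by using a different graph shape (with, say, more "colours" but fewer states, or states arranged to avoid the alternating-chain blow-up). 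Once a working pair $(\A, \A')$ is found, writing up the proof is routine: present both automata (with a figure, analogous to Figure~\ref{fig-acd:ACD-transform-doesNotPreserveMinimality} referenced earlier), compute the "ACD" of $\A$, and verify the two claims. I would allocate most effort to finding the smallest such example and to the minimality-of-$\A$ argument, since that is where a subtle error is most likely.
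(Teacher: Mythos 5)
Your overall strategy is exactly the paper's: exhibit an explicit pair $(\A,\A')$ where $\A$ is a minimal "DMA", $\acdParityTransform{\A}$ is read off from the leaves of the local subtrees, and a strictly smaller "DPA" $\A'$ (necessarily with a different "underlying graph", as you correctly deduce from Theorem~\ref{thm-acd:optimality-size_ACD-parity_transform}) is produced by hand. The paper's witness is over $\SS=\{a,b,c\}$: $L = \{w \mid c\in\minf(w) \text{ and } w \text{ contains the factor } ab \text{ infinitely often}\}$, with a $2$-state minimal "DMA" (minimal because $L$ is \emph{not} a "Muller language": $(abc)^\oo\in L$ but $(bac)^\oo\notin L$, cf.\ Remark~\ref{rmk-prelim:characterisation_Muller_Languages}), whose "ACD-parity-transform" has $4$ states while a $3$-state "DPA" for $L$ exists.

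The gap is that you never actually produce the witness, and the one concrete candidate you float does not work. Over $\{a,b\}$, a word has finitely many occurrences of the factor $ab$ if and only if $\minf(w)\neq\{a,b\}$ (each maximal block of $a$'s followed by a $b$ contributes an $ab$ factor), so your $L$ collapses to $\{w \mid \minf(w)\in\{\{a\},\{b\}\}\}$, which \emph{is} a "Muller language". It is then recognised by a one-state "DMA", and for that automaton the "ACD-parity-transform" coincides with the "ZT-parity-automaton" of $\F=\{\{a\},\{b\}\}$, which is provably minimal by Theorem~\ref{thm-zt:strong_optimality_ZTparity} — the opposite of what you need. This illustrates that the search is the entire content of the proposition, not a routine afterthought: the example must be a language that is \emph{not} $\minf$-determined (so that no one-state "DMA" suffices, which is also why your suggested ``$L$ is not prefix-independent'' argument is off the mark — the paper's $L$ is prefix-independent), yet whose minimal "DMA" layout forces a local Zielonka tree with enough leaves to overshoot a hand-crafted "DPA". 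Until such a pair is exhibited and both claims are verified, the proof is incomplete.
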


We consider the alphabet $\SS=\{a,b,c\}$ and the language 
\[ L = \{ w\in \SS^\oo \mid c\in \minf(w) \tand w \text{ contains infinitely often the factor } ab\}. \]

A minimal "DMA" for $L$ is depicted in Figure~\ref{fig-acd:muller-c-ab}. Its minimality follows simply from the fact that, as $L$ is not a "Muller language" ($(abc)^\oo\in L$ but $(bac)^\oo\notin L$, c.f. Remark~\ref{rmk-prelim:characterisation_Muller_Languages}), a "DMA" with just one state cannot "recognise" $L$. In Figure~\ref{fig-acd:ACD-transform-doesNotPreserveMinimality} we show its "alternating cycle decomposition" and its "ACD-parity-transform" that has $4$ states. However, we can find a "DPA" with just $3$ states "recognising" $L$, as shown in Figure~\ref{fig-acd:parity-c-ab}.

\begin{figure}[ht]
	\begin{subfigure}[t]{0.475\textwidth}
		\centering
		\includegraphics[scale=1.2]{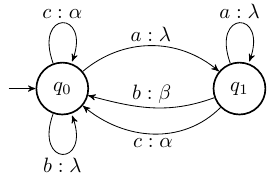}
		\caption{A "Muller" "automaton" with "acceptance set" given by $\F =\{\{\aa,\bb, \lambda\}\}$.}
		\label{fig-acd:muller-c-ab}
	\end{subfigure}\hfill
	\begin{subfigure}[t]{0.475\textwidth}
		\centering
		\includegraphics[scale=1.2]{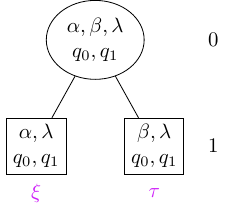}
		\caption{"Alternating cycle decomposition" of $\A$. To indicate the labels of the nodes of this "ACD", we include just the colours of the corresponding edges.}
		\label{fig-acd:acd-muller-c-ab}
	\end{subfigure}
\begin{subfigure}[t]{0.475\textwidth}
	\centering
	\includegraphics[scale=1.2]{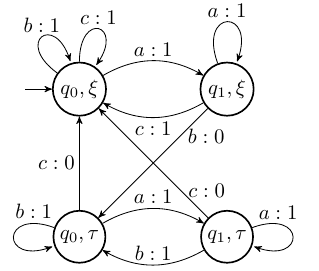}
	\caption{"ACD-parity-transform" of $\A$, $\acdParityTransform{\A}$, with $4$ states.}
	\label{fig-acd:acd-transform-c-ab}
\end{subfigure}\hfill
\begin{subfigure}[t]{0.475\textwidth}
	\centering
	\includegraphics[scale=1.2]{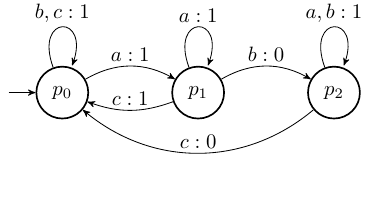}
	\caption{A "parity" "automaton" "recognising" $L$ with only $3$ states .}
	\label{fig-acd:parity-c-ab}
\end{subfigure}
	\caption{A minimal "DMA" whose "ACD-parity-transform" is not a minimal "DPA".} 
	\label{fig-acd:ACD-transform-doesNotPreserveMinimality}
\end{figure}

\subsubsection{Optimality of the parity condition of \texorpdfstring{$\acdParityTransform{\TS}$}{ACD-Parity(TS)}}\label{subsubsec-acd:optimality-parity-condition}
We show next the proof of Theorem~\ref{thm-acd:optimality-priorities_ACD-parity_transform}. To prove this result, we would like to use the Flower Lemma~\ref{lemma:flower-lemma}, 
however, the statement of Theorem~\ref{thm-acd:optimality-priorities_ACD-parity_transform} does not involve "$\oo$-regular languages". In order to set up a context in which apply the Flower Lemma, we show that, whenever we have a "morphism" $\pp\colon \TS \to \TS'$, $\TS$ can be seen as an "automaton" reading the "runs" of $\TS'$.

\AP Let $\TS = (\underlyingGraph{}, \macc{})$ and $\TS' = (\underlyingGraph{}', \macc{}')$ be "transition systems" with "underlying graphs" $\underlyingGraph{} =(V ,E , \msource , \mtarget, I)$ and $\underlyingGraph{}' =(V' ,E' , \msource' , \mtarget', I')$, and "acceptance conditions" $\macc{} = (\gg, \GG, \WW)$ and  $\macc{}' = (\gg', \GG', \WW')$. 
A "weak morphism of transition systems" $\pp\colon \TS \to \TS'$ provides a labelling of the edges of $\TS$ by $\pp_E\colon E \to E'$. Therefore, we can see $\TS$ as an "automaton" with "input alphabet"~$E'$, inheriting the "underlying graph" and "acceptance condition" from $\TS$. 
We say that this is the ""automaton of morphism $\pp$"" and denote it by~$\intro*\autMorphism{\pp}$.

\AP We define the ""language of accepting runs of"" a "transition system" $\TS$ as:
\[ \intro*\LangTS{\TS} = \{\rr\in E^\oo \mid \rr \text{ is an "accepting run" in } \TS \}. \]

\begin{lemma}\label{lemma-acd:morphism_HD-iff-automaton-HD}
	Let $\TS$ and $\TS'$ be "transition systems" with a single "initial state", let $\pp\colon \TS \to \TS'$ be a "weak morphism of transition systems", and let $\autMorphism{\pp}$ be its "automaton@@morphism".
	Then, $\pp$ is an "HD mapping" if and only if the automaton $\autMorphism{\pp}$ is "history-deterministic", and, in this case, \[\Lang{\autMorphism{\pp}} = \LangTS{\TS'}.\]
\end{lemma}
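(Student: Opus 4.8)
The plan is to unpack the definitions on both sides of the equivalence and show each direction by transporting resolvers. Recall that $\autMorphism{\pp}$ has underlying graph $\underlyingGraph{}$, acceptance condition $\macc{}$, and input alphabet $E'$, with an edge $e$ of $\TS$ reading the letter $\pp_E(e)$. A run of $\autMorphism{\pp}$ over a word $\rr'\in (E')^\oo$ is precisely a run $\rr$ in $\TS$ such that $\pp_E$ sends each edge of $\rr$ to the corresponding letter of $\rr'$; and such a run exists (over some word) exactly when $\rr'$ is in the image of $\ppRuns$. The first observation I would record is that $\rr'\in (E')^\oo$ is ``readable'' by $\autMorphism{\pp}$ at all only if $\rr'$ is an actual run of $\TS'$ starting from its initial vertex: indeed $\pp$ is a morphism of pointed graphs, so the $\pp_E$-image of any run of $\TS$ is a run of $\TS'$, and conversely the source/target compatibility of $\pp$ forces any $\rr'$ read by $\autMorphism{\pp}$ to be a path from an initial vertex. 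This already pins down $\Lang{\autMorphism{\pp}}\subseteq \RunsInfty{\TS'}\cap (E')^\oo = \Runs{\TS'}$.

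For the direction ``$\pp$ is an HD mapping $\implies \autMorphism{\pp}$ is HD'': let $(\rInit, r)$ be a sound resolver simulating $\pp$ (Definition in Section~\ref{subsect-morp: hist-det-morphisms}), where $\TS$ and $\TS'$ have single initial vertices $v_0$ and $v_0'$, so $\rInit$ picks $v_0 = \rInit(v_0')$ and $\pp(v_0)=v_0'$. I claim $(r_0, r_\A)$ with $r_0 =$ the unique initial state $v_0$ and $r_\A = r$ is a resolver for $\autMorphism{\pp}$: conditions \ref{item-HD-map:map-edges} and \ref{item-HD-map:building-run} in the definition of a resolver simulating $\pp$ say exactly that $r$, fed a finite run $\rr$ of $\TS$ ending in $v$ and a letter $e'\in \mout(\pp(v))$, returns an edge in $\mout(v)$ mapping to $e'$ — which is precisely the requirement that $r_\A$ build a run over the input word letter-by-letter, provided the input word is a run of $\TS'$. (If at some point $e'\notin \mout(\pp(v))$ the input is not a run of $\TS'$ and hence not in $\LangTS{\TS'}$, so we do not care what $r$ does; one can make $r$ total by the usual convention.) Soundness of $(\rInit,r)$ says that for every accepting run $\rr'$ of $\TS'$, $\rRuns(\rr')$ is accepting in $\TS$, i.e. the run induced by $r_\A$ over $\rr'$ is accepting in $\autMorphism{\pp}$; combined with the fact that $\pp$ preserves accepting runs (so $\Lang{\autMorphism{\pp}}\subseteq \LangTS{\TS'}$) and the resolver witnessing $\LangTS{\TS'}\subseteq \Lang{\autMorphism{\pp}}$, we get both that $\autMorphism{\pp}$ is HD and that $\Lang{\autMorphism{\pp}} = \LangTS{\TS'}$.

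For the converse ``$\autMorphism{\pp}$ is HD $\implies \pp$ is an HD mapping'': let $(r_0, r_\A)$ be a sound resolver for $\autMorphism{\pp}$. Since $\autMorphism{\pp}$ has a single initial state, $r_0$ is that state $v_0$, and $\pp(v_0) = v_0'$ by hypothesis that $\pp$ is a morphism of pointed graphs (single initial state on each side). Set $\rInit(v_0') = v_0$ and $r = r_\A$. Conditions \ref{item-HD-map:map-initial-vertex} and \ref{item-HD-map:respect-initial-choice} are immediate; \ref{item-HD-map:map-edges} and \ref{item-HD-map:building-run} follow because $r_\A$, being a resolver for $\autMorphism{\pp}$, produces after a finite run of $\TS$ a genuine continuing edge of $\TS$ reading the given $E'$-letter — and reading letter $e'$ from a state $v$ means precisely taking an edge in $\mout(v)\cap \pp_E^{-1}(e')$, so in particular $\pp_E(r(\rr,e'))=e'$. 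Soundness of $(r_0,r_\A)$ gives that for every word in $\Lang{\autMorphism{\pp}}$ the induced run is accepting; and $\Lang{\autMorphism{\pp}} = \LangTS{\TS'}$ means this word-set is exactly the accepting runs of $\TS'$, so the induced run $\rRuns(\rr')$ is accepting for every accepting $\rr'$, i.e. $(\rInit, r)$ is sound. Finally $\pp$ preserves accepting runs: if $\rr$ is an accepting run of $\TS$, then $\rr$ is an accepting run of $\autMorphism{\pp}$ over $\ppRuns(\rr)$, so $\ppRuns(\rr)\in\Lang{\autMorphism{\pp}} = \LangTS{\TS'}$, i.e. $\ppRuns(\rr)$ is accepting in $\TS'$. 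Thus $\pp$ is a weak morphism that preserves accepting runs and admits a sound resolver simulating it, hence an HD mapping.

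The main obstacle I anticipate is purely bookkeeping: carefully matching the asymmetric "resolver simulating $\pp$" definition (with its four clauses, partiality of $r$ on non-runs, and the separate treatment of the initial step) against the "resolver for an automaton" definition (a function $\DD^*\times\SS\to\DD$ with $r_0\in I$), and being precise about what happens when the input sequence $\rr'$ is not actually a run of $\TS'$ — there $\autMorphism{\pp}$ has no run and both definitions impose nothing, but one must phrase this cleanly so that totality conventions do not cause spurious mismatches. The equality $\Lang{\autMorphism{\pp}} = \LangTS{\TS'}$ itself is essentially a restatement of "$\pp$ preserves accepting runs and the resolver is sound," so once the resolver translation is set up it falls out for free.
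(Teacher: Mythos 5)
Your proof is correct and follows essentially the same route as the paper's: both identify a resolver for $\autMorphism{\pp}$ and a resolver simulating $\pp$ as mappings of the same type $E^*\times E'\to E$ (with the initial-state choice forced by the single-initial-state hypothesis), transport soundness between the two notions, and obtain $\Lang{\autMorphism{\pp}}=\LangTS{\TS'}$ from ``$\pp$ preserves accepting runs'' together with the soundness of the resolver. The paper compresses the resolver correspondence into a one-line ``straightforward to check''; your version just spells out the clause-by-clause matching (including the convention for input words that are not runs of $\TS'$), which is exactly the bookkeeping the paper elides.
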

\begin{proof}
	We first note that a "resolver@@aut" for $\autMorphism{\pp}$ (in the sense of "HD automata") is a mapping of the form $r\colon E^* \times E' \to E$, as $E'$ is the "input alphabet" of this "automaton". A "resolver simulating $\pp$" (in the sense of "HD mappings") is a mapping of the same form.
	It is straightforward to check that $(q_0, r)$ is a "sound@@aut" "resolver@@aut" for $\autMorphism{\pp}$ if and only if $(\rInit, r)$ is a "sound@@morph" "resolver simulating $\pp$" (where $\rInit(q_0') = q_0$ is the only possible choice of initial vertex).

	We prove that $\Lang{\autMorphism{\pp}} = \{\rr'\in \Runs{\TS'} \mid \rr' \text{ is an "accepting run"}\}$. 
	First, we remark that if $\rr$ is a "run@@automaton" in $\autMorphism{\pp}$ over $\rr'\in\Runs{\TS'}$, then $\rr' = \ppRuns(\rr)$, since the "labelling" of $\autMorphism{\pp}$ by "input letters" is given exactly by $\pp$ itself. Therefore, if $\rr'\in \Lang{\autMorphism{\pp}}$, there exists an "accepting@@run" "run $\rr$ over"~$\rr'$, and since $\pp$ "preserves accepting runs",  $\rr' = \ppRuns(\rr)$ is "accepting@@run" in $\TS'$, proving the inclusion from left to right. For the other inclusion, we let $(\rInit, r)$ be a "sound@@morph" "resolver simulating" $\pp$. If $\rr'$ is an "accepting run" in $\TS'$, then $\rRuns(\rr')$ is an "accepting@@run" "run over" $\rr'$ in~$\autMorphism{\pp}$.
\end{proof}

We recall that $[\minparityAcd{\TS}, \maxparityAcd{\TS}]$ are the colours used by the "ACD-parity-transform" of $\TS$, which coincides with the maximal "height" of a "tree" in $\acd{\TS}$.
We also recall that $\minparityAcd{\TS}=0$ if $\acd{\TS}$ is "positive@@acd" or "equidistant@@acd", and that $\minparityAcd{\TS}=1$ if $\acd{\TS}$ is "negative@@acd".

\begin{lemma}\label{lemma-acd:branch-ACD-induces-flower}
	Let $\TS$ be a "Muller" "TS", and let $\altTree{\ell}\in \acd{\TS}$ be a "positive@@tree" (resp. "negative@@tree") tree of the "ACD" of $\TS$ of "height" $d$. Then, $\TS$ admits a "positive@@flower" (resp. "negative@@flower") "$d$-flower".
\end{lemma}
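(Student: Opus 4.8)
The statement to prove is Lemma~\ref{lemma-acd:branch-ACD-induces-flower}: if $\altTree{\ell}\in \acd{\TS}$ is a "positive@@tree" (resp. "negative@@tree") tree of "height" $d$, then $\TS$ admits a "positive@@flower" (resp. "negative@@flower") "$d$-flower". The plan is to simply read off the flower from a maximal branch of $\altTree{\ell}$. Since $\altTree{\ell}$ has "height" $d$, there is a branch $n_1 \ancestor n_2 \ancestor \dots \ancestor n_d$ of maximal length in $\altTree{\ell}$, where $n_1 = n_\ell$ is the "root" of the tree. Set $\ell_i = \nuAcd(n_i)$ for each $i$. By Definition~\ref{def:tree_alternating_cycles}, each $\ell_i$ is a "cycle" of $\TS$, we have $\ell_1 = \ell \supseteq \ell_2 \supseteq \dots \supseteq \ell_d$, and in fact the inclusions are strict since children are labelled by strictly smaller subcycles. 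Moreover, again by Definition~\ref{def:tree_alternating_cycles}, the acceptance status alternates along the branch: $\nuAcd(n_i)$ is "accepting@@cycle" if and only if $\nuAcd(n_{i+1})$ is "rejecting@@cycle". Hence $\gg(\ell_i)\in \F \iff \gg(\ell_{i+1})\notin \F$, which is exactly the definition of a "$d$-flower".

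The remaining point is to verify that this flower sits over a single state, i.e.\ that $\ell_1, \dots, \ell_d \in \cyclesState{\TS}{v}$ for some common vertex $v$. This holds because $\ell_d$ is a nonempty "cycle" (all labels in the "ACD" are nonempty "cycles"), so it contains some state $v \in \states{\ell_d}$; and since $\ell_d \subseteq \ell_i$ for every $i$, we get $v \in \states{\ell_i}$, hence $\ell_i \in \cyclesState{\TS}{v}$ for all $i$. Thus $\ell_1 \supsetneq \ell_2 \supsetneq \dots \supsetneq \ell_d$ is a "$d$-flower" over $v$.

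Finally, the sign of the flower is determined by the label of the "root" $n_1 = n_\ell$, whose label is $\ell$ itself. If $\altTree{\ell}$ is "positive@@tree", then by definition $\ell$ is an "accepting cycle", so $\gg(\ell_1) = \gg(\ell) \in \F$, making the flower a "positive flower"; symmetrically, if $\altTree{\ell}$ is "negative@@tree", then $\ell$ is a "rejecting cycle" and the flower is "negative@@flower". This completes the argument. There is no real obstacle here — the only thing to be slightly careful about is keeping track that all the labels in the "ACD" are genuine nonempty "cycles" (so that the common state $v$ exists) and that the height-$d$ branch has exactly $d$ nodes, giving exactly $d$ cycles in the flower; both are immediate from Definitions~\ref{def:tree_alternating_cycles} and~\ref{def:acd} together with the definition of "height" of a tree.
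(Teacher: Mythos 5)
Your proof is correct and follows essentially the same route as the paper: take a maximal branch $n_1\ancestor\dots\ancestor n_d$ of $\altTree{\ell}$, observe that the labels form a strictly decreasing, acceptance-alternating chain of cycles, and pin the flower to a common vertex taken from the deepest cycle (the paper picks $v\in\nuStates(n_d)$ and invokes Remark~\ref{rmk-acd:treeVertex-closed-by-ancestor}, which is the same as your direct inclusion argument). No gaps.
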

\begin{proof}
	We use the same argument as the one used in the proof of Theorem~\ref{thm-zt:optimality_ZTparity-priorities}. Let $n_1\ancestor n_2\ancestor \dots \ancestor n_d$ be a "branch" of length $d$ of $\altTree{\ell}$ (where $n_1$ is the "root" and $n_d$ is a "leaf" of the tree). Let $v\in \nuStates(n_d)$ be a vertex appearing in the "leaf". Then, the whole "branch" is contained in $\treeVertex{v}$ (by Remark~\ref{rmk-acd:treeVertex-closed-by-ancestor}), that is, $\nuAcd(n_i)\in \cyclesState{v}{\TS}$. Moreover, $\nuAcd(n_1)\supsetneq \nuAcd(n_2)\supsetneq \dots \nuAcd(n_d)$ is a chain that alternates "accepting@@cycle" and "rejecting@@cycle" "cycles", so it is a "$d$-flower" that is "positive@@flower" if and only if $\nuAcd(n_1) = \ell$ is an "accepting cycle", that is, if $\altTree{\ell}$ is "positive@@tree".
\end{proof}

\begin{lemma}\label{lemma-acd:parity index-muller-TS}
	Let $\TS$ be a "Muller" "TS" with a single initial vertex and whose vertices are all "accessible". Then, the "parity index" of $\LangTS{\TS}$ is:
	\begin{itemize}
		\item $[\minparityAcd{\TS}, \maxparityAcd{\TS}]$ if $\acd{\TS}$ is "positive@@acd" or "negative@@acd",
		\item $\WeakIndex{\maxparityAcd{\TS}}$ if $\acd{\TS}$ is "equidistant@@acd".
	\end{itemize}
\end{lemma}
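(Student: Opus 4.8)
The plan is to reduce the statement to the Flower Lemma~\ref{lemma:flower-lemma} applied to a deterministic Muller automaton recognising $\LangTS{\TS}$. First I would observe that the automaton $\autMorphism{\mathrm{Id}}$ associated to the identity (weak) morphism $\mathrm{Id}\colon \TS \to \TS$ is a deterministic Muller automaton with input alphabet $E$ recognising exactly $\LangTS{\TS}$ (this is the content of Lemma~\ref{lemma-acd:morphism_HD-iff-automaton-HD} applied to the identity, which is trivially a locally bijective morphism and hence an HD mapping by Lemma~\ref{lemma-morph:loc_surjective_implies_HD}; alternatively one argues directly that the runs of $\TS$ are precisely the words over $E$ accepted by this automaton). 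Since all vertices of $\TS$ are accessible, this automaton has all its states accessible, so the Flower Lemma applies verbatim to it.

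Next I would establish the \emph{lower bounds} on the parity index. By Lemma~\ref{lemma-acd:branch-ACD-induces-flower}, a positive (resp. negative) tree $\altTree{\ell}\in \acd{\TS}$ of height $d$ yields an accessible positive (resp. negative) $d$-flower in $\TS$, hence in $\autMorphism{\mathrm{Id}}$. Now recall $\maxparityAcd{\TS}$ equals the maximal height of a tree in $\acd{\TS}$, and that the sign of that maximal tree is controlled by the positive/negative/equidistant classification. In the positive case, some tree of height $\maxparityAcd{\TS}$ is positive, giving a positive $\maxparityAcd{\TS}$-flower, so by the Flower Lemma the parity index is at least $[0,\maxparityAcd{\TS}-1]$ — and one checks that in the positive case $\minparityAcd{\TS}=0$ and $\maxparityAcd{\TS}$ is exactly one more than the maximal height, wait: here I would be careful with the off-by-one, using Remark~\ref{rmk-acd:parity_levels_ACD} and the definition of $\parityNodesAcd$. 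Precisely, in the positive case $\parityNodesAcd(n)=\depth(n)$ on positive trees and the maximal value $\maxparityAcd{\TS}$ is attained at a leaf of a positive tree of maximal height $d$, so $\maxparityAcd{\TS}=d-1$, and the positive $d$-flower gives parity index at least $[0,d-1]=[\minparityAcd{\TS},\maxparityAcd{\TS}]$. The negative case is symmetric ($\minparityAcd{\TS}=1$, parity index at least $[1,d]=[\minparityAcd{\TS},\maxparityAcd{\TS}]$). In the equidistant case there are both positive and negative trees of maximal height $d$, yielding both a positive and a negative $d$-flower, so the Flower Lemma gives parity index at least $\WeakIndex{d}=\WeakIndex{\maxparityAcd{\TS}}$.

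Finally, for the \emph{upper bounds}, I would invoke the correctness of the ACD-parity-transform: by Proposition~\ref{prop-ACD:correctness_ACD-parity_transform} there is a locally bijective morphism $\acdParityTransform{\TS}\to\TS$, and $\acdParityTransform{\TS}$ is a parity transition system using colours in $[\minparityAcd{\TS},\maxparityAcd{\TS}]$. Viewing the associated automaton $\autMorphism{\cdot}$ of this morphism (or simply noting that a locally bijective morphism induces a bijection of runs preserving acceptance, Remark~\ref{rmk-morph:loc-morph-bijection-runs} together with the preservation of acceptance), we get a deterministic parity automaton recognising $\LangTS{\TS}$ over the alphabet $E$ using colours $[\minparityAcd{\TS},\maxparityAcd{\TS}]$. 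This shows the parity index is \emph{at most} $[\minparityAcd{\TS},\maxparityAcd{\TS}]$ in the positive and negative cases. In the equidistant case, both $\acdParityTransform{\TS}$ with the current colouring and its dual (swapping the roles, cf. the definitions of positive/negative transform and Remark~\ref{rmk-acd:complementation-ACD}) give DPAs over $[0,\maxparityAcd{\TS}]$ and $[1,\maxparityAcd{\TS}+1]$ respectively, which combined with the lower bound $\WeakIndex{\maxparityAcd{\TS}}$ pins the parity index to exactly $\WeakIndex{\maxparityAcd{\TS}}$. Combining matching upper and lower bounds closes each case.

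The main obstacle I anticipate is purely bookkeeping: tracking the various off-by-one shifts between \emph{height} of a tree, \emph{depth} of its deepest leaf, the value of $\parityNodesAcd$ at that leaf, and the resulting interval $[\minparityAcd{\TS},\maxparityAcd{\TS}]$, separately in the positive, negative, and equidistant regimes, and making sure the flower produced by Lemma~\ref{lemma-acd:branch-ACD-induces-flower} has exactly the length needed to saturate the Flower Lemma's conclusion. A secondary point requiring a small argument is the equidistant case's upper bound, where one must exhibit a DPA over $[1,\maxparityAcd{\TS}+1]$ as well; this follows from applying the ACD-parity-transform to $\complTS{\TS}$ and complementing, using Remark~\ref{rmk-acd:complementation-ACD} to see that $\acd{\complTS{\TS}}$ is equidistant of the same maximal height with round and square nodes swapped.
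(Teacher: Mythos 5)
Your proposal is correct and follows essentially the same route as the paper: view $\TS$ as the deterministic automaton $\autMorphism{\Id{\TS}}$ over its own edges, get the lower bounds from the Flower Lemma via the flowers extracted from maximal-height trees of $\acd{\TS}$ (Lemma~\ref{lemma-acd:branch-ACD-induces-flower}), and get the matching upper bounds from the "ACD-parity-transform". The paper's proof is terser and leaves the upper bound (and the second DPA over $[1,\maxparityAcd{\TS}+1]$ needed in the "equidistant@@acd" case, which it supplies later via the modified level function $\parityNodesAcd'$ in Proposition~\ref{prop-typ:parity index-Muller-automaton}) implicit, whereas you spell these out explicitly; your off-by-one bookkeeping between "height", "depth" and $\parityNodesAcd$ is handled correctly.
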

\begin{proof}
	We consider the identity "morphism" $\Id{\TS}\colon \TS \to \TS$ and its automaton $\autMorphism{\Id{\TS}}$, which is a "deterministic" automaton trivially "recognising@@aut" $\LangTS{\TS}$ (that is, we see $\TS$ as an automaton reading its own edges as "input letters").
	The result follows from the Flower Lemma~\ref{lemma:flower-lemma} and the fact that a "tree" $\altTree{\ell}\in \acd{\TS}$ of "height" $d$ provides a "$d$-flower" that is "positive@@flower" if $\ell$ is "accepting@@cycle" and "negative@@flower" if  $\ell$ is "rejecting@@cycle" (Lemma~\ref{lemma-acd:branch-ACD-induces-flower}). These "flowers" are "accessible" as we have supposed that all the vertices of $\TS$ are "accessible".
\end{proof}

The previous lemmas allow us to obtain Theorem~\ref{thm-acd:optimality-priorities_ACD-parity_transform} for "transition systems" with a single "initial vertex". We introduce some further notations to deal with the general case.

\AP For a "Muller" "TS" $\TS$ and a vertex $v$, we let $\intro*\acdVertex{\TS}{v}$ be the "alternating cycle decomposition" of the "accessible part of $\TS$ from $v$". We note that the "trees" of $\acdVertex{\TS}{v}$ are a subset of the trees of $\acd{\TS}$: a "tree" $\altTree{\ell_i}\in \acd{\TS}$ appears in $\acdVertex{\TS}{v}$ if and only if the "cycle" $\ell_i$ is "accessible from" $v$.
\AP Accordingly, for each vertex $v$ of $\TS$ we let $\intro*\minparityAcdVertex{\TS}{v}$ (resp. $\intro*\maxparityAcdVertex{\TS}{v}$) be the minimum (resp. maximum) value taken by the function~$\parityNodesAcd$ when restricted to the "trees" of $\acdVertex{\TS}{v}$.

\begin{remark}\label{rmk-acd:priorities-accessible-from-vertex}
	For every "transition system" $\TS$, one of the two following statements holds:
	\begin{itemize}
		\item There is some vertex $v$ such that $[\minparityAcd{\TS}, \maxparityAcd{\TS}] = [\minparityAcdVertex{\TS}{v}, \maxparityAcdVertex{\TS}{v}]$.
		\item There are two vertices $v_0$ and $v_1$ such that $\minparityAcdVertex{\TS}{v_0}=0$, $\maxparityAcdVertex{\TS}{v_0} = \maxparityAcd{\TS}-1$ and $\minparityAcdVertex{\TS}{v_1}=1$, $\maxparityAcdVertex{\TS}{v_1} = \maxparityAcd{\TS}$.
	\end{itemize}
Moreover, if all the states of $\TS$ are "accessible", we can choose $v$ (resp. $v_0$ and $v_1$) to be an initial vertex.
\end{remark}

We can finally deduce Theorem~\ref{thm-acd:optimality-priorities_ACD-parity_transform} from the preceding lemmas.
\begin{proof}[Proof of Theorem~\ref{thm-acd:optimality-priorities_ACD-parity_transform}]
	We assume that we are in the first case of Remark~\ref{rmk-acd:priorities-accessible-from-vertex} (a proof for the second case follows easily). First, we show that we can suppose that $\widetilde{\TS}$ and $\TS$ have a single initial vertex. Let $v$ be an "initial vertex" of $\TS$ such that $[\minparityAcd{\TS}, \maxparityAcd{\TS}] = [\minparityAcdVertex{\TS}{v}, \maxparityAcdVertex{\TS}{v}]$. 
	Let $\pp\colon \widetilde{\TS} \to \TS$ be an "HD mapping", and let $(\rInit, r)$ be a "sound@@morph" "resolver@@morph" simulating it. We let $\tilde{v} = \rInit(v)$ be the initial vertex in $\widetilde{\TS}$ chosen by the resolver. It suffices then to prove the result for the "accessible part of $\widetilde{\TS}$ from $\tilde{v}$", the "transition system" $\initialTS{\TS}{v}$,  and the restriction of $\pp$ to these transition systems. 
	
	From now on, we assume that both $\widetilde{\TS}$ and $\TS$ have a single initial vertex.
	By Lemma~\ref{lemma-acd:parity index-muller-TS} and Proposition~\ref{prop-prelim:parity_index-HDAutomata}, a "parity" "history-deterministic automaton" recognising $\LangTS{\TS}$ uses at least $|[\minparityAcd{\TS}, \maxparityAcd{\TS}]|$ colours. By Lemma~\ref{lemma-acd:morphism_HD-iff-automaton-HD}, the "automaton@@autMorphism" $\autMorphism{\pp}$ "of the morphism@@aut" $\pp$ is a "parity" "history-deterministic automaton" recognising $\LangTS{\TS}$, and therefore uses at least $|[\minparityAcd{\TS}, \maxparityAcd{\TS}]|$ colours. Since the "acceptance condition" of $\widetilde{\TS}$ is exactly the same as that of $\autMorphism{\pp}$, we can conclude.
\end{proof}

\subsubsection{Optimality of the sizes \texorpdfstring{of $\acdParityTransform{\TS}$ and $\acdRabinTransform{\TS}$}{}}
We prove now Theorems~\ref{thm-acd:optimality-size_ACD-parity_transform} and~\ref{thm-acd:optimality_ACD-HD-Rabin-transform}. 

\begin{proof}[Sketch of the proof]
	Let $\pp\colon \widetilde{\TS}\to \TS$ be an "HD mapping", and let $v$ be a vertex in $\TS$. We can see the set $\inv{\pp}(v)$ as the states of an "HD automaton" reading finite "runs" in $\TS$ looping around $v$. This allows to define an "HD automaton" having $\inv{\pp}(v)$ as set of states and "recognising" $\localMuller{v}{\TS}$. As the "Zielonka tree" of $\localMuller{v}{\TS}$ is the tree $\treeVertex{v}$, by optimality of the "ZT-parity-automaton" (resp. the "ZT-HD-Rabin-automaton"), we deduce that $|\inv{\pp}(v)|\geq |\leaves(\treeVertex{v})|$ (resp. $|\inv{\pp}(v)|\geq \memTree{\treeVertex{v}}$).
\end{proof}

\begin{definition}\label{def:cycle-preimage-automaton}
	Let $\TS$ and $\TS'$ be two "transition systems", and let $(\gg, \GG, \MullerC{\F}{\GG})$ be the "acceptance condition" of $\TS$.
	Let $\pp \colon \TS \to \TS'$ be a "weak morphism" of  "transition systems" that is "locally surjective", and let $v'$ be an "accessible" "recurrent" state of $\TS'$. 
	For each $\ell' \in \cyclesState{\TS'}{v'}$ we let $\rr_{\ell'}$ be a finite "path" starting and ending in $v'$ visiting exactly the edges of~$\ell'$.
	\AP We define the ""cycle-preimage-automaton at $v'$"" to be the "Muller" "automaton" $\intro*\autCyclePreimage{v'} = (Q_{v'}, \cyclesState{\TS'}{v'}, Q_{v'}, \powplus{\GG}, \transAut{}, \MullerC{\tilde{F}}{\powplus{\GG}})$ over the input alphabet $\cyclesState{v'}{\TS'}$ defined as:
	\begin{itemize}
		\item the set of states is $Q_{v'} = \inv{\pp}(v')$,
		\item all the states are initial,
		\item the "output colours" are non-empty subsets of the colours used by $\TS$,
		\item $(q_2, C)\in \transAut{}(q_1,\ell')$  if there is a finite "path" $\rr \in \PathSetFin{\TS}{q_1}$ from $q_1$ to $q_2$ such that $\pp(\rr) = \rr_{\ell'}$ producing as "output@@run" the colours in $C\subseteq \GG$, that is $\gg(\rr) = C$. If $C$ is empty, this corresponds to an "uncoloured" edge $q_1\re{\ell':\ee}q_2$.
		We remark that, since $\pp$ is assumed "locally surjective", there is at least one such "path" $\rr$.
		\item $\{C_1,\dots, C_k\} \in \tilde{F}$ if and only if $\cup_{i=1}^k C_i \in \F$.
	\end{itemize}
\end{definition}


\AP We remark that a transition $e=q_1 \re{\ell':C }q_2$ in $\autCyclePreimage{v'}$ induces a finite "path"  $\intro*\unfold(e) = q_1\lrp{C}q_2$ in $\TS$ called the ""unfolding"" of $e$, producing as output the set of colours $C$ and such that $\pp(\unfold(e)) = \ell'$. In particular, a run $\rr$ in $\autCyclePreimage{v'}$ is "accepting@@run" if and only if $\unfold(\rr)$ is "accepting@@run".

\begin{lemma}\label{lemma-acd:same_kind_condition-automaton-precycles}
	If $L = \MullerC{\F}{\GG}$ is a "parity@@language" (resp. "Rabin@@language") language, then, so is the language $\tilde{L} = \MullerC{\tilde{F}}{\powplus{\GG}}$ used by the "acceptance condition" of $\autCyclePreimage{v'}$. 
\end{lemma}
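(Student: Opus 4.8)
The plan is to handle the parity and Rabin cases separately, in both cases lifting the representation of $L$ over $\GG$ to one of $\tilde L$ over $\powplus\GG$. The key observation is purely combinatorial: since $L = \MullerC{\F}{\GG}$ is a "Muller language", for any non-empty $X \subseteq \GG$ there is a word $w \in \GG^\oo$ with $\minf(w) = X$, so $X \in \F$ if and only if $X$ is accepted by the parity (resp. Rabin) condition representing $L$. Combining this with the definition of $\tilde F$ ($\{C_1,\dots,C_k\} \in \tilde F$ iff $\bigcup_i C_i \in \F$), we get that a word $u \in (\powplus\GG)^\oo$ with $\minf(u) = \{C_1, \dots, C_k\}$ belongs to $\tilde L$ if and only if $\bigcup\minf(u) = \bigcup_i C_i$ is accepted by the condition defining $L$. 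The whole proof then amounts to finding a colouring of $\powplus\GG$ whose acceptance behaviour on $\minf(u)$ mirrors the behaviour of the original condition on the union.

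\textbf{Parity case.} Let $\phi \colon \GG \to [d_{\min}, d_{\max}]$ witness that $L$ is a "parity language", so $w \in L$ iff $\min \minf(\phi(w))$ is even. Define $\psi \colon \powplus\GG \to [d_{\min}, d_{\max}]$ by $\psi(C) = \min_{c \in C} \phi(c)$. For $u$ with $\minf(u) = \{C_1, \dots, C_k\}$ we compute $\min \minf(\psi(u)) = \min_i \psi(C_i) = \min \phi(\bigcup_i C_i)$. Since $\bigcup_i C_i$ is non-empty, it lies in $\F$ iff $\min \phi(\bigcup_i C_i)$ is even; hence $u \in \tilde L$ iff $\min \minf(\psi(u))$ is even, and $\psi$ witnesses that $\tilde L = \MullerC{\tilde F}{\powplus\GG}$ is a "parity language".

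\textbf{Rabin case.} Let $R = \{(G_1, R_1), \dots, (G_r, R_r)\}$ be a family of "Rabin pairs" over $\GG$ representing $L$. For each $j$ set $\tilde G_j = \{C \in \powplus\GG \mid C \cap G_j \neq \emptyset\}$ and $\tilde R_j = \{C \in \powplus\GG \mid C \cap R_j \neq \emptyset\}$, and let $\tilde R = \{(\tilde G_j, \tilde R_j)\}_j$. For $u$ with $\minf(u) = \{C_1, \dots, C_k\}$ we have, by distributivity of $\cap$ over $\cup$, that $\minf(u) \cap \tilde G_j \neq \emptyset$ iff $(\bigcup_i C_i) \cap G_j \neq \emptyset$, and $\minf(u) \cap \tilde R_j = \emptyset$ iff $(\bigcup_i C_i) \cap R_j = \emptyset$. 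Thus $u$ is accepted by $\tilde R$ iff $\bigcup\minf(u)$ is accepted by $R$ iff $\bigcup\minf(u) \in \F$ iff $u \in \tilde L$, so $\tilde R$ represents $\tilde L$ as a "Rabin language".

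The only steps requiring any care are the identities $\min \phi(\bigcup_i C_i) = \min_i \min_{c \in C_i} \phi(c)$ and the distributivity used in the Rabin case, together with the translation between "$X \in \F$" and "$X$ is accepted by the condition", which is exactly where the hypothesis that $L$ is a "Muller language" enters. There is no real obstacle: this is essentially a bookkeeping lemma, and the same template would cover the "Streett" and general $[d_{\min}, d_{\max}]$-parity cases verbatim.
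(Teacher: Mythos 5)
Your proof is correct and follows essentially the same route as the paper: the parity case uses the identical colouring $C \mapsto \min\phi(C)$, and the Rabin case lifts each pair $(G_j,R_j)$ to the sets of colours meeting $G_j$ (resp.\ $R_j$), which is precisely the (somewhat tersely stated) construction in the paper. Your version is if anything written more carefully, since you make explicit both the distributivity argument and the point that the Muller hypothesis is what lets you pass between ``$X\in\F$'' and ``$X$ is accepted by the representing condition''.
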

\begin{proof}
	Assume that $L$ is a "parity language", that is, there are $d_{\min}\leq d_{\max}$ and $\phi\colon \GG \to [d_{\min}, d_{\max}]$ such that for any non-empty subset $C\subseteq \GG$, $C\in \F$ if and only if $\min \phi(C)$ is even. We define $\tilde{\phi}\colon \powplus{\GG} \to [d_{\min}, d_{\max}]$ as: $\tilde{\phi}(C) = \min \phi(C)$. It is immediate to see that $\{C_1,\dots, C_k\} \in \tilde{F}$ if and only if $\min \tilde{\phi}(\{C_1,\dots, C_k\})$ is even.
	
	Assume now that $L$ is a "Rabin language" represented by the "Rabin pairs" $\{(G_1, R_1), \dots, \ab(G_r, R_r)\}$. We define a family of "Rabin pairs" $\tilde{R} = \{(\tilde{G}_1, \tilde{R}_1), \dots, (\tilde{G}_r, \tilde{R}_r)\}$ for $\tilde{L}$ as: $\{C_1, \dots, C_k\} \in \tilde{G_i}$ (resp. $\in\tilde{R_i}$) if $\cup_{i=1}^k C_i \in G_i$ (resp. $\in\tilde{R_i}$). It is immediate to see that $\tilde{L} = \RabinC{\tilde{R}}{\powplus{\GG}}$.
\end{proof}


\begin{lemma}\label{lemma-acd:automaton-precycles}
	Let $\TS$ and $\TS'$ be two "Muller" "TS", $\pp \colon \TS \to \TS'$ a "weak morphism of TS", and~$v'$ an "accessible" "recurrent" state of $\TS'$.
	If $\pp$ is an "HD mapping", then the "automaton"  $\autCyclePreimage{v'}$ is "history-deterministic" and "recognises" the "local Muller condition of $\TS'$ at $v'$".
\end{lemma}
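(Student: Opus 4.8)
The plan is to prove the two assertions separately: first that $\autCyclePreimage{v'}$ recognises $\localMuller{v'}{\TS'}$, and then that it is history-deterministic. I would begin by recalling that the input alphabet of $\autCyclePreimage{v'}$ is $\cyclesState{\TS'}{v'}$, and that a word over this alphabet is a sequence of cycles $\ell_0'\ell_1'\ell_2'\dots$. For the language inclusion, the key observation is that a run $\rr$ in $\autCyclePreimage{v'}$ over such a word, via the unfolding, corresponds to a genuine run in $\TS$ of the shape $q_0 \lrp{C_0} q_1 \lrp{C_1} q_2 \dots$ whose $\pp$-image is a run in $\TS'$ visiting exactly the edges $\bigcup_i \ell_i'$ infinitely often (assuming each $\ell_i'$ appears infinitely often; the general case reduces to this by prefix-independence). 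By definition of $\tilde{F}$, the run $\rr$ is accepting iff $\bigcup_{\ell'\in\minf(\ell_0'\ell_1'\dots)} \ell'$ is an accepting cycle in $\TS$, which is exactly the membership condition for $\localMuller{v'}{\TS'}$. I would phrase this as: $\Lang{\autCyclePreimage{v'}}\subseteq \Muller{\localMuller{v'}{\TS'}}$ follows because $\pp$ preserves accepting runs (and the preservation of rejecting information is handled through the resolver), and the reverse inclusion follows because the sound resolver simulating $\pp$ lets us build an accepting unfolding whenever $\bigcup \ell'$ is accepting.

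For history-determinism, I would use the sound resolver $(\rInit, r)$ simulating $\pp$ to construct a sound resolver for $\autCyclePreimage{v'}$. The idea is: when the automaton $\autCyclePreimage{v'}$ is fed an input cycle $\ell'\in\cyclesState{\TS'}{v'}$ from a state $q$, we must choose an outgoing transition, i.e.\ an unfolding path in $\TS$ from $q$ over $\rr_{\ell'}$. We feed the fixed finite path $\rr_{\ell'}$ in $\TS'$ (edge by edge) to the resolver $r$, starting from the memory state corresponding to the finite run simulated so far, and let $r$ produce a finite path in $\TS$ ending in some $q_2\in\inv{\pp}(v')$ (it ends in $\inv{\pp}(v')$ because $\pp$ is a weak morphism and $\rr_{\ell'}$ ends at $v'$). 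This yields the next transition of $\autCyclePreimage{v'}$. Concatenating these choices over an infinite input word $\ell_0'\ell_1'\dots$ produces precisely the run $\rRuns(\rr')$ where $\rr'$ is the run in $\TS'$ obtained by concatenating $\rr_{\ell_0'}\rr_{\ell_1'}\dots$. Since $(\rInit,r)$ is sound, if $\rr'$ is accepting in $\TS'$ then $\rRuns(\rr')$ is accepting in $\TS$, hence the corresponding run in $\autCyclePreimage{v'}$ is accepting. Now $\rr'$ is accepting in $\TS'$ iff $\minf(\rr')$ is an accepting cycle iff $\bigcup_{\ell'\in\minf(\text{input})}\ell'$ is accepting (using $\minf(\rr') = \bigcup_{\ell' \text{ seen i.o.}} \ell'$ when each $\rr_{\ell'}$ visits exactly $\ell'$), which is the condition for the input word to be in $\localMuller{v'}{\TS'}$. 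So the constructed resolver is sound.

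The main obstacle I anticipate is the bookkeeping around the non-uniqueness of the transitions of $\autCyclePreimage{v'}$: several unfolding paths may realise the same $\rr_{\ell'}$, and the resolver's choice of which one to take must be made coherently as the run progresses, which is exactly why a \emph{resolver} (rather than a determinisation) is the right object here. A second subtlety is the reduction from arbitrary input words to words in which every letter appears infinitely often: this uses prefix-independence of Muller conditions (Remark~\ref{rmk-prelim:characterisation_Muller_Languages}) together with the fact that $\localMuller{v'}{\TS'}$ and $\tilde{L}$ are prefix-independent, and the observation that a sound resolver only needs to handle the tail behaviour. A minor point worth stating carefully is that, since $\pp$ is locally surjective (which is implied by being an HD mapping, by Remark~\ref{rmk-morph: HD-is-loc.surjective}), the transition relation of $\autCyclePreimage{v'}$ is total over $\cyclesState{\TS'}{v'}$, so the automaton is complete, which is needed for the resolver to always have a move available. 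Once these points are in place, the statement follows by combining the two halves: $\autCyclePreimage{v'}$ recognises $\localMuller{v'}{\TS'}$ and admits a sound resolver, hence is a history-deterministic automaton for that language.
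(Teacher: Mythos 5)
Your proposal is correct and follows essentially the same route as the paper: one inclusion via the unfolding together with the fact that $\pp$ preserves accepting runs, and the reverse inclusion plus history-determinism by transferring the sound resolver simulating $\pp$ (fed the fixed paths $\rr_{\ell'}$) into a sound resolver for $\autCyclePreimage{v'}$, exactly as the paper does. The only slip is the sentence claiming that, by definition of $\tilde{F}$ alone, acceptance of $\rr$ is equivalent to $\bigcup \ell'$ being an accepting cycle of $\TS'$ --- that equivalence is precisely what the two halves of the argument establish through $\pp$ (and only one direction of it holds unconditionally, since HD mappings need not preserve rejecting runs), but your subsequent phrasing gets this right.
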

\begin{proof}
	\textbf{$\Lang{\autCyclePreimage{v'}} \subseteq \localMuller{\TS}{v'}$:} Let $\ell_1'\ell_2'\dots\in \cyclesState{\TS}{v}^\oo$ be a sequence of "cycles" accepted by $\autCyclePreimage{v'}$. By "prefix-independence" of "Muller languages" we can assume that all the "cycles" $\ell_i'$ appear infinitely often.
	 Let $\rr = q_0\re{\ell_1':C_1}q_1\re{\ell_2'}q_2\re{} \dots$ be an "accepting run" in $\autCyclePreimage{v'}$ over $\ell_1'\ell_2'\dots$, and let $\unfold(\rr)$ be its unfolding. As $\rr$ is an "accepting run", so is $\unfold(\rr)$, and since $\pp$ "preserves accepting runs", $\pp(\unfold(\rr))$ is an "accepting run" in $\TS'$. The edges visited by $\pp(\unfold(\rr))$ form the "cycle" $\cup_{i\geq 1} \ell_i'$, which is therefore an "accepting cycle", so $\ell_1'\ell_2'\dots\in \localMuller{\TS}{v'}$ by definition of "local Muller condition".

\textbf{$\localMuller{v'}{\TS'}\subseteq \Lang{\autCyclePreimage{v'}}$ and "history-determinism@@aut":} 	
 Let $r_\pp:E^*\times E' \to E$ be a "sound@@morph" "resolver simulating $\pp$". We will transfer the strategy given by $r_\pp$ to define a "resolver@@aut" $r_\A:\DD^*\times \cyclesState{\TS'}{v'} \to \DD$ for $\autCyclePreimage{v'}$, where $\DD$ is the set of transitions of the automaton.  
 Let $\rr_0'\in \RunsFin{\TS'}$ be a finite run reaching $v'$, and let $\rr_0 = \rRunsOption{\pp}(\rr_0')$ the preimage given by the resolver, ending in some $q_0\in Q_{v'}$ that is going to by used as "initial state" for $\autCyclePreimage{v'}$. For a sequence $e_1e_2\dots e_k\in \DD^*$ and $\ell'\in \cyclesState{\TS'}{v'}$, we let
 \[ r_\A(e_1e_2\dots e_k, \ell') = r_\pp(\rr_0'\rr_1'\dots \rr_k', \rr_{\ell'}),\footnotemark{} \]
 \footnotetext{Here we use a slight abuse of notation, since, formally, $r_\pp$ takes as input elements in $E^*\times E'$, but $\rr_{\ell'}\in {E'}^*$. We can naturally extend $r_\pp$ to ${E'}^*$ by induction. Equivalently, we can say that $r_\A(e_1e_2\dots e_k, \ell')$ is a suffix of $\rRunsOption{\pp}(\rr_0'\rr_1'\dots \rr_k'\rr_{\ell'})$.}
 where $\rr_j' = \pp(\unfold(e_j))$ and $v'\lrp{\rr_{\ell'}}v'$ is the finite run corresponding to $\ell'$ fixed in the definition of $\autCyclePreimage{v'}$.
 By definition, the obtained resolver satisfies the following property: 
 \begin{align}
 	\nonumber&\text{If } e_1e_2\dots \in \Delta^\oo \text{ is the "run induced by@@aut" } r_\A  \text{ over } \ell_1'\ell_2'\dots \in \cyclesState{\TS'}{v'}^\oo,\\
 	\nonumber&\text{then } \rr_0\unfold(e_1e_2\dots) = \rRunsOption{\pp}(\rr_0'\rr_1'\rr_2'\dots).
 \end{align}
 This gives us:
 \begin{multline}
 	\nonumber \bigcup\minf(\ell_1',\ell_2',\dots) \text{ is "accepting cycle" in } \TS' \iff \rr_0'\rr_1'\rr_2'\dots \text{ is "accepting run" in } \TS' \implies \\
 	\implies \rr_0\unfold(e_1e_2\dots) \text{ "accepting run" in }  \TS  \iff e_1e_2\dots \text{ "accepting run" in }  \autCyclePreimage{v'}. 
 \end{multline}
 Which allows us to conclude that the $\autCyclePreimage{v'}$ "recognises@@aut" $\localMuller{\TS}{v'}$ and that $r_\A$ is a "sound@@aut" "resolver@@aut".  
\end{proof}

\begin{corollary}
	Let $\TS$ and $\widetilde{\TS}$ be a "Muller" and a "parity" "transition system", respectively, and let $\pp \colon \widetilde{\TS} \to \TS$ be an "HD mapping". Let $v$ be an "accessible" "recurrent" state of $\TS$. Then,
	\[ |\inv{\pp}(v)| \geq |\leaves(\zielonkaTree{\localMuller{v}{\TS}})| = |\leaves(\treeVertex{v})|.\]
\end{corollary}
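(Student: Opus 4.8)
The plan is to reduce the statement to the optimality of the \kl{ZT-parity-automaton} (Theorem~\ref{thm-zt:weak_optimality_ZTparity}, or its \kl{HD} strengthening Theorem~\ref{thm-zt:strong_optimality_ZTparity}) applied to the \kl{cycle-preimage-automaton} $\autCyclePreimage{v}$. First I would invoke Lemma~\ref{lemma-acd:automaton-precycles}: since $\pp\colon\widetilde{\TS}\to\TS$ is an \kl{HD mapping} (which is in particular a \kl{locally surjective} \kl{weak morphism}, so $\autCyclePreimage{v}$ is well-defined), the automaton $\autCyclePreimage{v}$ is \kl{history-deterministic} and \kl{recognises} the \kl{local Muller language} $\localMuller{v}{\TS}$. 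By construction its set of states is $Q_v=\inv{\pp}(v)$, so $\size{\autCyclePreimage{v}}=|\inv{\pp}(v)|$.

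Next I would handle the \kl{acceptance condition}: by Lemma~\ref{lemma-acd:same_kind_condition-automaton-precycles}, because $\TS$ uses a \kl{parity} \kl{acceptance condition} (we may, by Remark~\ref{rmk:colours=edges}, take the colouring to be any representation we like; a \kl{parity} \kl{TS} has by definition a \kl{parity language} as its \kl{acceptance set}), the language $\MullerC{\tilde{F}}{\powplus{\GG}}$ used by $\autCyclePreimage{v}$ is also a \kl{parity language}. However, I do not actually need $\autCyclePreimage{v}$ to be a \kl{parity automaton}; it suffices that it is an \kl{HD automaton} \kl{recognising} the \kl{Muller language} $\localMuller{v}{\TS}$. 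So I would apply Theorem~\ref{thm-zt:strong_optimality_ZTparity} directly: for any \kl{HD} \kl{parity automaton} $\A$ \kl{recognising} a \kl{Muller language} $\MullerC{\F'}{\SS'}$ one has $|\zielonkaAutomaton{\F'}|\le\size{\A}$. Here $\F' = \localMuller{v}{\TS}$ and $\SS' = \cyclesState{\TS}{v}$. Wait---Theorem~\ref{thm-zt:strong_optimality_ZTparity} is stated for \kl{HD} \kl{parity automata}, not arbitrary \kl{HD automata}; this is exactly why Lemma~\ref{lemma-acd:same_kind_condition-automaton-precycles} is needed, to certify that $\autCyclePreimage{v}$ can be taken to be \kl{parity}. (For the \kl{Rabin} case of the companion statement, the same lemma certifies it is \kl{Rabin}, and one applies Theorem~\ref{thm-zt:optimality_ZT-HD-Rabin} instead.) With $\autCyclePreimage{v}$ being an \kl{HD} \kl{parity automaton} \kl{recognising} $\Muller{\localMuller{v}{\TS}}$, Theorem~\ref{thm-zt:strong_optimality_ZTparity} yields $|\zielonkaAutomaton{\localMuller{v}{\TS}}| \le \size{\autCyclePreimage{v}} = |\inv{\pp}(v)|$.

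Finally, I would identify $|\zielonkaAutomaton{\localMuller{v}{\TS}}|$ with $|\leaves(\treeVertex{v})|$. By Definition~\ref{def-ZT:parityAutomatonZT} the state set of $\zielonkaAutomaton{\F'}$ is $\leaves(\zielonkaTree{\F'})$, so $|\zielonkaAutomaton{\localMuller{v}{\TS}}| = |\leaves(\zielonkaTree{\localMuller{v}{\TS}})|$. And by Lemma~\ref{lemma-acd:tree_q_ZT_local_Muller}, the \kl{local subtree at $v$}, $\treeVertex{v}$, is precisely the \kl{Zielonka tree} of $\localMuller{v}{\TS}$; hence $\leaves(\zielonkaTree{\localMuller{v}{\TS}}) = \leaves(\treeVertex{v})$. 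Chaining the (in)equalities gives $|\inv{\pp}(v)| \ge |\leaves(\zielonkaTree{\localMuller{v}{\TS}})| = |\leaves(\treeVertex{v})|$, as claimed.

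The routine bookkeeping I am glossing over is the reduction to a single \kl{initial state}: Theorem~\ref{thm-zt:strong_optimality_ZTparity} and Lemma~\ref{lemma-acd:automaton-precycles} are cleanest when $v$ is \kl{accessible} and one fixes an appropriate \kl{initial state} of $\autCyclePreimage{v}$ reachable via the \kl{resolver}, using \kl{prefix-independence} of \kl{Muller languages} (Lemma~\ref{lemma-prelims:prefix-indep-automaton}) to make the choice of \kl{initial state} immaterial. The main conceptual obstacle is already dispatched by Lemma~\ref{lemma-acd:automaton-precycles}, whose proof is the delicate part---transferring the \kl{sound resolver} $r_\pp$ \kl{simulating} $\pp$ to a \kl{sound} \kl{resolver} for $\autCyclePreimage{v}$ and checking that \kl{accepting runs} correspond under \kl{unfolding}; here we may simply cite it. So this corollary itself is a short assembly of already-established facts, the only subtlety being to make sure we apply the \kl{parity}-specific optimality theorem (which forces the use of Lemma~\ref{lemma-acd:same_kind_condition-automaton-precycles}) rather than a hypothetical optimality result for arbitrary \kl{HD automata}.
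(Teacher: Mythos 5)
Your proposal is correct and follows exactly the paper's own proof: it assembles Lemma~\ref{lemma-acd:automaton-precycles} (the cycle-preimage-automaton is HD and recognises the local Muller language), Lemma~\ref{lemma-acd:same_kind_condition-automaton-precycles} (it is a parity automaton, since the acceptance condition is inherited from the parity transition system $\widetilde{\TS}$), Theorem~\ref{thm-zt:strong_optimality_ZTparity}, and Lemma~\ref{lemma-acd:tree_q_ZT_local_Muller}. Your mid-proof self-correction lands on the right reading -- the parity typing of $\autCyclePreimage{v}$ is indeed needed to invoke the parity-specific optimality theorem -- so nothing is missing.
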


\begin{proof}
	By Lemma~\ref{lemma-acd:automaton-precycles}, the "automaton" $\autCyclePreimage{v}$ is a "history-deterministic" automaton "recognising@@aut" $\localMuller{\TS}{v}$ of size $|\inv{\pp}(v)|$, and by Lemma~\ref{lemma-acd:same_kind_condition-automaton-precycles}, it is a "parity" "automaton". The optimality of the "ZT-parity-automaton" (Theorem~\ref{thm-zt:strong_optimality_ZTparity}) gives us the first inequality. The second equality follows from the fact that $\treeVertex{v}$ is the "Zielonka tree" of $\localMuller{v}{\TS}$ (Lemma~\ref{lemma-acd:tree_q_ZT_local_Muller}).
\end{proof}

The next corollary admits an identical proof, using the optimality of the "ZT-HD-Rabin-automaton" (Theorem~\ref{thm-zt:optimality_ZT-HD-Rabin}).
\begin{corollary}
	Let $\TS$ and $\widetilde{\TS}$ be a "Muller" and a "Rabin" "transition system", respectively, and let $\pp \colon \widetilde{\TS} \to \TS$ be an "HD mapping". Let $v$ be an "accessible" "recurrent" state of $\TS$. Then,
	\[ |\inv{\pp}(v)| \geq |\memTree{\zielonkaTree{\localMuller{v}{\TS}}}| = |\memTree{\treeVertex{v}}|.\]
\end{corollary}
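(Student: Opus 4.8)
The plan is to imitate exactly the proof of the corresponding statement for the parity case (the corollary giving $|\inv{\pp}(v)| \geq |\leaves(\treeVertex{v})|$), replacing the use of Theorem~\ref{thm-zt:strong_optimality_ZTparity} (optimality of the ZT-parity-automaton) by Theorem~\ref{thm-zt:optimality_ZT-HD-Rabin} (optimality of the ZT-HD-Rabin-automaton), and the parity-specific lemma by the Rabin-specific one. Concretely, I would argue as follows.

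First I invoke Lemma~\ref{lemma-acd:automaton-precycles}: since $\pp\colon \widetilde{\TS}\to\TS$ is an "HD mapping" and $v$ is an "accessible" "recurrent" state of $\TS$, the "cycle-preimage-automaton at $v$", $\autCyclePreimage{v}$, is a "history-deterministic" "automaton" "recognising" the "local Muller condition of $\TS$ at $v$", $\localMuller{v}{\TS}$. By construction (Definition~\ref{def:cycle-preimage-automaton}), its set of states is $Q_v = \inv{\pp}(v)$, so $\size{\autCyclePreimage{v}} = |\inv{\pp}(v)|$. Next, since $\widetilde{\TS}$ is a "Rabin" "transition system", the "acceptance condition" $\macc{}'$ of $\widetilde{\TS}$ uses a "Rabin language"; the "acceptance set" of $\autCyclePreimage{v}$ is obtained from that of $\widetilde{\TS}$ exactly as in the statement of Lemma~\ref{lemma-acd:same_kind_condition-automaton-precycles}, so by that lemma $\autCyclePreimage{v}$ is a "Rabin" "automaton". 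Hence $\autCyclePreimage{v}$ is a "history-deterministic" "Rabin" "automaton" "recognising" the "Muller language" $\localMuller{v}{\TS}$.

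Now apply Theorem~\ref{thm-zt:optimality_ZT-HD-Rabin} with $\F = \localMuller{v}{\TS}$ and $\A = \autCyclePreimage{v}$: it gives $|\zielonkaHDAutomaton{\localMuller{v}{\TS}}| \leq \size{\autCyclePreimage{v}} = |\inv{\pp}(v)|$. Since $|\zielonkaHDAutomaton{\F}| = \memTree{\zielonkaTree{\F}}$ (immediate from Definition~\ref{def-ZT: HD_Rabin_ZielonkaAutomaton}, as $Q = \{1,\dots,\memTree{\zielonkaTree{\F}}\}$), we get $\memTree{\zielonkaTree{\localMuller{v}{\TS}}} \leq |\inv{\pp}(v)|$. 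Finally, Lemma~\ref{lemma-acd:tree_q_ZT_local_Muller} identifies $\treeVertex{v}$ with the "Zielonka tree" of $\localMuller{v}{\TS}$, so $\memTree{\treeVertex{v}} = \memTree{\zielonkaTree{\localMuller{v}{\TS}}}$, which yields the claimed inequality $|\inv{\pp}(v)| \geq \memTree{\treeVertex{v}}$.

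There is no real obstacle here: the entire argument is a black-box assembly of earlier results, and the only points requiring the slightest care are bookkeeping ones — checking that "HD mappings" are "locally surjective" "weak morphisms" (Remark~\ref{rmk-morph: HD-is-loc.surjective}) so that $\autCyclePreimage{v}$ is well-defined, and noting that Theorem~\ref{thm-zt:optimality_ZT-HD-Rabin} is stated for "HD" "Rabin" automata "recognising" a "Muller language", which is precisely what $\autCyclePreimage{v}$ is. Thus the proof is simply: "The proof is identical to that of the preceding corollary, using the optimality of the ZT-HD-Rabin-automaton (Theorem~\ref{thm-zt:optimality_ZT-HD-Rabin}) and Lemma~\ref{lemma-acd:same_kind_condition-automaton-precycles} to ensure that $\autCyclePreimage{v}$ is a Rabin automaton, together with Lemmas~\ref{lemma-acd:automaton-precycles} and~\ref{lemma-acd:tree_q_ZT_local_Muller}."
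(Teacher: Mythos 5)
Your proof is correct and is exactly the paper's argument: the paper proves this corollary by saying it "admits an identical proof" to the parity-case corollary, substituting Theorem~\ref{thm-zt:optimality_ZT-HD-Rabin} for the ZT-parity optimality result, with Lemmas~\ref{lemma-acd:automaton-precycles}, \ref{lemma-acd:same_kind_condition-automaton-precycles} and~\ref{lemma-acd:tree_q_ZT_local_Muller} playing precisely the roles you assign them. No gaps.
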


Theorems~\ref{thm-acd:optimality-size_ACD-parity_transform} and~\ref{thm-acd:optimality_ACD-HD-Rabin-transform} follow from these two corollaries, the formulas for the size of the "ACD-transforms" (Remarks~\ref{rmk-acd:size_ACD-parity-transformation} and~\ref{rmk-acd:size_acd-RabinHD-transformation}) and the fact that a "locally surjective morphism" $\pp:\widetilde{\TS} \to \TS$  is "surjective" if all vertices of $\TS$ are "accessible" (Lemma~\ref{lemma-morph:locSurjMorph_OntoAccessible}).

	\section{Corollaries}\label{section:corollaries}
	In this section, we discuss some further applications of the "Zielonka tree" and the "alternating cycle decomposition".
In Section~\ref{subsec-corollaries:typeness}, we use the insights gained from the "ACD" to conduct a comprehensive study of "typeness" results for "deterministic" "Muller" "automata" (that is, when can we relabel a "DMA" with an "equivalent@@accCond" and simpler "acceptance condition").
In Section~\ref{subsec-corollaries: normal-form} we present a "normal form" for "parity" "transition systems" and prove the main properties exhibited by "TS" in this form.
In Section~\ref{subsec-corollaries: minimisation-parity}, we provide a polynomial-time algorithm minimising "DPA" "recognising" "Muller languages".

\subsection{Typeness results}\label{subsec-corollaries:typeness}

As we have seen, there are many different types of "acceptance conditions" for $\oo$-regular automata. An important question is the following:

\begin{center}
	Question: Given a "Muller" "automaton" $\A$, can we define a simpler "acceptance condition" over the "underlying graph" of $\A$ obtaining an "equivalent@@aut" automaton $\A'$?
\end{center}

This question was first studied (in the context of "automata" using state-based acceptance) by Krishnan, Puri and Brayton~\cite{KPB94DetOmega,KPB95Structural}, who showed how to determine if a "DMA" can be relabelled with an "equivalent@@accCond" "B\"uchi" "condition@@acc". Their work was generalised to "parity" "automata" by Boker, Kupferman and Steinitz~\cite{BKS10Paritizing}, and related questions about "typeness" were studied for "non-deterministic" "automata" by Kupferman, Morgenstern and Murano~\cite{KMM06Typeness}, and for "history-deterministic" automata by Boker, Kupferman and Skrzypczak~\cite{BKS17HowDeterministicGFG}.

In this section, we provide new general characterisations of "typeness" for "Muller" "transition systems". The main contributions of this section appear in Propositions~\ref{prop-typ:Rabin ACD type},~\ref{prop-typ:Streett ACD type} and~\ref{prop-typ:parity ACD type}, which characterise when a "Muller" "TS" can be relabelled with "equivalent@@acc" "parity", "Rabin", or "Streett" "conditions@@acc" in terms of properties of the "cycles" of the "TS". For instance, Proposition~\ref{prop-typ:Rabin ACD type} states that a "Muller" "TS" can be relabelled with an "equivalent@@acc" "Rabin" "condition@@acc" if and only if its "rejecting@@cycle" "cycles" are closed under union. The ``only if'' part of these results was already known~\cite{Loding1999Optimal}, but the fact that this is indeed a characterisation is a novel result, for which the use of the "ACD" is essential. 
These characterisations directly imply the results from~\cite{BKS10Paritizing,KPB94DetOmega,KPB95Structural}.
We also show how to use the "ACD" to determine the "parity index" of the "language recognised" by a "DMA" (Proposition~\ref{prop-typ:parity index-Muller-automaton}), which can be seen as a simplification of the results from~\cite[Section~3.2]{KPB95Structural}. 
Further results concerning "generalised B\"uchi languages" and "weak automata" can be found in Appendix~\ref{sec:appendix-weak-conditions}.

\subsubsection{Typeness for Muller languages}\label{subsubsec-typeness:Typeness_Conditions}

We first present some results proven by Zielonka~\cite[Section~5]{Zielonka1998infinite} that show how we can use the "Zielonka tree" to deduce if a "Muller language" is a "Rabin@@language", a "Streett@@language" or a "parity language". These results are generalised to "transition systems" in the next subsection. A study of further types of "Muller languages" can be found in Appendix~\ref{sec:appendix-weak-conditions}.

We do not include the proofs of the results of this section in the main body of the paper, as they are known results~\cite[Section~5]{Zielonka1998infinite} and they are special cases of the proofs in Section~\ref{subsubsec-typeness:Typeness_TS}. Nevertheless, we include them in Appendix~\ref{sec:appendix-typeness-proofs-Muller-conditions}.

We first introduce some definitions. The terminology will be justified by the upcoming results.

\begin{definition}\label{def:ShapesOfTrees}
	Let $T$ be a "tree" with nodes partitioned into "round" nodes and "square" nodes. We say that $T$ has:
	\begin{itemize}
		\item  \AP""Rabin shape"" if every "round node" has at most one "child".
		
		\item  \AP""Streett shape"" if every "square node" has at most one "child".
		
		\item  \AP""Parity shape"" if every node has at most one "child".
	\end{itemize}
\end{definition}

\begin{proposition}\label{prop-typ:RabinZielonkaShape}
	Let $\F\subseteq \powplus{\Gamma}$ be a family of non-empty subsets. The following conditions are equivalent:
	\begin{enumerate}
		\item $\MullerC{\F}{\GG}$ is a "Rabin language".
		\item $\powplus{\GG}\setminus \F$ is closed under union: If 
		$C_1\notin \F$ and $C_2\notin \F$, then $C_1\cup C_2\notin\F$.
		\item $\zielonkaTree{\F}$ has "Rabin shape". 
	\end{enumerate}
\end{proposition}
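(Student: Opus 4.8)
The plan is to prove the equivalence of the three conditions in a cycle $(1) \Rightarrow (2) \Rightarrow (3) \Rightarrow (1)$, working with the "Zielonka tree" $\zielonkaTree{\F}$ throughout.

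\textbf{From $(1)$ to $(2)$.} Suppose $\MullerC{\F}{\GG} = \RabinC{R}{\GG}$ for some family of "Rabin pairs" $R=\{(G_1,R_1),\dots,(G_r,R_r)\}$, so that $C\in \F$ if and only if $C$ is "accepted by the Rabin pair" $(G_j,R_j)$ for some $j$. Let $C_1, C_2 \notin \F$; I want to show $C_1 \cup C_2 \notin \F$. Suppose toward a contradiction that $C_1 \cup C_2 \in \F$, so it is accepted by some pair $(G_j, R_j)$, meaning $(C_1\cup C_2)\cap G_j \neq \emptyset$ and $(C_1\cup C_2)\cap R_j = \emptyset$. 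The second condition gives $C_i \cap R_j = \emptyset$ for $i=1,2$. The first condition gives $C_i \cap G_j \neq \emptyset$ for at least one $i$, and for that $i$ we conclude $C_i$ is accepted by $(G_j, R_j)$, hence $C_i \in \F$, a contradiction.

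\textbf{From $(2)$ to $(3)$.} Suppose $\powplus{\GG}\setminus \F$ is closed under union, and suppose toward a contradiction that some "round node" $n$ of $\zielonkaTree{\F}$ has two distinct "children" $n_1 \neq n_2$. Since $n$ is "round", $\nu(n) \in \F$, and by Remark~\ref{rmk-ZT:union-changes-acceptance} we have $\nu(n_1), \nu(n_2) \notin \F$ while $\nu(n_1)\cup\nu(n_2)\notin\F$ as well. But $\nu(n_1)$ and $\nu(n_2)$ are distinct maximal non-empty subsets of $\nu(n)$ not in $\F$, so $\nu(n_1)\cup\nu(n_2)$ is a non-empty subset of $\nu(n)$ strictly containing $\nu(n_1)$; by maximality of $\nu(n_1)$ it cannot fail to be in $\F$ unless it equals $\nu(n_1)$, which is impossible since it also contains $\nu(n_2)\not\subseteq\nu(n_1)$. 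This contradicts $\nu(n_1)\cup\nu(n_2)\notin\F$. Hence every "round node" has at most one "child", i.e.\ $\zielonkaTree{\F}$ has "Rabin shape".

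\textbf{From $(3)$ to $(1)$.} Suppose $\zielonkaTree{\F}$ has "Rabin shape". The natural candidate is the "ZT-HD-Rabin-automaton" $\zielonkaHDAutomaton{\F}$: when every "round node" has a single "child", the "round-branching width" $\memTree{\zielonkaTree{\F}}$ equals $1$, so $\zielonkaHDAutomaton{\F}$ has a single state and is therefore "deterministic"; being a "Rabin" "automaton" "recognising" $\MullerC{\F}{\SS}$ (Proposition~\ref{prop-ZT:correctness_ZT_HD-Rabin}), it exhibits $\MullerC{\F}{\SS}$ as the "Rabin language" read off its single-state "Rabin" "condition". Alternatively, and more self-containedly, one can define the "Rabin pairs" directly from the "Zielonka tree" as in Definition~\ref{def-ZT: HD_Rabin_ZielonkaAutomaton}, indexed by the "round nodes", with $G_n = \{n\}$ and $R_n$ the set of nodes that are neither $n$ nor "descendants" of $n$, and use that along any "branch" of a "Rabin-shaped" tree there is a unique "deepest" "round node" $n_{\min}$ below which $\minf$ of a word lies; Remark~\ref{rmk-ZT:union-changes-acceptance} then shows $\minf(w)\in\F$ iff this node is "round", which is exactly acceptance by $(G_{n_{\min}}, R_{n_{\min}})$.

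The main obstacle is the direction $(3)\Rightarrow(1)$: one must carefully argue that a "Rabin-shaped" "Zielonka tree" really does yield a correct "Rabin" "condition" over $\GG$. The cleanest route is to invoke the already-proved correctness of $\zielonkaHDAutomaton{\F}$ together with the observation $\memTree{\zielonkaTree{\F}}=1$ in this case; this reduces the work to the purely combinatorial fact that "Rabin shape" is equivalent to "round-branching width" $1$, which is immediate from the recursive definition of $\memTree{\cdot}$ (the "round" case contributes a sum over children, forcing a single child, and the "square" case contributes a max, which is $1$ by induction). The remaining directions are routine unfoldings of the definitions and Remark~\ref{rmk-ZT:union-changes-acceptance}.
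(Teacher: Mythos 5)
Your implications $(1)\Rightarrow(2)$ and $(2)\Rightarrow(3)$ are correct (and in fact give a cleaner route through condition (2) than the paper, which only argues $(1)\Leftrightarrow(3)$ directly). One small slip there: Remark~\ref{rmk-ZT:union-changes-acceptance} gives $\nu(n_1)\cup\nu(n_2)\in\F$ when $n_1,n_2$ are distinct children of a round node, so the fact $\nu(n_1)\cup\nu(n_2)\notin\F$ that you contradict comes from hypothesis~(2), not from the Remark; the shortest version of this step is simply ``the Remark gives $\nu(n_1)\cup\nu(n_2)\in\F$, contradicting (2).''

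The genuine gap is in $(3)\Rightarrow(1)$, and it affects both of your routes. In Route A, ``Rabin shape'' does imply $\memTree{\zielonkaTree{\F}}=1$, but a one-state $\zielonkaHDAutomaton{\F}$ is \emph{not} deterministic in general: square nodes may still have several children, so the tree can have several leaves, all mapped to the unique state, and a single letter then yields several self-loops with different output colours. For instance, with $\GG=\{a,b\}$ and $\F=\{\{a\},\{b\}\}$ the tree has Rabin shape and two leaves, and the letter $a$ produces two distinct $a$-transitions from the single state. A non-deterministic (even history-deterministic) one-state Rabin automaton whose output alphabet is the node set $N$ does not witness that $\MullerC{\F}{\GG}$ is a ``Rabin language'' in the sense of the paper's definition, which requires Rabin pairs that are subsets of $\GG$ itself. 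Route B has the same alphabet mismatch: the pairs $G_n=\{n\}$, $R_n\subseteq N$ live over the nodes of the tree, and to evaluate them on a word $w\in\GG^\oo$ one must first translate $w$ into a sequence of nodes --- which is exactly what the (non-deterministic) automaton does --- so no Rabin condition over $\GG$ is actually produced. The paper closes this direction with an explicit construction over $\GG$: for each round node $n$, which under Rabin shape has at most one child $n'$, take $G_n=\nu(n)\setminus\nu(n')$ (or $\nu(n)$ if $n$ is a leaf) and $R_n=\GG\setminus\nu(n)$; then for $C\in\F$ one checks that $C$ is accepted by the pair of the deepest node $n$ with $C\subseteq\nu(n)$, and for $C\notin\F$ that no pair accepts it, using that under Rabin shape every rejecting subset of $\nu(n)$ is contained in the label of the unique child. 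You should replace both of your routes by such a construction (or by any argument genuinely yielding pairs over $\GG$).
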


\begin{proposition}\label{prop-typ:StreettZielonkaShape}
	Let $\F\subseteq \powplus{\Gamma}$ be a family of non-empty subsets. The following conditions are equivalent:
	\begin{enumerate}
		\item $\MullerC{\F}{\GG}$ is a "Streett language".
		\item The family $\F$ is closed under union.
		\item $\zielonkaTree{\F}$ has "Streett shape". 
	\end{enumerate}
	
\end{proposition}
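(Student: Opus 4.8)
The plan is to prove the cycle of implications $(1)\Rightarrow(2)\Rightarrow(3)\Rightarrow(1)$, exactly mirroring the Rabin case, and in fact to deduce the whole statement from Proposition~\ref{prop-typ:RabinZielonkaShape} by complementation. First I would recall that for any family $\F\subseteq\powplus{\GG}$, the Streett language $\MullerC{\F}{\GG}$ is the complement of the Muller language $\MullerC{\powplus{\GG}\setminus\F}{\GG}$, and that a Muller language $L$ is a Streett language if and only if its complement $\GG^\oo\setminus L$ is a Rabin language (this is immediate from the definitions of the Rabin and Streett acceptance sets, which are De Morgan duals of each other). Likewise, the Zielonka tree of $\powplus{\GG}\setminus\F$ is obtained from $\zielonkaTree{\F}$ simply by swapping the roles of round and square nodes (the labels and ancestor relation are unchanged). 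Under this swap, ``every square node has at most one child'' becomes ``every round node has at most one child''; that is, $\zielonkaTree{\F}$ has Streett shape iff $\zielonkaTree{\powplus{\GG}\setminus\F}$ has Rabin shape.

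With these observations in place, the three conditions of the present proposition for $\F$ correspond respectively to: $\MullerC{\F}{\GG}$ is Streett $\iff$ $\GG^\oo\setminus\MullerC{\F}{\GG}=\MullerC{\powplus{\GG}\setminus\F}{\GG}$ is Rabin; $\F$ closed under union $\iff$ $\powplus{\GG}\setminus(\powplus{\GG}\setminus\F)=\F$ is closed under union, which is exactly condition~(2) of Proposition~\ref{prop-typ:RabinZielonkaShape} applied to the family $\powplus{\GG}\setminus\F$; and $\zielonkaTree{\F}$ has Streett shape $\iff$ $\zielonkaTree{\powplus{\GG}\setminus\F}$ has Rabin shape. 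So each of (1), (2), (3) here is literally (1), (2), (3) of Proposition~\ref{prop-typ:RabinZielonkaShape} instantiated at $\powplus{\GG}\setminus\F$, and the equivalence follows.

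Alternatively, if one prefers a self-contained argument, I would give the direct proof. For $(1)\Rightarrow(2)$: if $C_1,C_2\in\F$ then, writing $\F=\bigcap_i\StreettC{\{(G_i,R_i)\}}{\GG}$ as an intersection over the Rabin pairs, each $C_j$ satisfies ``$C_j\cap G_i\neq\emptyset\Rightarrow C_j\cap R_i\neq\emptyset$''; one checks the same implication holds for $C_1\cup C_2$ by a short case analysis on whether $(C_1\cup C_2)\cap G_i$ is hit through $C_1$ or through $C_2$, using that in either case the corresponding $R_i$ is already hit. For $(2)\Rightarrow(3)$: if a square node $n$ (labelled $X\notin\F$) had two distinct children $n_1,n_2$ labelled $Y_1,Y_2\in\F$, then $Y_1\cup Y_2\subseteq X$ and $Y_1\cup Y_2\in\F$ by closure under union, contradicting the maximality of $Y_1$ (indeed $Y_1\subsetneq Y_1\cup Y_2$ would be a strictly larger subset of $X$ lying in $\F$), so every square node has at most one child. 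For $(3)\Rightarrow(1)$: this is the content we can borrow wholesale — a Zielonka tree of Streett shape yields, via the swap above and the construction behind Proposition~\ref{prop-typ:RabinZielonkaShape}, a Rabin language for the complement, hence $\MullerC{\F}{\GG}$ is Streett; one can also read off the explicit Streett pairs directly from the branches of $\zielonkaTree{\F}$. I expect the main (minor) obstacle to be nothing deep here, only bookkeeping: being careful about the edge cases of the root (which is round iff $\GG\in\F$) and the single-node tree, and making sure the complementation correspondence between the two Zielonka trees is stated precisely enough that the reduction to Proposition~\ref{prop-typ:RabinZielonkaShape} is airtight.
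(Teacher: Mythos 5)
Your proof is correct and takes essentially the same route as the paper, which likewise proves only the Rabin case (Proposition~\ref{prop-typ:RabinZielonkaShape}) and obtains the Streett case by duality: complementing $\F$ swaps round and square nodes in the Zielonka tree and exchanges Rabin and Streett languages. Your optional direct argument is also sound; just note that it (correctly) uses the standard Streett semantics, with consequent $\minf(w)\cap R_j\neq\emptyset$, which is what the paper intends even though its displayed definition of $\Streett{S}$ reads ``$=\emptyset$''.
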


\begin{proposition}\label{prop-typ:parityZielonkaShape}
	Let $\F\subseteq \powplus{\Gamma}$ be a family of non-empty subsets. The following conditions are equivalent:
	\begin{enumerate}
		\item $\MullerC{\F}{\GG}$ is a "parity language".
		\item Both $\F$ and $\powplus{\GG}\setminus \F$ are closed under union: If 
		$C_1\in \F \iff C_2\in \F$, then, $C_1\cup C_2\in \F \iff C_1\in \F$.
		\item $\zielonkaTree{\F}$ has "parity shape". 
	\end{enumerate}
	Moreover, if some of these conditions is satisfied, $\MullerC{\F}{\GG}$ is a "$[\minparityZ{\F},\maxparityZ{\F}]$-parity language".
\end{proposition}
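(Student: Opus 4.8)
The plan is to prove the cycle of implications $(1)\Rightarrow(2)\Rightarrow(3)\Rightarrow(1)$, with the last implication also yielding the refined statement that $\MullerC{\F}{\GG}$ is a $[\minparityZ{\F},\maxparityZ{\F}]$-parity language. Several pieces can be borrowed: the implications $(1)\Leftrightarrow(2)$ are essentially the conjunction of Propositions~\ref{prop-typ:RabinZielonkaShape} and~\ref{prop-typ:StreettZielonkaShape} together with the observation (Figure~\ref{fig-prelim:classes-acceptance}) that "parity languages" are exactly those that are simultaneously "Rabin" and "Streett". Indeed, a "parity language" is both a "Rabin" and a "Streett" language (this is folklore, following from the inclusions in the preliminaries), so $(1)$ implies both ``$\powplus{\GG}\setminus\F$ closed under union'' and ``$\F$ closed under union'', which is exactly $(2)$; and, conversely, if $(2)$ holds, then by Propositions~\ref{prop-typ:RabinZielonkaShape} and~\ref{prop-typ:StreettZielonkaShape} the tree $\zielonkaTree{\F}$ has both "Rabin shape" and "Streett shape", hence "parity shape", giving $(3)$. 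So the real content is $(3)\Rightarrow(1)$ with the quantitative refinement.

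For $(3)\Rightarrow(1)$, I would argue directly using the "ZT-parity-automaton". If $\zielonkaTree{\F}$ has "parity shape", every node has at most one child, so the tree is a single "branch" $n_0 \ancestor n_1 \ancestor \dots \ancestor n_d$, and consequently $\zielonkaAutomaton{\F}$ has exactly one state (its set of states being $\leaves(\zielonkaTree{\F})$, a singleton). A one-state automaton is trivially "deterministic", and its transitions simply read $a\in\SS$ and output $\parityNodes(\supp(q,a))$, where $q$ is the unique leaf $n_d$ and $\supp(n_d,a)$ is the deepest $n_i$ with $a\in\nu(n_i)$. Thus $\zielonkaAutomaton{\F}$ is literally the "Muller language associated to" $\F$ relabelled by the colouring $\phi\colon\GG\to[\minparityZ{\F},\maxparityZ{\F}]$, $\phi(a) = \parityNodes(\supp(n_d,a))$ (using $\GG=\SS$ as in the definition), and by Proposition~\ref{prop-ZT:correctness_ZT_parity} it "recognises" $\MullerC{\F}{\SS}$. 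Hence $\MullerC{\F}{\GG}$ is a "parity language", and in fact a "$[\minparityZ{\F},\maxparityZ{\F}]$-parity language" since the colours used lie in $[\minparityZ{\F},\maxparityZ{\F}]$; by Remark~\ref{rmk:parity_levels_ZT} this interval is $[0,d]$ if $\SS\in\F$ and $[1,d+1]$ if $\SS\notin\F$ where $d$ is the height of the branch minus one, consistent with the claimed refinement. This also matches the optimality statement of Theorem~\ref{thm-zt:optimality_ZTparity-priorities}.

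The step I expect to be the main (if modest) obstacle is making the folklore fact ``parity $=$ Rabin $\cap$ Streett, at the level of \emph{languages over a fixed colour set}'' airtight, rather than just over abstract alphabets: one must check that when $\MullerC{\F}{\GG}$ is presented both as a "Rabin language" and as a "Streett language" with families of "Rabin pairs" over $\GG$, the relevant closure-under-union properties of $\F$ and $\powplus{\GG}\setminus\F$ genuinely hold --- this is where a short direct computation with $\minf$ of words realising given subsets of colours is needed (for any $C\subseteq\GG$ non-empty there is $w$ with $\minf(w)=C$, so membership of $C$ in $\F$ is determined by whether the Rabin/Streett condition accepts such $w$, and then closure under union of accepted/rejected sets follows from the monotonicity of the Rabin and Streett acceptance criteria under union of $\minf$-sets). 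Alternatively, and perhaps more cleanly, one can sidestep this by proving $(2)\Rightarrow(3)$ directly: assuming both $\F$ and its complement closed under union, show by induction on the "height" of $\zielonkaTree{\F}$ that each node has at most one child --- if a node labelled $X$ had two children labelled $Y_1\neq Y_2$, then by Remark~\ref{rmk-ZT:union-changes-acceptance} $Y_1,Y_2$ lie on the same side of $\F$ while $Y_1\cup Y_2$ lies on the other side, contradicting the relevant closure hypothesis. That inductive argument is the safest route and avoids invoking the folklore identification at all; I would present the cycle as $(1)\Rightarrow(2)$ (parity implies Rabin and Streett, hence both closures), $(2)\Rightarrow(3)$ (the induction just described), and $(3)\Rightarrow(1)$ (the one-state "ZT-parity-automaton" argument with the colour-count refinement).
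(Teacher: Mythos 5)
Your proof is correct. It differs from the paper's in organization rather than substance. The paper proves $(1)\Leftrightarrow(3)$ directly: for $(1)\Rightarrow(3)$ it looks inside the label of an arbitrary node $n$, takes $p$ to be the least $\phi$-colour in $\nu(n)$ of parity opposite to the minimum, and argues that $\nu(n)\cap\{c \mid \phi(c)\geq p\}$ is the \emph{unique} maximal alternating subset, so $n$ has at most one child. You instead route this direction through condition $(2)$, via the cycle $(1)\Rightarrow(2)\Rightarrow(3)\Rightarrow(1)$. This detour has a genuine benefit: the paper's appendix proof never explicitly addresses item $(2)$, whereas your cycle covers all three items. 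Your $(3)\Rightarrow(1)$ is the same construction as the paper's --- the colouring $\phi(c)=\parityNodes(n_c)$ where $n_c$ is the deepest node of the single branch containing $c$ --- just packaged as the one-state ZT-parity-automaton and discharged by Proposition~\ref{prop-ZT:correctness_ZT_parity} instead of a direct verification; this also yields the $[\minparityZ{\F},\maxparityZ{\F}]$ refinement for free. Two minor simplifications: for $(2)\Rightarrow(3)$ no induction on height is needed, since Remark~\ref{rmk-ZT:union-changes-acceptance} applied to any node with two distinct children $Y_1\neq Y_2$ immediately places $Y_1,Y_2$ on one side of $\F$ and $Y_1\cup Y_2$ on the other, contradicting $(2)$; and for $(1)\Rightarrow(2)$ the cleanest argument is the one-line identity $\min \phi(C_1\cup C_2)=\min\{\min\phi(C_1),\min\phi(C_2)\}$, which avoids any appeal to the parity-equals-Rabin-intersect-Streett folklore and the fixed-alphabet subtlety you worry about.
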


\begin{corollary}
	A "Muller language" $L \subseteq \GG^\oo$ is a "parity language" if and only if it is both a "Rabin@@language" and  a "Streett language".
\end{corollary}

\subsubsection{Typeness for  Muller transition systems and deterministic automata}\label{subsubsec-typeness:Typeness_TS}
We start this subsection by introducing the necessary definitions about "equivalence of acceptance conditions" and "typeness". Then, we state and prove our main contributions concerning "typeness" of "transition systems".

\paragraph*{Equivalence of acceptance conditions and typeness.}
 Let $\TS_1=(G, \macc{1})$ and $\TS_2=(G, \macc{2})$ be two "transitions systems" over the same "underlying graph" $G$, with "acceptance conditions" $\macc{i} = (\gg_i, \GG_i, \WW_i)$, for $i \in \{1,2\}$.
\AP We say that $\macc{1}$ and $\macc{2}$ are ""equivalent over $G$"", written $\macc{1} \intro*\equivCond{G} \macc{2}$, if for all "runs" $\rr\in \Runs{G}$, $\rr$ is "accepting@@run" for $\TS_1$ if and only if it is "accepting@@run" for $\TS_2$; that is, $\gg_{1}(\rr)\in \WW_1 \iff \gg_{2}(\rr)\in \WW_2$.

\AP We write $\TS_1 \intro*\equivTrans \TS_2$ if $\TS_1$ and $\TS_2$ are "isomorphic@@TS". We recall that two "transition systems" are "isomorphic@@TS" if there is a "morphism of transition systems" $\pp\colon \TS_1\to \TS_2$ whose inverse is also a "morphism@@TS", that is, $\pp$ and $\inv{\pp}$ "preserve the acceptance of runs".

\begin{remark}\label{rmk-typ:condition-transferred-by-isomorphism}
	If $\pp\colon \TS_1 \to \TS_2$ is an "isomorphism@@TS", then $(\gg_2\circ \pp, \GG_2, \WW_2)$ is an "acceptance condition" over the "underlying graph" of $\TS_1$ that is equivalent to $(\gg_1, \GG_1, \WW_1)$ over this graph.
	
	Conversely, if two "acceptance conditions" $\macc{1}$ and $\macc{2}$ are "equivalent over a same graph~$G$", then the identity function is an "isomorphism@@TS" between $\TS_1=(G, \macc{1})$ and $\TS_2=(G, \macc{2})$.
\end{remark}

\AP For $X$ one of types of languages defined in Section~\ref{subsect-prelim:acceptance-conditions} (B\"uchi, parity, Muller, etc...), we say that a "transition system" $\TS$ is ""$X$ type"" if there exists an "isomorphic@@TS" "transition system" $\TS'\equivTrans \TS$ using an "$X$ acceptance condition". We note that, by the previous remark, in that case an "$X$ acceptance condition" can be defined directly over the "underlying graph" of $\TS$.

We remark that, given a "pointed graph" $G$ (whose states are "accessible"), the equivalence classes of "Muller" "acceptance conditions" for the relation $\equivCond{G}$ are given exactly by the mappings $f\colon \cycles{G} \to \{\mathrm{Accept}, \mathrm{Reject} \}$.

\paragraph*{The ACD determines the type of transition systems.}
\begin{definition}\label{def-cor:ShapesOfACD}
	Let $\TS$ be a "Muller" "transition system" with a set of states $V$. We say that its "alternating cycle decomposition" $\acd{\TS}$ is a:
	\begin{itemize}
		\item  \AP""Rabin ACD"" if for every state $v\in V$, the tree $\treeVertex{v}$ has "Rabin shape".
		
		\item  \AP""Streett ACD"" if for every state $v\in V$, the tree $\treeVertex{v}$ has "Streett shape".
		
		\item  \AP""Parity ACD"" if for every state $v\in V$, the tree $\treeVertex{v}$ has "parity shape".
		\item \AP ""$[0,d-1]$-parity ACD"" (resp. "$[1,d]$-parity ACD") if it is a "parity ACD", "trees" of $\acd{\TS}$ have "height" at most $d$ and "trees" of "height" $d$ are "positive@@tree" (resp. "negative@@tree").
	\end{itemize}
\end{definition}

\begin{remark}\label{rmk-typ:shape-parity-Street-Rabin-ACD}
 $\acd{\TS}$ is a "parity ACD" if and only if it is both a "Rabin@@ACD" and a "Streett ACD". 
\end{remark}

\begin{proposition}\label{prop-typ:Rabin ACD type}
	Let $\TS = (\underlyingGraph{\TS}, \macc{\TS})$ be a "Muller" "transition system" whose states are "accessible". The following conditions are equivalent:
	\begin{enumerate}
		\item $\TS$ is "Rabin type".
		\item For every pair of "rejecting cycles" $\ell_1, \ell_2\in \cycles{\TS}$  with some "state in common",  $\ell_1\cup \ell_2$ is a "rejecting cycle".
		\item $\acd{\TS}$ is a "Rabin ACD". 
	\end{enumerate}
\end{proposition}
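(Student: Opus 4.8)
The plan is to prove the chain of implications $(1)\Rightarrow(2)\Rightarrow(3)\Rightarrow(1)$, mirroring the structure of Proposition~\ref{prop-typ:RabinZielonkaShape} but working locally at each vertex via the "local Muller languages".

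\textbf{$(1)\Rightarrow(2)$.} Suppose $\TS$ is "Rabin type", so there is an "acceptance condition" $\macc{}' = (\gg', \GG', \RabinC{R}{\GG'})$ over $\underlyingGraph{\TS}$ equivalent to $\macc{\TS}$ over this graph. Let $\ell_1, \ell_2\in \cycles{\TS}$ be two "rejecting cycles" sharing a "state in common"; then $\ell_1\cup\ell_2$ is again a "cycle" (by the closure property of "cycles" through a common state recalled in Section~\ref{subsect-prelim:acceptance-conditions}). Using that $\RabinC{R}{\GG'}$ is "prefix-independent", the acceptance of a "cycle" $\ell$ under $\macc{}'$ depends only on $\gg'(\ell)$, and for a "Rabin" condition the complement is closed under union (Proposition~\ref{prop-typ:RabinZielonkaShape}, item 2): if $\gg'(\ell_1)$ and $\gg'(\ell_2)$ are rejected, so is $\gg'(\ell_1)\cup\gg'(\ell_2) = \gg'(\ell_1\cup\ell_2)$. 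Hence $\ell_1\cup\ell_2$ is "rejecting@@cycle" under $\macc{}'$, and therefore under $\macc{\TS}$ by equivalence.

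\textbf{$(2)\Rightarrow(3)$.} Assume condition (2). Fix a "recurrent" state $v$. By Lemma~\ref{lemma-acd:tree_q_ZT_local_Muller}, $\treeVertex{v}$ is the "Zielonka tree" of $\localMuller{v}{\TS}$, the family of subsets $\C\subseteq \cyclesState{\TS}{v}$ such that $\bigcup_{\ell\in\C}\ell$ is "accepting@@cycle". I claim the complement of $\localMuller{v}{\TS}$ is closed under union: if $\C_1, \C_2\subseteq \cyclesState{\TS}{v}$ both have $\bigcup\C_1, \bigcup\C_2$ rejecting, then since all these "cycles" contain $v$, their union $\bigcup(\C_1\cup\C_2) = (\bigcup\C_1)\cup(\bigcup\C_2)$ is a "cycle" through $v$, and by condition (2) it is "rejecting@@cycle", so $\C_1\cup\C_2\notin\localMuller{v}{\TS}$. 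By Proposition~\ref{prop-typ:RabinZielonkaShape}, $\zielonkaTree{\localMuller{v}{\TS}} = \treeVertex{v}$ has "Rabin shape". For "transient" vertices $\treeVertex{v}$ is a single node, which trivially has "Rabin shape". Thus $\acd{\TS}$ is a "Rabin ACD".

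\textbf{$(3)\Rightarrow(1)$.} Assume $\acd{\TS}$ is a "Rabin ACD", i.e.\ every "round node@@acd" has at most one "child". The idea is that the "ACD-HD-Rabin-transform" then has exactly the same "underlying graph" as $\TS$: if each $\treeVertex{v}$ has "Rabin shape", then by the definition of "round-branching width" (only the "round" case sums over children, and here each "round node" has at most one child) one gets $\memTree{\treeVertex{v}} = 1$ for every $v$, so the vertex set $V' = \bigcup_v \{v\}\times\{1,\dots,\memTree{\treeVertex{v}}\}$ of $\acdRabinTransform{\TS}$ is in bijection with $V$, and similarly for edges; moreover with $\memTree{\treeVertex{v}}=1$ the mapping $\eta_v$ is forced and the edge definition becomes deterministic, so $\pp\colon\acdRabinTransform{\TS}\to\TS$ is a "locally bijective morphism" (in fact an "isomorphism@@TS" of the underlying graphs). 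Then $\acdRabinTransform{\TS}$ is a "Rabin" "transition system" whose "underlying graph" is isomorphic to $\underlyingGraph{\TS}$, and by Proposition~\ref{prop-ACD:correctness_ACD-HD-Rabin-transform} the "HD mapping" $\pp$ preserves the "acceptance@@run" of "runs"; since $\pp$ is here also an "isomorphism@@TS" of underlying graphs, transferring the "Rabin" condition of $\acdRabinTransform{\TS}$ back along $\pp$ (Remark~\ref{rmk-typ:condition-transferred-by-isomorphism}) yields a "Rabin" "acceptance condition" over $\underlyingGraph{\TS}$ equivalent to $\macc{\TS}$. Hence $\TS$ is "Rabin type".

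\textbf{Main obstacle.} The delicate point is the passage $(3)\Rightarrow(1)$: one must verify cleanly that "Rabin shape" of all the $\treeVertex{v}$ forces $\memTree{\treeVertex{v}}=1$ and hence that $\acdRabinTransform{\TS}$ collapses to an "isomorphic@@TS" copy of $\TS$ carrying a genuine "Rabin" condition — rather than merely admitting an "HD mapping". An alternative, perhaps more robust route is to argue $(3)\Rightarrow(1)$ directly: build the "Rabin pairs" over $\underlyingGraph{\TS}$ indexed by "round nodes@@acd" $n$ of $\acd{\TS}$, setting $G_n$ to be the edges $e$ with $\suppAcd$-node equal to $n$ for some branch and $R_n$ the edges whose "deepest" node lies outside the subtree at $n$, exactly as in the definition of the "Rabin condition" of $\acdRabinTransform{\TS}$, and check via Lemma~\ref{lemma-acd:Rabin-acc-sequences-nodes-ACD}-style reasoning that a "cycle" $\ell$ is "accepting@@cycle" iff the unique $\ancestor$-minimal node among $\{\suppAcd(\cdot,e) : e\in\ell\}$ is "round@@acd"; the "Rabin shape" hypothesis is what makes this assignment well-defined as a bona fide "Rabin" condition on edges. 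I would present the collapse-of-$\acdRabinTransform{\TS}$ argument as the main line and mention the direct construction as the conceptual content.
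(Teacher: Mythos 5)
Your proof is correct and follows essentially the same route as the paper: the same cycle of implications, the same contradiction for $(2)\Rightarrow(3)$ via two children of a "round node@@acd" in $\treeVertex{v}$, and the same collapse of $\acdRabinTransform{\TS}$ to an "isomorphic@@TS" copy of $\TS$ (via $\memTree{\treeVertex{v}}=1$) for $(3)\Rightarrow(1)$. The only cosmetic difference is that you factor $(1)\Rightarrow(2)$ and $(2)\Rightarrow(3)$ through Proposition~\ref{prop-typ:RabinZielonkaShape} and the "local Muller languages" of Lemma~\ref{lemma-acd:tree_q_ZT_local_Muller}, whereas the paper argues both directly with "Rabin pairs" and cycles; the step you flag as delicate is exactly the one the paper dispatches in one line, and your handling of it is fine.
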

\begin{proof}
	\begin{description}
		\item[($1 \Rightarrow 2$)] 
		Let $\macc{R} = (\gg, \GG, \Rabin{R}$ be the "Rabin" "acceptance condition" "equivalent to@@condition" $\macc{\TS}$, and let $R = (G_1,R_1),\dots,(G_r,R_r)$ be its "Rabin pairs". Let $\ell_1$ and $\ell_2$ be two "cycles" with a state in common, and suppose that $\ell_1\cup \ell_2$ is "accepting@@cycle"; we show that either $\ell_1$ or $\ell_2$ is "accepting@@cycle". The "cycle" $\ell_1\cup \ell_2$ is "accepted by@@Rabinpair" some "Rabin pair" $(G_j,R_j)$, so for all edges $e\in \ell_1\cup \ell_2$, $\gg(e)\notin R_j$, and there is some $e_0\in \ell_1\cup \ell_2$ such that $\gg(e_0)\in G_j$. If $e_0$ belongs to $\ell_1$, then $\ell_1$ is "accepted by@@Rabinpair" the "Rabin pair" $(G_j,R_j)$, and if $e_0\in \ell_2$, then $\ell_2$ is "accepted@@Rabinpair" by it.
		
		\item[($2 \Rightarrow 3$)] Let $v$ be a vertex of $\TS$ and $\treeVertex{v}$ the "local subtree at $v$". Suppose that there is a "round node@@acd" $n\in \treeVertex{v}$ with two different children $n_1$ and $n_2$. The "cycles" $\nuAcd(n_1)$ and $\nuAcd(n_2)$ are "rejecting cycles" over $v$, but their union is an "accepting cycle" (by Remark~\ref{rmk-acd:union-changes-acceptance}).
		
		\item[($3 \Rightarrow 1$)] We observe that $\acd{\TS}$ is a "Rabin ACD" if and only if $\memTree{\treeVertex{v}} = 1$ for all vertices $v$ of $\TS$. In particular, the "ACD-HD-Rabin-transform" of $\TS$ does not add any state to $\TS$. It is immediate to check that the "morphism" $\pp\colon \acdRabinTransform{\TS} \to \TS$ given by $\pp_V(v,x)=v$, $\pp_E(e,l)=e$ defined in the proof of Proposition~\ref{prop-ACD:correctness_ACD-HD-Rabin-transform} is an "isomorphism", and $\TS$ uses a "Rabin" "acceptance condition".\qedhere
	\end{description}
\end{proof}

\begin{proposition}\label{prop-typ:Streett ACD type}
	Let $\TS = (\underlyingGraph{\TS}, \macc{\TS})$ be a "Muller" "transition system". The following conditions are equivalent:
	\begin{enumerate}
		\item $\TS$ is "Streett type".
		\item For every pair of "accepting cycles" $\ell_1, \ell_2\in \cycles{\TS}$  with some "state in common", $\ell_1\cup \ell_2$ is an "accepting cycle".\footnotemark{}
		\item $\acd{\TS}$ is a "Streett ACD". 
	\end{enumerate}
\footnotetext{This property was introduced by Le Saëc under the name \emph{cyclically closed automata}~\cite{Saec90Saturating}. We point out that the ``if'' direction of the result stated in~\cite[Theorem~5.2]{Saec90Saturating} does not hold. That statement can be rephrased as: If a "DMA" $\A$ is cyclically closed, then the "parity index" of $\A$ is $[0,1]$. We refer to Proposition~\ref{prop-typ:parity index-Muller-automaton} for a correct characterisation.}
\end{proposition}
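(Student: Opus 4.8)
The plan is to prove Proposition~\ref{prop-typ:Streett ACD type} by mimicking the structure of the proof of Proposition~\ref{prop-typ:Rabin ACD type}, exploiting the duality between Rabin and Streett conditions and between accepting and rejecting cycles. Concretely, I would establish the cycle $(1)\Rightarrow(2)\Rightarrow(3)\Rightarrow(1)$.

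For $(1)\Rightarrow(2)$: Suppose $\TS$ uses an "equivalent@@acc" "Streett" "acceptance condition" $\macc{S} = (\gg, \GG, \Streett{S})$ with "Rabin pairs" $S = \{(G_1,R_1),\dots,(G_r,R_r)\}$. Let $\ell_1, \ell_2$ be "accepting cycles" with a "state in common", so their union $\ell_1\cup\ell_2$ is again a "cycle". Suppose toward a contradiction that $\ell_1\cup\ell_2$ is "rejecting@@cycle". Then some "Rabin pair" $(G_j,R_j)$ is violated by $\ell_1\cup\ell_2$, meaning $\gg(\ell_1\cup\ell_2)\cap G_j\neq\emptyset$ while $\gg(\ell_1\cup\ell_2)\cap R_j=\emptyset$. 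Pick an edge $e_0\in \ell_1\cup\ell_2$ with $\gg(e_0)\in G_j$; it lies in $\ell_1$ or in $\ell_2$, say $\ell_1$. But then $\gg(\ell_1)\cap G_j\neq\emptyset$ and $\gg(\ell_1)\cap R_j\subseteq \gg(\ell_1\cup\ell_2)\cap R_j=\emptyset$, so $\ell_1$ violates the same pair and is therefore "rejecting@@cycle", a contradiction. This is the exact dual of the argument in Proposition~\ref{prop-typ:Rabin ACD type}.

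For $(2)\Rightarrow(3)$: I would argue contrapositively. If $\acd{\TS}$ is not a "Streett ACD", there is a vertex $v$ and a "square node@@acd" $n$ in $\treeVertex{v}$ with two distinct "children" $n_1, n_2$. By definition of $\altTree{\ell}$ and Remark~\ref{rmk-acd:union-changes-acceptance}, $\nuAcd(n_1)$ and $\nuAcd(n_2)$ are "accepting cycles" over $v$ (hence with $v$ as a "state in common"), while $\nuAcd(n_1)\cup\nuAcd(n_2)$ is a "rejecting cycle" (it is a subcycle strictly between $\nuAcd(n_1)$ and $\nuAcd(n)$, so has the acceptance status of $\nuAcd(n)$, which is "rejecting@@cycle"). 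This contradicts $(2)$.

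For $(3)\Rightarrow(1)$: This is the part where I expect the most care is needed, since unlike the Rabin case there is no "ACD-HD-Streett-transform" available as a black box — the paper only builds the "ACD-HD-Rabin-transform". The observation is that $\acd{\TS}$ being a "Streett ACD" means, by Proposition~\ref{prop-typ:StreettZielonkaShape} applied to each "local subtree" $\treeVertex{v}$ (which is the "Zielonka tree" of $\localMuller{v}{\TS}$ by Lemma~\ref{lemma-acd:tree_q_ZT_local_Muller}), that each "local Muller language" $\localMuller{v}{\TS}$ is a "Streett language". The natural route is then to pass to the complement: by Remark~\ref{rmk-acd:complementation-ACD}, $\acd{\complTS{\TS}}$ is a "Rabin ACD", so by the already-proven Proposition~\ref{prop-typ:Rabin ACD type}, $\complTS{\TS}$ is "Rabin type", i.e. there is a "Rabin" "acceptance condition" $(\gg, \GG, \Rabin{R})$ equivalent over $\underlyingGraph{\TS}$ to the complemented "acceptance set". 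Taking the same family of "Rabin pairs" $R$ and reading it as a "Streett" "acceptance set" $\Streett{R}$ over $\underlyingGraph{\TS}$ yields a condition equivalent to the original $\macc{\TS}$, because a "run" $\rr$ is "rejecting@@run" for $\complTS{\TS}$ iff it is "accepting@@run" for $\TS$, and $\gg(\rr)\in\RabinC{R}{\GG}$ iff $\gg(\rr)\notin\StreettC{R}{\GG}$ (Streett is the complement of Rabin on the same pairs, up to the caveat about finite runs, which are "rejecting@@run" for both by convention — one should check the "prefix-independence" and the no-uncoloured-cycle hypothesis keep this clean). Hence $\TS$ is "Streett type". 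The main obstacle is making sure this complementation argument is airtight regarding the "initial vertices", the treatment of "transient" vertices (which do not appear in $\acd{\TS}$ but whose incident edges must still be coloured consistently), and the edge case of finite runs; I would either spell these out or, to keep the proof self-contained and symmetric, note that all three are handled exactly as in the dual statement and cite Remark~\ref{rmk-acd:complementation-ACD} together with Proposition~\ref{prop-typ:Rabin ACD type}.
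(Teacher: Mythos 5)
Your proof is correct and follows essentially the same route as the paper: implications $(1)\Rightarrow(2)$ and $(2)\Rightarrow(3)$ are the exact duals of the corresponding steps in Proposition~\ref{prop-typ:Rabin ACD type}, and $(3)\Rightarrow(1)$ is handled, as in the paper, by passing to $\complTS{\TS}$ via Remark~\ref{rmk-acd:complementation-ACD}, invoking the Rabin case, and reading the resulting family of "Rabin pairs" as a "Streett" "acceptance condition". The extra caveats you flag at the end (initial vertices, "transient" vertices, finite runs) are harmless and are not treated any more explicitly in the paper's own argument.
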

\begin{proof}
	Implications ($1 \Rightarrow 2$) and ($2 \Rightarrow 3$) are analogous to those from Proposition~\ref{prop-typ:Rabin ACD type}.
	\begin{description}		
		\item[($3 \Rightarrow 1$)] We consider the "transition system" $\complTS{\TS}$ obtained by complementing the "acceptance set" of $\macc{\TS}$. By Remark~\ref{rmk-acd:complementation-ACD}, the "ACD" of $\complTS{\TS}$ is obtained from $\acd{\TS}$ by turning "round nodes@@acd" into "square nodes@@acd" and vice-versa. Thus, the "ACD" of $\complTS{\TS}$ is a "Rabin ACD", and by applying the previous proposition we can define a "Rabin condition" $\macc{R} = (\gg, \GG, \RabinC{R}{\GG})$ such that the  "transition system" $(\underlyingGraph{\TS}, \macc{R})$ is "isomorphic@@TS" to $\complTS{\TS}$. Since $\StreettC{R}{\GG}$  is the complement language of $\RabinC{R}{\GG}$, we obtain that $\macc{S} = (\gg, \GG, \StreettC{R}{\GG})$ is a "Streett" "acceptance condition" "equivalent to@@accCond" $\macc{\TS}$ over $\underlyingGraph{\TS}$.\qedhere
	\end{description}
\end{proof} 

\begin{proposition}\label{prop-typ:parity ACD type}
	Let $\TS = (\underlyingGraph{\TS}, \macc{\TS})$ be a "Muller" "transition system". The following conditions are equivalent:
	\begin{enumerate}
		\item $\TS$ is "parity type".
		\item For every pair of "accepting@@cycle" (resp. "rejecting@@cycle") "cycles" $\ell_1, \ell_2\in \cycles{\TS}$  with some "state in common",  $\ell_1\cup \ell_2$ is an "accepting@@cycle" (resp. "rejecting@@cycle") "cycle".
		\item $\acd{\TS}$ is a "parity ACD". 
	\end{enumerate}
	Moreover, if some condition is satisfied, $\TS$ is "$[0,d-1]$ (resp. $[1,d]$)-parity type" if and only if $\acd{\TS}$ is a "$[0,d-1]$(resp. $[1,d]$)-parity ACD".
\end{proposition}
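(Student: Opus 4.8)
The plan is to combine the two preceding typeness characterisations (Propositions~\ref{prop-typ:Rabin ACD type} and~\ref{prop-typ:Streett ACD type}) with Remark~\ref{rmk-typ:shape-parity-Street-Rabin-ACD}, and then sharpen the argument to track the exact parity index. First I would prove the equivalence of the three conditions. For ($2 \Rightarrow 3$) and ($1 \Rightarrow 2$), the arguments are exactly as in the Rabin/Streett cases: if $\TS$ is "parity type", it is in particular both "Rabin type" and "Streett type", so by those propositions condition~(2) holds for both "accepting@@cycle" and "rejecting@@cycle" "cycles"; and if condition~(2) holds, then no node of any $\treeVertex{v}$ can have two children (a "round node@@acd" with two children would give two "rejecting@@cycle" "cycles" over $v$ whose union is "accepting@@cycle", contradicting the rejecting part of~(2) via Remark~\ref{rmk-acd:union-changes-acceptance}, and symmetrically for "square nodes@@acd"), so $\acd{\TS}$ is a "parity ACD". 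For ($3 \Rightarrow 1$): if $\acd{\TS}$ is a "parity ACD" then it is both a "Rabin ACD" and a "Streett ACD" (Remark~\ref{rmk-typ:shape-parity-Street-Rabin-ACD}), so by the ($3 \Rightarrow 1$) directions of Propositions~\ref{prop-typ:Rabin ACD type} and~\ref{prop-typ:Streett ACD type} one can define both a "Rabin condition" $\macc{R}$ and a "Streett condition" $\macc{S}$ over $\underlyingGraph{\TS}$ "equivalent to@@accCond" $\macc{\TS}$. A language that is simultaneously "Rabin@@language" and "Streett@@language" is a "parity language" (Corollary after Proposition~\ref{prop-typ:parityZielonkaShape}), which gives the result at the level of the induced local conditions; alternatively, and more directly, when $\acd{\TS}$ is a "parity ACD" every $\treeVertex{v}$ has "parity shape" so $\memTree{\treeVertex{v}}=|\leaves(\treeVertex{v})|=1$, hence the "ACD-parity-transform" $\acdParityTransform{\TS}$ adds no states and the "locally bijective morphism" $\pp\colon \acdParityTransform{\TS}\to\TS$ of Proposition~\ref{prop-ACD:correctness_ACD-parity_transform} is an "isomorphism@@TS", exhibiting $\TS$ as a "parity" "TS".

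For the "moreover" part, the key observation is that the "ACD-parity-transform" $\acdParityTransform{\TS}$ uses exactly the colours $[\minparityAcd{\TS},\maxparityAcd{\TS}]$, and, as recorded in Section~\ref{subsec-acd: parity-transformation}, $\maxparityAcd{\TS}$ equals the maximal "height" of a "tree" in $\acd{\TS}$, while $\minparityAcd{\TS}=0$ if $\acd{\TS}$ is "positive@@acd" and $\minparityAcd{\TS}=1$ if it is "negative@@acd". So if $\acd{\TS}$ is a "$[0,d-1]$-parity ACD" (resp. "$[1,d]$-parity ACD"), then by definition "trees" have "height" at most $d$ and those of "height" $d$ are "positive@@tree" (resp. "negative@@tree"); since $\TS$ is "parity type", $\acd{\TS}$ is in particular "positive@@acd" or "negative@@acd" (it cannot be "equidistant@@acd" — an "equidistant@@acd" decomposition would, by Lemma~\ref{lemma-acd:parity index-muller-TS}, force $\LangTS{\TS}$ to have "parity index" of the form $\WeakIndex{\cdot}$, which is impossible for a "parity" "TS" with accessible states by Remark~\ref{rmk-acd:priorities-accessible-from-vertex} and the Flower Lemma, or simply because $\treeVertex{v}$ has "parity shape" so the local conditions are genuinely "parity languages"). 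Hence $\acdParityTransform{\TS}$ is a "parity" "TS" "isomorphic@@TS" to $\TS$ whose "acceptance condition" is a "$[0,d-1]$"- (resp. "$[1,d]$")-parity condition, giving the "if" direction.

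For the converse direction of the "moreover", suppose $\TS$ is "$[0,d-1]$-parity type" (resp. "$[1,d]$-parity type"), i.e. there is an "isomorphic@@TS" "TS" $\TS'\equivTrans\TS$ using a "$[0,d-1]$"- (resp. "$[1,d]$")-parity "acceptance condition". Then the "locally bijective morphism" witnessing $\TS'\equivTrans\TS$ is in particular an "HD mapping", so by Theorem~\ref{thm-acd:optimality-priorities_ACD-parity_transform} the "acceptance condition" of $\acdParityTransform{\TS}$ uses at most $d$ colours, forcing $\maxparityAcd{\TS}-\minparityAcd{\TS}\le d-1$, i.e. "trees" of $\acd{\TS}$ have "height" at most $d$. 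Moreover Theorem~\ref{thm-acd:optimality-priorities_ACD-parity_transform}, refined through Lemma~\ref{lemma-acd:parity index-muller-TS} and Proposition~\ref{prop-prelim:parity_index-HDAutomata}, pins down which endpoint is attained: the "parity index" of $\LangTS{\TS}$ is $[0,d-1]$ exactly when $\acd{\TS}$ is "positive@@acd" (equivalently, "trees" of maximal "height" are "positive@@tree"), and $[1,d]$ exactly when it is "negative@@acd"; since $\widetilde{\TS}=\autMorphism{\Id{\TS'}}$ recognises $\LangTS{\TS}$ with a "$[0,d-1]$"- (resp. "$[1,d]$")-parity condition, the endpoint of the "parity index" matches, so $\acd{\TS}$ is a "$[0,d-1]$"- (resp. "$[1,d]$")-parity ACD.

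The step I expect to be the main obstacle is the correct bookkeeping of the "positive@@acd"/"negative@@acd" dichotomy in the "moreover" part: one must be careful that "parity type" rules out the "equidistant@@acd" case, and that the parity of the attained extreme colour is matched not just up to the number of colours but with the right least colour. The cleanest way to handle this is to route everything through Lemma~\ref{lemma-acd:parity index-muller-TS} (which computes the "parity index" of $\LangTS{\TS}$ directly from $\acd{\TS}$) together with Proposition~\ref{prop-prelim:parity_index-HDAutomata}, rather than re-deriving the colour count by hand; once the "parity index" of $\LangTS{\TS}$ is identified, both directions of the "moreover" are immediate from the definition of "$[0,d-1]$-parity ACD" and the fact, already in hand from the equivalence $(1)\Leftrightarrow(3)$, that $\acd{\TS}$ is a "parity ACD".
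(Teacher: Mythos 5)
Your proof of the equivalence of the three conditions is correct and follows the paper's route. The paper obtains ($1 \Rightarrow 2$) directly from Lemma~\ref{lemma-ZT:union_cycles_parity-TS} rather than by factoring through Propositions~\ref{prop-typ:Rabin ACD type} and~\ref{prop-typ:Streett ACD type}, but both arguments are sound; and your ($3 \Rightarrow 1$) --- observing that $\acdParityTransform{\TS}$ adds no states when every $\treeVertex{v}$ has a single leaf, so the morphism of Proposition~\ref{prop-ACD:correctness_ACD-parity_transform} is an isomorphism --- is exactly the paper's argument.

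The gap is in the \emph{moreover} part, precisely at the step you flagged as the main obstacle: the claim that a parity-type transition system cannot have an equidistant alternating cycle decomposition is false. Take $\TS$ to be the disjoint union (joined by transient edges from a common initial vertex) of a B\"uchi-type strongly connected component and a coB\"uchi-type strongly connected component: every $\treeVertex{v}$ is a single branch, so $\acd{\TS}$ is a parity ACD and $\TS$ is parity type (colours in $[0,2]$ suffice), yet $\acd{\TS}$ contains a positive tree and a negative tree both of maximal height $2$, hence is equidistant, and the parity index of $\LangTS{\TS}$ is $\WeakIndex{2}$, not of the form $[0,\cdot]$ or $[1,\cdot]$. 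Your converse argument rests on the dichotomy that the parity index of $\LangTS{\TS}$ is $[0,d-1]$ exactly when $\acd{\TS}$ is positive and $[1,d]$ exactly when it is negative, which breaks down here. The proposition nevertheless remains true in the equidistant case, but only because the clause ``trees of height $d$ are positive (resp.\ negative)'' is then satisfied vacuously ($d$ strictly exceeds the maximal height of the trees) --- precisely the case your argument does not cover. The paper's own proof sidesteps the positive/negative/equidistant trichotomy entirely: for the ``if'' direction it suffices to check that the colouring of $\acdParityTransform{\TS}$ lands inside $[0,d-1]$ (resp.\ $[1,d]$) whenever $\acd{\TS}$ is a $[0,d-1]$- (resp.\ $[1,d]$-)parity ACD, and for the ``only if'' direction it invokes the optimality of the number of colours of $\acdParityTransform{\TS}$ (Theorem~\ref{thm-acd:optimality-priorities_ACD-parity_transform}), applied to the isomorphism $\TS' \equivTrans \TS$ viewed as an HD mapping. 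Rewriting your moreover argument along those lines --- bounding the colour range of $\acdParityTransform{\TS}$ case by case rather than identifying a single endpoint of the parity index --- repairs the proof.
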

\begin{proof}
	\begin{description}
		\item[($1 \Rightarrow 2$)] Proven in Lemma~\ref{lemma-ZT:union_cycles_parity-TS}.
		
		\item[($2 \Rightarrow 3$)] Admits an analogous proof to the corresponding implication in Proposition~\ref{prop-typ:Rabin ACD type}.
		
		\item[($3 \Rightarrow 1$)] By definition, $\acd{\TS}$ is a "parity ACD" if and only if $\leaves(\treeVertex{v})$ is a singleton for each vertex $v$ of $\TS$. In particular, the "ACD-parity-transform" of $\TS$ does not add any state to $\TS$. It is immediate to check that the "morphism" $\pp\colon \acdParityTransform{\TS} \to \TS$ defined in the proof of Proposition~\ref{prop-ACD:correctness_ACD-parity_transform} is an "isomorphism". Therefore, $\TS$ and $\acdParityTransform{\TS}$ are "isomorphic@@TS" "transition systems", and the latter uses a "parity" "acceptance condition" that is a $[0,d-1]$ (resp. $[1,d]$)-parity condition if $\acd{\TS}$ is a "$[0,d-1]$ (resp. $[1,d]$)-parity ACD". If $\acd{\TS}$ is not a "$[0,d-1]$(resp. $[1,d]$)-parity ACD", then the number of colours cannot be reduced by the optimality of the number of colours of the "ACD-parity-transform" (Theorem~\ref{thm-acd:optimality-priorities_ACD-parity_transform}).\qedhere
	\end{description}
\end{proof}

\begin{corollary}\label{cor-cor:parityTS-iff-RabinAndStreettTS}
	A "Muller" "transition system" is "parity type" if and only if it is both "Rabin@@type" and "Streett type". 
\end{corollary}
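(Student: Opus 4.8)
The plan is to derive this corollary directly from the characterisations established in Propositions~\ref{prop-typ:Rabin ACD type}, \ref{prop-typ:Streett ACD type}, and~\ref{prop-typ:parity ACD type}, using the structural observation recorded in Remark~\ref{rmk-typ:shape-parity-Street-Rabin-ACD}. Let $\TS$ be a "Muller" "transition system"; we may assume all its states are "accessible" (otherwise restrict to the "accessible part", which changes none of the relevant "cycles" or "trees" of the "ACD"). The three propositions each characterise the corresponding "typeness" property by a shape condition on $\acd{\TS}$: $\TS$ is "parity type" iff $\acd{\TS}$ is a "parity ACD", $\TS$ is "Rabin type" iff $\acd{\TS}$ is a "Rabin ACD", and $\TS$ is "Streett type" iff $\acd{\TS}$ is a "Streett ACD".

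First I would invoke Remark~\ref{rmk-typ:shape-parity-Street-Rabin-ACD}, which states that $\acd{\TS}$ is a "parity ACD" if and only if it is both a "Rabin@@ACD" and a "Streett ACD". (This is in turn immediate from Definition~\ref{def-cor:ShapesOfACD} together with Definition~\ref{def:ShapesOfTrees}: a "tree" $\treeVertex{v}$ has "parity shape" — every node has at most one "child" — precisely when it has both "Rabin shape" — every "round node" has at most one "child" — and "Streett shape" — every "square node" has at most one "child"; and the ACD-level conditions are simply the conjunctions of these over all vertices $v$.) Chaining the equivalences: $\TS$ is "parity type" $\iff$ $\acd{\TS}$ is a "parity ACD" $\iff$ $\acd{\TS}$ is both a "Rabin ACD" and a "Streett ACD" $\iff$ $\TS$ is both "Rabin@@type" and "Streett type". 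This completes the argument.

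There is no real obstacle here, as all the substance is contained in the preceding propositions; the only point requiring a word of care is the "accessible" hypothesis, which is needed to apply Proposition~\ref{prop-typ:Rabin ACD type} (the other two propositions are stated without it). Since adding or removing non-"accessible" states affects neither whether $\TS$ is "$X$ type" nor the "ACD", this is harmless. Alternatively, one could bypass even this by noting that the cycle-theoretic conditions in item~2 of each of the three propositions also combine trivially: the set of "accepting@@cycle" (resp. "rejecting@@cycle") "cycles" sharing a common state being closed under union is exactly the conjunction of the "Streett type" condition (item~2 of Proposition~\ref{prop-typ:Streett ACD type}) and the "Rabin type" condition (item~2 of Proposition~\ref{prop-typ:Rabin ACD type}), which gives item~2 of Proposition~\ref{prop-typ:parity ACD type} verbatim.
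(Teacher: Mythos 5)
Your proof is correct and follows exactly the route the paper intends: the corollary is stated immediately after Propositions~\ref{prop-typ:Rabin ACD type}, \ref{prop-typ:Streett ACD type} and~\ref{prop-typ:parity ACD type} and Remark~\ref{rmk-typ:shape-parity-Street-Rabin-ACD}, and follows by chaining those equivalences (equivalently, by conjoining the cycle-closure conditions), just as you do. Your remark about the "accessible" hypothesis is a reasonable point of care but does not change the substance.
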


\paragraph*{The ACD and the parity index of $\oo$-regular languages.}
\begin{proposition}\label{prop-typ:parity index-Muller-automaton}
	Let $\A$ be a "deterministic" "Muller" "automaton" whose states are "accessible". Then, the "parity index" of $\Lang{\A}$ is:
	\begin{itemize}
		\item $[0,d-1]$ (resp. $[1,d]$) if and only if:
		\begin{itemize}
			\item "trees" of $\acd{\A}$ have "height" at most $d$, 
			\item there is at least one "tree" of "height" $d$, and
			\item "trees" of "height" $d$ are "positive@@tree" (resp. "negative@@tree").
		\end{itemize} 
		\item $\WeakIndex{d}$ if and only if: 
		\begin{itemize}
			\item "trees" of $\acd{\A}$ have "height" at most $d$, 
			\item there is at least one "positive@@tree" "tree" of "height" $d$, and
			\item there is at least one "negative@@tree" "tree" of "height" $d$.
		\end{itemize} 
	\end{itemize}
\end{proposition}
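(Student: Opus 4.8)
The plan is to derive this characterisation of the parity index of $\Lang{\A}$ directly from the Flower Lemma~\ref{lemma:flower-lemma} together with the relation between branches of the "ACD" and "flowers" established in Lemma~\ref{lemma-acd:branch-ACD-induces-flower}. Since $\A$ is a "DMA", Lemma~\ref{lemma:flower-lemma} tells us that the "parity index" of $\Lang{\A}$ is governed by the existence of "accessible" "positive@@flower" and "negative@@flower" "$d$-flowers" in $\A$. The key observation is that these flowers are, up to "accessibility", exactly the chains of alternating "cycles" recorded along branches of the trees of $\acd{\A}$. So the whole argument reduces to translating "flower" statements into "ACD" statements and back.

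First I would prove the forward direction for the $[0,d-1]$ case (the $[1,d]$ case being symmetric). Suppose "trees" of $\acd{\A}$ have "height" at most $d$, there is a "tree" $\altTree{\ell}$ of "height" exactly $d$, and all "trees" of "height" $d$ are "positive@@tree". By Lemma~\ref{lemma-acd:branch-ACD-induces-flower}, a "positive@@tree" "tree" of "height" $d$ yields a "positive@@flower" "$d$-flower"; this flower is "accessible" because all states of $\A$ are "accessible". Hence by Lemma~\ref{lemma:flower-lemma}, $\Lang{\A}$ has "parity index at least" $[0,d-1]$. For the upper bound, note that the "ACD-parity-transform" $\acdParityTransform{\A}$ is a "DPA" "recognising" $\Lang{\A}$ (Proposition~\ref{prop-ACD:correctness_ACD-parity_transform} gives a "locally bijective morphism", so languages coincide), and since all "trees" have "height" at most $d$ and those of "height" $d$ are "positive@@tree", the "ACD" is "positive@@acd" or "equidistant@@acd" but with no "negative@@tree" tree of maximal "height" $d$; in this situation $\minparityAcd{\A}=0$ and $\maxparityAcd{\A}=d-1$, so $\acdParityTransform{\A}$ uses colours in $[0,d-1]$. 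This establishes the "parity index" is exactly $[0,d-1]$.

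For the $\WeakIndex{d}$ case, suppose "trees" have "height" at most $d$ and there exist both a "positive@@tree" and a "negative@@tree" "tree" of "height" $d$. By Lemma~\ref{lemma-acd:branch-ACD-induces-flower} we get both an "accessible" "positive@@flower" "$d$-flower" and an "accessible" "negative@@flower" "$d$-flower", so by Lemma~\ref{lemma:flower-lemma} the "parity index is not less" than $\WeakIndex{d}$; and in this case $\acd{\A}$ is "equidistant@@acd", so $\acdParityTransform{\A}$ uses colours in $[0,d]$ (and symmetrically a relabelling uses $[1,d+1]$), giving "parity index" exactly $\WeakIndex{d}$. Conversely, given the "parity index" of $\Lang{\A}$, one reads off constraints on the "ACD": from Lemma~\ref{lemma-acd:parity index-muller-TS}, the "parity index" of $\LangTS{\A}$ — and here we should instead argue directly with $\Lang{\A}$ via Lemma~\ref{lemma:flower-lemma} — any "tree" of "height" $d'$ produces a "$d'$-flower" of the appropriate sign, which forces $d' \le d$ (where $[0,d-1]$, $[1,d]$ or $\WeakIndex{d}$ is the index), and the maximal "height" must be attained, with the sign of maximal "trees" pinned down by whether a "positive" or "negative" (or both) "flower" of that length exists, using again Lemma~\ref{lemma:flower-lemma}'s converse. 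Since by the remark after Definition~\ref{def:parity_index_language} exactly one of the three options holds, matching each option against the three cases of the statement completes the proof.

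The main obstacle I anticipate is the converse direction: one needs that if $\Lang{\A}$ has, say, "parity index" $[0,d-1]$, then \emph{every} "tree" of maximal "height" in $\acd{\A}$ is "positive@@tree" (not just that some "positive" one exists). The subtle point is ruling out a "negative@@tree" "tree" of "height" $d$: such a tree would give an "accessible" "negative@@flower" "$d$-flower", which combined with the "positive@@flower" "$d$-flower" would push the index to $\WeakIndex{d}$ by Lemma~\ref{lemma:flower-lemma}, contradicting that the index is $[0,d-1]$ (these options being mutually exclusive). So the argument is really a careful bookkeeping of which of the three mutually exclusive cases of Definition~\ref{def:parity_index_language} is compatible with the multiset of "heights" and signs of the "trees" of $\acd{\A}$, invoking the Flower Lemma in both directions. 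I would present this as a clean case split rather than grinding through the combinatorics.
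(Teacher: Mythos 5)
Your proof is correct and follows essentially the same route as the paper's: the right-to-left implications come from the ACD-parity-transform (plus the shifted colouring for the $[1,d+1]$ automaton in the $\WeakIndex{d}$ case) for the upper bound and from Lemma~\ref{lemma-acd:branch-ACD-induces-flower} together with the Flower Lemma~\ref{lemma:flower-lemma} for the lower bound, and the left-to-right implications follow because the three ACD cases and the three parity-index cases are each mutually exclusive and exhaustive. The only blemishes are cosmetic: the aside about $\LangTS{\A}$ is an unnecessary detour you already retract, and ``positive or equidistant but with no negative tree of maximal height'' should simply read ``positive@@acd'', since equidistant by definition requires a negative tree of maximal height.
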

\begin{proof}
	We prove the right-to-left implication for the case $\WeakIndex{d}$.
	Assume that $\acd{\A}$ verifies the previous list of conditions (in particular, it is "equidistant@@acd"). Then, the "ACD-parity-transform" $\acdParityTransform{\A}$ is a "DPA" "recognising" $\Lang{\A}$ using colours in $[0,d]$. In order to obtain a "DPA" for $\Lang{\A}$ with colours in $[1,d+1]$ we need to introduce a small modification to the function $\parityNodesAcd$. For $\ell_i$ a maximal "cycle" of $\A$ and $n\in N_{\ell_i}$ we define: \begin{itemize}
		\item $\parityNodesAcd'(n) = \depth(n)+2$, if $\ell_i$ is "accepting@@cycle",
		\item $\parityNodesAcd'(n) = \depth(n) + 1$, if $\ell_i$ is "rejecting@@cycle".
	\end{itemize}
	It is a routine check to see that the version of the "ACD-parity-transform" using $\parityNodesAcd'$ is indeed a correct "parity" "automaton" using colours in $[1,d+1]$.
	
	To prove that no "DPA" "recognising" $L$ uses less than $d$ colours, it suffices to use the Flower Lemma~\ref{lemma:flower-lemma} and the fact that a branch of length $d$ in a "tree" of the "ACD" induces a "$d$-flower" in $\A$, which is "positive@@flower" if and only if the corresponding tree is "positive@@tree" (Lemma~\ref{lemma-acd:branch-ACD-induces-flower}).
	
	This is indeed a complete characterisation, since for any "ACD" there is a minimal $d$ such that $\acd{\A}$ lies in one and only one of the classes specified in the statement of the proposition.
\end{proof}

\begin{proposition}\label{prop-typ:min-colours-Muller-automata}
	Let $L\subseteq \SS^\oo$ be an "$\oo$-regular language" of "parity index at least" $[0,d-1]$ (resp. $[1,d]$). Any "history-deterministic" "Muller" "automaton" "recognising" $L$ uses an "acceptance condition" with at least $d$ different "output colours". 
\end{proposition}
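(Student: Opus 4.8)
The idea is to reduce to the already-established lower bound for history-deterministic \emph{parity} automata, namely Proposition~\ref{prop-prelim:parity_index-HDAutomata}, by passing through the ACD-parity-transform. Let $\A$ be a history-deterministic Muller automaton recognising $L$, and suppose its acceptance condition uses $c$ distinct output colours; we must show $c\geq d$. Without loss of generality (Remark~\ref{rmk-prelim:restriction-to-accessible} and Lemma~\ref{lemma-prelims:prefix-indep-automaton}) all states of $\A$ are accessible. Consider $\acdParityTransform{\A}$, its ACD-parity-transform: by Proposition~\ref{prop-ACD:correctness_ACD-parity_transform} there is a locally bijective morphism $\pp\colon\acdParityTransform{\A}\to\A$, hence (via the corollaries at the end of Section~\ref{subsec-acd: parity-transformation}, i.e.\ Proposition~\ref{prop-morph:HD mappings-preserve-languages}) $\acdParityTransform{\A}$ is an HD parity automaton recognising $L$. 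By Proposition~\ref{prop-prelim:parity_index-HDAutomata}, since $L$ has parity index at least $[0,d-1]$ (resp.\ $[1,d]$), the parity acceptance condition of $\acdParityTransform{\A}$ uses at least $d$ output colours; that is, $\maxparityAcd{\A}-\minparityAcd{\A}+1\geq d$, which by the construction of $\parityNodesAcd$ means that $\acd{\A}$ contains a tree $\altTree{\ell}$ of height at least $d$.

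The remaining point is to turn this height bound on the ACD of $\A$ into a lower bound on the number of colours \emph{$\A$ itself} uses. Here I would argue directly with the Flower Lemma~\ref{lemma:flower-lemma}-style reasoning, but adapted to count colours rather than priorities. Fix a branch $n_1\ancestor n_2\ancestor\dots\ancestor n_d$ of length $d$ in $\altTree{\ell}$, and pick a vertex $v\in\nuStates(n_d)$; by Remark~\ref{rmk-acd:treeVertex-closed-by-ancestor} the whole branch lies in $\treeVertex{v}$, so we get a chain of cycles $\ell_1=\nuAcd(n_1)\supsetneq\ell_2=\nuAcd(n_2)\supsetneq\dots\supsetneq\ell_d=\nuAcd(n_d)$, all passing through $v$, alternating between accepting and rejecting (Remark~\ref{rmk-acd:union-changes-acceptance}). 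The key observation is that each strict inclusion $\ell_{i}\supsetneq\ell_{i+1}$ witnesses a change of acceptance status, so the colouring $\gg$ of $\A$ must ``see'' something different on $\ell_i\setminus\ell_{i+1}$ than on $\ell_{i+1}$: more precisely, for a Muller condition $(\gg,\GG,\MullerC{\F}{\GG})$, the acceptance status of a cycle $\ell$ depends only on $\gg(\ell)\subseteq\GG$, and since $\gg(\ell_1)\supseteq\gg(\ell_2)\supseteq\dots\supseteq\gg(\ell_d)$ with $\gg(\ell_i)\in\F\iff\gg(\ell_{i+1})\notin\F$, consecutive sets in this chain must be \emph{strictly} nested, i.e.\ $\gg(\ell_1)\supsetneq\gg(\ell_2)\supsetneq\dots\supsetneq\gg(\ell_d)$. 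A strictly decreasing chain of subsets of $\GG$ of length $d$ forces $|\GG|\geq d-1$; to get $|\GG|\geq d$ (i.e.\ $c\geq d$) one notes that $\gg(\ell_d)$ itself must be nonempty, since every cycle of $\A$ contains a coloured edge (by the standing assumption that no cycle is entirely $\ee$-coloured), so in fact $\GG\supseteq\gg(\ell_1)\supsetneq\dots\supsetneq\gg(\ell_d)\supsetneq\emptyset$ is a strict chain with $d+1$ members, giving $|\GG|\geq d$.

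The main obstacle, I expect, is the last counting step done cleanly: one must be careful that the colours counted are genuinely distinct \emph{output colours} of $\A$ and that the strict nesting $\gg(\ell_i)\supsetneq\gg(\ell_{i+1})$ really does follow (it would fail if $\gg(\ell_i)=\gg(\ell_{i+1})$, but then $\ell_i$ and $\ell_{i+1}$ would have the same acceptance status, contradicting the alternation built into the ACD). An alternative, perhaps smoother, route would be to bypass the Flower-Lemma argument entirely and instead invoke the earlier observation (Remark~\ref{rmk-acd:oblivious-to-representation}) that the ACD depends only on the partition of cycles into accepting/rejecting, together with the fact that each output colour of $\A$ can be re-read as at most one ``level'' of distinction — but the explicit chain-of-cycles argument above is self-contained and matches the style of the proof of Theorem~\ref{thm-zt:optimality_ZTparity-priorities}, so I would present that one. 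Finally, one should remark that the statement also follows as a special case of the general optimality theorem once one knows $\acd{\A}$ has a tree of height $d$, since the height of the tallest tree in $\acd{\A}$ is a lower bound already implicit in the construction of $\parityNodesAcd$; but since the proposition is about arbitrary HD Muller automata and not transformations of a fixed one, the reduction through $\acdParityTransform{\A}$ plus the colour-counting argument is the honest path.
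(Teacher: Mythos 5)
Your final counting step is sound and is in fact the paper's own argument: along a branch $n_1\ancestor\dots\ancestor n_d$ the colour sets $\gg(\nuAcd(n_i))$ are nested and must be \emph{strictly} nested because consecutive nodes have opposite acceptance status, and the innermost one is non-empty because no cycle is entirely $\ee$-coloured; this gives $|\GG|\geq d$. The problem is the step that produces the branch. You infer ``$\acdParityTransform{\A}$ uses at least $d$ colours, hence $\acd{\A}$ contains a tree of height at least $d$,'' but this implication fails when $\acd{\A}$ is "equidistant@@acd": there $\minparityAcd{\A}=0$ while $\maxparityAcd{\A}=h_{\max}$ (attained on a \emph{negative} tree of maximal height $h_{\max}$, whose leaves get $\parityNodesAcd = \depth+1$), so the number of colours is $h_{\max}+1$ and you only obtain $h_{\max}\geq d-1$. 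Since $\acdParityTransform{\A}$ then uses colours $[0,h_{\max}]$ with least colour $0$ even, Proposition~\ref{prop-prelim:parity_index-HDAutomata} does not exclude $h_{\max}=d-1$ when the index of $L$ is exactly $[0,d-1]$. Note also that Proposition~\ref{prop-typ:parity index-Muller-automaton}, which would rule this configuration out, is only available for \emph{deterministic} automata; for an HD automaton the shape of the ACD is not determined by $\Lang{\A}$ (the paper explicitly warns that HD automata can carry flowers unrelated to their language, and conversely nothing you have cited shows they must carry the flowers the language demands).

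The paper closes exactly this gap by a different reduction: take a "sound@@aut" "resolver@@aut" "implemented by@@resolver" a finite "memory structure" $(\M,\nextmoveResolver)$ (Lemma~\ref{lemma-ZT:finite-memory-resolvers}) and pass to the \emph{deterministic} Muller automaton $\A\prodMem{\nextmoveResolver}\M$, which recognises $L$ and uses the same set of output colours as $\A$. For this deterministic automaton, Proposition~\ref{prop-typ:parity index-Muller-automaton} (equivalently, the Flower Lemma) genuinely yields a tree of height at least $d$ in its ACD in all cases, including the $\WeakIndex{}$/equidistant ones, and your colour-set chain argument then applies verbatim there. So either adopt that determinisation step, or supply a separate argument that an HD Muller automaton for $L$ must itself contain a positive $d$-flower (e.g.\ by projecting a flower of $\A\prodMem{\nextmoveResolver}\M$ through $\piAut$ and observing that the strictly decreasing colour sets force the projected cycles to be distinct); as written, the height bound on $\acd{\A}$ is not established.
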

\begin{proof}
	We first prove the result for "deterministic" automata. Let $\A$ be a "DMA" "recognising" $L$ using the "acceptance condition" $(\gg, \GG, \MullerC{\F}{\GG})$. By Proposition~\ref{prop-typ:parity index-Muller-automaton}, there is a "tree" $\altTree{\ell_i}$ in the "ACD" of $\A$ of "height" at least $d$. 
	\AP We define $\intro*\coloursNodesAcd\colon \nodesAcd{\TS} \to \GG$ to be the function that assigns to each node of the "ACD" the colours appearing in it, that is: $\coloursNodesAcd(n) = \gg(\nuAcd(n))$. We remark that if $n'$ is a "descendant" of $n$ then $\coloursNodesAcd(n')\subseteq \coloursNodesAcd(n)$, and that a node $n$ is "round@@acd" if and only if $\coloursNodesAcd(n)\in \F$. Therefore, by the alternation of "round@@acd" and "square@@acd" nodes, if $n'$ is a strict "descendent" of $n$, $\coloursNodesAcd(n')\subsetneq \coloursNodesAcd(n)$. We conclude that the root of $\altTree{\ell_i}$ must contain at least $d$ different colours.
	
	In order to obtain the result for "history-deterministic" automata we use "finite-memory resolvers" as defined in Section~\ref{subsec-zt: parity automaton}. If $\A$ is a "history-deterministic" "Muller" "automaton", it admits a "sound@@aut" "resolver@@aut" "implemented by@@aut" a finite "memory structure" $(\M, \nextmoveResolver)$ (Lemma~\ref{lemma-ZT:finite-memory-resolvers}). Then, the "composition@@memory" $\A \prodMem{\ss} \M$ is a "DMA" using the same number of colours, that has to be at least~$d$.
\end{proof}

The following result (which was already known, as it is a consequence of the construction by Carton and Maceiras~\cite{CartonMaceiras99RabinIndex}), is refined and proven in Appendix~\ref{sec:appendix-weak-conditions} (Corollary~\ref{cor-app-weak:index-parity-type}).
\begin{proposition}
	Let $\A$ be a "deterministic" "parity" "automaton" such that all its states are "accessible" and the "parity index" of $\Lang{\A}$ is $[0,d-1]$ (resp. $[1,d]$). 
	Then, $\A$ is $[0,d-1]$ (resp. $[1,d]$)-"parity type". 
\end{proposition}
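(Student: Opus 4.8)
The final statement asserts: a deterministic parity automaton $\A$ whose states are all accessible, recognising a language of parity index $[0,d-1]$ (resp. $[1,d]$), is itself $[0,d-1]$ (resp. $[1,d]$)-parity type. So I want to show that one can relabel the underlying graph of $\A$ with an equivalent parity condition over the colour set $[0,d-1]$ (resp. $[1,d]$).

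The natural plan is to invoke the machinery already developed: the ACD of $\A$ and Proposition~\ref{prop-typ:parity ACD type} together with Proposition~\ref{prop-typ:parity index-Muller-automaton}. First, since $\A$ is a (deterministic) parity automaton, it is in particular a parity-type Muller automaton; equivalently, by Proposition~\ref{prop-typ:parity ACD type}, $\acd{\A}$ is a parity ACD — every local subtree $\treeVertex{v}$ has parity shape. (One can also see this via Remark~\ref{rmk-acd:acd-transform-parity-is-parity}: $\acdParityTransform{\A}$ is isomorphic to $\A$.) Next, apply Proposition~\ref{prop-typ:parity index-Muller-automaton} to $\A$: since the parity index of $\Lang\A$ is $[0,d-1]$ (resp. $[1,d]$), the trees of $\acd{\A}$ have height at most $d$, there is at least one tree of height exactly $d$, and trees of height $d$ are positive (resp. negative) — i.e. $\acd{\A}$ is a $[0,d-1]$-parity ACD (resp. $[1,d]$-parity ACD) in the sense of Definition~\ref{def-cor:ShapesOfACD}. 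Then the "moreover" part of Proposition~\ref{prop-typ:parity ACD type} gives exactly that $\A$ is $[0,d-1]$ (resp. $[1,d]$)-parity type, which is the claim.

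So the proof is essentially: chain Proposition~\ref{prop-typ:parity ACD type} (direction $1\Rightarrow 3$, applied to the parity automaton $\A$ viewed as a Muller automaton) with Proposition~\ref{prop-typ:parity index-Muller-automaton} (to pin down the height and polarity of the tallest trees in $\acd{\A}$), and then the refined "moreover" clause of Proposition~\ref{prop-typ:parity ACD type} (direction $3\Rightarrow 1$, with the colour count). Concretely, I would write: $\acd{\A}$ is a parity ACD because $\A$ carries a parity acceptance condition and, by Lemma~\ref{lemma-ZT:union_cycles_parity-TS}, unions of accepting (resp. rejecting) cycles sharing a state stay accepting (resp. rejecting), so condition~2 of Proposition~\ref{prop-typ:parity ACD type} holds; hence by Proposition~\ref{prop-typ:parity index-Muller-automaton} the ACD is a $[0,d-1]$- (resp. $[1,d]$)-parity ACD; hence by the last sentence of Proposition~\ref{prop-typ:parity ACD type} the automaton is $[0,d-1]$- (resp. $[1,d]$)-parity type.

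The only mild subtlety — and the single point I would spend a sentence on — is the matching of the "at least" hypothesis: Proposition~\ref{prop-typ:parity index-Muller-automaton} is phrased for the exact parity index, so I should note that a deterministic parity automaton $\A$ recognising $L$ with exactly the index $[0,d-1]$ of $L$ automatically satisfies its hypotheses; and conversely the Carton–Maceiras normal form already produces such an $\A$, but here we are handed an arbitrary $\A$ of that index, so we just apply Proposition~\ref{prop-typ:parity index-Muller-automaton} directly to $\A$. There is no real obstacle — everything needed is a black-box consequence of the earlier propositions; the main "work" is just correctly bookkeeping the even/odd polarity (positive tree $\leftrightarrow$ $[0,d-1]$, negative tree $\leftrightarrow$ $[1,d]$) so that the relabelling uses the claimed colour range and not a shifted one.
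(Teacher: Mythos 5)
Your proposal is correct and follows essentially the same route as the paper's proof (given in Appendix~B as Corollary~\ref{cor-app-weak:index-parity-type}): apply Proposition~\ref{prop-typ:parity index-Muller-automaton} to conclude that the trees of $\acd{\A}$ have height at most $d$ with the tallest ones positive (resp.\ negative), observe that $\acd{\A}$ is a parity ACD since $\A$ is already a parity automaton, and conclude via the ``moreover'' clause of Proposition~\ref{prop-typ:parity ACD type}. Your explicit remark that the parity-ACD shape must be checked separately (via Lemma~\ref{lemma-ZT:union_cycles_parity-TS}) before invoking the $[0,d-1]$-parity-ACD notion is a small point the paper glosses over, but the argument is the same.
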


The previous result does not hold for "history-deterministic" automata, as we could artificially add transitions augmenting the complexity of the structure of the automaton (enlarging the "flowers" of the automaton) without modifying the language it "recognises". Nevertheless, some analogous results applying to "HD automata" can be obtained.  Boker, Kupferman and Skrzypczak proved that any "HD" "parity" "automaton" "recognising" a language of "parity index" $[0,1]$ (resp. $[1,2]$) admits an "equivalent@@aut" "HD@@aut" "subautomaton" using a "B\"uchi" (resp. "coB\"uchi") "condition@@acc"~\cite[Theorems~10 and~13]{BKS17HowDeterministicGFG}. We do not know whether the result holds for languages of arbitrary "parity index".

\paragraph*{Typeness for deterministic automata.}
Two "automata" $\A_1$ and $\A_2$ such that $\A_1 \equivTrans \A_2$ recognise the same language: $\Lang{\A_1} = \Lang{\A_2}$. However, the converse only holds for "deterministic" automata.

\begin{lemma}\label{lemma-typ:equivalece_automata_languages}
	Let $\A_1$ and $\A_2$ be two "deterministic automata" over the same "underlying graph" and with the same "labelling" by "input letters". Then, $\Lang{\A_1} = \Lang{\A_2}$ if and only if $\A_1 \equivTrans \A_2$.
\end{lemma}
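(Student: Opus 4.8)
The forward direction is trivial: an isomorphism of transition systems preserves acceptance of runs in both directions, and both automata being deterministic over the same underlying labelled graph, the unique run over any word $w$ in $\A_1$ is mapped to the unique run over $w$ in $\A_2$, so $w$ is accepted by one iff accepted by the other. Hence $\A_1 \equivTrans \A_2$ implies $\Lang{\A_1} = \Lang{\A_2}$. (This is the observation already recorded just before the lemma; it holds even without determinism.)

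For the converse, the plan is to show that if $\Lang{\A_1} = \Lang{\A_2}$ then the identity map on the common underlying graph is an isomorphism of transition systems, i.e.\ that $\macc{1} \equivCond{G} \macc{2}$ where $G$ is the shared underlying (labelled) pointed graph. Since the two automata share the underlying graph, the labelling by input letters, and the set of initial states, the only thing to verify is that a run $\rr \in \Runs{G}$ is accepting for $\A_1$ iff it is accepting for $\A_2$. By Remark~\ref{rmk-typ:condition-transferred-by-isomorphism}, this equivalence of acceptance conditions over $G$ is exactly the same as saying the identity is an isomorphism, giving $\A_1 \equivTrans \A_2$.

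So suppose, for contradiction, that there is an infinite run $\rr = e_0 e_1 e_2 \dots \in \Runs{G}$ that is accepting in $\A_1$ but rejecting in $\A_2$ (the symmetric case is identical). Let $w \in \SS^\oo$ be the word obtained by reading the input letters along $\rr$, that is $w_i = l_\SS(e_i)$. Because $\A_1$ is deterministic and $G$ is its underlying graph, $\rr$ is \emph{the} unique run over $w$ in $\A_1$, and it is accepting, so $w \in \Lang{\A_1}$. Because $\A_2$ is deterministic over the same graph with the same input labelling and the same initial state, $\rr$ is also the unique run over $w$ in $\A_2$; since $\rr$ is rejecting in $\A_2$, we get $w \notin \Lang{\A_2}$, contradicting $\Lang{\A_1} = \Lang{\A_2}$. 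The only subtlety is ensuring the run actually exists over $w$ in $\A_2$ and is the same path: this uses that $\A_1$ and $\A_2$ literally share the pointed graph $G$, so the sequence $\rr$ is a run in $G$ and therefore a run in both automata; determinism (together with completeness, which we may assume by Remark following the definition of complete automata, or handle directly since a run over $w$ is given) guarantees uniqueness.

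\textbf{Main obstacle.} There is no serious obstacle; the only point requiring care is the bookkeeping that "same underlying graph and same input labelling" means a path in $G$ is simultaneously a run in $\A_1$ and in $\A_2$, and that determinism forces this common path to be the unique run over its input word in each. Once this is spelled out, the contradiction argument closes immediately, and the equivalence of the two acceptance conditions over $G$ translates, via Remark~\ref{rmk-typ:condition-transferred-by-isomorphism}, into the isomorphism $\A_1 \equivTrans \A_2$.
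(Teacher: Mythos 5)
Your proof is correct and follows essentially the same argument as the paper: the key observation in both is that, since the two automata share the underlying pointed graph and input labelling and are both deterministic, any infinite run is simultaneously the unique run over its input word in $\A_1$ and in $\A_2$, so its acceptance status must agree with membership of that word in the common language. The only cosmetic difference is that you phrase the converse as a proof by contradiction and make explicit the passage from equivalence of acceptance conditions over $G$ to the isomorphism via Remark~\ref{rmk-typ:condition-transferred-by-isomorphism}, which the paper leaves implicit.
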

\begin{proof}
	The implication from right to left is trivial. For the other implication, suppose that $\Lang{\A_1} = \Lang{\A_2}$, and let $\rr \in \Runs{\A_1}=\Runs{\A_2}$ be an infinite "run" over the "underlying graph" of $\A_1$. Let $w\in \SS^\oo$ be the word over the "input alphabet" $\SS$ labelling the run $\rr$. Since $\A_1$ and $\A_2$ are "deterministic", $\rr$ is the only "run over" $w$, and therefore:
	\[ \rr \text{ is "accepting@@run" for } \A_1 \iff w\in \Lang{\A_1}=\Lang{\A_2} \iff \rr \text{ is "accepting@@run" for } \A_2. \qedhere\]
\end{proof}

\begin{corollary}[First proven in~{\cite[Theorem~7]{BKS10Paritizing}}]
	Let $\underlyingGraph{\A}$ be the "underlying graph" of a "deterministic automaton". Then, there are "Rabin" and "Streett" conditions $\macc{R}$ and $\macc{S}$ such that $\Lang{\underlyingGraph{\A},\macc{R}} = \Lang{\underlyingGraph{\A},\macc{S}}$ if and only if there is a "parity condition" $\macc{p}$ such that $\Lang{\underlyingGraph{\A},\macc{p}}=\Lang{\underlyingGraph{\A},\macc{R}} = \Lang{\underlyingGraph{\A},\macc{S}}$.
\end{corollary}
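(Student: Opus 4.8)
The plan is to reduce this to the corresponding statements about transition systems (Corollary~\ref{cor-cor:parityTS-iff-RabinAndStreettTS}), using the fact, established in Lemma~\ref{lemma-typ:equivalece_automata_languages}, that for deterministic automata, language equivalence over a fixed underlying graph and fixed input labelling coincides with the relation $\equivTrans$ of being isomorphic transition systems.

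First I would translate the hypothesis. Suppose $\underlyingGraph{\A}$ is the underlying graph of a deterministic automaton and there are Rabin and Streett conditions $\macc{R}$, $\macc{S}$ with $\Lang{\underlyingGraph{\A},\macc{R}} = \Lang{\underlyingGraph{\A},\macc{S}}$. Both $(\underlyingGraph{\A},\macc{R})$ and $(\underlyingGraph{\A},\macc{S})$ are deterministic automata over the same underlying graph with the same input labelling, so by Lemma~\ref{lemma-typ:equivalece_automata_languages} we get $(\underlyingGraph{\A},\macc{R}) \equivTrans (\underlyingGraph{\A},\macc{S})$; in fact by Remark~\ref{rmk-typ:condition-transferred-by-isomorphism} the identity is an isomorphism, so $\macc{R} \equivCond{\underlyingGraph{\A}} \macc{S}$. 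Thus the transition system $\TS := (\underlyingGraph{\A},\macc{R})$ is simultaneously Rabin type and Streett type (one should also reduce to the accessible part of $\underlyingGraph{\A}$, since Proposition~\ref{prop-typ:Rabin ACD type} is stated for accessible states — restricting to the accessible part changes neither the recognised language nor the existence of the conditions, and accessible vertices are exactly those on which runs live). By Corollary~\ref{cor-cor:parityTS-iff-RabinAndStreettTS}, $\TS$ is then parity type, meaning there is a parity condition $\macc{p}$ such that $(\underlyingGraph{\A},\macc{p}) \equivTrans \TS$, hence $\macc{p} \equivCond{\underlyingGraph{\A}} \macc{R} \equivCond{\underlyingGraph{\A}} \macc{S}$ by Remark~\ref{rmk-typ:condition-transferred-by-isomorphism}. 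Since $\macc{p}$, $\macc{R}$, $\macc{S}$ are equivalent conditions over $\underlyingGraph{\A}$, and the underlying graph and input labelling are fixed, the three automata $(\underlyingGraph{\A},\macc{p})$, $(\underlyingGraph{\A},\macc{R})$, $(\underlyingGraph{\A},\macc{S})$ recognise the same language, which is the desired conclusion.

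The converse direction is immediate and essentially definitional: if $\macc{p}$ is a parity condition with $\Lang{\underlyingGraph{\A},\macc{p}} = \Lang{\underlyingGraph{\A},\macc{R}} = \Lang{\underlyingGraph{\A},\macc{S}}$, then in particular $\macc{R}$ and $\macc{S}$ are Rabin and Streett conditions witnessing the left-hand side of the biconditional — so one simply keeps the same $\macc{R}$ and $\macc{S}$. (One could even remark that a parity condition is both a Rabin and a Streett condition by Proposition~\ref{prop-typ:parityZielonkaShape} applied to the acceptance set, but this is not needed since the statement already supplies $\macc{R}$ and $\macc{S}$.)

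The only genuinely delicate point is bookkeeping rather than mathematics: making sure that ``equivalent over $\underlyingGraph{\A}$'' really does transfer along the chain $\macc{p}, \macc{R}, \macc{S}$ and that language equality over a fixed deterministic underlying graph is an equivalence relation compatible with $\equivCond{\underlyingGraph{\A}}$ — both of which follow directly from Lemma~\ref{lemma-typ:equivalece_automata_languages} and Remark~\ref{rmk-typ:condition-transferred-by-isomorphism} — together with the reduction to the accessible part needed to invoke Proposition~\ref{prop-typ:Rabin ACD type} (and hence Corollary~\ref{cor-cor:parityTS-iff-RabinAndStreettTS}). So the proof is short: it is really just ``apply Lemma~\ref{lemma-typ:equivalece_automata_languages} to pass from languages to transition systems, apply Corollary~\ref{cor-cor:parityTS-iff-RabinAndStreettTS}, then apply Lemma~\ref{lemma-typ:equivalece_automata_languages} again to pass back to languages.''
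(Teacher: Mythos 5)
Your proof is correct and follows exactly the route the paper intends: pass from language equality to equivalence of acceptance conditions over the fixed deterministic underlying graph via Lemma~\ref{lemma-typ:equivalece_automata_languages}, invoke Corollary~\ref{cor-cor:parityTS-iff-RabinAndStreettTS}, and pass back. Your extra care about restricting to the accessible part is a sensible bit of bookkeeping that the paper leaves implicit.
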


We remark that the hypothesis of "determinism" in the previous corollary is necessary, as it has been shown that an analogous result does not hold for "non-deterministic" "automata"~\cite{BKS10Paritizing}.

\begin{proposition}[First proven in~{\cite[Theorem~15]{KPB94DetOmega}}]
	Let $\A$ be a "deterministic" "Rabin" (resp. Streett) "automaton", and assume that $\Lang{\A}$ can be "recognised" by a "deterministic" "B\"uchi" (resp. "coB\"uchi") "automaton"; that is, the "parity index" of $\Lang{\A}$ is at most $[0,1]$ (resp. at most $[1,2]$). Then,~$\A$ is "B\"uchi type" (resp. "coB\"uchi type").
\end{proposition}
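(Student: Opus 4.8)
The plan is to read the answer off the shape of the "alternating cycle decomposition" of $\A$, combining the characterisation of "Rabin type" "transition systems" (Proposition~\ref{prop-typ:Rabin ACD type}) with the characterisation of the "parity index" of $\Lang{\A}$ in terms of $\acd{\A}$ (Proposition~\ref{prop-typ:parity index-Muller-automaton}). We may assume without loss of generality that all states of $\A$ are "accessible" (so that Proposition~\ref{prop-typ:parity index-Muller-automaton} applies). I would carry out the "Rabin"/"B\"uchi" case in detail; the "Streett"/"coB\"uchi" case is obtained either by the dual argument (exchanging the roles of "round node@@acd" and "square node@@acd" nodes) or by passing to the complement automaton $\complTS{\A}$, which is a "Rabin" automaton "recognising" a "B\"uchi" language by Remark~\ref{rmk-acd:complementation-ACD} and the duality between "Streett" and "Rabin" conditions.

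First I would collect the structural constraints on $\acd{\A}$. Since $\A$ is a "deterministic" "Rabin" automaton, Proposition~\ref{prop-typ:Rabin ACD type} gives that $\acd{\A}$ is a "Rabin ACD": in every "local subtree at" $v$, every "round node@@acd" has at most one "child". Since $\Lang{\A}$ has "parity index" at most $[0,1]$, Proposition~\ref{prop-typ:parity index-Muller-automaton} (applied to $\A$, all of whose states are "accessible") yields that every "tree" of $\acd{\A}$ has "height" at most $2$ and that every "tree" of "height" exactly $2$ is "positive@@tree"; for the degenerate "parity index" values $[0,0]$, $[1,1]$ and $\WeakIndex{1}$ this last statement is vacuous, as there is then no "tree" of "height" $2$.

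The crux of the argument is the observation that, under these constraints, every "square node@@acd" of $\acd{\A}$ is a "leaf". Indeed, let $n$ be a "square node@@acd" belonging to the "tree" $\altTree{\ell_i}$. If $n$ is the "root" of $\altTree{\ell_i}$, then $\altTree{\ell_i}$ is a "negative@@tree" tree, hence cannot have "height" $2$, so it has "height" $1$ and $n$ is a "leaf". If $n$ is not the "root", then $\depth(n) \geq 1$, so any "child" of $n$ would have "depth" $\geq 2$ and force $\altTree{\ell_i}$ to have "height" $\geq 3$, a contradiction. In either case $n$ has no "children", so in particular every "square node@@acd" has at most one "child", and $\acd{\A}$ is also a "Streett ACD". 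By Remark~\ref{rmk-typ:shape-parity-Street-Rabin-ACD}, $\acd{\A}$ is then a "parity ACD", and combining this with the "height" bound and the "positivity" of "height"-$2$ "trees", $\acd{\A}$ is in fact a "$[0,1]$-parity ACD".

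To conclude, the ``moreover'' part of Proposition~\ref{prop-typ:parity ACD type} shows that $\A$ is $[0,1]$-"parity type", i.e.\ the "underlying graph" of $\A$ carries an equivalent "$[0,1]$-parity" (hence "B\"uchi") "acceptance condition"; thus $\A$ is "B\"uchi type". The "Streett"/"coB\"uchi" case runs symmetrically: $\acd{\A}$ is a "Streett ACD" and all its "trees" of "height" $2$ are "negative@@tree", so every "round node@@acd" is a "leaf", $\acd{\A}$ is a "$[1,2]$-parity ACD", and $\A$ is $[1,2]$-"parity type", which is the same as "coB\"uchi type". I do not expect a genuine obstacle here — the argument is essentially bookkeeping on the "trees" $\treeVertex{v}$ — the only points deserving care being the reduction to "accessible" states and the verification that the degenerate cases with no "tree" of "height" $2$ are correctly subsumed by the definition of a "$[0,1]$-parity ACD".
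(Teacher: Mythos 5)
Your proof is correct and follows essentially the same route as the paper's: both deduce from Proposition~\ref{prop-typ:Rabin ACD type} that $\acd{\A}$ has "Rabin shape", from Proposition~\ref{prop-typ:parity index-Muller-automaton} that its "trees" have "height" at most $2$ with "height"-$2$ "trees" "positive@@tree", and conclude via Proposition~\ref{prop-typ:parity ACD type} that $\acd{\A}$ is a "$[0,1]$-parity ACD". Your intermediate observation that every "square node@@acd" is a "leaf" is just a slightly more explicit phrasing of the paper's remark that the "trees" then consist of a single "branch".
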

\begin{proof}
	We do the proof for the case Rabin-B\"uchi. We can assume that all the states of $\A$ are "accessible", as we can define a trivial acceptance condition in the part of $\A$ that is not accessible.  Since $\Lang{\A}$ has "parity index" at most $[0,1]$, the trees of the "ACD" of $\A$ have "height" at most $2$, and trees of "height" $2$ are "positive@@tree" (the "root" is a "round node@@acd"), by Proposition~\ref{prop-typ:parity index-Muller-automaton}. As $\A$ is a "Rabin" "automaton", its "ACD" has "Rabin shape" (Proposition~\ref{prop-typ:Rabin ACD type}), so "round nodes@@acd" have at most one "child". We conclude that the trees of the "ACD" of $\A$ have a single branch, so it is a "$[0,1]$-parity ACD", and by Proposition~\ref{prop-typ:parity ACD type}, $\A$ is "B\"uchi type".
\end{proof}

\subsection{A normal form for parity transition systems}\label{subsec-corollaries: normal-form}
In this section, we propose a definition of a "normal form" of "parity" "automata". This is exactly the form of "automata" resulting by applying the procedure defined by Carton and Maceiras~\cite{CartonMaceiras99RabinIndex}, or, equivalently, of "automata" resulting from the "ACD-parity-transform" (Corollary~\ref{cor-normF:ACD-transform_normal_form}). These automata satisfy that they are "parity-index-tight", that is, their "acceptance condition" uses the minimal possible number of "colours". But they offer some further convenient properties, stated in Propositions~\ref{prop-normF:paths_form_parity} and~\ref{prop-normF:flowers-in-normal_form}, which make them particularly well-suited for reasoning about "deterministic" "parity" "automata".

This "normal form", or partial versions of it, have already been used in the literature to prove results about "parity" automata in different contexts, such as "history-deterministic" "coB\"uchi" "automata"~\cite{AK22MinimizingGFG,EhlersSchewe22NaturalColors,KS15DeterminisationGFG}, positionality of languages defined by "deterministic" "B\"uchi" "automata"~\cite{BCRV22HalfPosBuchi} or learning of "DPAs"~\cite{BohnLoding23DetParityFromExamples}. The "normalisation" of "transition systems" also facilitates solving "parity" "games" in practice~\cite{FriedmannM09ParityGames}.
However, the application of this "normal form" in the literature is limited to specific cases, and no prior works have provided a formal and systematic study of it.

From our results we obtain three equivalent ways of defining the "normal form" of a "parity" "transition system" $\TS$. Informally, they can be stated as:
\begin{enumerate}
	\item $\TS$ use the smallest possible "colour" in each of its transitions (Definition~\ref{def-normF:normal_form_parity}).
	\item $\TS = \acdParityTransform{\TS}$ (Corollary~\ref{cor-normF:ACD-transform_normal_form}).
	\item Paths in $\TS$ producing "colour" $d>0$ can be closed into a "cycle" producing $d'$ as minimal "colour", for all $d'\leq d$ (Theorem~\ref{th-normF:charactNormForm}).
\end{enumerate}



\begin{remark}
	If $\macc{} = (\gg, [d_{\min}, d_{\max}], \parity)$ is a "parity" "acceptance condition" over a "pointed graph"~$\underlyingGraph{}$, we can always assume that $d_{\min}$ is $0$ or $1$. Indeed, define $\chi = d_{\min}$ if $d_{\min}$ is even, and $\chi = d_{\min} - 1$ if  $d_{\min}$ is odd. The "parity" "acceptance condition" $(\gg', [d_{\min} - \chi, d_{\max} - \chi], \parity)$ defined as $\gg'(e) = \gg(e) - \chi$ is "equivalent@@acceptCond" to $\macc{}$ over $\underlyingGraph{}$.
\end{remark}

\paragraph*{Definition of the normal form.}

Just as in the definition of the "ACD-parity-transform" we had to define "positive@@acd" and "negative@@acd" "ACDs" to obtain an accurate optimality result in the number of "colours", we need now to take care of a small technical detail so that "TS" in "normal form" are "parity-index-tight".

\AP We say that a "transition system" $\TS$ is ""negative@@TS"" if $\acd{\TS}$ is "negative@@acd", that is, if for some~$d$ $\TS$ contains a "negative@@flower" "$d$-flower" but contains no "positive@@flower" "$d$-flower". Intuitively, a "parity" "TS" is "negative@@TS" if and only if the minimal colour used by a "parity" "acceptance condition" using an optimal number of "colours" is $1$.

\begin{definition}[Normal form]\label{def-normF:normal_form_parity}
	\AP Let $\TS = (\underlyingGraph{\TS}, \macc{\TS})$ be a "parity" "transition system" using a "colouring function" $\gg$. If $\TS$ is "not negative@@TS", we say that $\TS$ is in ""normal form"" if any other "parity" "acceptance condition" "equivalent to@@accCond" $\macc{\TS}$ over $\underlyingGraph{\TS}$ using a "colouring function" $\gg'$ satisfies that for every edge $e$:
	\[ \gg(e)\leq \gg'(e).  \]
	
	If $\TS$ is "negative@@TS", we say that it is in ""normal form@@neg"" if any other "equivalent@@accCond"  "colouring" $\gg'$ not using colour $0$ satisfies that for any edge $e$:
	\[  1\leq\gg(e)\leq \gg'(e).  \]
	If $\TS$ is in "normal form", we will also say that its "acceptance condition" or the "colouring function" it uses are in normal form.
\end{definition}

\begin{example}\label{ex-cor:normal-form}
	"Parity" "transition systems" from Figures~\ref{fig-prelim:multiple-aut-ex},~\ref{fig-ZT:zielonka-parityAutomaton},~\ref{fig-acd:ACD-parity_transform} and~\ref{fig-acd:ACD-transform-doesNotPreserveMinimality} are all in "normal form". Parity automata appearing in Figures~\ref{fig-ZT:zielonka-parityAutomaton} and~\ref{fig-acd:ACD-parity_transform} are "negative@@TS" (the minimal colour used by an optimal "acceptance condition" is odd), whereas "parity" "automata" in Figure~\ref{fig-acd:ACD-transform-doesNotPreserveMinimality} are not.
	
	On the other hand, the "automaton" from Figure~\ref{fig-prelim:HD-aut-example} is not in "normal form" (even if it uses an optimal number of colours). We can put it in "normal form" by assigning colour $1$ to transitions $q_1\re{a,b}q_0$ and $q_1\re{b,c}q_2$. The "automaton" obtained in this way "recognises" the same language.
\end{example}

\begin{proposition}\label{prop-normF:existance+uniqueness_normal_form}
	Let $\TS = (\underlyingGraph{\TS}, \macc{\TS})$ be a "parity" "transition system" with a "colouring function" $\gg$.
	There is a unique "parity" "acceptance condition"  
	"equivalent@@condition" to $\macc{\TS}$ over $\underlyingGraph{\TS}$ in "normal form".
	Moreover, this "acceptance condition" is exactly the "parity condition" of the "ACD-parity-transform" of $\TS$.
\end{proposition}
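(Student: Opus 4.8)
The plan is to prove existence and uniqueness together by exhibiting the unique candidate: the "parity condition" produced by the "ACD-parity-transform". Recall from Remark~\ref{rmk-acd:acd-transform-parity-is-parity} that when $\TS$ is already a "parity" "TS", the "underlying graph" of $\acdParityTransform{\TS}$ is isomorphic to $\underlyingGraph{\TS}$; moreover, by Proposition~\ref{prop-ACD:correctness_ACD-parity_transform} there is a "locally bijective morphism" $\pp\colon \acdParityTransform{\TS}\to \TS$, which in this case is an "isomorphism@@TS" (since the graph map is a bijection and $\pp$ "preserves the acceptance of runs" in both directions, as in the proof of the implication $(3\Rightarrow 1)$ in Proposition~\ref{prop-typ:parity ACD type}). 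Transporting the "colouring function" of $\acdParityTransform{\TS}$ along this isomorphism gives an "equivalent@@condition" "parity" "acceptance condition" $\gg_{\acdNoP}$ over $\underlyingGraph{\TS}$. So the substance of the proposition is: (a) $\gg_{\acdNoP}$ is in "normal form", and (b) any "colouring function" in "normal form" coincides with $\gg_{\acdNoP}$.

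For step (a), first I would dispose of the case split: $\TS$ is "negative@@TS" iff $\acd{\TS}$ is "negative@@acd", and in that case $\gg_{\acdNoP}$ does not use colour $0$ by the definition of $\parityNodesAcd$; otherwise $\minparityAcd{\TS}\in\{0,1\}$ and $0$ may or may not appear, but in the "not negative@@TS" case we must show $\gg_{\acdNoP}(e)\le \gg'(e)$ for every "equivalent@@condition" "colouring" $\gg'$. The key is a local reading of $\gg_{\acdNoP}(e)$: for an edge $e=v\re{}v'$ inside an "SCC", $\gg_{\acdNoP}(e_n)=\parityNodesAcd(\suppAcd(n,e))$, and the minimum of these values over the relevant "leaves" $n$ equals the depth-level of the "deepest" node $\tilde n$ of $\altTree{\ell}$ with $e\in\nuAcd(\tilde n)$; this node's label is the largest "cycle" through $e$ that is determined (accepting/rejecting) the "right" way. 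Given any other "equivalent@@condition" "parity" "colouring" $\gg'$, the "cycle" $\nuAcd(\tilde n)$ contains $e$ and is, say, "accepting@@cycle"; then in $\gg'$ the minimum colour on $\nuAcd(\tilde n)$ must be even, hence $\le \gg'(e)$... more precisely one argues by the alternating-chain structure: a "branch" $n_1\ancestor\dots\ancestor n_d$ through $\tilde n$ gives a "$d$-flower" (Lemma~\ref{lemma-acd:branch-ACD-induces-flower}), and on a "$k$-flower" any "parity" "colouring" must use at least $k$ colours with the correct parity at the bottom, forcing $\gg'$ to assign to $e$ a colour at least $\parityNodesAcd(\tilde n)=\gg_{\acdNoP}(e)$. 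For edges between "SCCs", $\gg_{\acdNoP}$ assigns the minimal colour $\minparityAcd{\TS}$, which is trivially $\le\gg'(e)$ for any "not negative@@TS" case (and $\le\gg'(e)$ among $\gg'$ not using $0$ in the "negative@@TS" case). I would formalise the "$k$-flower forces $k$ colours" step as a small lemma (it is essentially the argument behind the Flower Lemma~\ref{lemma:flower-lemma} localized to one "flower", using that along a chain $\ell_1\supsetneq\dots\supsetneq\ell_k$ alternating acceptance, the minima $\min\gg'(\ell_i)$ are strictly increasing).

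For step (b), uniqueness: if $\gg_1$ and $\gg_2$ are both "equivalent@@condition" "colourings" in "normal form", then in the "not negative@@TS" case each is pointwise minimal among all "equivalent@@condition" "colourings", so $\gg_1(e)\le\gg_2(e)$ and $\gg_2(e)\le\gg_1(e)$ for all $e$, hence $\gg_1=\gg_2$ — and in particular both equal $\gg_{\acdNoP}$ by step (a); the "negative@@TS" case is identical with the restriction to "colourings" avoiding $0$. I expect the main obstacle to be step (a), specifically proving the pointwise lower bound $\gg_{\acdNoP}(e)\le\gg'(e)$ cleanly: one must be careful that the "deepest" node $\suppAcd(n,e)$ depends on which "leaf" $n$ we are at, and that taking the minimum over "leaves" recovers a graph-intrinsic quantity (the level of the maximal determined "cycle" through $e$ in its "SCC"); once this identification is made, the "flower" argument is routine. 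The "negative@@TS" bookkeeping — ensuring the $+2$ shift in $\parityNodesAcd$ is exactly what is needed so that $\gg_{\acdNoP}$ is pointwise minimal among "colourings" not using $0$ — is the other place requiring a little care, but it follows directly from Remark and the definition of $\parityNodesAcd$ in the "negative@@acd" case.
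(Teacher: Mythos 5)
Your proposal is correct and follows essentially the same route as the paper: the pointwise lower bound $\gg_{\acdNoP}(e)\le\gg'(e)$ is obtained by applying the flower lower-bound lemma (your proposed ``small lemma'' is exactly Lemma~\ref{lemma-normF:flowers-lower-bound-in-parity}) to the branch of $\treeVertex{\msource(e)}$ down to the deepest node containing $e$, and uniqueness is immediate from pointwise minimality. The one worry you raise about minimising over leaves is moot: since $\TS$ is a parity TS, each local subtree is a single branch (Proposition~\ref{prop-typ:parity ACD type}), so there is exactly one leaf per vertex and $\suppAcd(n,e)$ is graph-intrinsic.
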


Before showing the proof of Proposition~\ref{prop-normF:existance+uniqueness_normal_form}, we prove a useful technical lemma.
\begin{lemma}\label{lemma-normF:flowers-lower-bound-in-parity}
	Let $\TS$ be a "parity" "transition system" with "colouring function" $\gg$. If $\ell_1\supsetneq \ell_2 \supsetneq \dots \supsetneq \ell_{k}$ is a "positive@@flower" (resp. "negative@@flower") "$k$-flower" of $\TS$, then $\min \gg(\ell_k)\geq k-1$ (resp. $\min \gg(\ell_k)\geq k$).
\end{lemma}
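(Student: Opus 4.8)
If $\ell_1 \supsetneq \ell_2 \supsetneq \dots \supsetneq \ell_k$ is a positive (resp. negative) $k$-flower of a parity transition system $\TS$ with colouring $\gg$, then $\min \gg(\ell_k) \geq k - 1$ (resp. $\min \gg(\ell_k) \geq k$).

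The plan is to proceed by induction on $k$, peeling off the smallest cycle $\ell_k$ and arguing about the parity of $\min\gg(\ell_i)$ along the chain. First I would recall the basic fact used already in Lemma~\ref{lemma-ZT:union_cycles_parity-TS}: for a parity condition, a cycle $\ell$ is accepting iff $\min\gg(\ell)$ is even, and rejecting iff $\min\gg(\ell)$ is odd. Since $\ell_1 \supsetneq \dots \supsetneq \ell_k$ is a $k$-flower, consecutive cycles alternate between accepting and rejecting (by the definition of $d$-flower in Section~\ref{subsect-prelim:acceptance-conditions}), so the integers $m_i := \min\gg(\ell_i)$ alternate in parity as $i$ ranges over $1, \dots, k$.

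The key monotonicity observation is that $\ell_1 \supseteq \ell_2 \supseteq \dots \supseteq \ell_k$ as sets of edges implies $\gg(\ell_1) \supseteq \gg(\ell_2) \supseteq \dots \supseteq \gg(\ell_k)$ as sets of colours, hence $m_1 \leq m_2 \leq \dots \leq m_k$. Combining this with the alternation of parities: since $m_i \leq m_{i+1}$ and $m_i, m_{i+1}$ have opposite parities, we must have $m_{i+1} \geq m_i + 1$, i.e. the sequence $m_1, m_2, \dots, m_k$ is strictly increasing. Therefore $m_k \geq m_1 + (k-1)$. It remains to pin down $m_1$. If the flower is positive, $\ell_1$ is an accepting cycle, so $m_1$ is even, hence $m_1 \geq 0$ and $m_k \geq k-1$. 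If the flower is negative, $\ell_1$ is a rejecting cycle, so $m_1$ is odd, hence $m_1 \geq 1$ and $m_k \geq 1 + (k-1) = k$. This yields both claimed bounds.

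The argument is short and the main (minor) obstacle is simply being careful about the direction of the inclusions: one must check that the $d$-flower definition indeed gives $\ell_i \supsetneq \ell_{i+1}$ with alternating acceptance status — which is exactly the definition recalled in Section~\ref{subsect-prelim:acceptance-conditions} — so that $\gg(\ell_i) \supseteq \gg(\ell_{i+1})$ and the minima are nondecreasing along the chain from the largest to the smallest cycle. There is no genuine difficulty beyond this bookkeeping; no appeal to the ACD is needed, only the elementary characterisation of accepting cycles in parity transition systems.
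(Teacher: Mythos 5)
Your proof is correct and follows essentially the same route as the paper's: both arguments observe that the nested inclusions force the minima $m_i=\min\gg(\ell_i)$ to be non-decreasing, that the alternation of acceptance forces them to alternate in parity and hence increase strictly, and then anchor the bound with $m_1\geq 0$ (positive) or $m_1\geq 1$ (negative). No gap to report.
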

\begin{proof}
	We show the result for "negative flowers". Let $d_i = \min \gg(\ell_i)$. We show that $d_i\geq i$ by induction. Since $\ell_i$ is an "accepting cycle" if and only if $i$ is even, we have that $d_i$ is even if and only if $i$ is even. Clearly, $d_1\geq 1$, as $1$ is the least odd number. Also, $d_{i+1}\geq d_i$, since $ \ell_{i+1}\subseteq \ell_i$, and the inequality is strict by the alternation of the parity, concluding the proof.
\end{proof}

\begin{proof}[Proof of Proposition~\ref{prop-normF:existance+uniqueness_normal_form}]
	We first remark that the uniqueness is directly implied by the definition of "normal form".
	
	We prove that the "acceptance condition" of the "ACD-parity-transform"  is in "normal form". We note its "colouring function" by $\gg_{\acdNoP}$.
	The transitions not belonging to any "SCC" are coloured~$0$ if $\TS$ is "not negative@@TS" and $1$ if $\TS$ is "negative@@TS", as desired. It suffices to prove the result for edges in "SCCs".
	
	We assume that $\TS$ is "not negative@@TS" and 
	we let $\S$ be an "accepting SCC" of $\TS$ (the proof is similar for $\TS$ "negative@@TS" and a "rejecting SCC"). 
	Let $e=v\re{}v'$ be an edge in $\S$, and let $\treeVertex{v}$ be the "local subtree at $v$", which is composed of a single "branch" (see Proposition~\ref{prop-typ:parity ACD type}). We let $n_0\ancestor n_1 \ancestor\dots \ancestor n_r$ be that "branch", where $n_0$ is the "root" and $n_r$ the "leaf". Let $n_k$ be the "deepest" node of $\treeVertex{v}$ such that $e\in \nuAcd(n_k)$. By definition of the "ACD-parity-transform", $\gg_{\acdNoP}(e) = \parityNodesAcd(e) = k$. Also, $\nuAcd(n_0)\ancestor \nuAcd(n_1) \ancestor\dots \ancestor \nuAcd(n_k)$ is a "positive@@flower" "$k+1$-flower" (by Lemma~\ref{lemma-acd:branch-ACD-induces-flower}).
	Lemma~\ref{lemma-normF:flowers-lower-bound-in-parity} implies then that any "equivalent@@accCond" "parity" "condition" using a "colouring function" $\gg'$ verifies $\gg'(e)\geq \gg_{\acdNoP}(e)=k$. 
\end{proof}

\begin{corollary}\label{cor-normF:ACD-transform_normal_form}
	The "ACD-parity-transform" $\acdParityTransform{\TS}$ of any "Muller" "transition system" $\TS$ is in "normal form". 
\end{corollary}

\paragraph*{Fundamental properties of the normal form.}

We now state what we consider to be the two fundamental properties of "parity" "transition systems" in "normal form". Theorem~\ref{th-normF:charactNormForm} states that these properties characterise the "normal form".

\begin{proposition}\label{prop-normF:paths_form_parity}
	Let $\TS$ be a "parity" "transition system" in "normal form". If there is a path $v\lrpE v'$ producing $d$ as minimal colour, then, either:
	\begin{itemize}	
		\item $v$ and $v'$ are in different "SCCs" (and in this case $d\in \{0,1\}$), or
		\item there is a path  $v'\lrpE v$ producing no colour strictly smaller than $d$.
	\end{itemize} 
\end{proposition}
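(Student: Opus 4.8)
The statement says: in a parity TS $\TS$ in normal form, if a path $\rho\colon v\lrpE v'$ has minimal colour $d$, then either $v,v'$ lie in different SCCs (with $d\in\{0,1\}$), or there is a return path $v'\lrpE v$ whose colours are all $\geq d$. The plan is to argue by contradiction: suppose $v$ and $v'$ are in the same SCC $\S$ and \emph{every} path $v'\lrpE v$ uses some colour strictly smaller than $d$. I want to derive a contradiction with the minimality (normal form) of the colouring, by exhibiting a cheaper equivalent parity acceptance condition — concretely, by showing some edge on $\rho$ could be recoloured with a smaller colour without changing the acceptance of any run.

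First I would handle the easy alternative. If $v$ and $v'$ are in different SCCs, then the path $\rho$ must traverse at least one edge leaving an SCC (or a transient region), and by the definition of the ACD-parity-transform (which, by Corollary~\ref{prop-normF:ACD-transform_normal_form} and Proposition~\ref{prop-normF:existance+uniqueness_normal_form}, is exactly the normal form), such edges are coloured $\minparityAcd{\TS}\in\{0,1\}$; since $d$ is the \emph{minimum} colour on $\rho$, we get $d\in\{0,1\}$. So from now on assume $v,v'$ are in the same SCC $\S$. Let $e_0 = v_0\re{} v_0'$ be an edge on $\rho$ that realises the minimal colour, i.e. $\gg(e_0)=d$. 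The key claim is: if there is no return path $v'\lrpE v$ avoiding colours $<d$, then the colour $d$ on $e_0$ is ``wasted'' and can be lowered. More precisely, I would consider recolouring $e_0$ (and possibly propagating the consequence through the ACD structure) and show the resulting condition is still equivalent over $\underlyingGraph{\TS}$, contradicting that $\TS$ is in normal form (Definition~\ref{def-normF:normal_form_parity}): normal form demands that $\gg$ is pointwise minimal among equivalent colourings.

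The cleanest way to run this is through the ACD. Since $\TS$ is a parity TS in normal form, by Proposition~\ref{prop-normF:existance+uniqueness_normal_form} its colouring coincides with $\parityNodesAcd$, and by Proposition~\ref{prop-typ:parity ACD type} every local subtree $\treeVertex{w}$ is a single branch. For the edge $e_0 = v_0\re{}v_0'$ with $\gg(e_0)=d$, the value $d=\parityNodesAcd(\suppAcd(n,e_0))$ records the depth of the deepest ACD-node whose cycle label contains $e_0$. A branch of length (roughly) $d+1$ in the relevant tree $\altTree{\ell_{\max}}$ gives, via Lemma~\ref{lemma-acd:branch-ACD-induces-flower}, a $(d{+}1)$-flower $\ell_0\supsetneq\ell_1\supsetneq\cdots$ over a common vertex with $e_0$ in the innermost cycle. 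The hypothesis ``no return path $v'\lrpE v$ with all colours $\geq d$'' is exactly what forbids $e_0$ from lying on a cycle that stays inside the innermost loop $\ell_d$ of this flower — so in fact $e_0$ does not belong to $\nuAcd$ of that deepest node, contradicting $\gg(e_0)=\parityNodesAcd(\suppAcd(n,e_0))=d$ being realised. Formally: if every cycle through $e_0$ within $\S$ that returns from $v'$ to $v$ must drop below $d$, then the smallest cycle (in the subset order) containing $e_0$ has minimal colour $<d$ on a \emph{strict subcycle}, which by the alternation structure of the ACD (Remark~\ref{rmk-acd:union-changes-acceptance}, Remark~\ref{rmk-acd:parity_levels_ACD}) forces $e_0$ to be recolourable by $d-2$ — contradiction with normal form.

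The main obstacle is making precise the passage ``no cheap return path $\Rightarrow$ $e_0$ is not in the deepest ACD-node containing it''. This requires relating paths $v'\lrpE v$ to cycles/subcycles of $\S$ and tracking which ACD-node is $\suppAcd(\cdot, e_0)$; the subtlety is that the path $\rho$ itself from $v$ to $v'$ need not be a cycle, so one must first close it up (using that $v,v'\in\S$) to a cycle $\ell\ni e_0$, note $\min\gg(\ell)=d$, and then argue that the \emph{maximal} strict subcycle $\ell'$ of the deepest relevant node containing $e_0$ would need the return segment $v'\lrpE v$ staying $\geq d$, which the hypothesis denies; hence the deepest node containing $e_0$ is shallower than claimed, so $\gg(e_0)$ is not minimal. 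I would write this carefully, perhaps phrasing it entirely in terms of cycles: let $\ell$ be the smallest cycle of $\S$ containing $e_0$ and having $\min\gg(\ell)=d$; show that if no return path avoids colours $<d$ then $e_0$ lies only in cycles whose minimum is $<d$, so the ACD-node $\suppAcd(\cdot,e_0)$ has parity-node-value $<d$, contradicting $\gg=\parityNodesAcd$. Alternatively — and this may be simpler to present — argue directly with Definition~\ref{def-normF:normal_form_parity}: recolour $e_0$ with $d-2$ (or $\max(d-2,0)$, resp.\ $\max(d-2,1)$ in the negative case) and verify that for every infinite run $r$, the parity of $\min\minf(\gg(r))$ is unchanged, using that any run visiting $e_0$ infinitely often also visits, infinitely often, a colour $<d$ along the forced-cheap return segments; this keeps the acceptance status of $r$ identical, contradicting pointwise minimality of $\gg$.
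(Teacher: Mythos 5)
There is a genuine gap in the same-SCC case. Your argument pivots entirely on the single edge $e_0$ realising the minimum colour $d$, and tries to derive a contradiction of the form: if no return path $v'\lrpE v$ avoids colours $<d$, then $e_0$ cannot lie in an ACD-node of depth $d$, so $\gg(e_0)$ could be lowered. This inference does not hold. The value $\gg(e_0)=d$ is witnessed by the cycle $\nuAcd(n)$ labelling the deepest node $n$ containing $e_0$, and that cycle only needs to pass through the endpoints of $e_0$ itself --- it need not contain $v$, $v'$, or the rest of the path $\rho$. So the non-existence of a cheap return path from $v'$ to $v$ says nothing about whether $e_0$ sits in a deep node, and no contradiction with pointwise minimality of the colouring arises. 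The same confusion undermines your fallback recolouring argument: a run taking $e_0$ infinitely often need not traverse any segment from $v'$ to $v$, so it is not forced to see a colour $<d$; and even when it does see such a colour, say exactly $d-1$, lowering $e_0$ to $d-2$ changes $\min\minf$ of that run from $d-1$ to $d-2$ and flips its acceptance, so the recoloured condition is not equivalent and yields no contradiction.

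The missing idea --- and what the paper actually proves --- is a statement about the whole path, not about one edge: letting $N_\rho$ be the set of ACD-nodes whose cycle label contains \emph{every} edge of $\rho$, the minimum colour along $\rho$ equals the depth of a deepest node of $N_\rho$. The nontrivial direction is shown by contradiction: if every edge of $\rho$ had colour strictly larger than that depth, each edge would lie in the label of some child of the deepest common node; consecutive children's cycles share a state (an intermediate vertex of $\rho$), so their union is a cycle, which violates the closure-under-union property of parity transition systems (Lemma~\ref{lemma-ZT:union_cycles_parity-TS} together with Remark~\ref{rmk-acd:union-changes-acceptance}). Once this claim is established, the conclusion is immediate and requires no appeal to minimality of the colouring: the cycle $\nuAcd(n)$ of a deepest node $n\in N_\rho$ contains both $v$ and $v'$, and all its edges have colour at least $\depth(n)=d$, so it directly provides the return path. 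Your setup (normal form equals the ACD colouring, local subtrees are single branches, the different-SCC case) is correct, but the core step must be this whole-path claim rather than a per-edge recolouring contradiction.
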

\begin{proof}
	By Proposition~\ref{prop-normF:existance+uniqueness_normal_form}, we know that the "colouring" of $\TS$ is the one given by its "ACD-transform", that we note $\gg_{\acdNoP}$. 
	If $v$ and $v'$ are in different "SCCs" the result is trivial.  
	Let $v$ and $v'$ be in the same "SCC", that we suppose to be an "accepting SCC" without loss of generality. 
	Let $\rr = v\re{e_1}\dots \re{e_k} v'$ be a path from $v$ to $v'$ producing $\min \gg_{\acdNoP}(\rr) = d$ as minimal colour.
	We remark that, as $\acd{\TS}$ is a "parity ACD", each edge $e$ appears in one and only one branch of $\acd{\TS}$, and that $\gg_{\acdNoP}(e)$ equals the "depth" of the "deepest" node containing $e$. In particular, if $e\in\nuAcd(n)$ for some node $e$, $\gg_{\acdNoP}(e)\geq \depth(n)$. Our objective is to show that a similar result holds for the path $\rr$ as a set of edges:
	\begin{claim}\label{claim-normF:priority-path}
		Let $N_{\rr}$ be the set of nodes of $\acd{\TS}$ containing the edges of the path $\rr$ in their label, that is, $N_{\rr} = \{n\in \nodesAcd{\TS} \mid \{e_1,\dots, e_k\}\subseteq \nuAcd(n)\}$.
		Then, $\min (\gg_{\acdNoP}(\rr))$ equals the "depth" of a node of maximal "depth" of $N_\rr$.\footnotemark{}
	\end{claim}
	\footnotetext{In fact, the nodes of $N_\rr$ are totally ordered by the "ancestor" relation, so there is a unique node of maximal "depth" in $N_{\rr}$. This fact is not used in our proof.}
	This claim allows us to conclude. Indeed, let $n$ be a node of maximal "depth" of $N_\rr$, verifying $\depth(n)=d$. Then, $\nuAcd(n)$ is a "cycle" containing the vertices $v$ and $v'$, and for all the edges $e\in \nuAcd(n)$, $\gg_{\acdNoP}(e)\geq \depth(n)=d$. This provides the desired path from $v'$ to $v$.
	
	\begin{subproof}[Proof of Claim~\ref{claim-normF:priority-path}]
		First, we remark that if $\ell_1, \ell_2,\dots, \ell_k$ are "cycles" such that $\ell_i$ and $\ell_{i+1}$ have some state in common, then $\cup_{i=1}^k \ell_i$ is a "cycle". Let $n$ be a node of maximal "depth" in $N_\rr$. By the previous remarks, $\gg_{\acdNoP}(e)\geq \depth(n)$. Suppose by contradiction that $\gg_{\acdNoP}(\rr)> \depth(n)$. Then, each edge $e_i$ of $\rr$ would appear in some strict descendant $n_i$ of $n$ (we can assume that $n_i$ is a "child" of $n$). Then, $\nuAcd(n_1), \dots, \nuAcd(n_k)$ would be "cycles" such that $\nuAcd(n_i)$ and $\nuAcd(n_{i+1})$ have some state in common (namely, $\mtarget(e_i) = \msource(e_{i+1})$), so their union is a "cycle". However, this is not possible in a "parity" "transition system", as $\nuAcd(n)$ is "accepting@@cycle" if and only if each of the $\nuAcd(n_i)$ is "rejecting@@cycle" (see Lemma~\ref{lemma-ZT:union_cycles_parity-TS}).
	\end{subproof} 
  This completes the proof of Proposition~\ref{prop-normF:paths_form_parity}.
\end{proof}

\begin{proposition}[Normal flowers do not lack petals]\label{prop-normF:flowers-in-normal_form}
	Let $v$ be a state of a "parity" "transition system" in "normal form" belonging to an "accepting@@SCC" (resp. "rejecting@@SCC") "SCC". Let  $\ell\in \cyclesState{\TS}{v}$ be a "cycle" over $v$ and let $d_\ell$ be the minimal colour appearing in it. 
	\begin{itemize}
		\item If $\TS$ is "not negative@@TS", for each $x\in [0, d_\ell]$ (resp. $x\in [1, d_\ell]$) there is a cycle $\ell_x \in \cyclesState{\TS}{v}$ producing $x$ as minimal colour.
		\item If $\TS$ is "negative@@TS", for each $x\in [2, d_\ell]$ (resp. $x\in [1, d_\ell]$) there is a cycle $\ell_x \in \cyclesState{\TS}{v}$ producing $x$ as minimal colour.
	\end{itemize}
	
\end{proposition}
\begin{proof}
	We do the proof for the case in which $\TS$ is "not negative@@TS" and $v$ belongs to an "accepting SCC".
	By Proposition~\ref{prop-normF:existance+uniqueness_normal_form}, the "colouring" of $\TS$ is the one given by its "ACD-transform", noted $\gg_{\acdNoP}$. Consider the "local subtree at $v$", $\treeVertex{v}$, consisting in a single "branch", as it has "parity shape" (Proposition~\ref{prop-typ:parity ACD type}). Let $n_0\ancestor\dots \ancestor n_k$ be that branch, and let $n_i$ be the "deepest" node such that $\ell\subseteq \nuAcd(n_i)$. We remark that, by definition of $\gg_{\acdNoP}$, $d_\ell = \depth(n_i) = i$. The desired "cycles" are obtained by taking $\ell_x = \nuAcd(n_x)$, for $x\in [0, d_\ell]$.
\end{proof}

The next theorem states a simple characterisation of "transition systems" in "normal form". It provides a useful tool to show "normality" of "parity" "TS" in many proofs. In essence, it shows that the two previous propositions characterise the "normal form". We state it for "non-negative@@TS" "transition systems" for simplicity; a similar characterisation for "negative@@TS" "transition systems" is immediate.

\AP We say that an "SCC" of a parity "TS" is ""positive@@SCC"" if the minimal "colour" appearing on it is even, and that it is ""negative@@SCC"" if this minimal "colour" is odd. 

\begin{theorem}\label{th-normF:charactNormForm}
	A "non-negative@@TS" "parity" "transition system" is in "normal form" if and only if:
	\begin{itemize}
		\item transitions changing of "SCCs" are coloured  $0$, and
		\item if $v$ and $v'$ belong to a same "positive@@SCC" (resp. "negative@@SCC") "SCC" and there is a transition $v\re{d} v'$ producing "colour" $d>0$ (resp. $d>1$), then there are two paths  $v'\lrpE v$ producing as minimal "colour" $d$ and $d-1$, respectively.
	\end{itemize}
\end{theorem}
\begin{proof}
	The fact that a "TS" in "normal form" satisfies these properties follows from the previous propositions.
	
	Let $\TS$ be a "TS" satisfying these properties and using $\gg$ as "colouring function@@TS". Let $e=v\re{d}v'$ be an edge with $\gg(e)=d$. We will show that for any other "equivalent@@condition" "colouring@@TS" $\gg'$, we have $\gg'(e)\geq d$. 
	This is trivial if $d=0$.
	If $d>0$, $v$ and $v'$ must be in the same "SCC", that we assume "positive@@SCC" without loss of generality.
	By hypothesis, we can close "cycles" $\ell_d$ and $\ell_{d-1}$ over $v$ producing $d$ and $d-1$ as minimal "colour", respectively. 
	Cycle $\ell_{d-1}$ can be decomposed in $v\re{} v'\lrpE v_1 \re{d-1}v_1' \lrpE v$. Applying the hypothesis over the edge  $v_1 \re{d-1}v_1'$ gives a path $v_1'\lrpE v_1$ producing $d-2$ as minimal "colour", which can be merged with 
	$\ell_{d-1}$ to produce a cycle $\ell_{d-2}$ over $v$ producing $d-2$ as minimal "colour".
	Iterating this process, we can find "cycles" $\ell_0\supseteq \ell_1\supseteq\dots, \supseteq \ell_{d}$ over $v$ such that $\ell_i$ produces $i$ as minimal "colour". 
	Taking $\ell'_i = \cup_{j=i}^d\ell_i$, we obtain a "positive@@flower" "$(d+1)$-flower" $\ell_0'\supsetneq\ell'_1\supsetneq\dots \supsetneq \ell'_d$, so by Lemma~\ref{lemma-normF:flowers-lower-bound-in-parity} we conclude that $\gg'(e)\geq d$.
\end{proof}

\paragraph*{Parity index from automata in normal form.}

The next definition constitutes a syntactic version of the "parity index", defined at the level of "parity" "transition systems". 
The following results establish the tight relation between the semantic notion of "parity index" and its syntactic counterpart, and state that the "parity index" of a language can be directly read from a "DPA" in "normal form".


\begin{definition}\label{def:parity_tight_TS}
	\AP We say that a "parity" "transition system" $\TS = (\underlyingGraph{\TS}, \macc{\TS})$ is ""parity-index-tight"" if any other "parity condition" $\macc{}'$ over $\underlyingGraph{\TS}$ such that $\macc{}' \equivCond{\underlyingGraph{\TS}} \macc{\TS}$ uses at least as many "colours" as $\macc{\TS}$.
\end{definition}

We have shown in Corollary~\ref{cor-normF:ACD-transform_normal_form} that the "ACD-parity-transform" is always in "normal form". Therefore, the optimality properties of the "colouring" of $\acdParityTransform{\TS}$ (Theorem~\ref{thm-acd:optimality-priorities_ACD-parity_transform}) transfer to "parity" "transition systems" in "normal form".

\begin{corollary}\label{lemma-normF:normal_form-implies-strong_parity_index}
	A "parity" "transition system" in "normal form" is "parity-index-tight".
\end{corollary}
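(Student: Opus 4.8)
The plan is to deduce Corollary~\ref{lemma-normF:normal_form-implies-strong_parity_index} directly from the characterisation of the normal form given in Proposition~\ref{prop-normF:existance+uniqueness_normal_form} together with the optimality of the number of colours of the "ACD-parity-transform" (Theorem~\ref{thm-acd:optimality-priorities_ACD-parity_transform}). Let $\TS = (\underlyingGraph{\TS}, \macc{\TS})$ be a "parity" "transition system" in "normal form". By the uniqueness part of Proposition~\ref{prop-normF:existance+uniqueness_normal_form}, $\macc{\TS}$ is (up to the trivial renaming that does not change the number of colours) exactly the "parity condition" of the "ACD-parity-transform" $\acdParityTransform{\TS'}$ of the "transition system" $\TS' $ obtained from $\TS$ by forgetting the labelling of the acceptance condition; in particular $\acd{\TS}$ coincides with $\acd{\acdParityTransform{\TS}}$, and the set of "colours" used by $\macc{\TS}$ is $[\minparityAcd{\TS}, \maxparityAcd{\TS}]$.

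Now suppose, towards a contradiction, that some other "parity condition" $\macc{}'$ over $\underlyingGraph{\TS}$ with $\macc{}' \equivCond{\underlyingGraph{\TS}} \macc{\TS}$ uses strictly fewer "colours" than $\macc{\TS}$. First I would reduce to the case where all states of $\TS$ are "accessible": the colours assigned to transitions outside the "accessible part" are irrelevant for the comparison, so we may restrict both conditions to the accessible part without changing the number of colours effectively used on runs. Consider the identity "weak morphism of transition systems" $\Id{\TS}\colon (\underlyingGraph{\TS}, \macc{}') \to \TS$; this is in fact a "locally bijective morphism" (hence an "HD mapping"), since it preserves every edge bijectively, and it "preserves the acceptance of runs" because $\macc{}' \equivCond{\underlyingGraph{\TS}} \macc{\TS}$. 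Thus $(\underlyingGraph{\TS}, \macc{}')$ is a "parity" "TS" admitting an "HD mapping" to the "Muller" (here "parity", hence "Muller") "TS" $\TS$ whose states are "accessible".

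By Theorem~\ref{thm-acd:optimality-priorities_ACD-parity_transform}, the "acceptance condition" $\macc{}'$ then uses at least as many "colours" as that of $\acdParityTransform{\TS}$, which is exactly $\macc{\TS}$ by Proposition~\ref{prop-normF:existance+uniqueness_normal_form} (more precisely, by Corollary~\ref{prop-normF:ACD-transform_normal_form} applied to $\TS$ seen as a "Muller" "TS", together with the uniqueness of the "normal form"). This contradicts the assumption that $\macc{}'$ uses strictly fewer colours. Hence $\macc{\TS}$ uses a minimal number of colours amongst all "parity conditions" equivalent to it over $\underlyingGraph{\TS}$, which is precisely the statement that $\TS$ is "parity-index-tight". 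The main obstacle is bookkeeping the mild technicalities: checking that the identity map is genuinely a "locally bijective morphism" in the sense of Definition~\ref{def:loc_bijective_morphisms} (the edge sets are literally equal, so local bijectivity is immediate), handling the reduction to the "accessible" case so that Theorem~\ref{thm-acd:optimality-priorities_ACD-parity_transform} applies, and invoking Proposition~\ref{prop-normF:existance+uniqueness_normal_form} to identify the "normal form" of $\TS$ with the output of the "ACD-parity-transform" so that the colour count coincides with $|[\minparityAcd{\TS}, \maxparityAcd{\TS}]|$; none of these is hard, but they must be stated carefully.
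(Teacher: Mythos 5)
Your proof is correct and follows exactly the route the paper intends: identify the normal-form colouring with that of the ACD-parity-transform via Proposition~\ref{prop-normF:existance+uniqueness_normal_form}, view the identity as a locally bijective morphism (hence an HD mapping) from $(\underlyingGraph{\TS},\macc{}')$ to $\TS$, and invoke Theorem~\ref{thm-acd:optimality-priorities_ACD-parity_transform}. The accessibility reduction and the appeal to Remark~\ref{rmk-acd:acd-transform-parity-is-parity} are exactly the right bookkeeping.
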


Moreover, the "parity index" of an "$\oo$-regular language" can be read from any "DPA" in "normal form" "recognising" it.

\begin{corollary}
	Let $\A$ be a "deterministic" "parity" "automaton" in "normal form" such that all its states are "accessible". 
	If $\A$ uses "colours" in $[0,d-1]$ (resp. $[1,d]$), then the "parity index" of $\Lang{\A}$ is $\WeakIndex{d-1}$ or $[0,d-1]$ (resp. $\WeakIndex{d-1}$ or $[1,d]$).
\end{corollary}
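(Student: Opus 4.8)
The plan is to leverage the fact that an automaton $\A$ in normal form has its colouring given by the "ACD-parity-transform" (Proposition~\ref{prop-normF:existance+uniqueness_normal_form}), so that the structure of $\acd{\A}$ controls the "parity index" of $\Lang{\A}$ via Proposition~\ref{prop-typ:parity index-Muller-automaton}. First I would reduce to the case $[0,d-1]$ (the case $[1,d]$ is symmetric, exchanging "positive@@tree" and "negative@@tree" throughout, and follows by applying the argument to $\complTS{\A}$). So suppose $\A$ is in "normal form" and uses "colours" in $[0,d-1]$, meaning $\minparityAcd{\A}=0$ and $\maxparityAcd{\A}=d-1$; since $\A$ is in "normal form", by Proposition~\ref{prop-normF:existance+uniqueness_normal_form} and Corollary~\ref{prop-normF:ACD-transform_normal_form} its colouring agrees with $\parityNodesAcd$, so $\acd{\A}$ is "positive@@acd" or "equidistant@@acd" (not "negative@@acd", since that would force $\minparityAcd{\A}=1$), and the maximal "height" of a "tree" in $\acd{\A}$ is exactly $d$.

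Now I split on the two remaining cases for $\acd{\A}$. If $\acd{\A}$ is "positive@@acd", then "trees" of maximal "height" $d$ are all "positive@@tree", and there is at least one such tree; by Proposition~\ref{prop-typ:parity index-Muller-automaton} (the $[0,d-1]$ case of the right-to-left implication, noting that for a "deterministic" "parity" "automaton" the "ACD" is a "parity ACD" by Proposition~\ref{prop-typ:parity ACD type}), the "parity index" of $\Lang{\A}$ is $[0,d-1]$. If instead $\acd{\A}$ is "equidistant@@acd", then among the "trees" of "height" $d$ there is at least one "positive@@tree" and at least one "negative@@tree" one, and no tree exceeds "height" $d$; by the $\WeakIndex{d}$ case of Proposition~\ref{prop-typ:parity index-Muller-automaton} the "parity index" of $\Lang{\A}$ is $\WeakIndex{d}$... but wait: I need to double-check the indexing, since a "DPA" witnessing "parity index" $\WeakIndex{d'}$ uses $d'+1$ colours, and our $\A$ uses $d$ colours $[0,d-1]$. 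So the right match is $\WeakIndex{d-1}$: Proposition~\ref{prop-typ:parity index-Muller-automaton} states the index is $\WeakIndex{d}$ when trees have "height" at most $d$ and there are "positive@@tree" and "negative@@tree" trees of "height" $d$. Re-examining: $\acdParityTransform{\A}$ on an "equidistant@@acd" ACD with max height $d$ uses colours $[0,d]$ — that is $d+1$ colours — so the automaton in "normal form" with $d$ colours $[0,d-1]$ has max tree height $d-1$, not $d$. Hence in the "equidistant@@acd" subcase the trees of maximal height $d-1$ contain both a "positive@@tree" and a "negative@@tree" one, and Proposition~\ref{prop-typ:parity index-Muller-automaton} yields "parity index" $\WeakIndex{d-1}$, while in the "positive@@acd" subcase, max tree height is $d$ (colours $0$ to $d-1$, that is $d$ of them, all trees positive of height $\le d$, at least one of height $d$), giving index $[0,d-1]$.

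So the final argument is: $\A$ in "normal form" with colours in $[0,d-1]$ has $\acd{\A}$ "positive@@acd" with maximal "height" $d$, or "equidistant@@acd" with maximal "height" $d-1$; Proposition~\ref{prop-typ:parity index-Muller-automaton} then gives "parity index" $[0,d-1]$ in the first case and $\WeakIndex{d-1}$ in the second. The case of colours in $[1,d]$ is entirely dual: $\acd{\A}$ is "negative@@acd" with max "height" $d$, or "equidistant@@acd" with max "height" $d-1$, yielding "parity index" $[1,d]$ or $\WeakIndex{d-1}$ respectively. The main obstacle I anticipate is pinning down the off-by-one bookkeeping between "the number of colours", "the range $[\minparityAcd{\A},\maxparityAcd{\A}]$", and "the maximal tree height", especially distinguishing the "positive@@acd"/"negative@@acd" cases (where $\maxparityAcd{\A}$ equals the max height) from the "equidistant@@acd" case (where the "ACD-parity-transform" uses one extra colour, so an automaton in "normal form" with the same colour count has strictly smaller max tree height); getting these aligned correctly with the statement of Proposition~\ref{prop-typ:parity index-Muller-automaton} is the only delicate point, the rest being a direct invocation of already-established propositions.
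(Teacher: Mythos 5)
Your argument is correct, and it reaches the statement by a genuinely different route than the paper's. The paper proves this corollary via its refinement in the appendix (Corollary~\ref{cor-app-weak:normal-form-index-par}): there, the dichotomy is whether $\A$ is $\Weak{d-1}$, and each branch is settled by exhibiting explicit flowers --- Proposition~\ref{prop-normF:flowers-in-normal_form} produces a positive $d$-flower when some SCC carries all $d$ colours, or a negative $(d-1)$-flower (paired with a positive one, since $\A$ is not negative) when every SCC misses a colour, and the Flower Lemma~\ref{lemma:flower-lemma} then pins down the index. You instead use Proposition~\ref{prop-normF:existance+uniqueness_normal_form} to identify the colouring of $\A$ with $\parityNodesAcd$, deduce from the colour range whether $\acd{\A}$ is positive with maximal height $d$ or equidistant with maximal height $d-1$, and read the index off Proposition~\ref{prop-typ:parity index-Muller-automaton}. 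Your case split is the paper's in disguise (an equidistant ACD of maximal height $d-1$ is exactly a $\Weak{d-1}$ ACD), but your route bypasses Proposition~\ref{prop-normF:flowers-in-normal_form} entirely and is more economical for the disjunctive statement as given; what it does not deliver is the refinement identifying \emph{which} disjunct holds. The off-by-one correction you work through mid-argument does land on the right values: $\maxparityAcd{\A}$ equals the maximal tree height minus one for a positive ACD, but equals the maximal height itself for an equidistant one, because rejecting trees are shifted by $+1$.

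Two minor observations, neither of which is a gap. First, in the $[1,d]$ case the equidistant branch is vacuous: an equidistant ACD forces $\minparityAcd{\A}=0$, so colour $0$ would appear, contradicting colours in $[1,d]$; hence a normal-form automaton with colours in $[1,d]$ has a negative ACD of maximal height $d$ and its parity index is always $[1,d]$. Since your vacuous branch still outputs one of the two permitted answers, the proof stands, but the duality is not quite as symmetric as you state. Second, both your proof and the paper's implicitly read ``uses colours in $[0,d-1]$'' as meaning that colours $0$ and $d-1$ are genuinely attained; this is the intended reading (and is forced if one insists the colour set is tight), but it is worth making explicit, since otherwise the claim fails.
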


We prove this result in Appendix~\ref{sec:appendix-weak-conditions} (Corollary~\ref{cor-app-weak:normal-form-index-par}), and we provide there a refined characterisation by using "generalised weak automata".

\subsection{Minimisation of deterministic parity automata recognising Muller languages}\label{subsec-corollaries: minimisation-parity}
The minimisation of $\oo$-automata is a fundamental problem of an intriguing complexity. In 2010, Schewe showed that the minimisation of "deterministic" "B\"uchi" and "parity" "automata" is $\NPcomplete$, if the "acceptance condition" is defined over the states~\cite{Schewe10MinimisingNPComplete}. However, the reduction of $\NP$-hardness does not generalise to automata with edge-based acceptance. A surprising positive result was obtained in 2019 by Abu Radi and Kupferman: we can minimise in polynomial time "HD" "coB\"uchi" automata using transition-based acceptance~\cite{AK22MinimizingGFG}. Schewe showed that the minimisation was again $\NP$-hard for "HD" automata with state-based acceptance~\cite{Schewe20MinimisingGFG}. To the best of our knowledge, the only existing hardness result applying to transition-based automata is Casares' result about the $\NP$-completeness of the minimisation of "deterministic" "Rabin" "automata"~\cite{Casares2021Chromatic}. In fact, in~\cite{Casares2021Chromatic} a stronger result is proven: it is $\NP$-hard to minimise "deterministic" "Rabin" "automata" "recognising" "Muller languages".

In this section, we provide a polynomial-time algorithm for the minimisation of "DPA" "recognising" "Muller languages" (with "acceptance condition" over transitions). By Proposition~\ref{prop-ZT:correctness_ZT_parity} and Theorem~\ref{thm-zt:strong_optimality_ZTparity}, we know that a minimal "(history-)deterministic" "parity" "automaton" "recognising" a "Muller language" $L = \MullerC{\F}{\SS}$ can be constructed in linear time from the "Zielonka tree" $\zielonkaTree{\F}$. 
We will therefore provide a polynomial-time algorithm computing this "Zielonka tree" from a "DPA" "recognising"~$L$.

\begin{theorem}\label{thm-min:minimisation_parity_automata}
	Let $\A$ 
	be a "DPA"  "recognising@@automaton" a "Muller language" $L = \MullerC{\F}{\SS}$. We can find a minimal "deterministic" (resp. "history-deterministic") "parity" "automaton" recognising $L$ in polynomial time in the size of the representation of $\A$.\footnotemark{}
	\footnotetext{We can assume that the representation of $\A$ has size polynomial in $|Q|+|\SS|$, where $Q$ and $\SS$ are the set of states and the "input alphabet" of $\A$. Indeed, as $\A$ is "deterministic" the number of transitions is at most $|Q|\cdot|\SS|$, and we can assume that $\A$ has no more "output colours" than transitions.}
\end{theorem}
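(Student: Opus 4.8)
The plan is to reduce the problem to computing the Zielonka tree $\zielonkaTree{\F}$ in polynomial time from a "DPA" $\A$ "recognising" $L = \MullerC{\F}{\SS}$; once we have $\zielonkaTree{\F}$, Definition~\ref{def-ZT:parityAutomatonZT} builds the "ZT-parity-automaton" $\zielonkaAutomaton{\F}$ in linear time, and by Proposition~\ref{prop-ZT:correctness_ZT_parity} and Theorem~\ref{thm-zt:strong_optimality_ZTparity} this is a minimal automaton amongst both "deterministic" and "history-deterministic" "parity" "automata" "recognising" $L$. So the whole task is to show that $\zielonkaTree{\F}$ has polynomial size and is polynomially computable from $\A$.

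First I would observe that the nodes of $\zielonkaTree{\F}$ are labelled by subsets $Y \subseteq \SS$ that are alternately in $\F$ and not in $\F$ along each branch, and that the relevant subsets can be accessed via membership queries ``is $C \in \F$?'' for $C \subseteq \SS$. Since $L$ is a "Muller language", $\minf(w) \in \F$ depends only on $\minf(w)$, and for $\A$ "deterministic" we can test $C \in \F$ by the following procedure: build the word $w_C = (c_1 c_2 \cdots c_{|C|})^\oo$ cycling through the letters of $C$, run $\A$ on a long enough prefix of $w_C$ (of length polynomial in $|Q| \cdot |C|$) to reach a lasso, read off the minimal colour on the cycle of the lasso, and accept iff it is even. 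Thus each membership query costs polynomial time. The key combinatorial point is that $\zielonkaTree{\F}$ has polynomial size: by Lemma~\ref{lemma-acd:tree_q_ZT_local_Muller} applied to the trivial single-state "DMA" for $L$ — or more directly by the optimality result of Theorem~\ref{thm-zt:weak_optimality_ZTparity} — the number of "leaves" of $\zielonkaTree{\F}$ is at most $\size{\A}$ (since $\A$ itself is a "DPA", hence a "DMA", "recognising" $L$, and $|\zielonkaAutomaton{\F}| = |\leaves(\zielonkaTree{\F})| \leq \size{\A}$). Moreover the "height" of $\zielonkaTree{\F}$ is at most the number of "colours" of $\A$, which is polynomial, by Theorem~\ref{thm-zt:optimality_ZTparity-priorities} and Proposition~\ref{prop-prelim:parity_index-HDAutomata}. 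A tree whose number of leaves and whose height are both polynomial has a polynomial number of nodes in total.

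The construction of $\zielonkaTree{\F}$ then proceeds top-down: start with the "root" labelled $\SS$; for a node labelled $X$, determine whether $X \in \F$ by one membership query, and then compute its "children" as the \emph{maximal} non-empty subsets $Y \subsetneq X$ with $Y \in \F$ (if $X \notin \F$) or $Y \notin \F$ (if $X \in \F$). The subtlety is that there could a priori be exponentially many maximal such subsets to search for; here I would use the bound on the number of nodes to control this: the children of a node are exactly the labels of nodes appearing in the already-bounded tree, so their number is polynomial. To actually \emph{find} the maximal subsets $Y$ with the desired membership status, one can proceed greedily — for each element $a \in X$, test whether removing $a$ (and possibly more elements) can be completed to a maximal set of the opposite membership type — using the standard down-closure/up-closure argument for union-closed or intersection-closed families implicit in Propositions~\ref{prop-typ:RabinZielonkaShape}–\ref{prop-typ:parityZielonkaShape}, but more robustly one simply enumerates candidate maximal subsets by starting from $X \setminus \{a\}$ for each $a$ and greedily removing further elements while keeping the membership status flipped, which takes $O(|\SS|^2)$ queries per node. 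Each node is processed with polynomially many membership queries, each query runs in polynomial time, and there are polynomially many nodes, so the whole algorithm runs in polynomial time.

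\textbf{Main obstacle.} The delicate part is arguing that the ``compute the maximal subsets $Y$'' step is genuinely polynomial: naively, the maximal subsets of $X$ of a given membership type could be exponentially numerous \emph{before} one knows they index nodes of a small tree. The clean way around this is to establish the size bound on $\zielonkaTree{\F}$ \emph{first} (via Theorem~\ref{thm-zt:weak_optimality_ZTparity}, comparing to the given $\A$), which guarantees a priori that every node has only polynomially many children, and then to use a greedy extraction of maximal subsets that is certified correct by repeated membership queries. One must also be careful that the membership-query subroutine is correctly implemented: for a "deterministic" $\A$ the unique "run over" $w_C$ eventually loops, the set of colours seen infinitely often is exactly the set of colours on that loop, and $\minf(w_C) = C$, so the parity of the minimal such colour decides $C \in \F$ — this uses "determinism" of $\A$ essentially, which is why the input is a "DPA" rather than an "HD" automaton, though one could also first apply Lemma~\ref{lemma-ZT:finite-memory-resolvers} to reduce an "HD" input to a "deterministic" one via the "composition@@memory" $\A \prodMem{\ss} \M$.
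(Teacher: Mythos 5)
Your overall strategy coincides with the paper's: reduce the problem to computing the Zielonka tree $\zielonkaTree{\F}$ in polynomial time, bound its size a priori via the optimality results (at most $|\A|$ leaves by Theorem~\ref{thm-zt:weak_optimality_ZTparity}, height at most the number of colours of $\A$), and then output $\zielonkaAutomaton{\F}$, which is minimal among both deterministic and history-deterministic parity automata by Proposition~\ref{prop-ZT:correctness_ZT_parity} and Theorem~\ref{thm-zt:strong_optimality_ZTparity}. Your membership oracle (run the deterministic $\A$ on $(c_1\cdots c_{|C|})^\oo$ until the run lassos and read the parity of the minimal colour on the loop) is correct; it is essentially Lemma~\ref{lemma-minm:SCC-determines-acceptance}.

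There is, however, a genuine gap in the step you yourself flag as delicate: computing the children of a node labelled $X$, i.e.\ the \emph{maximal} subsets $Y\subsetneq X$ of the opposite membership type. For a general Muller family the rejecting subsets of $X$ carry no closure structure at all --- the appeal to ``union-closed or intersection-closed families'' only applies to Rabin, Streett or parity languages, which is precisely not the generic case --- so a greedy descent or ascent driven by membership queries is not certified to return the globally maximal sets. Concretely, a set $Y\notin\F$ such that $Y\cup\{a\}\in\F$ for every single $a\in X\setminus Y$ need not be maximal, since some $Y\cup\{a,b\}$ may again lie outside $\F$; hence your add-back greedy can stall at a set that is not a child, and dually a descent from $X\setminus\{a\}$ stops at the first rejecting set it meets, which need not be, nor be extendable by single-element additions to, every maximal rejecting subset. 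Knowing a priori that the children are few does not by itself yield an algorithm: with membership queries alone one cannot in general locate hidden maximal sets of an arbitrary family in polynomially many queries. The paper resolves exactly this point by exploiting the structure of the parity automaton rather than treating it as a black-box oracle: the procedure $\mathtt{AlternatingSets}$ restricts $\A$ to the letters of $X$, takes a final SCC, deletes the transitions carrying the minimal (even) colour, and recursively decomposes the remainder into SCCs; the maximal rejecting subsets are exactly the letter-sets of the SCCs whose minimal colour has flipped parity, and the correctness argument (any strongly connected subautomaton strictly containing such an SCC has minimal colour of the original parity) is where determinism and the parity shape of the acceptance condition are genuinely used. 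To complete your proof you would need to replace the subset-greedy by an argument of this kind on the automaton itself.
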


\subparagraph*{Description of the algorithm.}
Let $\A=(Q,\SS,q_0,\GG,\DD,\parity)$  be a "DPA"  "recognising@@automaton" $L = \MullerC{\F}{\SS}$. We outline a recursive algorithm  building $\zielonkaTree{\F} = (N, \ancestor, \nu)$ in a top-down fashion; it starts from the "root" of the "tree" (which is always labelled $\SS$), and each time that some node is added to $N$, we compute its "children".
If we have built $\zielonkaTree{\F}$ up to a node $n$, we compute the "children" of $n$ by using the procedure $\AlternatingSets$ described in Algorithm~\ref{algo: ChildrenNode}, which we disclose next. 

We assume without loss of generality that $n$ is "round", that is, $\nu(n)\in \F$. First, we take the restriction of $\A$ to transitions labelled with letters in $\nu(n)$ and pick a "final SCC" on it. Such "final SCC" induces a "subautomaton" $\A'$ of $\A$ "recognising" $\MullerC{\restSubsets{\F}{\nu(n)}}{\nu(n)}$ (see also Lemma~\ref{lemma-ZT:X-FSCC-induce-automata}).
Our objective is to find the maximal "subautomata" of $\A'$ using as "input letters" sets $X\subseteq \nu(n)$ such that $X\notin \F$. We will keep all such subsets $X$ in a list $\mathsf{altSets}$. The labels of the "children" of $n$ will then correspond to the maximal sets appearing  in this list, which are returned by the algorithm $\AlternatingSets$ (Line~\ref{line AS : Return}).
In order to find them, we remove the transitions using the minimal colour in $\A'$ (that is even, since $\nu(n)\in \F$) and compute a decomposition in "strongly connected components" of the obtained graph.
Let $\S$ be a component of this decomposition and let $\SS_\S\subseteq \nu(n)$ be the "input letters" appearing in it. Then, $\SS_\S\notin \F$ if and only if  the minimal "output colour" in $\S$ is odd (see Lemma~\ref{lemma-minm:SCC-determines-acceptance} below). In this case, we add $\SS_\S$ to $\mathsf{altSets}$. On the contrary, we remove the minimal (even) colour from $\S$, and we start again finding a decomposition in "SCCs" of the obtained graph.

We include the pseudocode for the procedure $\AlternatingSets$ in Algorithm~\ref{algo: ChildrenNode}. We use the following notations:
\begin{itemize}
	\item \AP $\intro*\mletters(\S)$ is the set of "input letters" appearing in $\S$,
	\item \AP $\intro*\minCol(\S)$ is the minimal "output colour" appearing in  $\S$ (which determines whether $\mletters(\S)\in \F$, if $\S$ is "strongly connected"),
	\item \AP $\intro*\SCCDec(\A)$ outputs a list of the "strongly connected components" of~$\A$. If $\A$ is empty, it outputs an empty list.
	\item \AP $\intro*\maxInclusion(\mathsf{lst})$ returns the list of the maximal subsets in $\mathsf{lst}$.
\end{itemize}

\begin{algorithm}[ht]
	\caption{ $\intro*\AlternatingSets(\A)$: Computing the children of a node.}
	\label{algo: ChildrenNode}
		\KwIn{A "strongly connected" "automaton" $\A$ over $\SS$; $\Lang{\A}=\Muller{\F}$} 
		\KwOut{Maximal subsets $\SS_1,\dots, \SS_k\subseteq \SS$ such that $\SS_i\in \F \iff \SS\notin \F$}
		$d \leftarrow \minCol(\A)$\;
		 $\A_{>d} \leftarrow $ restriction of $\A$ to transitions $\DD_{>d}=\{ q\re{a:x}q'\in \DD \mid x>d\}$\;
		 $\langle \S_1, \dots, \S_r\rangle \leftarrow \SCCDec(\A_{>d})$ \;
		 $\mathsf{altSets}\leftarrow \{ \}$ \;
		\For{$i = 1,\dots, r$}{\label{line AS : } 
		\If{$\minCol(\S_i)$ is odd if and only if $d$ is even}{
                 $\mathsf{altSets} \leftarrow  \mathsf{altSets} \, \cup \{ \mletters(\S_i) \}$\; \label{line AS : addSubset}
                  }
		\Else{
                 $\mathsf{altSets} \leftarrow  \mathsf{altSets} \, \cup \, \AlternatingSets(\S_i)$\; \label{line AS : RecursiveAltenation}
                }
                }
		 $\mathsf{maxAltSets} \leftarrow \maxInclusion(\mathsf{altSets})$ \label{line AS : MaxLetters}\;
		 \Return{$\mathsf{maxAltSets}$} \label{line AS : Return}\;
\end{algorithm}

\subparagraph*{Correctness of the algorithm.}
Let $n$ be a node of the "Zielonka tree" of $\F$ labelled with $\nu(n)$, and let $\A_n$ be an "accessible" "subautomaton" of $\A$ over $\nu(n)$ "recognising" $\MullerC{\restSubsets{\F}{\nu(n)}}{\nu(n)}$.
We prove that $\AlternatingSets(\A_n)$ returns a list of sets corresponding to the labels of the "children" of~$n$ in $\zielonkaTree{\F}$. We assume without loss of generality that $\nu(n)\in \F$ and therefore the minimal colour~$d$ in $\A_n$ is even.

First, we observe that if $X\subseteq \SS$ is added to $\mathsf{altSets}$ during the execution of the procedure $\AlternatingSets$, then $X$ is the set of input letters appearing in a "cycle" whose minimal colour is odd.
Next lemma implies that in this case, $X\notin \F$.
In particular, no subset is added if $n$ is a "leaf" of $\zielonkaTree{\F}$.

\begin{lemma}\label{lemma-minm:SCC-determines-acceptance}
	Let $\A$ be a "DPA" such that $\Lang{\A} = \MullerC{\F}{\SS}$.
	Let $\ell\in \cycles{\A}$ be an "accessible" "cycle" of $\A$. Let $\SS_\ell\subseteq \SS$ be the "input letters" appearing in $\ell$, and let $d_\ell$ be the minimal colour on $\ell$. Then, $\SS_\ell \in \F$ if and only if $d_\ell$ is even. 
\end{lemma}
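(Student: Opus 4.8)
The plan is to prove Lemma~\ref{lemma-minm:SCC-determines-acceptance} by relating the given cycle $\ell$ to an infinite accepting/rejecting run of $\A$ over a suitably chosen word, and then invoking the correctness of $\A$ together with the prefix-independence of $\MullerC{\F}{\SS}$. First I would observe that since $\ell$ is an accessible cycle, there is a finite path from the initial state $q_0$ to some state $v\in \states{\ell}$, and, since $\ell$ is a cycle over $v$, there is a finite path $\rr_\ell = v \lrpE v$ visiting exactly the edges of $\ell$. Concatenating the access path with $\rr_\ell^\oo$ gives an infinite run $\rr$ of $\A$ (this is a genuine run, as $\A$ is deterministic it is the unique run over the word $w$ labelling it). The key point is that the set of edges taken infinitely often by $\rr$ is exactly $\ell$, hence $\minf(w) = \SS_\ell$ (the set of input letters occurring infinitely often is exactly the set of input letters appearing on $\ell$), and the minimal colour produced infinitely often by $\rr$ is $d_\ell = \min\gg(\ell)$.

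Then I would split into the two directions using the correctness of $\A$, i.e. $\Lang{\A} = \MullerC{\F}{\SS}$, and the definition of the parity acceptance condition. On the one hand, $\rr$ is accepting if and only if $d_\ell$ is even, by definition of the parity condition. On the other hand, $w$ is accepted by $\A$ if and only if $\rr$ is accepting (determinism), which happens if and only if $w\in \MullerC{\F}{\SS}$, i.e. if and only if $\minf(w) = \SS_\ell \in \F$. Chaining these equivalences yields $\SS_\ell\in \F \iff d_\ell$ is even, as desired.

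The only mildly delicate point — and the one I expect to need the most care — is the passage from "the run visits exactly the edges of $\ell$ infinitely often" to the two identities $\minf(w)=\SS_\ell$ and "the minimal colour produced infinitely often is $d_\ell$". Both follow from the fact that in a Muller/parity transition system the acceptance of a run depends only on the set of transitions taken infinitely often (as recalled in Section~\ref{subsect-prelim:acceptance-conditions}: for an infinite run $\rr$, $\minf(\rr)$ is a cycle and $\rr$ is accepting iff that cycle is), and from prefix-independence of $\MullerC{\F}{\SS}$, which lets us discard the initial access path: $w\in \MullerC{\F}{\SS}$ iff its suffix $(\rr_\ell)^\oo$ (as a word) is in $\MullerC{\F}{\SS}$, and $\minf$ of that suffix is precisely $\SS_\ell$. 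With these observations the proof is a short chain of equivalences; no real obstacle remains beyond bookkeeping of which set of edges/colours/letters is seen infinitely often.

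\begin{proof}
	Since $\ell$ is accessible, there is a finite path from an initial state of $\A$ to some $v \in \states{\ell}$. As $\ell\in \cyclesState{\A}{v}$, there is a finite run $\rr_\ell = v \lrpE v$ whose set of edges is exactly $\ell$. Let $w\in \SS^\oo$ be the word labelling the infinite run obtained by following the access path and then repeating $\rr_\ell$ forever. Since $\A$ is deterministic, this is the unique run over $w$, and the set of transitions it takes infinitely often is exactly $\ell$. Hence $\minf(w)$ is exactly $\SS_\ell$, the set of input letters appearing on $\ell$, and the minimal colour produced infinitely often along this run is $d_\ell = \min \gg(\ell)$. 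By definition of the parity condition, this run is accepting if and only if $d_\ell$ is even. On the other hand, since $\A$ is deterministic and "recognises" $\MullerC{\F}{\SS}$, the run over $w$ is accepting if and only if $w\in \MullerC{\F}{\SS}$, i.e. if and only if $\minf(w) = \SS_\ell \in \F$. Combining both equivalences, $\SS_\ell\in \F$ if and only if $d_\ell$ is even.
\end{proof}
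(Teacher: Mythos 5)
Your proof is correct and follows essentially the same route as the paper: exhibit a word $w$ whose unique run eventually loops through exactly the edges of $\ell$, so that $\minf(w)=\SS_\ell$ and the minimal colour seen infinitely often is $d_\ell$, then chain the parity-acceptance criterion with $\Lang{\A}=\MullerC{\F}{\SS}$. The paper's proof is just a more terse version of the same argument.
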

\begin{proof}
	Since $\ell$ is an "accessible" "cycle", there is a word $w\in \SS^\oo$ such that $\minf(w) = \SS_\ell$ and verifying that the edges visited infinitely infinitely often by the (only) "run over" $w$ in $\A$ are the edges of $\ell$. Therefore $w\in \Lang{\A}$ if and only if $d_\ell$ is even, and since $\Lang{\A}$ is a "Muller language", $w\in \Lang{\A}$ if and only if $\minf(w) = \SS_\ell \in \F$.
\end{proof}

As the final output of the algorithm consists solely on the maximal subsets in $\mathsf{altSets}$, and no accepting set is added to this list, it suffices to show that each maximal rejecting subset $\SS_{\max}\subseteq \nu(n)$ is added to $\mathsf{altSets}$ at some point.

Let $\SS_{\max}\subseteq \nu(n)$ be one of the maximal rejecting subsets of $\nu(n)$. Let $\S$ be a "final SCC" of the restriction of $\A_n$ to transitions labelled with letters in $\SS_{\max}$ (by the previous lemma, $\minCol(\S)$ is odd).  We show that $\SS_{\max}$ will eventually be considered by the recursive procedure $\AlternatingSets$, and therefore $\SS_{\max}$ will be added to $\mathsf{altSets}$ 
We use of the following remark:

\begin{claim}
	If $\S'$ is a "strongly connected" "subautomaton" of $\A_n$ such that $\S\subsetneq \S' \subseteq \A_n$, then the minimal colour in $\S'$ is even.
\end{claim}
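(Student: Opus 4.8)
The plan is to reduce the statement to Lemma~\ref{lemma-minm:SCC-determines-acceptance} and then exploit two maximality facts: that $\S$ is a maximal strongly connected subgraph of the $\SS_{\max}$-restriction of $\A_n$, and that $\SS_{\max}$ is a maximal subset of $\nu(n)$ lying outside $\F$. Throughout we keep the running assumption of the surrounding proof that $\nu(n)\in\F$, so that the minimal colour $d$ of $\A_n$ is even.

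First I would note that, since $\S'$ is strongly connected, its set of transitions is a cycle of $\A$: fixing a state $v$ of $\S'$, every transition of $\S'$ lies on a closed walk through $v$ (go from $v$ to its source, take the transition, return to $v$), so the transition set of $\S'$ is a union of cycles all containing $v$ and hence a cycle. This cycle is accessible because every state of $\A_n$, and thus of $\S'$, is accessible in $\A$. Writing $\mletters(\S')$ for the set of input letters appearing on $\S'$, Lemma~\ref{lemma-minm:SCC-determines-acceptance} then says that the minimal output colour on $\S'$ is even if and only if $\mletters(\S')\in\F$, so it suffices to prove $\mletters(\S')\in\F$.

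Next I would record that $\mletters(\S)=\SS_{\max}$. Here we use that $\A_n$ was chosen as a final SCC of the restriction of the complete automaton $\A$ to transitions labelled in $\nu(n)$, so by Lemma~\ref{lemma-ZT:existance-X-FSCC} it is $\nu(n)$-closed, in particular $\SS_{\max}$-closed. Then $\S$, being a final SCC of the restriction of $\A_n$ to $\SS_{\max}$-labelled transitions, is an $\SS_{\max}$-FSCC: every state of $\S$ has, for each $a\in\SS_{\max}$, an outgoing $a$-transition, and since no transition leaves $\S$ that transition stays inside $\S$; as all transitions of $\S$ carry letters of $\SS_{\max}$, we get $\mletters(\S)=\SS_{\max}$ (this is also what makes $\minCol(\S)$ odd, as already asserted in the main text). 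Now, because $\S\subsetneq\S'$ and $\S'$ is strongly connected, $\S'$ cannot be a subgraph of the $\SS_{\max}$-restriction of $\A_n$: that restriction has $\S$ as a maximal strongly connected subgraph, hence contains no strongly connected subgraph properly containing $\S$. Therefore $\S'$ uses a transition labelled by some letter of $\nu(n)\setminus\SS_{\max}$, and combining this with $\mletters(\S')\supseteq\mletters(\S)=\SS_{\max}$ and $\mletters(\S')\subseteq\nu(n)$ (all transitions of $\A_n$ carry letters of $\nu(n)$) we obtain $\SS_{\max}\subsetneq\mletters(\S')\subseteq\nu(n)$. By maximality of $\SS_{\max}$ among subsets of $\nu(n)$ not belonging to $\F$, this forces $\mletters(\S')\in\F$, and the reduction of the second paragraph then yields that the minimal colour on $\S'$ is even.

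The only slightly delicate point is the bookkeeping that gives $\mletters(\S)=\SS_{\max}$: it is essential that $\A_n$ be $\nu(n)$-closed, so that passing to the $\SS_{\max}$-restriction preserves $\SS_{\max}$-closedness and the final SCC $\S$ genuinely sees every letter of $\SS_{\max}$; without this one only gets $\mletters(\S)\subseteq\SS_{\max}$ and cannot invoke the maximality of $\SS_{\max}$. Once this is in place, the rest is a routine combination of Lemma~\ref{lemma-minm:SCC-determines-acceptance} with the maximality of the SCC $\S$ and of the set $\SS_{\max}$.
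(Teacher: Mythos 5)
Your proof is correct and follows essentially the same route as the paper's: both establish $\SS_{\max}\subsetneq\mletters(\S')\subseteq\nu(n)$ from the finality/maximality of $\S$ in the $\SS_{\max}$-restriction, and then conclude via the maximality of $\SS_{\max}$ among rejecting subsets of $\nu(n)$ together with Lemma~\ref{lemma-minm:SCC-determines-acceptance}. Your version merely spells out the bookkeeping (that $\mletters(\S)=\SS_{\max}$ via $\nu(n)$-closedness of $\A_n$, and that a strongly connected subautomaton strictly containing the SCC $\S$ must use a letter outside $\SS_{\max}$) that the paper leaves implicit.
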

\begin{subproof}
	Let $\SS'$ be the "input letters" appearing in $\S'$.
	As $\S\subsetneq \S'$ and no transition labelled with a letter in $\SS_{\max}$ leaves $\S$, we must have $\SS_{\max}\subsetneq \SS'$. The claim follows from Lemma~\ref{lemma-minm:SCC-determines-acceptance}.
\end{subproof}

Therefore, either $\S$ is one of the "SCCs" of $\A_{>d}$ (in this case, $\SS_{\max}$ is added to $\mathsf{altSets}$ in Line~\ref{line AS : addSubset}), or it is contained in one "SCC" of $\A_{>d}$ whose minimal colour is even, and we can conclude by induction.

\subparagraph{Complexity analysis.}
We will show that the proposed algorithm works in time $\O(|Q|^3|\SS|^2|\GG|)$, where $Q$, $\SS$ and $\GG$ are the states, set of "input letters" and set of "output colours" of the automaton, respectively.
We remark that, since $\A$ is "deterministic", $|\DD|\leq |Q||\SS|$.

First, we study the complexity of the procedure $\AlternatingSets(\A)$. At each recursive call, at least one edge is removed from $\DD$, and a decomposition in "strongly connected components" of the "automaton" is performed, which can be done in $\O(|Q||\SS|)$~\cite{Tarjan72DepthFirst}. Therefore, the "children" of a node of the "Zielonka tree" can be computed in $\O(|Q|^2|\SS|^2)$.

We perform this operation for each node of the "Zielonka tree". By the optimality of the "ZT-parity-automaton" (Theorems~\ref{thm-zt:optimality_ZTparity-priorities} and~\ref{thm-zt:strong_optimality_ZTparity}), we know that $|Q| \geq |\leaves(\zielonkaTree{\F})|$ and that the "height" of $\zielonkaTree{\F}$ is at most $|\GG|$. Therefore, $|\zielonkaTree{\F}| \leq |Q||\GG|$, and the procedure $\AlternatingSets$ is called at most $|Q||\GG|$ times. We conclude that the proposed algorithm works in time $\O(|Q|^3|\SS|^2|\GG|)$.

\begin{remark}[State-based automata]
    The acceptance condition of the parity automaton obtained from the "Zielonka tree" appears naturally over the transitions of the automaton. In order to make it a state-based automaton, we would need to add one state per "colour" it uses. It turns out that, in this specific case, this is optimal, and the state-based parity automaton we obtain is minimal. Therefore, we can also minimise in polynomial time state-based parity automata "recognising" "Muller languages".
    However, it is no longer possible to obtain optimal transformations towards state-based parity automata based on the "ACD" (see~\cite[Section~5.3]{CDMRS22Tacas} and \cite[Section~I.8]{Casares23Thesis} for further details).
\end{remark}

	\section{Conclusion}\label{section:conclusion}
	In this work, we have carried out an extensive study of transformations of "automata" and "games" that use "Muller" "acceptance conditions". We have proposed different types of "morphisms@@TS" to formalise the idea of valid transformations of "transition systems", which distil the central features of existing transformations. Our main contribution resides in the introduction of a new structure, the "alternating cycle decomposition", which is a succinct representation of the \emph{alternating chains of loops} of a "Muller" "automaton" -- in the sense of Wagner~\cite{Wagner1979omega} -- and provides the necessary information to understand the interplay between its "acceptance condition" and its "underlying graph".


\subparagraph*{Optimal and practical transformations of automata.} We have presented a transformation that, given a "deterministic" "Muller" "automaton",  provides an equivalent "deterministic" "parity" "automaton", and another that provides an equivalent "history-deterministic" "Rabin" "automaton". These transformations are optimal in a strong sense; the obtained automata have a minimal number of states amongst those which accept a "history-deterministic mapping" to the original "Muller" "automaton".
The first of these transformations has been implemented in the open-source tools Spot~2.10~\cite{Spot2.10CAV22} and Owl~21.0~\cite{KMS18Owl}, and it has been shown to perform extremely well in practice~\cite{CDMRS22Tacas}, as the natural definition of the "ACD" provides a fairly efficient way to compute the transformation, while its optimality guarantees to produce "automata" as small as possible.

\subparagraph*{Understanding the limitations of history-deterministic automata.} As a corollary of our results, we have obtained that minimal "deterministic" and "history-deterministic" "parity" "automata" "recognising" "Muller languages" have the same size (Corollary~\ref{cor-zt:no-small-HD-for-Muller}). Moreover, we have shown that "HD" "parity" "automata" that are strictly smaller than equivalent "deterministic" ones cannot come from a "deterministic" "Muller" "automaton" (Corollary~\ref{cor-acd:HD-transformations-are-big}). This provides a partial explanation on the difficulty to find succinct "HD" "parity" "automata", as we could argue that a simple way to conceptualise "$\oo$-regular languages" is through "deterministic" "Muller" "automata". 
Maybe most importantly, this sets a limitation in the usefulness of "history-determinism" in practice, as procedures that use a "DMA" as an intermediate step -- as the ones from the tools Strix~\cite{LMS20SynthesisLTL} and \texttt{ltlsynt}~\cite{MichaudColange18Synt}, or  automata determinisation~\cite{Piterman2006fromNDBuchi,Schewe2009tighter,LodingP19} -- cannot benefit from the succinctness of "HD automata".

On the other hand, we have shown that, if our objective is to obtain "Rabin" "automata" as output, the "ACD-HD-Rabin-transform" allows us to benefit from succinct "HD automata". In this case, it has been shown that these automata can be exponentially smaller than equivalent "deterministic" ones~\cite[Theorem~21]{CCL22SizeGFG}.

\subparagraph*{Disclosing the structure of $\oo$-automata.} As an application of the insights gained from the "alternating cycle decomposition", we have derived results concerning "typeness" of "automata". In particular, we have characterised when we can define a "parity", "Rabin" or "Streett" "condition@@acc" on top of a "Muller" "automaton", obtaining an "equivalent@@aut" "automaton" (Propositions~\ref{prop-typ:Rabin ACD type}, \ref{prop-typ:Streett ACD type} and~\ref{prop-typ:parity ACD type}). These characterisations have already been proven instrumental in works about the "memory@@games" for "games"~\cite{Casares2021Chromatic}, and to obtain lower bounds on the size of "deterministic" "Rabin" automata~\cite{CCL22SizeGFG}.

We have also employed the "ACD" to present a "normal form" for "parity" "transition systems" and systematically proved the most important properties that make this form a valuable tool for manipulating "parity" "automata". We believe that this "normal form" will be useful to extend existing results about "B\"uchi" and "coB\"uchi" automata (as the ones in~\cite{AK22MinimizingGFG,BKS17HowDeterministicGFG, BCRV22HalfPosBuchi}) to "parity" automata.
 	
    \printbibliography
	
	\newpage
	\appendix
	\section{Generalised classes of acceptance conditions}
	\label{sec:appendix-weak-conditions}
	\paragraph*{Further acceptance conditions.}

\begin{description}	
	\item[Generalised B\"uchi.] \AP Given $k$ non-empty subsets $B_1,\dots, B_k\subseteq \GG$, we define the ""generalised Büchi language associated to"" $B=\{B_1,\dots,B_k\}$ as
	\[ \intro*\genBuchiC{B}{\GG}  = \{w\in \GG^\oo \mid \minf(w) \cap B_i \neq \emptyset \text{ for all } i\in\{1,\dots, k\}\}. \]
	\AP We say that a language $L\subseteq \GG^\oo$ is a ""generalised B\"uchi language"" if there is a family of sets $B=\{B_1,\dots,B_k\}$ such that $L = \genBuchiC{B}{\GG}$.
	\item[Generalised coB\"uchi.] \AP Given $k$ non-empty subsets $B_1,\dots, B_k\subseteq \GG$, we define the ""generalised coB\"uchi language associated to"" $B=\{B_1,\dots,B_k\}$ as
	\[ \intro*\gencoBuchiC{B}{\GG}  = \{w\in \GG^\oo \mid \minf(w) \cap B_i = \emptyset \text{ for some } i\in\{1,\dots, k\}\}. \]
	\AP We say that a language $L\subseteq \GG^\oo$ is a ""generalised coB\"uchi language"" if there is a family of sets $B=\{B_1,\dots,B_k\}$ such that $L = \gencoBuchiC{B}{\GG}$.
\end{description}

\begin{remark}\label{rmk-app:genBuchi-expressive-power}
	"Deterministic" "generalised B\"uchi" (resp. "generalised coB\"uchi") "automata" have the same expressive power than "deterministic" "B\"uchi" (resp. "coB\"uchi") "automata": they "recognise" languages of "parity index" at most $[0,1]$ (resp. $[1,2]$).
\end{remark}

\AP We will also define ""conditions that depend on the structure"" of the "transition system" and not only on the set of colours.
\begin{description}	
	\item[Generalised weak transition systems.]\AP Let $\TS =(\underlyingGraph{\TS}, \macc{\TS})$ be a "transition system" using a "parity condition" $\macc{\TS}=(\gg, [d_{\min},d_{\max}], \parity)$. We say that $\TS$ is $\intro*\Weak{d}$ if in each "strongly connected component" $\S \subseteq \underlyingGraph{\TS}$ there are at most $d$ different colours that appear, that is, $|\gg(E_\S)| \leq d$, where $E_\S$ is the set of edges of $\S$.	
\end{description}

\AP As for the rest of conditions, we say that a "transition system" $\TS$ is ""$\Weak{d}$ type"" if there exists an "isomorphic@@TS" "parity" "transition system" $\TS'\equivTrans \TS$ that is $\Weak{d}$.

The adjective \textit{Weak} has typically been used to refer to the condition corresponding to a partition of $\TS$ into accepting and rejecting "SCC". A "run" will be "accepting@@run" if the component it finally stays in is accepting. It corresponds to $\Weak{1}$ with our notation.

As we will show (Corollary~\ref{cor-app:weak-automata-weaktype}), the notation is justified by the fact that an "$\oo$-regular language" of "parity index" $\WeakIndex{d}$ can be "recognised" by a "deterministic" $\Weak{d}$ automaton.

\paragraph*{The Zielonka tree of generalised acceptance conditions.}
\begin{definition}
	Let $T$ be a "tree" $T$ with nodes partitioned into "round" nodes and "square" nodes. We say that $T$ has:
	\begin{itemize}
		\item ""B\"uchi shape"" if it has a single branch, "height" at most $2$, and if it has "height" $2$ its "root" is "round".
		\item ""coB\"uchi shape"" if it has a single branch, "height" at most $2$, and if it has "height" $2$ its "root" is "square".
		
		\item  \AP""Generalised B\"uchi shape"" if it has "height" at most $2$, and if it has "height" $2$ its "root" is "round".
		
		\item  \AP""Generalised coB\"uchi shape"" if it has "height" at most $2$, and if it has "height" $2$ its "root" is "square".
	\end{itemize}
\end{definition}

\begin{proposition}
	Let $\F\subseteq \powplus{\Gamma}$ be a family of non-empty subsets. Then $\MullerC{\F}{\GG}$ is a "B\"uchi@@language" "(resp. coB\"uchi) language" if and only if $\zielonkaTree{\F}$ has "B\"uchi@@shapeZT" "(resp. coB\"uchi) shape".
\end{proposition}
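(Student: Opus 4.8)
The plan is to reduce the statement to facts already in the paper: a language is a "B\"uchi language" if and only if it is a $[0,1]$-parity language, and dually a "coB\"uchi language" is exactly the complement of a "B\"uchi language" (Figure~\ref{fig-prelim:classes-acceptance}); $\MullerC{\F}{\GG}$ is a "parity language" if and only if $\zielonkaTree{\F}$ has "parity shape" (Proposition~\ref{prop-typ:parityZielonkaShape}); and in that case its "parity index" equals $[\minparityZ{\F},\maxparityZ{\F}]$ (Theorem~\ref{thm-zt:optimality_ZTparity-priorities}), which is in particular always of bracket form, so a "Muller language" never has "parity index" $\WeakIndex{d}$. All the work is a careful case analysis on the "height" of $\zielonkaTree{\F}$ and on whether its "root" is "round" or "square", read off from Remark~\ref{rmk:parity_levels_ZT}.

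For the "B\"uchi language" case, I would first show that "B\"uchi shape" implies that $\MullerC{\F}{\GG}$ is a "B\"uchi language". A tree of "B\"uchi shape" has a single "branch" (hence "parity shape", so $\MullerC{\F}{\GG}$ is a "parity language"), "height" at most $2$, and if the "height" is $2$ its "root" is "round". If the "height" is $1$, then $\zielonkaTree{\F}$ is a single node, so $\F = \powplus{\GG}$ or $\F = \emptyset$ according to whether it is "round" or "square"; correspondingly $\MullerC{\F}{\GG}$ is $\GG^\oo = \BuchiC{\GG}{\GG}$ or $\emptyset = \BuchiC{\emptyset}{\GG}$, a "B\"uchi language" in either case. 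If the "height" is $2$, the "root" is "round", so by Remark~\ref{rmk:parity_levels_ZT} we have $\minparityZ{\F}=0$ and $\maxparityZ{\F}=1$, hence $\MullerC{\F}{\GG}$ is a $[0,1]$-parity language, thus a "B\"uchi language". Conversely, if $\MullerC{\F}{\GG}$ is a "B\"uchi language" it is a "parity language", so $\zielonkaTree{\F}$ has "parity shape" (Proposition~\ref{prop-typ:parityZielonkaShape}); moreover its "parity index" is at most $[0,1]$ and also equals $[\minparityZ{\F},\maxparityZ{\F}]$ (Theorem~\ref{thm-zt:optimality_ZTparity-priorities}, which also rules out the $\WeakIndex{1}$ alternative). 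Together with Remark~\ref{rmk:parity_levels_ZT} this forces the "height" of $\zielonkaTree{\F}$ to be at most $2$ and, when it is exactly $2$, forces $\minparityZ{\F}=0$, i.e.\ the "root" to be "round". Hence $\zielonkaTree{\F}$ has "B\"uchi shape".

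For the "coB\"uchi" case I would argue by complementation. Write $\complSet{\F} = \powplus{\GG}\setminus\F$; then $\GG^\oo \setminus \MullerC{\F}{\GG} = \MullerC{\complSet{\F}}{\GG}$ (Remark~\ref{rmk-prelim:characterisation_Muller_Languages}), so $\MullerC{\F}{\GG}$ is a "coB\"uchi language" if and only if $\MullerC{\complSet{\F}}{\GG}$ is a "B\"uchi language". Directly from Definition~\ref{def-ZT:zielonkaTree} (exchanging the roles of ``$\in\F$'' and ``$\notin\F$''), or from Remark~\ref{rmk-acd:complementation-ACD} together with Remark~\ref{rmk-acd:ZT-as-ACD}, the tree $\zielonkaTree{\complSet{\F}}$ is obtained from $\zielonkaTree{\F}$ by exchanging its "round" and "square" nodes. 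Applying the "B\"uchi language" case already proved to $\complSet{\F}$: $\MullerC{\complSet{\F}}{\GG}$ is a "B\"uchi language" iff $\zielonkaTree{\complSet{\F}}$ has "B\"uchi shape", which after the exchange of node types is precisely the assertion that $\zielonkaTree{\F}$ has "coB\"uchi shape".

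I do not expect a genuinely hard step; the argument is essentially bookkeeping. The two delicate points are: handling the degenerate "height"-$1$ "Zielonka trees", which correspond to the trivial languages $\GG^\oo$ and $\emptyset$ and must be recognised as "B\"uchi language"s; and invoking Theorem~\ref{thm-zt:optimality_ZTparity-priorities} to guarantee that the "parity index" of a "Muller language" is never of the form $\WeakIndex{d}$, so that it genuinely pins down both the "height" of $\zielonkaTree{\F}$ and the parity of its "root".
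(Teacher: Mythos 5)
Your proposal is correct and follows essentially the same route as the paper, which dispatches this proposition in one line as "a special case of Proposition~\ref{prop-typ:parityZielonkaShape}"; your write-up simply fills in the bookkeeping that this one-liner elides (reading the height and root-parity constraints off Remark~\ref{rmk:parity_levels_ZT} and Theorem~\ref{thm-zt:optimality_ZTparity-priorities}, and handling the coB\"uchi case by complementation). No gaps.
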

\begin{proof}
	This is just a special case of Proposition~\ref{prop-typ:parityZielonkaShape}.
\end{proof}

\begin{proposition}\label{prop-app:genBuchi-shape-ZT}
		Let $\F\subseteq \powplus{\Gamma}$ be a family of non-empty subsets. Then $\MullerC{\F}{\GG}$ is a "generalised B\"uchi@@language" "(resp. coB\"uchi) language" if and only if $\zielonkaTree{\F}$ has "generalised B\"uchi@@shapeZT" "(resp. generalised coB\"uchi) shape".
\end{proposition}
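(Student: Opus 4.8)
The statement to prove is Proposition~\ref{prop-app:genBuchi-shape-ZT}: for a family $\F\subseteq\powplus{\GG}$, the language $\MullerC{\F}{\GG}$ is a generalised Büchi (resp. generalised coBüchi) language if and only if $\zielonkaTree{\F}$ has generalised Büchi (resp. generalised coBüchi) shape. Since the two cases are dual via complementation (complementing $\F$ swaps round and square nodes of $\zielonkaTree{\F}$ by the analogue of Remark~\ref{rmk-acd:complementation-ACD} at the level of Zielonka trees, and swaps generalised Büchi with generalised coBüchi), I would prove the Büchi case in full and then deduce the coBüchi case in one line. Recall "generalised Büchi shape" means $\zielonkaTree{\F}$ has height at most $2$ and, if the height is exactly $2$, the root is round.

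\textbf{Direction ($\Leftarrow$): shape implies generalised Büchi.} Suppose $\zielonkaTree{\F}$ has generalised Büchi shape. If it has height $1$, then either $\SS\in\F$ and $\F=\powplus{\SS}$, so $\Muller{\F}=\genBuchiC{\{\SS\}}{\GG}$ trivially (actually $\GG^\oo$, take any nonempty $B$), or $\SS\notin\F$ and $\F=\emptyset$, so $\Muller{\F}=\emptyset=\genBuchiC{\{\{a\},\{a'\}\}}{\GG}$ for distinct $a,a'$ when $|\GG|\ge 2$ (and handle $|\GG|=1$ separately). If the height is $2$, the root $n_0$ (labelled $\SS$) is round, so $\SS\in\F$, and its children $n_1,\dots,n_k$ are leaves labelled with the maximal subsets $Y_1,\dots,Y_k\subsetneq\SS$ with $Y_i\notin\F$. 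Set $B_i = \SS\setminus Y_i$, which is nonempty since $Y_i\subsetneq\SS$. I claim $\F = \{C\subseteq\SS : C\cap B_i\neq\emptyset \text{ for all } i\}$, i.e. $\Muller{\F}=\genBuchiC{\{B_1,\dots,B_k\}}{\GG}$. Indeed, $C\notin\F$ iff $C$ is contained in some maximal rejecting set $Y_i$ (every nonempty non-member of $\F$ is below some child of a round node containing $\SS$; this is exactly Remark~\ref{rmk-ZT:union-changes-acceptance} / the structure of $\zielonkaTree{\F}$ applied at $n_0$), iff $C\cap B_i=\emptyset$ for some $i$. This is the desired equivalence.

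\textbf{Direction ($\Rightarrow$): generalised Büchi implies shape.} Suppose $\Muller{\F}=\genBuchiC{\{B_1,\dots,B_k\}}{\GG}$ with each $B_i$ nonempty. The key observation is that $\F$ is upward closed: if $C\in\F$ and $C\subseteq C'\subseteq\SS$, then $C'\cap B_i\supseteq C\cap B_i\neq\emptyset$ for all $i$, so $C'\in\F$. Now I argue $\zielonkaTree{\F}$ has height at most $2$. If $\SS\notin\F$, then upward closure forces $\F=\emptyset$ (nothing can be below $\SS$ in $\F$), so $\zielonkaTree{\F}$ has a single node; this is a degenerate generalised Büchi shape. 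If $\SS\in\F$ (root round), consider any child $n_1$, labelled by a maximal rejecting set $Y$; I must show $n_1$ is a leaf, i.e. $\restSubsets{\F}{Y}$ contains no nonempty set. But any nonempty $C\subseteq Y$ would have to be in $\F$ for $n_1$ to have a child, and then by upward closure every superset of $C$ up to $\SS$ is in $\F$ — in particular, since $C\subseteq Y$ and $Y\notin\F$, this is already a contradiction with $Y\supseteq C$ and $Y\notin\F$. Hence $\restSubsets{\F}{Y}=\emptyset$, so $n_1$ is a leaf, giving height at most $2$; and since the root is round when the height is $2$, the shape is generalised Büchi. Combining both directions settles the generalised Büchi case, and the coBüchi case follows by the complementation duality described above. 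The only mildly delicate point — the \textbf{main obstacle} — is bookkeeping the degenerate low-$|\GG|$ and height-$1$ cases so that the trivial languages $\emptyset$ and $\GG^\oo$ are correctly exhibited as generalised (co)Büchi languages with the right conventions on nonemptiness of the $B_i$'s; everything else is a direct unwinding of the definition of $\zielonkaTree{\F}$ together with the upward-closure remark.
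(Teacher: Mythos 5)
Your argument is essentially the paper's: the forward direction rests on the fact that the accepting family of a generalised Büchi condition is upward closed (equivalently, rejecting sets are closed under taking subsets, so square nodes are leaves and the height is at most $2$), and the converse builds one Büchi set per leaf of the Zielonka tree. One point in your favour: your choice $B_i=\GG\setminus Y_i$ is the right one — the paper's proof writes $B_i=A_i$ for the leaf labels $A_i$, which is an apparent slip, since $C\nsubseteq A_i$ is equivalent to $C\cap(\GG\setminus A_i)\neq\emptyset$ and not to $C\cap A_i\neq\emptyset$.

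The one step that genuinely fails is precisely the degenerate case you flag as the delicate point. If $\F=\emptyset$ (a single square root, which has generalised Büchi shape under the definition), you claim $\Muller{\F}=\emptyset=\genBuchiC{\{\{a\},\{a'\}\}}{\GG}$; but that language contains every word in which both $a$ and $a'$ occur infinitely often, e.g.\ $(aa')^\oo$. In fact no generalised Büchi language built from non-empty sets $B_i$ can be empty, because any $w$ with $\minf(w)=\GG$ satisfies $\minf(w)\cap B_i=B_i\neq\emptyset$ for every $i$. So this case cannot be repaired by a cleverer choice of the $B_i$: one must either exclude $\F=\emptyset$ or relax the non-emptiness requirement on the $B_i$. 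The paper's own proof quietly restricts the converse to trees of height exactly $2$ with a round root and never confronts this case, so the gap is inherited from the statement as much as introduced by you — but since you single the degenerate cases out as the main obstacle, the patch you offer for them should actually work.
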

\begin{proof}
	We do the proof for the case "generalised B\"uchi@@language" (symmetric for generalised coB\"uchi).
	Assume that $\MullerC{\F}{\GG} =\genBuchiC{B}{\GG}$  for some family $B=\{B_1,\dots,B_k\}$. Then, $\GG\in \F$, as $\GG\cap B_i \neq \emptyset$, so the "root" of $\zielonkaTree{\F}$ is "round". If $C\subseteq \GG$ is rejecting, $C\cap B_i =\emptyset$ for all $i$, then it is the same for any subset $C'\subseteq C$, so "square" nodes of $\zielonkaTree{\F}$ are "leaves" and $\zielonkaTree{\F}$ has "height" at most $2$.
	
	Conversely, assume that $\zielonkaTree{\F}$ has "height" $2$ and that its "root" is "round" ($\GG\in \F$). Let $A_1,\dots, A_k$ be the labels of the $k$ "leaves" of $\zielonkaTree{\F}$ and define $B_i = A_i$. We claim that $\MullerC{\F}{\GG} =\genBuchiC{B}{\GG}$, for $B=\{B_1,\dots,B_k\}$. Indeed, if $C\in \F$ if and only if $C\nsubseteq A_i$  for any $i$ if and only if $C\cap B_i \neq \emptyset$ for all $i$.
\end{proof}

\begin{corollary}
	Let $\A$ 
	be a "deterministic" "generalised Büchi" (resp. "generalised coB\"uchi") "automaton"  "recognising@@automaton" a "Muller language" $L = \MullerC{\F}{\SS}$. There is a "deterministic""generalised Büchi" (resp. "generalised coB\"uchi") "automaton" recognising $L$ with just one state, that can be computed in polynomial time in the size of the representation of $\A$.
\end{corollary}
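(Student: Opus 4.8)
The plan is to show that the Muller language $L = \MullerC{\F}{\SS}$ is itself a generalised Büchi (resp.\ generalised coBüchi) language \emph{over the alphabet $\SS$}, so that the single-state automaton over $\SS$ with self-loops carrying the corresponding condition recognises $L$; the polynomial-time claim will follow by feeding an equivalent polynomial-size deterministic Büchi (resp.\ coBüchi) automaton to the minimisation algorithm of Theorem~\ref{thm-min:minimisation_parity_automata}.

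\textbf{Step 1: the Zielonka tree of $\F$ has the right shape.} By Remark~\ref{rmk-app:genBuchi-expressive-power}, since $L=\Lang{\A}$ is recognised by a deterministic generalised Büchi (resp.\ generalised coBüchi) automaton, the "parity index" of $L$ is at most $[0,1]$ (resp.\ at most $[1,2]$). Consider the trivial single-state deterministic "Muller" "automaton" $\A_L$ "recognising" $L$, whose "alternating cycle decomposition" is exactly $\zielonkaTree{\F}$ (Remark~\ref{rmk-acd:ZT-as-ACD}). Applying Proposition~\ref{prop-typ:parity index-Muller-automaton} to $\A_L$ forces the single "tree" $\zielonkaTree{\F}$ to have "height" at most $2$, and, if its "height" is exactly $2$, its "root" must be "round" in the Büchi case and "square" in the coBüchi case (the incompatible possibility — a "height"-$2$ tree of the wrong polarity — would give "parity index" at least $[1,2]$ resp.\ $[0,1]$, and the $\WeakIndex{1}$ case is impossible for a single-tree ACD). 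Hence $\zielonkaTree{\F}$ has "generalised B\"uchi@@shapeZT" (resp.\ "generalised coB\"uchi) shape". One could also obtain this directly from the Flower Lemma~\ref{lemma:flower-lemma}.

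\textbf{Step 2: read off the condition.} By Proposition~\ref{prop-app:genBuchi-shape-ZT}, since $\zielonkaTree{\F}$ has "generalised B\"uchi@@shapeZT" (resp.\ "generalised coB\"uchi) shape", we have $L = \MullerC{\F}{\SS} = \genBuchiC{B}{\SS}$ (resp.\ $= \gencoBuchiC{B}{\SS}$), where, as in the proof of that proposition, $B = \{B_1,\dots,B_m\}$ is the family of labels of the "leaves" of $\zielonkaTree{\F}$. Therefore the "deterministic" "automaton" over $\SS$ with a single state $q$, self-loops $q\re{a:a}q$ for every $a\in\SS$, and "acceptance set" $\genBuchiC{B}{\SS}$ (resp.\ $\gencoBuchiC{B}{\SS}$) "recognises" $L$ and has the required form.

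\textbf{Step 3: polynomial time.} First turn $\A$ into an equivalent "deterministic" "B\"uchi" (resp.\ "coB\"uchi") "automaton" $\A'$ by the standard counter construction that cycles through the sets of the "generalised B\"uchi" (resp.\ coBüchi) "condition@@acc" of $\A$; this preserves "determinism" and multiplies the size only by the number of those sets, so $\A'$ is of polynomial size in the representation of $\A$ and is in particular a "DPA" "recognising" $L$. Running the minimisation algorithm of Theorem~\ref{thm-min:minimisation_parity_automata} on $\A'$ computes $\zielonkaTree{\F}$ in polynomial time; reading off the labels of its "leaves" gives $B$, and hence the single-state automaton, in polynomial time overall (its number of states is at most $|\A'|$ by Theorem~\ref{thm-zt:weak_optimality_ZTparity}). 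The main subtlety is exactly this last point: the algorithm of Theorem~\ref{thm-min:minimisation_parity_automata} is phrased for "parity" "automata" and relies on the colour-parity criterion of Lemma~\ref{lemma-minm:SCC-determines-acceptance}, so the reduction to a "B\"uchi"/"coB\"uchi" "automaton" is what lets us use it as a black box; alternatively one could re-run the same algorithm directly on $\A$, replacing Lemma~\ref{lemma-minm:SCC-determines-acceptance} by the observation that for an "accessible" "cycle" $\ell$ of a "deterministic" "generalised B\"uchi" "automaton" one has $\SS_\ell \in \F$ iff $\ell$ meets every set of the "condition@@acc", which is checkable in polynomial time.
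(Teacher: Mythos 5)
Your proposal is correct. Steps 1 and 2 are essentially the paper's argument: the paper likewise invokes Remark~\ref{rmk-app:genBuchi-expressive-power} to bound the "parity index", Proposition~\ref{prop-typ:parity index-Muller-automaton} (via the single-state "DMA" whose "ACD" is $\zielonkaTree{\F}$) to get the "generalised B\"uchi shape" of the "Zielonka tree", and Proposition~\ref{prop-app:genBuchi-shape-ZT} to read off the one-state automaton; your explicit exclusion of the wrong-polarity height-$2$ tree and of the $\WeakIndex{1}$ case is a welcome elaboration of a point the paper passes over quickly.

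Where you genuinely diverge is Step 3. The paper does not degeneralise: it computes the "leaves" of $\zielonkaTree{\F}$ directly from $\A$ by, for each set $B_i$ of the "generalised B\"uchi" "condition@@acc", restricting $\A$ to the transitions coloured in $\GG\setminus B_i$, decomposing into "SCCs", collecting the sets of "input letters" of these components as rejecting sets, and keeping the maximal ones. This is a one-shot, non-recursive computation exploiting the fact that the tree has "height" at most $2$, so only its "leaves" are needed. Your route — degeneralise to a "deterministic" "B\"uchi" (resp.\ "coB\"uchi") "automaton" of size $|Q|\cdot k$ and feed it to the algorithm of Theorem~\ref{thm-min:minimisation_parity_automata} as a black box — is also correct and polynomial (and your appeal to Theorem~\ref{thm-zt:weak_optimality_ZTparity} to bound the output size is sound), but it recomputes the whole "Zielonka tree" through the recursive $\AlternatingSets$ procedure and incurs the extra factor $k$ in the state space; the paper's direct computation is simpler and cheaper. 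Your closing "alternative" (replacing Lemma~\ref{lemma-minm:SCC-determines-acceptance} by the criterion that an "accessible" "cycle" is accepting iff it meets every $B_i$) is in fact very close to what the paper actually does.
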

\begin{proof}
	We do the proof for the case "generalised Büchi".
	By Remark~\ref{rmk-app:genBuchi-expressive-power}, the "parity index" of~$L$ is at most $[0,1]$, so by Proposition~\ref{prop-typ:parity index-Muller-automaton}, the "Zielonka tree" of $\F$ has "generalised B\"uchi shape". Therefore, by Proposition~\ref{prop-app:genBuchi-shape-ZT}, $L$ is a "generalised B\"uchi@@language" that can be trivially "recognised" by a "generalised B\"uchi" "automaton" with just one state.
	
	The acceptance condition of such automaton can be deduced in linear time from the "Zielonka tree" $\zielonkaTree{\F}$, as indicated in the proof of Proposition~\ref{prop-app:genBuchi-shape-ZT}. The "Zielonka tree" $\zielonkaTree{\F}$ can be computed from the original automaton $\A$ using a similar argument than in the proof of Theorem~\ref{thm-min:minimisation_parity_automata}. Suppose that the "generalised Büchi" condition used by $\A$ is given by the sets $B_1,\dots, B_k\subseteq \GG$. Then, for each $i\in \{1,\dots,k\}$ we compute the restriction of $\A$ to the transitions using colours in $\GG\setminus B_i$, and perform a decomposition in "SCCs" of the obtained "graph". If $\SS_\S\subseteq \SS$ is the set of "input letters" appearing in one of those "SCC", then $\SS_\S\notin \F$. We put all the subsets of letters obtained in that way in a list $\mathit{altSets}$. The "leaves" of $\zielonkaTree{\F}$ correspond then to the maximal subsets of $\mathit{altSets}$.	
\end{proof}

\paragraph*{ACD and typeness for generalised acceptance conditions.}

\begin{definition}
	Let $\TS$ be a "Muller" "transition system" with a set of states $V$. We say that its "alternating cycle decomposition" $\acd{\TS}$ is a:
	\begin{itemize}
		\item \AP ""B\"uchi ACD"" if it is a "$[0,1]$-parity ACD".
		\item \AP ""coB\"uchi ACD"" if it is a "$[1,2]$-parity ACD".
		\item  \AP""Generalised B\"uchi ACD"" if for every state $v\in V$, the tree $\treeVertex{v}$ has "generalised B\"uchi shape".
		\item  \AP""Generalised coB\"uchi ACD"" if for every state $v\in V$, the tree $\treeVertex{v}$ has "generalised coB\"uchi shape".
		\item \AP ""$\Weak{d}$ ACD"" if it is a "parity ACD" and "trees" of $\acd{\TS}$ have "height" at most $d$.
	\end{itemize}
\end{definition}

\begin{remark}\label{rmk-typ:shape-weak-ACD}
	 $\acd{\TS}$ is a "$\Weak{d}$ ACD" if and only if it is a "$[0,d]$-parity ACD" and a "$[1,d+1]$-parity ACD".
\end{remark}

\begin{proposition}
	A "transition system" $\TS$ is "B\"uchi@@type" "(resp. coB\"uchi) type" if and only if $\acd{\TS}$ is a "B\"uchi ACD" (resp. "coB\"uchi ACD").
\end{proposition}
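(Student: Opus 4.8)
The plan is to mirror the proof of Proposition~\ref{prop-typ:parity ACD type} (and its $[0,d-1]$/$[1,d]$ refinement), which characterises "parity type" "transition systems" in terms of "parity ACDs", and specialise it to the Büchi/coBüchi case. Recall that $\acd{\TS}$ being a "B\"uchi ACD" means exactly that it is a "$[0,1]$-parity ACD": a "parity ACD" (so every tree $\treeVertex{v}$ has "parity shape", i.e.\ a single branch), all "trees" of $\acd{\TS}$ have "height" at most~$2$, and "trees" of "height"~$2$ are "positive@@tree". The coBüchi case is dual: a "$[1,2]$-parity ACD".

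First I would handle the implication ``$\acd{\TS}$ is a B\"uchi ACD $\Rightarrow$ $\TS$ is B\"uchi type''. Since $\acd{\TS}$ is in particular a "parity ACD", Proposition~\ref{prop-typ:parity ACD type} already tells us $\TS$ is "parity type", and its moreover-clause gives that $\TS$ is "$[0,1]$-parity type" precisely when $\acd{\TS}$ is a "$[0,1]$-parity ACD" --- which is exactly the B\"uchi-ACD hypothesis. A "$[0,1]$-parity condition" is by definition a "B\"uchi@@condition" condition (recall "B\"uchi languages" $=$ "$[0,1]$-parity languages"), so $\TS$ is "B\"uchi type". The coBüchi case is symmetric, using that "coB\"uchi languages" $=$ "$[1,2]$-parity languages" and that $\acd{\complTS{\TS}}$ inverts the partition into "round@@acd" and "square nodes@@acd" (Remark~\ref{rmk-acd:complementation-ACD}).

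For the converse, suppose $\TS$ is "B\"uchi type", witnessed by an "isomorphic@@TS" "transition system" $\TS'\equivTrans \TS$ using a "B\"uchi@@condition" "acceptance condition", hence a "$[0,1]$-parity condition". A "B\"uchi" (equivalently "$[0,1]$-parity") "condition@@acc" is in particular a "Streett condition" and a "Rabin condition", so by Propositions~\ref{prop-typ:Rabin ACD type} and~\ref{prop-typ:Streett ACD type} the "ACD" of $\TS'$ --- which, being "isomorphic@@TS" to $\TS$ over the same "underlying graph" up to "isomorphism@@TS", has the same "trees of alternating subcycles" as $\acd{\TS}$ (the "ACD" depends only on the "underlying graph" and the mapping $\cycles{\TS}\to\{\mathrm{Accept},\mathrm{Reject}\}$, Remark~\ref{rmk-acd:oblivious-to-representation}) --- is both a "Rabin@@ACD" and a "Streett ACD", hence a "parity ACD" by Remark~\ref{rmk-typ:shape-parity-Street-Rabin-ACD}. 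It remains to bound the "height": since $\TS'$ uses only colours in $[0,1]$, every "cycle" of $\TS'$ produces a minimal colour in $\{0,1\}$, so along any "branch" of a "tree" $\altTree{\ell}\in\acd{\TS}$ the "cycles" $\nuAcd(n_0)\supsetneq\nuAcd(n_1)\supsetneq\cdots$ give, via Lemma~\ref{lemma-normF:flowers-lower-bound-in-parity}, a "$k$-flower" forcing $\min\gg(\ell_k)\geq k-1$ for a "positive flower" and $\geq k$ for a "negative flower"; with colours confined to $[0,1]$ this yields $k\leq 2$ in the positive case and $k\leq 1$ in the negative case. Thus "trees" have "height" at most~$2$, and a "tree" of "height"~$2$ cannot be "negative@@tree" (that would need a "negative@@flower" "$2$-flower", impossible with colours in $[0,1]$), so it is "positive@@tree". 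Hence $\acd{\TS}$ is a "$[0,1]$-parity ACD", i.e.\ a "B\"uchi ACD". The coBüchi case follows by the same argument with the roles of $0$ and $1$ interchanged, or more slickly by complementation through Remark~\ref{rmk-acd:complementation-ACD}.

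The main obstacle is bookkeeping rather than mathematical depth: one has to be careful that the "ACD" is genuinely invariant under "isomorphism@@TS" of "transition systems" (so that ``$\TS$ is B\"uchi type'' lets us reason about $\acd{\TS}$ directly), and that the correspondence "B\"uchi condition" $=$ "$[0,1]$-parity condition" at the level of conditions (not just languages) is used correctly, so that applying the $[0,1]$-refinement of Proposition~\ref{prop-typ:parity ACD type} really does deliver a Büchi condition on the nose. Everything else is a direct instantiation of already-established results.
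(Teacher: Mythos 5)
Your proposal is correct and follows the paper's approach: the paper's entire proof is the one-liner that this is a special case of Proposition~\ref{prop-typ:parity ACD type} (with $d=2$), since a B\"uchi ACD is by definition a $[0,1]$-parity ACD and B\"uchi languages are exactly the $[0,1]$-parity languages. Your converse direction re-derives the height bound from scratch via the Rabin/Streett shape propositions and the flower lemma, which is sound but unnecessary --- the ``moreover'' clause of Proposition~\ref{prop-typ:parity ACD type} is already stated as a biconditional and yields both implications at once.
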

\begin{proof}
	This is a special case of Proposition~\ref{prop-typ:parity ACD type}.
\end{proof}

\begin{proposition}\label{prop-app:genBuchi-ACD type}
	A "transition system" $\TS$ is "generalised B\"uchi@@type" "(resp. generalised coB\"uchi) type" if and only if $\acd{\TS}$ is a "generalised B\"uchi ACD" (resp. "generalised coB\"uchi ACD").
\end{proposition}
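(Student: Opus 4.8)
The statement to prove is Proposition~\ref{prop-app:genBuchi-ACD type}: a transition system $\TS$ is generalised Büchi type (resp. generalised coBüchi type) if and only if $\acd{\TS}$ is a generalised Büchi ACD (resp. generalised coBüchi ACD).

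My plan is to follow exactly the same proof scheme used for the analogous typeness results in Section~\ref{subsubsec-typeness:Typeness_TS}, in particular Propositions~\ref{prop-typ:Rabin ACD type} and~\ref{prop-typ:parity ACD type}, since this is precisely how the paper handles Büchi/coBüchi typeness (stated right above as a special case of Proposition~\ref{prop-typ:parity ACD type}). I will treat the generalised Büchi case; the generalised coBüchi case then follows by the complementation symmetry of the ACD (Remark~\ref{rmk-acd:complementation-ACD}), exactly as Proposition~\ref{prop-typ:Streett ACD type} is deduced from Proposition~\ref{prop-typ:Rabin ACD type}: complementing the acceptance set swaps round and square nodes of the ACD, turning a generalised coBüchi ACD into a generalised Büchi ACD, and $\gencoBuchiC{B}{\GG}$ is the complement of $\genBuchiC{B}{\GG}$.

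For the generalised Büchi direction, first I would prove ($\Rightarrow$): if $\TS$ is generalised Büchi type, then by definition there is an isomorphic TS $\TS' \equivTrans \TS$ using a generalised Büchi acceptance condition, and by Remark~\ref{rmk-acd:complementation-ACD}-style reasoning (more precisely, the ACD depends only on the equivalence class of the acceptance condition, so $\acd{\TS}$ and $\acd{\TS'}$ coincide) it suffices to analyse $\acd{\TS'}$. Given a generalised Büchi condition $\genBuchiC{B}{\GG}$ with $B = \{B_1,\dots,B_k\}$, I must show each local subtree $\treeVertex{v}$ has generalised Büchi shape, i.e. height at most $2$ and, if height $2$, a round root. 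The root of $\treeVertex{v}$ is the root $n_{\ell_i}$ of the tree $\altTree{\ell_i}$ containing $v$; its label is the full SCC cycle $\ell_i$. Since for every edge set $\ell$ of an SCC we have $\gg(\ell)\cap B_j \neq \emptyset$ for all $j$ only if... — actually the cleaner argument, mirroring Proposition~\ref{prop-app:genBuchi-shape-ZT}, is: by Lemma~\ref{lemma-acd:tree_q_ZT_local_Muller}, $\treeVertex{v}$ is the Zielonka tree of the local Muller language $\localMuller{v}{\TS}$, and $\localMuller{v}{\TS}$ is a generalised Büchi language whenever $\WW$ is (a union of cycles over $v$ is accepting iff its output meets each $B_j$, so $\localMuller{v}{\TS}$ is the generalised Büchi condition with sets $B_j \cap \cyclesState{\TS}{v}$-restricted appropriately, in the $\powplus{\GG}$-labelling sense). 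Then Proposition~\ref{prop-app:genBuchi-shape-ZT} gives that its Zielonka tree has generalised Büchi shape, i.e. $\treeVertex{v}$ has generalised Büchi shape; so $\acd{\TS}$ is a generalised Büchi ACD.

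For ($\Leftarrow$): suppose $\acd{\TS}$ is a generalised Büchi ACD, so every $\treeVertex{v}$ has height at most $2$ with round root when height is $2$. By Lemma~\ref{lemma-acd:tree_q_ZT_local_Muller} and Proposition~\ref{prop-app:genBuchi-shape-ZT}, each $\localMuller{v}{\TS}$ is a generalised Büchi language; I then need to assemble these local conditions into a single generalised Büchi condition over the underlying graph of $\TS$ that is equivalent to $\macc{\TS}$. The construction: for each SCC with edge set $\ell_i$ and root $n_{\ell_i}$, the leaves of $\altTree{\ell_i}$ have labels $\ell_i \setminus K$ for certain "petals", and an SCC-cycle $\ell$ is accepting iff $\ell \nsubseteq \nuAcd(m)$ for every leaf $m$ below $n_{\ell_i}$; define, for each leaf $m$, a set $\tilde{B}_m = \gg(\ell_i) \setminus \gg(\nuAcd(m))$ (the colours present in the SCC root but absent from that leaf's label) — then $\ell$ is accepting iff $\gg(\ell)$ meets $\tilde{B}_m$ for every leaf $m$. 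Taking all these sets across all SCCs as the family $B$, together with a sink/transient treatment (transient edges can be left uncoloured or given a dummy behaviour), yields a generalised Büchi condition whose accepting cycles are exactly those of $\macc{\TS}$; by Remark~\ref{rmk-typ:condition-transferred-by-isomorphism} this gives an isomorphic TS with a generalised Büchi condition, so $\TS$ is generalised Büchi type.

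The main obstacle I anticipate is the careful bookkeeping in the ($\Leftarrow$) direction when reassembling the per-vertex generalised Büchi conditions into one global condition over the whole graph: one must check that a single family of subsets of $\GG$ can simultaneously encode the acceptance of cycles living in different SCCs, and handle the interaction with transient vertices and with edges shared between the labels of nodes in different local subtrees. The clean way around this, which I would adopt, is to observe (as in the $(3\Rightarrow 1)$ parts of Propositions~\ref{prop-typ:Rabin ACD type} and~\ref{prop-typ:parity ACD type}) that when $\acd{\TS}$ is a generalised Büchi ACD the round-branching structure is so constrained that the ACD-parity-transform $\acdParityTransform{\TS}$ adds no states and its natural morphism $\pp\colon \acdParityTransform{\TS} \to \TS$ is an isomorphism; but since we additionally need a \emph{generalised Büchi} (not just parity) relabelling, I would instead directly verify that the mapping $f\colon \cycles{\TS} \to \{\mathrm{Accept},\mathrm{Reject}\}$ induced by $\macc{\TS}$ is realised by the global generalised Büchi condition built above — reducing the verification to the already-proven Proposition~\ref{prop-app:genBuchi-shape-ZT} applied SCC by SCC, together with the fact (Remark~\ref{rmk-acd:union-changes-acceptance}) that acceptance of a cycle is determined by its SCC and by which leaf-labels it fails to be contained in.
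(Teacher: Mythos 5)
Your overall strategy — reduce to the Zielonka-tree statement (Proposition~\ref{prop-app:genBuchi-shape-ZT}) via the local Muller languages and Lemma~\ref{lemma-acd:tree_q_ZT_local_Muller}, and get the coBüchi case by complementation — is exactly the paper's intent, and your forward direction is correct: if the condition is generalised Büchi with sets $B_1,\dots,B_k$, then each $\localMuller{v}{\TS}$ is generalised Büchi for the sets $\tilde B_j=\{\ell\in\cyclesState{\TS}{v}\mid \gg(\ell)\cap B_j\neq\emptyset\}$, so each $\treeVertex{v}$ has generalised Büchi shape.

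The backward direction, however, has a genuine gap: the family you construct does not realise the right acceptance function. First, the sets $\tilde B_m=\gg(\ell_i)\setminus\gg(\nuAcd(m))$ are phrased in terms of the \emph{original} colouring $\gg$, and when two edges share a colour the equivalence ``$\ell\nsubseteq\nuAcd(m)$ iff $\gg(\ell)$ meets $\gg(\ell_i)\setminus\gg(\nuAcd(m))$'' fails in the left-to-right direction (an edge of $\ell$ outside $\nuAcd(m)$ may carry a colour that also occurs inside $\nuAcd(m)$). This is precisely why the paper insists on recolouring with the set of \emph{edges} as output colours before running the Zielonka-tree argument. Second, and more seriously, a generalised Büchi condition demands that $\minf$ meet \emph{every} set of the family, so if you pool the per-SCC families, a cycle living in one SCC can never meet a set $\tilde B_m\subseteq\gg(\ell_i)$ contributed by a different SCC, and your global condition rejects every cycle as soon as two SCCs each contribute a set. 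You flag this as ``bookkeeping'' but the resolution you sketch (verifying the condition ``built above'') cannot work because the sets themselves are wrong. The fix is to complement inside the full edge set: take $B_m=E\setminus\nuAcd(m)$ for each leaf $m$ (child of the root) of each round-rooted tree of height $2$, and additionally $E\setminus\ell_j$ for each square (single-node) tree $\altTree{\ell_j}$, adding nothing for round single-node trees. Then a cycle of a different SCC is entirely contained in each such set and meets it vacuously, a cycle of a rejecting SCC fails its own set, and a cycle $\ell$ of a round height-$2$ SCC meets $E\setminus\nuAcd(m)$ exactly when $\ell\nsubseteq\nuAcd(m)$, i.e.\ exactly when $\ell$ is accepting. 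With this correction (and Remark~\ref{rmk-typ:condition-transferred-by-isomorphism}) the proof closes.
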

\begin{proof}
	The result follows by applying the same argument and construction than in Proposition~\ref{prop-app:genBuchi-shape-ZT}, using as set of "output colours" the set of edges of $\TS$.
\end{proof}

\begin{proposition}\label{prop-app:weak-ACD-shape}
	A "transition system" $\TS$ is "$\Weak{d}$ type" if and only if $\acd{\TS}$ is a "$\Weak{d}$ ACD".
\end{proposition}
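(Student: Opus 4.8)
The plan is to reduce this statement to the already-established characterisation of "parity type" together with the optimality of the number of colours of the "ACD-parity-transform" (Theorem~\ref{thm-acd:optimality-priorities_ACD-parity_transform}), much as Proposition~\ref{prop-app:genBuchi-ACD type} was reduced to the "generalised B\"uchi" case. First I would handle the implication from right to left. Suppose $\acd{\TS}$ is a "$\Weak{d}$ ACD", meaning it is a "parity ACD" and all its "trees" have "height" at most $d$. By Proposition~\ref{prop-typ:parity ACD type}, $\TS$ is "parity type", and more precisely the "ACD-parity-transform" $\acdParityTransform{\TS}$ is "isomorphic@@TS" to $\TS$ (the $\acd{\TS}$ being a "parity ACD" implies the "local subtrees" are single branches, so $\acdParityTransform{\TS}$ adds no states, cf. the proof of the implication $(3\Rightarrow 1)$ there). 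It therefore suffices to check that $\acdParityTransform{\TS}$ is $\Weak{d}$: each "SCC" of $\acdParityTransform{\TS}$ corresponds to an "SCC" of $\TS$, whose edges are coloured by $\parityNodesAcd$ with values in the range of "depths" of the corresponding "tree" $\altTree{\ell_i}$ (up to the global shift by $0$ or $1$ / $2$ governing "positive@@acd" versus "negative@@acd" decompositions). Since each such "tree" has "height" at most $d$, the number of distinct "colours" appearing in that "SCC" is at most $d$, so $\acdParityTransform{\TS}$ is $\Weak{d}$, and hence $\TS$ is "$\Weak{d}$ type".

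For the converse, suppose $\TS$ is "$\Weak{d}$ type", witnessed by an "isomorphic@@TS" "parity" "TS" $\TS' \equivTrans \TS$ that is $\Weak{d}$. Since $\TS'$ is "parity type" and isomorphic to $\TS$, $\TS$ is "parity type", so by Proposition~\ref{prop-typ:parity ACD type} $\acd{\TS}$ is a "parity ACD"; it remains only to bound the "height" of its "trees" by $d$. Here I would invoke Lemma~\ref{lemma-acd:branch-ACD-induces-flower}: a "tree" $\altTree{\ell}\in\acd{\TS}$ of "height" $h$ yields a "$h$-flower" over some vertex $v\in\nuStates$ of the "leaf" of a maximal branch, and moreover this whole "$h$-flower" lives inside a single "SCC" of $\TS$ (all the "cycles" $\nuAcd(n_1)\supsetneq\dots\supsetneq\nuAcd(n_h)$ share the vertex $v$, hence lie in one "SCC"). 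By Lemma~\ref{lemma-normF:flowers-lower-bound-in-parity} applied to $\TS'$ — which is a "parity" "TS" on the same "underlying graph" with the cycles identified under the "isomorphism@@TS" — the innermost "cycle" of this flower must carry a "colour" of size at least $h-1$ (or $h$), and by the alternation of parities along the chain, the $h$ nested "cycles" realise $h$ pairwise distinct "colour" values of $\TS'$ all within the same "SCC" of $\TS'$. If $h > d$ this contradicts $\TS'$ being $\Weak{d}$. Hence every "tree" of $\acd{\TS}$ has "height" at most $d$, so $\acd{\TS}$ is a "$\Weak{d}$ ACD".

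The main obstacle I anticipate is the bookkeeping in the converse direction: making precise that the "$h$-flower" extracted from a "tree" of $\acd{\TS}$ is genuinely confined to one "SCC" and that, under the "isomorphism@@TS" between $\TS$ and $\TS'$, this flower transports to a flower of $\TS'$ on which Lemma~\ref{lemma-normF:flowers-lower-bound-in-parity} (or the "Flower Lemma"~\ref{lemma:flower-lemma} combined with the structure of "parity" conditions) forces $h$ distinct "colours" in a single "SCC". The direction $(3)\Rightarrow(1)$-style argument in the forward direction is routine given Proposition~\ref{prop-typ:parity ACD type}; the subtlety is entirely in converting a long branch of the "ACD" into a lower bound on the number of "colours" per "SCC" of any "$\Weak{d}$"-witnessing "parity" condition. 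Once that conversion is pinned down, the equivalence follows, and it is worth noting (as Remark~\ref{rmk-typ:shape-weak-ACD} already records) that a "$\Weak{d}$ ACD" is precisely a "parity ACD" that is simultaneously a "$[0,d]$-parity ACD" and a "$[1,d+1]$-parity ACD", which gives an alternative route via the two one-sided "parity ACD" characterisations of Proposition~\ref{prop-typ:parity ACD type}.
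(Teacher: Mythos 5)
Your proposal is correct and follows essentially the same route as the paper: reduce to the "parity ACD" characterisation of Proposition~\ref{prop-typ:parity ACD type}, observe that $\acdParityTransform{\TS}$ is $\Weak{d}$ when the trees have "height" at most $d$, and in the converse extract a flower from a long branch to force too many "colours". Your converse is in fact spelled out slightly more carefully than the paper's one-line appeal to the argument of Theorem~\ref{thm-acd:optimality-priorities_ACD-parity_transform} — you make explicit, via Lemma~\ref{lemma-acd:branch-ACD-induces-flower} and the strictly increasing minimal colours of nested alternating "cycles" (Lemma~\ref{lemma-normF:flowers-lower-bound-in-parity}), that the $h$ distinct colour values are all confined to a single "SCC", which is exactly what the $\Weak{d}$ bound requires.
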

\begin{proof}
	Proposition~\ref{prop-typ:parity ACD type} already provides that $\TS$ is "parity type" if and only if $\acd{\TS}$ is a "parity ACD". As in the proof of the aforementioned proposition, we observe that $\TS$ and $\acdParityTransform{\TS}$ are "isomorphic@@TS". If $\acd{\TS}$ is a "$\Weak{d}$ ACD", then $\acdParityTransform{\TS}$ is $\Weak{d}$. Conversely, if $\acd{\TS}$ is not a "$\Weak{d}$ ACD", then $\TS$ contains a "$(d+1)$-flower", so the number of colours cannot be reduced (using the same argument as in the proof of Theorem~\ref{thm-acd:optimality-priorities_ACD-parity_transform}).
\end{proof}

\begin{corollary}\label{cor-app:weak-automata-weaktype}
	If $L\subseteq \SS^\oo$ is an "$\oo$-regular language" of "parity index" $\WeakIndex{d}$, then $L$ can be "recognised" by a "deterministic" $\Weak{d}$ automaton.
\end{corollary}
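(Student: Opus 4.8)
The plan is to deduce this from the characterisation of $\Weak{d}$ type transition systems via the alternating cycle decomposition (Proposition~\ref{prop-app:weak-ACD-shape}), together with the flower-based characterisation of the "parity index" for "DMAs" (Proposition~\ref{prop-typ:parity index-Muller-automaton} and the Flower Lemma~\ref{lemma:flower-lemma}). First I would take any "deterministic" "parity" (or even "Muller") "automaton" $\A$ "recognising" $L$; we may assume all its states are "accessible", since the inaccessible part can be discarded (or given a trivial "acceptance condition") without changing $\Lang{\A}$.

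Next I would analyse the "alternating cycle decomposition" $\acd{\A}$. Since $L$ has "parity index" $\WeakIndex{d}$, by Proposition~\ref{prop-typ:parity index-Muller-automaton} the "trees" of $\acd{\A}$ have "height" at most $d$, there is at least one "positive@@tree" "tree" of "height" $d$, and at least one "negative@@tree" "tree" of "height" $d$; in particular $\acd{\A}$ is "equidistant@@acd". A "Muller" "automaton" in general need not have a "parity ACD", so before applying Proposition~\ref{prop-app:weak-ACD-shape} I would instead pass to the "ACD-parity-transform" $\acdParityTransform{\A}$, which by Proposition~\ref{prop-ACD:correctness_ACD-parity_transform} is "deterministic" (it admits a "locally bijective morphism" to $\A$, and "local bijectivity" preserves "determinism" by Proposition~\ref{prop-morph:HD mappings-preserve-languages}), "recognises" $\Lang{\A}=L$, and by construction has a "parity ACD". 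Moreover the "ACD" of $\acdParityTransform{\A}$ has the same "trees" (up to the single-branch "parity" structure of the "local subtrees") as $\acd{\A}$, so its "trees" still have "height" at most $d$; hence its "ACD" is a "parity ACD" with "trees" of "height" at most $d$, i.e.\ a "$\Weak{d}$ ACD" in the sense of the definition preceding the corollary.

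Then I would invoke Proposition~\ref{prop-app:weak-ACD-shape}: since $\acd{\acdParityTransform{\A}}$ is a "$\Weak{d}$ ACD", the "transition system" $\acdParityTransform{\A}$ is "$\Weak{d}$ type", meaning there is an "isomorphic@@TS" "parity" "transition system" $\TS' \equivTrans \acdParityTransform{\A}$ that is $\Weak{d}$. Reading $\TS'$ back as an "automaton" (the "isomorphism@@TS" preserves the "underlying graph" and the "labelling with input letters", and by Lemma~\ref{lemma-typ:equivalece_automata_languages} or Remark~\ref{rmk-typ:condition-transferred-by-isomorphism} it "recognises" the same language), we obtain a "deterministic" $\Weak{d}$ "automaton" "recognising" $L$, which is exactly the claim.

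The main obstacle I expect is the bookkeeping connecting $\acd{\A}$ with $\acd{\acdParityTransform{\A}}$: one must check that the "ACD-parity-transform" does not increase the "height" of any "tree" in the "ACD" and that the "equidistant@@acd" / "$\Weak{d}$ ACD" status is preserved. This is essentially the observation, used already in Remark~\ref{rmk-acd:acd-transform-parity-is-parity} and in the proofs of Section~\ref{subsubsec-typeness:Typeness_TS}, that $\acdParityTransform{\TS}$ has a "parity ACD" whose "trees" are the single "branches" through the "local subtrees" $\treeVertex{v}$ of $\acd{\TS}$, so no "tree" of $\acd{\acdParityTransform{\TS}}$ is longer than the longest "branch" of $\acd{\TS}$, which has length at most $d$. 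With that in hand, everything else is a direct chaining of the cited propositions.
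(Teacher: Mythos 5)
Your argument is correct and rests on the same two pillars as the paper's proof: the Flower Lemma~\ref{lemma:flower-lemma} to bound the heights of the trees of the alternating cycle decomposition by $d$, and Proposition~\ref{prop-app:weak-ACD-shape} to convert a $\Weak{d}$ ACD into $\Weak{d}$-typeness. The only difference is that the paper starts directly from a deterministic parity automaton with colours in $[0,d]$ (which exists by the very definition of parity index $\WeakIndex{d}$), and therefore skips your detour through the ACD-parity-transform together with the bookkeeping needed to transfer the height bound from $\acd{\A}$ to $\acd{\acdParityTransform{\A}}$ --- a step which, in your write-up, is the least rigorous and is most cleanly closed by applying Proposition~\ref{prop-typ:parity index-Muller-automaton} to the (accessible part of the) transformed automaton itself.
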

\begin{proof}
	Let $L$ be of "parity index" $\WeakIndex{d}$. By definition, $L$ is "recognised" by "parity" "automaton"~$\A$ using colours in $[0,d]$ (it is also recognised by an automaton using colours in $[1,d+1]$; we make an arbitrary choice). We will prove that $\A$ is in fact $\Weak{d}$. We will show that its "ACD" $\acd{\A}$ is "$\Weak{d}$ type", which allows to conclude by Proposition~\ref{prop-app:weak-ACD-shape}.
	Suppose that this was not the case, that is, that some "tree" of $\acd{\A}$ has "height" at least $d+1$. In this case, $\A$ would contain a "$(d+1)$-flower" (Lemma~\ref{lemma-acd:branch-ACD-induces-flower}), so by the Flower Lemma~\ref{lemma:flower-lemma}, $L$ has "parity index at least" $[0,d]$ or $[1,d+1]$, a contradiction.
\end{proof}

The following result generalises~\cite[Theorem~2]{BSW01Weakautomata}.
\begin{corollary}
	A "Muller" "transition system" is "$\Weak{d}$ type" if and only if it is both $[0,d]$ and "$[1,d+1]$-parity type".
\end{corollary}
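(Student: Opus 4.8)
The plan is to reduce this corollary to the characterisations already established in the preceding results, namely Proposition~\ref{prop-app:weak-ACD-shape} (which says $\TS$ is $\Weak{d}$ type iff $\acd{\TS}$ is a $\Weak{d}$ ACD), Proposition~\ref{prop-typ:parity ACD type} (which gives that $\TS$ is $[0,d]$-parity type iff $\acd{\TS}$ is a $[0,d]$-parity ACD, and similarly for $[1,d+1]$), together with the purely combinatorial Remark~\ref{rmk-typ:shape-weak-ACD}. The statement then follows by chaining these equivalences. So the argument is essentially a bookkeeping exercise on the shapes of the trees in the alternating cycle decomposition.

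Concretely, I would argue as follows. First, suppose $\TS$ is $\Weak{d}$ type. By Proposition~\ref{prop-app:weak-ACD-shape}, $\acd{\TS}$ is a $\Weak{d}$ ACD. By Remark~\ref{rmk-typ:shape-weak-ACD}, a $\Weak{d}$ ACD is both a $[0,d]$-parity ACD and a $[1,d+1]$-parity ACD. Applying Proposition~\ref{prop-typ:parity ACD type} (in its refined form, which states that $\TS$ is $[0,d-1]$- resp. $[1,d]$-parity type iff $\acd{\TS}$ is a $[0,d-1]$- resp. $[1,d]$-parity ACD) with the appropriate shift of indices, we conclude that $\TS$ is both $[0,d]$-parity type and $[1,d+1]$-parity type. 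Conversely, suppose $\TS$ is both $[0,d]$- and $[1,d+1]$-parity type. By Proposition~\ref{prop-typ:parity ACD type}, $\acd{\TS}$ is both a $[0,d]$-parity ACD and a $[1,d+1]$-parity ACD; by Remark~\ref{rmk-typ:shape-weak-ACD} this means $\acd{\TS}$ is a $\Weak{d}$ ACD, and Proposition~\ref{prop-app:weak-ACD-shape} gives that $\TS$ is $\Weak{d}$ type.

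The one subtlety I would want to handle carefully is the index conventions: the definition of $[0,d-1]$-parity ACD versus $[0,d]$-parity ACD, and matching the "parity index" notation $[0,d-1]$/$[1,d]$ with the "$\Weak{d}$" notation as used in Definition~\ref{def-cor:ShapesOfACD} and the generalised version in Appendix~\ref{sec:appendix-weak-conditions}. Being off by one here is the only real trap; the rest is immediate. Since all three ingredients (Proposition~\ref{prop-typ:parity ACD type}, Proposition~\ref{prop-app:weak-ACD-shape}, Remark~\ref{rmk-typ:shape-weak-ACD}) are already proved earlier, the proof is a two-line double implication once the indices are pinned down. I would also remark that this recovers \cite[Theorem~2]{BSW01Weakautomata} as the special case $d=1$, since $\Weak{1}$ corresponds to the classical weak condition (a partition of the transition system into accepting and rejecting SCCs).

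\begin{proof}
	By Proposition~\ref{prop-app:weak-ACD-shape}, $\TS$ is "$\Weak{d}$ type" if and only if $\acd{\TS}$ is a "$\Weak{d}$ ACD". By Remark~\ref{rmk-typ:shape-weak-ACD}, $\acd{\TS}$ is a "$\Weak{d}$ ACD" if and only if it is both a "$[0,d]$-parity ACD" and a "$[1,d+1]$-parity ACD". Finally, by Proposition~\ref{prop-typ:parity ACD type}, $\acd{\TS}$ is a "$[0,d]$-parity ACD" (resp. a "$[1,d+1]$-parity ACD") if and only if $\TS$ is "$[0,d]$-parity type" (resp. "$[1,d+1]$-parity type"). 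Combining these three equivalences yields the claim.
\end{proof}
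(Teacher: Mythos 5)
Your proof is correct and is exactly the argument the paper intends: the corollary is stated without an explicit proof precisely because it follows by chaining Proposition~\ref{prop-app:weak-ACD-shape}, Remark~\ref{rmk-typ:shape-weak-ACD} and the refined clause of Proposition~\ref{prop-typ:parity ACD type} (instantiated at $d+1$), which is what you do. Your handling of the index shift and the observation that the case $d=1$ recovers the classical weak-typeness result both match the paper's framing.
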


\paragraph*{Deterministic automata using generalised acceptance conditions.}

\begin{corollary}
	Let $\underlyingGraph{\A}$ be the "underlying graph" of a "deterministic automaton". There are $[0,d-1]$ and "$[1,d]$-parity" conditions $\macc{p,0}$ and $\macc{p,1}$ such that $\Lang{\underlyingGraph{\A},\macc{p,0}} = \Lang{\underlyingGraph{\A},\macc{p,1}}$ if and only if there is a $\Weak{p}$ condition $\macc{W}$ such that $\Lang{\underlyingGraph{\A},\macc{W}} =\Lang{\underlyingGraph{\A},\macc{p,0}} = \Lang{\underlyingGraph{\A},\macc{p,1}}$.
\end{corollary}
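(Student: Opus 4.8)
The plan is to follow exactly the pattern used for the Rabin/Streett $\leftrightarrow$ parity typeness corollary (the one stated after Lemma~\ref{lemma-typ:equivalece_automata_languages}), replacing "parity", "Rabin", "Streett" by "$\Weak{p}$", "$[0,d-1]$-parity", "$[1,d]$-parity" respectively, and using the appropriate ACD-level characterisations. Throughout, write $d=p$ and note that $[0,d-1]$ has $d$ colours and $[1,d]$ has $d$ colours, so both conditions are "$[0,d-1]$-parity type" / "$[1,d]$-parity type" in the sense of the paper. Since we are dealing with a "deterministic automaton" $\A$, Lemma~\ref{lemma-typ:equivalece_automata_languages} tells us that two acceptance conditions over $\underlyingGraph{\A}$ are equivalent (recognise the same language) if and only if the resulting "transition systems" are "isomorphic@@TS", i.e. "equivalent over $\underlyingGraph{\A}$"; so the statement is purely about the ACD of $\underlyingGraph{\A}$ equipped with these conditions.

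The ($\Leftarrow$) direction is immediate: if a single $\Weak{p}$ condition $\macc{W}$ recognises a language $L$ that also equals $\Lang{\underlyingGraph{\A},\macc{p,0}}$ and $\Lang{\underlyingGraph{\A},\macc{p,1}}$, then taking $\macc{p,0}$ and $\macc{p,1}$ themselves (they exist by hypothesis, or one simply reuses them) gives the required $[0,d-1]$ and $[1,d]$ conditions, so there is nothing to prove. For ($\Rightarrow$): assume $\macc{p,0}$ and $\macc{p,1}$ are equivalent over $\underlyingGraph{\A}$. First I would pass, via Remark~\ref{rmk-typ:condition-transferred-by-isomorphism}, to the "transition system" $\TS = (\underlyingGraph{\A}, \macc{p,0})$, which is a "parity" "TS". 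Since $\TS$ is "$[0,d-1]$-parity type" (it literally uses a $[0,d-1]$-parity condition), Proposition~\ref{prop-typ:parity ACD type} (the "Moreover" clause) gives that $\acd{\TS}$ is a "$[0,d-1]$-parity ACD": it is a "parity ACD", its "trees" have "height" at most $d$, and "trees" of "height" $d$ are "positive@@tree". The equivalence with $\macc{p,1}$ together with the same Proposition applied to $(\underlyingGraph{\A},\macc{p,1})$ — which has the same ACD up to the isomorphism — gives that $\acd{\TS}$ is also a "$[1,d]$-parity ACD", so "trees" of "height" $d$ must also be "negative@@tree". The only way both can hold is that there is no "tree" of "height" $d$ at all, so all "trees" of $\acd{\TS}$ have "height" at most $d-1$; wait — more carefully, re-index: being "$[0,d-1]$-parity ACD" bounds heights by $d$ with "height"-$d$ trees "positive@@tree", being "$[1,d]$-parity ACD" bounds heights by $d$ with "height"-$d$ trees "negative@@tree", so "height"-$d$ trees cannot exist and $\acd{\TS}$ has "trees" of "height" at most $d-1 = p-1 \le p$; hence $\acd{\TS}$ is a "$\Weak{p}$ ACD" by definition (it is a "parity ACD" with "trees" of "height" at most $p$). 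By Proposition~\ref{prop-app:weak-ACD-shape}, $\TS$ is then "$\Weak{p}$ type", which (again via Remark~\ref{rmk-typ:condition-transferred-by-isomorphism}) yields a $\Weak{p}$ condition $\macc{W}$ over $\underlyingGraph{\A}$ equivalent to $\macc{p,0}$, hence to $\macc{p,1}$, giving $\Lang{\underlyingGraph{\A},\macc{W}} = \Lang{\underlyingGraph{\A},\macc{p,0}} = \Lang{\underlyingGraph{\A},\macc{p,1}}$ as required.

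The only delicate point — and the place I would double-check — is the bookkeeping on heights and on the "positive@@tree"/"negative@@tree" distinction between the "$[0,d-1]$-parity ACD" and "$[1,d]$-parity ACD" conditions, since the indexing conventions in Definition~\ref{def-cor:ShapesOfACD} (and the relation $\WeakIndex{d}$ vs $[0,d-1]$/$[1,d]$ recalled after Definition~\ref{def:parity_index_language}) shift by one in a way that is easy to get wrong. Concretely: I must confirm that "trees of maximal height realising both a positive and a negative constraint" forces those trees out of existence rather than, say, merely forcing equidistance; the clean way is to invoke Remark~\ref{rmk-typ:shape-weak-ACD} directly — "$\acd{\TS}$ is a "$\Weak{d}$ ACD" if and only if it is a "$[0,d]$-parity ACD" and a "$[1,d+1]$-parity ACD"" — after re-aligning my $[0,d-1]$/$[1,d]$ hypotheses with the $[0,d]$/$[1,d+1]$ phrasing there (both express "$d$ colours, minimal colour even resp. odd"). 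Everything else is a transcription of the already-proven Rabin/Streett corollary, so no new ideas are needed beyond this indexing care and one appeal to Lemma~\ref{lemma-typ:equivalece_automata_languages} to move between "recognises the same language" and "isomorphic@@TS".
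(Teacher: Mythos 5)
Your derivation is the intended one — the paper states this as an unproved corollary of Lemma~\ref{lemma-typ:equivalece_automata_languages}, Proposition~\ref{prop-typ:parity ACD type} and Proposition~\ref{prop-app:weak-ACD-shape}, which is exactly the chain you use — and the argument itself is sound. The one thing to repair is the index you flagged yourself: you set $p=d$, but with that choice the forward direction is vacuous, since any acceptance condition whose colour set is $[0,d-1]$ makes every SCC see at most $d$ colours, i.e.\ it is already $\Weak{d}$ on the nose, so the corollary would say nothing. The intended value is $p=d-1$ (this is Remark~\ref{rmk-typ:shape-weak-ACD} after rewriting $[0,d-1]$ as $[0,(d-1)]$ and $[1,d]$ as $[1,(d-1)+1]$), and your own computation already delivers it: from the ACD being simultaneously a $[0,d-1]$-parity ACD and a $[1,d]$-parity ACD you correctly conclude that no tree of height $d$ can exist, hence all trees have height at most $d-1$, which is precisely the definition of a $\Weak{d-1}$ ACD; Proposition~\ref{prop-app:weak-ACD-shape} then gives $\Weak{d-1}$ type, and Lemma~\ref{lemma-typ:equivalece_automata_languages} together with Remark~\ref{rmk-typ:condition-transferred-by-isomorphism} transports this back to equality of the three languages. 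So replace the step ``$d-1=p-1\le p$'' by ``$p=d-1$'' and you obtain the sharp, non-trivial statement; everything else stands.
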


\begin{proposition}
	Let $\A$ be a "deterministic" "Muller" "automaton", and assume that $\Lang{\A}$ can be "recognised" by a "deterministic" "B\"uchi" (resp. "coB\"uchi") "automaton"; that is, the "parity index" of $\Lang{\A}$ is at most $[0,1]$ (resp. at most $[1,2]$). Then, $\A$ is "generalised B\"uchi type" (resp. "generalised coB\"uchi type").
\end{proposition}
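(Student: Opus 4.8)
The plan is to mirror the proof of the analogous "typeness" result for "Rabin" and "Streett" "automata" in Section~\ref{subsubsec-typeness:Typeness_TS}: the only differences are that here we start from a general "Muller" "automaton" (so we cannot invoke a "Rabin shape" or "Streett shape" hypothesis), and we target the weaker "generalised B\"uchi" "condition@@acc". I carry out the argument for the "generalised B\"uchi" case; the "generalised coB\"uchi" case is symmetric, or follows by complementation using Remark~\ref{rmk-acd:complementation-ACD}.

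First I would reduce to the case where all states of $\A$ are "accessible": the "acceptance condition" can be relabelled trivially on the inaccessible part without altering $\Lang{\A}$, and by Lemma~\ref{lemma-typ:equivalece_automata_languages} this does not change whether $\A$ is "generalised B\"uchi type". Since $\Lang{\A}$ has "parity index" at most $[0,1]$, Proposition~\ref{prop-typ:parity index-Muller-automaton} yields that all "trees" of $\acd{\A}$ have "height" at most $2$ and that every "tree" of "height" $2$ is "positive@@tree", i.e.\ has a "round@@acd" "root".

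Next I would check that $\acd{\A}$ is a "generalised B\"uchi ACD", that is, that $\treeVertex{v}$ has "generalised B\"uchi shape" for every state $v$. For "transient" $v$ this is immediate, as $\treeVertex{v}$ is a single "node". For "recurrent" $v$, the "tree" $\treeVertex{v}$ is a "subtree" of some $\altTree{\ell_i}$ with the same "root" $n_{\ell_i}$ (Remark~\ref{rmk-acd:treeVertex-closed-by-ancestor}), hence has "height" at most $2$; and if its "height" is exactly $2$, then $\altTree{\ell_i}$ contains a two-node chain, so $\altTree{\ell_i}$ has "height" exactly $2$, hence is "positive@@tree", hence $n_{\ell_i}$ — the "root" of $\treeVertex{v}$ — is "round@@acd". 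Thus $\treeVertex{v}$ has "generalised B\"uchi shape", and Proposition~\ref{prop-app:genBuchi-ACD type} concludes that $\A$ is "generalised B\"uchi type".

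I do not expect a genuine obstacle: all the substance is already packaged in Propositions~\ref{prop-typ:parity index-Muller-automaton} and~\ref{prop-app:genBuchi-ACD type}, and what remains is the elementary observation that passing to the "subtree" $\treeVertex{v}$ preserves both the bound on the "height" and the parity of the "root". The only point needing a little care is the reduction to the "accessible" case, since $\acd{\A}$ a priori also records the structure of inaccessible "SCCs"; this is handled exactly as in the proof of the corresponding statement for "Rabin"/"Streett" "automata".
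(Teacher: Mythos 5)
Your proposal is correct and follows essentially the same route as the paper: reduce to the accessible case, apply Proposition~\ref{prop-typ:parity index-Muller-automaton} to bound the "height" of the trees of $\acd{\A}$ by $2$ and force "height"-$2$ trees to be "positive@@tree", and conclude via Proposition~\ref{prop-app:genBuchi-ACD type}. The only difference is that you spell out why each "local subtree" $\treeVertex{v}$ inherits the "generalised B\"uchi shape" from its ambient tree, a step the paper leaves implicit.
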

\begin{proof}
	We prove the result for the case generalised B\"uchi (analogous for coB\"uchi). We can assume that all the states of $\A$ are "accessible", as we can define a trivial acceptance condition in the part of $\A$ that is not accessible.  Since $\Lang{\A}$ has "parity index" at most $[0,1]$, the trees of the "ACD" of $\A$ have "height" at most $2$, and trees of "height" $2$ are "positive@@tree" (the "root" is a "round node@@acd"), by Proposition~\ref{prop-typ:parity index-Muller-automaton}, so it is a "generalised B\"uchi ACD", and by Proposition~\ref{prop-app:genBuchi-ACD type}, $\A$ is "generalised B\"uchi type".
\end{proof}

\paragraph*{Parity index from automata in normal form.}

\begin{corollary}\label{cor-app-weak:normal-form-index-par}
	Let $\A$ be a "deterministic" "parity" "automaton" in "normal form" using "colours" in $[0,d-1]$ (resp. $[1,d]$) such that all its states are "accessible". If $\A$ is $\Weak{d-1}$, then the "parity index" of $\Lang{\A}$ is $\WeakIndex{d-1}$. If not, the "parity index" of $\Lang{\A}$ is $[0,d-1]$ (resp. $[1,d]$).
\end{corollary}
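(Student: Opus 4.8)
The plan is to read the statement as a direct corollary of the machinery already developed for the normal form and the alternating cycle decomposition. Let $\A$ be a deterministic parity automaton in normal form using colours in $[0,d-1]$ (resp. $[1,d]$), with all states accessible. The key input is that, by Proposition~\ref{prop-normF:existance+uniqueness_normal_form}, the colouring of $\A$ is exactly the one produced by its ACD-parity-transform, and that $\acd{\A}$ is a parity ACD (Proposition~\ref{prop-typ:parity ACD type}), i.e. each local subtree $\treeVertex{v}$ is a single branch. Since $\A$ uses colours in $[0,d-1]$ (resp. $[1,d]$), every tree of $\acd{\A}$ has height at most $d$: indeed $\parityNodesAcd$ takes values in this colour set by definition of the transform, and on a positive tree $\parityNodesAcd$ ranges over $\{\depth(n)\}$ (or $\{\depth(n)+2\}$ in the negative case), on a negative tree over $\{\depth(n)+1\}$, so a tree of height $h$ forces the colour set to have size at least $h$. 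Hence there is at least one tree of height exactly $d$ (otherwise, by Proposition~\ref{prop-normF:existance+uniqueness_normal_form} and Corollary~\ref{lemma-normF:normal_form-implies-strong_parity_index}, the automaton would be parity-index-tight with fewer colours, contradicting that the normal-form colouring uses $[0,d-1]$ or $[1,d]$).

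Next I would apply Proposition~\ref{prop-typ:parity index-Muller-automaton} to read off the parity index of $\Lang{\A}$ from the shapes and heights of the trees of $\acd{\A}$. There are exactly three possibilities classified there: $[0,d-1]$ (trees of height $d$ all positive), $[1,d]$ (trees of height $d$ all negative), or $\WeakIndex{d}$ (there is a positive tree of height $d$ and a negative tree of height $d$). But a parity ACD with a tree of height $d$ of each parity cannot occur when $\A$ is in normal form using colours in $[0,d-1]$ (resp. $[1,d]$): if $\acd{\A}$ were equidistant with maximal height $d$, then by the colour-count argument above a positive tree of height $d$ forces colours $\{0,\dots,d-1\}$ while a negative tree of height $d$ forces colours $\{1,\dots,d\}$, so $\A$ would need $d+1$ colours, contradicting the hypothesis. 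Therefore the only remaining options among the three are: the parity index is $[0,d-1]$ (resp. $[1,d]$), or — if the maximal tree height is in fact $d-1$ with one positive and one negative tree of height $d-1$, which is compatible with colours in $[0,d-1]$ (resp. $[1,d]$) in the degenerate sense — the parity index is $\WeakIndex{d-1}$. Spelling this out: the maximal height of a tree in $\acd{\A}$ is either $d$ or $d-1$ (it is at most $d$, and if it were $\le d-2$ the normal form would use at most $d-1$ colours, contradicting the hypothesis unless we are in the $\WeakIndex{d-1}$ situation which I handle separately). When it is $d$, Proposition~\ref{prop-typ:parity index-Muller-automaton} with the non-equidistance observation gives index $[0,d-1]$ (resp. $[1,d]$); when it is $d-1$, the normal-form colour constraint forces $\acd{\A}$ to be equidistant at height $d-1$ (a single parity at height $d-1$ would give fewer colours), so the index is $\WeakIndex{d-1}$.

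The main obstacle I anticipate is getting the bookkeeping of the colour offsets exactly right in the negative/equidistant cases of the ACD-parity-transform: recall that $\parityNodesAcd$ is defined differently depending on whether $\acd{\TS}$ is positive, negative, or equidistant, and whether a given tree is positive or negative (Definition~\ref{def:parityTransformationACD} and the discussion preceding it). I would be careful to treat the $[0,d-1]$ and $[1,d]$ cases symmetrically via $\minparityAcd{\A}\in\{0,1\}$, and to invoke Remark~\ref{rmk-acd:parity_levels_ACD} (parity of $\parityNodesAcd(n)$ matches acceptance of $\nuAcd(n)$) so that "positive tree of height $d$ needs colour set of size $d$ starting at $\minparityAcd{\A}$" is justified cleanly. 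Once that is pinned down, the statement falls out of combining Proposition~\ref{prop-normF:existance+uniqueness_normal_form}, Proposition~\ref{prop-typ:parity ACD type}, and Proposition~\ref{prop-typ:parity index-Muller-automaton}, with the extra equidistance/colour-count observation ruling out the $\WeakIndex{d}$ option and forcing the $\WeakIndex{d-1}$ option precisely when the maximal tree height drops to $d-1$.
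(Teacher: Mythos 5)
Your route can be made to work, but it is not the paper's, and as written it has two defects. The paper's proof stays entirely at the level of flowers and is much shorter: if $\A$ is not $\Weak{d-1}$, some SCC contains all $d$ colours, hence by Proposition~\ref{prop-normF:flowers-in-normal_form} a positive (resp.\ negative) $d$-flower, and the Flower Lemma~\ref{lemma:flower-lemma} gives parity index $[0,d-1]$ (resp.\ $[1,d]$); if $\A$ is $\Weak{d-1}$, a cycle in which the colour $d-1$ (resp.\ $d$) occurs yields a $(d-1)$-flower of one polarity by the same proposition, and the fact that $\A$ is not negative (resp.\ is negative) yields one of the other polarity, giving $\WeakIndex{d-1}$. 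Your detour through the parity-ACD shape and Proposition~\ref{prop-typ:parity index-Muller-automaton} has the same mathematical core (that proposition is itself proved via the Flower Lemma), but the tree-height bookkeeping is where your problems arise.

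First, the intermediate claim ``hence there is at least one tree of height exactly $d$'' is false: when $\acd{\A}$ is equidistant with maximal height $d-1$, positive trees carry colours $[0,d-2]$ and negative trees carry $[1,d-1]$, so the normal form genuinely uses all of $[0,d-1]$ with no tree of height $d$ --- this is precisely the $\WeakIndex{d-1}$ case. You implicitly retract the claim two sentences later, but as written the proof asserts something wrong. Second, and more importantly, your final dichotomy is on the maximal tree height ($d$ versus $d-1$), whereas the corollary's dichotomy is on whether $\A$ is $\Weak{d-1}$, and you never tie the two together. The link is short --- in normal form the colours occurring in an SCC are exactly the values of $\parityNodesAcd$ on the corresponding tree of $\acd{\A}$, an interval whose length is that tree's height, so $\A$ is $\Weak{d-1}$ if and only if every tree has height at most $d-1$ --- but without this sentence your argument establishes a differently phrased statement rather than the one claimed.
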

\begin{proof}
	We assume that $\A$ uses colours in $[0,d-1]$ (in particular, it is "not negative@@TS").
	If $\A$ is not $\Weak{d-1}$, there is an "SCC" containing all the colours $[0,d-1]$.  By Proposition~\ref{prop-normF:flowers-in-normal_form}, such "SCC" contains a "positive@@flower" "$d$-flower", so by the Flower Lemma~\ref{lemma:flower-lemma}, the "parity index" of $\Lang{\A}$ is $[0,d-1]$.
	
	Suppose now that $\A$ is $\Weak{d-1}$. Let $\ell$ be a "cycle" of $\A$ in which the colour $d-1$ occurs. By Proposition~\ref{prop-normF:flowers-in-normal_form}, $\ell$ contains a "negative@@flower" "$(d-1)$-flower". As $\A$ is "not negative@@TS", it also contains a "positive@@flower" "$(d-1)$-flower". By the Flower Lemma~\ref{lemma:flower-lemma}, the "parity index" of $\Lang{\A}$ is $\WeakIndex{d-1}$.	
\end{proof}

\begin{corollary}\label{cor-app-weak:index-parity-type}
	Let $\A$ be a "deterministic" "parity" "automaton" such that all its states are "accessible" and the "parity index" of $\Lang{\A}$ is $[0,d-1]$ (resp. $[1,d]$ / $\WeakIndex{d}$). 
	Then, $\A$ is $[0,d-1]$ (resp. $[1,d]$ / $\WeakIndex{d}$)-"parity type". 
\end{corollary}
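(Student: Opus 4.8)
The plan is to read the desired relabelling directly off the "alternating cycle decomposition" of $\A$, mirroring the argument for the non-weak version of this statement in the main text. Since a "deterministic" "parity" "automaton" is in particular a "DMA" all of whose states are "accessible", we may form $\acd{\A}$; and because $\A$ already carries a "parity" "acceptance condition", Proposition~\ref{prop-typ:parity ACD type} guarantees that $\acd{\A}$ is a "parity ACD", i.e. every "local subtree" $\treeVertex{v}$ has "parity shape". Moreover, by Remark~\ref{rmk-acd:acd-transform-parity-is-parity} the "ACD-parity-transform" $\acdParityTransform{\A}$ has "underlying graph" "isomorphic@@TS" to that of $\A$, so whatever "acceptance condition" we extract from $\acd{\A}$ is in fact a relabelling of $\A$ itself; it therefore suffices to control the number of "colours" and the parity of the least "colour" used by $\acdParityTransform{\A}$.

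The second step is to invoke Proposition~\ref{prop-typ:parity index-Muller-automaton}, which translates the hypothesis on the "parity index" of $\Lang{\A}$ into structural information about the "trees" of $\acd{\A}$ — their "height" and whether they are "positive@@tree" or "negative@@tree". In the case where the "parity index" is $[0,d-1]$ (resp. $[1,d]$), this yields that all "trees" of $\acd{\A}$ have "height" at most $d$, that some "tree" has "height" exactly $d$, and that every "tree" of "height" $d$ is "positive@@tree" (resp. "negative@@tree"); combined with $\acd{\A}$ being a "parity ACD", this is precisely the statement that $\acd{\A}$ is a "$[0,d-1]$-parity ACD" (resp. "$[1,d]$-parity ACD") in the sense of Definition~\ref{def-cor:ShapesOfACD}, so the ``moreover'' part of Proposition~\ref{prop-typ:parity ACD type} concludes that $\A$ is "$[0,d-1]$ (resp. $[1,d]$)-parity type". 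In the case where the "parity index" is $\WeakIndex{d}$, Proposition~\ref{prop-typ:parity index-Muller-automaton} gives that all "trees" of $\acd{\A}$ have "height" at most $d$ while there exist both a "positive@@tree" and a "negative@@tree" tree of "height" $d$; since $\acd{\A}$ is also a "parity ACD", this is exactly the definition of a "$\Weak{d}$ ACD", and Proposition~\ref{prop-app:weak-ACD-shape} then yields that $\A$ is "$\Weak{d}$ type".

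A shorter, equivalent route — closer in spirit to the preceding Corollary~\ref{cor-app-weak:normal-form-index-par} — would be to first put $\A$ into "normal form" via Proposition~\ref{prop-normF:existance+uniqueness_normal_form} (again a genuine relabelling of $\A$, since $\A$ is "parity"), obtaining an automaton in "normal form" using "colours" in some $[0,e-1]$ or $[1,e]$, and then apply Corollary~\ref{cor-app-weak:normal-form-index-par}: matching its three possible outputs ($\WeakIndex{e-1}$, $[0,e-1]$, $[1,e]$) against the hypothesis pins down $e$, the parity of the least "colour", and whether the normalised automaton is $\Weak{d-1}$, whence the claim. I do not expect a genuine obstacle here; the only points requiring care are keeping the three ``(resp.)'' branches aligned and handling the boundary behaviour — in particular the fact that when the "parity index" is $\WeakIndex{d}$ there really are "trees" of "height" $d$ of both signs, which is exactly what Proposition~\ref{prop-typ:parity index-Muller-automaton} supplies.
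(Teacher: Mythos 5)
Your argument is correct and follows essentially the same route as the paper: translate the "parity index" hypothesis into structural constraints on $\acd{\A}$ via Proposition~\ref{prop-typ:parity index-Muller-automaton}, observe that $\acd{\A}$ is already a "parity ACD" because $\A$ is a "parity" "automaton", and conclude with Proposition~\ref{prop-typ:parity ACD type} (resp.\ Proposition~\ref{prop-app:weak-ACD-shape}). The only nitpick is that in the $\WeakIndex{d}$ case the existence of trees of "height" $d$ of both signs is not part of the definition of a "$\Weak{d}$ ACD" (only "parity ACD" plus "height" at most $d$ is), but this does not affect the conclusion.
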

\begin{proof}
	We assume that $\Lang{\A}$ has "parity index" $[0,d-1]$ (the other cases are similar). By the Flower Lemma~\ref{lemma:flower-lemma}, $\A$ does not contain any "negative@@flower" "$d$-flower". 
	By Proposition~\ref{prop-typ:parity index-Muller-automaton}, the trees of the "ACD" of $\A$ have "height" at most $d$, and trees of "height" $d$ are "positive@@tree". That is, $\acd{\A}$ is a "$[0,d-1]$-parity ACD", and we conclude by applying Proposition~\ref{prop-typ:parity ACD type}.
\end{proof}

	\section{Transformations for games}
	\label{sec:appendix-games-transformations}
	\paragraph*{Games suitable for transformations.}
As we have indicated throughout the paper, defining transformations not preserving determinism in the case of "games" poses certain formal challenges. This difficulties appear both when  such transformations arise as the "product@@aut" $\G \compositionAut \A$ of a "game" $\G$ by an "non-deterministic" "automaton" $\A$, or when they are witnessed by an "HD mapping" $\pp\colon \G \to \G'$. The problem comes from the fact that the semantics of "non-determinism" in "automata" (or "history-determinism@@morph" of "morphisms") are inherently asymmetric,  and this asymmetry needs to be made compatible with the semantics of "games". 
The choices we have made to overcome this technical difficulty are:
\begin{itemize}
	\item Restrict transformations of "games" to "games" in a standard form, which we have called \emph{games "suitable for transformations"}.
	\item Add a restriction to "HD mappings" in the case of "games", introducing the notion of "HD-for-games" mapping. 
\end{itemize}

The main motivation for the standard form of "games" that we propose comes from viewing games as originating from logical formulas. Indeed, an equivalent model for "games" can be given as follows: vertices in the game graph are not partitioned into "Eve's" and "Adam's" nodes, instead, we assign a boolean formula to each transition that determines an interaction between the two players. The outcome of this interaction is (1) the next vertex, and (2) the "output colour" of the "acceptance condition". We can obtain a "game" of the kind we have defined in this paper by unfolding the boolean formulas of the transitions. There is a natural way to standardize such games: putting the boolean formulas in disjunctive normal form (DNF). Then, the unfolding of a game with formulas in DNF yields a "game" in which the partition into "Eve"-"Adam" nodes induces a bipartite graph with a particular structure: first, "Adam" chooses an "uncoloured" transition leading to a vertex controlled by "Eve" (with only one ingoing transition), and then "Eve" picks a transition producing some "output colour".

We recall that a "game" is "suitable for transformations" if it verifies that for every edge $e = v\re{}v'$, if $v$ is controlled by "Adam", then $e$ is "uncoloured" ($\gg(e)=\ee$), $v'\in \VEve$, and $e$ is the only incoming edge to $v'$ ($\mIn(v') = \{e\}$). 

Games in this form have an asymmetric structure that makes them suitable for any type of transformation. As any pair of consecutive transitions are of the form $v\re{\ee}\tilde{v}\re{c}v'$, with $\tilde{v}\in \VEve$, we can force it so that if a decision needs to be made in a product, Eve is the one who makes it.

\begin{lemma}
	For every game $\G$ with vertices $V$ and edges $E$, there exists a "game" $\widetilde{\G}$ that is "suitable for transformations", of size $|\widetilde{\G}| = \O(|E|)$, and equivalent to $\G$ in the following sense: there is an injective function $f\colon V\to \widetilde{V}$ such that "Eve" "wins" $\G$ from  $v$ if and only if she "wins" $\widetilde{\G}$ from  $f(v)$.
\end{lemma}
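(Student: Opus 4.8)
The plan is to give an explicit construction of $\widetilde{\G}$ by inserting one fresh intermediate vertex, controlled by Eve, in the middle of every edge that leaves an Adam-vertex. First I would define, for a game $\G = (V, E, \msource, \mtarget, I, \lPlayers, \macc{\G})$, the new vertex set $\widetilde{V} = V \sqcup E_{\mathrm{A}}$, where $E_{\mathrm{A}} = \mout(\VAdam)$ is the set of edges whose source is an Adam-vertex; the embedding $f\colon V \to \widetilde{V}$ is simply the inclusion. For each $e = v \re{} v' \in E_{\mathrm{A}}$ we replace $e$ by two edges: an uncoloured edge $v \re{\ee} e$ (with $e$ now viewed as a vertex) and a coloured edge $e \re{\gg(e)} v'$ carrying the original colour. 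Edges of $\G$ whose source is an Eve-vertex are kept unchanged. We set $\lPlayers$ on $E_{\mathrm{A}}$ to be $\Eve$, keep the original partition on $V$, and set the initial vertices of $\widetilde{\G}$ to be $f(I) = I$. The acceptance condition $\macc{\widetilde{\G}}$ uses the same colour set and acceptance set as $\G$, with the new uncoloured edges mapped to $\ee$; one checks that no cycle of $\widetilde{\G}$ is eventually all-$\ee$ since each inserted $\ee$-edge is immediately followed by a coloured edge, inheriting this property from $\G$. It is then immediate that $\widetilde{\G}$ is suitable for transformations: every edge out of an Adam-vertex $v$ is of the form $v \re{\ee} e$ with $e \in E_{\mathrm{A}} \subseteq \widetilde{V}_{\Eve}$, it is uncoloured, and $e$ has exactly one incoming edge, namely this one.

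Next I would verify the size bound. We have $|\widetilde{V}| = |V| + |E_{\mathrm{A}}| \leq |V| + |E|$; since a game has no sinks, $|V| \leq |E|$, so $|\widetilde{\G}| = \O(|E|)$. The number of edges is also linear in $|E|$, though the statement only asks about the size (number of vertices).

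For the equivalence, the key observation is that $\widetilde{\G}$ is essentially a ``delayed'' version of $\G$: there is an obvious bijection between plays of $\G$ from $v$ and plays of $\widetilde{\G}$ from $f(v)$, obtained by interleaving each Adam-move $v \re{e} v'$ of $\G$ with the pair $v \re{\ee} e \re{} v'$, where at the intermediate vertex $e$ Eve has no genuine choice (it has a single outgoing edge). This bijection preserves the sequence of output colours, hence preserves acceptance. Moreover it restricts to a bijection between Eve-strategies in $\G$ and Eve-strategies in $\widetilde{\G}$ that makes no real decision at the $E_{\mathrm{A}}$-vertices (and any strategy in $\widetilde{\G}$ is of this form, up to irrelevant choices, since those vertices have out-degree one). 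Since winning is decided by the colour sequence, which is preserved, Eve wins $\G$ from $v$ iff she wins $\widetilde{\G}$ from $f(v)$. I would phrase this cleanly using the morphism machinery of Section~\ref{section:morphisms}: the natural projection $\pp\colon \widetilde{\G} \to \G$ sending $E_{\mathrm{A}}$-vertices $e$ to $\msource(e)$ and contracting the inserted $\ee$-edges is a locally bijective morphism of games (local bijectivity at the new Eve-vertices is trivial as they have a unique outgoing edge; at original vertices it follows from the construction), so Proposition~\ref{prop-morph:HD mappings-preserve-games} and Lemma~\ref{lemma-morph:surj-mappings-preserve-games} give that the winning regions correspond under $\pp$, which on $f(V)$ is the identity.

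The main obstacle — really a bookkeeping point rather than a deep difficulty — is making the correspondence of strategies fully precise when the intermediate $E_{\mathrm{A}}$-vertices are, technically, controlled by Eve: one must observe that a strategy for Eve in $\widetilde{\G}$ has no freedom at these vertices, so it is genuinely determined by its behaviour at the images of original Eve-vertices, and conversely every strategy for Eve in $\G$ lifts uniquely. Invoking the locally-bijective-morphism results from Section~\ref{subsec-morph:semantic-properties} sidesteps having to reprove this transfer of strategies by hand, which is why I would route the argument through $\pp$ rather than constructing the strategy correspondence explicitly.
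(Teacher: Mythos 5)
Your construction and your direct play-by-play argument are essentially the paper's own proof. The appendix version subdivides \emph{every} edge of $\G$ (taking $\widetilde{V}=V\cup E$), whereas you subdivide only the edges leaving Adam-vertices; this is exactly the variant announced in the remark that follows the definition of suitability for transformations, and both work, both give $|\widetilde{\G}|=\O(|E|)$ (using that games have no sinks, so $|V|\le|E|$). The paper settles the equivalence of winners with ``it is clear that''; your explicit bijection between plays, together with the observation that Eve has no genuine choice at the inserted out-degree-one vertices, is an adequate way of filling in that step.

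The one part that does not go through is the formalisation you say you would actually commit to. There is no locally bijective morphism $\pp\colon\widetilde{\G}\to\G$ of the kind you describe: a morphism of pointed graphs in this paper must send \emph{every} edge to an edge with matching endpoints, so the inserted edge $v\re{\ee}e$ would have to map to an edge of $\G$ from $v$ to $\pp_V(e)$, while the edge $e\re{}v'$ would have to map to an edge from $\pp_V(e)$ to $v'$; no single choice of $\pp_V(e)$ (neither $\msource(e)$ nor $\mtarget(e)$) makes both images exist in $\G$ unless $e$ is a self-loop. ``Contracting'' the $\ee$-edges is simply not an operation the morphism framework of Section~\ref{section:morphisms} supports, so Lemma~\ref{lemma-morph:surj-mappings-preserve-games} and Proposition~\ref{prop-morph:HD mappings-preserve-games} cannot be invoked here (the same obstruction rules out an HD-for-games mapping, which is still a weak morphism). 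You should therefore keep the hand-made strategy correspondence --- which is short, precisely because the new Eve-vertices have a unique outgoing edge --- rather than routing the argument through $\pp$.
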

 \begin{proof}
 	We define $\widetilde{\G}$ as follows. We let its set of vertices be $\widetilde{V} = V \cup E$. Vertices of the form $v\in V$ will correspond to vertices coming from $\G$, and vertices $e\in E$ will be intermediate vertices added to force the "suitability for transformations" property. We let $"\widetilde {V}_\Adam" = \VAdam $ and $"\widetilde {V}_\Eve" = \VEve \cup E$. If $e=v\re{c}v'$ is an edge in $\G$, we add the edges  $v\re{\ee} e$ and $e\re{c} v'$ to $\G'$. 
 	It is clear that $\widetilde{\G}$ is "suitable for transformations" and that "Eve" "wins" $\G$ from  $v$ if and only if she "wins" $\widetilde{\G}$ from~$v$.
 \end{proof}
 
\paragraph*{ACD-HD-Rabin-transform-for-games.}
As discussed in Section~\ref{subsec-acd: HD-Rabin-transformation}, the "ACD-HD-Rabin-transform" of a "game" $\G$ does not always induce an "HD-for-games mapping" $\pp\colon \acdRabinTransform{\G} \to \G$, and $\G$ and $\acdRabinTransform{\G}$ do not necessarily have the same "winner". This is to be expected, as the "ACD-HD-Rabin-transform" does not take into account the partition into "Eve" and "Adam" nodes. In this paragraph we propose a small modification on the transformation to obtain a correct transformation for games.

\AP Let $\G$ be a "game". If there is an edge $v\re{}v'$ with $v\in\VAdam$, we say that $v'$ is an ""A-successor"". We remark that if $\G$ is "suitable for transformations", an "A-successor" is controlled by "Eve" and has a unique predecessor. 
We let $\intro*\VAsucc$ be the set of "A-successor" of $\G$ and $\intro*\Vnormal = V\setminus \VAsucc$. If $\G$ is "suitable for transformations", for each $v\in \VAsucc$ we let $\intro*\Apred(v)$ be its unique predecessor.

The idea to define the "ACD-HD-Rabin-transform-for-games" $\acdRabinTransformGFG{\G}$ is the following: starting from the regular "ACD-HD-Rabin-transform" $\acdRabinTransform{\G}$, we make some local changes to vertices that are "A-successors". First, if $v\in \VAdam$, we replace edges of the form $(v,x)\re{n} (v',x')$ in $\acdRabinTransform{\G}$ by $(v,x)\re{\ee} (v',x)$ (we forbid Adam to choose how to update the "ACD"-component). If such an edge is followed by $(v',x')\re{n'} (v'',x'')$ in $\acdRabinTransform{\G}$, then we add $(v',x)\re{n} (v'',x'')$ to $\acdRabinTransformGFG{\G}$ (we note that $v'\in \VEve$). That is, Eve chooses retroactively how to update the "ACD"-components performing two consecutive updates. We note that the node $n'$ is not output in the new "game"; this is not a problem, since $n$ must be an "ancestor" of $n'$ (we could say that $n$ contains more information regarding the "acceptance condition").

\begin{remark}\label{rmk-app:bounded-edges}
	Let $\G$ be a "game" "suitable for transformations", let $v\in \VAdam$ and $v\re{e_1}v'\re{e_2}v''$ be a path of size $2$ in $\G$ from $v$. It holds:
	\begin{itemize}
		\item If some "cycle" $\ell$ contains $e_2$, it also contains $e_1$.
		\item $\treeVertex{v'}$ is a "subtree" of $\treeVertex{v}$.
		\item Let $n_1\in \leaves(\treeVertex{v})$ and $n_2 = \jump_{\treeVertex{v'}}(n_1, \suppAcd(n_1,e_1))$. Then, $\suppAcd(n_2,e_2))$ is a "descendant" of $\suppAcd(n_1,e_1))$ in $\treeVertex{v'}$.
	\end{itemize}  
\end{remark}
 
\begin{definition}[ACD-HD-Rabin-transform-for-games]\label{def-ACD:RabinGFGTransformationACD}
	Let $\G$ be a "Muller" "game" "suitable for transformations". 
 	For each vertex $v\in V$ we let $\eta_v \colon \leaves(\treeVertex{v}) \to \{1,\dots, \memTree{\treeVertex{v}}\}$ be a mapping satisfying Property~\eqref{eq:property-star} from Lemma~\ref{lemma:property_star}.	
 	\AP We define the ""ACD-HD-Rabin-transform-for-games"" of~$\G$ to be the "Rabin" "transition system" $\intro*\acdRabinTransformGFG{\G}$ 
 	defined as follows.
 	\begin{description}
 		\setlength\itemsep{2mm}
 		\item[Vertices.] The set of vertices is \[\widetilde{V} = \bigcup_{v\in \Vnormal}  \{v\} \times [1,\memTree{\treeVertex{v}}]  \;\cup\, \bigcup_{v\in \VAsucc}  \{v\} \times [1,\memTree{\treeVertex{\Apred(v)}}] .\]
 		
 		\item[Players partition.] A vertex $(v,x)$ belongs to "Eve" if and only if $v$ belongs to "Eve" in $\G$.
 		
 		\item[Initial vertices.] $\widetilde{I} = \{ (v_0, x) \mid v_0\in I \tand x \in \{1,\dots, \memTree{\treeVertex{v_0}}\} \}$.
 		
 		\item[Edges and output colours.] 
 		Let $e = v\re{} v'$ in $\G$.
 		\begin{itemize}
 			\item 	If $v\in \VEve\cap \Vnormal$, we add $(v, x) \re{n} (v',x')$ to $\acdRabinTransformGFG{\G}$ exactly in the same cases as in the regular "ACD-HD-Rabin-transform".
 			\item If $v\in \VAdam$, we let $(v,x)\re{\ee}(v',x)$ in $\widetilde{E}$ for each $x\in \{1,\dots, \memTree{\treeVertex{v}}\}$. 
 			\item If $v\in \VAsucc$, we add $(v, x) \re{n} (v',x')$ to $\acdRabinTransformGFG{\G}$ if in the regular "ACD-HD-Rabin-transform" there is a path of size $2$ of the form 
 			\[ (\Apred(v),x)\re{n} (v, \tilde{x}) \re{\tilde{n}} (v',x').  \]
 		\end{itemize}
 			Formally, \[\widetilde{E} = \bigcup_{\substack{e=v\re{}v'\in E\\ v\in \Vnormal}}  \{e\} \times \leaves(\treeVertex{v}) \;\cup\, \bigcup_{\substack{e=v\re{}v'\in E\\v\in \VAsucc}}  \{e\} \times \leaves(\treeVertex{\Apred(v)}).\]
 		
 		\item[Rabin condition.] $R = \{(G_n, R_n)\}_{n\in \nodesAcdRound{\TS}}$, where $G_n$ and $R_n$ are defined as follows: Let $n$ be a "round@@acd" node, and let $n'$ be any node in $\nodesAcd{\TS}$,
 		\begin{equation*}
 			\begin{cases}
 				n' \in G_n & \text{ if } n'=n,\\
 				n' \in R_n & \text{ if }  n'\neq n \text{ and } n \text{ is not an "ancestor" of } n'. 
 			\end{cases}
 		\end{equation*} 	
 	
 	\end{description}
 \end{definition}

\subparagraph*{Correctness of the ACD-HD-Rabin-transform-for-games}
\correctnessAcdRabinTransformGames*
 
\begin{proof}
	The proof is analogous to that of the correctness of the usual "ACD-HD-Rabin-transform" (Proposition~\ref{prop-ACD:correctness_ACD-HD-Rabin-transform}). We define the mapping $\pp\colon \acdRabinTransformGFG{\G} \to \G$ as $\pp_V(v,x) = v$ and $\pp_E(e,l) = e$. It is clear that it is a "weak morphism", and it "preserve accepting runs" by Lemma~\ref{lemma-acd:Rabin-acc-sequences-nodes-ACD} and Remark~\ref{rmk-app:bounded-edges}.
	
	We define a "resolver@@morph" $(r_0,r)$ simulating $\pp$ similarly to the proof of Proposition~\ref{prop-ACD:correctness_ACD-HD-Rabin-transform}: We use $\acdParityTransform{\G}$ to guide the resolver. Let $\rr = v_0\re{}v_1re{}\dots$ be a run in $\G$, and let $(v_0,l_0)\re{}(v_1,l_1)\re{}\dots$ be the preimage of this run in $\acdParityTransform{\G}$. We simulate $\rr$ in $\acdRabinTransformGFG{\G}$ as follows: We ensure that at every moment $i$, if $v_i\notin \VAsucc$, the current vertex $(v_i,x_i)$ is such that $x_i = \eta_{v_i}(l_i)$. There distinguish two cases to simulate the edge $e_i = v_i\re{}v_{i+1}$:
	\begin{itemize}
		\item If $v_i\in \VAdam$, there is a single outgoing edge from $(v_i,x_i)$ mapped to the edge $v_i\re{}v_{i+1}$ in $\rr$: $(v_i,x_i)\re{\ee}(v_{i+1},x_i)$. This must be the edge picked by the "resolver@@mapping"
		\item If $v_i\in \VEve$, we pick  the edge $(v_i,x_i)\re{n_i}(v_{i+1},x_{x+1})$ such that $x_{i+1} = \eta_{l_{i+1}}$ and $n_i = \suppAcd(l_i, e_i)$.
	\end{itemize} 
	If $v_i\in \VAsucc$, the vertex $(v_i,x_i)$ will verify $x_i = x_{i-1} = \eta_{v_i}(l_i)$. 
	In this case, we pick  the edge $(v_i,x_i)\re{n_i}(v_{i+1},x_{x+1})$ such that $x_{i+1} = \eta_{l_{i+1}}$  and $n_i = \suppAcd(l_i, e_i)$. This is indeed an edge appearing in $\acdRabinTransformGFG{\G}$, as the path $(v_{i-1},x_{i-1})\re{n_{i-1}'}(v_i,x_i')\re{n_i}(v_{i+1},x_{x+1})$ exists in the regular $\acdRabinTransform{\G}$, with $x_i' = \eta_{v_i}(l_i)$.
	
	The "resolver@@morph" obtained in this way is "sound for $\acdRabinTransformGFG{\G}$", as there is a unique way to simulate edges issued from "Adam" vertices, and the rest of the edges are simulated in the same way as the resolver defined for the regular "ACD-HD-Rabin-transform", which we proved to be "sound@@morph".
\end{proof}
 
\subparagraph*{Optimality of the ACD-HD-Rabin-transform-for-games}
\optimalityAcdRabinTransformGames*
\begin{proof}
	The vertices of $\acdRabinTransformGFG{\G}$ corresponding to vertices in $\Vnormal$ are exactly the same that those in $\acdRabinTransform{\G}$: 
	\[ \{(v,x)\in \acdRabinTransformGFG{\G} \mid v\in \Vnormal\} = \{(v,x)\in \acdRabinTransform{\G} \mid v\in \Vnormal\}. \]
	Moreover, for $v\in\VAsucc$, there is one vertex of the form $(v,x)$ for each vertex $(\Apred(v),x)$, and each $v\in\VAsucc$ has exactly one predecessor in $\Vnormal$, so we conclude that:
	\[ |\acdRabinTransformGFG{\G}| \leq 2\cdot |\{(v,x)\in \acdRabinTransformGFG{\G} \mid v\in \Vnormal\}| \leq 2\cdot|\acdRabinTransform{\G}| \leq 2\cdot\G', \]
	where the last inequality follows from Theorem~\ref{thm-acd:optimality_ACD-HD-Rabin-transform}.
\end{proof} 	
	\section{Simplifications for prefix-independent conditions}
	\label{sec:appendix-prefix-independent}

We prove in this appendix results applying to "automata" "recognising" "prefix-independent" languages and "games" using "prefix-independent" winning conditions. We recall that a language $L\subseteq \SS^\oo$ is \emph{prefix-independent} if for all $w\in \SS^\oo$ and $u\in \SS^*$,  $uw\in L$ if and only if $w\in L$.

\prefixIndepAutomaton*
\begin{proof}
	Let $(r_0, r)$ be a "sound@@aut" "resolver@@aut" for $\A$ such that $q$ is "reachable using@@resolver" $(r_0, r)$ (there is a word $w_0\in \SS^*$ and a "run" $\rr_0 = r_0 \lrp{w_0} q$ "induced by $r$@@aut" over $w$). 
	We first show that  $\Lang{\initialTS{\A}{q}} \subseteq \Lang{\A}$. Let $w\in \SS^\oo$ be a word accepted from $q$. Then, $w_0w$ admits an "accepting run" from the original "initial state" (by "prefix-independence" of the "acceptance set"), so $w_0w\in \Lang{\A}$, and by the "prefix-independence" of $\Lang{\A}$, $w\in \Lang{\A}$ too.
		
	For the converse direction, we define a "sound@@aut" "resolver@@aut" $(r_0', r')$ for $\initialTS{\A}{q}$. We let $r_0' = q$, and $r'(\rr,a) = r(\rr_0\rr,a)$ be the strategy that acts as the resolver $r$ assuming that $\rr_0$ has happened in the past. It is clear that for every word $w\in \SS^\oo$, the "run induced by@@morph" $(r_0', r')$ over $w$ has a common suffix with the "run induced by@@morph" $(r_0, r)$ over $w_0w$. Therefore, by the "prefix-independence" assumptions:
	\[ w\in \Lang{\A} \iff w_0w \text{ is accepted using } (r_0, r) \iff w_0w \text{ is accepted using }(r_0', r').\qedhere \]
\end{proof}

\HDmappingsPrefIndep*
\begin{proof}
	First, $\pp\colon \initialTS{\TS}{V} \to \initialTS{\TS'}{V'}$ is trivially a "weak morphism". We claim that it "preserves accepting runs". Let $v\in V$ be a state in $\TS$ and let $\rr = v \lrp{w}$ be an "accepting run" from $v$. Since all the states are "reachable", there is some $v_0\in I$ and "finite run" $\rr_v = v_0 \rp{u} v$. Since $\pp$ is a "weak morphism" we have that $\ppRuns(\rr_v\rr) = \pp(v_0) \rp{u'} \pp(v) \lrp{w'}$. It holds that:
	\[ w \in \WW \overset{\begin{subarray}{c}
			\WW \\
			\text{ pref-indep. }
		\end{subarray}
	}\implies uw\in \WW \implies u'w'\in \WW' \overset{\begin{subarray}{c}
			\WW' \\
			\text{ pref-indep. }
		\end{subarray}
	}\implies w'\in \WW', \]
	where the central implication follows from the fact that $\pp$ "preserves accepting runs" between $\TS$ and $\TS'$. Therefore $\ppRuns(\rr)$ is also an "accepting run".
	
	In the rest of the proof we assume that $\pp$ is an "HD-for-games mapping", (which covers the "HD@@mapping" case).
	Let $(\rInit, r)$ be a "resolver@@mapping" "sound for $\TS$" "simulating"  $\pp\colon \TS \to \TS'$. We define a "resolver@@HDmapping" $(\trInit, \tilde{r})$ for the new mapping. 
	For every state $v$ of $\TS$, we fix a "finite run" $\rr_{v}\in \PathSetFin{\TS}{I}$ ending in $v$ that is "consistent with@@resolver" $(\rInit,r)$ over some $\rr'$, if such a run exists.  We let $V_{\mathrm{Reach}}\subseteq V$ be the set of vertices for which $\rr_{v}$ is well-defined. We note that for each $v'\in V'$ there exists at least one $v\in \inv{\pp}(v')$ such that $v\in V_{\mathrm{Reach}}$; indeed, if $\rr'_{v'}$ is a finite run reaching $v'$ in $\TS'$, one such~$v$ is $\mtargetPath(\rRuns(\rr_{v'}'))$ (that is, the vertex to which we arrive in $\TS$ when simulating $\rr_{v'}'$ via the original "resolver@@HDmapping"). We let $\trInit(v')$ be this vertex. 
	If $e'\in \mout(v')$ is an edge in $\TS'$, we let $\tilde{r}(\ee, e')= r(\rr_v,e')$, for $v=\trInit(v')$.
	For $\rr$ a non-empty "finite run" starting in $v\in V_{\mathrm{Reach}}$ and $e'\in E'$, we define $\tilde{r}(\rr, e')= r(\rr_{v}\rr, e')$. If $\rr$ starts in $v\notin V_{\mathrm{Reach}}$ we let $\tilde{r}(\rr, e')$ be any edge in $\inv{\pp}(e')$ (if $e'\in \mout(\pp(\mtarget(\rr)))$ we pick it in $\mout(\mtarget(\rr))$). We check that $(\trInit, \tilde{r})$ satisfies the four requirements to be a "resolver":
	\begin{enumerate}
		\item $\trInit(v')$ has been chosen in $\inv{\pp}(v')$.
		\item $\tilde{r}(e')$ is chosen in $\inv{\pp}(e')$.
		
		\item Let $e'\in \mout(v')$. We have defined $\tilde{r}(\ee, e') = r(\rr_v, e')$, where $\rr_v$ is a finite run "consistent with@@mapping" $r$ ending in $v=\rInit(v')$. By Property~\ref{item-HD-map:building-run} of a "resolver@@mapping", $r(\rr_v, e') \in \mout(v)$.
		
		\item Let $\rr = v_0 \lrp{\phantom{ww}} v \in \RunsFin{\initialTS{\TS}{V}}$ and $e'\in \mout(\pp(v))$. 
		If $v_0\notin V_{\mathrm{Reach}}$, then we have picked $\tilde{r}(\rr, e')$ in $\mout(v)$.
		If $v_0\in V_{\mathrm{Reach}}$, then $\tilde{r}(\rr, e') = r(\rr_{v_0}\rr, e')$; as $\rr_{v_0}\rr$ is a "run" ending in $v$ and $r$ verifies Property~\ref{item-HD-map:building-run} of a "resolver@@mapping" $r(\rr_{v_0}\rr, e')\in \mout(v)$.
	\end{enumerate} 
	
	Finally, we show that $(\trInit, \tilde{r})$ is "sound for $\TS$". Let  $\rr' = v'\lrp{w'} \,\in \Runs{\initialTS{\TS'}{V'}}$ be an "accepting run",  and let $\rr = v\lrp{w'}$ be a "run" "consistent with@@resolver" $(\trInit, \tilde{r})$ over $\rr'$. In particular, $v = \trInit(v')$. Let $\rr_{v} = v_0\lrp{u}v$ be the chosen run reaching $v$ and let $\rr_{v}' = v_0'\lrp{u'}v'$ be a finite run in $\RunsFin{\TS'}$ such that $\rr_v$ is "consistent with@@resolver" $(\rInit,r)$ over $\rr_{v}'$. It is immediate to check that $\rr_v\rr$ is  "consistent with@@resolver" $(\rInit,r)$ over $\rr_{v}'\rr'$. Since $\rr'$ is "accepting", we have that $w'\in \WW'$, and by "prefix-independence" of the "acceptance sets" and the fact that $\rr_v\rr$ is "accepting" if $\rr_{v}'\rr'$ is, we have:
	\[ w' \in \WW' \implies u'w'\in \WW' \implies uw\in \WW \implies w\in \WW,
	\] 
	so we conclude that $\rr$ is "accepting@@run" in $\TS$, as we wanted to show. 
\end{proof} 	
	\section{Simplifying automata with duplicated edges}
	\label{sec:appendix-simplifications-automata}
	\AP Given an automaton $\A=(Q, \Sigma, I, \GG, \transAut{}, \WW)$ we say that it has ""duplicated edges""  if there is some pair of states $q,q'\in Q$ and two different transitions between them labelled with the same input letter: $q\re{a:\aa}q'$, $q\re{a:\bb}q'$.

As commented in Remark~\ref{rmk-ZT:simplified-automata}, the construction of the "ZT-HD-Rabin-automaton" we have presented potentially introduces "duplicated edges", which can be seen as an undesirable property (even if some automata models such as the HOA format~\cite{HOAFormat2015} allow them). We show next that we can always derive an equivalent automaton without "duplicated edges". Intuitively, in the Rabin case, if we want to merge two transitions having as output letters $\aa$ and $\bb$, we add a fresh letter $(\aa\bb)$ to label the new transition. For each "Rabin pair", this new letter will simulate the best of either $\aa$ or $\bb$ depending upon the situation.

\begin{proposition}[Simplification of automata]\label{prop-app:simplification_Rabin}
	Let $\A$ be a "Muller" (resp. "Rabin") "automaton" presenting "duplicated edges". There exists a "Muller" (resp. "Rabin") automaton $\A'$ on the same set of states without "duplicated edges" such that $\Lang{\A} = \Lang{\A'}$. Moreover, if $\A$ is "history-deterministic",~$\A'$ can be chosen "history-deterministic". In the Rabin case, the number of "Rabin pairs" is also preserved. 
\end{proposition}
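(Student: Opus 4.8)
The plan is to collapse each group of duplicated edges $q \re{a:c_1} q', \dots, q \re{a:c_m} q'$ into a single edge $q \re{a:(c_1\cdots c_m)} q'$ labelled by a fresh colour $(c_1\cdots c_m)$, and then to reinterpret the acceptance condition so that the behaviour of a run is unchanged. First I would make the construction precise in the Muller case: let $\GG$ be the original set of colours and let $\GG'$ be the set of \emph{non-empty subsets} of $\GG$ that actually arise as a ``duplicated group'' on some pair $(q,q')$ reading some letter $a$, together with the singletons $\{c\}$ for $c\in\GG$ (identifying $\{c\}$ with $c$). We define $\A'$ on the same states, with for each $(q,a,q')$ a single edge coloured by the set $\gg'(q,a,q') = \{ c \in \GG \mid q\re{a:c}q' \in \DD \}$. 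Since a run in $\A$ and the corresponding run in $\A'$ visit ``the same transitions'' (there is an obvious bijection on edges that is the identity on states and letters), and since the Muller condition only depends on $\minf$, it suffices to define $\F' \subseteq \powplus{\GG'}$ by: a set $\mathcal{C} = \{C_1,\dots,C_k\}$ of colours of $\GG'$ belongs to $\F'$ iff there is a choice function picking $c_i \in C_i$ for each $i$ with $\{c_1,\dots,c_k\}\in\F$. Then a word is accepted by $\A'$ iff it was accepted by $\A$: a run of $\A'$ over $w$ corresponds to the \emph{set} of all runs of $\A$ over $w$ that differ only in which of the parallel edges they take, and this set contains an accepting one iff the $\A'$-run is accepting by definition of $\F'$. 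This handles both the non-deterministic and the history-deterministic Muller case at once, because a sound resolver for $\A$ transfers verbatim (it picks an $\A$-edge, which projects to the unique $\A'$-edge between the same states with the same letter), and conversely a sound resolver for $\A'$ can be lifted to $\A$ by, at each step, choosing among the parallel $\A$-edges according to the witnessing choice function used to establish acceptance — but more simply, one observes that $\A$ and $\A'$ have ``the same'' runs up to the edge bijection, so history-determinism is literally preserved.

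For the Rabin case the point is to check that $\F'$ defined as above is again a Rabin language over $\GG'$, with the \emph{same} number of pairs. Suppose $L = \RabinC{R}{\GG}$ with $R = \{(G_1,R_1),\dots,(G_r,R_r)\}$. I would define a family $R' = \{(G_1',R_1'),\dots,(G_r',R_r')\}$ of pairs over $\GG'$ by declaring, for a colour $C \in \GG' = \powplus{\GG}$:
\begin{equation*}
	C \in R_j' \iff C \cap R_j \neq \emptyset, \qquad C \in G_j' \iff C \cap G_j \neq \emptyset \ \text{ and }\ C \cap R_j = \emptyset.
\end{equation*}
The claim is that $\MullerC{\F'}{\GG'} = \RabinC{R'}{\GG'}$, i.e.\ for a set $\mathcal{C}$ of $\GG'$-colours, there is a choice function landing in $\F$ iff $\mathcal{C}$ is accepted by some pair $(G_j', R_j')$. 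The forward direction: if $\mathcal C$ is accepted by $(G_j',R_j')$, then $\bigcup \mathcal C$ avoids $R_j$ entirely (no $C\in\mathcal C$ meets $R_j$) and some $C_0\in\mathcal C$ meets $G_j$; picking $c_i \in C_i \setminus R_j$ for each $i$ (possible since $C_i \cap R_j=\emptyset$) and $c_0 \in C_0 \cap G_j$ gives a choice set accepted by $(G_j,R_j)$, hence in $\F$. The converse: if some choice set $\{c_1,\dots,c_k\}$ is in $\F$, say accepted by $(G_j,R_j)$, then every $c_i \notin R_j$ so every $C_i$ can be chosen to avoid... wait — this needs care: $c_i\notin R_j$ does not force $C_i \cap R_j = \emptyset$. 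The resolution is that the choice \emph{is} made by the run, so I should instead prove the statement about \emph{runs} directly: fix an infinite run $\rr'$ of $\A'$ and let $\ell = \minf(\rr')$ be its limit set of $\GG'$-colours; $\rr'$ accepting (under $R'$) means there is $j$ with $\ell$ meeting $G_j'$ and not meeting $R_j'$, which unfolds to: the limit set of \emph{available} original colours $\bigcup \ell$ avoids $R_j$ and meets $G_j$. Meanwhile $\rr'$ corresponds to the family of $\A$-runs with limit sets exactly those $\{c_i\}_{i}$ with $c_i \in C_i$ for the limit $\GG'$-colours $C_i$; one such run has limit set avoiding $R_j$ and hitting $G_j$, so it is $\A$-accepting, so $w \in L$. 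This is the cleanest route and avoids the spurious choice-function subtlety.

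The main obstacle I anticipate is exactly this last point: being careful that the correspondence ``run of $\A'$ $\leftrightarrow$ set of runs of $\A$'' is set up so that acceptance matches, and in particular that in the Rabin case the definition of $G_j'$ must include the clause $C\cap R_j=\emptyset$ (otherwise a colour that ``helps'' pair $j$ via $G_j$ but simultaneously blocks it via $R_j$ would be miscounted). Once the bijection on edges and the induced correspondence on runs is stated precisely, everything else — equality of languages, preservation of history-determinism via transferring resolvers along the edge bijection, preservation of the number of Rabin pairs — follows by unwinding definitions. I would also remark that this construction enlarges the colour alphabet but not the number of states nor (in the Rabin case) the number of pairs, and that an entirely symmetric construction using \emph{minimum over the group} for a parity condition, or the ``best petal'' intuition described before the statement, gives the same result; but since the proposition only claims Muller and Rabin, I would present just the subset-colouring argument above.
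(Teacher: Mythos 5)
Your overall strategy is the same as the paper's: merge each group of parallel $a$-transitions between $q$ and $q'$ into a single edge whose fresh colour remembers the set of original colours, and redefine the acceptance condition accordingly. However, both of your redefinitions are incorrect, and for the same underlying reason: a run of $\A$ that projects onto a given run of $\A'$ is not determined by a \emph{choice function} picking one original colour per merged edge --- it may traverse \emph{several} of the parallel originals infinitely often, so its limit set is a union $\bigcup_i S_i$ with $\emptyset\neq S_i\subseteq C_i$, not a set $\{c_1,\dots,c_k\}$ with $c_i\in C_i$. Concretely, take a single state $q$ with two self-loops $q\re{a:c_1}q$ and $q\re{a:c_2}q$. (i) Muller case: with $\F=\{\{c_1,c_2\}\}$ the run alternating between the two loops accepts $a^\oo$, but the merged edge carries $C=\{c_1,c_2\}$ and your $\F'$ demands a single $c\in C$ with $\{c\}\in\F$, so $a^\oo\notin\Lang{\A'}$. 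The correct definition (the paper's) quantifies over non-empty subsets $S_t$ of each group of parallel edges. (ii) Rabin case: with the single pair $G_1=\{c_1\}$, $R_1=\{c_2\}$, the run using only the $c_1$-loop accepts $a^\oo$, but $C=\{c_1,c_2\}$ satisfies $C\cap R_1\neq\emptyset$, so $C\in R_1'$ under your definition and the unique run of $\A'$ is rejecting. Your $R_j'$ is too large: it should contain $C$ only when \emph{every} parallel original colour lies in $R_j$ (that is, $C\subseteq R_j$), so that a run of $\A$ can dodge $R_j$ by picking a suitable representative of each merged edge; dually, $G_j'$ should simply require $C\cap G_j\neq\emptyset$ (assuming, as one may without loss of generality, that $G_j\cap R_j=\emptyset$).

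You did sense the problem (``this needs care''), but your resolution via runs only re-establishes the easy inclusion $\Lang{\A'}\subseteq\Lang{\A}$ (an accepting run of $\A'$ yields an accepting run of $\A$); the inclusion that actually breaks under your definitions is $\Lang{\A}\subseteq\Lang{\A'}$, and the examples above show that it genuinely fails rather than being a presentational subtlety. Once the two acceptance conditions are corrected as indicated, the rest of your argument --- the edge bijection, the transfer of resolvers in both directions for history-determinism, and the one-to-one correspondence of Rabin pairs --- goes through and coincides with the paper's proof.
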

\begin{proof}
	For the Rabin case, let $\A'$ be an automaton that is otherwise as $\A$ except that instead of the transitions $\Delta$  of $\A$ it only has one $a$-transition $q \xrightarrow{a:x} q'\in \Delta'$ (with a fresh colour $x$ per transition) per state-pair $q,q'$ and letter $a\in \Sigma$. That is, $\DD'=\{(q,a,x_j,q') \: : \: (q,a,y,q')\in \DD \text{ for some } y\}$. The new "Rabin condition" $\{(G_1',R_1'),\ldots, (G_r',R_r')\}$ is defined as follows. For each transition $q\xrightarrow{a:x} q'$:
	\begin{itemize}
		\item $x\in G_i'$ if $q\xrightarrow{a:y} q'\in \Delta$ for some $y\in G_i$, 
		\item  $x\in R'_i$ if for all $q\xrightarrow{a:y} q'\in \Delta$,  $y\in R_i$. 
	\end{itemize}
	
	We claim that $\Lang{\A'}=\Lang{\A}$. Indeed, if $u\in \Lang{\A}$, as witnessed by some "run" $\rho$ and a "Rabin pair" $(G_i,R_i)$, then the corresponding run $\rho'$ in $\A'$ over $u$ is also "accepted by@@RabinPair" the "Rabin pair" $(G_i',R_i')$: the transitions of $\minf(\rho)\cap G_i$ induce transitions of $\minf(\rho')\cap G_i'$ and the fact that $\minf(\rho)\cap R_i=\emptyset$ guarantees that $\minf(\rho')\cap R_i'=\emptyset$.
	
	Conversely, if $u\in \Lang{\A'}$ as witnessed by a run $\rho'$ and "Rabin pair" $(G_i', R_i')$,
	then there is an accepting run $\rho$ over $u$ in $\A$: such a run can be obtained by choosing for each transition $q\xrightarrow{a:x}q'$ of $\rho'$ where $x\in G_i'$ a transition $q\xrightarrow{a:y}q'\in \Delta$ such that $y\in G_i$, which exists by definition of $\A'$, for each transition $q\xrightarrow{a:x}q'$ where $x\notin G_i\cup R_i$ a transition $q\xrightarrow{q,y}q'\in\Delta$ such that $y\notin R_i$, which also exists by definition of $\A'$, and for other transitions $q\xrightarrow{a:x}q'$ (that is, those for which $x\in R_i'$) an arbitrary transition $q\xrightarrow{a:y}q'\in \Delta$. Since $\rho'$ is accepting, we have $\minf(\rho')\cap G_i\neq \emptyset$ and $\minf(\rho)\cap R_i=\emptyset$, that is, $\rho$ is also accepting. 
	
	For the "Muller" case, the argument is even simpler. As above, we consider $\A'$ that is otherwise like $\A$ except that instead of the transitions $\Delta$ of $\A$, it only has one $a$-transition $q\xrightarrow{a:x}q'\in \Delta'$ (with a fresh colour per transition) per state-pair $q,q'$ and the accepting condition is defined as follows. A set of  transitions $T$ is accepting if and only if for each $t=q\xrightarrow{a:x}q'\in T$ there is a non-empty set $S_t \subseteq \{ q\xrightarrow{a:y}q'\in \Delta\}$ such that $\bigcup_{t\in T} S_t$ is accepting in $\A$. In other words, a set of transitions in $\A'$ is accepting if for each transition we can choose a non-empty subset of the original transitions in $\A$ that form an accepting run in $\A$.
	
	We claim that $\Lang{\A'}=\Lang{\A}$. Indeed if $u\in \Lang{\A}$, as witnessed by some run $\rho$, the run~$\rho'$ that visits the same sequence of states in $\A'$ is accepting as witnessed by the transitions that occur infinitely often in $\rho$.
	
	Conversely, assume $u\in \Lang{\A'}$, as witnessed by a run $\rho'$ and a non-empty subset $S_t$ for each transition $t$ that occurs infinitely often in $\rho'$ such that $\bigcup_{t\in \minf(\rho)} S_t$ is accepting in $\A$. Then there is an accepting run $\rho$ over $u$ in $\A$ that visits the same sequence of states as $\rho'$ and chooses instead of a transition $t\in \minf(\rho)$ each transition in $S_t$ infinitely often, and otherwise takes an arbitrary transition. The set of transitions $\rho$ visits infinitely often is exactly $\bigcup_{t\in\minf(\rho)} S_t$, and is therefore accepting.
	
	Finally, observe that in both cases, if $\A$ is "HD", then the automaton $\A'$ without duplicate edges is also "HD" since $\A'$ is obtained from $\A$ by merging transitions. Indeed, the "resolver" $r$ of~$\A$ induces a "resolver" $r'$ for $\A'$ by outputting the unique transition with the same letter and state-pair as $r$. By the same argument as above, the run induced by $r'$ is accepting if and only if the run induced by $r$ is.
\end{proof}

\begin{example}
	The "ZT-HD-Rabin-automaton" from Figure~\ref{fig-ZT:zielonka-Rabin automaton} has "duplicated transitions". In Figure~\ref{fig-app:simplified-Rabin} we present an equivalent "HD" "Rabin" "automaton" without duplicates. For this, we have merged the self-loops in state $1$ labelled with $a$ and $b$ respectively. We have added the "output colours" $(\aa\bb)$ and $(\theta\xi)$. 
	The new "Rabin pairs" are given by:
	
	\centering
	\begin{tabular}{l l}
		$G_\bb' = \{\bb, (\aa\bb)\}$, & $R_\bb'= \{\aa, \lambda, \xi, \zeta\}$,\\ 
		$G_\lambda' = \{\lambda\}$, & $R_\lambda'= \{\aa, \bb, (\aa\bb), \theta\}$. 
	\end{tabular}
	\begin{figure}[ht]
		\centering		
		\begin{tikzpicture}[square/.style={regular polygon,regular polygon sides=4}, align=center,node distance=2cm,inner sep=2pt]
			
			\node at (0,2) [state] (1) {$1$};
			\node at (3,2) [state] (2) {$2$};
			
			\path[->] 
			(1)  edge [out=250,in=290,loop] node[left] {$a:(\theta\xi)$ }   (1)
			(1)  edge [in=160,out=200,loop] node[left] {$b:(\aa\bb)$ }   (1)
			(1)  edge [in=70,out=110,loop] node[above] {$c:\aa$ }   (1)
			
			(1)  edge [in=210,out=-30] node[below] {$c:\lambda$ }   (2)
			
			(2)  edge [color=black] node[above] {$a:\lambda$ }   (1)
			(2)  edge [in=30,out=150, color=black] node[above] {$b:\aa$ }   (1)
			(2)  edge [in=-20,out=20,loop, color=black] node[right] {$c:\zeta$ }   (1);
			
		\end{tikzpicture}
		\caption{ The simplified "ZT-HD-Rabin-automaton".}
		\label{fig-app:simplified-Rabin}
		
	\end{figure}
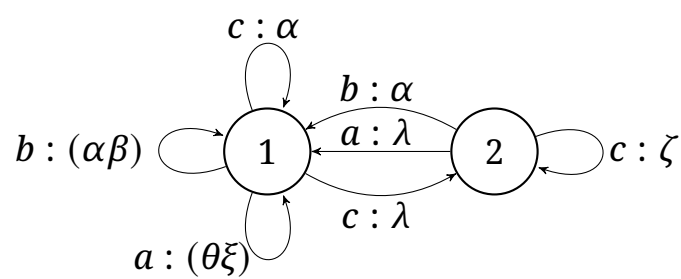
	
\end{example} 	
	
	\section{Proofs for Section~\ref{subsubsec-typeness:Typeness_Conditions}}
	\label{sec:appendix-typeness-proofs-Muller-conditions}
	\begin{proof}[Proof of Propositions~\ref{prop-typ:RabinZielonkaShape} and~\ref{prop-typ:StreettZielonkaShape}]
		We prove it for the "Rabin" case, "Streett" conditions being the dual notion.
			
		If all "round nodes" of $\zielonkaTree{\F} = (N =\roundnodes\sqcup \squarenodes, \ancestor, \nu :N \to \powplus{\GG})$ have at most one "child", we define a family of "Rabin pairs" $R = \{(G_n,R_n)\mid n\in \roundnodes\}$ such that $\Rabin{R}= \Muller{\F}$ as follows: for each "round node" $n\in \roundnodes$, we add a "Rabin pair" $(G_n,R_n)$. We let $G_n= \Gamma \setminus \nu(n)$. In order to define $R_n$, we observe that $n$ has at most one child $n'$, and we define $R_n=\nu(n)\setminus \nu(n')$, for $n'$ the only "child" of $n$, if it exists, or $R_n=\nu(n)$ if $n$ has no "children" at all. This is, the pair $(G_n,R_n)$ "accepts@@RabinPair" the sets of colours $A\subseteq \Gamma$ that contain some of the colours that disappear in the step $n\rightarrow n'$ and none of the colours appearing up in the "tree". We show that $\Rabin{R}= \Muller{\F}$. Let $A$ be a set of colours. If $A\in \F$, let $n$ be a maximal node (for $\ancestor$) containing $A$. It is a "round node" and there is some colour $c\in A$ not appearing in the only child of $n$. Therefore, $c\in G_n$ and $A \cap R_n=\emptyset$. Conversely, if $A\notin \F$, then for every "round node" $n$ with a child $n'$, either $A\subseteq \nu(n')$ (and therefore $A\cap G_n = \emptyset$) or $A \nsubseteq \nu(n)$ (and in that case $A\cap R_n \neq \emptyset$).
			
		We remark that this construction uses more "Rabin pairs" than necessary, since we could reuse Rabin pairs for nodes that are in the same level and that are not siblings.
			
		Conversely, suppose that $\Muller{\F} = \Rabin{R}$ for the "Rabin language associated to" $R=\{(G_1,R_1),\dots,(G_r,R_r)\}$. If $n \in \zielonkaTree{\F}$ is a "round node" ($A=\nu(n)\in \F$), then its label $A$ contains some colours that belongs to $G_{i_1},\dots,G_{i_k}$ and none belonging to $R_{i_1},\dots,R_{i_k}$ for some $i_1,\dots, i_k$, $k\geq 1$. A child of $n$ must not have these colours, so the only maximal subset of $A$ that is not in $\F$ is $A\setminus (G_{i_1}\cup \dots \cup G_{i_k})$.				
\end{proof}

\begin{proof}[Proof of Proposition~\ref{prop-typ:parityZielonkaShape}]
	We assume $\GG\in \F$ ($\minparityZ{\F}=0$), the other case is symmetric.
			
	Assume that $\zielonkaTree{\F}$ has a single branch of length $\maxparityZ{\F}+1$. We define a mapping $\phi\colon \GG \to [0, \maxparityZ{\F}]$ as  follows: for each colour $c\in \GG$ we let $n_c$ be the "deepest" node in $\zielonkaTree{\F}$ containing $c$, and we define $\phi(c) = \nu(n_c)$. It is easy to check that for all $w\in\GG^\oo$, $w\in \Muller{\F}$ if and only if $\phi(w)\in \parity$.

	Conversely, assume that we can assign colours to the elements of $\Gamma$ by $\phi:\Gamma \rightarrow [0, d]$, whose corresponding parity language is $\Muller{\F}$. We show that any node of the "Zielonka tree" $\zielonkaTree{\F}$ has at most one child. Indeed, let $n \in N$ and let $c\in \nu(n)$ such that $\phi(c) = \min\{\phi(c) \mid c\in \nu(n)\}$. We suppose that $\phi(c)$ is odd (the proof is symmetric for $\phi(c)$ even). Let $p = \min\{\phi(c') \mid c'\in \nu(n) \tand \phi(c') \text{ even}\}$. In every child of $n$ the elements with a smaller colour than $p$ must disappear, so the set of elements $\nu(n)\cap \{c\in \GG \mid \phi(c) \geq p \}$ is the only maximal subset of $\nu(n)$ belonging to $\F$. Moreover, in the label of the child of $n$ there is at least one colour less, so the "height" of $\zielonkaTree{\F}$ will be at most $d+1$.
\end{proof}

\end{document}